\newcommand{\sref}[2]{\hyperref[#2]{#1 \ref*{#2}}}
\newcommand{\J}{\mathbb{J}}
\newcommand{\Tr}{\,\mathrm{Tr}}
\newcommand{\B}{\mathcal{B}}
\newcommand{\F}{\mathcal{F}}
\newcommand{\E}{\mathcal{E}}
\newcommand{\U}{\mathcal{U}}
\newcommand{\Z}{\mathcal{Z}}
\newcommand{\I}{\mathrm{i}}
\newcommand{\N}{\mathbb{N}}
\newcommand{\R}{\mathbb{R}}
\newcommand{\mN}{\mathcal{N}}
\newcommand{\D}{\mathfrak{D}}
\newcommand{\C}{\mathbb{C}}
\newcommand{\Hlog}{\mathrm{Hlog}}
\newcommand{\narrowunderline}[1]{\mkern1mu\underline{\mkern-1mu#1\mkern-2mu}\mkern2mu }
\newcommand{\G}{\mathcal{G}}
\DeclareMathOperator{\Res}{Res}
\newcommand{\RNum}[1]{\uppercase\expandafter{\romannumeral #1\relax}}
\def\smalllozenge{\mbox{\scriptsize$\lozenge$}}
\def\smallblacklozenge{\mbox{\scriptsize$\blacklozenge$}}
\newtheoremstyle{mydefinition}{3pt}{3pt}{\sf}{}{\bf}{.}{.5em}{}
\theoremstyle{mydefinition}
\newtheorem{dfnt}{Definition}
\numberwithin{dfnt}{chapter}
\newtheorem{lemma}{Lemma}
\numberwithin{lemma}{chapter}
\newtheorem{thrm}{Theorem}
\numberwithin{thrm}{chapter}
\newtheorem{prps}{Proposition}
\numberwithin{prps}{chapter}
\newtheorem{cor}{Corollary}
\numberwithin{cor}{chapter}
\newtheorem{assumption}{Assumption}
\numberwithin{assumption}{chapter}
\newtheorem{Alemma}{Lemma}
\numberwithin{Alemma}{chapter}
\newtheorem{rmk}{Remark}
\numberwithin{rmk}{chapter}
\newtheorem{exm}{Example}
\numberwithin{exm}{chapter}
\numberwithin{equation}{section}
\begin{document}
{\pagestyle{empty}
  \addtocontents{toc}{\protect\thispagestyle{empty}} 
\begin{center}
{\LARGE Mathematik\\

\vspace{10.ex}
}

\vspace{5.ex}
{\huge
\textbf{Matrix Field Theory}}\\

\vspace{50.ex}
{\large Inaugural-Dissertation\\
zur Erlangung des Doktorgrades\\
der Naturwissenschaften im Fachbereich\\
Mathematik und Informatik\\
der Mathematisch-Naturwissenschaftlichen Fakult\"at\\
der Westf\"alischen Wilhelms-Universit\"at M\"unster\\

\vspace{10.ex}
vorgelegt von\\
Alexander Hock\\
aus Dschangi-Dscher/Kirgisistan\\
\vspace{2.ex}

-2020-\\
}
\end{center}

\newpage

\vspace*{100.ex}
 
\noindent
Dekan: \qquad\qquad\qquad\qquad\qquad\qquad Prof. Dr. Xiaoyi Jiang\\
Erster Gutachter:\qquad\qquad\qquad\qquad\!\!\!\!  Prof. Dr. Raimar Wulkenhaar\\
Zweiter Gutachter:\qquad\qquad\qquad\quad\,\!\! Prof. Dr. Gernot M\"unster\\
Tag der m\"undlichen Pr\"ufung:\\
Tag der Promotion:
\newpage
{\huge
\textbf{Abstract}}
\vspace{5.ex}
\\
This thesis studies matrix field theories, which are a special type of matrix models. 
First, the different types of applications are pointed out, from (noncommutative) 
quantum field theory over 2-dimensional quantum gravity up to algebraic geometry with 
explicit computation of intersection numbers on the moduli space of complex curves.

The Kontsevich model, which has proved the Witten conjecture, 
is the simplest example of a matrix field theory. Generalisations of this model will be studied, where 
different potentials and the spectral dimension are introduced. 
Because they are 
naturally embedded into a Riemann surface,
the correlation functions are graded by the genus and the number of boundary components. The renormalisation procedure 
of quantum field theory leads to finite UV-limit.

We provide a method to determine closed Schwinger-Dyson equations with the usage of Ward-Takahashi
identities in the continuum limit. The cubic (Kontsevich model) and the quartic (Grosse-Wulkenhaar model)
potentials are studied separately.

The cubic model is solved completely for any spectral dimension $<8$, i.e. all 
correlation functions are derived explicitly. 
Inspired by topological recursion, we propose
creation and annihilation operators by differential 
and residue operators.
The exact results are confirmed by perturbative computations with Feynman graphs renormalised by 
Zimmermann's forest formula. The number and the amplitudes of the graphs grow factorially, which is known 
as renormalon problem. However, these series are convergent since the exact results are provided.
A further differential operator is derived to determine all free energies. Additionally, by the theorem
of Kontsevich, the intersection numbers of the moduli space of complex 
curves $\overline{\mathcal{M}}_{g,b}$ are found.

For the quartic model, the 2-point function is
derived for any spectral dimension $<6$ explicitly. 
The first step is to derive an angle function which is, after analytic continuation, interpreted 
as an effective measure. On the 4-dimensional noncommutative Moyal space, the effective 
measure is given by a hypergeometric function. Its asymptotic behaviour changes the 
spectral dimension effectively to $4-2\frac{\arcsin(\lambda\pi)}{\pi}$ for $|\lambda|<\frac{1}{\pi}$.
This dimension drop prevents the quantum field theoretical 4-dimensional $\Phi^4$-model on the Moyal space from the triviality problem.
After combinatorial analysis, an explicit (not recursive) formula for any planar $N$-point function 
is provided.

The evident  difference between the cubic and the quartic model is of algebraic-geometric nature.
Computing correlation functions via topological recursion needs the spectral curve as initial 
data. This algebraic curve has for the cubic model only one branch point which coincides with the 
pole of the stable correlation functions. However,
the quartic model has a spectral curve which admits 
infinitely many branch points in the continuum
limit.

\newpage
{\huge
\textbf{Zusammenfassung}}
\vspace{5.ex}\\
Diese Dissertation besch\"aftigt sich mit Matrix-Feldtheorien, einer speziellen Form der Matrixmodelle.
Zun\"achst werden unterschiedliche Anwendungsm\"oglichkeiten hervorgehoben, die
von (nichtkommutativer) Quantenfeldtheorie \"uber 2-dimensionale Quantengravitation bis hin zur 
algebraischen Geometrie mit expliziter Berechnung von Schnittzahlen auf dem
Modulraum komplexer Kurven reichen.

Das Kontsevich-Modell ist ein paragdigmatisches Beispiel einer 
Matrix Feldtheorie, mit der Wittens Vermutung bewiesen wurde.
Es werden Verallgemeinerungen dieses Mo\-dells betrachtet, die durch die Wahl eines
anderen Potentials und durch Einf\"uhrung der spektralen Dimension.
Die Korrelationsfunktionen werden durch das Geschlecht und die Zahl der Randkomponenten unterschieden, da 
diese eine nat\"urliche Darstellung auf Riemannschen Fl\"achen besitzen.
Um dem UV Limes Bedeutung zu verleihen,
werden Renormierungsmethoden aus der Quantenfeldtheorie verwendet.

Wir zeigen, wie geschlossene Schwinger-Dyson-Gleichungen f\"ur
die Korrelationsfunktionen mit Hilfe von Ward-Takahashi-Identit\"aten im Kontiuumslimes
bestimmt werden k\"onnen. Die spezielle Wahl eines kubischen (Kontsevich-Modell) und eines
quartischen (Grosse-Wulkenhaar-Modell) Potentials wird separat betrachtet und untersucht. 

Das kubische Modell wird vollst\"andig f\"ur eine spektrale Dimension $<8$ gel\"ost, d.h. 
es werden alle Korrelationsfunktionen explizit berechnet. 
Die Erzeuger- und Vernichterope\-ratoren werden als Differential- und Residuumsoperator angegeben, 
wobei die Konstruktion durch topologische Rekursion inspiriert wurde.
Die Resultate werden
durch st\"orungstheoretische Rechnungen best\"atigt, in denen in Feynman-Graphen entwickelt wird, die
durch Zimmermanns Waldformel renormiert werden. Die Anzahl der Graphen und die Amplitude der Graphen steigen 
mit $\mathcal{O}(n!)$, welches als Renormalon-Problem bezeichnet wird; dennoch konvergieren diese Reihen, da 
wir sie konkret angeben. 
Es wird ein Differentialoperator angegeben, der die freien Energien berechnet und somit nach 
dem Theorem von Kontsevich die Schnittzahlen auf dem Modulraum der 
komplexen Kurven $\overline{\mathcal{M}}_{g,b}$ erzeugt.

F\"ur das quartische Modell geben wir die 2-Punkt-Funktion f\"ur die spektrale Dimension $<6$ explizit an. 
Hierzu wird zun\"achst eine Winkelfunktion berechnet, die nach komplexer Fortsetzung als 
effektives Ma\ss \,interpretiert werden kann. Auf dem 4-dimensionalen 
nichtkommutativen Moyalraum ist das effektive Ma\ss\, durch
eine hypergeometrische Funktion gegeben, die die spektrale Dimension effektiv 
zu $4-2\frac{\arcsin(\lambda\pi)}{\pi}$ f\"ur $|\lambda|<\frac{1}{\pi}$ \"andert. Durch diese 
effektive \"Anderung wird das Trivialit\"atsproblem des quantenfeldtheoretischen $\Phi^4$-Modells auf dem 
4-dimensionalen Moyalraum verhindert.
Schlie\ss lich zeigen wir nach kombinatorischer Analyse, wie jede planare $N$-Punkt-Funktion 
im quartischen Model explizit (nicht rekursiv) berechnet werden kann.

Die starken Unterschiede zwischen dem kubischen und quartischen Modell sind
algebraisch-geometrischer Natur. Die Bestimmung der Korrelationsfunktionen durch topologische Rekursion 
bedient sich einer algebraischen Kurve, die im kubischen Modell nur einen Verzweigungspunkt hat,
der mit der Polstelle der stabilen Korrelationsfunk\-tionen \"ubereinstimmt. Das quartische Modell hingegen 
hat im Kontinuumslimes unendlich viele Verzweigungspunkte, die zu unendlich vielen Zweigen f\"uhren.


\newpage
{\huge
\textbf{Danksagung}}
\vspace{5.ex}\\
Zu Beginn m\"ochte ich mich bei meinem Betreuer Raimar Wulkenhaar bedanken, 
der sich jedes Mal die Zeit f\"ur Diskussion mit mir genommen hat. 
Seine vorausgehenden Arbeiten und seine Unterst\"utzung waren ma\ss geblich 
an der Anfertigung dieser Dissertation beteiligt. 

Desweiteren will ich mich bei Harald Grosse, Akifumi Sako, J\"org Sch\"urmann und Roland Speicher 
bedanken, die mir durch die unterschiedlichsten Unterhaltungen verschiedene
neue Blickwinkel auf meine Arbeit er\"offnet haben. Ihre Hinweise haben daf\"ur gesorgt, dass
meine Arbeit an der Dissertation kontinuierlich voranging.

Die angenehme Arbeitsatomsph\"are mit Alexander Stottmeister, Carlos I. P\'erez-S\'anchez, Jins de Jong,
Johannes Branahl, Johannes 
Th\"urigen und Romain Pascalie war durch intensive Diskussionen
erf\"ullt. 

Finanziell wurde ich w\"ahrend meiner Promotion unterst\"utzt durch das Mathematische Institut M\"unster
und die DFG mit dem 
Cluster of Excellence ''Mathematics M\"unster'', dem RTG 2149 und dem SFB 878. 

Ein gro\ss er Dank geht auch an meine Eltern und meine Freunde, 
die mich schon immer auf meinem Weg in der Physik und der Mathematik unterst\"utzt haben. 
Doch am meisten bedanke ich mich bei meinen beiden S\"ohnen, Philipp und Maxim, sowie meiner
geliebten Frau Caroline.

\newpage
\vspace*{30.ex}

\begin{center}
{\Huge \textit{
to my wife Caroline\\
  }}
\end{center}

{
  \pagestyle{empty}
  \addtocontents{toc}{\protect\thispagestyle{empty}} 
  \tableofcontents
  \clearpage
}

\pagestyle{fancy}
\chapter{Introduction}
\setcounter{page}{1} 
This thesis studies a special type of matrix models, namely matrix field theory models. 
These models have implications in modern areas of mathematics and mathematical physics which seem to be
different but connected via matrix field theory. The various types of implications and applications 
of matrix field theory to quantum field theory, 
quantum field theory on noncommutative geometry, 2D quantum gravity and 
algebraic geometry will be introduced, respectively.


\section{Quantum Field Theory}
Nature is, on fundamental level, governed by four different interactions of two separated theories in physics.
Elementary particles are described in the Standard Model with three interactions, 
the weak interaction, the strong interaction and
the electromagnetic interaction. Quantum field theory (QFT) describes the dynamics of these elementary particles 
by fundamental principles. The fourth interaction is gravity and described by the theory of general relativity.

General relativity is, from a mathematical point of view, rigorously understood.
The achievement of Einstein was to recognise that the 4-dimensional spacetime is curved by energy densities, 
and the motion occurs along geodesics.
This theory is
confirmed experimentally with astonishing precision, 
e.g. recently
by the measurement of
gravitational waves \cite{PhysRevLett.116.061102}.

Also the predictions of the Standard Model are verified day-by-day in huge particle colliders. The
theoretical prediction, for instance, for the anomalous magnetic moment 
of an electron agrees with the experimental data up to eleven decimal digits \cite{Odom:2006zz}.
However, the mathematical construction of QFT is, independent of the particle content, hard to formulate rigorously. 

Wightman formulated these fundamental principles for a QFT on Minkowski space 
with natural axioms for operator-valued tempered distributions, smeared
over the support of a test function,
on a separable Hilbert space \cite{Wightman:1956zz,Streater:1989vi}.
The first application was to show that the 4-dimensional free scalar field satisfies these axioms, which indeed holds. 
Furthermore, Wightman's powerful reconstruction theorem implies that if the full 
set of correlation functions is known, then 
under certain conditions the Hilbert space and the entire quantum field theory can be
reconstructed. Unfortunately, the axiomatic formulation of Wightman has one problem: no interacting QFT model
satisfying these axioms could be constructed in 
4D, yet. An equivalent formulation to Wightman's axioms on the Euclidean space, instead of Minkowski space,
was found by Osterwalder and Schrader \cite{Osterwalder:1973dx,Osterwalder:1974tc}. 

A different approach to QFT makes use of the path integral formalism. 
The idea behind is that a particle propagates between two points not along 
the path with extremal/minimal action, but along any path weighted by some probability. 
For a QFT, the particle is described by a field, a scalar field $\phi$ 
can be for instance a Schwartz function $\phi\in\mathcal{S}(\R^D)$. Therefore, the path integral translates 
into a sum (or even an integral) over all field configurations of the field 
content of the model \cite{Popov:1984mx}. This expression is on Minkowski space not well-defined and has, even 
on Euclidean space, a lot of technical issues. Nevertheless,
the path (or better: functional) integral formalism can be used to 
approximate correlation functions around the free theory, which is called pertubative expansion. To
make the pertubative expansion well-defined, certain parameters of the model need to be adjusted (renormalised) 
appropriately. These approximated and renormalised correlation function can then be compared via LSZ reduction 
formula \cite{Lehmann1955} to the experiment. This comparison of theory and experiment fits remarkably well.

Up to now, it is not clear whether the approximation of a correlation function 
by perturbation theory converges in any sense. The number of terms for the perturbative expansion
grows factorially from order to order. Furthermore, the values of the different terms themselves increase after 
renormalisation (renormalon problem) such that even Borel
summablility seems to be a hopeless concept \cite{PhysRev.85.631}.

It will be proved in this thesis that \textit{matrix field theory} provides non-trivial examples
for models which have the same issues as QFT models, but the pertubative expansion is indeed convergent,
in fact we will determine the function it converges to. 
We will define the dimension of the matrix field theory model in the natural sense given by Weyl's law \cite{Weyl1911}.
The entire machinery of renormalisation will be necessary, as in QFT, to generate finite results for the 
perturbative expansion. We will see for selected examples that the number and the value of the 
terms grow for the perturbative expansion factorially,
just like in QFT. The exact results of the correlation function will be computed 
directly and coincide with perturbative expansion after applying Zimmermann's forest formula for renormalising 
all divergences and subdivergences.

From these examples, the following question arises:
What are the mathematical conditions that the perturbative expansion (in the sense of QFT) converges?

\section{Quantum Field Theory on Noncommutative Geometry}
As mentioned before, QFT is described on a flat spacetime (Minkowski or Euclidean space). Since
the theory of general relativity implies a curved spacetime, a natural question is whether both theories 
can be combined, which is first of all not the case.
For instance, Heisenberg's uncertainty relation of quantum mechanics
implies for a spherical symmetric black whole 
(solution of Einstein's field equation in general relativity) an uncertainty of the Schwarzschild radius. Applying this
to a quantum field yields that the support of the quantum field cannot be localised better than the Planck scale
$l_P= \sqrt{\frac{G\hbar}{c^3}}$, where $G$ is Newton's constant, $\hbar$ Planck's constant and $c$ 
the speed of light \cite{Misner1973}.

Noncommutative geometry can avoid this gravitational collapse caused by localising events 
with extreme precision \cite{Doplicher:1994tu}. The coordinate uncertainties have to satisfy certain inequalities
which are induced by noncommutative coordinate operators $\hat{x}^\mu$ satisfying 
$[\hat{x}^\mu,\hat{x}^\nu]=\mathrm{i} \hat{\Theta}^{\mu\nu}$, where $\hat{\Theta}^{\mu\nu}$ are the components of a
2-form with the properties $\langle \hat{\Theta},\hat{\Theta}\rangle=0$ and $\langle \hat{\Theta},* \hat{\Theta}\rangle
=8l^4_P $ in 4D.

This suggests that, if QFT and gravity (in the classical sense of general relativity) 
are combined, spacetime itself 
should be noncommutative. First examples of scalar QFTs on noncommutative spaces face in the perturbative expansion
the problem of mixing ultraviolet and infrared divergences \cite{Minwalla:1999px}. This mixing problem was solved 
by adding a harmonic oscillator term depending on $\hat{\Theta}$ to the action \cite{Grosse:2004yu}. 
The most natural example of
a scalar QFT is the quartic interacting model, the Grosse-Wulkenhaar model, which was proved to be renormalisable to all order in
perturbation theory, a necessary condition for a QFT \cite{Grosse:2004yu}. 

The representation of a scalar QFT model on a noncommutative space (especially on the Moyal space)
is approximated in momentum space by large matrices \cite{GraciaBondia:1987kw}. At the self-dual point 
\cite{Langmann:2002cc}, this type of model becomes a \textit{matrix field theory} model with a special choice 
for the external matrix (or better the Laplacian) defining the dynamics. The QFT model itself is reconstructed in the 
limit of infinitely large matrices.

\section{2D Quantum Gravity}
Quantum gravity designs a different approach
to combine QFT and gravity. Spacetime, 
and therefore gravity itself, is quantised in the sense of a quantum field. 

Remarkable 
results were achieved for quantum gravity in 2 dimensions, since orientable 
manifolds of dimension 2 are Riemann surfaces which are simpler than higher dimensional 
manifolds. 
The quantisation of gravity implies (in the sense of the path integral formalism) an average of special 
weights (corresponding to the physical theory) over all
geometries of Riemann surfaces. 

One way of doing so was by discretising the Riemann surfaces into polygons which are glued together.
The dual picture of a discretisation of a Riemann surface is a ribbon graph such that a sum over 
discretised Riemann surfaces can be performed as a sum over the dual ribbon graphs \cite{DiFrancesco:1993cyw}. 
In analogy to 
the perturbative expansion of QFT, ribbon graphs are generated by the Hermitian 1-matrix models. To end up in finite volumes
for the Riemann surfaces in the continuum limit, the size of the polygons has to tend to zero, whereby the number of the 
polygons tends to infinite (double-scaling limit). 
Conjecturally, matrix models should 
provide 2-dimensional quantum gravity in this double-scaling 
limit, which was for a long time not understood rigorously. 

A second approach to 2D quantum gravity was formulated by Polyakov \cite{Polyakov:1981rd} under the name 
of Liouville quantum gravity. His idea was to sum over all metrics on a surface instead of summing over all surfaces. 
In 2 dimensions, any metric can be transformed in a conformal form, i.e. it is after transformation diagonal
and
characterised by a scalar, the Liouville
field which can be coupled to gravity. The Jacobian to achieve the conformal form of the 
metric is called Liouville action which is by itself conformally 
invariant. This conformal invariance gives strong conditions on the correlation functions given by representations
of the Virasoro algebra (due to the conformal group). 
Finite representations of the conformal group are
classified by Kac's table
into $(p,q)$-minimal models, which implies that the partition function of a conformal field theory coupled to 
gravity is a $\tau$-function of KdV hierarchy (nonlinear partial differential
equation of Painlev\'e type) \cite{DiFrancesco:639405}.

Heuristic asymptotics yield the guess that the partition function of matrix models is in the 
double-scaling limit a $\tau$-function of a $(p,q)$-minimal model. In other words,
the partition function of a matrix model satisfies a partial differential equation in the double-scaling limit. 
This conjecture was 
later proved rigorously (see e.g. \cite[Ch. 5]{Eynard:2016yaa}). Consequently,
2D quantum gravity was proved to be approximated by a particular discretisation of the underlying space. 

The interest in matrix models increased due to the relation to Liouville quantum gravity. 
Further examples of matrix models were investigated. 
The Kontsevich model \cite{Kontsevich:1992ti} had even higher impact
which is the 
first non-trivial example for a \textit{matrix field theory}, where the attention of this thesis lies on. 
The ribbon graph expansion consists of weighted graphs with only trivalent vertices. Unexpectedly, 
the Kontsevich model was proved to be in the limit of infinite matrix size equivalent 
to the Hermitian 1-matrix model by a certain choice of the parameters, the 
so-called Miwa-transformation (or Kontsevich times) \cite{Ambjorn:1993sj}.
Hence, the Kontsevich model, as first non-trivial example for a matrix field theory, agrees with the 
$\tau$-function of KdV hierarchy and is therefore also an counterintuitive approximation for 2D quantum gravity.

\section{Algebraic Geometry}
A third approach to 2D quantum gravity goes back to concepts of algebraic geometry. 
This approach (so-called topological gravity) tries to take the sum over all Riemann surfaces up to holomorphic reparametrisations. 
The set of Riemann surfaces for given topology modulo 
holomorphic reparametrisation is called moduli space which is a finite dimensional complex variety. 
For the interest of quantum gravity, an integral over the moduli space (or better its compactification) 
should be performed. 
A volume form on the moduli space is constructed from wedging the Chern classes of the line bundles which are naturally 
constructed by the cotangent spaces at the marked points of the Riemann surface. If these forms are of top dimension, 
then the integral over the compactified moduli space provides a nonvanishing rational number, which is 
called the intersection number.
These numbers are topological invariants characterising the corresponding moduli space.

The original motivation of integrating over the moduli spaces coming from 2D quantum gravity 
inspired Witten to his famous conjecture \cite{Witten:1990hr} that the 
generating function of the intersection numbers of stable
Riemann surfaces ($=$ stable complex curves) is a $\tau$-function of KdV hierarchy.
Liouville quantum gravity is related to the 
KdV hierarchy. Otherwise stated, the approach of Liouville quantum gravity 
and the approach of topological gravity are equivalent. 

This conjecture was proved by Kontsevich \cite{Kontsevich:1992ti} by relating the 
generating function for a special choice of the formal parameters (Kontsevich times) to the 
weighted ribbon graphs generated by the Kontsevich model. As mentioned before, the Kontsevich model is the easiest 
example for a \textit{matrix field theory} and satisfies 
via the connection to Hermitian 1-matrix models the PDE's of the KdV hierarchy.

Intensive studies on matrix models have shown that also the correlation functions (and not only the 
partition function) of the Hermitian 1-matrix model and the 
Kontsevich model 
are related in some sense. 
The correlation functions obey the same type of recursive relations, the so-called topological recursion. 
The beauty of topological recursion is that for a given initial data (the spectral curve) 
topological recursion universally produces symmetric meromorphic functions \cite{Eynard:2007kz}. 
These are, in the case of matrix models, the correlation functions of the corresponding model. 

Topological recursion provides a modern formulation of the 
equivalence between algebraic geometric numbers and geometric models. 
In the last few years, special choices of 
the spectral curve have produced via topological recursion numbers of algebraic geometric
significance, e.g. 
Hurwitz numbers \cite{Bouchard:2007hi}, Weil-Petersson volumes of moduli spaces \cite{Mirzakhani:2006fta},
Gromov-Witten invariants \cite{Dunin-Barkowski:2013wca} and Jones polynomials of knot theory \cite{Borot:2012cw}.

Since the simplest matrix field theoretical model, the Kontsevich model, is known to obey topological
recursion \cite{Eynard:2007kz}, a natural question is whether also other matrix field theory models obey 
topological recursion (or any generalisation of it) and what their equivalent algebraic geometric meanings are.

Take the example of the Hermitian 2-matrix model, it 
fulfils a generalised form of topological recursion \cite{Eynard:2007gw}, 
where the algebraic geometric meaning is 
still open.
We will give hints that this model is possibly
related to the Grosse-Wulkenhaar model.

\section{Outline of the Thesis}
The thesis starts in  \sref{Ch.}{ch:matrix} with an introduction to matrix field theory in general. 
The basic definitions are given for the action of a matrix field theory, the partition function and the expectation
values. To get an intuition for these models, \sref{Sec.}{Sec:Pert}
is included which explains the perturbative expansion in detail. 
The general setting of obtaining equations and identities between expectation values (Schwinger-Dyson equation and 
Ward-Takahashi identity) is described in 
\sref{Sec.}{Sec:SDE}.
In \sref{Sec.}{Sec:LargeLimit}, a scaling limit is performed which provides matrix field theory models 
of spectral dimension greater than 0 in the sense of QFT. 
For this limit, renormalisation (\sref{Sec.}{Sec:Renorm}) is necessary which is a technique developed 
by physicists. The perturbative expansion needs for a renormalised matrix field theory a careful 
treatment by Zimmermann's forest formula (\sref{Sec.}{Sec.Zimmer}) to avoid all divergences in the scaling limit.
The chapter is finished by \sref{Sec.}{Sec.Moyal} which shows the explicit construction of 
QFT on the noncommutative Moyal space from a matrix field theory model. 

\sref{Ch.}{chap:cubic} is dedicated to the simplest matrix field theory model with cubic interaction, the
Kontsevich model. This model is solved completely in \sref{Sec.}{Sec:CubicSolution} which means that an 
algorithm is given to compute exactly any correlation function for any spectral dimension $\D<8$. 
The Kontsevich model is for higher spectral dimension $\D\geq 8$ nonrenormalisable.
The main theorems for the algorithm are \sref{Theorem}{finaltheorem} and \sref{Theorem}{thm:G-residue}.
The free energies (and therefore the intersection numbers on the moduli space of stable complex curves) 
are determined in \sref{Sec.}{Sec:CubicFreeEnergy} 
via a Laplacian.

The case of quartic interaction (known as Grosse-Wulkenhaar model) 
is developed in \sref{Ch.}{ch:quartic}. The total set
of Schwinger-Dyson equations is derived in \sref{Sec.}{Sec.quartSD}. The initial step in computing all 
correlation function starts for the quartic model with the 2-point correlation function 
described in \sref{Sec.}{Sec.quartSolution}. The exact solution of this function is given in 
\sref{Theorem}{prop:HT} for spectral dimension $\D<6$, where the two important special cases of finite matrices and 
on the 4-dimensional Moyal space are explained in \sref{Sec.}{sec.fm} and \sref{Sec.}{Sec.4dSol}, respectively.
We give in \sref{Sec.}{Sec.quartHO} an outline for the correlation function with higher topology. In the planar case 
with one boundary (of arbitrary length), the entire combinatorial 
structure is analyised in \sref{Sec.}{Sec.quartRec}.

To make the thesis fluently readable, a lot of technical details are outsourced to the appendix.  
Basic properties of the Moyal space and the description of Schwinger functions on it are 
found in \sref{App.}{App:Moyal} and \sref{App.}{App:Schwinger}, respectively.
The proof of \sref{Theorem}{finaltheorem} is split in several lemmata in \sref{App.}{appendixC}.
An important cross-check for the validity of the results is derived in 
\sref{App.}{App:Pert} by perturbative calculations with Feynman graphs and Zimmermann's
forest formula. Additionally, the perturbative analysis of the quartic model on the 4-dimensional Moyal space
is discussed in much more detail in \sref{App.}{App:Solv}. Examples for the combinatorial 
constructions used in \sref{Sec.}{Sec.quartRec}
are given in \sref{App.}{App:Expl}.
The last appendix \sref{App.}{App:3C} provides a multi-matrix field theory model which interestingly shares properties of
both models,
the cubic model of \sref{Ch.}{chap:cubic} and the quartic model of \sref{Ch.}{ch:quartic}.

\chapter{Matrix Field Theory}\label{ch:matrix}

Let $\hat{\Phi}=\hat{\Phi}^\dagger$ be a self-adjoint compact linear operator on an infinite-dimensional 
Hilbert space $\mathcal{H}$. The operator $\hat{\Phi}$ will be called the \textit{field}.
Let $E$ be an unbounded self-adjoint positive operator on $\mathcal{H}$
with compact resolvent $(E-z\mathds{1})^{-1}$.
The following class of so-called \textit{action functionals} will be considered
\begin{align}\label{eq:OpAction}
 S[\hat{\Phi}]=V \Tr\left(E \hat{\Phi}^2+\mathcal{V}(\hat{\Phi})\right),
\end{align}
defined on some subspace of compact operators depending on $E$.
The trace and the products are defined for operators on Hilbert space $\mathcal{H}$. 
The \textit{potential} $\mathcal{V}$ is of the form
$\mathcal{V}(\phi)=\sum_{k=3}^d \frac{\lambda_k}{k}\hat{\Phi}^k$ with \textit{coupling constants} $\lambda_k\in\R$, 
and the parameter $V\in \R$ will be specified later.

The action implies that $E$ is interpreted as a Laplacian, where a canonical dimension 
is induced according to Weyl's theorem \cite{Weyl1911} by the asymptotic behaviour of
spectrum of $E$:
\begin{dfnt}\label{Def:Spec}
 The operator $E$ encodes the \textit{dimension} $D:=\lfloor \D\rfloor$ via the
 \textit{spectral dimension}
 \begin{align*}
  \D:=\inf\{ p\in\R_+\,:\;\frac{1}{V}\Tr[(1+E)^{-p/2}]<\infty\}.
 \end{align*}
\end{dfnt}
\noindent
Equivalently, the spectral dimension can be expressed by the spectral measure $ d\varrho(t)$ which is uniquely 
determined by $E$ by the spectral theorem of unbounded self-adjoint operators.
The spectral dimension can therefore be computed additionally by
\begin{align}\label{eq:SpectralDim}
 \D=\inf\{ p\in\R_+\,:\;\int \frac{\, d\varrho(t)}{(1+t)^{p/2}}<\infty\},
\end{align}
where it is not necessarily integer-valued. We will give an explicit example (see \sref{Sec.}{Sec.4dSol}) 
where the 
Laplacian can effectively be changed $E\to E_\lambda$, or $\varrho\to\varrho_\lambda$, through 
the potential $\mathcal{V}$ and 
therefore 
the spectral dimension $\D\to\D_\lambda$ as well.

The field operator $\hat{\Phi}$ can be approximated
by a matrix $\Phi$ of finite rank $(\mathcal{N}+1)$, where the taken topology
depends on the specific problem. The spectral theorem of finite-dimensional spaces implies
the existence of a $\star$-homomorphism to Hermitian 
matrices $\Phi\mapsto (\Phi_{nm})$ in which $E$ is projected by the projection $P$ 
to a Hermitian $(\mN+1)\times (\mN+1)$-matrix. 
Let $(E_n)_{n=0}^\mN$ be the eigenvalues of the projection $PEP$.
We can choose without loss of generality that $E=(E_n\delta_{n,m})$ is diagonal
with ordered eigenvalues $E_n\leq E_{n+1}$, since the later defined partition function \eqref{eq:Part1} is invariant 
under a global unitary transformation $\Phi\mapsto U^\dagger \Phi U$ which can diagonalise $E$. 
The action \eqref{eq:OpAction} is under the $\star$-homomorphism after symmetrisation of the kinetic term of the form
\begin{align}\label{eq:Action}
 S[\Phi]=&V\left(\sum_{n,m=0}^\mathcal{N}
 \frac{H_{nm}}{2}\Phi_{nm}
 \Phi_{mn}+\sum_{k=3}^d\frac{\lambda_k}{k}
 \sum_{n_1,..,n_k=0}^\mathcal{N}
 \Phi_{n_1n_2}..\Phi_{n_kn_1}\right),\\
 H_{nm}:=&E_{n}+E_{m}.\nonumber
\end{align}
Notice that $H_{nm}$ is not a $(\mN+1)\times (\mN+1)$-matrix. It can be understood as the $(\mN+1)^2\times (\mN+1)^2$-matrix
$(E\otimes \mathbb{I}+\mathbb{I}\otimes E)_{nm}=H_{nm}$, where its inverse takes the r\^ole of a free propagator.

Field theories in general have got a dynamical 
construction such that a propagating field carries an energy or momentum dependence.
Matrix models are known to 
provide exact results which are established by the existence of a high symmetry. 
A \textit{matrix field theory} combines both approaches, 
where
the dynamics is considered by the external matrix $E$.
 For constant $E$, the usual Hermitian 1-matrix model is recovered.

\begin{rmk}
Staying in the subspace of 
finite matrices would define only $D=0$ matrix field theory models, since
\begin{align*}
 \frac{1}{V}\Tr((1+E)^{-p/2})=\frac{1}{V}\sum_{n=0}^\mN\frac{1}{(1+E_n)^{p/2}}=\int
 \frac{dt\,\frac{1}{V}\sum_{n=0}^\mN \delta(t-E_n)}{(1+t)^{p/2}}
 =\int \frac{\varrho(t)\,dt}{(1+t)^{p/2}}
\end{align*}
is finite for any $p>0$. The $\mN\to\infty$ limit is of greater interest which is conveniently
combined with a $V\to \infty$ limit to achieve $\D>0$.
\end{rmk}
\noindent
The action \eqref{eq:Action} gives rise to a well-defined definition of the
\textit{partition function} $\Z[J]$ depending on the Hermitian
$(\mN+1)\times(\mN+1)$-matrix $(J_{nm})$ (called the \textit{source}) by
\begin{align}      \label{eq:Part1}
 \Z[J]=\int_{\text{formal}} D\Phi \exp\left(-S[\Phi]+V \Tr(J \Phi) \right).
\end{align}
The subscript \textit{formal} means that $e^{-V\mathrm{Tr}(\mathcal{V}(\phi))}$ 
is expanded as a formal series and the order of the
integral and the series
is exchanged. We will skip this subscript from now on.
The integration is over all Hermitian
$(\mN+1)\times(\mN+1)$-matrices $(\Phi_{nm})$ with Lebesgue measure 
$D\Phi :=\prod_{n<m}d\Phi^{I}_{nm}\prod_{n\leq m}d\Phi_{nm}^R$.
Each variable is separated in the real and imaginary part $\Phi_{nm}=\Phi^R_{nm}+\I \Phi_{nm}^I$ with
$\Phi^R_{nm}=\Phi^R_{mn}$ and $\Phi_{nm}^I=-\Phi_{mn}^I$ such that the partition function is defined over 
a $
 \frac{(\mN+2)(\mN+1)}{2}+\frac{(\mN+1)\mN}{2}=(\mN+1)^2
$-dimensional space.
A partial derivative with respect to the source $J_{nm}$ produces a factor $\Phi_{mn}$ in the integrand. 
This means we have a correspondence $\Phi_{nm}\leftrightarrow
\frac{1}{V}\frac{\partial}{\partial J_{mn}}$, or more explicitly 
\begin{align*}
 &\frac{1}{V}\frac{\partial}{\partial J_{mn}}\int D\Phi f(\Phi)
 \exp\left(-S[\Phi]+V \Tr(J \Phi) \right)\\=&
 \int D\Phi\frac{f(\Phi)}{V}\frac{\partial}{\partial J_{mn}}
 \exp\left(-S[\Phi]+V \Tr(J \Phi) \right)\\=&
 \int D\Phi\,f(\Phi) \Phi_{nm}\exp\left(-S[\Phi]+V \Tr(J \Phi) \right).
\end{align*}
Two partial derivatives commute $ \frac{\partial^2}{\partial J_{nm}\partial J_{lk}}=
\frac{\partial^2}{\partial J_{lk}\partial J_{nm}}$. We employ  
this correspondence to rewrite the interaction term (or the potential) 
$S_{int}[\Phi]:=V\Tr(\mathcal{V}(\Phi))$ as
\begin{align*}
 S_{int}[\Phi]\leftrightarrow S_{int}\left[\frac{1}{V}\frac{\partial}{\partial J}\right].
\end{align*}
We will further combine 
the kinetic and the source term to
\begin{align*}
&
\left(
 \frac{H_{nm}}{2}\Phi_{nm}
 \Phi_{mn}-J_{nm}\Phi_{mn}\right)\\
 =&
 \frac{H_{nm}}{2} 
 \left(\Phi_{nm}-\frac{J_{nm}}{H_{nm}}\right)
 \left(\Phi_{mn}-\frac{J_{mn}}{H_{nm}}\right)-
 \frac{J_{nm}J_{mn}}{2H_{nm}}
\end{align*}
for any $n,m$.
Transforming the variables $\Phi_{nm}\mapsto \Phi'_{nm}=
\Phi_{nm}+\frac{J_{nm}}{H_{nm}}$ with obviously invariant measure 
$D\Phi=D\Phi'$
leads to a very useful form of the partition function
\begin{align}     \label{eq:Part2}
 \Z[J]=& K \exp\left(-S_{int}\left[\frac{1}{V}\frac{\partial}{\partial J}\right]\right)
 \Z_{free}[J],\\\nonumber
 \text{with}\qquad \Z_{free}[J]:=&\exp\left(V\sum_{n,m=0}^\mathcal{N}\frac{J_{nm}J_{mn}}{2H_{nm}}\right),
\end{align}
and constant $K:=\int D\Phi \exp \left(-V\sum_{n,m=0}^\mathcal{N}
 \frac{H_{nm}}{2}\Phi_{nm}\Phi_{mn}\right)=\prod_{n,m=0}^\mN \sqrt{\frac{2 \pi}{VH_{nm}}}$.
 
The partition function gives rise to a definition of expectation values.
We are mainly interested in the connected expectation values of a theory
\begin{align}\label{eq:Expectation}
 \langle \Phi_{p_1q_1}\Phi_{p_2q_2}..\Phi_{p_Nq_N}\rangle_c,
\end{align}
which is the connected part of the
full expectation value defined by
\begin{align}\label{eq:FullExpectation}
 \langle \Phi_{p_1q_1}\Phi_{p_2q_2}..\Phi_{p_Nq_N}\rangle :=\frac{\int D\Phi
 \, \Phi_{p_1q_1}\Phi_{p_2q_2}..\Phi_{p_Nq_N} e^{-S[\Phi]}}{\int D\Phi
 \,  e^{-S[\Phi]}}.
\end{align}
The full expectation value is given in terms of the connected ones by
\begin{align*}
 \langle \Phi_{p_1q_1}\Phi_{p_2q_2}..\Phi_{p_Nq_N}\rangle=\sum_{\text{Partitions}}
 \langle \Phi_{p_{i^1_1}q_{i^1_1}}..\Phi_{p_{i^1_{k^1}}q_{i^1_{k^1}}}\rangle_c
 ..\langle \Phi_{p_{i^j_1}q_{i^j_1}}..\Phi_{p_{i^j_{k^j}}q_{{i^j_{k^j}}}}\rangle_c,
\end{align*}
where the sum over partitions is understood as a sum over all possible 
decompositions.
An equivalent definition of the connected expectation value is obtained by the correspondence between $\Phi$ and the 
derivatives wrt to $J$
\begin{align*}
 \frac{1}{V^N}\frac{\partial^N}{\partial J_{p_1q_1}\partial J_{p_2q_2}..\partial J_{p_Nq_N}}\log \frac{\Z[J]}{\Z[0]}\bigg\vert_{J=0}=
 \langle \Phi_{p_1q_1}\Phi_{p_2q_2}..\Phi_{p_Nq_N}\rangle_c.
\end{align*}
 The numbers $p_i,q_j\in\{0,..,\mN\}$ give different types of
restrictions to the expectation values. To get an understanding which
$p_i,q_j$'s produce a non vanishing 
expectation value, and how an expectation value might look
one would first look at the perturbative expansion.

\section{Perturbation Theory}\label{Sec:Pert}
Perturbation theory is a mathematical method to approximate a result or a solution which can
possibly not expressed exactly. Quantum field theory has a perturbative expansion 
which is graphically described by Feynman graphs and its corresponding Feynman rules. The approximation via the perturbative
expansion fits tremendously well with experimental data. An example how far perturbation theory can run
is shown in the electron $g-2$ anomaly calculation. The complete $4^{\text{th}}$ order computation was recently 
finished \cite{Laporta:2017okg} in a long-term project. Incredible
891 4-loop QED diagrams contributed to the calculation. 

Nonperturbative results are quite rare in QFT. One possibility are
\textit{constructive 
QFT} models in lower dimensions, which were successful in the past for $D<4$  \cite{Rivasseau:1991ub}.
Another example is numerical lattice calculations 
on computer clusters \cite{Montvay:1994cy}, which gave in the last years great insights to nonperturbative QFT. 
However, constructing or determining exact results in 4 dimensions was not yet accomplished.

One natural question is whether the perturbative expansion is mathematically rigorous, and whether we can
extract nonperturbative information from it.
Take the following example as an analogue to the quartic interaction
\begin{align}  \label{eq:1Pert}
  \int_{-\infty}^\infty dxe^{-ax^2-\frac{\lambda}{4} x^4},
\end{align}
which is finite for any $\lambda>0$. A closed result of \eqref{eq:1Pert} exists
 in terms of the modified Bessel function $K_\alpha$ 
\begin{align*}
 e^{\frac{a^2}{2\lambda}}\sqrt{\frac{a}{\lambda}}
K_{\frac{1}{4}}\left(
 \frac{a^2}{2\lambda}\right)
\end{align*}
for $a\geq 0$.
The result is holomorphic in $\lambda$ in a certain domain, where $\lambda=0$ 
lies on the boundary of the holomorphicity domain. 
For the perturbative approach we expand $e^{-\frac{\lambda}{4}x^4}=
\sum_{n=0}^\infty \frac{(-\frac{\lambda}{4})^n}{n!}x^{4n}$ at $\lambda=0$ and naively 
exchange the order of the series and the integral
\begin{align*}
   \sum_{n=0}^\infty \frac{(-\frac{\lambda}{4})^n}{n!}  
   \int_{-\infty}^\infty dx\,x^{4n}  e^{-ax^2}=    \frac{1}{\sqrt{a}}
   \sum_{n=0}^\infty \left(-\frac{\lambda}{4a^2}\right)^n\frac{\Gamma(\frac{4n+1}{2})}{n!}.
\end{align*}
By ratio test, the series has a vanishing convergence radius in $\lambda$. 
Thus, the naive expansion does not reconstruct the exact result without further effort.

Borel summability addresses exactly this kind of problem. 
Let $f(z)$ be a holomorphic function with formal power series $\sum_{k=0}^\infty a_k z^k$ about $z=0$.
Define the \textit{Borel transform} by $B(t)=\sum_{k=0}^\infty \frac{a_k}{k!} t^k$ with nonvanishing
convergence radius. Furthermore, suppose $B(t)$ is 
well-defined for a neighbourhood of $t\geq 0$, then the integral
\begin{align*}
 \frac{1}{z}\int_0^\infty dt e^{-tz}B(t)
\end{align*} 
converges to $f(z)$. 

The assumption that the Borel transform can be continued analytically to the positive real line 
does not always hold and has therefore to be shown problem-specifically.

The hope is that certain QFT models are Borel summable. However,
the perturbative expansion does not indicate Borel summability because the number of graphs (introduced later)
growths at least with $\mathcal{O}(n!)$ and the amplitude of a graph can grow with 
$\mathcal{O}(n!)$ due to the renormalon problem as
renormalisation artifact. Both properties appear also in the later considered matrix field theory
models.

Nevertheless, an elegant way to classify the appearing integrals in the perturbative expansion 
was discovered by Richard Feynman: The usage of
\textit{Feynman graphs} and their associated \textit{Feynman rules}.

\subsection{Ribbon Graphs}
A \textit{ribbon graph} (not necessarily planar) consists of edges, vertices 
and faces. The end of an edge is either open or a vertex of degree $d\geq3$.
If it is open the edge is called open.
A ribbon graph is called connected if any edges are connected. Ribbon graphs without 
open edges are called vacuum ribbon graphs.
Each edge has two faces which 
can possibly be the same.
A face is open if it is attached to an open edge, otherwise it is called 
closed.
We identify two ribbon graphs if the edge-vertex connectivity and the vertex orientation is the same.

A ribbon graph can naturally be drawn on a Riemann surface, which is in particular a 2-dimensional orientable manifold. 
A Riemann surface is characterised by the genus $g$, i.e. the number of handles, and the number $b$ of
boundary components (cycles). 
Cutting a Riemann surface in a finite number of polygons (triangles, quadrangles,..)
 is dual to the picture of ribbon graphs. 
 A face of a polygon corresponds to a vertex of
 the ribbon graph, an edge of a polygon corresponds to an edge of
 the ribbon graph, and a vertex of the polygon corresponds to a face of
 the ribbon graph. For example, a triangulation corresponds
 to a ribbon graph with vertices only of degree 3. \vspace*{5ex}

\def\svgwidth{0.9\textwidth}
\begingroup%
  \makeatletter%
  \providecommand\color[2][]{%
    \errmessage{(Inkscape) Color is used for the text in Inkscape, but the package 'color.sty' is not loaded}%
    \renewcommand\color[2][]{}%
  }%
  \providecommand\transparent[1]{%
    \errmessage{(Inkscape) Transparency is used (non-zero) for the text in Inkscape, but the package 'transparent.sty' is not loaded}%
    \renewcommand\transparent[1]{}%
  }%
  \providecommand\rotatebox[2]{#2}%
  \ifx\svgwidth\undefined%
    \setlength{\unitlength}{454.97946349bp}%
    \ifx\svgscale\undefined%
      \relax%
    \else%
      \setlength{\unitlength}{\unitlength * \real{\svgscale}}%
    \fi%
  \else%
    \setlength{\unitlength}{\svgwidth}%
  \fi%
  \global\let\svgwidth\undefined%
  \global\let\svgscale\undefined%
  \makeatother%
  \begin{picture}(1,0.17759604)%
    \put(0,0){\includegraphics[width=\unitlength]{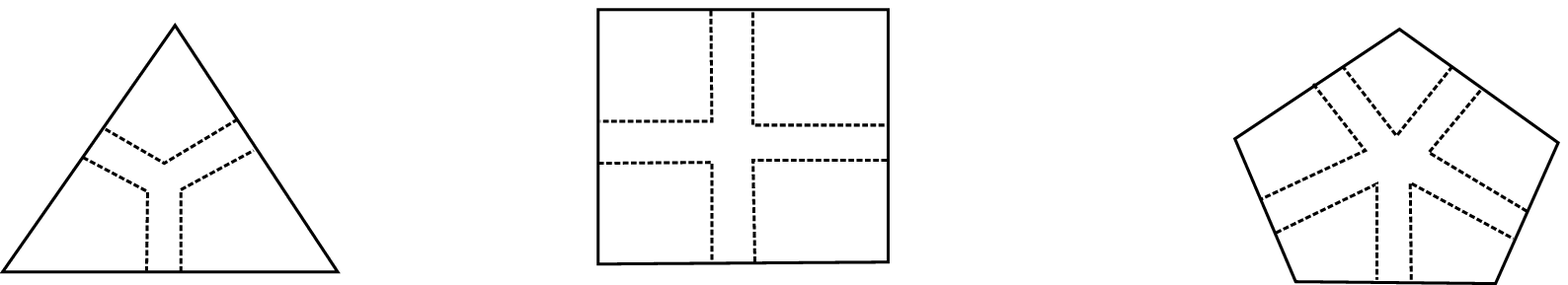}}%
  \end{picture}%
\endgroup%

\vspace*{2ex}\\
A topological invariant of a Riemann surface is the Euler characteristic
\begin{align}
 \chi=2-2g-b.
\end{align}
An open edge of a graph ends in a boundary component.
Let $e$ be the number of edges, $f$ the number of faces and $v$ the number of vertices, then the Euler characteristic is
also given by (Euler's formula)
\begin{align}
 \chi=v+f-e.
\end{align}
This formula originally holds for cutting a Riemann surface into polygons and counting the 
vertices, faces and edges after gluing. Due to the duality, this formula holds 
also for ribbon graphs since vertices and faces are exchanged 
 $v \leftrightarrow f$ under which the formula is invariant.

We will consider only the embedding of connected ribbon graphs with distinguished faces!
Two different faces of the ribbon graph should also be distinguished
after the embedding.
By this assumption, it follows that
an open face is attached to only one boundary component.
\begin{figure}[h]
    \centering
    \def\svgwidth{0.6\textwidth}
    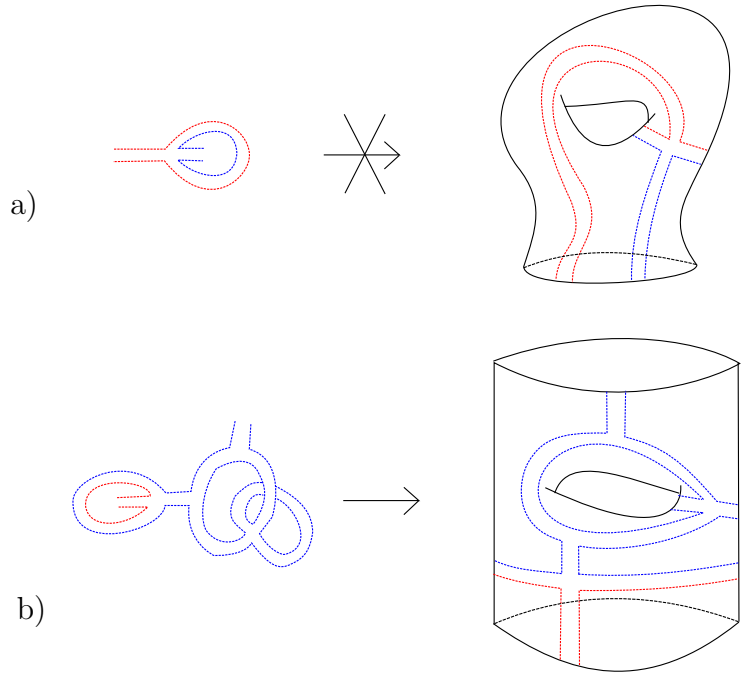
    \caption{a) The ribbon graph has two different faces (red and blue). It cannot be embedded 
    into a Riemann surface of $g=1 $ and 
    $b=1$ since the embedding would identify both faces. This ribbon graph
    is embedded into a connected Riemann surface with $g=0$ and $b=2$.   
   b) The ribbon graph has two different faces (red and blue) which are clearly distinguished after the embedding. The 
   Euler characteristic is $\chi=-2$ where
   two 3-valent vertices and two 4-valent vertices $(v=4)$, two 
    faces (red and blue) $(f=2)$ and 9 edges $(e=9)$ are embedded into a Riemann surface with$g=1$ and $b=2$.}
    \label{Fig:Embedding}
\end{figure}
If two open faces
are attached to the same open edge, they correspond to the same boundary component. If two open faces $F,F'$ are 
attached to the open edge $E$, and $F',F''$ open are attached to the open edge $E'$, 
then $F$ and $F''$ correspond to the
same boundary component. 
We define for that the infinite set
of ribbon graphs with distinguished open faces
which can be embedded into a Riemann surface

\begin{dfnt} \label{Def:Ribbon}
Let $p_1^\beta,..,p_{N_\beta}^\beta$ be for all $\beta\in\{1,..,b\}$ pairwise different.
 Let $N_{\beta}$ be the number of open faces of the $\beta^{\text{th}}$ boundary component of a connected ribbon
 graph. Label the open faces of the $\beta^{\text{th}}$ boundary component
 by the $p_i^{\beta}$'s, where two adjacent open faces are $p_i^{\beta}$ and $p_{i+1}^{\beta}$ 
 in positive orientation, $N_\beta+1\equiv1$.
Then we define the set of connected ribbon graphs with labelled open faces which are naturally embedded 
in a Riemann surface with
$b$ boundary components and genus $g$ by 
$\mathfrak{G}^{(g,b)}_{p_1^1,p_2^1,..,p_{N_b}^b}$.
\end{dfnt}

\subsection{Perturbative Expansion}
Let $\mN+1$ be the size of the matrix. Consider 
for simplicity an interaction (potential) by $\mathcal{V}(\Phi)=\frac{\lambda}{d}\Phi^d$ of degree $d$ with $d>2$. 
The perturbation theory follows from the naive expansion of $e^{-V \Tr(\mathcal{V}(\Phi))}$. 
By definition \eqref{eq:FullExpectation}, 
 the numerator of the full expectation value is therefore given by
\begin{align}\label{eq:Pert}
 \sum_{k=0}^\infty \frac{(-\frac{\lambda V}{d})^k}{k!}\int 
 D\Phi \,\Phi_{p_1q_1}..\Phi_{p_Nq_N}\left(\sum_{n_1,..,n_p=0}^\mN \Phi_{n_1n_2}..\Phi_{n_dn_1}\right)^k
 e^{-V\sum_{n,m}
 \frac{H_{nm}}{2}\Phi_{nm}
 \Phi_{mn}}.
\end{align}
The integral factorises into independent integrals of $\Phi_{ij}^R$ and $\Phi_{ij}^I$. Due to the 
Gaussian integral $\int dx f(x)e^{-x^2}=0$ for $f(x)$ odd and $\int dx\,dy (x+\mathrm{i}y)^2 
e^{-x^2-y^2}=0$, the only nonvanishing contributions for
\eqref{eq:Pert} occur if the factor in front of 
the exponential consists of complex conjugated pairs of
$\Phi_{i,j}$'s. Considering again the denominator of the full expectation value 
 \eqref{eq:FullExpectation}, all nonvanishing contributions for any $k$ in \eqref{eq:Pert} 
are after factorisation in $\Phi_{ij}^R$ and $\Phi_{ij}^I$ of the form 
\begin{align}\label{n1}
 \frac{\int_{-\infty}^\infty d\Phi^R_{pq}d\Phi^I_{pq}\,(\Phi_{pq}\Phi_{qp})^ne^{-V\left(
 H_{pq}\Phi_{pq}
 \Phi_{qp}\right)}}{\int_{-\infty}^\infty d\Phi^R_{pq}d\Phi^I_{pq}e^{-V\left(
 H_{pq}\Phi_{pq}
 \Phi_{qp}\right)}}=\frac{n!}{(VH_{pq})^n}.
\end{align}
We will call from now on $\frac{1}{H_{pq}}=\frac{1}{E_p+E_q}$ the \textit{free propagator}. 

The factorisation of \eqref{eq:Pert} into a free propagator with the pair $\Phi_{ij}$ 
and $\bar{\Phi}_{ij}=\Phi_{ji}$ gives restrictions on the $n_i$ as well as on the $p_j$ and $q_j$.
Integrating out a pair of complex conjugated $\Phi$'s is called \textit{Wick contraction}.
Since the interaction term has a cyclic order due to the trace, the $\Phi_{p_iq_i}$ are also forced to 
be of cyclic orders considering that nonvanishing integrals are necessarily of the form \eqref{n1}.
Therefore, a nonvanishing expectation value of $b$ cycles, each of \textit{length} $N_\beta$ 
with $\beta\in\{1,..,b\}$, has necessarily
the form
\begin{align*}
 \langle \Phi_{p_1^1p_2^1}\Phi_{p_2^1p_3^1}..\Phi_{p_{N_1}^1p_1^1}
 \Phi_{p_1^2p_2^2}..\Phi_{p_{N_2}^2p_1^2}..\Phi_{p_1^bp_2^b}..\Phi_{p_{N_b}^bp_1^b}\rangle
\end{align*}
and can be completely separated 
 into ''Wick-contractible'' integrals order by order in $\lambda$.

\begin{exm}\label{Ex1}
 Assume a quartic interaction $d=4$ and the expectation value $\langle \Phi_{pq}\Phi_{qp}\rangle$ at order
$\lambda^1$ with $q\neq p$. The following four contractions are possible 
\begin{align*}
\contraction{\int D\Phi \,}{\Phi}{{}_{pq}}{\Phi}
\contraction{\int D\Phi \,\Phi_{pq}\Phi_{qp} \!\!\!\!\!\!\sum_{n_1,n_2,n_3,n_4}\!\!\!\!}{\Phi}{{}_{n_1n_2}}{\Phi}
\contraction{\int D\Phi \,\Phi_{pq}\Phi_{qp} \!\!\!\!\!\!\sum_{n_1,n_2,n_3,n_4}\!\!\!\!\Phi_{n_1n_2}\Phi_{n_2n_3}
}{\Phi}{{}_{n_3n_4}}{\Phi}
 &\int D\Phi \,\Phi_{pq}\Phi_{qp} \!\!\!\!\!\!\sum_{n_1,n_2,n_3,n_4}\!\!\!\!
 \Phi_{n_1n_2}\Phi_{n_2n_3}\Phi_{n_3n_4}\Phi_{n_4n_1}e^{-V\sum_{n,m}
 \frac{H_{nm}}{2}\Phi_{nm}
 \Phi_{mn}}\sim \frac{1}{H_{pq}}\sum_{n_1,n_2,n_4}\!\!\!{\frac{1}{H_{n_1n_2}H_{n_1n_4}}}\\
 \contraction{\int D\Phi \,}{\Phi}{{}_{pq}}{\Phi}
\contraction[1.5ex]{\int D\Phi \,\Phi_{pq}\Phi_{qp} \sum_{n_1,n_2,n_3,n_4}}{\Phi}{{}_{n_1n_2}
\Phi_{n_2n_3}}{\Phi}
\contraction{\int D\Phi \,\Phi_{pq}\Phi_{qp} \sum_{n_1,n_2,n_3,n_4}\Phi_{n_1n_2}
}{\Phi}{{}_{n_2n_3}\Phi_{n_3n_4}}{\Phi}
 &\int D\Phi \,\Phi_{pq}\Phi_{qp} \sum_{n_1,n_2,n_3,n_4}
 \Phi_{n_1n_2}\Phi_{n_2n_3}\Phi_{n_3n_4}\Phi_{n_4n_1}e^{-V\sum_{n,m}
 \frac{H_{nm}}{2}\Phi_{nm}
 \Phi_{mn}}\sim \frac{1}{H_{pq}}\sum_{n_1}{\frac{1}{H_{n_1n_1}^2}}\\
 \contraction{\int D\Phi \,}{\Phi}{_{pq}\Phi_{qp} \sum_{n_1,n_2,n_3,n_4}}{\Phi}
 \contraction[1.5ex]{\int D\Phi \,\Phi_{pq}}{\Phi}{{}_{qp} \sum_{n_1,n_2,n_3,n_4}\Phi_{n_1n_2}}{\Phi}
 \contraction{\int D\Phi \,\Phi_{pq}\Phi_{qp} \sum_{n_1,n_2,n_3,n_4}
 \Phi_{n_1n_2}\Phi_{n_2n_3}}{\Phi}{{}_{n_3n_4}}{\Phi}
  &\int D\Phi \,\Phi_{pq}\Phi_{qp} \sum_{n_1,n_2,n_3,n_4}
 \Phi_{n_1n_2}\Phi_{n_2n_3}\Phi_{n_3n_4}\Phi_{n_4n_1}e^{-V\sum_{n,m}
 \frac{H_{nm}}{2}\Phi_{nm}
 \Phi_{mn}}\sim \frac{1}{H_{pq}^2}\sum_{n_4}{\frac{1}{H_{qn_4}}}\\
 \contraction[1.5ex]{\int D\Phi \,}{\Phi}{{}_{pq}\Phi_{qp} \sum_{n_1,n_2,n_3,n_4}
 \Phi_{n_1n_2}}{\Phi}
 \contraction{\int D\Phi \,\Phi_{pq}}{\Phi}{{}_{qp} \sum_{n_1,n_2,n_3,n_4}}{\Phi}
 \contraction{\int D\Phi \,\Phi_{pq}\Phi_{qp} \sum_{n_1,n_2,n_3,n_4}
 \Phi_{n_1n_2}\Phi_{n_2n_3}}{\Phi}{{}_{n_3n_4}}{\Phi}
  &\int D\Phi \,\Phi_{pq}\Phi_{qp} \sum_{n_1,n_2,n_3,n_4}
 \Phi_{n_1n_2}\Phi_{n_2n_3}\Phi_{n_3n_4}\Phi_{n_4n_1}e^{-V\sum_{n,m}
 \frac{H_{nm}}{2}\Phi_{nm}
 \Phi_{mn}}\sim \frac{1}{H_{pq}^2}\sum_{n_4}{\frac{1}{H_{pn_4}}},
\end{align*}
where the linked $\Phi$'s are \textit{Wick-contracted}. It means for example that in the fourth line $n_1=n_3=p$ and $n_2=q$. 
Wick's Theorem
says that all
possible contractions give a contribution to the expectation value of a certain order. 
Note that not all contractions give nonvanishing results
\begin{align*}
 \contraction{\int D\Phi \,}{\Phi}{_{pq}\Phi_{qp} \sum_{n_1,n_2,n_3,n_4}}{\Phi}
 \contraction[1.5ex]{\int D\Phi \,\Phi_{pq}}{\Phi}{{}_{qp} \sum_{n_1,n_2,n_3,n_4}\Phi_{n_1n_2}\Phi_{n_2n_3}}{\Phi}
  \int D\Phi \,\Phi_{pq}\Phi_{qp} \sum_{n_1,n_2,n_3,n_4}
 \Phi_{n_1n_2}\Phi_{n_2n_3}\Phi_{n_3n_4}\Phi_{n_4n_1}e^{-V\sum_{n,m}
 \frac{H_{nm}}{2}\Phi_{nm}
 \Phi_{mn}},
\end{align*}
where $n_1=n_4=q$ and  $n_2=n_3=p$. The last contraction is for $q \neq p$ not possible such that this integral 
vanishes. 
\end{exm}

\begin{exm}\label{Ex2}
 Assume a quartic interaction $d=4$ and the expectation value $\langle \Phi_{pp}\Phi_{qq}\rangle$ at order
$\lambda^1$ with $q\neq p$. We have one possible contraction 
\begin{align*}
 \contraction{\int D\Phi \,}{\Phi}{_{pp}\Phi_{qq} \sum_{n_1,n_2,n_3,n_4}}{\Phi}
 \contraction[1.5ex]{\int D\Phi \,\Phi_{pp}}{\Phi}{{}_{qq} \sum_{n_1,n_2,n_3,n_4}\Phi_{n_1n_2}\Phi_{n_2n_3}}{\Phi}
 \contraction{\int D\Phi \,\Phi_{pp}\Phi_{qq} \sum_{n_1,n_2,n_3,n_4}
 \Phi_{n_1n_2}}{\Phi}{{}_{n_2n_3}\Phi_{n_3n_4}}{\Phi}
  \int D\Phi \,\Phi_{pp}\Phi_{qq} \sum_{n_1,n_2,n_3,n_4}
 \Phi_{n_1n_2}\Phi_{n_2n_3}\Phi_{n_3n_4}\Phi_{n_4n_1}e^{-V\sum_{n,m}
 \frac{H_{nm}}{2}\Phi_{nm}
 \Phi_{mn}}\sim \frac{1}{H_{pq}H_{pp}H_{qq}}.
\end{align*}
\end{exm}
\noindent
Both examples together show that if two or more $p_i^j$'s coincide, we will have a degenerate case:
\begin{align*}
 \langle \Phi_{pp}\Phi_{pp}\rangle=\langle \Phi_{pq}\Phi_{qp}\rangle\vert_{q=p}
 +\langle \Phi_{pp}\Phi_{qq}\rangle\vert_{q=p}.
\end{align*}
Assuming pairwise different $p_i^j$'s avoids this problem.

\subsection{Feynman Rules}\label{Sec.Feynman}
Associate to each $\Phi_{nm}$ of the integrand a half edge (ribbon) either open or connected. 
The left face is labelled by the first
index and the right face with the second one.
Associate a Wick contraction to a full edge (ribbon) by connecting two half edges (ribbons) 
with coinciding faces.
Associate to any interaction term
a vertex of degree $d$ with half edges labelled by $ \Phi_{n_1n_2}..\Phi_{n_dn_1}$:
\vspace*{5ex}

\def\svgwidth{0.9\textwidth}
\begingroup%
  \makeatletter%
  \providecommand\color[2][]{%
    \errmessage{(Inkscape) Color is used for the text in Inkscape, but the package 'color.sty' is not loaded}%
    \renewcommand\color[2][]{}%
  }%
  \providecommand\transparent[1]{%
    \errmessage{(Inkscape) Transparency is used (non-zero) for the text in Inkscape, but the package 'transparent.sty' is not loaded}%
    \renewcommand\transparent[1]{}%
  }%
  \providecommand\rotatebox[2]{#2}%
  \ifx\svgwidth\undefined%
    \setlength{\unitlength}{400.16534424bp}%
    \ifx\svgscale\undefined%
      \relax%
    \else%
      \setlength{\unitlength}{\unitlength * \real{\svgscale}}%
    \fi%
  \else%
    \setlength{\unitlength}{\svgwidth}%
  \fi%
  \global\let\svgwidth\undefined%
  \global\let\svgscale\undefined%
  \makeatother%
  \begin{picture}(1,0.39079514)%
    \put(-0.00248921,0.29459304){\color[rgb]{0,0,0}\makebox(0,0)[lb]{\smash{$\Phi_{nm}$}}}%
    \put(0.20100169,0.29518455){\color[rgb]{0,0,0}\makebox(0,0)[lb]{\smash{$\Phi_{ij}$}}}%
    \put(0,0){\includegraphics[width=\unitlength,page=1]{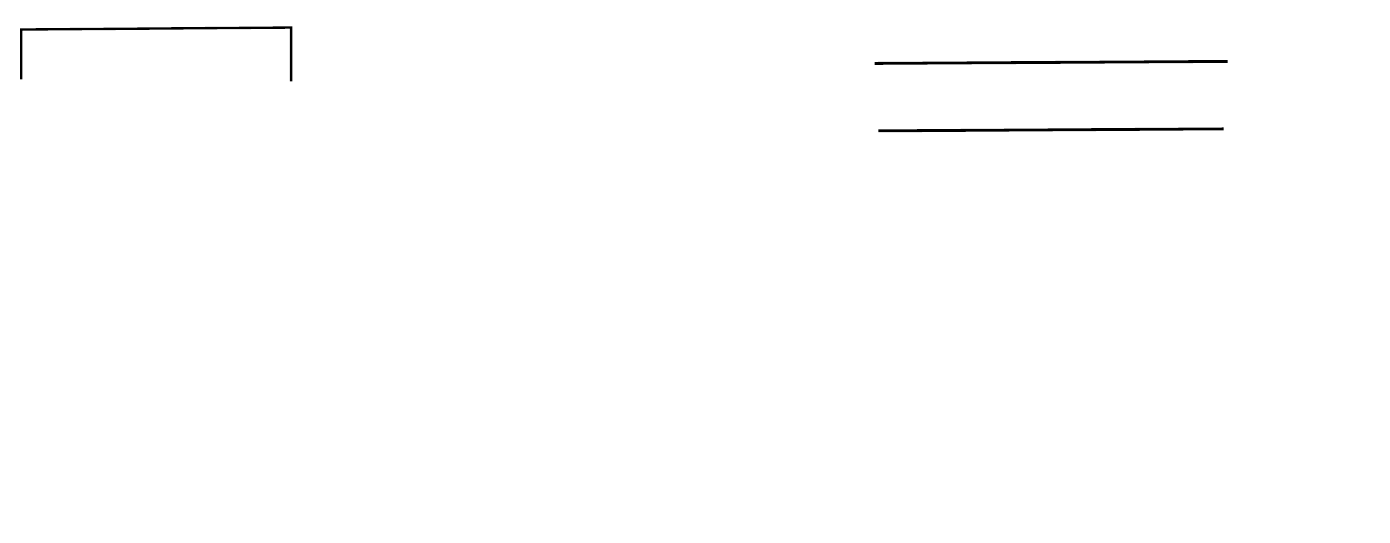}}%
    \put(0.62318799,0.31446056){\color[rgb]{0,0,0}\makebox(0,0)[lb]{\smash{$\Phi_{nm}$}}}%
    \put(0.83891912,0.31446056){\color[rgb]{0,0,0}\makebox(0,0)[lb]{\smash{$\Phi_{ij}$}}}%
    \put(0.62939605,0.36801159){\color[rgb]{0,0,0}\makebox(0,0)[lb]{\smash{$n$}}}%
    \put(0.62792376,0.26172076){\color[rgb]{0,0,0}\makebox(0,0)[lb]{\smash{$m$}}}%
    \put(0.83355306,0.36167944){\color[rgb]{0,0,0}\makebox(0,0)[lb]{\smash{$j$}}}%
    \put(0.83926505,0.2588648){\color[rgb]{0,0,0}\makebox(0,0)[lb]{\smash{$i$}}}%
    \put(0.66652353,0.26091299){\color[rgb]{0,0,0}\makebox(0,0)[lb]{\smash{}}}%
    \put(0.04705956,0.11219999){\color[rgb]{0,0,0}\makebox(0,0)[lb]{\smash{$\Phi_{n_1n_2}\Phi_{n_2n_3}..\Phi_{n_dn_1}$}}}%
    \put(0,0){\includegraphics[width=\unitlength,page=2]{Feynman.pdf}}%
    \put(0.8134587,0.10915599){\color[rgb]{0,0,0}\makebox(0,0)[lb]{\smash{.}}}%
    \put(0.85485785,0.03039668){\color[rgb]{0,0,0}\makebox(0,0)[lb]{\smash{}}}%
    \put(0.77609853,0.04049403){\color[rgb]{0,0,0}\makebox(0,0)[lb]{\smash{.}}}%
    \put(0.82658535,0.05664977){\color[rgb]{0,0,0}\makebox(0,0)[lb]{\smash{.}}}%
    \put(0.83264364,0.07684449){\color[rgb]{0,0,0}\makebox(0,0)[lb]{\smash{.}}}%
    \put(0.80639058,0.04352324){\color[rgb]{0,0,0}\makebox(0,0)[lb]{\smash{.}}}%
    \put(0.74730971,0.11473138){\color[rgb]{0,0,0}\makebox(0,0)[lb]{\smash{$\Phi_{n_2n_3}$}}}%
    \put(0.68724537,0.10836847){\color[rgb]{0,0,0}\makebox(0,0)[lb]{\smash{$\Phi_{n_1n_2}$}}}%
    \put(0.70389114,0.06627635){\color[rgb]{0,0,0}\makebox(0,0)[lb]{\smash{$\Phi_{n_dn_1}$}}}%
    \put(0.72427077,0.00707227){\color[rgb]{0,0,0}\makebox(0,0)[lb]{\smash{$n_d$}}}%
    \put(0.6374335,0.08583162){\color[rgb]{0,0,0}\makebox(0,0)[lb]{\smash{$n_1$}}}%
    \put(0.69296896,0.17569806){\color[rgb]{0,0,0}\makebox(0,0)[lb]{\smash{$n_2$}}}%
    \put(0.78788407,0.16459096){\color[rgb]{0,0,0}\makebox(0,0)[lb]{\smash{$n_3$}}}%
    \put(0,0){\includegraphics[width=\unitlength,page=3]{Feynman.pdf}}%
  \end{picture}%
\endgroup%

\vspace*{5ex}\\
\noindent
\sref{Example}{Ex1} is associated to the following four graphs\vspace*{5ex}

\def\svgwidth{0.9\textwidth}
\begingroup%
  \makeatletter%
  \providecommand\color[2][]{%
    \errmessage{(Inkscape) Color is used for the text in Inkscape, but the package 'color.sty' is not loaded}%
    \renewcommand\color[2][]{}%
  }%
  \providecommand\transparent[1]{%
    \errmessage{(Inkscape) Transparency is used (non-zero) for the text in Inkscape, but the package 'transparent.sty' is not loaded}%
    \renewcommand\transparent[1]{}%
  }%
  \providecommand\rotatebox[2]{#2}%
  \ifx\svgwidth\undefined%
    \setlength{\unitlength}{497.77229761bp}%
    \ifx\svgscale\undefined%
      \relax%
    \else%
      \setlength{\unitlength}{\unitlength * \real{\svgscale}}%
    \fi%
  \else%
    \setlength{\unitlength}{\svgwidth}%
  \fi%
  \global\let\svgwidth\undefined%
  \global\let\svgscale\undefined%
  \makeatother%
  \begin{picture}(1,0.21394005)%
    \put(0,0){\includegraphics[width=\unitlength]{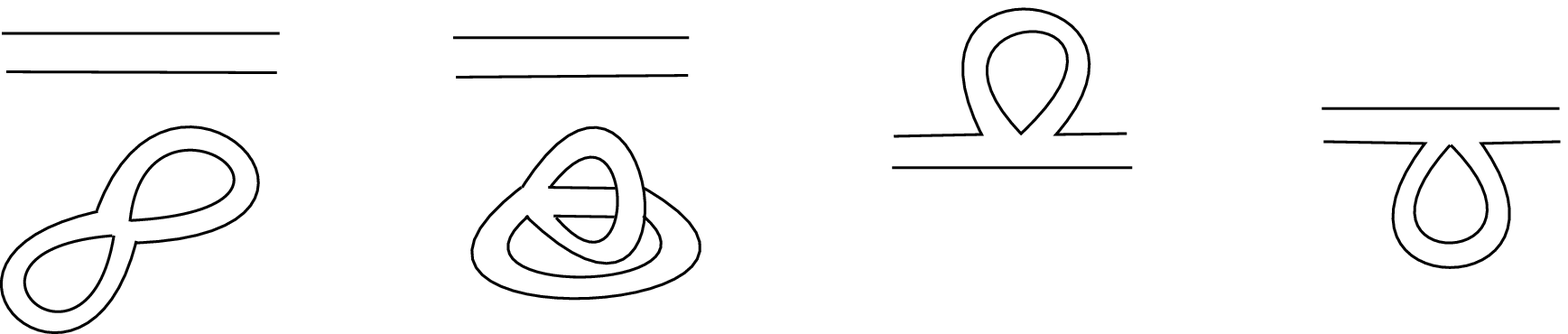}}%
    \put(0.06438471,0.20478206){\color[rgb]{0,0,0}\makebox(0,0)[lb]{\smash{$q$}}}%
    \put(0.06438471,0.14910544){\color[rgb]{0,0,0}\makebox(0,0)[lb]{\smash{$p$}}}%
    \put(0.02071747,0.0881872){\color[rgb]{0,0,0}\makebox(0,0)[lb]{\smash{$n_1$}}}%
    \put(0.097096,0.09363866){\color[rgb]{0,0,0}\makebox(0,0)[lb]{\smash{$n_4$}}}%
    \put(0.02419979,0.03509209){\color[rgb]{0,0,0}\makebox(0,0)[lb]{\smash{$n_2$}}}%
    \put(0.35728513,0.20264419){\color[rgb]{0,0,0}\makebox(0,0)[lb]{\smash{$q$}}}%
    \put(0.35097128,0.14868951){\color[rgb]{0,0,0}\makebox(0,0)[lb]{\smash{$p$}}}%
    \put(0.41565818,0.09708257){\color[rgb]{0,0,0}\makebox(0,0)[lb]{\smash{$n_1$}}}%
    \put(0.58917538,0.14294965){\color[rgb]{0,0,0}\makebox(0,0)[lb]{\smash{$q$}}}%
    \put(0.90027586,0.16131721){\color[rgb]{0,0,0}\makebox(0,0)[lb]{\smash{$q$}}}%
    \put(0.6241886,0.08555106){\color[rgb]{0,0,0}\makebox(0,0)[lb]{\smash{$p$}}}%
    \put(0.86181868,0.10219666){\color[rgb]{0,0,0}\makebox(0,0)[lb]{\smash{$p$}}}%
    \put(0.64238268,0.16768285){\color[rgb]{0,0,0}\makebox(0,0)[lb]{\smash{$n_4$}}}%
    \put(0.91158205,0.08560287){\color[rgb]{0,0,0}\makebox(0,0)[lb]{\smash{$n_4$}}}%
  \end{picture}%
\endgroup%

\vspace*{5ex}\\
The first two graphs are disconnected graphs and include vacuum graphs, respectively. The second 
vacuum graph is of genus $g=1$.\\
\sref{Example}{Ex2} is associated to the following graph\vspace*{5ex}

\hspace*{20ex}
\def\svgwidth{0.3\textwidth}
\begingroup%
  \makeatletter%
  \providecommand\color[2][]{%
    \errmessage{(Inkscape) Color is used for the text in Inkscape, but the package 'color.sty' is not loaded}%
    \renewcommand\color[2][]{}%
  }%
  \providecommand\transparent[1]{%
    \errmessage{(Inkscape) Transparency is used (non-zero) for the text in Inkscape, but the package 'transparent.sty' is not loaded}%
    \renewcommand\transparent[1]{}%
  }%
  \providecommand\rotatebox[2]{#2}%
  \ifx\svgwidth\undefined%
    \setlength{\unitlength}{104.76948476bp}%
    \ifx\svgscale\undefined%
      \relax%
    \else%
      \setlength{\unitlength}{\unitlength * \real{\svgscale}}%
    \fi%
  \else%
    \setlength{\unitlength}{\svgwidth}%
  \fi%
  \global\let\svgwidth\undefined%
  \global\let\svgscale\undefined%
  \makeatother%
  \begin{picture}(1,0.60698325)%
    \put(0,0){\includegraphics[width=\unitlength]{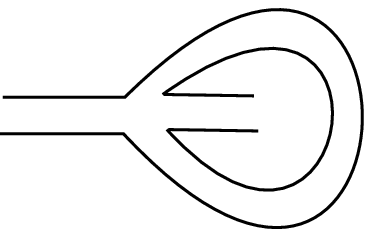}}%
    \put(0.08870338,0.44305489){\color[rgb]{0,0,0}\makebox(0,0)[lb]{\smash{$q$}}}%
    \put(0.70577009,0.41605825){\color[rgb]{0,0,0}\makebox(0,0)[lb]{\smash{$p$}}}%
  \end{picture}%
\endgroup%

\vspace*{5ex}\\
The aforementioned consideration gives a map $\mathfrak{h}$ called \textit{Feynman rules} from the set 
$\mathfrak{G}^{(g,b)}_{p_1^1,p_2^1,..,p_{N_b}^b}$ (set of \textit{Feynman graphs}) 
to all terms appearing 
in the perturbative expansion of the connected expectation value 
$\langle \Phi_{p^1_1p^1_2}\Phi_{p^1_2p_3^1}..
\Phi_{p^1_{N_1}p^1_1}..\Phi_{p^b_{N_b}p^b_1}\rangle_c$ with pairwise different $p_i^j$'s by:
\vspace*{2ex}\\
\fbox{\parbox{\textwidth}{
 \begin{itemize}
 \item An edge labelled by $p,q$ corresponds to the factor $\frac{1}{VH_{pq}}$
 \item A vertex of degree $d$ corresponds to the factor $-\lambda_d V$
 \item Take a sum $\sum_{n_i}$ over all closed faces $n_i$
\end{itemize}}}
\vspace*{2ex}\\
An example appearing in the expectation value $\langle \Phi_{p^1_1p^1_2}\Phi_{p^1_2p^1_1}\Phi_{p^2p^2}\rangle_c$ 
of genus $g=1$ and two boundary components with three open faces and one closed 
face, therefore an element of $\mathfrak{G}^{(1,2)}_{p_1^1,p_2^1,p^2}$, is
\vspace*{3ex}\\

\def\svgwidth{2\textwidth}
\begingroup%
  \makeatletter%
  \providecommand\color[2][]{%
    \errmessage{(Inkscape) Color is used for the text in Inkscape, but the package 'color.sty' is not loaded}%
    \renewcommand\color[2][]{}%
  }%
  \providecommand\transparent[1]{%
    \errmessage{(Inkscape) Transparency is used (non-zero) for the text in Inkscape, but the package 'transparent.sty' is not loaded}%
    \renewcommand\transparent[1]{}%
  }%
  \providecommand\rotatebox[2]{#2}%
  \ifx\svgwidth\undefined%
    \setlength{\unitlength}{1296.23504639bp}%
    \ifx\svgscale\undefined%
      \relax%
    \else%
      \setlength{\unitlength}{\unitlength * \real{\svgscale}}%
    \fi%
  \else%
    \setlength{\unitlength}{\svgwidth}%
  \fi%
  \global\let\svgwidth\undefined%
  \global\let\svgscale\undefined%
  \makeatother%
  \begin{picture}(1,0.13340803)%
    \put(0,0){\includegraphics[width=\unitlength,page=1]{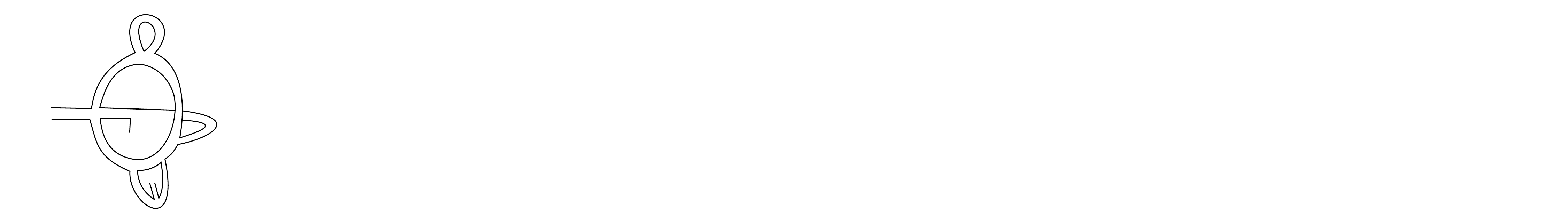}}%
    \put(0.03692109,0.0719002){\color[rgb]{0,0,0}\makebox(0,0)[lb]{\smash{${\scriptstyle p^1_1}$}}}%
    \put(0.09238732,0.01801649){\color[rgb]{0,0,0}\makebox(0,0)[lb]{\smash{$ {\scriptstyle p^2}$}}}%
    \put(0.09124358,0.11039801){\color[rgb]{0,0,0}\makebox(0,0)[lb]{\smash{$ {\scriptstyle n}$}}}%
    \put(-0.00076845,0.06254325){\color[rgb]{0,0,0}\makebox(0,0)[lb]{\smash{$\mathfrak{h}$}}}%
    \put(0,0){\includegraphics[width=\unitlength,page=2]{Graph.pdf}}%
    \put(0.17813209,0.06224498){\color[rgb]{0,0,0}\makebox(0,0)[lb]{\smash{${\displaystyle=\frac{-V^7\lambda_3^5 \lambda_4^2}{V^{13} H_{p_1^1p_2^1}^5H_{p_2^1p^2}^3H_{p^2p^2}H_{p_2^1p_2^1}^3}\sum_{n}\frac{1}{H_{p_1^1n}}}$}}}%
    \put(0.27154153,0.05303486){\color[rgb]{0,0,0}\makebox(0,0)[lb]{\smash{}}}%
    \put(0,0){\includegraphics[width=\unitlength,page=3]{Graph.pdf}}%
    \put(0.03532021,0.04797252){\color[rgb]{0,0,0}\makebox(0,0)[lb]{\smash{${\scriptstyle p^1_2}$}}}%
  \end{picture}%
\endgroup%

\vspace*{2ex}\\
Summing over all possible contractions (Wick's Theorem) corresponds to summing over all Feynman graphs. We 
conclude:
\begin{prps}
 The formal expansion of a connected expectation value 
with pairwise different $p_i^j$'s is 
\begin{align*}
 \langle \Phi_{p^1_1p^1_2}\Phi_{p^1_2p^1_3}..
\Phi_{p^1_{N_1}p^1_1}..\Phi_{p^b_{N_b}p^b_1}\rangle_c=\sum_{g=0}^\infty 
\sum_{\mathfrak{g}\in \mathfrak{G}^{(g,b)}_{p_1^1,p_2^1,..,p_{N_b}^b}}
\mathfrak{h}(\mathfrak{g}).
\end{align*}
\end{prps}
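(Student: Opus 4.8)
Since both sides are by construction formal power series in the couplings $\lambda_3,\dots,\lambda_d$ (the subscript ``formal'' in \eqref{eq:Part1}), it suffices to match them order by order; the statement is then the linked-cluster (Feynman-diagram) theorem for the Gaussian measure behind \eqref{eq:Part1}. The plan is: (i) expand $\exp(-S_{int}[\Phi])$ as a multi-series and apply the Gaussian Wick/Isserlis theorem to each resulting moment; (ii) attach to every Wick pairing a ribbon graph via the dictionary of \sref{Sec.}{Sec.Feynman}; (iii) check that the combinatorial prefactors collapse, so that the coefficient of each $\mathfrak g$ is exactly $1$; (iv) use the exponentiation of disconnected pieces and the definition of $\langle\cdot\rangle_c$ to retain precisely the connected graphs. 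I first work with the unnormalised integral $\int D\Phi\,\Phi_{p_1^1p_2^1}\cdots\Phi_{p^b_{N_b}p^b_1}e^{-S[\Phi]}$ and divide by $\int D\Phi\,e^{-S[\Phi]}$ afterwards.

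For (i)--(ii): in the multi-series expansion of $\exp(-V\Tr\mathcal V(\Phi))$, the term of multi-degree $(k_3,\dots,k_d)$ carries the prefactor $\prod_{j}\frac{1}{k_j!}\bigl(-\tfrac{\lambda_jV}{j}\bigr)^{k_j}$ and multiplies the Gaussian integral of $\Phi_{p_1^1p_2^1}\cdots\Phi_{p^b_{N_b}p^b_1}\prod_j(\Tr\Phi^j)^{k_j}$ against $e^{-V\sum_{n,m} \frac{H_{nm}}{2}\Phi_{nm} \Phi_{mn}}$, normalised by $K$. This measure factorises over the independent real variables, and by \eqref{n1} (with $n=1$) together with the vanishing-of-odd-moments argument of \sref{Sec.}{Sec.Feynman} the integral equals the sum over all pairings of the $\Phi$-letters in the monomial, each pair $\{\Phi_{ab},\Phi_{cd}\}$ contributing the free propagator $\delta_{ad}\delta_{bc}/(VH_{ab})$. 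To a pairing one attaches a ribbon graph exactly as in \sref{Sec.}{Sec.Feynman}: each $\Tr\Phi^j=\sum\Phi_{n_1n_2}\cdots\Phi_{n_jn_1}$ becomes a cyclically oriented $j$-valent vertex, each external letter $\Phi_{p_i^\beta p_{i+1}^\beta}$ becomes a labelled open half-edge as in \sref{Definition}{Def:Ribbon}, and each pair becomes an edge. The Kronecker deltas produced by the propagators identify the summation index on a half-edge with that of the half-edge across the face, so the free sums $\sum_{n_i}$ collapse to exactly one independent sum per closed face while the open faces are pinned to the prescribed $p_i^\beta$; the genus is then fixed by $\chi=v+f-e=2-2g-b$. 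The monomial so produced is precisely $\mathfrak h(\mathfrak g)$ up to a rational multiplicity.

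The heart of the argument is step (iii): this multiplicity is $1$. Permuting the $k_j$ interchangeable factors $\Tr\Phi^j$ gives $k_j!$ pairings realising the same graph and cancels $\prod_j 1/k_j!$; choosing, inside each $\Tr\Phi^j=\sum\Phi_{n_1n_2}\cdots\Phi_{n_jn_1}$, which of the $j$ cyclically ordered letters sits at a prescribed half-edge of the vertex gives a further factor $j$ per $j$-valent vertex and cancels the $\prod_j j^{-k_j}$ hidden in $(\lambda_j/j)^{k_j}$. The leftover factor is $1/|\mathrm{Aut}(\mathfrak g)|$, where $\mathrm{Aut}(\mathfrak g)$ is the automorphism group of $\mathfrak g$ in the sense of \sref{Definition}{Def:Ribbon} (preserving edge--vertex incidences, vertex orientations, and the labels of the open half-edges). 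Since the $p_i^\beta$ are pairwise different, every such automorphism fixes each open half-edge; because it must also preserve every vertex orientation, extending this along the connected graph shows that it fixes every half-edge, so $\mathrm{Aut}(\mathfrak g)$ is trivial. This is exactly where pairwise-distinctness enters: dropping it reinstates symmetry factors and the $q=p$ degeneracy observed after \sref{Example}{Ex2}.

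For (iv): the unnormalised integral is thus a sum over all ribbon graphs carrying the prescribed open half-edges, including graphs with vacuum components; since disconnected pieces exponentiate, dividing by $\int D\Phi\,e^{-S[\Phi]}=K\cdot(\text{sum over vacuum ribbon graphs})$ removes the vacuum components, and passing to the connected expectation value (the $\log$ in the derivative characterisation of $\langle\cdot\rangle_c$) keeps exactly the ribbon graphs whose underlying surface is connected, that is $\bigcup_{g\ge0}\mathfrak G^{(g,b)}_{p_1^1,\dots,p^b_{N_b}}$. Summing over all multi-degrees yields the asserted identity. The genuine obstacle is the bookkeeping of step (iii): defining ``the ribbon graph of a pairing'' so that it coincides with \sref{Definition}{Def:Ribbon}, counting its stabiliser among pairings without over- or under-counting, and proving $|\mathrm{Aut}(\mathfrak g)|=1$; once this is settled, the Gaussian moments, the collapse of the index sums onto faces, and the exponentiation of disconnected and vacuum graphs are routine.
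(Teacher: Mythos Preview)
Your proof is correct and follows the same route as the paper --- Wick's theorem on the Gaussian measure, the dictionary between contractions and ribbon graphs, and the linked-cluster/vacuum-cancellation argument --- but you make explicit what the paper leaves implicit: the text simply states the proposition as a direct consequence of ``summing over all possible contractions (Wick's Theorem) corresponds to summing over all Feynman graphs'', without spelling out the symmetry-factor bookkeeping of your step~(iii) or the rigidity argument showing $|\mathrm{Aut}(\mathfrak g)|=1$ for connected ribbon graphs with pairwise distinct open-face labels. Your treatment is thus a more careful version of the paper's sketch rather than a different approach.
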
\noindent

The full expectation value is given by the same formula, where the sum $\sum_\mathfrak{g}$ is taken over all 
not necessarily connected ribbon graphs, since the map $\mathfrak{h}$ has the property
\begin{align*}
 \mathfrak{h}(\mathfrak{g}\cup \mathfrak{g}')=\mathfrak{h}(\mathfrak{g}) \mathfrak{h}(\mathfrak{g}').
\end{align*}
Due to this property, all vacuum graphs are removed from the perturbative expansion of the
expectation values \eqref{eq:FullExpectation}. The vacuum graphs 
factor out in the numerator in the formal expansion, whereas the denominator produces only vacuum graphs.

\begin{rmk}\label{rmk:hermitian}
 The Hermitian 1-matrix model $(E=const.)$ has a much bigger set of graphs in its perturbative expansion.
 The expectation values 
 are defined differently by $\langle \Tr \Phi^{k_1}\Tr \Phi^{k_2}..\Tr \Phi^{k_n}\rangle$,
 where the trace gives also contributions if two or more $p_i^j$'s are equal.
 The degenerate case is therefore automatically included.
 This means from the graphical point of view that
 an open face can correspond to different boundary components. Furthermore, two faces can have the same boundary component
 even if they are 
 not connected along the edges.\\
 The expectation value $\langle \Tr \Phi^{2}\rangle=\langle \sum_{n,m} \Phi_{nm}\Phi_{mn}\rangle$ 
 has for genus 1 for all $m=n$ a contribution from the graph (different in comparison to \sref{Fig.}{Fig:Embedding} a) 
 coming from the degenerated quadrangulation.
 \vspace*{3ex}\\
 \hspace*{3ex}
 \def\svgwidth{0.9\textwidth}
\begingroup%
  \makeatletter%
  \providecommand\color[2][]{%
    \errmessage{(Inkscape) Color is used for the text in Inkscape, but the package 'color.sty' is not loaded}%
    \renewcommand\color[2][]{}%
  }%
  \providecommand\transparent[1]{%
    \errmessage{(Inkscape) Transparency is used (non-zero) for the text in Inkscape, but the package 'transparent.sty' is not loaded}%
    \renewcommand\transparent[1]{}%
  }%
  \providecommand\rotatebox[2]{#2}%
  \ifx\svgwidth\undefined%
    \setlength{\unitlength}{508.23357206bp}%
    \ifx\svgscale\undefined%
      \relax%
    \else%
      \setlength{\unitlength}{\unitlength * \real{\svgscale}}%
    \fi%
  \else%
    \setlength{\unitlength}{\svgwidth}%
  \fi%
  \global\let\svgwidth\undefined%
  \global\let\svgscale\undefined%
  \makeatother%
  \begin{picture}(1,0.34315266)%
    \put(0,0){\includegraphics[width=\unitlength,page=1]{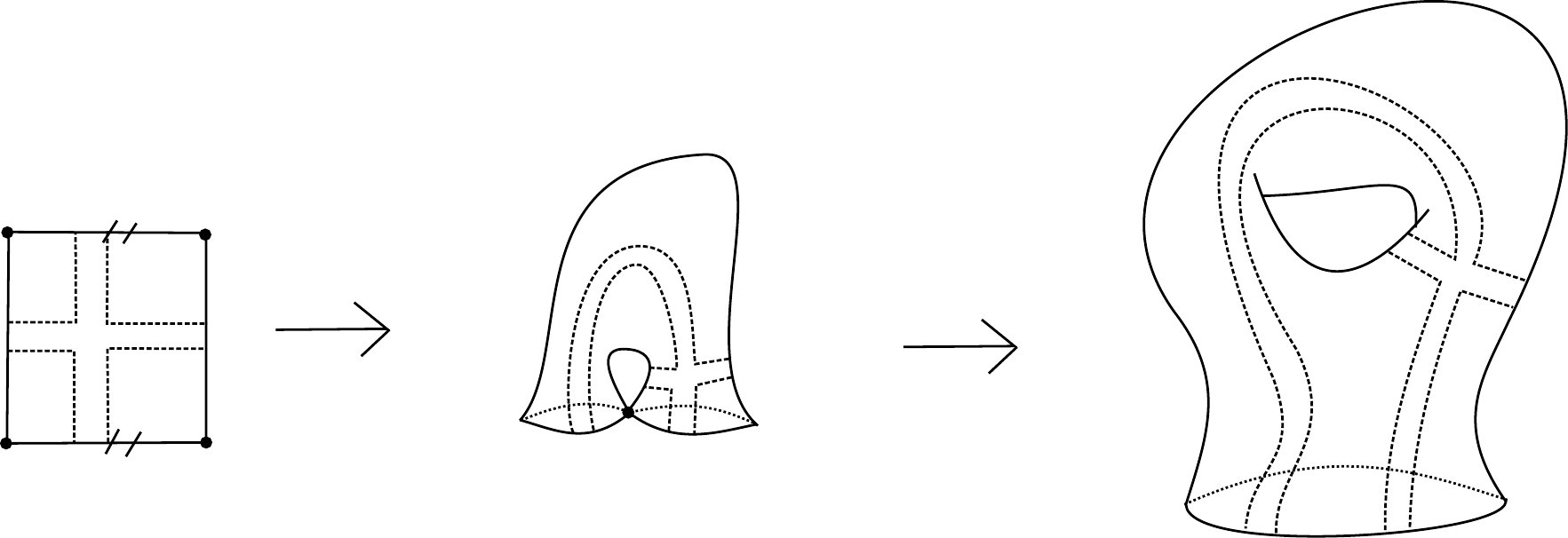}}%
  \end{picture}%
\endgroup%

\end{rmk}

\section{Schwinger-Dyson Equation and Ward-Takahashi Identity}\label{Sec:SDE}
To have a convergent $\mN\to\infty$ limit
an appropriate 
scaling in the parameter $V$ is necessary.
We assume that the sum $\sum^\mN$ and the parameter $V$ are of the same order
\begin{align*}
 \sum^\mN\sim V. 
\end{align*}
Let the interaction be $\mathcal{V}(\Phi)=\sum_{k=3}^d \frac{\lambda_k}{k}\Phi^k$. We conclude from the perturbative
expansion of $\langle \Phi_{p^1_1p^1_2}\Phi_{p^1_2q_1}..
\Phi_{p^1_{N_1}p^1_1}..\Phi_{p^b_{N_b}p^b_1}\rangle_c$ that a graph of order $k_i$ in $\lambda_i$ has the factor
\begin{align}\label{eq:VAsymptot}
 \frac{V^{k_3+k_4+..+k_d}}{V^{\frac{N+3k_3 +4k_4 +..+dk_d }{2}}},
\end{align}
where $N=N_1+N_2+..+N_b$.
On the other hand, Euler's formula gives 
\begin{align*}
 2-2g-b=\underbrace{k_3+..+k_d}_{v}+\underbrace{N+\Sigma}_{f}-\underbrace{\frac{N+k_33+..k_d d}{2}}_{e}\\
 \Rightarrow\quad \Sigma=2-2g-b-(k_3+..+k_d+N)+
 \frac{N+3k_3+..+dk_d }{2},
\end{align*}
where $\Sigma$ is the number of closed faces. Since the sum is taken over all closed faces
and any sum will be of order $V$, 
we have to multiply 
\eqref{eq:VAsymptot} by $V^\Sigma$ and achieve the scaling factor
\begin{align*}
 V^{2-2g-b-N}
\end{align*}
of $\langle \Phi_{p^1_1p^1_2}\Phi_{p^1_2q_1}..
\Phi_{p^1_{N_1}p^1_1}..\Phi_{p^b_{N_b}p^b_1}\rangle_c$
being independent of the perturbative expansion.
This gives rise to the definition of the \textit{correlation function}
which is finite in the limit $(\lim \frac{1}{V}\sum^\mN\to \text{finite})$
\begin{align}\label{eq:DefCorr}
 G_{|p_1^1..p_{N_1}^1|..|p^b_{1}..p^b_{N_b}|}:=V^{b+N_1+..+N_b-2}
 \langle \Phi_{p_1^1p_2^1}\Phi_{p_2^1p_3^1}..\Phi_{p_{N_1}^1p_1^1}..\Phi_{p_1^bp_2^b}..\Phi_{p_{N_b}^bp_1^b}\rangle_c,
\end{align}
where $p_i^j$'s are pairwise different.
We will call $G_{|p_1^1..p_{N_1}^1|..|p^b_{1}..p^b_{N_b}|}$ a $(N_1+..+N_b)$-point function which have
a formal genus $g$-expansion 
\begin{align}\label{eq:DefCorrg}
 G_{|p_1^1..p_{N_1}^1|..|p^b_{1}..p^b_{N_b}|}=:\sum_{g=0}^\infty V^{-2g}
 G^{(g)}_{|p_1^1..p_{N_1}^1|..|p^b_{1}..p^b_{N_b}|}
\end{align}
defining the $(N_1+..+N_b)$-point function of genus $g$.
The $b$ different boundary components are separated by vertical lines in the index of $G$.
The definition \eqref{eq:DefCorr} has the benefit that if two or more $p_i^j$'s coincide,
the degeneracy is separated by the correlation functions. Take the earlier discussed example
\begin{align*}
 G_{|pp|}=&V\langle \Phi_{pq}\Phi_{qp}\rangle_c\vert_{q=p}\\
 G_{|p|p|}=&V^2\langle \Phi_{pp}\Phi_{qq}\rangle_c\vert_{q=p}\\
 \langle \Phi_{pp}^2\rangle_c=&\frac{1}{V}G_{|pp|}+\frac{1}{V^2}G_{|p|p|}.
\end{align*}
A more convenient form of the definition uses
the correspondence between $\Phi$ and $J$-derivatives which shows that
the partition function generates all
correlation functions in powers of $J$
\begin{align}\label{eq:DefCorrSour}
 \log \frac{\Z[J]}{\Z[0]}=\sum_{b=1}^\infty \sum_{N_1,..,N_b=1}^\infty \sum_{p_1^1,..,p_{N_b}^b=0}^\mN
 \sum_{g=0}^\infty V^{2-b-2g}\frac{G^{(g)}_{|p_1^1..p_{N_1}^1|..|p^b_{1}..p^b_{N_b}|}}{b!}
 \prod_{\beta=1}^b\frac{\J_{p_1^\beta..p_{N_\beta}^\beta}}{N_\beta}.
\end{align}
The shorthand notation
$\J_{p_1^\beta..p_{N_\beta}^\beta}:=\prod_{i=1}^{N_\beta} J_{p_i^\beta p_{i+1}^\beta}$ with $N_{\beta}+1\equiv1$ is used. 
The symmetry factor $\frac{1}{N_\beta}$ arises since a correlation function is by definition \eqref{eq:DefCorr}
invariant under a 
cyclic permutation
in each boundary component labelled by $\beta$. The symmetry factor $\frac{1}{b!}$ arises 
since a correlation function is by definition \eqref{eq:DefCorr}
symmetric under changing two boundary components $\beta\leftrightarrow \beta'$ and therefore under changing any boundary
component.

Assuming pairwise different $p_i^j$'s a correlation function is directly extracted by partial derivatives from
equation \eqref{eq:DefCorrSour} at $J=0$:
\begin{align*}
 V^{b-2}\frac{\partial^{N_1+..+N_b}}{\partial J_{p_1^1 p_{2}^1}.. \partial J_{p_{N_b}^bp_{1}^b} }
 \log \frac{\Z[J]}{\Z[0]}\bigg\vert_{J=0}=
 G_{|p_1^1..p_{N_1}^1|..|p^b_{1}..p^b_{N_b}|}=\sum_{g=0}^\infty V^{-2g} 
 G^{(g)}_{|p_1^1..p_{N_1}^1|..|p^b_{1}..p^b_{N_b}|}.
\end{align*}
However, the interesting operation appears if the $J$-derivatives act on $\frac{\Z[J]}{\Z[0]}$
or have coinciding $p_i^j$'s. 
The first non-trivial example follows easily from \eqref{eq:DefCorrSour}:
\begin{align*}
 \frac{\partial^2}{\partial J_{pp}\partial J_{pp}}\frac{\Z[J]}{\Z[0]}\bigg\vert_{J=0}
 =&VG_{|pp|}+G_{|p|p|}+V^2G_{|p|}G_{|p|}\\
 =&\sum_{g=0}^\infty V^{-2g}\left(
 VG^{(g)}_{|pp|}+G^{(g)}_{|p|p|}+V^2\sum_{h+h'=g}G^{(h)}_{|p|}G^{(h')}_{|p|}\right)\\
 =&V^2\langle \Phi_{pp}\Phi_{pp}\rangle_c +V^2\langle \Phi_{pp}\rangle^2_c=V^2\langle \Phi_{pp}\Phi_{pp}\rangle.
\end{align*}
Considering an interaction $S_{int}[\Phi]$, the aim is to determine all correlation functions.
The first step in doing so is to determine equations between correlation functions 
which are called \textit{Schwinger-Dyson equations} (SDEs), (or loop equations). 
These equations are mainly achieved by the following trick 
\begin{lemma}\label{Lemma:Trick}
 Let $f(x)$ be a polynomial in $x$, $g(x)$ smooth and $\partial_x=\frac{\partial}{\partial x}$. 
 Define the operator $\exp\left(f\left(\partial_x\right)\right)
 :=\sum_{k=0}^\infty \frac{f(\partial_x)^k}{k!}$
 then we have
 \begin{align*}
  e^{f(\partial_x)}(x\cdot g(x))=x e^{f(\partial_x)}g(x) +f'(\partial_x) e^{f(\partial_x)}g(x).
 \end{align*}
\begin{proof}
 Expanding $e^{f(\partial_x)}$ by the definition and using the Leibniz rule to have $[f(\partial_x)^k, x] =
 k f'(\partial_x) f(\partial_x)^{k-1}$ 
 gives the rhs after shifting the index $k\to k+1$.
\end{proof}
\end{lemma}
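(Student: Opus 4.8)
The plan is to reduce the identity to a single commutator formula and then sum the exponential series term by term. Since $e^{f(\partial_x)}(x\cdot g) = \sum_{k\geq 0}\frac{1}{k!}f(\partial_x)^k(x\cdot g)$, everything comes down to commuting multiplication by $x$ to the left past each power $f(\partial_x)^k$, i.e.\ to evaluating the commutator $[f(\partial_x)^k, x]$ (where $x$ denotes the multiplication operator $g\mapsto x\cdot g$).

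First I would record the base case $[\partial_x, x] = \mathrm{id}$, which is just the product rule $\partial_x(x g) = g + x\,\partial_x g$. An immediate induction on $n$ then gives $[\partial_x^{\,n}, x] = n\,\partial_x^{\,n-1}$, so by linearity $[f(\partial_x), x] = f'(\partial_x)$ for every polynomial $f$. Next I would prove $[f(\partial_x)^k, x] = k\,f'(\partial_x)\,f(\partial_x)^{k-1}$ by induction on $k$: applying the Leibniz-type rule $[AB, C] = A[B,C] + [A,C]B$ with $A = f(\partial_x)$, $B = f(\partial_x)^{k-1}$, $C = x$, and using that $f(\partial_x)$ and $f'(\partial_x)$ commute (both being polynomials in $\partial_x$), the inductive step closes in one line.

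Substituting into the series and shifting the summation index $k\mapsto k+1$ then gives
\begin{align*}
 e^{f(\partial_x)}(x\cdot g) &= \sum_{k\geq 0}\frac{1}{k!}\Bigl(x\,f(\partial_x)^k + [f(\partial_x)^k, x]\Bigr)g\\
 &= x\,e^{f(\partial_x)}g + f'(\partial_x)\sum_{k\geq 1}\frac{1}{(k-1)!}f(\partial_x)^{k-1}g = x\,e^{f(\partial_x)}g + f'(\partial_x)\,e^{f(\partial_x)}g,
\end{align*}
which is the asserted identity.

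I expect essentially no obstacle here; the content is purely the commutator bookkeeping, the one point to keep in mind being that $f(\partial_x)$ and $f'(\partial_x)$ commute so that the inductive step for $[f(\partial_x)^k, x]$ goes through. The only non-algebraic subtlety is that $e^{f(\partial_x)}$ is an infinite-order differential operator, so for a general smooth $g$ the series need not converge; in the intended application it acts on formal power series in the source variables and every manipulation above is valid coefficient by coefficient, which is all that is needed.
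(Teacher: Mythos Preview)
Your proof is correct and follows exactly the same approach as the paper's: expand the exponential, use the Leibniz rule to obtain $[f(\partial_x)^k, x] = k\,f'(\partial_x)\,f(\partial_x)^{k-1}$, and shift the summation index. You have simply spelled out in more detail the intermediate steps (the base commutator, the induction on powers, and the commutativity of $f(\partial_x)$ with $f'(\partial_x)$) that the paper condenses into a single sentence.
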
\noindent
Applying \sref{Lemma}{Lemma:Trick} with $f=S_{int}$
creates the derivative of $S_{int}$ by
\begin{align}\label{eq:SintAbl}
 \left(\frac{\partial S_{int}}{\partial \Phi_{pq}}\right)\left[\frac{1}{V}\frac{\partial}{\partial J}\right] =
 \sum_{k=3}^{d} \frac{\lambda_k}{V^{k-2}}\sum_{n_1,..,n_{k-2}=0}^\mN 
 \frac{\partial^{k-1}}{\partial J_{p n_{1}}\partial J_{n_{1}n_{2}}..\partial J_{n_{k-2} q}}.
\end{align}
The SDE is established for pairwise different $p_i^j$'s by
 \begin{align}                    \nonumber
        &G_{|p_1^1..p_{N_1}^1|..|p^b_{1}..p^b_{N_b}|}    
        =     V^{b-2}\partial \J_{p_1^1..p_{N_b}^b}
                \log \frac{\Z[J]}{\Z[0]}\bigg\vert_{J=0}\\     \nonumber
        =& V^{b-1}K\partial \J_{p_2^1..p_{N_b}^b}\frac{1}{\Z[J]} e^{-S_{int}[\frac{1}{V}\frac{\partial}{\partial J}]}
        \frac{J_{p_2^1p_1^1}}{H_{p_1^1p_2^1}}  \Z_{free}[J]\vert_{J=0} \\      \label{eq:SDeqGeneral}
        =&\frac{\delta_{b,1}\delta_{N_1,2}}{H_{p_1^1p_2^1}}
        -   \frac{V^{b-1}}{H_{p_1^1p_2^1}}\partial \J_{p_2^1..p_{N_b}^b}\frac{1}{\Z[J]}
        \left(\frac{\partial S_{int}}{\partial \Phi_{p_1^1p_2^1}}\right)
        \left[\frac{1}{V}\frac{\partial}{\partial J}\right]   \Z[J]\vert_{J=0}
 \end{align}   
 with the shorthand notation $\partial \J_{p_1^1..p_{N_b}^b}:=
\frac{\partial^{N_1+..+N_b}}{\partial J_{p_1^1 p_2^1}..
\partial J_{p_{N_1}^1 p_{N_1+1}^1}..\partial J_{p_1^b p_2^b}..\partial J_{p_{N_b}^b p_{N_b+1}^b}}$  and
$N_{\beta}+1\equiv1$.
  The second line is achieved by deriving $Z_{free}$ of \eqref{eq:Part2} with respect to $J_{p_1^1p_2^1}$
  and third line by \sref{Lemma}{Lemma:Trick} as well as considering $J=0$. Recall that 
  $\frac{\partial S_{int}}{\partial \Phi_{pq}}$ is of degree $d-1$ such that the SDE
  \eqref{eq:SDeqGeneral}
  has for instance on the rhs $((N_1+d-2)+N_2+..+N_b)$-point functions with $d-1$-summations, or 
  different types of nonlinear terms if one or more summation indices coincide with other indices (degenerate case).
 The SDEs form a tower of equations. Writing precisely the first equations shows 
 that the 1-point function depends on the 
 2-point, the 2-point on the 3-point and so on. The tower of equations does not decouple and is therefore
 inefficient to determine correlation functions explicitly. 
 
 Notice that equation \eqref{eq:SDeqGeneral} has a certain base point $J_{p_1^1p_2^1}$. The two variables 
 $p_1^1$ and $p_2^1$ play a special r\^ole. The resulting correlation function, however, obeys
 the cyclic symmetry
 in each boundary and an additional symmetry between the boundaries. In other words,
 a highly symmetric function fulfils an equation which is non-symmetric 
 in one of its variables. 
 
 Nevertheless, a decoupling of the tower is possible and achieved in the large $V$-limit and by the 
 \textit{Ward-Takahashi identity} arising from a 
 symmetry transformation of the partition function $\Z[J]$, first derived in \cite{Disertori:2006nq}. 
 Transform the integration variables $(\Phi_{nm})$ of the 
 partition function \eqref{eq:Part1} by a unitary transformation $\Phi\mapsto \Phi'=U\Phi U^\dagger$ with $U\in
 \mathrm{U}(\mN+1)$. The partition function is invariant under this transformation since 
 the property of Hermitian matrices is preserved.
 From the invariant measure $\mathrm{det}\frac{\partial U\Phi U^\dagger}{\partial \Phi}
 =(\mathrm{det} U\,\mathrm{det}U^\dagger)^{\mN+1}=1$, it follows
 \begin{align}
  0=\int D\Phi \,\exp(-S[\Phi]+V\Tr(J\Phi))-\exp(-S[\Phi']+V\Tr(J\Phi')).
 \end{align}
Expanding the unitary transformation about the identity $U=\mathbb{I}+\mathrm{i}\varepsilon A+\mathcal{O}(\varepsilon^2)$
gives the transformed matrix $\Phi'=\Phi +\mathrm{i}\varepsilon (A\Phi-\Phi A)+\mathcal{O}(\varepsilon^2)$  
and finally, at order $\varepsilon^1$,
\begin{align}\label{eq:WardHer}
 0= \int D\Phi\, (E\Phi^2-\Phi^2 E-J\Phi+\Phi J)_{pq}\exp(-S[\Phi]+V\Tr(J\Phi)).
\end{align}
Note that the interaction is invariant under unitary transformation $S_{int}[\Phi]=S_{int}[\Phi']$, 
whereas the kinetic term and the source 
term are not. Since \eqref{eq:WardHer} vanishes for any interaction $S_{int}$, the factor in front of the exponential 
has to vanish. We obtain after applying the correspondence between $\Phi$ and $J$-derivatives 
\begin{prps}(\cite{Disertori:2006nq})\label{Prop:WardId}
 The partition function $\Z[J]$ obeys the Ward-Takahashi identity
 \begin{align*}
  0=\sum_{n=0}^\mN \left(\frac{E_p-E_q}{V}\frac{\partial^2}{\partial J_{qn}\partial J_{np}}+
  J_{nq}\frac{\partial}{\partial J_{np}}-J_{pn}\frac{\partial}{\partial J_{qn}}\right)\Z[J].
 \end{align*}
\end{prps}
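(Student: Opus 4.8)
The plan is to take the order-$\varepsilon$ identity \eqref{eq:WardHer}, which already encodes the full content of the unitary change of variables, and to convert the $\Phi$-polynomial standing in front of the exponential into a differential operator in $J$ acting on $\Z[J]$. First I would write out the $(p,q)$-entry of the matrix appearing in \eqref{eq:WardHer}. Since $E=(E_n\delta_{nm})$ is diagonal one finds $(E\Phi^2-\Phi^2E)_{pq}=(E_p-E_q)\sum_{n=0}^{\mN}\Phi_{pn}\Phi_{nq}$, while $(-J\Phi+\Phi J)_{pq}=\sum_{n=0}^{\mN}\big(\Phi_{pn}J_{nq}-J_{pn}\Phi_{nq}\big)$, so that \eqref{eq:WardHer} becomes
\begin{align*}
 0=\int D\Phi\sum_{n=0}^{\mN}\Big((E_p-E_q)\Phi_{pn}\Phi_{nq}+\Phi_{pn}J_{nq}-J_{pn}\Phi_{nq}\Big)\exp\big(-S[\Phi]+V\Tr(J\Phi)\big).
\end{align*}

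Next I would apply the correspondence $\Phi_{nm}\leftrightarrow\frac{1}{V}\frac{\partial}{\partial J_{mn}}$ established in the displayed computation following \eqref{eq:Part1}. Because $S_{int}$ is independent of $J$ and the entries of $J$ commute with the $\Phi$-integration, one has
\begin{align*}
 \int D\Phi\,\Phi_{pn}\Phi_{nq}\,e^{-S[\Phi]+V\Tr(J\Phi)}&=\frac{1}{V^2}\frac{\partial^2}{\partial J_{qn}\partial J_{np}}\Z[J],\\
 \int D\Phi\,J_{nq}\Phi_{pn}\,e^{-S[\Phi]+V\Tr(J\Phi)}&=\frac{1}{V}J_{nq}\frac{\partial}{\partial J_{np}}\Z[J],
\end{align*}
and the analogous identity for $J_{pn}\Phi_{nq}$; the two $J$-derivatives in the first line commute, and the $c$-number factors $J_{nq},J_{pn}$ are pulled to the left of the remaining derivative. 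Substituting these and multiplying by $V$ reproduces the asserted identity
\begin{align*}
 0=\sum_{n=0}^{\mN}\left(\frac{E_p-E_q}{V}\frac{\partial^2}{\partial J_{qn}\partial J_{np}}+J_{nq}\frac{\partial}{\partial J_{np}}-J_{pn}\frac{\partial}{\partial J_{qn}}\right)\Z[J].
\end{align*}

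I do not expect a genuine obstacle here: the substantive step --- the infinitesimal unitary transformation of the integration variables and of the source and kinetic terms --- has already been carried out to obtain \eqref{eq:WardHer}. The only points that deserve a line of care are that the Hermiticity constraints on $J$ and $\Phi$ do not disturb the naive index dictionary $\Phi_{nm}\leftrightarrow\frac{1}{V}\frac{\partial}{\partial J_{mn}}$ (handled exactly as in the example after \eqref{eq:Part1}), and that in the linear terms the $c$-number entries of $J$ must be extracted from the $\Phi$-integral before the remaining single $\Phi$ is replaced by a $J$-derivative, which is what produces the ordering ``$J$ to the left of $\partial/\partial J$'' in the final operator. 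Everything else is a rearrangement of finite sums.
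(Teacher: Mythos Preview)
Your proof is correct and follows essentially the same route as the paper: expand the $(p,q)$-entry of \eqref{eq:WardHer} using that $E$ is diagonal, then replace each factor $\Phi_{nm}$ by $\frac{1}{V}\frac{\partial}{\partial J_{mn}}$ acting on $\Z[J]$, exactly as the paper indicates in the sentence preceding the proposition. The paper gives no further details beyond ``applying the correspondence between $\Phi$ and $J$-derivatives'', so your write-up is in fact more explicit than the original.
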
\noindent
The Ward-Takahashi identity reduces a second-order derivative together with a sum over the intermediate index 
to a first-order derivative with an additional factor in $J$. Precisely this reduction can decouple the 
tower of SDEs in the limit $\mN,V\to\infty$. We emphasise that the decoupling is
possible for an arbitrary interaction term $S_{int}$. The Ward-Takahashi identity has to be applied therefore
possibly several times.

We can further generalise \sref{Proposition}{Prop:WardId} 
by a direct calculation instead of unitary transformation
(observed in \cite{Hock:2018wup})
\begin{prps}\label{Prop:GenWardId}
 The partition function $\Z[J]$ obeys the generalised Ward-Takahashi identity
 \begin{align*}
  &\left\{\left(\frac{\partial S_{int}}{\partial \Phi_{nq}}\right)\left[\frac{1}{V}\frac{\partial}{\partial J}\right]
  \frac{\partial}{\partial J_{np}}-
  \left(\frac{\partial S_{int}}{\partial \Phi_{pn}}\right)\left[\frac{1}{V}\frac{\partial}{\partial J}\right]
  \frac{\partial}{\partial J_{qn}}\right\}\Z[J]
  \\
  =&\left(\frac{E_p-E_q}{V}\frac{\partial^2}{\partial J_{qn}\partial J_{np}}+
  J_{nq}\frac{\partial}{\partial J_{np}}-J_{pn}\frac{\partial}{\partial J_{qn}}\right)\Z[J].
 \end{align*}
 \begin{proof}
  Direct computation leads to
  \begin{align*}
			&\frac{E_{p}-E_{q}}{V}
			\frac{\partial^2}{\partial J_{qn}\partial J_{np}}\mathcal{Z}[J]\\
			=&\frac{1}{V}\frac{\partial^2}
			{\partial J_{qn}\partial J_{np}}
			\left((E_{p}+E_n)-(E_n+E_{q})\right)\mathcal{Z}[J]\\
			=&K\Bigg\{\frac{\partial}{\partial J_{qn}}
			\exp\left(-S_{int}\left[\frac{1}{V}\frac{\partial}{\partial J}\right]\right)J_{pn}-
			\frac{\partial}{\partial J_{np}}\exp\left(-S_{int}
			\left[\frac{1}{V}\frac{\partial}{\partial J}\right]\right)
			J_{nq}\Bigg\}\mathcal{Z}_{free}[J]\\
			=&\left(J_{pn}\frac{\partial}
			{\partial J_{qn}}-J_{nq}\frac{\partial}
			{\partial J_{np}}\mathcal{Z}[J]
			-\left(\frac{\partial S_{int}}{\partial \Phi_{pn}}\right)
			\left[\frac{1}{V}\frac{\partial}{\partial J}\right]
			\frac{\partial}{\partial J_{qn}}+
			\left(\frac{\partial S_{int}}{\partial \Phi_{nq}}\right)
			\left[\frac{1}{V}\frac{\partial}{\partial J}\right]
			\frac{\partial}{\partial J_{np}}\right)\Z[J],
		\end{align*}
where we used the partition function of the form of \eqref{eq:Part2}. The third line is achieved by acting with 
$\frac{\partial}{\partial J_{np}}$ on $\Z_{free}$ for the first term and with
$\frac{\partial}{\partial J_{qn}}$ for the second term. For the last line, \sref{Lemma}{Lemma:Trick} was applied.
 \end{proof}
\end{prps}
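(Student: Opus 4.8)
The plan is to derive the identity by a direct computation starting from the right-hand side, using only the Gaussian-convolution representation \eqref{eq:Part2} of $\Z[J]$ and \sref{Lemma}{Lemma:Trick}; unlike \sref{Proposition}{Prop:WardId}, no unitary change of variables is invoked.

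First I would split the coefficient of the double derivative as $E_p-E_q=(E_p+E_n)-(E_n+E_q)=H_{pn}-H_{nq}$, so that $\frac{E_p-E_q}{V}\frac{\partial^2}{\partial J_{qn}\partial J_{np}}\Z[J]$ becomes the difference of two terms, one in which the factor $H_{pn}$ is paired with $\frac{\partial}{\partial J_{np}}$ and one in which $H_{nq}$ is paired with $\frac{\partial}{\partial J_{qn}}$. Since every $J$-derivative commutes with $\exp(-S_{int}[\frac1V\frac{\partial}{\partial J}])$, I would move that operator to the left in $\Z[J]=K\exp(-S_{int}[\frac1V\frac{\partial}{\partial J}])\Z_{free}[J]$ and let $\frac{H_{pn}}{V}\frac{\partial}{\partial J_{np}}$ act directly on the Gaussian $\Z_{free}[J]$. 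Because $\frac1V\frac{\partial}{\partial J_{np}}\Z_{free}[J]=\frac{J_{pn}}{H_{np}}\Z_{free}[J]$ and $H_{np}=H_{pn}$, the propagator factor cancels and this contribution collapses to multiplication by $J_{pn}$ acting on $\Z_{free}[J]$ (and, symmetrically, the other contribution to multiplication by $J_{nq}$), each still sitting underneath the remaining outer $J$-derivative.

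Next I would commute $\exp(-S_{int}[\frac1V\frac{\partial}{\partial J}])$ back to the right, past the freshly created multiplication operators $J_{pn}$ and $J_{nq}$. This is exactly the hypothesis of \sref{Lemma}{Lemma:Trick} with $f=-S_{int}$: it produces a pass-through term, namely multiplication by $J$ acting on the full $\Z[J]$, together with a correction term $f'(\partial_x)\,\exp(-S_{int}[\frac1V\frac{\partial}{\partial J}])\Z_{free}[J]$, in which $f'$ is one derivative of $S_{int}$ with respect to the matrix entry conjugate to $J_{pn}$ (resp.\ $J_{nq}$) under the correspondence $\Phi_{nm}\leftrightarrow\frac1V\frac{\partial}{\partial J_{mn}}$, packaged as in \eqref{eq:SintAbl}. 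Applying the leftover outer derivative, the pass-through pieces assemble into $-J_{pn}\frac{\partial}{\partial J_{qn}}\Z[J]+J_{nq}\frac{\partial}{\partial J_{np}}\Z[J]$, which I move across to the right-hand side, while the two correction pieces assemble into the bracket $\{\cdots\}\Z[J]$ on the left; a brief rearrangement then yields the claim.

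The delicate part is entirely in this last step: one must (i) track correctly which matrix index ends up on $\partial S_{int}/\partial\Phi$ after commuting past $J_{pn}$ rather than $J_{nq}$, i.e.\ honour the transposition built into $\Phi_{nm}\leftrightarrow\frac1V\frac{\partial}{\partial J_{mn}}$, and (ii) verify that the spurious contact terms $\propto\delta_{pq}$ generated whenever a $\frac{\partial}{\partial J}$ hits a $J$ cancel between the two halves rather than surviving. As a consistency check one can sum the identity over $n$: the left-hand side must then collapse to $0$ (in field language the combination becomes $V\sum_k\lambda_k(\Phi^k)_{pq}$ on both sides and cancels), recovering \sref{Proposition}{Prop:WardId}. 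Everything else is routine manipulation of commuting partial derivatives and the constant coefficients $H_{pn},H_{nq}$.
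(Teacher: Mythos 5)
Your proposal is correct and follows essentially the same route as the paper: the split $E_p-E_q=H_{pn}-H_{nq}$, letting $\tfrac{1}{V}\partial_{J_{np}}$ (resp.\ $\tfrac{1}{V}\partial_{J_{qn}}$) act on $\Z_{free}$ so that the propagator cancels against $H$, and then using \sref{Lemma}{Lemma:Trick} to commute $\exp(-S_{int}[\tfrac1V\partial_J])$ past the resulting multiplication operators $J_{pn}$, $J_{nq}$, which produces exactly the pass-through and $\partial S_{int}/\partial\Phi$ correction terms. Your explicit remarks on the index transposition in the correspondence $\Phi_{nm}\leftrightarrow\tfrac1V\partial_{J_{mn}}$ and on the cancellation of the $\delta_{pq}$ contact terms are the right points to watch and are consistent with the paper's final formula.
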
\noindent
The Ward-Takahashi identity of \sref{Proposition}{Prop:WardId} is recovered 
by taking the sum of the generalised identity of \sref{Proposition}{Prop:GenWardId} over $n$.

Both propositions have a problem, the prefactor $(E_{p}-E_{q})$ cannot be divided out without any more ado.
The decomposition of $\Z$ into correlation functions is degenerate for coinciding indices. 

Let us assume that two eigenvalues
are distinct $E_p\neq E_q$ for $p\neq q$, which seems to be a strong assumption. However, all correlation
functions depend on the $E_p$'s continuously according to the perturbative expansion. 
A sum over all eigenvalues 
can therefore be partitioned into 
a sum over distinct eigenvalues associated with a discrete measure governed their multiplicities. Including 
an appropriate measure therefore covers the assumption of distinct eigenvalues (see \sref{Sec.}{Sec:LargeLimit}). 

The distinct eigenvalues $E_p$ with $E_p<E_{p+1}$ can be understood as a discretisation of a continuously differentiable 
function. Furthermore, correlation functions are at least perturbatively rational functions of $E_p$ such that they become 
differentiable with respect to the index through the continuous extension of the eigenvalues. Under this consideration,
the following
was proved
\begin{thrm}(\cite[Theorem 2.3]{Grosse:2012uv})\label{Thm:Raimar}
 For injective $n\mapsto E_n$, the partition function $\Z[J]$ with action $S[\Phi]=V \Tr(E\Phi^2+ \mathcal{V}(\Phi))$, where
 $E_{nm}=E_n\delta_{nm}$ is diagonal and $\mathcal{V}(\Phi)$ a polynomial, satisfies the Ward-Takahashi identity
 \begin{align*}
  \sum_{n=0}^\mN \frac{\partial^2\Z[J]}{\partial J_{qn}\partial J_{np}}=&
  \delta_{p,q}( W^1_p[J]+W^2_p[J])\Z[J]+\frac{V}{E_p-E_q}\sum_{n=0}^\mN
  \left(J_{pn}\frac{\partial}{\partial J_{qn}}-J_{nq}\frac{\partial}{\partial J_{np}}\right)\Z[J],\\
  W^2_p[J]:=&\sum_{b=1}^\infty \sum_{N_1,..,N_b=1}^\infty \sum_{p_1^1,..,p_{N_b}^b=0}^\mN
 \sum_{g=0}^\infty \frac{V^{2-b-2g}}{b!}\prod_{\beta=1}^b\frac{\J_{p_1^\beta..p_{N_\beta}^\beta}}{N_\beta}\times\\
 &\bigg(\frac{1}{V^2} G^{(g)}_{|p|p|p_1^1..p_{N_1}^1|..|p^b_{1}..p^b_{N_b}|}+\frac{1}{V}\sum_{n=0}^\mN G^{(g)}_{|pn|
 p_1^1..p_{N_1}^1|..|p^b_{1}..p^b_{N_b}|}\\
 &+\sum_{M=3}^\infty\sum_{n,q_1,..,q_{M-3}=0}^\mN G^{(g)}_{|npnq_1..q_{M-3}|
 p_1^1..p_{N_1}^1|..|p^b_{1}..p^b_{N_b}|}J_{nq_1}J_{q_1q_2}..J_{q_{M-3}n}\bigg),\\
 W^1_p[J]:=&\sum_{b,b'=1}^\infty \sum_{N_1,..,N_b,M_1,..,M_{b'}=1}^\infty \sum_{p_1^1,..,p_{N_b}^b,q_1^1,..
 q_{M_{b'}}^{b'}=0}^\mN
 \sum_{g,g'=0}^\infty \frac{V^{4-b-b'-2(g+g')}}{b!b'!}\times\\
 &\prod_{\beta=1}^b\frac{\J_{p_1^\beta..p_{N_\beta}^\beta}}{N_\beta}
 \prod_{\beta'=1}^{b'}\frac{\J_{q_1^{\beta'}..q_{M_{\beta'}}^{\beta'}}}{M_\beta'}
 \frac{1}{V}G^{(g)}_{|p|p_1^1..p_{N_1}^1|..|p^b_{1}..p^b_{N_b}|}
 \frac{1}{V}G^{(g')}_{|p|q_1^1..q_{M_1}^1|..|q^{b'}_{1}..q^{b'}_{M_{b'}}|}.
 \end{align*}
\end{thrm}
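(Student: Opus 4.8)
The identity will be obtained from the elementary Ward--Takahashi identity of \sref{Proposition}{Prop:WardId} together with the correlation-function expansion \eqref{eq:DefCorrSour} of $\Z[J]$. The two terms on the right-hand side correspond to the two regimes $p\neq q$ and $p=q$, and the hypothesis that $n\mapsto E_n$ be injective is precisely what makes the mixed expression well-posed: by the Feynman-rule expansion of \sref{Sec.}{Sec:Pert}, every correlation function is, order by order in the $\lambda_k$, a rational function of the eigenvalues $(E_n)_{n=0}^{\mN}$, hence continuous in them; so both sides extend continuously from the open locus $\{E_p\neq E_q\}$, and the factor $\tfrac{V}{E_p-E_q}$ at $p=q$ is to be read as the limit $E_q\to E_p$. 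For $p\neq q$ there is nothing to do beyond dividing \sref{Proposition}{Prop:WardId} by $\tfrac{E_p-E_q}{V}$: this is exactly the asserted identity with $\delta_{p,q}=0$. All the content is therefore in the diagonal case $p=q$, where \sref{Proposition}{Prop:WardId} degenerates to $0=0$ and $\sum_n\partial^2_{J_{pn}J_{np}}\Z$ must be evaluated directly.

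\textbf{The diagonal computation.}
Write $\Z[J]=\Z[0]\,e^{W[J]}$ with $W[J]=\log(\Z[J]/\Z[0])$ given explicitly by \eqref{eq:DefCorrSour}, so that
\begin{align*}
 \sum_{n=0}^{\mN}\frac{\partial^2\Z}{\partial J_{pn}\partial J_{np}}
 =\Z\sum_{n=0}^{\mN}\left(\frac{\partial^2 W}{\partial J_{pn}\partial J_{np}}
 +\frac{\partial W}{\partial J_{pn}}\,\frac{\partial W}{\partial J_{np}}\right).
\end{align*}
Now insert \eqref{eq:DefCorrSour} and track which $J$-monomials are hit. Each monomial is a product of cyclic words $\J_{p_1^\beta\dots p_{N_\beta}^\beta}=\prod_i J_{p_i^\beta p_{i+1}^\beta}$, so $\partial/\partial J_{np}$ extracts a consecutive pair $(n,p)$ from some boundary word and $\partial/\partial J_{pn}$ a consecutive pair $(p,n)$. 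For the connected term $\sum_n\partial^2_{J_{pn}J_{np}}W$ there are exactly three cases, and after using the cyclic symmetry of $G^{(g)}$ (which cancels the weights $1/N_\beta$) and, where indices are forced to coincide, the degeneracy conventions behind \eqref{eq:DefCorr}, they reproduce the three summands of $W^2_p[J]$: the two pairs may sit adjacently in one boundary word whose deletion leaves $n\,q_1\dots q_{M-3}$ (the $G^{(g)}_{|npnq_1\dots q_{M-3}|\dots}$-term, $n$ still summed); the boundary word may have length $2$ (the $\tfrac1V\sum_n G^{(g)}_{|pn|\dots}$-term, including the $b=0$ "no extra boundary" contribution that produces the lowest terms); or $n=p$ may be forced with the two derivatives acting on two distinct length-$1$ boundaries $\J_p=J_{pp}$ (the $\tfrac1{V^2}G^{(g)}_{|p|p|\dots}$-term). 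The disconnected term $\sum_n(\partial_{J_{pn}}W)(\partial_{J_{np}}W)$ contributes, in the sector where both factors are length-$1$ boundary pieces — hence $n=p$ — exactly the product $\tfrac1V G^{(g)}_{|p|\dots}\cdot\tfrac1V G^{(g')}_{|p|\dots}$ summed as in $W^1_p[J]$. Collecting everything, $\sum_n\partial^2_{J_{pn}J_{np}}\Z=(W^1_p[J]+W^2_p[J])\,\Z+R_p[J]\,\Z$, where $R_p[J]\,\Z$ consists exactly of the terms that do not force any index collision; these are precisely the ones with an analytic continuation to $p\neq q$, and by $O_{pq}\Z=\Z\cdot O_{pq}W$ with $O_{pq}:=\sum_n(J_{pn}\partial_{J_{qn}}-J_{nq}\partial_{J_{np}})$ (valid because $O_{pq}$ is first order) together with \sref{Proposition}{Prop:WardId} read in reverse, one identifies $R_p[J]\,\Z=\lim_{E_q\to E_p}\tfrac{V}{E_p-E_q}\,O_{pq}\Z$. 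This yields the stated identity.

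\textbf{Main obstacle.}
The delicate part is entirely the bookkeeping in the diagonal case: keeping the cyclic weights $1/N_\beta$ and the boundary-permutation weights $1/b!$ of \eqref{eq:DefCorrSour} consistent while summing over all positions at which the two $J$-derivatives can land, and correctly enumerating the degenerate sub-cases (coinciding $p_i^\beta$, and the splitting and merging of boundary components that they cause), including the $b=0$ sector. A secondary but indispensable point is the continuity argument: one needs the rational — in particular continuous — dependence of the correlation functions on the eigenvalues both to divide by $E_p-E_q$ off the diagonal and to make sense of and evaluate the limit $E_q\to E_p$ on it; this is exactly the force of the hypothesis "for injective $n\mapsto E_n$".
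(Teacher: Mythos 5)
Your proposal is correct and follows essentially the same route as the paper: decompose $\sum_n\partial^2_{J_{qn}J_{np}}\Z=\Z\bigl(\partial^2\log\Z+\partial\log\Z\,\partial\log\Z\bigr)$, extract the degenerate ($q=p$) contributions — $n=p$ on two length-one boundaries for $W^1_p$ and the first term of $W^2_p$, the $2$-cycle for the second, longer cycles for the third — and identify the regular remainder with the quotient-difference term of \sref{Proposition}{Prop:WardId}, continued to $q\to p$ via the differentiable/rational dependence of the correlation functions on the eigenvalues. The only cosmetic difference is that you phrase the continuation as continuity of rational functions where the paper invokes L'H\^opital on the extended eigenvalue function; the substance is identical.
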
\noindent
The idea of the proof consists of writing
\begin{align*}
 \frac{\partial^2\Z[J]}{\partial J_{qn}\partial J_{np}}=\Z[J]
 \left(\frac{\partial^2\log \Z[J]}{\partial J_{qn}\partial J_{np}}\right)+\Z[J]
 \left(\frac{\partial\log \Z[J]}{\partial J_{qn}}\right)\left(\frac{\partial\log\Z[J]}{\partial J_{np}}\right).
\end{align*}
Then each part is separated in 
\begin{align*}
 \frac{\partial^2\log \Z[J]}{\partial J_{qn}\partial J_{np}}=&\delta_{pq} W^2_p[J]+W^{2,reg}_{pq}[J],\\
  \left(\frac{\partial\log \Z[J]}{\partial J_{qn}}\right)\left(\frac{\partial\log\Z[J]}{\partial J_{np}}\right)
 =&\delta_{pq} W^1_p[J]+W^{1,reg}_{pq}[J],
\end{align*}
where $\delta_{p,q} W^{i}_p$ contains the degenerate terms coming from $q=p$. The $W^{i,reg}$ is the remaining 
part regular
in the limit $q\to p$. The function $\log\Z[J]$ generates the correlation functions 
through \eqref{eq:Part2} such that the first term in $W^2_p$ is generated by $n=p$. The second
 in $W^2_p$ is the only 2-cycle 
and the third consists
of all possible 
cycles of higher length. The quadratic term $W^1_p$ is generated if $q=n=p$. \\
The regular terms obey by definition for $q\neq p$
\begin{align*}
 \sum_{n=0}^\mN (W^{2,reg}_{pq}[J]+W^{1,reg}_{pq}[J])\Z[J]=
 \frac{V}{E_p-E_q}\sum_{n=0}^\mN
  \left(J_{pn}\frac{\partial}{\partial J_{qn}}-J_{nq}\frac{\partial}{\partial J_{np}}\right)\Z[J]
\end{align*}
with a unique limit $q\to p$. This limit is valid by extending $E_n$ to a differentiable function 
such that all correlation functions are differentiable in its indices and L'H\^{o}pital's rule is applicable.

The SDE \eqref{eq:SDeqGeneral} consists of $k-1$ sums with $k\in\{3,..,d\}$. Applying \sref{Theorem}{Thm:Raimar}
$\lfloor\frac{k-1}{2}\rfloor$ times 
for each term in $\left(\frac{\partial S_{int}}{\partial \Phi_{p_1^1p_2^1}}\right)
        \left[\frac{1}{V}\frac{\partial}{\partial J}\right]$ leads from a 
        $((N_1+k-2)+N_2+..+N_b)$-point function to a $(N_1+N_2+..+N_b)$-point function plus 
        additional terms of different topologies.
        In the limit $\mN,V \to \infty$ and an additional topological expansion $G=\sum_gV^{-2g}G^{(g)}$, all 
        equations decouple in a way that a correlation function of Euler characteristic $\chi$
        obeys a linear equation, where the inhomogeneity
        does only depend on
        correlation functions of Euler characteristic $\chi'> \chi$.

\section{Large $\mN,V$-Limit}\label{Sec:LargeLimit}
As mentioned before, we assume that the sum and the parameter $V$ are of the same order
\begin{align*}
 \sum^{ \mN}\sim V.
\end{align*}
To be precise, let $(e_k)_{k=0}^{\mN'}$, $\mN'\leq\mN$ 
be the ordered distinct eigenvalues
of the projection of $E$. Let $r_k$ be the multiplicity of $e_k$ of $E$. This means 
\begin{align*}
 E_0=E_1=..=E_{r_0-1}=& \,e_0\\
 E_{r_0}=..=E_{r_0+r_1-1}=&\, e_1\\
  \vdots&
\end{align*}
Let $f$ be a function depending on the eigenvalues $E_n$ and not on the multiplicities. We obtain for the sum
\begin{align}\label{eq:SumFinite}
 \frac{1}{V}\sum_{n=0}^\mN f(E_n)=\frac{1}{V}\sum_{k=0}^{\mN'}r_k f(e_k)=&
 \int_0^\infty dt \,\varrho(t)\, f(t)\\
 \text{where}\qquad \varrho(t)=&\frac{1}{V}\sum_{k=0}^{\mN'}r_k\,\delta(t-e_k).
\end{align}
The distinct eigenvalues $e_k$ are extended to a continuous differentiable function, where $e_k$ are discrete
points. It is more convenient to perform the limit of $\mN'$ which indicates the limit of $\mN$.
In the limit $\mN'\to\infty$ we assume the asymptotic behaviour
\begin{align*}
 \lim_{k\to\infty}r_k\sim& k^{\delta-1}\\
 \lim_{k\to\infty}e_k\sim &k.
\end{align*}
The limit $V\to\infty$ is adjusted such that the summation index $k$ converges to a continuous variable $x$
depending on the asymptotic behavior of $r_k$, namely
 \begin{align}\label{eq:Limit}
  \lim_{\mN',V\to\infty}\frac{k}{V^{\frac{1}{\delta}}}\to x. 
 \end{align}
 From physical motivation, we want to 
denote the smallest eigenvalue by $e_0=\frac{\mu^2}{2}$, where $\mu$ is called the \textit{mass}.
Take the monotonic, continuously differentiable functions $e(x)$ with $e(0)=0$ and $r(x)$ from 
the multiplicities by
\begin{align}\label{eq:econt}
 e\left(\frac{k}{V^{\frac{1}{\delta}}}\right):=&e_k-\frac{\mu^2}{2},
 &&\lim_{\mN',V\to\infty}e\left(\frac{k}{V^{\frac{1}{\delta}}}\right)=e(x)\stackrel{x\to\infty}{\sim} x\\\label{eq:rcont}
 r\left(\frac{k}{V^{\frac{1}{\delta}}}\right):=&\frac{r_k}{V^{1-\frac{1}{\delta}}},&& \lim_{\mN',V\to\infty}
 r\left(\frac{k}{V^{\frac{1}{\delta}}}\right)=r(x)\stackrel{x\to\infty}{\sim} x^{\delta-1},
\end{align}
which are unique in the upper limit.
The measure $\varrho(t)$ converges to
\begin{align*}
 \varrho(t)&=\lim_{\mN',V\to\infty}
 \frac{1}{V}\sum_{k=0}^{\mN'}r_k\,\delta(t-e_k)=
 \lim_{\mN',V\to\infty}
 \frac{1}{V^{\frac{1}{\delta}}}\sum_{k=0}^{\mN'}\frac{k^{\delta-1}}{V^{\frac{\delta-1}{\delta }}}
 \,\delta\left(t-\frac{\mu^2}{2}-e\left(\frac{k}{V^{\frac{1}{\delta}}}\right)\right)\\
 &=\int_{-\epsilon}^\infty dx\,r(x)\delta\left(t-\frac{\mu^2}{2}-e\left(x\right)\right)=\frac{r(e^{-1}(t-\frac{\mu^2}{2}))}
 {e'(e^{-1}(t-\frac{\mu^2}{2}))},
\end{align*}
which is equivalently to 
\begin{align*}
 \varrho\left(\frac{\mu^2}{2}+e(x)\right)dx=r(x) dx.
\end{align*}
Finally, we find that the sum \eqref{eq:SumFinite} converges in the limit \eqref{eq:Limit} to
\begin{align*}
 \lim_{\mN,V}\frac{1}{V}\sum_{n=0}^\mN f(E_n)=\int_0^\infty dx \,r(x)\, f\left(\frac{\mu^2}{2}+e(x)\right),
\end{align*}
where the upper limit has its origin in the unbounded property of $E$. The last step is now to determine the 
spectral dimension $\D$ of \sref{Definition}{Def:Spec}. Inserting the spectral measure $\varrho(t)$ into \eqref{eq:SpectralDim} gives
\begin{align*}
 \int_0^\infty dt \,\frac{\varrho(t)}{(1+t)^{p/2}}=
 \int_0^\infty dx\, \frac{r(x)}{(1+\frac{\mu^2}{2}+e(x))^{p/2}}.
\end{align*}
The asymptotic $r(x)\sim x^{\delta-1}$ and $e(x)\sim x$ provides the spectral dimension 
(\sref{Definition}{Def:Spec}) if the 
integrand behaves asymptotically with
$\frac{1}{x}$ such that 
\begin{align*}
 -1=\delta-1-\frac{\D}{2}\quad \Rightarrow \quad \D=2\delta.
\end{align*}
\begin{rmk}\label{Rmk:LargeN}
 The assumption that $e(x)$ behaves asymptotically linear
 is easily substantiated. 
 Assume first the asymptotics $e(x)\sim x^k$ and 
 $r(x)\sim x^{\delta-1}$ for large $x$ and transform the variable $x$ for the integral by $y=x^k$ and  
 $k y^{\frac{k-1}{k}}\,dx =dy$. The asymptotic in $y$ is $e(x(y))\sim y$ and $r(x(y))\sim y^{\frac{\delta-1}{k}}$. Defining then
  $\tilde{ e}(y):=e(x(y))$ and $ \tilde{r}(y):=\frac{ r(x(y))}{k y^{\frac{k-1}{k}}}$ with $\tilde{r}(y)\sim 
 y^{\frac{\delta}{k}-1}$ satisfies the initial assumption by an adjusted measure $\tilde{r}$ with
 $\tilde{\delta}=\frac{\delta}{k}$. Note the modification of the spectral dimension to
 $\D=2\frac{\delta}{k}$.
\end{rmk}
\noindent
The next step is to understand the correlation function of genus $g$ \eqref{eq:DefCorrg} as a 
discretisation of a continuously differentiable function by 
\begin{align}\label{eq:Gcont}
 G^g(x_1^1,x_2^1,..,x_{N_1}^1|x_1^2,..,x_{N_2}^2|..|x_1^b,..,x_{N_b}^b)=
 G^{(g)}_{|p_1^1..p_{N_1}^1|..|p^b_{1}..p^b_{N_b}|}\vert_{p_i^j=x_i^jV^{2/\D}}.
\end{align}
We construct, in the same manner as the functions $e(x)$ and $r(x)$, the continuous function $G^g$ which
is unique in the upper limit. 

For later purpose, we stronger assume that $G^g$ is H\"older-continuous, i.e. $\exists C>0$ such that 
\begin{align*}
 |G^g(..,x,..)-G^g(..,y,..)|<C |x-y|^\alpha
\end{align*}
for all $ x,y\in U\subset\R_+$ in every component of $G^g$ with $0<\alpha\leq 1$.

\section{Renormalisation}\label{Sec:Renorm}
A sum converges in perturbative expansion
in the large $\mN,V$-limit to an integral over all closed face variables.
The measure $\varrho(t)$ depends asymptotically on the spectral dimension
which naturally leads in dimensions higher than zero to infinities related to the 
upper bound of the integral. For this purpose, let us modify the large $\mN,V$-limit with an additional 
definition of the \textit{cut-off} $\Lambda^2$ as the ratio in the limit
\begin{align}\label{eq:limitLambda}
 \lim_{\mN',V\to\infty}\frac{\mN'}{V^{2/\D}}=\Lambda^2.
\end{align}
Notice that $\mN'+1$ was the number of distinct eigenvalues $e_k$.
Sending $\Lambda^2$ to infinity (also called UV-limit) produces divergences where the degree of the 
divergence depends on the spectral dimension.

\begin{exm}
 Take the third contraction of \sref{Example}{Ex1} appearing in $G^{(0)}_{|pq|}=V\langle \Phi_{pq}\Phi_{qp}\rangle$
 at order $\lambda^1$. 
 Let $e_k$ be the distinct eigenvalues of multiplicity $r_k=k$ which indicates $\D=D=4$. 
 Then, the sum of the example converges in the limit discussed 
 in \sref{Sec.}{Sec:LargeLimit} with the ratio 
 \eqref{eq:limitLambda} to
 \begin{align*}
  \lim_{\mN',V\to\infty} \frac{1}{(e_q+e_p)^2}\frac{1}{V}\sum_{k=0}^{\mN'}\frac{k}{e_k+e_q}=\frac{1}{(\mu^2+e(x)+e(y))^2}
  \int_0^{\Lambda^2}dt\frac{t}{\mu^2+e(t)+e(x)},
 \end{align*}
where $\frac{p}{\sqrt{V}}=x$, $\frac{q}{\sqrt{V}}=y$,
$e(\frac{k}{\sqrt{V}})=e_k-\frac{\mu^2}{2}$, $e_0=\frac{\mu^2}{2}$ and $e(t)\stackrel{t\to\infty}{\sim} t$.
The integral has therefore a linear divergence in $\Lambda^2$ for $\Lambda^2\to \infty$.
\end{exm}\noindent

A systematic procedure, how the UV divergences can be compensated for QFTs, is called \textit{renormalisation} and is
described in its full generality by the BPHZ theorem \cite{bogoliubow1957,
Hepp:1966eg,Zimmermann:1969jj}. Counterterms 
cancel all divergences and subdivergences by the  Bogoliubov-Parasiuk R-operation 
such that the results are
uniquely determined at any order through additional boundary conditions. 
A compact description of this procedure was given by Zimmermann and is 
also known as \textit{Zimmermann's forest formula}. 

Adjusting the action 
with additional $\Lambda^2$-depending constants, called  \textit{renormalisation constants},
produces exactly these counterterms coming from the R-operation by choosing the same boundary conditions.
The renormalisation constants are usually divergent in the $\Lambda^2\to \infty$ limit, typical examples are
\begin{align*}
 &\text{mass renormalisation:}\quad &&\mu^2\to\mu^2_{bare}(\Lambda^2)\\
 &\text{field renormalisation:}\quad &&\Phi \to Z^{1/2}(\Lambda^2)\Phi\\
 &\text{coupling constant renormalisation:}\quad &&\lambda_i\to\lambda_{i,bare}(\Lambda^2).
\end{align*}
However, any finite number of further renormalisation constants is permitted. If a model is UV-finite (renormalised) after
a finite number of renormalisation constants, it is called \textit{renormalisable}, otherwise 
\textit{nonrenormalisable}. How many renormalisation constants are necessary depends on the type of the model, the 
interactions and on the order 
of the divergences related to the dimension. 

Let us recall Zimmermann's forest formula for ribbon graphs. We admit that the complexity of the forest 
formula will get much easier for ribbon
graphs and additionally on a Euclidean space. Faces of ribbon graphs are globally labelled 
whereas loops in ordinary QFT need locally assigned momenta. 
We follow the description of the forest 
formula for ribbon graphs given in \cite{Grosse:2016qmk} with a slight generalisation 
for the dimension and interaction.

\subsection{Zimmermann's Forest Formula}\label{Sec.Zimmer}
Let $\Gamma$ be a connected ribbon graph. Let $\B_\Gamma$ be the set of open faces 
and $\F_\Gamma$ the set of closed faces of $\Gamma$. A \textit{ribbon subgraph} $\gamma\subset\Gamma$ 
consists of a subset of closed faces $\F_\gamma\subset \F_\Gamma$ together with all 
bordering edges of $\F_\gamma$ and all vertices at the end of the edges such that $\gamma$ is connected and after removing 
any vertex the subgraph is still connected.
Each ribbon subgraph uniquely defines the set of adjacent faces of $\gamma$ by $\E_\gamma\subset\B_\Gamma\cup
\F_\Gamma\setminus\F_\gamma$, i.e. any element of $\E_\gamma$ is bordering an edge of $\gamma$.
Let the \textit{extended subgraph} $\bar{\gamma}$ of $\gamma$ be $\gamma$ together with all half-edges
within $\Gamma$ which are connected to the vertices of $\gamma$, and its attached faces (which are now understood as 
open faces) of 
the half-edges which can possibly be outside $\E_\gamma$.

Let $x_i$ with $i\in\{1,..,k\}$ be the labellings of the faces in $\E_\gamma$ and 
$f(\gamma)=\{x_1,..,x_{k}\}$ the set of the face variables $x_i$. Further, let $r_\gamma(f(y),y_1,..,y_m)$ be 
a rational function, where $y_i$ with $i\in \{1,..,m\}$ are the labellings of the faces of ${\gamma}$. 
Let $(T^\omega_{f(\gamma)}r_\gamma)(f(\gamma),y_1,..,y_m)$ be the 
$\omega^{\text{th}}$ order multivariate Taylor polynomial of $r_{\gamma}$ with respect 
to the variables $f(\gamma)$. The Taylor polynomial
is $T^\omega_{f(\gamma)}r_\gamma\equiv0$ for $\omega<0$.

A forest $\U_\Gamma$ in $\Gamma$ is a set of ribbon subgraphs $\{\gamma_1,..,\gamma_l\}$ such that any pair
of subgraphs $\gamma_i,\gamma_j$ obeys one of the three conditions
\begin{align*}
 \gamma_i\subset\gamma_j,\qquad 
 \gamma_j\subset\gamma_i,\qquad
 \gamma_i\cap\gamma_j=\emptyset,
\end{align*}
where the empty set means that $\gamma_i,\gamma_j$ are disjoint also for edges, but \textit{not} necessarily
for vertices. The empty 
forest $\U_\Gamma=\emptyset$ is included in the definition, whereas $\Gamma$ is not possible as forest since $\Gamma$
has open faces and any ribbon subgraph does not.

The three conditions for the elements of the forest equip the forest with a partial ordering 
by the following construction. 
Let a descendant $ \gamma_{i_j}\in U_{\Gamma}$ be the subset of $\gamma_i$, more precisely $\gamma_{i_j}\subset\gamma_i
\in U_{\Gamma}$, together with the 
conditions that any two different descendants
are disjoint $\gamma_{i_j}\cap \gamma_{i_l}=\emptyset$ and 
there exists no $\gamma'\in\U_\Gamma$ such that 
$\gamma_{i_j}\subset\gamma'\subset\gamma_i$.
For any $\gamma_i\in \U_\Gamma$ there exists a unique set $o(\gamma_i)=\{\gamma_{i_1},..,\gamma_{i_k}\}$
, $\gamma_{i_j}\in \U_\Gamma$ of descendants
$\gamma_{i_j}\subset\gamma_i$ such that for any $\gamma_j\in\U_\Gamma\setminus \{\gamma_i,o(\gamma_i)\}$ one of the three 
conditions hold
\begin{align*}
 \gamma_j\subset o(\gamma_i),\qquad \gamma_i\subset\gamma_j,\qquad \gamma_i\cap\gamma_j=\emptyset.
\end{align*}
The first condition means that $\gamma_j$ is a subset of one descendant of $\gamma_i$.

Now let $I_\Gamma$ be the integrand of $\mathfrak{h}(\Gamma) V^{2-2g-b-N}$ (see \sref{Sec.}{Sec.Feynman}) of the ribbon graph $\Gamma$ with genus 
$g$, $b$ boundary components and $N$ open faces in the large $\mN,V$-limit, discussed in \sref{Sec.}{Sec:LargeLimit}. 
Any forest $\U_\Gamma$ defines a unique partition due to the partial ordering of an integrand by
\begin{align*}
 I_\Gamma=I_{\Gamma\setminus\U_\Gamma}\prod_{\gamma\in\U_\Gamma} I_{\gamma\setminus o(\gamma)},\qquad 
 \prod_{\gamma\in \emptyset}I_\emptyset=1.
\end{align*}

We call $\omega(\gamma)$ the superficial degree of divergence of $\gamma$
which is defined
as the degree of the numerator subtracted by the degree of the denominator of $I_{{\gamma}}$. 
Let $g_\gamma$ be the genus, $b_\gamma$
the number of boundary components, $k_i$ the number
of $i$-valent vertices, $v_\gamma=k_3+..+k_d$ the number of all vertices,
$N_{\gamma}$ the number of 
open faces and $e_{\gamma}$ the number 
of edges of the extended subgraph $\bar{\gamma}$ of the subgraph $\gamma$. 
Let $\Sigma_\gamma$ be the number of (closed) faces and 
$e_c$ the number of edges of $\gamma$ ($e_c$ can also be understood as
number of edges attached to a closed face of $\bar{\gamma}$). 
The superficial degree of divergence is then
\begin{align*}
 \omega(\gamma)=\frac{\D}{2}\Sigma_\gamma-e_c.
\end{align*}
Euler's formula gives the number of closed faces $\Sigma_\gamma=2-2g_\gamma-b_\gamma+e_\gamma-N_{\gamma}-v_\gamma$. 
The 1PI construction of $\gamma$ as a subgraph leads to two further relations
\begin{align*}
 e_{\gamma}=&\frac{N_{\gamma}+3k_3+4k_4+..+dk_d}{2}\\
 e_{\gamma}=&3k_3+4k_4+..+dk_d-e_c,
\end{align*}
so we can conclude:
\begin{align}\label{eq:superficial}
 \omega(\gamma)=\frac{\D}{2}\left(2-2g_\gamma-b_\gamma-v_\gamma\right)+
 \left(\frac{\D}{2}-1\right)\frac{3k_3+4k_4+..+dk_d-N_{\gamma}}{2}.
\end{align}
Now we are ready to formulate the adapted BPHZ theorem for matrix field theory models in the large $\mN,V$-limit 
\begin{thrm}(\cite{Zimmermann:1969jj})\label{Thm:Zimmermann}
 A formal power series in the coupling constants $\lambda_i$ of appropriated chosen renormalisation 
 constants $(\mu^2_{bare},Z,..)$ with the same number of normalisation conditions 
  (with the reference point at zero momentum)
 results in the replacement of the integrand $I_\Gamma$ of any ribbon graph in the perturbation theory by 
  \begin{align*}
   I_\Gamma\mapsto \mathcal{R}(I_\Gamma):=\sum_{\U_\Gamma} I_{\Gamma\setminus \U_\Gamma}\prod_{\gamma\in\U_\Gamma}
   (-T^{\lfloor \omega(\gamma)\rfloor}_{f(\gamma)}I_{\gamma\setminus o(\gamma)}),
  \end{align*}
where the sum over all forests includes the empty forest $\emptyset$. The product $\prod_{\gamma\in\U_\Gamma}$ 
over all elements
of a forest takes the partial ordering into consideration 
\begin{align*}
 \prod_{\gamma\in\U_\Gamma}
   (-T^{\lfloor \omega(\gamma)\rfloor}_{f(\gamma)}I_{\gamma\setminus o(\gamma)})=...(-T^{\lfloor \omega(\gamma)\rfloor}_{f(\gamma)}I_{\gamma\setminus o(\gamma)})\prod_{\gamma'\in o(\gamma)}
(-T^{\lfloor \omega(\gamma')\rfloor}_{f(\gamma')}I_{\gamma'\setminus o(\gamma')})...
\end{align*}
\end{thrm}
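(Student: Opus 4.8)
The plan is to prove the statement in two movements: an algebraic step, identifying the prescribed substitution $I_\Gamma \mapsto \mathcal{R}(I_\Gamma)$ with what is produced, order by order in the couplings $\lambda_i$, by adding to the action \eqref{eq:Action} the counterterms encoded in the renormalisation constants $(\mu^2_{bare},Z,\dots)$ subject to the stated normalisation conditions; and an analytic step, showing that after this substitution each integral of $\mathcal{R}(I_\Gamma)$ over the closed-face variables converges absolutely in the large $\mN,V$-limit as the cut-off $\Lambda^2\to\infty$.

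For the algebraic step I would induct on the number of vertices $v_\Gamma=k_3+\dots+k_d$ of $\Gamma$. The Bogoliubov--Parasiuk recursion assigns to $\Gamma$ a subtracted integrand $\bar{\mathcal{R}}(I_\Gamma)$ by summing over all families of mutually disjoint divergent ribbon subgraphs $\gamma_1,\dots,\gamma_m$ — divergent meaning $\omega(\gamma_i)\ge 0$ with $\omega$ given by \eqref{eq:superficial} — replacing each $I_{\gamma_i}$ by $-T^{\lfloor\omega(\gamma_i)\rfloor}_{f(\gamma_i)}\bar{\mathcal{R}}(I_{\gamma_i})$ while leaving the reduced graph $I_{\Gamma\setminus\{\gamma_i\}}$ untouched; the overall counterterm of $\Gamma$ is then $-T^{\lfloor\omega(\Gamma)\rfloor}\bar{\mathcal{R}}(I_\Gamma)$. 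Zimmermann's combinatorial identity is that unfolding this recursion and collecting terms yields precisely the sum over forests $\U_\Gamma$ with the nested Taylor operators ordered as in the statement: the partial ordering of $\U_\Gamma$ by descendants $o(\gamma)$ introduced just before the theorem is exactly the bookkeeping that makes the telescoping work. I would prove it by the usual argument — each term arising in the recursion records the subgraph extracted at each stage, and the disjoint/nested trichotomy imposed on forests is exactly the consistency condition on such records, while conversely every consistent record is a forest. That the action counterterms reproduce the overall subtractions $-T^{\lfloor\omega(\cdot)\rfloor}$ follows because a counterterm sitting at a vertex or on a propagator contributes to any graph containing it a reduced graph times a polynomial in the face variables crossing the insertion — which is the image of a Taylor operator $T^\omega_{f(\gamma)}$ about zero external face variables (the reference point at zero momentum) — and the normalisation conditions pin down these polynomials uniquely, by induction on the order.

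For the analytic step I would invoke a ribbon-graph analogue of Weinberg's power-counting theorem. Since every closed face carries a global variable ranging over $[0,\Lambda^2]$ with measure asymptotically $\varrho(t)\,dt$, and the underlying space is Euclidean so that no $i\varepsilon$-prescription is needed, the statement to prove is: once the Taylor subtraction of order $\lfloor\omega(\gamma)\rfloor$ has been carried out for every divergent ribbon subgraph $\gamma\subset\Gamma$, the integrand is bounded in each sector — a subset of face variables sent jointly to infinity — by a power strictly less than $-1$ per integrated variable, whence convergence as $\Lambda^2\to\infty$. The subtraction improves the degree along $f(\gamma)$ by $\lfloor\omega(\gamma)\rfloor+1>\omega(\gamma)$, which is exactly the margin needed; the essential combinatorial input is that the divergent sectors of $\Gamma$ are indexed precisely by its divergent ribbon subgraphs, which is why the ``still connected after removing a vertex'' condition in the definition of ribbon subgraph in \sref{Sec.}{Sec.Zimmer} must be imposed (it removes spurious tadpole sectors), so that the forests supply a complete set of subtractions. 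The asymptotics $r(x)\sim x^{\delta-1}$, $e(x)\sim x$ and $\D=2\delta$ from \sref{Sec.}{Sec:LargeLimit} give that a closed face contributes $+\D/2$ and an edge attached to it $-1$ to the degree, reproducing $\omega(\gamma)=\frac{\D}{2}\Sigma_\gamma-e_c$.

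The step I expect to be the main obstacle is the treatment of overlapping divergences in the ribbon setting — pairs of divergent subgraphs that are neither nested nor disjoint — which never occur together within a single forest but must be shown to be correctly resummed across different forests; this is exactly where the forest formula is indispensable and where one must check that the matrix-model notion of subgraph (a set of closed faces together with bordering edges and vertices) interacts cleanly with the Taylor operations, in particular that the nested product $\prod_{\gamma\in\U_\Gamma}(-T^{\lfloor\omega(\gamma)\rfloor}_{f(\gamma)}I_{\gamma\setminus o(\gamma)})$ is well-defined regardless of the order in which one reads compatible elements of the partial order. A secondary technical point is making the power-counting bound uniform over all sectors simultaneously — Weinberg's lemma on the asymptotic degree of a finite sum of monomials — which is lighter here than in momentum-space QFT but still requires tracking the $\lfloor\omega\rfloor$-versus-$\omega$ comparison through the nesting.
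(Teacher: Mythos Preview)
The paper does not prove this theorem: it is stated as the classical BPHZ result, cited from Zimmermann's original paper \cite{Zimmermann:1969jj}, and merely adapted in notation to the ribbon-graph setting. The only content the paper supplies around the statement is the translation of the usual notions (forest, subgraph, superficial degree) into the matrix-field-theory language of faces and edges, together with the degree-counting identity \eqref{eq:superficial}; no argument for either the combinatorial identity or the convergence is given.

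Your proposal is therefore not competing with any proof in the paper. As a sketch of the classical argument it is reasonable in outline --- the Bogoliubov recursion unfolds into the forest sum, and Weinberg-type power counting controls the subtracted integrand --- and you correctly identify overlapping divergences as the place where the forest formula earns its keep. One caution: in the ribbon setting of this paper, the analytic step is genuinely simpler than in momentum-space QFT because faces are labelled globally rather than by local loop momenta (as the paper remarks just before the theorem), so the sector analysis reduces to subsets of closed-face variables going to infinity; your proposal acknowledges this but perhaps understates how much it streamlines Weinberg's argument. If you intend to actually carry this out, the main thing to verify carefully is that the paper's specific definition of ribbon subgraph (1PI after removing any vertex) matches what the recursion needs --- you note this, and it is indeed the point where the adaptation to ribbon graphs has content.
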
\noindent
The forest formula subtracts divergences in a minimal way with the reference point at zero momentum.
This means that only divergences and subdivergences are removed.
Finite graphs receive no subtraction. It is not clear from the 
beginning how the boundary conditions are fixed by the forest
formula, e.g. $G(0,0)=1$ or $G(0,0)=\frac{1}{\mu^2}$.

In the forthcoming work, we will discuss for cubic interaction the dimensions $D=2,4$ which are \textit{super-renormalisable}.
Because of this and the later chosen normalisation conditions by $G^{(0)}(0)=0$ in $D=2$ and
additionally $\partial_x G^{(0)}(x)\vert_{x=0}=
G^{(0)}(0,0)-1=0$ in $D=4$, the forest formula has to be adapted 
since these conditions additionally renormalise finite graphs. 
For $D=6$ is the model 
\textit{just-renormalisable} and we will choose the conditions in the same way as
the forest formula does.

Looking at the quartic interaction, the $D=2$ case is 
super-renormalisable and the $D=4$ case just-renormalisable.
We are in particular interested in
$D=4$ and will later chose the renormalisation
differently since the exact solution provides another natural choice for the boundary conditions.

All perturbative computations described by Zimmermann's forest formula are collected in \sref{App.}{App:Pert}.
 \begin{rmk}
  To avoid redundant factors of the renormalised mass $\mu$ (after renormalisation),
 we passed in the cubic model due to 
 $G^{(0)}(0,0)= \frac{1}{\mu^2}=1$ to mass-dimensionless quantities without mentioning.
 More concretely, this can either be achieved by multiplying each quantity $Q_i$ of mass-dimension $k_i$
 with $\mu^{-k_i}$ or, equivalently, to
 choose $\mu^2=1$ from the beginning. A detailed treatment of appropriate factors in 
 $\mu$ can be found in \cite{Grosse:2016qmk} for the cubic model.
 \end{rmk}

\section{QFT on the Moyal Space and Matrix Field Theory}\label{Sec.Moyal}
The underlying $D$-dimensional space of a QFT is the $\R^D$, we will 
restrict in this subsection to even $D\in\{2,4,6,..\}$ dimensions. 
In a rigorous formulation, a field operator of a QFT has to be smeared out by a test function $g\in\mathcal{S}(\R^D)$.
The space of complex-valued Schwartz functions consists of rapidly decaying functions
\begin{align*}
 \mathcal{S}(\R^D)=\{g\in C^\infty(\R^D)\,:\, \sup_{x\in \R^D}|x^\alpha D^\beta g(x)|<\infty 
 \quad \forall \alpha,\beta\in \N^D\},
\end{align*}
where $\alpha,\beta$ are multi-indices, i.e. $\alpha=(\alpha_1,..,\alpha_D)$ and $\beta=(\beta_1,..,\beta_D)$. 
The expectation value of these operator-valued distributions defines the correlation functions of a QFT model.
The underlying space $\R^D$ of an ordinary QFT is equipped with 
local multiplications which are commutative $f(x)g(x)=g(x)f(x)$ for
 $f,g\in \mathcal{S}(\R^D)$.

As an alternative, we deform the space \cite{rieffel1989}
by the algebra $\mathcal{A}_\star=(\mathcal{S}(\R^D),\star)$ to achieve the \textit{Moyal space}, 
where
the noncommutative Moyal $\star$-product is defined by \cite{GraciaBondia:1987kw}
\begin{align}\label{eq:StarProd}
 &(g\star h)(x)=\int \frac{d^D k}{(2\pi)^D}\int d^Dy\, g(x+\tfrac{1}{2}\Theta k)\,h(x+y)e^{\mathrm{i} k\cdot y},\\
 &\Theta=\mathds{1}_{D/2}\otimes \left( \begin{array}{rr}
0 & \theta   \\
 -\theta & 0 
\end{array}\right),\quad  \theta\in \R,\quad x\in\R^D ,\quad g,h\in\mathcal{S}(\R^D).\nonumber
\end{align}
The originally investigated Moyal space is 2-dimensional, whereas we take $\frac{D}{2}$ copies of the 2-dimensional 
version multiplied by a Cartesian product. The structure of the $D$-dimensional Moyal space is covered by
tensorial structure of $\Theta$.

The $\star$-product generalises the ordinary Euclidean QFT construction since for $\theta=0$ the $k$-integral
generates the delta distribution $\delta(y)$ which gives after $y$-integration a pointwise multiplication
$(g\star h)(x)\vert_{\theta=0}=g(x)h(x)$.

Due to the noncommutative nature and the $\frac{D}{2}$ copies, the Moyal product possesses a matrix base
\begin{align}\label{eq:MatrixBase}
 &b_{\narrowunderline{n}\narrowunderline{m}}(x)=f_{n_1m_2}
 (x_1,x_2)f_{n_2m_2}(x_3,x_4)..f_{n_{D/2}m_{D/2}}(x_{D-1}x_D),\\
 &\narrowunderline{n}=(n_1,..,n_{D/2})\in\N^{D/2},\quad 
 \narrowunderline{m}=(m_1,..,m_{D/2})\in\N^{D/2},\quad x=(x_1,..,x_D)\in\R^D\nonumber
\end{align}
which is separated into $\frac{D}{2}$ bases $f_{ij}(x,y)$ of the 2-dimensional Moyal spaces 
through the tensorial structure of $\Theta$.
We refer to \sref{App.}{App:Moyal} for more details. The base has the matrix multiplication 
property and a trace by
\begin{align}\label{eq:Trace}
 \int d^Dx \,b_{\narrowunderline{n}\narrowunderline{m}}(x)=&(2\pi\theta)^{D/2}
 \delta_{\narrowunderline{n},\narrowunderline{m}}\\\label{eq:MatrixProd}
 (b_{\narrowunderline{n}\narrowunderline{m}}\star 
 b_{\narrowunderline{k}\narrowunderline{l}})(x)=&\delta_{\narrowunderline{m},\narrowunderline{k}}
 b_{\narrowunderline{n}\narrowunderline{l}}(x),
\end{align}
where $\delta_{\narrowunderline{n},\narrowunderline{m}}=\delta_{n_1,m_1}..\delta_{n_{D/2},m_{D/2}}$.

The algebra $\mathcal{A}_\star$ is a pre-$C^\star$-algebra from which a Hilbert space $H=H_1 \otimes H_2$ 
can be constructed via GNS-construction. Restricting to $H_1$, any element of $\mathcal{A}_\star$ 
is compact on $H_1$ and even further a trace-class operator.

Under these considerations, an action of the form
\begin{align}\label{eq:BadAction}
 \int \frac{d^Dx}{(8\pi)^{D/2}} \left(\frac{1}{2}\Phi\star 
 (-\Delta +\mu^2)\Phi+\mathcal{V}^\star (\Phi)\right)(x)
\end{align}
has got the right properties of a field theory, where $\Phi\in \mathcal{S}(\R^D)$, 
$\Delta=\sum_{i=1}^D\frac{\partial^2}{\partial x_i^2}$ denotes the 
Laplacian and $\mathcal{V}^\star (\Phi)=\sum_{k=3}^d\frac{\lambda_k}{k}\Phi^{\star k}$ as a potential 
with $\Phi^{\star k}=
\Phi\star..\star\Phi$ exactly $k$ times.
Using \eqref{eq:BadAction} as the action and writing out the first orders of the perturbative 
expansion was shown to have mixing of UV and IR divergence \cite{Minwalla:1999px}. 
The origin of the mixing is the non-local definition of the product. 

The works \cite{Grosse:2003nw,Grosse:2004yu} give a way to handle the UV/IR-mixing problem 
by adding a harmonic oscillator
term to the action. The mixing problem is therefore a property of the Laplacian and not of chosen interaction. The idea was 
to add a term to the action such that Laplacian breaks in the matrix base \eqref{eq:MatrixBase} down to a 
matrix (only for $\Omega=1$) instead of a tensor of rank 4. Therefore, let the action be 
\begin{align}\label{eq:NCAction}
 S[\Phi]=\int \frac{d^Dx}{(8\pi)^{D/2}} \left(\frac{1}{2}\Phi\star 
 (-\Delta +\Omega^2\|2\Theta^{-1}\cdot x\|^2+\mu_0^2)\Phi+ \mathcal{V}^\star (\Phi)\right)(x),
\end{align}
where $\Omega\in\R$ regulates the harmonic oscillator independently of $\theta$.
The harmonic oscillator term breaks translational invariance, however 
invariance can be recovered in the $\theta\to\infty$ limit or $\Omega=0$.

The expansion of $\Phi(x)=\sum_{\narrowunderline{n},\narrowunderline{m}\in\N^{D/2}}
\Phi_{\narrowunderline{n}\narrowunderline{m}}
b_{\narrowunderline{n}\narrowunderline{m}}(x)$ and further properties of the Moyal $\star$-product listed in 
\sref{App.}{App:Moyal} give the action \eqref{eq:NCAction} after integration
\begin{align}
 S[\Phi]=\sum_{\narrowunderline{n},\narrowunderline{m},\narrowunderline{k},\narrowunderline{l}\in\N^{D/2}}
 \Phi_{\narrowunderline{n}\narrowunderline{m}}
 G_{\narrowunderline{n}\narrowunderline{m}
  ;\narrowunderline{k}\narrowunderline{l}}\Phi_{\narrowunderline{k}\narrowunderline{l}}+\sum_{i=3}^d\frac{\lambda_i}{i}
  \sum_{\narrowunderline{n_1},..,\narrowunderline{n_i}\in\N^{D/2}}
  \Phi_{\narrowunderline{n_1}\narrowunderline{n_2}}
  \Phi_{\narrowunderline{n_2}\narrowunderline{n_3}}..
  \Phi_{\narrowunderline{n_i}\narrowunderline{n_1}}
\end{align}
with
\begin{align}\label{eq:NCProp}
 G_{\narrowunderline{n}\narrowunderline{m}
  ;\narrowunderline{k}\narrowunderline{l}}=&\left(\frac{\theta}{4}\right)^{D/2}\bigg(\frac{\mu_0^2}{2}+
  \frac{1+\Omega^2}{\theta}\bigg(\frac{D}{2}+|\narrowunderline{n}|+|\narrowunderline{m}|\bigg)\bigg)
 \delta_{\narrowunderline{m},\narrowunderline{k}}\delta_{\narrowunderline{n},\narrowunderline{l}}\\\nonumber
 &- \left(\frac{\theta}{4}\right)^{D/2}\frac{1-\Omega^2}{\theta}\sum_{i=1}^{D/2}(\sqrt{n_im_i} \delta_{n_i-1,l_i}
 \delta_{m_i-1,k_i}+\sqrt{k_il_i} \delta_{n_i+1,l_i}
 \delta_{m_i+1,k_i})\check{\delta}^i_{\narrowunderline{m},\narrowunderline{k}}\check{\delta}^i_{\narrowunderline{n},\narrowunderline{l}},
\end{align}
where $|\narrowunderline{n}|=n_1+n_2+..+n_{D/2}$ and 
$\check{\delta}^i_{\narrowunderline{n},\narrowunderline{m}}$ is $\delta_{\narrowunderline{n},\narrowunderline{m}}$
with omitted $\delta_{n_i,m_i}$.

The Moyal $\star$-product is known to have a duality between 
position space and momentum space $x\leftrightarrow p$ 
\cite{Langmann:2002cc}. The action transforms under this duality by
\begin{align}\label{eq:SelfDual}
 S[\Phi;\mu_0,\lambda_i,\Omega]\mapsto \Omega^2 \,S\left[\Phi;\frac{\mu_0}{\Omega},\frac{\lambda_i}{\Omega^2},\frac{1}{\Omega}\right].
\end{align}
The special case $\Omega=1$ is called \textit{self-dual} since it 
leaves the action invariant under the position-momentum duality and breaks 
$G_{\narrowunderline{n}\narrowunderline{m}
  ;\narrowunderline{k}\narrowunderline{l}}$ down to a matrix because all next-to diagonal terms  are 
  cancelled (second line of \eqref{eq:NCProp}). 

We consider from now on the self-dual noncommutative QFT model $(\Omega=1)$. 
The noncommutative QFT action \eqref{eq:NCAction} is then
exactly a matrix field action \eqref{eq:Action} by truncating of the sum to 
$\narrowunderline{n}\in \N^{D/2}_{\mN'}$ with 
$\N^{D/2}_{\mN'}:=\{ \narrowunderline{n}\in \N^{D/2}\,:\,|\narrowunderline{n}|\leq \mN'\}$ and identifying
\begin{align*}
V=&\left(\frac{\theta}{4}\right)^{D/2}\\
 \mu^2=&\mu^2_0+\frac{D}{4 V^{2/D}}\\
 e_{|\narrowunderline{n}|}=&\frac{\mu^2}{2}+\frac{|\narrowunderline{n}|}{V^{2/D}}\\
 H_{|\narrowunderline{n}|,|\narrowunderline{m}|}=&2 G_{\narrowunderline{n}\narrowunderline{m}
 ;\narrowunderline{k}\narrowunderline{l}}\,
 \delta_{\narrowunderline{m},\narrowunderline{k}}\delta_{\narrowunderline{n},\narrowunderline{l}}
  \\
 r_{|\narrowunderline{n}|+1}=&\binom{|\narrowunderline{n}|+\frac{D}{2}-1}{\frac{D}{2}-1}.
\end{align*}
The model depends only on the norm $|\narrowunderline{n}|$, i.e.
the explicit dependence on $\narrowunderline{n}$ drops out.  
The multiplicity of $e_{|\narrowunderline{n}|}$ is
$r_{|\narrowunderline{n}|}$, which is the number of all $\narrowunderline{n}\in \N^{D/2}$ with norm 
$|\narrowunderline{n}|$ .
The correlation functions are naturally labelled by the tuples $\narrowunderline{p}_i^j\in\N^{D/2}_{\mN'}$
\begin{align*}
 G_{|\narrowunderline{p}_1^1
 ..\narrowunderline{p}_{N_1}^1|
 ..|\narrowunderline{p}_1^b
 ..\narrowunderline{p}_{N_b}^b|}, 
\end{align*}
where two correlation functions are the same if the norms of their indices are the same, i.e.
\begin{align*}
 G_{|\narrowunderline{p}_1^1
 ..\narrowunderline{p}_{N_1}^1|
 ..|\narrowunderline{p}_1^b
 ..\narrowunderline{p}_{N_b}^b|}=G_{|\narrowunderline{q}_1^1
 ..\narrowunderline{q}_{N_1}^1|
 ..|\narrowunderline{q}_1^b
 ..\narrowunderline{q}_{N_b}^b|},\quad \text{if} \quad |\narrowunderline{p}_i^j|=
 |\narrowunderline{q}_i^j|\quad \forall i,j.
\end{align*}

The deformation parameter $\theta$ of the Moyal space is directly related to the parameter $V$. 
The discussed limit $\mN,V\to \infty$ (with $\mN=\sum_{i=1}^{\mN'}
r_{i}$)
guarantees translational invariance of the action \eqref{eq:NCAction} and recovers 
the infinitely large base of the 
Moyal space. The monotonic, continuously differentiable
functions $e(x)$ (see \eqref{eq:econt}) and $r(x)$ (see \ref{eq:rcont}) converge to
\begin{align*}
 e(x)=&\,x\\
 r(x)=&\,\frac{x^{D/2-1}}{(\frac{D}{2}-1)!}.
\end{align*}
The model can now be treated as matrix field theory model. All correlation functions defined in \sref{Sec.}{Sec:SDE} 
have to be determined in the limit of \sref{Sec.}{Sec:LargeLimit} with the renormalisation from
\sref{Sec.}{Sec:Renorm}. 

However, we have to determine the correlation functions for the matrix field theory and 
insert them into the expansion in the base
$b_{\narrowunderline{n}\narrowunderline{m}}(x)$ to give statements for the expectation values 
on the Moyal space. 
Hence, we \textit{define} the \textit{connected Schwinger function} by
\begin{align}\label{eq:Schwinger}
 S_c(\xi_1,..,\xi_N):=&\lim_{\Lambda^2\to\infty} \lim_{\stackrel{V,\mN'\to\infty}{\frac{\mN'}{V^{2/D}}=\Lambda^2}}
 \sum_{N_1+N_2+..+N_b=N}\sum_{\narrowunderline{p}_1^1
 ,..,\narrowunderline{p}_{N_b}^b\in\N^{D/2}_{\mN'}} \frac{G_{|\narrowunderline{p}_1^1
 ..\narrowunderline{p}_{N_1}^1|
 ..|\narrowunderline{p}_1^b
 ..\narrowunderline{p}_{N_b}^b|}}{(8 \pi)^{D/2}\, b!}\\
 &\times\sum_{\sigma\in S_N}\prod_{\beta=1}^b\frac{b_{\narrowunderline{p}_1^\beta
 \narrowunderline{p}_2^\beta}(\xi_{\sigma(s_\beta+1)})..b_{\narrowunderline{p}_{N_\beta}^\beta
 \narrowunderline{p}_1^\beta}(\xi_{\sigma(s_\beta+N_\beta)})}{V N_\beta},\nonumber
\end{align}
where $s_\beta=
N_1+..+N_{\beta-1}$ and $S_N$ is the symmetric group, consisting of all permutation of a set with $N$ elements.
The Schwinger function is, by definition, symmetric $S_c(\xi_1,..,\xi_i,..,\xi_j,..,\xi_N)=
S_c(\xi_1,..,\xi_j,..,\xi_i,..,\xi_N)$ for all $i,j$.

The definition \eqref{eq:Schwinger} is motivated by the
partition function of the Moyal space in position space before applying the expansion. 
This partition function should be treated with caution and is formally
defined as
the functional integral
\begin{align*}
 \Z[J]=\int D[\Phi] \exp\left(-S[\Phi]+V \int d^Dx\,(J\star \Phi)(x)\right),
\end{align*}
where $J\in\mathcal{S}(\R^D)$ is the source and the measure $D[\Phi]$ should be understood formally by the function 
$\Phi\in\mathcal{S}(\R^D)$. The measure as well as the entire expression have no well-defined limit such that 
the functional integral has to be understood more or less symbolically as the limit of \eqref{eq:Part1}. 
The source and the field are expanded by $J(x)=\frac{1}{V (8\pi)^{D/2}}\sum_{\narrowunderline{n},\narrowunderline{m}\in\N^{D/2}}
J_{\narrowunderline{n}\narrowunderline{m}}
b_{\narrowunderline{n}\narrowunderline{m}}(x)$ and $\Phi(x)=\sum_{\narrowunderline{n},\narrowunderline{m}\in\N^{D/2}}
\Phi_{\narrowunderline{n}\narrowunderline{m}}
b_{\narrowunderline{n}\narrowunderline{m}}(x)$.
Staying at the formal level, the Schwinger function has originally the formal definition
\begin{align*}
 S_c(\xi_1,..,\xi_N)=\frac{1}{(8\pi)^{D/2} }\frac{\delta^N\frac{1}{V^2} \log \frac{\Z[J]}{\Z[0]}}
 {\delta J(\xi_1)..\delta J(\xi_N)}\bigg\vert_{J=0}.
\end{align*}
The expansion of $J(x)$ leads with $J_{\narrowunderline{n}\narrowunderline{m}}
=\int d^Dx\,
b_{\narrowunderline{n}\narrowunderline{m}}(x) J(x)$ to  $\frac{\delta J_{\narrowunderline{n}\narrowunderline{m}}}{\delta J(x)}=
b_{\narrowunderline{n}\narrowunderline{m}}(x)$. This recovers the definition \eqref{eq:Schwinger} from the chain rule
together with \eqref{eq:DefCorrSour}.

The invariance of the Schwinger function \eqref{eq:Schwinger} under the full Euclidean group is shown 
by another representation
\begin{align}\label{eq:SchwingerRep}
 S_c(\xi_1,..,\xi_N)=&\sum_{\stackrel{N_1+..+N_b=N}{N_\beta\, \text{even}}}
 \sum_{\sigma\in S_N}\left(\prod_{\beta=1}^b \frac{2^{DN_\beta/2}}{N_\beta}\int\frac{d^D p^{\beta}}{(2\pi)^{D/2}}
 e^{\mathrm{i} p^\beta\cdot (\xi_{\sigma (s_\beta+1)}-
 \xi_{\sigma (s_\beta+2)}+..-\xi_{\sigma (s_\beta+N_\beta)})}\right)\\\nonumber
 &\times\frac{1}{(8\pi)^{D/2}b!}G^0\bigg(\underbrace{\frac{\|p^1\|^2}{2},..,\frac{\|p^1\|^2}{2}}_{N_1}\bigg\vert...\bigg\vert
 \underbrace{\frac{\|p^b\|^2}{2},..,\frac{\|p^b\|^2}{2}}_{N_b}\bigg),
 \end{align}
 \begin{align*}
 \text{where}\quad &
 G^g(x_1^1,..,x_{N_1}^1|..|x_1^b,..,x_{N_b}^b)=\lim_{V,\mN',\Lambda^2}
 G^{(g)}_{|\narrowunderline{p}_1^1..\narrowunderline{p}_{N_1}^1|..|\narrowunderline{p}^b_{1}
 ..\narrowunderline{p}^b_{N_b}|}\big\vert_{|\narrowunderline{p}_i^j|=x_i^jV^{2/D}}.
\end{align*}
This representation can be found for $D=4$ in \cite{Grosse:2013iva}. The derivation 
of the general
case is outsourced to \sref{App.}{App:Schwinger}.

We have completed the procedure to determine Schwinger functions of a self-dual real scalar QFT model 
on the Moyal space by 
applying first the matrix base expansion. To solve then the corresponding matrix field theory model 
in the combined limit 
$V,\mN\to\infty$, we have to take the planar sector such that the Schwinger function is given by \eqref{eq:SchwingerRep}.
This procedure obviously holds only in the discussed $V,\mN$-limit. For any finite $V$, translational invariance is broken 
and for any finite $\mN$, the Moyal space cannot be recovered. Going away from the self-dual point $\Omega=1$ the limit of 
$V$ could be relaxed. The off-diagonal terms of the covariance
$G_{\narrowunderline{n}\narrowunderline{m}
  ;\narrowunderline{k}\narrowunderline{l}}$ for $\Omega\neq 1$
  survive, however it is possible to diagonalise $G_{\narrowunderline{n}\narrowunderline{m}
  ;\narrowunderline{k}\narrowunderline{l}}$ in $D=2$ \cite{Grosse:2003nw} and in $D=4$ 
  by Meixner polynomials \cite{Grosse:2004yu}. Nevertheless, the next step of deriving
  correlation functions in this situation 
  is an outstanding challenge.

\begin{rmk}
 From the representation \eqref{eq:SchwingerRep} together with the correlation functions $G^0$, the reflection 
positivity property of Schwinger functions can be checked. If it holds, for instance in $D=4$, 
the first exactly solvable non-trivial 
QFT model in 4 dimensions can be derived. All further Osterwalder-Schrader axioms \cite{Osterwalder:1973dx,
Osterwalder:1974tc} (except clustering) are fulfilled by the representation \eqref{eq:SchwingerRep}. 
\end{rmk}

\chapter[Cubic Interaction]{Cubic Interaction:\\
The Renormalised Kontsevich Model}\label{chap:cubic}
In this chapter, we consider a matrix field theory with cubic interaction 
$V(\Phi)=\frac{\lambda}{3}\Phi^3$. Historically, 
this model has great importance. It was designed by Maxim Kontsevich \cite{Kontsevich:1992ti}
to prove Edward Witten's conjecture \cite{Witten:1990hr} about the equivalence between two different 
2-dimensional quantum gravity approaches. In particular, Witten conjectured that the 
generating function of intersection numbers on the moduli space $\overline{\mathcal{M}}_{g,b}$ 
of stable complex curves of genus $g$ and $b$ distinct marked points satisfies the string equation and an entire hierarchy 
of KdV equations. These equations are nonlinear partial differential equations which are 
recursively constructed by Gelfand-Dikii polynomials \cite{gelfand_dikii_1977}.

It was later discovered that the 0-dimensional Kontsevich model (finite matrices) has an underlying structure known as
\textit{topological recursion}, which was
developed by Bertrand Eynard and Nicolas Orantin \cite{Eynard:2007kz}. 

In \sref{Sec.}{Sec:CubicSchwinger} we recall the derivation of the SDEs. 
\sref{Sec.}{Sec:CubicSolution} is split into several subsections which provide the 
continuum limit together with the renormalisation depending on the spectral dimension $\D$.
We will prove also in this section that the renormalised
Kontsevich model obeys topological recursion by inverting a linear integral operator 
with combinatorial methods using Bell polynomials. 
Furthermore, we will construct a boundary creation operator independent of the dimension. 
The free energy will then be given in \sref{Sec.}{Sec:CubicFreeEnergy}.
As by-product
by combining the boundary insertion operator with the inverse of an integral operator,
we will derive a second-order differential operator
to compute intersection numbers of $\psi$-classes. 
By doing so, we will prove that the 
stable partition function of the renormalised Kontsevich model
is annihilated by the generators of a \textit{deformed Virasoro algebra} 
due to the change of an implicitly defined constant $c$ depending on the dimension. 
In \sref{Sec.}{Sec:CubicOverRenom} we take up the 
question, whether \textit{over-subtraction} or rather \textit{over-renormalisation} will create problems 
for the cubic model. 

\section[Schwinger-Dyson Equations]{Schwinger-Dyson Equations
\footnote{The SDEs of this section were already derived in \cite{Grosse:2016qmk}}}
\label{Sec:CubicSchwinger}
All SDEs will be derived with the complete set of renormalisation constants for $\D<8$.
Due to the tadpole renormalisation, the action is equipped with an additional linear term such that
the renormalised Kontsevich model becomes
\begin{align}\label{eq:actioncubic}
  S[\Phi]&=\!V\bigg(\sum_{n,m=0}^\mN \!\!\!
Z\frac{H_{nm}}{2}
\Phi_{nm}\Phi_{mn}\!
 + \!\! \sum_{n=0}^\mN
(\kappa
{+}\nu E_{n}{+}\zeta E_{n}^2) 
\Phi_{nn}\!
+\frac{\lambda_{bare}Z^{3/2}}{3} \!\!\!\!\!
\sum_{n,m,k=0}^\mN
\!\!\!\!\!
\Phi_{nm}
 \Phi_{mk}
\Phi_{kn}\!\bigg),\\
H_{nm}&=E_n+E_m,\nonumber
\end{align}
where two eigenvalues $E_i,E_j$ are not necessarily different. 
The partition function is also slightly affected by the renormalisation 
constants and is similarly to \eqref{eq:Part2} given by
\begin{align}
 \mathcal{Z}[J]=&\int \mathcal{D} \Phi \exp\left(-S[\Phi]+V\mathrm{Tr}(J\Phi) \right)\nonumber\\\label{eq:partioncubic}
 =&K\,\exp\Big(-\frac{\lambda_{bare}Z^{3/2} }{3V^2}\sum_{n,m,k=0}^\mN
 \frac{\partial^3}{\partial J_{nm}\partial 
 J_{mk}\partial J_{kn}}
 \Big)\mathcal{Z}_{\mathrm{free}}[J],
\\\nonumber
\mathcal{Z}_{\mathrm{free}}[J]:=&
 \exp\!\bigg(\!V\!\!\!\sum_{n,m=0}^\mN \!\frac{(J_{nm}-(\kappa+\nu 
 E_{n}+\zeta E_{n}^2)\delta_{m,
 n})(J_{mn}-(\kappa+\nu 
 E_{n}+\zeta E_{n}^2)\delta_{m,
 n})}{2ZH_{nm}}\bigg),
\end{align}
and $K:=\int \mathcal{D}\Phi\exp\Big(-VZ\sum_{n,m=0}^\mN\frac{H_{nm}}{2}
\Phi_{nm}\Phi_{mn}\Big)=\prod_{n,m=0}^\mN\sqrt{\frac{2\pi}{VZH_{nm}}}$. Finally, we recall the 
impact of the renormalisation constants on the Ward-Takahashi identity of \sref{Theorem}{Thm:Raimar}
\begin{align}\nonumber
 \sum_{n=0}^\mN \frac{\partial^2\Z[J]}{\partial J_{qn}\partial J_{np}}=&
  \frac{V}{(E_p-E_q)Z}\sum_{n=0}^\mN
  \left(J_{pn}\frac{\partial}{\partial J_{qn}}-J_{nq}\frac{\partial}{\partial J_{np}}\right)\Z[J]
  -\frac{V}{Z}(\nu+\zeta H_{pq})\frac{\partial \Z[J]}{\partial J_{qp}}\\
  &+\delta_{E_p,E_q}( W^1_p[J]+W^2_p[J])\Z[J].\label{eq:Wardcubic}
\end{align}
Notice that \eqref{eq:Wardcubic} holds even for $E_p=E_q$ by regularity assumption
even if $p\mapsto E_p$ is not injective. The later computed correlation functions depend only on the 
distinct eigenvalues $e_k$ with $k\mapsto e_k$ injective. The sum can then 
be written as $\sum_{n=0}^\mN\to\sum_{k=0}^{\mN'}r_k$, where 
$\mN'$ are the numbers of distinct eigenvalues and $r_k$ its multiplicities.

The renormalisation constants have singular behaviour in the later taken limit depending
on the spectral dimension $\D$. Nevertheless, an appropriate choice of the constants leaves the correlation
functions finite after removing the cut-off. Notice that there is still a freedom in the choice 
of the renormalisation constants by boundary conditions which we will fix
for $\D<6$ differently than in the perturbative
expansion of Zimmermann's forest formula:
\begin{align}\label{eq:cond1}
 &\D\geq 2: &&G^{(0)}_{|0|}=0,\\\label{eq:cond2}
 &\D\geq 4: &&\frac{\partial}{\partial p}G^{(0)}_{|p|}\bigg\vert_{p=0}=0,\quad G^{(0)}_{|00|}=1,\\\label{eq:cond3}
 &\D\geq 6: &&\frac{\partial^2}{\partial p^2}G^{(0)}_{|p|}\bigg\vert_{p=0}=0,
 \quad \frac{\partial}{\partial p}G^{(0)}_{|pq|}\bigg\vert_{p=q=0}=
 \frac{\partial}{\partial q}G^{(0)}_{|pq|}\bigg\vert_{p=q=0}=0.
\end{align}
Let us first determine the 2-point function $G_{|pq|}$, where
$E_p\neq E_q$,  recursively from the 1-point function $G_{|p|}$ by
\begin{align*}
 G_{|pq|}=&\frac{1}{ZH_{pq}}-\frac{\lambda_{bare}Z^{1/2}}{V^2 H_{pq}}\frac{1}{\Z[0]}
 \sum_{n=0}^\mN \frac{\partial^3}{\partial J_{pq}\partial J_{qn}\partial J_{np}}
 \Z[J]\bigg\vert_{J=0}\\
 =&\frac{1}{ZH_{pq}}\bigg(1+\lambda_{bare}Z^{1/2}\frac{G_{|p|}-G_{|q|}}
  {E_p-E_n}+\lambda_{bare}Z^{1/2}(\nu+\zeta H_{pq})G_{|pq|}\bigg),
\end{align*}
where equation \eqref{eq:SDeqGeneral} together with \eqref{eq:partioncubic} 
and \eqref{eq:Wardcubic} is applied. This expression is now equivalent 
to 
\begin{align}\label{eq:2to1}
 G_{|pq|}=&\frac{1}
 {Z(1-Z^{-1/2}\lambda_{bare} \zeta)(E_p+E_q-\frac{Z^{-1/2} \lambda_{bare}\nu}{1-Z^{-1/2}\lambda_{bare} \zeta})}
 \\\nonumber
 &+\frac{\lambda_{bare} Z^{1/2}}{Z(1-Z^{-1/2}\lambda_{bare} \zeta)} 
 \frac{G_{|p|}-G_{|q|}}{(E_p-E_q)(E_p+E_q-\frac{Z^{-1/2} \lambda_{bare}\nu}{1-Z^{-1/2}\lambda_{bare} \zeta})}.
\end{align}
Notice that the limit $E_q\to E_p$ is uncritical since the 1-point function can be continued to 
differentiable function, where the derivative $\frac{\partial}{\partial E_p}G_{|p|}$ has a meaningful 
expression. Furthermore, the perturbative expansion of $G_{|pp|}$ is unique. 

The two conditions \eqref{eq:cond2} and \eqref{eq:cond3} together with the expression \eqref{eq:2to1} yield for $g=0$ after 
genus expansion the relations
\begin{align}\label{eq:renorm1}
 \mu^2_{bare}-Z^{1/2}\lambda_{bare} \nu=&1\\\label{eq:renorm2}
 Z(1-Z^{-1/2}\lambda_{bare}\zeta)=&1,
\end{align}
where $\frac{\mu^2_{bare}}{2}$ is the smallest eigenvalue of $E$. It is also more convenient to 
define the renormalised coupling constant $\lambda$ and the shifted eigenvalues $F_p$ 
(which are UV-finite in the later taken limit) by
\begin{align}\label{eq:renorm3}
 \lambda:=&Z^{1/2}\lambda_{bare}\\\label{eq:renorm4}
 F_p:=&E_p-\frac{\lambda\nu}{2}.
\end{align}
Inserting \eqref{eq:renorm1}, \eqref{eq:renorm2} as well as \eqref{eq:renorm3}, \eqref{eq:renorm4} into 
\eqref{eq:2to1} gives the compact formula
\begin{align}\label{eq:2to11}
 G_{|pq|}=\frac{1}{F_p+F_q}+\lambda \frac{G_{|p|}-G_{|q|}}{F^2_p-F^2_q}.
\end{align}
The setup is now established, and we are able to derive all SDEs starting with the 1-point function:
\begin{prps}\label{Prop:CubicOneP}
 The shifted 1-point function $W_{|p|}:=2\lambda G_{|p|}+2F_p$
 satisfies 
 \begin{align*}
  (W_{|p|})^2+2\lambda \nu W_{|p|}+\frac{2\lambda^2}{V}\sum_{n=0}^\mN
 \frac{W_{|p|}-W_{|n|}}{F^2_{p}-F^2_{n}}+
 \frac{4\lambda^2}{V^2}G_{|p|p|}=&\frac{4F_{p}^2}{Z}+C,
 \end{align*}
where $C:=-\frac{\lambda^2\nu^2(1+Z)+4 \kappa \lambda}{Z}$.
In particular, the genus expansion $W_{|p|}=:\sum_{g=0}^\infty V^{-2g} 
W^{(g)}_{|p|}=2F_p+2\lambda \sum_{g=0}^\infty V^{-2g} G^{(g)}_{|p|}$ gives a linear equation for 
$g>0$ and a nonlinear one for $g=0$:
\begin{align}\label{g11}
\sum_{h+h'=g} W^{(h)}_{|p|}W^{(h')}_{|p|}+2\lambda\nu W^{(g)}_{|p|}+\frac{2\lambda^2}{V}
\sum_{n=0}^\mN
\frac{W^{(g)}_{|p|}-W^{(g)}_{|n|}}{F^2_{p}-
F^2_{n}}+4\lambda^2G^{(g-1)}_{|p|p|}=
\delta_{0,g}\Big(\frac{4F_{p}^2}{Z}+C\Big).
\end{align}
\begin{proof}
 By definition, the 1-point function is given by
 \begin{align*}
  G_{|p|}=&\frac{1}{V}\frac{\partial}{\partial J_{pp}}\log\Z[J]\bigg\vert_{J=0}\\
  =&\frac{1}{ZH_{pp}}\bigg(-\kappa-\nu E_p-\zeta E_p^2-\frac{\lambda_{bare}Z^{3/2}}{V^2\Z[0]}\sum_{n=0}^\mN
  \frac{\partial^2}{\partial J_{pn}\partial J_{np}}\Z[J]\bigg)\bigg\vert_{J=0}\\
  =&\frac{1}{ZH_{pp}}\bigg(-\kappa-\nu E_p-\zeta E_p^2-\lambda_{bare}Z^{3/2}
  \big((G_{|p|})^2
  + \frac{ G_{|p|p|}}{V^2}
  +\frac{1}{V}\sum_{n=0}^\mN G_{|np|}\big)\bigg),
 \end{align*}
where the second line is achieved by using \eqref{eq:SDeqGeneral} and considering the additional 
terms via \eqref{eq:partioncubic}. The definition of the correlation function gives the last line,
where $n=p$ produces the quadratic term and the $(1+1)$-point function. 

Inserting the conditions \eqref{eq:renorm1} and \eqref{eq:renorm2} together with the 
definitions \eqref{eq:renorm3}, \eqref{eq:renorm4} and $W_{|p|}:=2\lambda G_{|p|}+2F_p$ 
leads to the first equation of the proposition since $G_{|pq|}$ (see \eqref{eq:2to11}) becomes
\begin{align}\label{eq:Cubic2point}
 G_{|pq|}=\frac{1}{2}\frac{W_{|p|}-W_{|q|}}{F^2_{p}-F^2_{q}}.
\end{align}
Expanding the shifted 1-point and the $(1+1)$-point function in the genus expansion, gives
at order $V^{-2g}$ the second equation of the proposition.
\end{proof}
\end{prps}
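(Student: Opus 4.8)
The plan is to obtain the equation directly from the Schwinger--Dyson machinery of \eqref{eq:SDeqGeneral}, specialised to the cubic interaction of \eqref{eq:actioncubic}, and then to feed in the two-point identity \eqref{eq:2to11} together with the renormalisation relations \eqref{eq:renorm1}--\eqref{eq:renorm4}. First I would write $G_{|p|}=\frac1V\frac{\partial}{\partial J_{pp}}\log\Z[J]\big|_{J=0}$ and apply \eqref{eq:SDeqGeneral} with base point $J_{pp}$: the operator $\big(\partial S_{int}/\partial\Phi_{pp}\big)[\frac1V\frac{\partial}{\partial J}]$ of the cubic vertex produces $\frac{\lambda_{bare}Z^{3/2}}{V^2}\sum_{n}\frac{\partial^2}{\partial J_{pn}\partial J_{np}}$ acting on $\Z[J]$, while the shift in $\Z_{\mathrm{free}}$ in \eqref{eq:partioncubic} contributes the linear counterterm $\kappa+\nu E_p+\zeta E_p^2$. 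Translating the second derivative back into correlation functions, the term $n=p$ is degenerate: it splits off the square $(G_{|p|})^2$ and the $(1+1)$-point function $G_{|p|p|}/V^2$, whereas $n\neq p$ yields $\frac1V\sum_n G_{|np|}$. This gives the intermediate identity
\[
ZH_{pp}\,G_{|p|}=-\kappa-\nu E_p-\zeta E_p^2-\lambda_{bare}Z^{3/2}\Big((G_{|p|})^2+\frac{G_{|p|p|}}{V^2}+\frac1V\sum_{n=0}^\mN G_{|np|}\Big).
\]
The limit $q\to p$ needed here for $G_{|p|p|}$ (and for the $n=p$ summand) is harmless because, by the differentiable continuation of the eigenvalues used throughout \sref{Sec.}{Sec:LargeLimit}, $G_{|p|}$ extends to a differentiable function, so $\partial_{E_p}G_{|p|}$ is meaningful and the perturbative expansion of $G_{|pp|}$ is unambiguous.

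Next I would eliminate $G_{|np|}$ with \eqref{eq:2to11} and pass to the renormalised variables $\lambda=Z^{1/2}\lambda_{bare}$, $F_p=E_p-\tfrac{\lambda\nu}{2}$ and $W_{|p|}=2\lambda G_{|p|}+2F_p$. Using \eqref{eq:renorm1} and \eqref{eq:renorm2}, the two-point function collapses to $G_{|pq|}=\tfrac12(W_{|p|}-W_{|q|})/(F_p^2-F_q^2)$, because $\tfrac{1}{F_p+F_q}=\tfrac{F_p-F_q}{F_p^2-F_q^2}$, and hence $\frac1V\sum_n G_{|np|}=\frac{1}{2V}\sum_n\frac{W_{|p|}-W_{|n|}}{F_p^2-F_n^2}$. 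Multiplying the intermediate identity by $2\lambda$, completing the square in $2\lambda G_{|p|}$ to manufacture $W_{|p|}$ (this is where $E_p^2$ versus $F_p^2=(E_p-\tfrac{\lambda\nu}{2})^2$ produces the extra $\nu$-dependent pieces), and collecting all surviving constant terms into $C=-\big(\lambda^2\nu^2(1+Z)+4\kappa\lambda\big)/Z$, should give exactly the first displayed equation of the proposition.

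Finally, for \eqref{g11} I would insert the genus expansions $W_{|p|}=\sum_g V^{-2g}W^{(g)}_{|p|}$ and $G_{|p|p|}=\sum_g V^{-2g}G^{(g)}_{|p|p|}$ and read off the coefficient of $V^{-2g}$: the square gives the convolution $\sum_{h+h'=g}W^{(h)}_{|p|}W^{(h')}_{|p|}$, the linear term $2\lambda\nu W_{|p|}$ and the $O(1)$ sum $\frac{2\lambda^2}{V}\sum_n$ stay at order $g$, the term $\frac{4\lambda^2}{V^2}G_{|p|p|}$ shifts to $4\lambda^2 G^{(g-1)}_{|p|p|}$, and the right-hand side only survives at $g=0$, producing $\delta_{0,g}\big(\tfrac{4F_p^2}{Z}+C\big)$. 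Separating the $g=0$ contribution $W^{(0)}_{|p|}W^{(0)}_{|p|}$ from the convolution then exhibits the $g>0$ equation as linear in $W^{(g)}_{|p|}$ with inhomogeneity built from lower genus, while $g=0$ is the stated Riccati-type equation.

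The step I expect to be the main obstacle is purely the algebraic bookkeeping in the second paragraph: tracking $Z,\nu,\zeta,\kappa,\lambda_{bare}$ through the substitution and verifying that everything not proportional to $F_p^2$ or to $W$ assembles precisely into the stated $C$ --- in particular that the coefficient $(1+Z)$ multiplying $\lambda^2\nu^2$ emerges correctly from combining the explicit $\kappa,\nu$ counterterms with the $\nu$-shift hidden in $F_p^2$. The conceptual ingredients --- the SDE \eqref{eq:SDeqGeneral}, the Ward-derived identity \eqref{eq:2to11}, and the differentiable continuation of the eigenvalues --- are already available, so no new input beyond careful computation is needed.
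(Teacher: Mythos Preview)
Your proposal is correct and follows essentially the same route as the paper: derive the SDE for $G_{|p|}$ from \eqref{eq:SDeqGeneral} and \eqref{eq:partioncubic}, split the $\sum_n \partial^2/\partial J_{pn}\partial J_{np}$ term into $(G_{|p|})^2+V^{-2}G_{|p|p|}+\frac1V\sum_n G_{|np|}$, substitute the two-point identity $G_{|pq|}=\tfrac12(W_{|p|}-W_{|q|})/(F_p^2-F_q^2)$, pass to the $(\lambda,F_p,W_{|p|})$ variables via \eqref{eq:renorm1}--\eqref{eq:renorm4}, and then read off the genus expansion. The only work left is exactly what you flag---the straight algebraic verification that the leftover constants assemble into $C=-\big(\lambda^2\nu^2(1+Z)+4\kappa\lambda\big)/Z$---and the paper does not provide more detail than you do on that point.
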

\begin{rmk}
 The shift from $G_{|p|}$ to $W_{|p|}$ is understood as transformation $\Phi\mapsto \Phi'=\Phi-\frac{E}{\lambda}$ 
 of the integration variable of
 the partition function, which cancels the quadratic term 
 of the action such that the action only consists of a linear and a cubic term in $\Phi'$, and a constant.
\end{rmk}
\noindent
The generalisation of the recursive equation \eqref{eq:2to11} to all correlation functions is given by:
\begin{prps}\label{Prop:CubisRecur}
For any $j\in\{1,..,b\}$ and $i\in\{1,..,N_j\}$,
 the $(N_1+N_2+..+N_b)$-point function is
 given recursively in terms of the $(N_1+..+N_{j-1}+(N_j-1)+N_{j+1}+..+N_b)$-point function
 \begin{align*}
 G_{|p_1^1..p_{N_1}^1|..|p_1^b..p_{N_b}^b|}=-\lambda
  \frac{G_{|p_1^1..|p_1^j..p_{i-1}^jp_{i+1}^j..p_{N_j}^j|..p_{N_b}^b|}-
  G_{|p_1^1..|p_1^j..p_{i}^jp_{i+2}^j..p_{N_j}^j|..p_{N_b}^b|}}
  {F^2_{p_i^j}-F^2_{p_{i+1}^j}}.
\end{align*}
 \begin{proof}
  Let $\frac{\partial^N}{\partial \J_{p_1..p_N}}:=
  \frac{\partial^N}{\partial J_{p_1p_2}..\partial J_{p_{N-1}p_N}\partial J_{p_Np_1}}$ and\\
  $\frac{\partial^N}{\partial \hat{\J^i}_{p_1..p_N}}:=
  \frac{\partial^N}{\partial J_{p_1p_2}..\partial J_{p_{i-1}p_i}
  \partial J_{p_{i+1}p_{i+2}}..\partial J_{p_{N-1}p_N}\partial J_{p_Np_1}}$ with omitted 
   $\frac{\partial}{\partial J_{p_ip_{i+1}}}$-derivative. Let the $E_{p_j^i}$'s be pairwise different.
  By definition, we have
  \begin{align*}
   &G_{|p_1^1p_2^1..p_{N_1}^1|..|p_1^b..p_{N_b}^b|}=V^{b-2}
   \frac{\partial^{N_1+..+N_b}}{\partial \J_{p^1_1..p^1_{N_1}}..\partial \J_{p^b_1..p^b_{N_b}}}
   \log\frac{\Z[J]}{\Z[0]}\bigg\vert_{J=0}\\
   &=-\frac{V^{b-3}\lambda_{bare}Z^{3/2}}{ZH_{p_i^jp_{i+1}^j}}
   \frac{\partial^{N_1+..+N_b-1}}
   {\partial \J_{p^1_1..p^1_{N_1}}..\partial \hat{\J}^i_{p^j_1..p^j_{N_j}}..\partial \J_{p^b_1..p^b_{N_b}}}
   \frac{1}{\Z[J]}\sum_{n=0}^\mN\frac{\partial^2}{\partial J_{p^j_in}\partial J_{np^j_{i+1}}}\Z[J]\bigg\vert_{J=0}\\
   &=-\lambda_{bare}Z^{-1/2}\frac{G_{|p_1^1..p_{N_1}^1|..|p_1^j..p_{i-1}^jp_{i+1}^j..p_{N_j}^j|..|p_1^b..p_{N_b}^b|}-
  G_{|p_1^1..p_{N_1}^1|..|p_1^j..p_{i}^jp_{i+2}^j..p_{N_j}^j|..|p_1^b..p_{N_b}^b|}}
  {E^2_{p_i^j}-E^2_{p_{i+1}^j}}\\
  & \qquad +\frac{\lambda_{bare}(\nu+\zeta H_{p_i^jp_{i+1}^j})}{Z^{1/2}H_{p_i^jp_{i+1}^j}}
  G_{|p_1^1..p_{N_1}^1|..|p_1^b..p_{N_b}^b|},
  \end{align*}
where
  the second line is achieved by applying \eqref{eq:SDeqGeneral} and in the last two lines by \eqref{eq:Wardcubic}. 
  Inserting the conditions \eqref{eq:renorm1}, \eqref{eq:renorm2} together with the 
definitions \eqref{eq:renorm3}, \eqref{eq:renorm4} leads to the result. 

Since the lhs is regular if two or more $E_{p_j^i}$'s coincide, the rhs has a well-defined limit which can be understood by 
continuing the correlation functions to differentiable functions.
 \end{proof}
\end{prps}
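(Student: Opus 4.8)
The plan is to reproduce, for an arbitrary correlation function, the short computation that gave the recursion \eqref{eq:2to11} for the $2$-point function. Fix the boundary index $j$ and the site $i$, and single out the source $J_{p_i^j p_{i+1}^j}$ among the $N_1+\dots+N_b$ derivatives that extract $G_{|p_1^1..p_{N_1}^1|..|p_1^b..p_{N_b}^b|}$ from $\log(\Z[J]/\Z[0])$ via \eqref{eq:DefCorrSour}. First I would apply the Schwinger--Dyson identity \eqref{eq:SDeqGeneral} with this distinguished base point, using the partition function in the form \eqref{eq:partioncubic} and the cubic specialisation of \eqref{eq:SintAbl} (so that $(\partial S_{int}/\partial\Phi_{p_i^j p_{i+1}^j})[\tfrac1V\partial_J]=\tfrac{\lambda_{bare}Z^{3/2}}{V}\sum_n \partial^2/(\partial J_{p_i^j n}\,\partial J_{n p_{i+1}^j})$, with free propagator $1/(ZH)$); after peeling off the distinguished derivative this leaves a constant prefactor times the remaining derivatives applied to $\tfrac1{\Z[J]}\sum_n \partial^2\Z[J]/(\partial J_{p_i^j n}\,\partial J_{n p_{i+1}^j})$ at $J=0$. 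Because the cubic vertex has valence $3$ the interaction derivative is only quadratic in $\partial_J$, so a \emph{single} application of the Ward--Takahashi identity will close the equation; this rigidity is exactly what distinguishes the cubic model from the general tower discussed after \sref{Theorem}{Thm:Raimar}.

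The core step is then to invoke \eqref{eq:Wardcubic} with the pair $(p,q)=(p_{i+1}^j,p_i^j)$. Since the $p$'s are pairwise distinct, $E_{p_i^j}\neq E_{p_{i+1}^j}$, so the prefactor $(E_p-E_q)^{-1}$ is legitimate and the degenerate term $\delta_{E_p,E_q}(W^1_p+W^2_p)$ drops out. The identity turns $\sum_n\partial^2/(\partial J_{p_i^j n}\partial J_{n p_{i+1}^j})$ acting on $\Z$ into two $J$-linear terms plus a term proportional to $(\nu+\zeta H_{p_i^j p_{i+1}^j})\,\partial\Z/\partial J_{p_{i+1}^j p_i^j}$. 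The main bookkeeping is to carry out the contractions once the remaining derivatives act and $J$ is set to zero: in the term $J_{p_{i+1}^j n}\,\partial_{J_{p_i^j n}}\Z$ the factor $J_{p_{i+1}^j n}$ must be absorbed by $\partial/\partial J_{p_{i+1}^j p_{i+2}^j}$, which still belongs to the product, forcing $n=p_{i+2}^j$ and collapsing boundary $j$ to the cyclic word $p_1^j,\dots,p_i^j,p_{i+2}^j,\dots,p_{N_j}^j$ with $p_{i+1}^j$ deleted; symmetrically the term $J_{n p_i^j}\,\partial_{J_{n p_{i+1}^j}}\Z$ forces $n=p_{i-1}^j$ and deletes $p_i^j$. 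These are precisely the two $(N{-}1)$-point functions on the right-hand side of the claim, the relative sign coming from the $+J_{pn}\partial_{J_{qn}}-J_{nq}\partial_{J_{np}}$ structure of the Ward identity; the $(\nu+\zeta H)$ term simply reproduces $G_{|p_1^1..p_{N_1}^1|..|p_1^b..p_{N_b}^b|}$ itself.

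Collecting the three contributions yields an intermediate identity of exactly the same shape as the $2$-point case \eqref{eq:2to1}, namely $G_{|\dots|}=-\lambda_{bare}Z^{-1/2}\,\frac{[\,\cdots\,]-[\,\cdots\,]}{E^2_{p_i^j}-E^2_{p_{i+1}^j}}+\frac{\lambda_{bare}(\nu+\zeta H_{p_i^j p_{i+1}^j})}{Z^{1/2}H_{p_i^j p_{i+1}^j}}\,G_{|\dots|}$. The remaining work is purely algebraic: solve for $G_{|\dots|}$, substitute the renormalisation relations \eqref{eq:renorm1}--\eqref{eq:renorm2} and the definitions $\lambda=Z^{1/2}\lambda_{bare}$, $F_p=E_p-\tfrac{\lambda\nu}{2}$ of \eqref{eq:renorm3}--\eqref{eq:renorm4}, and use the factorisations $E^2_{p_i^j}-E^2_{p_{i+1}^j}=(E_{p_i^j}-E_{p_{i+1}^j})H_{p_i^j p_{i+1}^j}$ and $F^2_{p_i^j}-F^2_{p_{i+1}^j}=(E_{p_i^j}-E_{p_{i+1}^j})(H_{p_i^j p_{i+1}^j}-\lambda\nu)$; these collapse the prefactors into the single coefficient $-\lambda$ and convert the denominator to $F^2_{p_i^j}-F^2_{p_{i+1}^j}$, exactly as in the passage from \eqref{eq:2to1} to \eqref{eq:2to11}. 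Finally, the derivation assumed the $E$'s distinct; since every correlation function is, at least perturbatively, a rational function of the $E_p$'s which extend to a differentiable function of the index, the right-hand side has a well-defined limit as indices collide by L'H\^opital, so the recursion persists in the degenerate case. I expect the contraction bookkeeping in the middle paragraph — pinning down which value of $n$ survives, that the surviving index string is the claimed cyclic word with the correct entry removed, and the signs — to be the only genuine point of care; everything else is a transcription of the $2$-point computation.
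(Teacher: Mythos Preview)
Your proposal is correct and follows essentially the same route as the paper: apply \eqref{eq:SDeqGeneral} at the distinguished base point $J_{p_i^j p_{i+1}^j}$, use the Ward identity \eqref{eq:Wardcubic} once to reduce $\sum_n\partial^2\Z/(\partial J_{p_i^j n}\partial J_{n p_{i+1}^j})$, read off the two $(N{-}1)$-point functions and the $(\nu+\zeta H)$ self-term, then simplify with \eqref{eq:renorm1}--\eqref{eq:renorm4}. Your account of the contraction bookkeeping (which $n$ survives and which index gets deleted) is more explicit than the paper's, but the argument is the same.
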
\noindent
We emphasise that \sref{Proposition}{Prop:CubisRecur} implies recursively that any $(N_1+..+N_b)$-point function
depends \textit{linearly} on $(1+..+1)$-point functions with $b$ boundary components. 
The explicit result is formulated in:
\begin{prps}\label{Prop:CubisExpl}
 The explicit formula for the $(N_1+..+N_b)$-point function is
 \begin{align*}
  G_{|p_1^1..p_{N_1}^1|..|p_1^b..p_{N_b}^b|}=\lambda^{N-b}\sum_{k_1=1}^{N_1}..\sum_{k_b=1}^{N_b}
  G_{|p^1_{k_1}|p^2_{k_2}|..|p^b_{k_b}|}\prod_{\beta=1}^b\prod_{\substack{l_\beta=1\\ 
  l_{\beta}\neq k_{\beta}}}^{N_\beta} \frac{1}{F^2_{p_{k_{\beta}}^{\beta}}-F^2_{p_{l_{\beta}}^{\beta}}},
 \end{align*}
where $N:=N_1+..+N_b$, and for $b=1$
\begin{align*}
  G_{|p_1..p_{N}|}=\lambda^{N-1}\sum_{k=1}^{N}
  \frac{W_{|p_{k}|}}{2\lambda}\prod_{\substack{l=1\\ 
  l\neq k}}^{N} \frac{1}{F^2_{p_{k}}-F^2_{p_{l}}}.
 \end{align*}

\begin{proof}
 The proof is provided by induction. For $N_1=2$ and $N_\beta=1$ for all $\beta\in\{2,..,b\}$ it holds. 
 Assume it holds for the $(N_1+..+N_b)$-point function, then we have for the $((N_1+1)+N_2+..+N_b)$-point function 
 with \sref{Proposition}{Prop:CubisRecur}
 \begin{align*}
  &G_{|p_1^1..p_{N_1+1}^1|..|p_1^b..p_{N_b}^b|}=-\lambda 
  \frac{G_{|p_2^1..p_{N_1+1}^1|..|p_1^b..p_{N_b}^b|}-G_{|p_1^1p_3^1..p_{N_1+1}^1|..|p_1^b..p_{N_b}^b|}}
  {F^2_{p^1_1}-F^2_{p^1_{2}}}\\
  &=-\lambda^{N+1-b} \sum_{k_2=1}^{N_2}..\sum_{k_b=1}^{N_b}\frac{1}{F^2_{p^1_1}-F^2_{p^1_{2}}}\\
&\times  \left(\sum_{k_1=2}^{N_1+1} G_{|p^1_{k_1}|p^2_{k_2}|..|p^b_{k_b}|}
  \prod_{\substack{l_1=2\\ 
  l_1\neq k_1}}^{N_1+1} \frac{1}{F^2_{p_{k_{1}}^1}-F^2_{p_{l_{1}}^{1}}}
  -\sum_{\substack{k_1=1\\k_1\neq 2}}^{N_1+1} G_{|p^1_{k_1}|p^2_{k_2}|..|p^b_{k_b}|}
  \prod_{\substack{l_1=1\\ 
  l_1\neq k_1,\l_1\neq 2}}^{N_1+1} \frac{1}{F^2_{p_{k_{1}}^1}-F^2_{p_{l_{1}}^{1}}}
  \right)\\
  &\times \prod_{\beta=2}^b\prod_{\substack{l_\beta=1\\ 
  l_{\beta}\neq k_{\beta}}}^{N_\beta} \frac{1}{F^2_{p_{k_{\beta}}^{\beta}}-F^2_{p_{l_{\beta}}^{\beta}}}\\
  &=-\lambda^{N+1-b} \sum_{k_2=1}^{N_2}..\sum_{k_b=1}^{N_b}\bigg[G_{|p^1_{1}|p^2_{k_2}|..|p^b_{k_b}|}
  \prod_{l_1=2}^{N_1+1} \frac{1}{F^2_{p_{1}^1}-F^2_{p_{l_{1}}^{1}}} 
+G_{|p^1_{2}|p^2_{k_2}|..|p^b_{k_b}|}
  \prod_{\substack{l_1=1\\l_1\neq2}}^{N_1+1} \frac{1}{F^2_{p_{2}^1}-F^2_{p_{l_{1}}^{1}}}\\
  &+\sum_{k_1=3}^{N_1+1} G_{|p^1_{k_1}|p^2_{k_2}|..|p^b_{k_b}|}\frac{1}{F^2_{p^1_1}-F^2_{p^1_{2}}}\bigg\{
  \prod_{\substack{l_1=2\\ 
  l_1\neq k_1}}^{N_1+1} \frac{1}{F^2_{p_{k_{1}}^1}-F^2_{p_{l_{1}}^{1}}}
  -\prod_{\substack{l_1=1\\ 
  l_1\neq k_1,\l_1\neq 2}}^{N_1+1} \frac{1}{F^2_{p_{k_{1}}^1}-F^2_{p_{l_{1}}^{1}}}
  \bigg\}\bigg]\\
  &\times \prod_{\beta=2}^b\prod_{\substack{l_\beta=1\\ 
  l_{\beta}\neq k_{\beta}}}^{N_\beta} \frac{1}{F^2_{p_{k_{\beta}}^{\beta}}-F^2_{p_{l_{\beta}}^{\beta}}}.
 \end{align*}
The curly brackets factorise $\prod_{\substack{l_1=3\\ 
  l_1\neq k_1}}^{N_1+1} \frac{1}{F^2_{p_{k_{1}}^1}-F^2_{p_{l_{1}}^{1}}}$, what remains is
  \begin{align*}
   \frac{1}{F^2_{p^1_1}-F^2_{p^1_{2}}}\bigg(\frac{1}{F^2_{p_{k_{1}}^1}-F^2_{p_{2}^{1}}}-
   \frac{1}{F^2_{p_{k_{1}}^1}-F^2_{p_{1}^{1}}}\bigg)=-\frac{1}{F^2_{p_{k_{1}}^1}-F^2_{p_{2}^{1}}}
   \frac{1}{F^2_{p_{k_{1}}^1}-F^2_{p_{1}^{1}}}
  \end{align*}
for any $k_1>2$. We conclude for any $k_1>2$ the factors $\prod_{\substack{l_1=1\\ 
  l_1\neq k_1}}^{N_1+1} \frac{1}{F^2_{p_{k_{1}}^1}-F^2_{p_{l_{1}}^{1}}}$, which finishes the proof for $N_1\to N_1+1$.
  Since the correlation functions are symmetric between the boundary components, it is also proved for any $N_\beta
  \to N_\beta+1$.
  
  For $b=1$, the proof proceeds analogously. For $N=2$, the starting point is given by 
  \eqref{eq:Cubic2point}.
\end{proof}
\end{prps}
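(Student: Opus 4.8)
The plan is to prove the formula by induction, using \sref{Proposition}{Prop:CubisRecur} as the step that shortens one boundary by a single index, together with the partial-fraction arithmetic this step generates. For $b\geq 2$ the induction is anchored at the $(1+\cdots+1)$-point functions: there every product $\prod_{l_\beta\neq k_\beta}$ is empty and the claimed identity reads $G_{|p_1^1|\cdots|p_1^b|}=\lambda^0 G_{|p_1^1|\cdots|p_1^b|}$, a tautology. For $b=1$ it is anchored at $G_{|p_1p_2|}$, which by \eqref{eq:Cubic2point} equals $\tfrac12(W_{|p_1|}-W_{|p_2|})/(F_{p_1}^2-F_{p_2}^2)$ and hence is already of the claimed form with $N=2$. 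Since correlation functions are symmetric under permutation of boundary components and cyclic within each boundary, in the inductive step it suffices to pass from an $(N_1+\cdots+N_b)$-point function to an $((N_1+1)+N_2+\cdots+N_b)$-point function.

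To carry out the step I would apply \sref{Proposition}{Prop:CubisRecur} with $j=i=1$, writing $G_{|p_1^1\cdots p_{N_1+1}^1|\cdots|}$ as $-\lambda/(F_{p_1^1}^2-F_{p_2^1}^2)$ times the difference of the two correlation functions of first-boundary length $N_1$ obtained by deleting $p_1^1$ (keeping $p_2^1,\ldots,p_{N_1+1}^1$) and by deleting $p_2^1$ (keeping $p_1^1,p_3^1,\ldots,p_{N_1+1}^1$). Substituting the inductive hypothesis into each turns it into a sum over a distinguished index $k_1$ on the first boundary — ranging over $\{2,\ldots,N_1+1\}$ in the first term and over $\{1,3,\ldots,N_1+1\}$ in the second — with the distinguished indices $k_2,\ldots,k_b$ on the other boundaries carried along as spectators. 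One then re-indexes everything by $k_1\in\{1,\ldots,N_1+1\}$ and splits into cases.

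For $k_1=1$ only the second term contributes; its product over $\{3,\ldots,N_1+1\}$ combined with the prefactor $1/(F_{p_1^1}^2-F_{p_2^1}^2)$ and the accumulated minus signs produces exactly $\lambda^{(N+1)-b}G_{|p_1^1|p_{k_2}^2|\cdots|}\prod_{l_1=2}^{N_1+1}(F_{p_1^1}^2-F_{p_{l_1}^1}^2)^{-1}$, and $k_1=2$ is symmetric. For $k_1\geq3$ both terms contribute; they share the factor $\prod_{l_1=3,\,l_1\neq k_1}^{N_1+1}(F_{p_{k_1}^1}^2-F_{p_{l_1}^1}^2)^{-1}$, and what remains is the identity
\[
\frac{1}{F_{p_1^1}^2-F_{p_2^1}^2}\left(\frac{1}{F_{p_{k_1}^1}^2-F_{p_2^1}^2}-\frac{1}{F_{p_{k_1}^1}^2-F_{p_1^1}^2}\right)=-\frac{1}{(F_{p_{k_1}^1}^2-F_{p_1^1}^2)(F_{p_{k_1}^1}^2-F_{p_2^1}^2)},
\]
which supplies precisely the two factors needed to extend the shared product to $\prod_{l_1=1,\,l_1\neq k_1}^{N_1+1}$. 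Collecting the three cases assembles $\sum_{k_1=1}^{N_1+1}$ with the sign absorbed and the correct power $\lambda^{(N+1)-b}$; reattaching the spectator products $\prod_{\beta\geq2}\prod_{l_\beta\neq k_\beta}(F_{p_{k_\beta}^\beta}^2-F_{p_{l_\beta}^\beta}^2)^{-1}$ completes the induction. The $b=1$ case is identical, with the atom $W_{|p_{k_1}|}/(2\lambda)$ in place of $G_{|p_{k_1}^1|\cdots|}$, which accounts for the factor $\lambda^{N-1}$.

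The main obstacle is purely bookkeeping: verifying that the boundary contributions $k_1\in\{1,2\}$ — where only one of the two reduced correlation functions contributes — and the generic contributions $k_1\geq3$ — where the partial-fraction identity is used — reassemble into a single clean sum with no residual sign. The one genuine caveat, already present in the proof of \sref{Proposition}{Prop:CubisRecur}, is the degenerate case in which some of the $F_p$'s coincide: there the left-hand side is manifestly regular, all correlation functions extend to differentiable functions of the continued eigenvalues, and the partial-fraction manipulations above are justified by passing to that limit.
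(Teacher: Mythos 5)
Your proof is correct and follows essentially the same route as the paper: induction anchored at \eqref{eq:Cubic2point}, the step via \sref{Proposition}{Prop:CubisRecur}, the case split $k_1\in\{1,2\}$ versus $k_1\geq 3$, and the identical partial-fraction identity. The only (harmless) deviation is that you anchor the $b\geq 2$ induction at the tautological all-ones case $N_\beta=1$ rather than at $N_1=2$, $N_{\beta\geq 2}=1$ as the paper does.
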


\begin{rmk}
 Notice that the explicit formula of \sref{Proposition}{Prop:CubisExpl} is fully symmetric within 
 any boundary component. In other words, a $(N_1+..+N_b)$-point function of the cubic model 
 is not only cyclic symmetric within a boundary (by definition), it is fully symmetric.
This is even surprising from the perspective of perturbative expansion.  For example:
 \begin{align*}
  G_{|p_1p_2p_3p_4|}=G_{|p_2p_1p_3p_4|}.
  \end{align*}
\end{rmk}
\noindent
After genus expansion, we observe that any $(N_1+..+N_b)$-point function of genus $g$ is expressed by 
the $(1+..+1)$-point function of genus $g$ with $b$ boundary components. The final SDE to study
is therefore the equation of the $(1+..+1)$-point function.

We impose the shorthand notation $G_{|I|}:=G_{|i^1|..|i^b|}$ for the set $I:=\{i^1,..,i^b\}$ with
$|I|=b$ and $i^j\in \{0,..,\mN\}$.
\begin{prps}\label{Prop:CubicNP}
 Let $J=\{p^2,..,p^b\}$, then the $(1+..+1)$-point function
 with $b$ boundary components satisfies the linear equation
 \begin{align*}
  &(W_{|p^1|}+\nu\lambda)G_{|p^1|J|}+\frac{\lambda^2}{V}\sum_{n=0}^\mN 
  \frac{G_{|p^1|J|}-G_{|n|J|}}{F^2_{p^1}-F^2_{n}}\\
  &\qquad =-\lambda \sum_{\beta=2}^bG_{|p^1p^\beta p^\beta|J\backslash\{\beta\}|}
  -\frac{\lambda}{V^2}G_{|p^1|p^1|J|}-\lambda \sum_{\substack{I\uplus I'=J\\
  0\neq |I|\neq b}}G_{|p^1|I|}G_{|p^1|I'|}.
 \end{align*}
In particular, the genus $g$ correlation function satisfies
\begin{align*}
 &\sum_{h+h'=g}(W^{(h)}_{|p^1|}+\delta_{h,0}\nu\lambda)G^{(h')}_{|p^1|J|}+\frac{\lambda^2}{V}\sum_{n=0}^\mN 
  \frac{G^{(g)}_{|p^1|J|}-G^{(g)}_{|n|J|}}{F^2_{p^1}-F^2_{n}}\\
  &\qquad =-\lambda \sum_{\beta=2}^bG^{(g)}_{|p^1p^\beta p^\beta|J\backslash\{p^\beta\}|}
  -\lambda G^{(g-1)}_{|p^1|p^1|J|}-\lambda \sum_{h+h'=g}\sum_{\substack{I\uplus I'=J\\
  0\neq |I|\neq b}}G^{(h)}_{|p^1|I|}G^{(h')}_{|p^1|I'|}.
\end{align*}
\begin{proof}
 Let the $E_{p^\beta}$'s be pairwise different. It follows from the definition
 \begin{align*}
  &G_{|p^1|J|}=V^{b-2}\frac{\partial^b}{\partial J_{p^1p^1}\partial J_{p^2p^2}..\partial J_{p^bp^b}}
  \log\frac{\Z[J]}{\Z[0]}\bigg\vert_{J=0}\\
  =&-\frac{\lambda_{bare}Z^{1/2}V^{b-3}}{H_{p^1p^1}}
  \frac{\partial^{b-1}}{\partial J_{p^2p^2}..\partial J_{p^bp^b}}\frac{1}{\Z[J]}
  \sum_{n=0}^\mN \frac{\partial^2}{\partial J_{p^1n}\partial J_{np^1}}\Z[J]\bigg\vert_{J=0}\\
  =&-\frac{\lambda_{bare}Z^{1/2}V^{b-3}}{H_{p^1p^1}}
  \frac{\partial^{b-1}}{\partial J_{p^2p^2}..\partial J_{p^bp^b}}
  \sum_{n=0}^\mN \bigg(\frac{\partial^2\log \Z[J]}{\partial J_{p^1n}\partial J_{np^1}}+
  \frac{\partial\log \Z[J]}{\partial J_{np^1}}
  \frac{\partial\log \Z[J]}{\partial J_{p^1n}}\bigg)\bigg\vert_{J=0}\\
  =&-\frac{\lambda_{bare}Z^{1/2}}{H_{p^1p^1}}\bigg(\frac{1}{V}\!\sum_{n=0}^\mN G_{|p^1n|J|}+
  \frac{1}{V^2}G_{|p^1|p^1|J|}+G_{|p^1p^\beta p^\beta |J\backslash \{p^\beta\}|}
   +\sum_{I\uplus I'=J}G_{|p^1|I|}G_{|p^1|I'|}\bigg),
 \end{align*}
where in the last line the $(2+1+..+1)$-point function appears if $n=p^1$, and the $(3+1+..+1)$-point function 
if $n=p^\beta$ with $\beta\in\{2,..,b\}$. Shifting the quadratic term with $|I|=0$ and $|I|=b$ to the lhs and using $H_{p^1p^1}+\lambda_{bare}Z^{1/2}
2 G_{|p^1|}=W_{|p^1|}+\lambda_{bare}Z^{1/2} \nu$ leads to the provided equation with $\lambda_{bare}Z^{1/2}=\lambda$.

If two or more $E_{p^\beta}$'s coincide, both sides of the equation are uncritical. 
\end{proof}
\end{prps}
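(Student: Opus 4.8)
The plan is to follow the template of \sref{Proposition}{Prop:CubicOneP} and \sref{Proposition}{Prop:CubisRecur}: write the $(1+\cdots+1)$-point function as a $J$-derivative of $\log(\Z[J]/\Z[0])$, apply the Schwinger--Dyson equation~\eqref{eq:SDeqGeneral} at the base point $J_{p^1p^1}$, and then decode the outcome into correlation functions. The starting point is
\begin{align*}
 G_{|p^1|J|}=V^{b-2}\frac{\partial^b}{\partial J_{p^1p^1}\partial J_{p^2p^2}\cdots\partial J_{p^bp^b}}\log\frac{\Z[J]}{\Z[0]}\Big\vert_{J=0}.
\end{align*}
Differentiating $\Z_{\mathrm{free}}$ of~\eqref{eq:partioncubic} once with respect to $J_{p^1p^1}$ produces a factor linear in $J$ together with the pure constant $-(\kappa+\nu E_{p^1}+\zeta E_{p^1}^2)$; the latter is annihilated by the remaining derivatives $\partial_{J_{p^\beta p^\beta}}$ as soon as $b\geq2$, while \sref{Lemma}{Lemma:Trick} turns the commutation of $\exp(-S_{int}[\tfrac{1}{V}\partial_J])$ past the linear factor into the single internal sum $\tfrac{\lambda_{bare}Z^{3/2}}{V}\sum_n\partial_{J_{p^1n}}\partial_{J_{np^1}}$ acting on $\Z[J]$. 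This gives
\begin{align*}
 G_{|p^1|J|}=-\frac{\lambda_{bare}Z^{1/2}V^{b-3}}{H_{p^1p^1}}\frac{\partial^{b-1}}{\partial J_{p^2p^2}\cdots\partial J_{p^bp^b}}\frac{1}{\Z[J]}\sum_{n=0}^{\mN}\frac{\partial^2}{\partial J_{p^1n}\partial J_{np^1}}\Z[J]\Big\vert_{J=0}.
\end{align*}

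The combinatorial heart of the proof is then to split $\tfrac{1}{\Z}\partial^2\Z=\partial^2\log\Z+(\partial\log\Z)(\partial\log\Z)$, let the remaining $b-1$ derivatives act, and read off correlation functions via~\eqref{eq:DefCorrSour}. The $\partial^2\log\Z$ term contributes $\tfrac{1}{V}\sum_n G_{|p^1n|J|}$ for generic $n$, plus the degeneracies: $n=p^1$ adds $\tfrac{1}{V^2}G_{|p^1|p^1|J|}$, and $n=p^\beta$ with $\beta\in\{2,\dots,b\}$ merges $\Phi_{p^1p^\beta},\Phi_{p^\beta p^1},\Phi_{p^\beta p^\beta}$ into a length-three cycle, producing $\sum_{\beta=2}^{b}G_{|p^1p^\beta p^\beta|J\backslash\{p^\beta\}|}$. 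The product term $(\partial\log\Z)(\partial\log\Z)$ is non-zero only for $n=p^1$, since the invariance of the diagonal-source expectation values under conjugation by diagonal unitaries forces the single off-diagonal index of each factor to coincide with $p^1$; the Leibniz rule distributing the boundaries of $J$ over the two factors then gives $\sum_{I\uplus I'=J}G_{|p^1|I|}G_{|p^1|I'|}$. I would peel off the two extreme partitions $|I|\in\{0,b\}$ of this sum --- together they give $2G_{|p^1|}G_{|p^1|J|}$ --- and move them to the left, so that after clearing $H_{p^1p^1}$ and performing the shift $E_{p^1}\to F_{p^1}$ the coefficient of $G_{|p^1|J|}$ becomes $H_{p^1p^1}+2\lambda G_{|p^1|}=W_{|p^1|}+\nu\lambda$; simultaneously $\tfrac{1}{V}\sum_n G_{|p^1n|J|}$ is rewritten, using the two-cycle case of \sref{Proposition}{Prop:CubisRecur} ($G_{|p^1n|J|}=\lambda\,\tfrac{G_{|p^1|J|}-G_{|n|J|}}{F^2_{p^1}-F^2_n}$), into the stated $\tfrac{\lambda^2}{V}$-sum.

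Inserting the renormalisation relations~\eqref{eq:renorm1}--\eqref{eq:renorm2} and the definitions~\eqref{eq:renorm3}--\eqref{eq:renorm4} ($\lambda_{bare}Z^{1/2}=\lambda$) then puts the identity into the claimed form, and the genus-$g$ equation follows termwise from $W_{|p|}=\sum_g V^{-2g}W^{(g)}_{|p|}$, $G=\sum_g V^{-2g}G^{(g)}$: products become convolutions $\sum_{h+h'=g}$ and the explicit $\tfrac{1}{V^2}$ in front of $G_{|p^1|p^1|J|}$ shifts $g\mapsto g-1$. I expect the main obstacle to be precisely this bookkeeping: cataloguing the contributions of the internal index $n$ and of the boundary partition $I\uplus I'=J$ without double counting, and verifying that the two extreme partitions are exactly what is needed to complete $H_{p^1p^1}+2\lambda G_{|p^1|}$ into $W_{|p^1|}+\nu\lambda$. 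The final point --- validity when several $E_{p^\beta}$ coincide --- is handled as in the earlier propositions: continue $n\mapsto E_n$, hence all correlation functions, to differentiable functions of the indices, observe that both sides of the identity are regular in the limit, and conclude by continuity.
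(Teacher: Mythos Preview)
Your proposal is correct and follows essentially the same route as the paper's proof: differentiate $\log\Z$ at the base point $J_{p^1p^1}$, split $\tfrac{1}{\Z}\partial^2\Z$ into $\partial^2\log\Z+(\partial\log\Z)^2$, classify the contributions by the value of the internal index $n$, peel off the extreme partitions $|I|\in\{0,|J|\}$ to build $W_{|p^1|}+\nu\lambda$, and finish by continuity. You are in fact slightly more explicit than the paper in two places --- you spell out why the inhomogeneous constant $-(\kappa+\nu E_{p^1}+\zeta E_{p^1}^2)$ is killed by the remaining diagonal derivatives, and you cite \sref{Proposition}{Prop:CubisRecur} to convert $\tfrac{1}{V}\sum_n G_{|p^1n|J|}$ into the $\tfrac{\lambda^2}{V}$ difference-quotient sum, a step the paper leaves implicit.
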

\noindent
Looking at all SDE of \sref{Proposition}{Prop:CubicOneP} and
\sref{Proposition}{Prop:CubicNP}, we notice that any correlation function $f(p)$ of Euler 
characteristic $\chi=2-2g-b$
can be computed by inverting the linear equation
\begin{align}\label{eq:cubicSD1}
 (W^{(0)}_{|p|}+\nu\lambda)f(p)+\frac{\lambda^2}{V}\sum_{n=0}^\mN 
  \frac{f(p)-f(n)}{F^2_{p}-F^2_{n}}= g_{inh}(p)
\end{align}
where $g_{inh}(p)$ is a inhomogeneity depending on correlation functions of Euler characteristic $\chi'>\chi$. Only the 
1-point function of genus $g=0$ plays a special r\^ole and satisfies a nonlinear equation. All solutions are known in case 
of $\D=0$ and obey topological recursion. For higher dimensions, the limit $\mN,V\to\infty$ 
needs to be performed before the correlation functions are computed.

\section{Solution of the Schwinger-Dyson Equations}\label{Sec:CubicSolution}
The solution of all planar correlation function was already found in \cite{Grosse:2016qmk}.
We have generalised these results in our paper \cite{Grosse:2019nes} which will be presented 
in this section.
\subsection{Large $\mN,V$-Limit}
Following the limit discussed in \sref{Section}{Sec:LargeLimit}, the distinct eigenvalues $(e_0,e_1,..,e_{\mN'})$ 
with $\mN'\leq \mN$ are given by the shifted eigenvalues $F_p=E_p-\frac{\lambda\nu}{2}$ of multiplicities $r_k$
\begin{align*}
 F_0=F_1=..=F_{r_0-1}=& \,e_0=\frac{1}{2}\\
 F_{r_0}=..=F_{r_0+r_1-1}=&\, e_1\\
  \vdots&
\end{align*}
The asymptotic behaviour of the multiplicities defines the spectral dimension of the model by 
$\lim_{k\to\infty}r_k\sim k^{\frac{\D}{2}-1}$ since $e_k$ can be assumed asymptotically linear in $k$ 
(see \sref{Remark}{Rmk:LargeN}). The monotonic, continuously differentiable function $e(x)$ with $e(0)=0$ 
and $r(x)$ are defined by \eqref{eq:econt} and \eqref{eq:rcont} which are in the $\mN',V$-limit unique
\begin{align*}
 \lim_{\mN',V\to\infty}e_k-\frac{\mu^2}{2}=&
 \lim_{\mN',V\to\infty}e\left(\frac{k}{V^{\frac{2}{\D}}}\right)=e(x)\stackrel{x\to\infty}{\sim} x\\
  \lim_{\mN',V\to\infty}\frac{r_k}{V^{1-\frac{2}{\D}}}=&\lim_{\mN',V\to\infty}
 r\left(\frac{k}{V^{\frac{2}{\D}}}\right)=r(x)\stackrel{x\to\infty}{\sim} x^{\frac{\D}{2}-1}.
\end{align*}
The ratio $\frac{\mN'}{V^{2/\D}}=\Lambda^2$ will be fixed 
in such a way that the sum converges to the integral with cut-off $\Lambda^2$ 
\begin{align*}
 \lim \sum_{n=0}^\mN f\left(\frac{n}{V^{2/\D}}\right)
 =\lim \sum_{k=0}^{\mN'} r_k f\left(\frac{k}{V^{2/\D}}\right)=\int_0^{\Lambda^2}dx\,r(x)f(x).
\end{align*}
The SDE of the previous section becomes
together with the continuation of the correlation function (unique 
for $\mN',V\to\infty$) defined in \eqref{eq:Gcont}:
\begin{align}\label{eq:CubicCont1P}
 &\sum_{h+h'=g} W^{h}(x)W^{h'}(x)+2\lambda\nu W^{g}(x)+2\lambda^2
\int_0^{\Lambda^2}dt\,r(t)
\frac{W^{g}(x)-W^{g}(t)}{(\frac{1}{2}+e(x))^2-
(\frac{1}{2}+e(t))^2}\\\nonumber
&=-4\lambda^2G^{g-1}(x|x)+
\delta_{0,g}\Big(\frac{(1+2e(x))^2}{Z}+C\Big),
\end{align}
\begin{align}\label{eq:CubicContNP}
&\sum_{h+h'=g}(W^{h}(x^1)+\delta_{h,0}\nu\lambda)G^{h'}(x^1|J)+\lambda^2\int_0^{\Lambda^2}dt\,r(t)
  \frac{G^{g}(x^1|J)-G^{g}(t|J)}{(\frac{1}{2}+e(x^1))^2-
(\frac{1}{2}+e(t))^2}\\\nonumber
  & =-\lambda \sum_{\beta=2}^bG^{g}(x^1,x^\beta,x^\beta|J\backslash\{x^\beta\})
  -\lambda G^{g-1}(x^1|x^1|J)-\lambda \sum_{\substack{
  h+h'=g\\
  I\uplus I'=J\\
  0\neq |I|\neq b}}G^{h}(x^1|I)G^{h'}(x^1|I'),
\end{align}
where $W^g(x):=2\lambda G^{g}(x)+\delta_{g,0}(1+2e(x))$, $C=\lim_{\mN,V}\big(
-\frac{\lambda^2\nu^2(1+Z)+4 \kappa \lambda}{Z}\big)$ and $J=\{x^2,..,x^b\}$. We used the
notation $G^g(I):=
G^g(y^1|..|y^b)$ for the set of variables $I:=\{y^1,..,y^b\}$ with
$|I|=b$ and $y^j\in [0,\Lambda^2]$.

\subsection{The Planar 1-Point Function}\label{Sec:1Pg0}
Computing the solution of the 1-point function is the hardest part in solving the model since it obeys 
the nonlinear integral equation. The 1-point function 
was originally solved in the appendix of \cite{Makeenko:1991ec} by transforming the problem to a 
boundary value problem, also known as Riemann-Hilbert problem. This solution can be extended from $\D=0$ to $\D<2$. 
The solution for higher spectral dimensions was then 
generalised by a similar ansatz in \cite{Grosse:2016qmk} up to $\D<8$.

To stay self-contained, we recall these results.
A more convenient form of the nonlinear integral equation \eqref{eq:CubicCont1P} for
 $g=0$ is achieved by the variable transformation
\begin{align}\label{eq:DefW1}
 X:=(1+2e(x))^2,\qquad \tilde{W}(X(x)):=W^0(x)
\end{align}
such that the nonlinear integral equation for the 1-point function takes the form
\begin{align}\label{eq:nonlin1}
  &\tilde{W}(X)^2+2\lambda\nu \tilde{W}(X)+
\int_1^{(1+2e(\Lambda^2))^2}dY\,\varrho(Y)
\frac{\tilde{W}(X)-\tilde{W}(Y)}{X-Y}+\frac{X}{Z}
=C,\\
&\text{where}\qquad  \varrho(Y):=\frac{2\lambda^2 \cdot r\!\left(e^{-1}(\frac{\sqrt{Y}-1}{2})\right)}{\sqrt{Y} \cdot
e'\!\!\left(e^{-1}(\frac{\sqrt{Y}-1}{2})\right)}.\nonumber
\end{align}
Its solution is:
\begin{prps}(\cite{Grosse:2016qmk})\label{Prop:1Punkt}
 For any measure $\varrho(Y)$, the nonlinear integral equation \eqref{eq:nonlin1} is solved by
 \begin{align*}
  &\tilde{W}(X)=\frac{\sqrt{X+c}}{\sqrt{Z}}-\lambda\nu +\frac{1}{2}
  \int_1^{(1+2e(\Lambda^2))^2}dY\frac{\varrho(Y)}{(\sqrt{X+c}+\sqrt{Y+c})\sqrt{Y+c}},
 \end{align*}
 where $c, Z$ and $\nu$ are fixed by the renormalisation conditions.
\begin{proof}
 Inserting the solution into the integral equation 
 leads with $\frac{\sqrt{X+c}-\sqrt{Y+c}}{X-Y}=\frac{1}{\sqrt{X+c}+\sqrt{Y+c}}$ to
 \begin{align*}
  &\int_1^{(1+2e(\Lambda^2))^2}\!\!\!\!\!\!\!\!\!\!\!\!\!\!\! dY\,\varrho(Y)
\frac{\tilde{W}(X)-\tilde{W}(Y)}{X-Y}\\
=&\frac{1}{\sqrt{Z}}\int_1^{(1+2e(\Lambda^2))^2}\!\!\!\!\!\!\!\!\!\!\!\!
\frac{dY\,\varrho(Y)}{\sqrt{X+c}+\sqrt{Y+c}}\\
-&\frac{1}{2}
\int_1^{(1+2e(\Lambda^2))^2}\!\!\!\!\!\!\!\!\!\!\!\!\!\!\!\!\!\!\!
\frac{dT\,\varrho(T)}{(\sqrt{X+c}+\sqrt{T+c})\sqrt{T+c}}
\int_1^{(1+2e(\Lambda^2))^2}\!\!\!\!\!\!\!\!\!\!\!\!\!\!\!\!\!\!\!
\frac{dY\,\varrho(Y)}{(\sqrt{X+c}+\sqrt{Y+c})(\sqrt{Y+c}+\sqrt{T+c})}.
 \end{align*}
Using in the last line the symmetrisation $\int_I \int_I dt\,dy\, f(t,y)=\int_I \int_I dt\,dy\, f(y,t)
=\frac{1}{2}\int_I \int_I dt\,dy\, (f(t,y)+f(y,t))$ factorises
the integrals
\begin{align*}
&\int_1^{(1+2e(\Lambda^2))^2}\!\!\!\!\!\!\!\!\!\!\!\!\!\!\! dY\,\varrho(Y)
\frac{\tilde{W}(X)-\tilde{W}(Y)}{X-Y}\\
 =&\frac{1}{\sqrt{Z}}\int_1^{(1+2e(\Lambda^2))^2}
\frac{dY\,\varrho(Y)}{\sqrt{Y+c}}-\frac{\sqrt{X+c}}{\sqrt{Z}}\int_1^{(1+2e(\Lambda^2))^2}\!\!\!\!\!\!\!\!\!\!\!\!\!\!\!\!\!\!\!
\frac{dY\,\varrho(Y)}{\sqrt{Y+c}(\sqrt{Y+c}+\sqrt{X+c})}\\
&\qquad -\frac{1}{4}\left(\int_1^{(1+2e(\Lambda^2))^2}\!\!\!\!\!\!\!\!\!\!\!\!\!\!\!\!\!\!\!
\frac{dY\,\varrho(Y)}{\sqrt{Y+c}(\sqrt{Y+c}+\sqrt{X+c})}\right)^2\\
=&-\tilde{W}(X)^2-2\lambda\nu\tilde{W}(X)+\frac{X+c}{Z}-(\lambda \nu)^2+\frac{1}{\sqrt{Z}}\int_1^{(1+2e(\Lambda^2))^2}
\frac{dY\,\varrho(Y)}{\sqrt{Y+c}}.
\end{align*}
Comparing this with \eqref{eq:nonlin1} identifies the constant $C$ through $c$. We therefore fix the renormalisation
constants
by \eqref{eq:cond1}, \eqref{eq:cond2} and \eqref{eq:cond3} and the definition of $\tilde{W}(X)$ 
depending on the spectral dimension.
\end{proof}
\end{prps}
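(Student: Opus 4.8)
The plan is to prove the proposition by direct substitution, the form of the ansatz being dictated by the structure of the linear part of \eqref{eq:nonlin1}. The combination of the quadratic term and the finite Hilbert transform $\int\varrho(Y)\frac{\tilde W(X)-\tilde W(Y)}{X-Y}\,dY$ is, after the standard reduction to a boundary value (Riemann--Hilbert) problem on the interval $[1,(1+2e(\Lambda^2))^2]$, solved by a function holomorphic off a single cut; this forces the leading square-root behaviour $\tilde W(X)\sim\sqrt{X+c}/\sqrt Z$ with one free constant $c$, together with a one-parameter family of bounded corrections built from the measure $\varrho$. So I would first record this heuristic to pin down the shape of the solution, and then verify that the displayed $\tilde W$ indeed satisfies \eqref{eq:nonlin1}.

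Concretely, I would set $I(X):=\int_1^{(1+2e(\Lambda^2))^2}\frac{\varrho(Y)\,dY}{(\sqrt{X+c}+\sqrt{Y+c})\sqrt{Y+c}}$, so that $\tilde W(X)=\sqrt{X+c}/\sqrt Z-\lambda\nu+\tfrac12 I(X)$, and insert this into the Hilbert transform. Using $\frac{\sqrt{X+c}-\sqrt{Y+c}}{X-Y}=\frac1{\sqrt{X+c}+\sqrt{Y+c}}$, the piece $\sqrt{X+c}/\sqrt Z$ contributes $\frac1{\sqrt Z}\int\frac{\varrho(Y)\,dY}{\sqrt{X+c}+\sqrt{Y+c}}$, the constant $-\lambda\nu$ drops out, and the $I$-piece yields a double integral of $\varrho(Y)\varrho(T)$ over the product of the three factors $\sqrt{X+c}+\sqrt{Y+c}$, $\sqrt{X+c}+\sqrt{T+c}$, $\sqrt{Y+c}+\sqrt{T+c}$, one of which also carries an extra $\sqrt{T+c}$ in the denominator. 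The key step is to symmetrise this double integral via $\int\int dt\,dy\,f(t,y)=\tfrac12\int\int dt\,dy\,(f(t,y)+f(y,t))$: under $Y\leftrightarrow T$ the sum $\frac1{\sqrt{T+c}}+\frac1{\sqrt{Y+c}}=\frac{\sqrt{Y+c}+\sqrt{T+c}}{\sqrt{Y+c}\sqrt{T+c}}$ cancels the factor $\sqrt{Y+c}+\sqrt{T+c}$, and the double integral collapses to $-\tfrac14 I(X)^2$. Rewriting $\frac1{\sqrt{X+c}+\sqrt{Y+c}}=\frac1{\sqrt{Y+c}}-\frac{\sqrt{X+c}}{\sqrt{Y+c}(\sqrt{X+c}+\sqrt{Y+c})}$ in the remaining single integral turns it into $\frac1{\sqrt Z}\int\frac{\varrho(Y)\,dY}{\sqrt{Y+c}}-\frac{\sqrt{X+c}}{\sqrt Z}I(X)$, and collecting contributions the Hilbert transform equals $-\tilde W(X)^2-2\lambda\nu\tilde W(X)+\frac{X+c}{Z}-(\lambda\nu)^2+\frac1{\sqrt Z}\int\frac{\varrho(Y)\,dY}{\sqrt{Y+c}}$. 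Hence \eqref{eq:nonlin1} reduces to a constant identity, with $C=\frac cZ-(\lambda\nu)^2+\frac1{\sqrt Z}\int_1^{(1+2e(\Lambda^2))^2}\frac{\varrho(Y)\,dY}{\sqrt{Y+c}}$; since this fixes $C$ once $c$ is chosen, I would finally impose the renormalisation conditions \eqref{eq:cond1}, \eqref{eq:cond2} and \eqref{eq:cond3} (the relevant subset depending on $\D$) together with the definition of $\tilde W$ to pin down $c$, $Z$ and $\nu$.

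The main obstacle I expect is exactly the double-integral step: one must recognise, after rewriting the $I$-part of $\frac{\tilde W(X)-\tilde W(Y)}{X-Y}$ as a symmetric three-denominator kernel, that the $Y\leftrightarrow T$ symmetrisation is precisely what cancels one denominator and restores the perfect square $I(X)^2$, and the bookkeeping of the various factors of $\tfrac14$, the signs, and which terms are genuinely $X$-independent needs care. A secondary, more routine point is the dimension-dependent solvability of the system for $c$, $Z$, $\nu$: using the asymptotics $e(x)\sim x$ and $r(x)\sim x^{\D/2-1}$ one checks case by case ($\D<2,4,6,8$) that \eqref{eq:cond1}, \eqref{eq:cond2} and \eqref{eq:cond3} admit a solution, which is independent of the algebraic identity above and should present no real difficulty.
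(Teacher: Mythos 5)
Your proposal is correct and follows essentially the same route as the paper's proof: direct substitution of the ansatz, the identity $\frac{\sqrt{X+c}-\sqrt{Y+c}}{X-Y}=\frac{1}{\sqrt{X+c}+\sqrt{Y+c}}$, and the $Y\leftrightarrow T$ symmetrisation that cancels the factor $\sqrt{Y+c}+\sqrt{T+c}$ and collapses the double integral to $-\tfrac14 I(X)^2$, after which the equation reduces to the constant identity fixing $C$ and the renormalisation conditions. The Riemann--Hilbert heuristic you prepend is harmless motivation not present in the paper, but the verification itself is the same.
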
\noindent
The proposition provides the asymptotic behaviour for the solution $\tilde{W}(X)=\sqrt{X}
+\mathcal{O}((\sqrt{X}-1)^{\D/2})$. The conditions \eqref{eq:cond1}, \eqref{eq:cond2} and \eqref{eq:cond3}
to fix the remaining renormalisation constants are translated to
\begin{align}\label{eq:W1cond}
 \underbrace{\tilde{W}(1)=1}_{\D\geq 2},\qquad 
 \underbrace{\frac{\partial}{\partial X}\tilde{W}(X)\bigg\vert_{X=1}=\frac{1}{2}}_{\D\geq 4},\qquad 
 \underbrace{\frac{\partial^2}{\partial X^2}\tilde{W}(X)\bigg\vert_{X=1}=-\frac{1}{4}}_{\D\geq 6}.
\end{align}
Remember that for $\D<2$ non of the conditions are necessary 
which implies the following implicit equation for $c$ 
(from the last line of the proof of \sref{Proposition}{Prop:1Punkt} with $\nu=Z-1=C=0$):
\begin{align*}
 -c=\int_1^{(1+2e(\Lambda^2))^2}dY\frac{\varrho(Y)}{\sqrt{Y+c}}.
\end{align*}

\subsubsection*{Finite Matrices}
To recover the solution for finite $(\mN'+1)\times( \mN'+1)$ matrices with distinct eigenvalues $\frac{1}{2}=
e_0<..<e_{\mN'}$ of the 
external matrix $E$ with multiplicities $r_0,..,r_{\mN'}$, the discrete Dirac measure can be used 
$\varrho(Y)=\frac{8\lambda^2}{V}\sum_{k=0}^{\mN'}r_k\delta(Y-4e_k^2)$.  This provides the solution
\begin{align*}
 &W^0(e_n)=\tilde{W}(4e_n^2)=\sqrt{4e_n^2+c}+\frac{4\lambda^2}{V}
  \sum_{k=0}^{\mN'} \frac{r_k}{(\sqrt{4e_n^2+c}+\sqrt{4e_k^2+c})\sqrt{4e_k^2+c}},\\
  &\text{with}\qquad -c=\frac{8\lambda^2}{V}\sum_{k=0}^{\mN'} \frac{r_k}{\sqrt{4e_k^2+c}}.
\end{align*}

\subsubsection*{$0\leq \D< 2$}
All renormalisation constants are independent of $\Lambda^2$ and can be set 
to $Z=1$, $\nu=0$, $C=0$ and $\mu_{bare}^2=1$. The solution
becomes
\begin{align*}
  &\tilde{W}(X)=\sqrt{X+c}+\frac{1}{2}
  \int_1^\infty \frac{dY \,\varrho(Y)}{(\sqrt{X+c}+\sqrt{Y+c})\sqrt{Y+c}},\\
  &\text{with}\qquad -c=\int_1^\infty \frac{dY\,\varrho(Y)}{\sqrt{Y+c}}.
 \end{align*}
The integral of the implicit equation of $c$ converges since $r\!\left(e^{-1}(\frac{\sqrt{Y}-1}{2})\right)\sim 
{\sqrt{Y}}^{\frac{\D}{2}-1}=Y^{\frac{\D}{4}-\frac{1}{2}}$ and therefore $\varrho(Y)\sim
\frac{1}{Y^{1-\frac{\D}{4}}}$. The spectral dimension $\D=2$ is the critical dimension where the implicit equation 
diverges logarithmically.

Notice that the constant $c$ depends on $r(x),e(x)$ and $\lambda$ with $c=0$ for $\lambda=0$. From the implicit 
function theorem we know that $c(\lambda)$ is unique and differentiable in an open neighbourhood about $\lambda=0$.

\subsubsection*{$2\leq \D< 4$}
The first condition of \eqref{eq:W1cond} implies the following
implicit equation for $c$:
\begin{align}\label{eq:cond1Wc}
 1=\frac{\sqrt{1+c}}{\sqrt{Z}}-\lambda\nu +\frac{1}{2}
  \int_1^{(1+2e(\Lambda^2))^2}\!\!\!\!\!\!\!\!\!\!\!\!\frac{dY\,\varrho(Y)}{(\sqrt{1+c}+\sqrt{Y+c})\sqrt{Y+c}}.
\end{align}
In this case the limit of the cut-off $\Lambda^2\to\infty$ is 
safe since $\varrho(Y)\sim
\frac{1}{Y^{1-\frac{\D}{4}}}$. The integral representation of $\tilde{W}(X)$ 
of \sref{Propsition}{Prop:1Punkt} converges for $\Lambda^2\to\infty$. For $\D=4$, the expression diverges
logarithmically.
The remaining 
renormalisation constants are $Z=1$ and $\nu=0$. 
Inserting the renormalisation constants yields
\begin{align*}
  &\tilde{W}(X)=\sqrt{X+c} +\frac{1}{2}
  \int_1^\infty \frac{dY\,\varrho(Y)}{(\sqrt{X+c}+\sqrt{Y+c})\sqrt{Y+c}},\\
 &1-\sqrt{1+c}=\frac{1}{2}
  \int_1^\infty \frac{dY\,\varrho(Y)}{(\sqrt{1+c}+\sqrt{Y+c})\sqrt{Y+c}}.
\end{align*}
Since the rhs of the implicit function is positive for real $\lambda$, the lhs indicates $c\in]-1,0]$. The argumentation 
extends for complex $\lambda$ to $c\in\C\backslash]-\infty,-1]$.

\subsubsection*{$4\leq \D<6$}
Subtracting the implicit function \eqref{eq:cond1Wc} 
from the solution of $\tilde{W}(X)$ of \sref{Proposition}{Prop:1Punkt} cancels $\lambda\nu$
and leads to
\begin{align}\label{eq:D41P}
 \frac{\tilde{W}(X)-1}{\sqrt{X+c}-\sqrt{1+c}}=\frac{1}{\sqrt{Z}}-\frac{1}{2}
 \int_1^{(1+2e(\Lambda^2))^2}\!\!\!\!\!\!\!\!\!\!\!\!\!\!\!\!\!\!\!
 \frac{ dY\,\varrho(Y)}{(\sqrt{X+c}+\sqrt{Y+c})(\sqrt{1+c}+\sqrt{Y+c})\sqrt{Y+c}}.
\end{align}
Since $\varrho(Y)\sim
Y^{\frac{\D}{4}-1}$, the limit $\Lambda^2\to\infty$ implies logarithmic divergence for $\D=6$. The additional 
condition $\frac{\partial}{\partial X}\tilde{W}(X)\bigg\vert_{X=1}=\frac{1}{2}$
provides
\begin{align}\label{eq:D4c}
 \sqrt{1+c}=\frac{1}{\sqrt{Z}}-\frac{1}{2}
 \int_1^{(1+2e(\Lambda^2))^2}\!\!\!\!\!\!\!\!\!\!\!\!\!\!\!\!\!\!\!
 \frac{ dY\,\varrho(Y)}{(\sqrt{1+c}+\sqrt{Y+c})^2\sqrt{Y+c}},
\end{align}
where $Z=1$ is safe. Sending $\Lambda^2\to\infty$ in \eqref{eq:D41P} and \eqref{eq:D4c} gives the renormalised 
solution for $4\leq \D<6$.

\subsubsection*{$6\leq \D<8$}
Since \eqref{eq:D41P} is for $\D\geq6$ not divergent for $Z=1$, we eliminate $Z$ through \eqref{eq:D4c} and get
\begin{align*}
 \tilde{W}(X)=&\sqrt{X+c}\sqrt{1+c}-c\\
 &+\frac{1}{2}
 \int_1^{(1+2e(\Lambda^2))^2}\!\!\!\!\!\!\!\!\!\!\!\!\!\!\!\!\!\!\!
 \frac{ dY\,\varrho(Y) (\sqrt{X+c}-\sqrt{1+c})^2}{(\sqrt{X+c}+\sqrt{Y+c})(\sqrt{1+c}+\sqrt{Y+c})^2\sqrt{Y+c}},
\end{align*}
where now $\Lambda\to\infty$ is finite for all $\D<8$ since $\varrho(Y)\sim
Y^{\frac{\D}{4}-1}$. The last condition $\frac{\partial^2}{\partial X^2}\tilde{W}(X)\bigg\vert_{X=1}=-\frac{1}{4}$ gives
\begin{align*}
 -c=\int_1^\infty \frac{dY\,\varrho(Y)}{(\sqrt{1+c}+\sqrt{Y+c})^3\sqrt{Y+c}}.
\end{align*}

Collecting all three cases gives the general result:
\begin{cor}\label{Coro:1P}
Let $D=2\lfloor \frac{\D}{2}\rfloor$.
 The UV-finite, shifted planar 1-point function \eqref{eq:DefW1} is for the cubic matrix field
 of spectral dimension $\D<8$ with the 
 renormalisation conditions \eqref{eq:W1cond} given by
 \begin{align*}
  \tilde{W}(X)=&\sqrt{X+c}\left(\sqrt{1+c}\right)^{\delta_{D,6}}+\delta_{D,4}(1-\sqrt{1+c})
  -\delta_{D,6} c\\
  &+\frac{1}{2}
 \int_1^{\infty}
 \frac{ dY\,\varrho(Y) (\sqrt{1+c}-\sqrt{X+c})^{\frac{D}{2}-1}}
 {(\sqrt{X+c}+\sqrt{Y+c})(\sqrt{1+c}+\sqrt{Y+c})^{\frac{D}{2}-1}\sqrt{Y+c}},
 \end{align*}
 \begin{align*}
 \text{where}\quad & (1-\sqrt{1+c})\left(\frac{1+\sqrt{1+c}}{2}\right)^{\!\delta_{D,6}+\delta_{D,0}}\!\!\!\!\!\!=
 \frac{1}{2}
 \int_1^{\infty}
 \frac{ dY\,\varrho(Y) }{(\sqrt{1+c}+\sqrt{Y+c})^{D/2}\sqrt{Y+c}}.
 \end{align*}
\end{cor}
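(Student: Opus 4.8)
The plan is to assemble the statement from the four case analyses ($0\le\D<2$, $2\le\D<4$, $4\le\D<6$, $6\le\D<8$) already carried out above; in essence the corollary makes precise the phrase ``collecting all three cases gives the general result''. The starting point is \sref{Proposition}{Prop:1Punkt}, which solves the nonlinear integral equation \eqref{eq:nonlin1} for an arbitrary measure $\varrho$ in terms of three constants $c,Z,\nu$ (with $C$ then read off from $c$ via the comparison performed in the proof of that proposition). What remains is to pin down $c,Z,\nu$ from the renormalisation conditions \eqref{eq:W1cond}, equivalently \eqref{eq:cond1}--\eqref{eq:cond3}, in each range of $\D$, and to verify that the cut-off $\Lambda^2$ may be sent to infinity.

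I would then run through the four ranges in turn. For $0\le\D<2$ none of the conditions in \eqref{eq:W1cond} is active, so $Z=1$, $\nu=0$, $C=0$, and $c$ is fixed by the self-consistency relation obtained from matching powers of $X$ in \eqref{eq:nonlin1}, namely $-c=\int_1^\infty dY\,\varrho(Y)/\sqrt{Y+c}$, which converges because $\varrho(Y)\sim Y^{\D/4-1}$. For $2\le\D<4$ one imposes $\tilde W(1)=1$; subtracting that relation from the integral formula of \sref{Proposition}{Prop:1Punkt} cancels the $\lambda\nu$ term and yields the implicit equation for $c$, still with $Z=1$, $\nu=0$, the removal of $\Lambda^2$ again being harmless. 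For $4\le\D<6$ one additionally imposes $\partial_X\tilde W(X)|_{X=1}=\tfrac12$; subtracting $\tilde W(1)=1$ from $\tilde W(X)$ produces \eqref{eq:D41P}, differentiating at $X=1$ produces \eqref{eq:D4c}, and $Z=1$ remains consistent. For $6\le\D<8$ one uses \eqref{eq:D4c} to eliminate $Z$ and imposes the last condition $\partial_X^2\tilde W(X)|_{X=1}=-\tfrac14$ to fix $c$; the asymptotics $\varrho(Y)\sim Y^{\D/4-1}$ then keep every remaining integral finite precisely for $\D<8$, with logarithmic divergences at the excluded borderline dimensions $\D=2,4,6$.

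The final step is purely algebraic bookkeeping: one checks that setting $D=2\lfloor\D/2\rfloor$ and inserting the Kronecker symbols $\delta_{D,0},\delta_{D,4},\delta_{D,6}$ reproduces each of the four case expressions, both for $\tilde W(X)$ and for the implicit equation for $c$. The power $\tfrac D2-1$ on $(\sqrt{1+c}-\sqrt{X+c})$ records how many subtractions were carried out inside the integral (respectively $0,1,2$ for $D=2,4,6$, the case $D=0$ being the trivially renormalised one with $Z=1$, $\nu=0$); the prefactor $(\sqrt{1+c})^{\delta_{D,6}}$ and the additive constants $\delta_{D,4}(1-\sqrt{1+c})$ and $-\delta_{D,6}c$ collect the boundary terms generated when $\nu$ and $Z$ are eliminated in the higher-dimensional cases; and the factor $\big(\tfrac{1+\sqrt{1+c}}{2}\big)^{\delta_{D,6}+\delta_{D,0}}$ in the implicit equation interpolates between the relation $1-\sqrt{1+c}=\tfrac12\int dY\,\varrho(Y)/((\sqrt{1+c}+\sqrt{Y+c})^{D/2}\sqrt{Y+c})$ used for $D=2,4$ and the relation $-c=\int dY\,\varrho(Y)/((\sqrt{1+c}+\sqrt{Y+c})^{D/2}\sqrt{Y+c})$ used for $D=0,6$, the two being converted into each other by multiplying with $\tfrac{1+\sqrt{1+c}}{2}$ since $-c=(1-\sqrt{1+c})(1+\sqrt{1+c})$.

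The main obstacle is not any individual computation but keeping uniform control of the $\Lambda^2\to\infty$ limit across the four ranges: one has to verify that after exactly the right number of subtractions---one per renormalisation condition that has become active---the integrand decays at infinity like $Y^{\D/4-1}/Y^{D/2}$, hence is integrable exactly on the strict range $\D<8$, and that this pairing of ``number of active conditions'' with ``order of the $Y$-polynomial in the denominator'' is the one encoded uniformly by $D=2\lfloor\D/2\rfloor$. A secondary point, worth stating explicitly rather than leaving implicit, is that the successive subtractions are mutually consistent---imposing a higher-order normalisation at $X=1$ does not spoil the lower-order ones---which follows directly from the explicit form of $\tilde W(X)$ in \sref{Proposition}{Prop:1Punkt}.
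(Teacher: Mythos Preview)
Your proposal is correct and matches the paper's approach exactly: the paper provides no formal proof beyond the sentence ``Collecting all three cases gives the general result'', and your proposal fills in precisely that collection---running through the four ranges of $\D$, invoking \sref{Proposition}{Prop:1Punkt}, imposing the successive conditions \eqref{eq:W1cond}, and then verifying that the Kronecker-delta packaging reproduces each case. Your closing remarks on why the subtractions render the $\Lambda^2\to\infty$ limit finite exactly for $\D<8$, and on the algebraic identity $-c=(1-\sqrt{1+c})(1+\sqrt{1+c})$ that unifies the two forms of the implicit equation for $c$, are the only nontrivial points and you have them right.
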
\noindent
In all cases, $c(\lambda)$ is a differentiable function in a small neighbourhood about $\lambda=0$. The expansion of 
$c$ in $\lambda$ is expressed by the Lagrange inverse theorem
\begin{align}\label{eq:cexpansion}
 c=\sum_{n=1}^\infty \frac{1}{n!}\frac{d^{n-1}}{dw^{n-1}}
\Bigg\vert_{w=0}\left(\frac{\frac{w}{2}\int_{1}^{\infty}dY
\frac{\varrho(Y)}{(\sqrt{1+w}+\sqrt{Y+w})^{D/2}\sqrt{Y+w}}}{
(1-\sqrt{1+w})\left(\frac{1+\sqrt{1+w}}{2}\right)^{\delta_{D,0}+\delta_{D,6}}}\right)^n,
\end{align}
where $D=2\lfloor \frac{\D}{2}\rfloor$.

\begin{rmk}(\cite{Grosse:2016qmk})\label{rmk:beta}
 The $\beta$-function $\beta_\lambda$ of the running coupling constant 
 is $\lambda_{bare}(\Lambda^2)=\frac{\lambda}{\sqrt{Z}}$. It is 
 in dimension $6\leq \D<8$ and for real $\lambda$ positive
 \begin{align*}
  \beta_\lambda=&\Lambda^2\frac{d\lambda_{bare}(\Lambda^2)}{d\Lambda^2}\\
  =&\frac{2\lambda^2\Lambda^6}{(\sqrt{1+c}+\sqrt{(1+2e(\Lambda^2))^2+c})^2\sqrt{(1+2e(\Lambda^2))^2+c}}>0.
 \end{align*}
This calculation is easily checked by \eqref{eq:D4c} and assuming $c$ is independent of $\Lambda^2$ which is achieved by
choosing slightly different renormalisation conditions which converges in the limit $\Lambda^2\to\infty$ to the previous 
one. 
The $\beta$-function is finite for finite $\Lambda^2$ and therefore has no Landau pole. 
\end{rmk}
\noindent
Recall that the 1-point function is in any dimension given from the shifted 1-point function by 
\begin{align}\label{eq:G1}
 G^0(x)=\frac{\tilde{W}((1+2e(x))^2)-(1+2e(x))}{2\lambda}.
\end{align}
Let's look at the examples of the $D$-dimensional Moyal space with $D\in\{2,4,6\}$.
\begin{exm}($D=\D=2$ Moyal space)\label{Ex:D2}
\\
 The Moyal space admits linear eigenvalues $e(x)=x$ and for $D=2$ with multiplicity one, i.e. $r(x)=1$,
 such that $\varrho(Y)=\frac{2\lambda^2}{\sqrt{Y}}$. Then \sref{Corollary}{Coro:1P} gives for the shifted 
 1-point function after integration and simplification
 \begin{align*}
  \tilde{W}(X)=&\sqrt{X+c}+\frac{2\lambda^2}{\sqrt{X}}\log\bigg(\frac{(\sqrt{X+c}+\sqrt{X})(\sqrt{X}+1)}
  {\sqrt{X}\sqrt{1+c}+\sqrt{X+c}}\bigg)\\
  1=&\sqrt{1+c}+2\lambda^2\log\bigg(1+\frac{1}
  {\sqrt{1+c}}\bigg).
 \end{align*}
 The convergence radius in $\lambda$ of this solution is induced by the domain where $c(\lambda)$ 
 is uniquely invertible with
 $|\lambda|<\lambda_c\approx 0.4907$.
Expanding $c$ in $\lambda$ by \eqref{eq:cexpansion} gives $c(\lambda)=-\lambda^24\log 2-\lambda^44(\log 2-(\log 2)^2)
-\lambda^6 2(2\log 2-(\log2)^2)+\mathcal{O}(\lambda^8)$.
Inserting in \eqref{eq:G1} gives the first orders of the 1-point function 
\begin{align*}
 G^0(x)=\lambda \frac{\log(1+x)}{1+2x}+\lambda^3 \frac{(2\log2)^2(1+x)x}{(1+2x)^3}+\mathcal{O}(\lambda^5)
\end{align*}
which is confirmed in \sref{App.}{App:PertCubic} by Feynman graph calculations.
\end{exm}
\begin{exm}($D=\D=4$ Moyal space)\label{Ex:D4}
\\
 Again, linear eigenvalues $e(x)=x$ and for $D=4$ with growing multiplicity of the form $r(x)=x$
 such that $\varrho(Y)=\frac{\lambda^2(\sqrt{Y}-1)}{\sqrt{Y}}$. Then \sref{Corollary}{Coro:1P} gives for the shifted 
 1-point function after integration and simplification
 \begin{align*}
  \tilde{W}(X)=&1+\sqrt{X+c}-\sqrt{1+c}-\lambda^2\bigg\{
  \log\bigg(\frac{\sqrt{1+c}+\sqrt{X+c}}
  {2(1+\sqrt{1+c})}\bigg)\\
 &+\frac{1}{\sqrt{X}}\log\bigg(\frac{(\sqrt{X}+\sqrt{X+c})(\sqrt{X}+1)}
  {\sqrt{X}\sqrt{1+c}+\sqrt{X+c}}\bigg)\bigg\}\\
  1=&\sqrt{1+c}+\lambda^2\bigg\{1-\sqrt{1+c}\log\bigg(1+\frac{1}
  {\sqrt{1+c}}\bigg)\bigg\}.
 \end{align*}
 The function $c(\lambda)$ is uniquely invertible for $|\lambda|<\lambda_c\approx 1.1203$.
Expanding $c$ in $\lambda$ by \eqref{eq:cexpansion} gives $c(\lambda)=-\lambda^22(1-\log 2)
+\lambda^4(2-5\log 2+3(\log 2)^2)
-\lambda^6 (\frac{7}{4}-
7\log 2+\frac{37}{4}(\log2)^2-4 (\log2)^3)+\mathcal{O}(\lambda^8)$.
Inserting in \eqref{eq:G1} gives the first orders of the 1-point function 
\begin{align*}
 G^0(x)=\lambda \frac{x-(1+x)\log(1+x)}{1+2x}-\lambda^3 \frac{(1-\log2)^2(4x+3)x^2}{(1+2x)^3}+\mathcal{O}(\lambda^5)
\end{align*}
which is confirmed in \sref{App.}{App:PertCubic} by Feynman graph calculations.
\end{exm}
\begin{exm}($D=\D=6$ Moyal space)\label{Ex:D6}
\\
 Again, linear eigenvalues $e(x)=x$ and for $D=6$ with growing multiplicity of the form $r(x)=\frac{x^2}{2}$
 such that $\varrho(Y)=\frac{\lambda^2(\sqrt{Y}-1)^2}{4\sqrt{Y}}$. Then \sref{Corollary}{Coro:1P} gives for the shifted 
 1-point function after integration and simplification
 \begin{align*}
  \tilde{W}(X)=&\sqrt{X+c}\sqrt{1+c}-c+\frac{\lambda^2}{2}\bigg\{ \sqrt{1+c}-\sqrt{X+c}+
  \log\bigg(\frac{\sqrt{X+c}+\sqrt{1+c}}{2(1+\sqrt{1+c})}\bigg)\\
 &+\frac{(1+X)}{2\sqrt{X}}\log\bigg(\frac{(\sqrt{X}+\sqrt{X+c})(1+\sqrt{X})}
  {\sqrt{X}\sqrt{1+c}+\sqrt{X+c}}\bigg)\bigg\}\\
  -4c=&\lambda^2\bigg\{1-2\sqrt{1+c}+2(1+c)\log\bigg(1+\frac{1}
  {\sqrt{1+c}}\bigg)\bigg\}.
 \end{align*}
 The function $c(\lambda)$ is uniquely invertible for $|\lambda|<\lambda_c\approx 2.3647$.
Expanding $c$ in $\lambda$ by \eqref{eq:cexpansion} gives $c(\lambda)=-\lambda^2\frac{2\log2-1}{4}
+\lambda^4\frac{8(\log2)^2-10\log2+3}{32}
-\lambda^6 \frac{128(\log2)^3-252( \log2)^2+164\log 2-35}{1024}+\mathcal{O}(\lambda^8)$.
Inserting in \eqref{eq:G1} gives the first orders of the 1-point function 
\begin{align*}
 G^0(x)=\lambda \frac{2(1+x)^2\log(1+x)-x(2+3x)}{4(1+2x)}
 +\lambda^3 \frac{x^3(2+3x)(2\log2-1)^2}{16(1+2x)^3}+\mathcal{O}(\lambda^5)
\end{align*}
which is confirmed in \sref{App.}{App:PertCubic} by Feynman graph calculations.
\end{exm}

\begin{rmk}(\cite{Grosse:2016qmk})\label{rmk:renorm}
The cubic model on the $D=6$ Moyal space admits the renormalon problem which provides
no problem for the exact formula.
Determining the amplitude of the Feynman graph below according to the Feynman rules together with Zimmermann's forest formula
gives
\begin{align*}
 &\frac{(-\lambda)^{4+2n}}{(1+2x)^4}\int_0^\infty \frac{y^2\,dy}{2}\frac{1}{(1+x+y)^{4+n}}
 \bigg(\frac{2(1+y)^2\log(1+y)-y(2+3y)}{4(1+2y)}\bigg)^n\\
 \sim &\,  \frac{(-\lambda)^{4+2n}}{ 4^n2(1+2x)^4}\int_R^\infty \frac{dy}{y^2}\log(y)^n
 \sim \, \frac{(-\lambda)^{4+2n}}{ 4^n2(1+2x)^4}\cdot n!.
\end{align*}
\hspace*{20ex}
\def\svgwidth{0.5\textwidth}
\begingroup%
  \makeatletter%
  \providecommand\color[2][]{%
    \errmessage{(Inkscape) Color is used for the text in Inkscape, but the package 'color.sty' is not loaded}%
    \renewcommand\color[2][]{}%
  }%
  \providecommand\transparent[1]{%
    \errmessage{(Inkscape) Transparency is used (non-zero) for the text in Inkscape, but the package 'transparent.sty' is not loaded}%
    \renewcommand\transparent[1]{}%
  }%
  \providecommand\rotatebox[2]{#2}%
  \ifx\svgwidth\undefined%
    \setlength{\unitlength}{128.70155205bp}%
    \ifx\svgscale\undefined%
      \relax%
    \else%
      \setlength{\unitlength}{\unitlength * \real{\svgscale}}%
    \fi%
  \else%
    \setlength{\unitlength}{\svgwidth}%
  \fi%
  \global\let\svgwidth\undefined%
  \global\let\svgscale\undefined%
  \makeatother%
  \begin{picture}(1,0.52591254)%
    \put(0,0){\includegraphics[width=\unitlength,page=1]{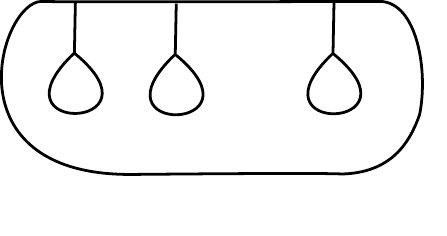}}%
    \put(0.51057869,0.42290544){\color[rgb]{0,0,0}\makebox(0,0)[lb]{\smash{...}}}%
    \put(0.14914983,0.30746659){\color[rgb]{0,0,0}\makebox(0,0)[lb]{\smash{$y_1$}}}%
    \put(0.38374869,0.31412124){\color[rgb]{0,0,0}\makebox(0,0)[lb]{\smash{$y_2$}}}%
    \put(0.72806818,0.30746113){\color[rgb]{0,0,0}\makebox(0,0)[lb]{\smash{$y_n$}}}%
    \put(0,0){\includegraphics[width=\unitlength,page=2]{Renormalon.pdf}}%
    \put(0.07001659,0.10322355){\color[rgb]{0,0,0}\makebox(0,0)[lb]{\smash{$x$}}}%
    \put(0.39285178,0.1831426){\color[rgb]{0,0,0}\makebox(0,0)[lb]{\smash{$y$}}}%
    \put(0.28600674,0.04802122){\color[rgb]{0,0,0}\makebox(0,0)[lb]{\smash{$x$}}}%
    \put(0.46138485,0.05024103){\color[rgb]{0,0,0}\makebox(0,0)[lb]{\smash{$x$}}}%
    \put(0.63010296,0.05468104){\color[rgb]{0,0,0}\makebox(0,0)[lb]{\smash{$x$}}}%
  \end{picture}%
\endgroup%
\\
\end{rmk}

\begin{rmk}
To show that the cubic matrix field theory model is in fact a QFT model, 
one has to check that the connected Schwinger functions are reflection positive.
For the connected 2-point Schwinger function
reflection positivity is by the work \cite{widder} and formula \eqref{eq:SchwingerRep} equivalent
to say that the diagonal 2-point function
$G^{(0)}(x,x)=2\tilde{W}((1+2x)^2)$
is a Stieltjes function. It was proved in \cite{Grosse:2016qmk} that the 2-point function on the
$D=2$ Moyal space is \textit{not} a Stieltjes 
function and therefore not reflection positive. However, the function $G^{(0)}(x,x)$ is  for $D=4$ and $D=6$ with $\lambda\in \R$
a Stieltjes function  \cite{Grosse:2016qmk}. The 2-point function is special since 
reflection positivity is equivalent to the Stieltjes property which does not hold in general.
Checking whether higher Schwinger functions are reflection positive is much harder and work in progress.
\end{rmk}

\subsection[Analytic Continuation]{Analytic Continuation}
The crucial step in deriving all solutions
is to continue analytically the correlation functions 
with the ''right'' variable transformation. 
The result of \sref{Proposition}{Prop:1Punkt} provides a natural choice by
\begin{align*}
 z(x)=:\sqrt{X+c}=\sqrt{(1+2e(x))^2+c}.
\end{align*}
Defining next the $(1+..+1)$-point function of 
genus $g$ with $b$ boundary components from definition \eqref{eq:Gcont} by
\begin{align*}
 \G_g\big(z_1(x^1),..,z_b(x^b)\big):&=G^g(x^1|..|x^b)\qquad \text{for}\qquad (g,b)\neq (0,1)
 \end{align*}
 and the planar 1-point function, which is already known by \sref{Proposition}{Prop:1Punkt}, by
 \begin{align}\nonumber
 \G_0(z(x)):&=\frac{ \tilde{W}(X(x))}{2\lambda}=\frac{ W^0(x)}{2\lambda}=G^0(x)
 +\frac{1+2e(x)}{2\lambda}\\\label{eq:Cubic1PComplex}
 &=\frac{1}{2\lambda}\bigg(\frac{z}{\sqrt{Z}}-\lambda\nu +\frac{1}{2}
  \int_{\sqrt{1+c}}^{\sqrt{(1+2e(\Lambda^2))^2+c}}\!\!\!\!\!\!\!\!dt\frac{\tilde{\varrho}(t)}{(z+t)t}\bigg),\\\nonumber
  &\text{with}\qquad 
  \tilde{\varrho}(t):=2t\,\varrho\!\left(t^2-c\right)=\frac{4\lambda^2 t 
  \cdot r\!\left(e^{-1}(\frac{\sqrt{t^2-c}-1}{2})\right)}{\sqrt{t^2-c} \cdot
e'\!\!\left(e^{-1}(\frac{\sqrt{t^2-c}-1}{2})\right)}.
\end{align}
With this definition, the 1-point function extends uniquely to a sectional holomorphic function
with branch cut along $]-\sqrt{(1+2e(\Lambda^2))^2+c},- \sqrt{1+c}]$.
Since we have $-1<c\leq 0$ for $|\lambda|<\lambda_c$ for all $\D<8$, the 1-point function is holomorphic in a small region around $z=0$.
Analogously, all SDEs are recursively complexified. The 
analyticity domains of the corresponding correlation functions will be discussed later.

For a more convenient reading, we define:
\begin{dfnt}\label{defint}
Let $\hat{K}_z$ be the integral operator,
\begin{align*}
\hat{K}_z f(z):=&2\lambda \G_0(z)f(z)
+\lambda\nu f(z)
+\frac{1}{2}\int_{\sqrt{1+c}}^{\sqrt{(1+2e(\Lambda^2))^2+c}}\!\!\!\!\!\!\!\!\!\!\!\! dt\,
\tilde{\varrho}(t)\frac{f(z)-f(t)}{z^2-t^2}\\
=&\lambda z f(z)\bigg(\G_0(z)-\G_0(-z) \bigg)-
\frac{1}{2}\int_{\sqrt{1+c}}^{\sqrt{(1+2e(\Lambda^2))^2+c}}\!\!\!\!\!\!\!\!\!\!\!\!\!\!\!\! dt\,
\tilde{\varrho}(t)\frac{f(t)}{z^2-t^2}.
\end{align*}
\end{dfnt}
\noindent Expressing \eqref{eq:CubicCont1P}
and  \eqref{eq:CubicContNP} for all $(g,b)\neq (0,1)$ 
with the variable $z$ through the integral operator $\hat{K}_z$ leads to
\begin{align}\label{eq:CubicSchwingerComplex}
 &\hat{K}_{z_1}\G_g\big(z_1,J\big)
 +\lambda
 \G_{g-1}(z_1,z_1,J)+\lambda \sum_{\substack{I\uplus I'=J\\
  h+h'=g}}^\prime \G_{h}(z_1,I)\G_{h'}(z_1,I')\\\nonumber
  &+(2\lambda)^3\sum_{\zeta\in J}
 \frac{\partial}{\zeta \partial \zeta }\frac{\G_g(z_1,J\backslash\{\zeta\})-\G_g(\zeta,
 {J\backslash\{\zeta\}})}{z_1^2-\zeta^2} =0,
\end{align}
where $J=\{z_2,..,z_b\}$ and $\G_g(I):=\G_g(\zeta_1,..,\zeta_{p})$ for the set $I=\{\zeta_1,..,\zeta_{p}\}$.
The sum $\sum^\prime$ excludes $(h,I)=(0,\emptyset)$ and $(h,I)=(g,J)$.
The second line is achieved from the $(3+1+..+1)$-point function through the 
recursive equation of \sref{Proposition}{Prop:CubisRecur} which produces a derivative.

We mention that the dependence on the spectral dimension contributes in $\hat{K}_{z}$ by the renormalisation 
constant $Z$ for $\D\geq6$ and in the constant $c$. The entire structure of the SDEs 
is independent of the spectral dimension.

For later purpose, we introduce the renormalised  \textit{Kontsevich times} which play a distinguished r\^ole:
\begin{dfnt}\label{KontTime}
Let the Kontsevich times $\{\varrho_k\}_{k\in\N}$ be defined by
\begin{align*}
\varrho_k:=\frac{\delta_{k,0}}{\sqrt{Z}}-
\frac{1}{2}\int_{\sqrt{1+c}}^{\sqrt{(1+2e(\Lambda^2))^2+c}} \frac{dt\,
\tilde{\varrho}(t)}{t^{3+2k}}
\end{align*}
with converging limit $\Lambda^2\to\infty$.
\end{dfnt}\noindent
The Kontsevich times are related to the integral operator by:
\begin{lemma}\label{lemmaopK}
The operator $\hat{K}_z$ defined in \sref{Definition}{defint} satisfies
\begin{align*}
\hat{K}_z\Big(\frac{1}{z}\Big) &=\frac{1}{\sqrt{Z}} , & 
\hat{K}_z\Big(\frac{1}{z^{3+2n}}\Big) &= 
\sum_{k=0}^{n}\frac{\varrho_k}{z^{2n+2-2k}}.
\end{align*}
\begin{proof}
 Let $a=\sqrt{1+c}$ and $b=\sqrt{(1+2e(\Lambda^2))^2+c}$. The definition of the linear operator implies
 \begin{align*}
  \hat{K}_z f(z)=&\frac{zf(z)}{\sqrt{Z}}+
\frac{1}{2}\int_{a}^{b} dt\,
\tilde{\varrho}(t)\frac{zf(z)-tf(t)}{t(z^2-t^2)},
 \end{align*}
where the case $f(z)=\frac{1}{z}$ is directly clear. The identity
\begin{align*}
 \frac{\frac{1}{z^k}-\frac{1}{t^k}}{(z-t)}=-\sum_{l=0}^{k-1}\frac{z^lt^{k-1-l}}{z^kt^k}
\end{align*}
gives for $f(z)=\frac{1}{z^{3+2n}}$
\begin{align*}
 \hat{K}_z \!\left(\frac{1}{z^{3+2n}}\right)=&\frac{1}{z^{2+2n}\sqrt{Z}}+
\frac{1}{2}\int_{a}^{b} dt\,
\tilde{\varrho}(t)\frac{\frac{1}{z^{2+2n}}-\frac{1}{t^{2+2n}}}{t(z+t)(z-t)}\\
=&\frac{1}{z^{2+2n}\sqrt{Z}}-\frac{1}{2}
\int_{a}^{b} dt\,
\tilde{\varrho}(t)\sum_{l=1}^{2+2n}\frac{z^l t^{2+2n-l}}{z^{3+2n}t^{3+2n}(z+t)}\\
=&\frac{1}{z^{2+2n}\sqrt{Z}}-\sum_{l=0}^{n}\frac{1}{z^{2+2n-2l}}\frac{1}{2}
\int_{a}^{b} \frac{dt\,
\tilde{\varrho}(t)}{t^{3+2l}},
\end{align*}
which is the definition of the Kontsevich times.
\end{proof}
\end{lemma}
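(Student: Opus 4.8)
The plan is to reduce $\hat{K}_z$ to a single-integral form and then evaluate it on the two monomials directly. First I would insert the explicit expression \eqref{eq:Cubic1PComplex} for the planar one-point function, namely $2\lambda\G_0(z)=z/\sqrt{Z}-\lambda\nu+\frac12\int_{a}^{b}\tilde\varrho(t)\,dt/((z+t)t)$ with $a=\sqrt{1+c}$ and $b=\sqrt{(1+2e(\Lambda^2))^2+c}$, into the defining formula of \sref{Definition}{defint}. The combination $2\lambda\G_0(z)f(z)+\lambda\nu f(z)$ then collapses to $zf(z)/\sqrt{Z}+\frac12 f(z)\int_a^b\tilde\varrho(t)\,dt/((z+t)t)$; adding the remaining kernel term and using the elementary partial-fraction identity
\[
 \frac{f(z)}{t(z+t)}+\frac{f(z)-f(t)}{(z-t)(z+t)}=\frac{zf(z)-tf(t)}{t(z^2-t^2)}
\]
brings $\hat{K}_z$ into the compact working form
\[
 \hat{K}_z f(z)=\frac{zf(z)}{\sqrt{Z}}+\frac12\int_{a}^{b}dt\,\tilde\varrho(t)\,\frac{zf(z)-tf(t)}{t(z^2-t^2)} .
\]
From this point the computation is purely algebraic, and I do not expect any real obstacle.

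For $f(z)=1/z$ one has $zf(z)=tf(t)=1$, so the integrand vanishes identically and $\hat{K}_z(1/z)=1/\sqrt{Z}$, which is the first identity.

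For $f(z)=1/z^{3+2n}$ I would substitute $zf(z)=z^{-(2n+2)}$ and $tf(t)=t^{-(2n+2)}$ and expand the difference quotient with the geometric-series identity $\tfrac{z^{-k}-t^{-k}}{z-t}=-\tfrac{1}{z^{k}t^{k}}\sum_{l=0}^{k-1}z^{l}t^{k-1-l}$ for $k=2n+2$. After dividing by the extra factor $z+t$ I would group consecutive summands via $\tfrac{z^{l}t^{m+1}+z^{l+1}t^{m}}{z+t}=z^{l}t^{m}$, which telescopes the double-indexed sum down to $\sum_{j=0}^{n}z^{2j}t^{2n-2j}$; organising the sum so that the factor $1/(z+t)$ cancels cleanly is the only step that needs a little care. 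Re-inserting this gives
\[
 \hat{K}_z\!\Big(\tfrac{1}{z^{3+2n}}\Big)=\frac{1}{z^{2n+2}\sqrt{Z}}-\sum_{j=0}^{n}\frac{1}{z^{2n+2-2j}}\cdot\frac12\int_{a}^{b}\frac{\tilde\varrho(t)\,dt}{t^{2j+3}} .
\]
Comparing with \sref{Definition}{KontTime}, the $j=0$ term absorbs $1/(z^{2n+2}\sqrt{Z})$ into $\varrho_0/z^{2n+2}$ and each $j\ge 1$ term equals $\varrho_j/z^{2n+2-2j}$, so the right-hand side is $\sum_{k=0}^{n}\varrho_k/z^{2n+2-2k}$, as claimed. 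The only analytic input is the convergence of the $\Lambda^2\to\infty$ limit of the integrals defining the Kontsevich times, which is already built into \sref{Definition}{KontTime}.
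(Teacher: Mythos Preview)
Your proof is correct and follows essentially the same route as the paper: both reduce $\hat{K}_z$ to the single-integral working form $\hat{K}_z f(z)=\tfrac{zf(z)}{\sqrt{Z}}+\tfrac12\int_a^b dt\,\tilde\varrho(t)\,\tfrac{zf(z)-tf(t)}{t(z^2-t^2)}$ and then use the geometric-series expansion of the difference quotient. You actually spell out two steps the paper leaves implicit---the partial-fraction derivation of the working form and the pairing $z^{l}t^{m+1}+z^{l+1}t^{m}=(z+t)z^{l}t^{m}$ that eliminates the factor $z+t$---so your write-up is, if anything, a bit more transparent.
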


\subsection{Solution for $\chi=2-2g-b\geq-1$}
As mentioned before, the 1-point function plays a special r\^ole
since it obeys a nonlinear equation. Also the $(1+1)$-point
function is different from the others. The correlation functions
are embedded into Riemann surfaces which are topologically clearly distinct
into the cases $\chi=1$, $\chi=0$ and $\chi<0$. The automorphism group for $\chi\geq0$ (unstable) Riemann
surfaces is infinite, whereas stable surfaces ($\chi<0$) have a finite group of automorphisms.
These distinctions are inherited to the pole structure of the correlation functions. 

Recall from \eqref{eq:CubicSchwingerComplex} that the $(1+1)$-point function obeys the integral equation
\begin{align}\label{eq:1+1complex}
 \hat{K}_{z_1}\G_0\big(z_1,z_2\big)
   =-(2\lambda)^3
 \frac{\partial}{z_2 \partial z_2 }\frac{\G_0(z_1)-\G_0(z_2)}{z_1^2-z_2^2}
\end{align}

\begin{prps}\label{Prop:Cubic1+1}
For $\chi=0$,
 the $(1+1)$-point function $(b=2)$ of genus $g=0$, which solves \eqref{eq:1+1complex}, is given by
 \begin{align*}
  \G_0(z_1,z_2)=\frac{(2\lambda)^2}{z_1z_2(z_1+z_2)^2}.
 \end{align*}
\begin{proof}
 Let $a=\sqrt{1+c}$ and $b=\sqrt{(1+2e(\Lambda^2))^2+c}$.
 The lhs of \eqref{eq:1+1complex} gives
 \begin{align*}
  \hat{K}_{z_1}\G_0\big(z_1,z_2\big)=&\frac{4\lambda^2}{z_2(z_1+z_2)^2\sqrt{Z}}+
\frac{4\lambda^2}{2}\int_{a}^{b} dt\,
\tilde{\varrho}(t)\frac{\frac{1}{z_2(z_1+z_2)^2}-\frac{1}{z_2(t+z_2)^2}}{t(z_1+t)(z_1-t)}\\
=&\frac{4\lambda^2}{z_2(z_1+z_2)^2\sqrt{Z}}-
\frac{4\lambda^2}{2z_2}\int_{a}^{b} dt\,
\tilde{\varrho}(t)\frac{(z_1+z_2)+(t+z_2)}{t(z_1+t)(z_1+z_2)^2(t+z_2)^2}\\
=&-\frac{4\lambda^2}{z_2}\frac{\partial}{\partial z_2}\bigg(\frac{1}{(z_1+z_2)\sqrt{Z}}
-\frac{1}{2}\int_{a}^{b} \frac{dt\,
\tilde{\varrho}(t)}{t(z_1+t)(z_1+z_2)(t+z_2)}\bigg)
 \end{align*}
which coincides with the rhs.
\end{proof}
\end{prps}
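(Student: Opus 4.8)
The plan is to verify by direct substitution that the proposed rational function satisfies \eqref{eq:1+1complex}, which is a linear integral equation for $\G_0(\cdot,z_2)$ with a fixed inhomogeneity; uniqueness then follows from injectivity of $\hat{K}_{z_1}$. The closed form is not a blind guess: in the $\D=0$ Kontsevich model the spectral curve carries the involution $z\mapsto-z$, so the connected genus-zero two-point function is the Bergman-kernel term minus its image, $\frac{1}{(z_1-z_2)^2}-\frac{4z_1z_2}{(z_1^2-z_2^2)^2}=\frac{1}{(z_1+z_2)^2}$, and passing from $W$ to $G$ contributes the extra Jacobians $\frac{1}{z_1z_2}$; since the whole structure of \sref{Definition}{defint} and of \eqref{eq:1+1complex} is independent of the spectral dimension, the same $\frac{(2\lambda)^2}{z_1z_2(z_1+z_2)^2}$ should persist, so only the check remains.

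First I would simplify the right-hand side of \eqref{eq:1+1complex} using the explicit planar one-point function \eqref{eq:Cubic1PComplex}. Writing $a=\sqrt{1+c}$, $b=\sqrt{(1+2e(\Lambda^2))^2+c}$, one has $\G_0(z_1)-\G_0(z_2)=\frac{z_1-z_2}{2\lambda\sqrt{Z}}+\frac{1}{4\lambda}\int_a^b dt\,\tilde{\varrho}(t)\big(\frac{1}{(z_1+t)t}-\frac{1}{(z_2+t)t}\big)$; combining the two fractions produces a factor $z_2-z_1$, so after division by $z_1^2-z_2^2$ the apparent pole on $z_1^2=z_2^2$ cancels and $\frac{\G_0(z_1)-\G_0(z_2)}{z_1^2-z_2^2}=\frac{1}{2\lambda(z_1+z_2)}\big(\frac{1}{\sqrt{Z}}-\frac12\int_a^b\frac{dt\,\tilde{\varrho}(t)}{(z_1+t)(z_2+t)t}\big)$. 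Applying $-(2\lambda)^3\frac{\partial}{z_2\,\partial z_2}$ and pulling the constant out of the $z_2$-derivative rewrites the right-hand side as $-\frac{4\lambda^2}{z_2}\frac{\partial}{\partial z_2}\big(\frac{1}{(z_1+z_2)\sqrt{Z}}-\frac12\int_a^b\frac{dt\,\tilde{\varrho}(t)}{t(z_1+t)(z_1+z_2)(t+z_2)}\big)$.

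Next I would compute the left-hand side. Merging the two integral pieces of \sref{Definition}{defint} with $2\lambda\G_0(z)+\lambda\nu$ taken from \eqref{eq:Cubic1PComplex} gives the compact form $\hat{K}_{z}f(z)=\frac{zf(z)}{\sqrt{Z}}+\frac12\int_a^b dt\,\tilde{\varrho}(t)\frac{zf(z)-tf(t)}{t(z^2-t^2)}$. Inserting $f(z)=\frac{(2\lambda)^2}{z\,z_2(z+z_2)^2}$, so that $zf(z)=\frac{4\lambda^2}{z_2(z+z_2)^2}$, the integrand's numerator becomes $\frac{4\lambda^2}{z_2}\big(\frac{1}{(z_1+z_2)^2}-\frac{1}{(t+z_2)^2}\big)$, and the identity $\frac{1}{(z_1+z_2)^2}-\frac{1}{(t+z_2)^2}=\frac{(t-z_1)\big[(t+z_2)+(z_1+z_2)\big]}{(z_1+z_2)^2(t+z_2)^2}$ cancels the factor $z_1^2-t^2=(z_1-t)(z_1+t)$ in the denominator. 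Using $\frac{\partial}{\partial z_2}\frac{1}{(z_1+z_2)(t+z_2)}=-\frac{(t+z_2)+(z_1+z_2)}{(z_1+z_2)^2(t+z_2)^2}$, what remains is precisely $-\frac{4\lambda^2}{z_2}\frac{\partial}{\partial z_2}$ of the same primitive as on the right, together with the matching non-integral term $\frac{4\lambda^2}{z_2(z_1+z_2)^2\sqrt{Z}}$; hence both sides agree and the ansatz solves \eqref{eq:1+1complex}.

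Finally, since \eqref{eq:1+1complex} is linear in $\G_0(\cdot,z_2)$, uniqueness reduces to injectivity of $\hat{K}_{z_1}$ on the space of sectionally holomorphic functions with the prescribed pole behaviour at $z_1=0$ and decay at infinity, which is part of the general solution scheme for $\chi\le 0$; so the verified solution is the solution. I expect the only genuine obstacle to be the bookkeeping in the second step — organising the partial-fraction manipulation so that the spectral-dimension-dependent $\tilde{\varrho}$-integral on the left is recognised as the $z_2$-derivative of exactly the primitive that appears on the right. Everything else is routine algebra.
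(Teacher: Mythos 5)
Your proposal is correct and follows essentially the same route as the paper: insert the ansatz into the compact form of $\hat{K}_{z_1}$, use the factorisation $\frac{1}{(z_1+z_2)^2}-\frac{1}{(t+z_2)^2}=\frac{(t-z_1)\left[(t+z_2)+(z_1+z_2)\right]}{(z_1+z_2)^2(t+z_2)^2}$ to cancel $(z_1-t)$, and recognise the result as $-\frac{4\lambda^2}{z_2}\frac{\partial}{\partial z_2}$ of the same primitive $\frac{1}{(z_1+z_2)\sqrt{Z}}-\frac{1}{2}\int_a^b\frac{dt\,\tilde{\varrho}(t)}{t(z_1+t)(z_1+z_2)(t+z_2)}$ that the right-hand side produces from \eqref{eq:Cubic1PComplex}. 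The added motivation via the Bergman kernel and the closing remark on uniqueness are extras the paper does not spell out here, but the core verification is identical.
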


\begin{prps}\label{Prop:Cubic1+1+1}
For $\chi=2-2g-b=-1$, the 
 $(1+1+1)$-point function $(b=3)$ of genus $g=0$ and the 1-point function $(b=1)$ of genus $g=1$ are given by
 \begin{align*}
  \G_0(z_1,z_2,z_3)=-\frac{32\lambda^5}{\varrho_0\, z_1^3z_2^3z_3^3},\qquad 
  \G_1(z)=\frac{\lambda^3\varrho_1}{\varrho_0^2z^3}-\frac{\lambda^3}{\varrho_0\, z^5}.
 \end{align*}
\begin{proof}
 The $(1+1+1)$-point function obeys \eqref{eq:CubicSchwingerComplex}
 \begin{align*}
  &\hat{K}_{z_1}\G_0\big(z_1,z_2,z_3\big)
 +\lambda 2 \G_{0}(z_1,z_2)\G_{0}(z_1,z_3)\\
  &=-(2\lambda)^3\bigg(
 \frac{\partial}{z_2 \partial z_2 }\frac{\G_0(z_1,z_3)-\G_0(z_2,
 z_3)}{z_1^2-z_2^2} +\frac{\partial}{z_3 \partial z_3 }\frac{\G_0(z_1,z_2)-\G_0(z_3,
 z_2)}{z_1^2-z_3^2}\bigg).
 \end{align*}
The ansatz $\G_0\big(z_1,z_2,z_3\big)=\frac{\gamma}{z_1^3z_2^3z_3^3}$ leads with \sref{Lemma}{lemmaopK}
to
\begin{align*}
 \hat{K}_{z_1}\G_0\big(z_1,z_2,z_3\big)=\frac{\gamma \varrho_0}{z_1^2z_2^3z_3^3}.
\end{align*}
Inserting \sref{Proposition}{Prop:Cubic1+1} gives, after performing the derivatives and simplifying,
\begin{align*}
 &(2\lambda)^3\bigg(
 \frac{\partial}{z_2 \partial z_2 }\frac{\G_0(z_1,z_3)-\G_0(z_2,
 z_3)}{z_1^2-z_2^2} +\frac{\partial}{z_3 \partial z_3 }\frac{\G_0(z_1,z_2)-\G_0(z_3,
 z_2)}{z_1^2-z_3^2}\bigg)\\
 &+2\lambda  \G_{0}(z_1,z_2)\G_{0}(z_1,z_3)\\
 =&\frac{32\lambda^5}{z_1^2z_2^3z_3^3}
\end{align*}
which gives by comparing the coefficient $\gamma=-\frac{32\lambda^5}{\varrho_0}$.
\\
The $1$-point function of genus $g=1$ obeys \eqref{eq:CubicSchwingerComplex}
 \begin{align*}
  &\hat{K}_{z_1}\G_1\big(z\big)
 + \lambda \G_{0}(z,z)=0.
 \end{align*}
 The ansatz $\G_1\big(z\big)=\frac{\alpha}{z^3}+\frac{\beta}{z^5}$ gives with \sref{Lemma}{lemmaopK}
 $\hat{K}_{z_1}\G_1\big(z\big)=\frac{\alpha \varrho_0}{z^2}+\frac{\beta \varrho_0}{z^4}+\frac{\beta \varrho_1}{z^2}$.
 Comparing the coefficients with $\G_{0}(z,z)=\frac{\lambda^2}{z^4}$ yields $\beta=-\frac{\lambda^3}{\varrho_0}$
 and $\alpha=-\frac{\beta\varrho_1}{\varrho_0}=\frac{\lambda^3\varrho_1}{\varrho_0^2}$.
\end{proof}
\end{prps}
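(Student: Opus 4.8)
The plan is to treat the two correlation functions separately; in each case one specialises the Schwinger--Dyson equation \eqref{eq:CubicSchwingerComplex} to the relevant $(g,b)$, substitutes the rational functions already supplied by \sref{Proposition}{Prop:Cubic1+1}, and fixes a short rational ansatz by comparing coefficients with the help of \sref{Lemma}{lemmaopK}.

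For the $(1+1+1)$-point function I would take $(g,b)=(0,3)$ in \eqref{eq:CubicSchwingerComplex} with $J=\{z_2,z_3\}$: the term $\lambda\G_{g-1}$ is absent, and the restricted sum over splittings of $J$ collapses to $2\lambda\,\G_0(z_1,z_2)\G_0(z_1,z_3)$, so the equation reads
\begin{align*}
 &\hat{K}_{z_1}\G_0(z_1,z_2,z_3)+2\lambda\,\G_0(z_1,z_2)\G_0(z_1,z_3)\\
 &\qquad=-(2\lambda)^3\left(\frac{\partial}{z_2\partial z_2}\frac{\G_0(z_1,z_3)-\G_0(z_2,z_3)}{z_1^2-z_2^2}
 +\frac{\partial}{z_3\partial z_3}\frac{\G_0(z_1,z_2)-\G_0(z_3,z_2)}{z_1^2-z_3^2}\right).
\end{align*}
By \sref{Proposition}{Prop:Cubic1+1} every term but the first is an explicit rational function of $z_1,z_2,z_3$. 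I would then substitute the ansatz $\G_0(z_1,z_2,z_3)=\gamma/(z_1^3z_2^3z_3^3)$, motivated by the full symmetry established in \sref{Proposition}{Prop:CubisExpl} and by the expectation, inherited from topological recursion, that stable correlation functions have poles only at the branch point $z=0$. Since $\hat{K}_{z_1}(z_1^{-3})=\varrho_0 z_1^{-2}$ by \sref{Lemma}{lemmaopK}, the left-hand side equals $\gamma\varrho_0/(z_1^2z_2^3z_3^3)$, and comparison with the simplified right-hand side forces $\gamma=-32\lambda^5/\varrho_0$.

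For the genus-one $1$-point function I would take $(g,b)=(1,1)$, i.e.\ $J=\emptyset$: the derivative sum is empty and the restricted sum has no admissible terms, so \eqref{eq:CubicSchwingerComplex} reduces to $\hat{K}_z\G_1(z)+\lambda\,\G_0(z,z)=0$. Setting $z_1=z_2=z$ in \sref{Proposition}{Prop:Cubic1+1} gives $\G_0(z,z)=\lambda^2z^{-4}$, and because $\hat{K}_z$ maps $\{z^{-3},z^{-5}\}$ onto a triangular combination of $\{z^{-2},z^{-4}\}$ (again \sref{Lemma}{lemmaopK}, with $n=0,1$), the ansatz $\G_1(z)=\alpha z^{-3}+\beta z^{-5}$ is forced. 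Matching the $z^{-4}$ coefficient gives $\beta=-\lambda^3/\varrho_0$, and then the $z^{-2}$ coefficient gives $\alpha=-\beta\varrho_1/\varrho_0=\lambda^3\varrho_1/\varrho_0^2$, which are the stated values ($\varrho_0,\varrho_1$ being the Kontsevich times of \sref{Definition}{KontTime}).

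The main obstacle is the $(1+1+1)$ case: one must check that the full inhomogeneity --- the product $2\lambda\,\G_0(z_1,z_2)\G_0(z_1,z_3)$ together with the two derivative terms --- collapses, as a rational function, to the single expression $32\lambda^5/(z_1^2z_2^3z_3^3)$. A priori several summands feed double poles at $z_1=-z_2$ and $z_1=-z_3$ (the $z_2$-derivative turns the simple pole of $1/(z_1^2-z_2^2)$ into a double one, and $\G_0(z_1,z_i)$ itself has a double pole at $z_1=-z_i$), and the derivative terms add subleading poles at $z_1=0$. The heart of the argument is that the differences $\G_0(z_1,\cdot)-\G_0(z_i,\cdot)$ kill the would-be poles at $z_1=z_i$, that all residues at $z_1=-z_i$ cancel pairwise, and that at the origin only the double pole of the product term survives --- a genuine partial-fraction computation, not a formality. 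Once this is done, substitution shows the ansatz solves \eqref{eq:CubicSchwingerComplex}, which (by uniqueness of the solution, e.g.\ via the perturbative expansion) identifies $\G_0(z_1,z_2,z_3)$, and the corresponding consistency check for $\G_1(z)$ is immediate.
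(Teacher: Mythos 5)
Your proposal is correct and follows essentially the same route as the paper: the same specialisation of the Schwinger--Dyson equation, the same ans\"atze $\gamma/(z_1^3z_2^3z_3^3)$ and $\alpha z^{-3}+\beta z^{-5}$, the same use of \sref{Lemma}{lemmaopK} to evaluate $\hat{K}_z$ on these powers, and the same coefficient comparison after simplifying the inhomogeneity via \sref{Proposition}{Prop:Cubic1+1}. The explicit rational simplification you flag as the main obstacle is likewise left as a direct computation in the paper, so there is nothing further to reconcile.
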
\noindent
Notice that $\G_0(z_1,z_2)$ is a meromorphic function with poles at $z_1=0=z_2$ and at the diagonal $z_1=-z_2$, whereas
the 1-point function has even a branch cut
at $z\in [-\sqrt{(1+2e(\Lambda^2))^2+c},-\sqrt{1+c}]$. For $\chi=-1$ ,
the solutions of \sref{Proposition}{Prop:Cubic1+1+1} are 
meromorphic functions with a pole only at $z_i=0$. By the recursive hypothesis 
of \eqref{eq:CubicSchwingerComplex}, all further correlation functions with 
$\chi<-1$ may have poles at $z_i=0$, $z_i\pm z_j$ and possibly a 
branch cut at $z_i\in [-\sqrt{(1+2e(\Lambda^2))^2+c},-\sqrt{1+c}]$.
However, we will prove that any correlation function $\G_g(z_1,..,z_b)$ with $\chi<0$
is a meromorphic function with poles
of odd order
at $z_i=0$ for $i\in\{1,..,b\}$.

\subsection[Solution for $b>1$ via Boundary Creation Operator]
{Solution for $b>1$ via Boundary Creation Operator\footnote{Parts of this subsection are taken from our
paper \cite{Grosse:2019nes}}}
The goal is to construct an operator that increases the number of boundaries. This operator 
is already known for spectral dimensions $\D<2$ and used in \cite{Makeenko:1991ec,Witten:1991mn}.
Assume for the following consideration that $E_k$ are distinct eigenvalues of multiplicity one. Then,
the boundary
creation operator is defined for $\D=0$ by $T_n:=\frac{\lambda }{2E_n}\frac{\partial}{\partial E_n}$. 
The definition makes sense by the following formal considerations, where the renormalisation
constants are taken trivially $Z-1=\kappa=\nu=\zeta=0$, and the matrix size $\mN$ as well as $V$ are finite,
\begin{align*}
&T_n\log \bigg( \int D\Phi \,e^{-V\Tr\big(E\Phi^2+\frac{\lambda}{3}\Phi^3 \big)}\bigg)\\
=&-\frac{V\lambda}{2\Z[0]}\int D\Phi \,\frac{\sum_n\Phi_{nm}\Phi_{mn}}{E_n}
e^{-V\Tr\big(E\Phi^2+\frac{\lambda}{3}\Phi^3 \big)}\\
=&\frac{1}{2\Z[0]}\int D\Phi \,\frac{1}{E_n}\bigg(\frac{\partial}{\partial \Phi_{nn}}+2E_n \Phi_{nn}\bigg)
e^{-V\Tr\big(E\Phi^2+\frac{\lambda}{3}\Phi^3 \big)}\\
=&\frac{1}{\Z[0]}\int D\Phi \,\Phi_{nn}
e^{-V\Tr\big(E\Phi^2+\frac{\lambda}{3}\Phi^3 \big)}=G_n.
\end{align*}
We go back to the Dirac measure $r(x)=\frac{1}{V}\sum_{n=0}^{\mN'}r_n\delta(x-e_n)$, where $r_k$ are the multiplicities
of the distinct eigenvalues $e_k$ of $E$. Take also the differentiable
function $e(x)$ discussed in \sref{Sec.}{Sec:LargeLimit} into account, then the boundary creation operator is expressed formally
by the functional derivative
\begin{align}\label{eq:Tx}
 T_n\to T(x):= \frac{2\lambda}{1+2e(x)}\frac{\delta}{\delta (1+
 2e(x))}\frac{\delta}{\delta (r(x)dx)},
\end{align}
where the derivative with respect to the measure is formally defined by $\frac{\delta}{\delta r(x) dx}
\int dt\, r(t)\,f(t)=f(x)$.

The choice of the complex variable $z=\sqrt{(1+2e(x))^2+c}$ and the property of \sref{Lemma}{lemmaopK} implies a 
dependence of the solutions on the parameters $\varrho_l$ and $c$, where
\begin{align}\label{eq:cimplicit}
 c=-8\lambda^2\int \frac{dt\,r(t)}{\sqrt{(1+2e(t))^2+c}}, \qquad \text{for}\quad \D<2. 
\end{align}
The formally defined boundary creation operator $T(x)$ gives on $c$ with the Leibniz rule
\begin{align}\nonumber
 T(x)c=&
 \frac{8\lambda^3}{\sqrt{(1+2e(x))^2+c}^3}+
 4\lambda^2\int \frac{dt\,r(t)}{\sqrt{(1+2e(t))^2+c}^3}T(x)c\\
 \Leftrightarrow \quad T(x)c=&
 \frac{8\lambda^3}{\varrho_0 \sqrt{(1+2e(x))^2+c}^3}\label{eq:cderive}
\end{align}
with \sref{Definition}{KontTime} for $\varrho_0$,
and on $\varrho_l$
\begin{align*}
 T(x)\varrho_l=&\frac{8\lambda^3(3+2l)}{\sqrt{(1+2e(x))^2+c}^{5+2l}}+ (3+2l)
 \int \frac{4\lambda^2\,dt\,r(t)}{\sqrt{(1+2e(t))^2+c}^{5+2l}}T(x)c\\
 =&\frac{8\lambda^3(3+2l)}{\sqrt{(1+2e(x))^2+c}^{5+2l}}-\frac{8\lambda^3(3+2l)\varrho_{l+1}}{\varrho_0
 \sqrt{(1+2e(x))^2+c}^3},
\end{align*}
where \eqref{eq:cderive} was used.

To avoid the formally defined functional derivative, the chain rule is applied to achieve
partial derivatives with respect to $c$ and $\varrho_l$. The next step is to switch to the complex variable
$z=\sqrt{(1+2e(x))^2+c}$. We are then able to show that the boundary creation admits a rigorous and universal replacement
for any $\D<8$ by:
\begin{dfnt}\label{DefOp}
Let be $J=\{z_1,\dots,z_p\}$ with $|J|:=p$. 
Then, the \textit{boundary creation} is
\begin{align}
\hat{\mathrm{A}}^{\dag g}_{J,z}
:=
\sum_{l= 0}^{3g-3+|J|} \Big(-\frac{(3+2l) \varrho_{l+1}}{
\varrho_0 z^3}+\frac{3+2l}{z^{5+2l}}\Big)
\frac{\partial}{\partial \varrho_l}
+ \sum_{\zeta\in J}\frac{1}{\varrho_0 z^3 \zeta}
\frac{\partial}{\partial \zeta}.
	\end{align}
\end{dfnt}
\noindent We emphasise that the last variable $z$ in $\hat{\mathrm{A}}^{\dag
  g}_{J,z}$ plays a very different r\^ole compared to all the $z_i\in J$.

\begin{lemma}
The differential operators $\hat{\mathrm{A}}^{\dag g}_{J,z}$ are commutative,
\begin{align*}
\hat{\mathrm{A}}^{\dag g}_{J,z_p,z_q}
\hat{\mathrm{A}}^{\dag g}_{J,z_p}
=\hat{\mathrm{A}}^{\dag g}_{J,z_q,z_p}
\hat{\mathrm{A}}^{\dag g}_{J,z_q}.
\end{align*}
\end{lemma}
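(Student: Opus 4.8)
The plan is to establish the identity by a direct computation, writing each side as a second-order differential operator and matching the purely-second-order part and the first-order part separately. I read $\hat{\mathrm{A}}^{\dag g}_{J,z_p,z_q}$ as $\hat{\mathrm{A}}^{\dag g}_{J\cup\{z_p\},z_q}$ and set $N:=3g-3+|J|$; thus the inner operator $\hat{\mathrm{A}}^{\dag g}_{J,z_p}$ carries the terms $a_l(z_p)\partial/\partial\varrho_l$ for $0\le l\le N$ together with $\frac{1}{\varrho_0 z_p^3}\sum_{\zeta\in J}\frac{1}{\zeta}\partial/\partial\zeta$, where $a_l(z):=\frac{3+2l}{z^{5+2l}}-\frac{(3+2l)\varrho_{l+1}}{\varrho_0 z^3}$, while the outer operator $\hat{\mathrm{A}}^{\dag g}_{J\cup\{z_p\},z_q}$ additionally runs $l$ up to $N+1$ and carries the extra term $\frac{1}{\varrho_0 z_q^3 z_p}\partial/\partial z_p$ coming from $z_p$ now being one of the boundary variables. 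First I would expand $\hat{\mathrm{A}}^{\dag g}_{J\cup\{z_p\},z_q}\big(\hat{\mathrm{A}}^{\dag g}_{J,z_p}f\big)$ by the Leibniz rule, using that $f$ is independent of $z_p$ and $z_q$, so that the $\partial/\partial z_p$-piece of the outer operator acts only on the $z_p$-dependence of the coefficients of the inner operator and therefore produces purely first-order terms.

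For the second-order part one obtains $\sum_{l,l'}a_{l'}(z_q)\,a_l(z_p)\,\frac{\partial^2}{\partial\varrho_{l'}\partial\varrho_l}f$ plus the analogous $\partial_\varrho\partial_\zeta$ and $\partial_\zeta\partial_{\zeta'}$ pieces. The key observation is that the objects on which these operators are meant to act — the correlation functions $\G_g(J)$ of \sref{Proposition}{Prop:CubisExpl} and the associated free energies — depend on the Kontsevich times (\sref{Definition}{KontTime}) only through $\varrho_0,\dots,\varrho_N$, so that every term containing $\partial/\partial\varrho_{N+1}$ annihilates $f$ and may be discarded. After this reduction both double sums run over $l,l'\in\{0,\dots,N\}$, and since the $\partial/\partial\varrho$'s mutually commute a relabelling $l\leftrightarrow l'$ makes $\sum_{l,l'}a_{l'}(z_q)a_l(z_p)\,\partial_{\varrho_{l'}}\partial_{\varrho_l}$ manifestly invariant under $z_p\leftrightarrow z_q$; the mixed and the $\zeta$-only second-order pieces symmetrise in exactly the same way, so the second-order parts of the two sides agree.

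It remains to match the first-order parts, and I expect this last bookkeeping to be the main obstacle. Here I would collect all contributions — $a_{l'}(z_q)\big(\partial_{\varrho_{l'}}a_l(z_p)\big)$, $\frac{1}{\varrho_0 z_q^3\zeta}\big(\partial_\zeta a_l(z_p)\big)$ and $\frac{1}{\varrho_0 z_q^3 z_p}\big(\partial_{z_p}a_l(z_p)\big)$, together with the corresponding ones obtained by letting the outer operator differentiate the coefficient $\frac{1}{\varrho_0 z_p^3\zeta}$ of the inner $\zeta$-sum — and substitute $\partial_{\varrho_0}a_l=\frac{(3+2l)\varrho_{l+1}}{\varrho_0^2 z^3}$, $\partial_{\varrho_{l+1}}a_l=-\frac{3+2l}{\varrho_0 z^3}$ and $\partial_z a_l=-\frac{(3+2l)(5+2l)}{z^{6+2l}}+\frac{3(3+2l)\varrho_{l+1}}{\varrho_0 z^4}$, then verify that every monomial in $1/z_p$, $1/z_q$ and the $\varrho_k$ occurs symmetrically under $z_p\leftrightarrow z_q$; the delicate point is that the $\partial/\partial z_p$-contribution and the $\partial/\partial\varrho_0$-contributions hitting the explicit $1/\varrho_0$ factors must be combined carefully, using the cancellation between $\partial/\partial\varrho_0$ applied to the inner $\zeta$-sum and part of the $\partial/\partial z_p$-term. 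Alternatively, once it is known (as in the boundary-creation theorem below) that $\hat{\mathrm{A}}^{\dag g}_{J,z}$ adjoins a boundary, $\hat{\mathrm{A}}^{\dag g}_{J,z_p}\G_g(J)=\G_g(J,z_p)$, the composite operator applied to $\G_g(J)$ yields $\G_g(J,z_p,z_q)$, which is symmetric in $z_p,z_q$ by the full symmetry of the correlation functions, and since the free energies and correlators span the relevant domain this gives the lemma directly.
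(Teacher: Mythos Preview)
Your direct-computation plan is correct and will work. The paper takes essentially the same route but organises it more efficiently: rather than splitting into second-order symbol and first-order remainder, it simply evaluates both compositions on the coordinate functions $\varrho_k$ and $z_i\in J$ and checks that the resulting expressions are manifestly symmetric under $z_p\leftrightarrow z_q$. Since each $\hat{\mathrm{A}}^{\dag g}$ is a derivation, agreement on coordinates together with the (trivially $p\leftrightarrow q$--symmetric) cross terms $A_p(x_i)B_q(x_j)+B_q(x_i)A_p(x_j)$ determines agreement on all polynomials. The practical gain is that your ``bookkeeping obstacle'' --- the first-order contributions from $\partial_{\varrho_{l'}}a_l$, $\partial_\zeta a_l$ and $\partial_{z_p}a_l$ --- collapses to the single evaluation $\hat{\mathrm{A}}^{\dag g}_{J,z_p,z_q}\big(a_k(z_p)\big)$, with no need to separate and recombine the pieces. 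Your observation that the $\partial/\partial\varrho_{N+1}$--terms must be dropped is correct and worth making explicit: as a literal operator identity on functions that depend on $\varrho_{N+1}$ the statement in fact fails (apply both sides to $\varrho_0\varrho_{N+1}$), so the lemma is to be read on functions of $\varrho_0,\dots,\varrho_N$ and $z_i\in J$ only --- the paper leaves this implicit.

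Your alternative route via the boundary-creation theorem should be dropped: even granting \sref{Theorem}{finaltheorem}, that argument shows only that the two compositions agree on the particular functions $\G_g(J)$, not as operators, and the claim that ``the free energies and correlators span the relevant domain'' is nowhere established; it also inverts the paper's logical order, since the lemma is stated and proved before the theorem.
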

\begin{proof} Being a derivative, it is enough to verify 
$\hat{\mathrm{A}}^{\dag g}_{J,z_p,z_q}
\hat{\mathrm{A}}^{\dag g}_{J,z_p}(\varrho_k)
=\hat{\mathrm{A}}^{\dag g}_{J,z_q,z_p}
\hat{\mathrm{A}}^{\dag g}_{J,z_q}(\varrho_k)$ for any $k$ and 
$\hat{\mathrm{A}}^{\dag g}_{J,z_p,z_q}
\hat{\mathrm{A}}^{\dag g}_{J,z_p}(z_i)
=\hat{\mathrm{A}}^{\dag g}_{J,z_q,z_p}
\hat{\mathrm{A}}^{\dag g}_{J,z_q}(z_i)$ for any $z_i\in J$. 
This is guaranteed by 
\begin{align*}
\hat{\mathrm{A}}^{\dag g}_{J,z_p,z_q}
\hat{\mathrm{A}}^{\dag g}_{J,z_p}(\varrho_k)
&= 
\frac{(3+2k)(5+2k) \varrho_{k+2}}{
\varrho_0^2 z_q^3z_p^3}
-\frac{(3+2k)(5+2k)}{
\varrho_0 z_q^{7+2k} z_p^3}
- \frac{3(3+2k)\varrho_{k+1}\varrho_1}{\varrho_0^2 z_q^3z_p^3}
\\
&
+ \frac{3(3+2k)\varrho_{k+1}}{\varrho_0^2 z_p^5z_p^3}
+ \frac{3(3+2k) \varrho_{k+1}}{\varrho_0^2 z_q^3z_p^5}
-\frac{(3+2k)(5+2k)}{\varrho_0 z_q^3z_p^{7+2k}},
\\
\hat{\mathrm{A}}^{\dag g}_{J,z_p,z_q}
\hat{\mathrm{A}}^{\dag g}_{J,z_p}(z_i)
&= \frac{3\varrho_1}{\varrho_0^2 z_q^3 z_p^3 z_i}
-\frac{3}{\varrho_0^2 z_q^5 z_p^3 z_i}
-\frac{3}{\varrho_0^2 z_q^3z_p^5 z_i}
- \frac{1}{\varrho_0^2 z_q^3 z_p^3 z_i^3}.
\end{align*}
\end{proof}
\noindent
This shows that boundary components labelled by $z_i$ behave 
like bosonic particles at position $z_i$. The creation operator 
$(2\lambda)^3 \hat{\mathrm{A}}^{\dag g}_{J,z}$ adds to a $|J|$-particle state 
another particle at position $z$. The $|J|$-particle state is 
precisely given by $\G_g(J)$: 
\begin{thrm}\label{finaltheorem}
Assume that 
$\G_g(z)$ is, for $g\geq 1$, an odd function of $z\neq 0$ and a
rational function of 
$\varrho_0,\dots,\varrho_{3g-2}$ (true for $g=1$).
Then the  $(1+1+...+1)$-point function of genus $g\geq 1$ and $b$ 
boundary components of the renormalised $\Phi^3_\D$ matrix field theory
model for spectral dimension $\D<8$ has the solution
\begin{align}
\G_g(z_1,...,z_b)=(2\lambda)^{3b-3}\hat{\mathrm{A}}^{\dag g}_{z_1,\dots,z_b}
\big(\hat{\mathrm{A}}^{\dag g}_{z_1,\dots,z_{b-1}}
\big(\cdots \hat{\mathrm{A}}^{\dag g}_{z_1,z_2}
\G_g(z_1)...\big)\big),\qquad
z_i\neq 0,
\label{eqfinalthm}
\end{align}
where $\G_g(z_1)$ is the 1-point function of genus $g\geq 1$ and 
the boundary creation operator $\hat{\mathrm{A}}^{\dag g}_{J}$ is defined in \sref{Definition}{DefOp}.
For $g=0$ the boundary creation operators act on the $(1+1)$-point function 
\begin{align*}
\G_0(z_1,...,z_b)=(2\lambda)^{3b-6}  \hat{\mathrm{A}}^{\dag 0}_{z_1,\dots,z_B}
\big( \hat{\mathrm{A}}^{\dag 0}_{z_1,\dots,z_{B-1}}
\big(\cdots  \hat{\mathrm{A}}^{\dag 0}_{z_1,z_2,z_3} 
\G_0(z_1|z_2)...\big)\big).
\end{align*} 
\end{thrm}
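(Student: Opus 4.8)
The plan is to prove the theorem by induction on the number $b$ of boundary components at fixed genus $g$, with base cases the genus-$g\ge1$ one-point function $\G_g(z)$ of \sref{Proposition}{Prop:Cubic1+1+1} (for the $g\ge1$ statement) and the planar two-point function $\G_0(z_1,z_2)$ of \sref{Proposition}{Prop:Cubic1+1} (for the $g=0$ statement, starting at $b=2$). The single inductive step is to show that one application of $(2\lambda)^3\hat{\mathrm{A}}^{\dag g}_{J,z}$ turns the $(1+\dots+1)$-point function with $b-1$ boundaries into the one with $b$ boundaries. I would do this in two layers: a conceptual identification of $(2\lambda)^3\hat{\mathrm{A}}^{\dag g}_{J,z}$ with the rigorous, $z$-variable version of the formal functional derivative $T(x)$ of \eqref{eq:Tx} --- whose defining property is that it inserts one extra boundary of length one into any correlator generated by $\log\frac{\Z[J]}{\Z[0]}$ --- and a rigorous verification, via the Schwinger--Dyson equation \eqref{eq:CubicSchwingerComplex} and a uniqueness argument, that the candidate function so produced really is $\G_g(z_1,\dots,z_b)$.

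For the first layer I would use that every stable correlator (those with $2-2g-b<0$, and the $(g,b)=(0,2)$ one after the rescaling in \sref{Proposition}{Prop:Cubic1+1}) depends on the spectral data $e(\cdot),r(\cdot)$ only through the Kontsevich times $\varrho_l$ of \sref{Definition}{KontTime} and through the analytic variables $z_i=\sqrt{(1+2e(x_i))^2+c}$, with no residual explicit $c$-dependence. The chain rule together with the derivatives already recorded in the text, $T(x)c=\frac{8\lambda^3}{\varrho_0\,z(x)^3}$ (see \eqref{eq:cderive}), $T(x)\varrho_l=\frac{8\lambda^3(3+2l)}{z(x)^{5+2l}}-\frac{8\lambda^3(3+2l)\varrho_{l+1}}{\varrho_0\,z(x)^3}$, and $T(x)z_i=\frac{T(x)c}{2z_i}$, then shows that $T(x)$ acts on such functions exactly as $(2\lambda)^3\hat{\mathrm{A}}^{\dag g}_{J,z(x)}$; the truncation of the $\varrho_l$-sum at $l\le 3g-3+|J|$ in \sref{Definition}{DefOp} records precisely the inductive hypothesis that $\G_g(J)$ is rational in $\varrho_0,\dots,\varrho_{3g-3+|J|}$ alone (true for the base cases).

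For the second layer I would apply $T(x)$ to \eqref{eq:CubicSchwingerComplex} written for $\G_g(z_1,J)$ and sort the result by the Leibniz rule: since $\hat{K}_{z_1}$ depends on the external data through $\G_0$ and $\tilde\varrho$, one obtains $\hat{K}_{z_1}\big(T(x)\G_g(z_1,J)\big)+[T(x),\hat{K}_{z_1}]\G_g(z_1,J)+T(x)(\text{inhomogeneity})=0$, and the point is that after dividing by $(2\lambda)^3$ and renaming $z(x)$ as the new boundary variable, the commutator term supplies exactly the new derivative source $(2\lambda)^3\frac{\partial}{z\,\partial z}\frac{\G_g(z_1,J)-\G_g(z,J)}{z_1^2-z^2}$, while $T(x)$ acting on the products $\G_h\G_{h'}$ and on $\G_{g-1}(z_1,z_1,J)$ produces all remaining new terms of the $b$-boundary equation. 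Inverting $\hat{K}_{z_1}$ by \sref{Lemma}{lemmaopK}, which is injective on Laurent tails in odd negative powers of $z_1$, then both forces the candidate to equal $\G_g(z_1,\dots,z_b)$ and propagates the hypothesis (oddness in the new variable, rationality in $\varrho_0,\dots,\varrho_{3g-3+b}$). A useful sanity check is that $(2\lambda)^3\hat{\mathrm{A}}^{\dag0}_{\{z_1,z_2\},z_3}\G_0(z_1,z_2)$ collapses to $-\frac{32\lambda^5}{\varrho_0 z_1^3 z_2^3 z_3^3}$, reproducing $\G_0(z_1,z_2,z_3)$ of \sref{Proposition}{Prop:Cubic1+1+1} and showing the diagonal pole $z_1=-z_2$ of the seed is killed.

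The hard part will be the commutator $[T(x),\hat{K}_{z_1}]$ and the claim that it assembles \emph{exactly} into the inhomogeneity of the higher equation: under $T(x)$, a derivative with respect to the measure $r(t)\,dt$, the integral term $\int\tilde\varrho(t)\frac{f(z)-f(t)}{z^2-t^2}\,dt$ acquires a distributional contribution localising $t$ at $z(x)$, which must be reconciled with the purely meromorphic $z$-picture. Equivalently, in the purely algebraic route one must show $(2\lambda)^3\hat{\mathrm{A}}^{\dag g}_{J,z}\G_g(J)$ has no spurious poles --- in particular that the diagonal poles $z_1=-z_j$ carried by the $(0,2)$ seed (and any threatening to reappear) cancel, leaving only odd-order poles at $z_i=0$ --- which is exactly what the coefficients in \sref{Definition}{DefOp} are designed to achieve. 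Legitimising the formal functional-derivative steps is a secondary concern, best handled by taking the algebraic verification of \eqref{eq:CubicSchwingerComplex} plus uniqueness as the actual proof and the $T(x)$ picture only as motivation; this is presumably why the complete argument in \sref{App.}{appendixC} is spread over several lemmata.
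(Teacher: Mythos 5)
Your proposal is correct and follows essentially the same route as the paper: treat \eqref{eqfinalthm} as the definition of a candidate, verify by direct algebra that it satisfies \eqref{eq:CubicSchwingerComplex} — with the commutator $[\hat{K}_{z_1},\hat{\mathrm{A}}^{\dag g}_{J,z_b}]$ (the paper's \sref{Lemma}{lemma8}) cancelling against the new product and difference-quotient terms (\sref{Lemma}{lemma6}) exactly as you anticipate in your ``second layer'' — and conclude by uniqueness of the perturbative expansion, relegating the $T(x)$ functional-derivative picture to motivation. You also correctly locate the hard part (the exact assembly of the commutator and the absence of spurious diagonal poles), which is precisely what the lemmata of \sref{App.}{appendixC} establish.
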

\begin{proof}
We rely on several Lemmata proved in \sref{App.}{appendixC}.
Regarding (\ref{eqfinalthm}) as a \emph{definition}, we prove in
\sref{Lemma}{lemma9} an equivalent formula for the 
linear integral equation \eqref{eq:CubicSchwingerComplex}. This expression is satisfied because 
\sref{Lemma}{lemma6} and \sref{Lemma}{lemma8} add up to $0$. 
Consequently, the family of functions (\ref{eqfinalthm}) 
satisfies (\ref{eq:CubicSchwingerComplex}). This solution is unique because 
of uniqueness of the perturbative expansion.
\end{proof}

\begin{cor}\label{coro2}
Let $J=\{z_2,...,z_b\}$. Assume that $z\mapsto \G_g(z)$ is holomorphic 
in $\mathbb{C}\setminus \{0\}$ with 
$\G_g(z)=-\G_g(-z)$ for all 
$z\in \mathbb{C}\setminus \{0\}$ and $g\geq 1$. Then
all $\G_g(z_1,J)$ with $2-2g-b<0$ 
\begin{enumerate}\itemsep 0pt
\item are holomorphic in every $z_i \in \mathbb{C}\setminus \{0\}$

\item are odd functions in every $z_i$, i.e.\
$ \G_g(-z_1,J)=-\G_g(z_1,J)$ for all
$z_1,z_i\in\mathbb{C}\setminus\{0\}$.
\end{enumerate}
\end{cor}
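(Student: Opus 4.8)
I would deduce both statements directly from the closed formula of \sref{Theorem}{finaltheorem} by induction on the number $b$ of boundary components, reading off the relevant structure from the boundary creation operator $\hat{\mathrm{A}}^{\dag g}_{J,z}$ of \sref{Definition}{DefOp}. Peeling one operator off the nested expression in \sref{Theorem}{finaltheorem} gives
\begin{align*}
\G_g(z_1,\dots,z_b)=(2\lambda)^3\,\hat{\mathrm{A}}^{\dag g}_{z_1,\dots,z_b}\,\G_g(z_1,\dots,z_{b-1}),
\end{align*}
valid for $g\geq 1$ and $b\geq 2$, and for $g=0$ and $b\geq 4$, so that each step adds exactly one boundary by one application of $\hat{\mathrm{A}}^{\dag g}_{J,z_b}$ with $J=\{z_1,\dots,z_{b-1}\}$. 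The whole argument then rests on two features of this operator: acting on a function $F=F(z_1,\dots,z_{b-1};\varrho_0,\varrho_1,\dots)$ that is independent of $z_b$, it produces only the negative odd powers $z_b^{-3}$ and $z_b^{-5-2l}$ of the new variable, with coefficients not depending on $z_b$; and it touches the old variables $z_i$ only via $\partial_{\varrho_l}$ (a parameter derivative) and via $\tfrac{1}{\varrho_0 z_b^3 z_i}\partial_{z_i}$.

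\textbf{Base cases.} For $g\geq 1$ the base $b=1$ is exactly the hypothesis that $\G_g(z)$ is holomorphic on $\C\setminus\{0\}$ and odd; I would also record here that $\varrho_0\neq 0$ for small $|\lambda|$ (from \sref{Definition}{KontTime}) and that $\G_g(z)$, and hence every iterate, is rational in the $\varrho_l$, so that the parameter derivatives below are legitimate — both facts are part of the setup of \sref{Theorem}{finaltheorem} and the lemmata of \sref{App.}{appendixC}. For $g=0$ the induction has to start at $b=3$, since $\G_0(z_1|z_2)=\tfrac{(2\lambda)^2}{z_1z_2(z_1+z_2)^2}$ is neither holomorphic on $\C\setminus\{0\}$ nor odd; instead one invokes \sref{Proposition}{Prop:Cubic1+1+1}, which gives $\G_0(z_1,z_2,z_3)=-\tfrac{32\lambda^5}{\varrho_0 z_1^3 z_2^3 z_3^3}$, manifestly holomorphic on $\C\setminus\{0\}$ and odd in each variable.

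\textbf{Inductive step.} Assuming $F:=\G_g(z_1,\dots,z_{b-1})$ is holomorphic in each $z_i\in\C\setminus\{0\}$ and odd in each, I would check each claim for $(2\lambda)^3\hat{\mathrm{A}}^{\dag g}_{J,z_b}F$ variable by variable. In $z_b$: the result is a finite combination of $z_b^{-3}$ and $z_b^{-5-2l}$ with $z_b$-independent coefficients, hence holomorphic on $\C\setminus\{0\}$ and odd in $z_b$. In an old variable $z_i$ ($i\leq b-1$): $\partial_{\varrho_l}F$ stays holomorphic in $z_i\in\C\setminus\{0\}$ and odd in $z_i$ (differentiate the identity $F(\dots,-z_i,\dots)=-F(\dots,z_i,\dots)$ in the parameter $\varrho_l$); $\partial_{z_i}F$ is holomorphic in $z_i\in\C\setminus\{0\}$ and even in $z_i$, so $\tfrac{1}{z_i}\partial_{z_i}F$ is holomorphic on $\C\setminus\{0\}$ and odd in $z_i$; and $\partial_{z_j}F$ for $j\neq i$ remains holomorphic and odd in $z_i$. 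Summing the finitely many terms of $\hat{\mathrm{A}}^{\dag g}_{J,z_b}$ preserves both properties, so $\G_g(z_1,\dots,z_b)$ is holomorphic in each $z_i\in\C\setminus\{0\}$ and odd in each, which closes the induction; since all variables are handled on the same footing, this in particular yields the oddness in $z_1$ asserted in item~2 (one could also invoke full symmetry of $\G_g$ in its arguments).

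\textbf{Anticipated difficulty.} I do not expect a genuine obstacle here: the argument is essentially a transparency check on the explicit form of the operator. The only points needing attention are the correct placement of the $g=0$ base case — the $(1+1)$-point function cannot serve as the starting point — and the well-definedness of the $\partial_{\varrho_l}$, i.e.\ rationality of all iterates in the $\varrho_l$ together with $\varrho_0\neq 0$, which I would simply cite from \sref{Theorem}{finaltheorem} and its proof in \sref{App.}{appendixC}.
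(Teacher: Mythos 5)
Your proof is correct and follows essentially the same route as the paper: the paper's own argument simply observes that $\hat{\mathrm{A}}^{\dag g}_{J,z}$ preserves holomorphicity on $\mathbb{C}\setminus\{0\}$ and maps odd functions to odd functions, and then checks the initial conditions ($\G_1(z)$ and $\G_0(z_1,z_2,z_3)$ from \sref{Proposition}{Prop:Cubic1+1+1}, with $g\geq 2$ by assumption). Your version merely spells out the variable-by-variable verification that the paper leaves implicit, including the correct observation that the $g=0$ induction must start at $b=3$ rather than at the $(1+1)$-point function.
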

\begin{proof}
The boundary creation operator $\hat{\mathrm{A}}^{\dag g}_{J,z}$ of 
\sref{Definition}{DefOp} preserves holomorphicity in 
$\mathbb{C}\setminus\{0\}$ and 
maps odd functions into odd functions. Thus, only the initial conditions 
need to be checked. They are fulfilled
for $\G_0(z_1,z_2,z_3)$ and $\G_1(z_1)$ according to \sref{Proposition}{Prop:Cubic1+1+1}; for 
$g\geq 2$ by assumption.
\end{proof}
\noindent
The assumption will be verified later in \sref{Proposition}{structure-Gz}.

\begin{cor}\label{coro1}
The boundary creation operator $\hat{\mathrm{A}}^{\dag g}_{J,z}$ acting on a 
$(N_1+...+N_b)$-point function of genus $g$ and complex variables $J=\{z_1^1,..,z_{N_b}^b\}$ gives the 
$(1+N_1+...+N_b)$-point function of genus $g$.
\end{cor}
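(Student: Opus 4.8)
The plan is to deduce the statement from \sref{Theorem}{finaltheorem} and the explicit formula of \sref{Proposition}{Prop:CubisExpl}, using that $\hat{\mathrm{A}}^{\dag g}_{J,z}$ is a derivation in the parameters $\varrho_l$ and the variables $\zeta\in J$.

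First I would record the continuum/complexified version of \sref{Proposition}{Prop:CubisExpl}. Under the substitution $z=\sqrt{(1+2e(x))^2+c}$ one has $1+2e(x)=2F_x$, hence $F_p^2-F_q^2=\tfrac14(z_p^2-z_q^2)$, and the discrete identity continues to
\begin{align*}
\G_g\big(w_1^1,\dots,w_{N_1}^1\,|\,\cdots\,|\,w_1^b,\dots,w_{N_b}^b\big)
=\lambda^{N-b}\sum_{k_1=1}^{N_1}\cdots\sum_{k_b=1}^{N_b}
\G_g\big(w^1_{k_1}|\cdots|w^b_{k_b}\big)\prod_{\beta=1}^b P_\beta(k_\beta),
\end{align*}
with $N=N_1+\dots+N_b$ and $P_\beta(k):=\prod_{l\neq k}\tfrac{4}{(w^\beta_k)^2-(w^\beta_l)^2}$. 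The key observation is that the \emph{same} prefactors $\prod_\beta P_\beta(k_\beta)$ appear in the analogous expansion of the $(1+N_1+\dots+N_b)$-point function obtained by prepending a length-$1$ boundary with variable $z$, since the product over that boundary is empty and the power of $\lambda$ is unchanged.

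Next I would apply $\hat{\mathrm{A}}^{\dag g}_{J,z}$, with $J$ the full set of $N=N_1+\dots+N_b$ boundary variables, to the right-hand side above and expand by the Leibniz rule term by term in $(k_1,\dots,k_b)$. The part of the operator acting on the factor $\G_g(w^1_{k_1}|\cdots|w^b_{k_b})$ consists of all the $\partial_{\varrho_l}$ summands and, among the $\tfrac{1}{\varrho_0 z^3\zeta}\partial_\zeta$ summands, only those with $\zeta=w^\beta_{k_\beta}$; since $\G_g(w^1_{k_1}|\cdots|w^b_{k_b})$ is independent of $\varrho_l$ for $l>3g-3+b$ (as follows from iterating the boundary creation operators in \sref{Theorem}{finaltheorem}), this part reassembles exactly into $\hat{\mathrm{A}}^{\dag g}_{\{w^1_{k_1},\dots,w^b_{k_b}\},z}$, which by \sref{Theorem}{finaltheorem} sends $\G_g(w^1_{k_1}|\cdots|w^b_{k_b})$ to $(2\lambda)^{-3}\G_g(z|w^1_{k_1}|\cdots|w^b_{k_b})$ (the case $g=0$ runs identically, starting from the $(1+1)$-point function of \sref{Proposition}{Prop:Cubic1+1}). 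The remaining Leibniz terms read $\G_g(w^1_{k_1}|\cdots)\cdot\tfrac{1}{\varrho_0 z^3}\sum_{\gamma=1}^b\big(\prod_{\beta\neq\gamma}P_\beta(k_\beta)\big)\sum_{m=1}^{N_\gamma}\tfrac{1}{w^\gamma_m}\partial_{w^\gamma_m}P_\gamma(k_\gamma)$, and there the inner sum vanishes identically by the partial-fraction cancellation
\begin{align*}
\sum_{m=1}^{N_\gamma}\frac{1}{w^\gamma_m}\partial_{w^\gamma_m}P_\gamma(k)
=P_\gamma(k)\Big(\sum_{l\neq k}\frac{-2}{(w^\gamma_k)^2-(w^\gamma_l)^2}
+\sum_{m\neq k}\frac{2}{(w^\gamma_k)^2-(w^\gamma_m)^2}\Big)=0.
\end{align*}
Summing over $(k_1,\dots,k_b)$, multiplying by $(2\lambda)^3$, and identifying the outcome with the continuum \sref{Proposition}{Prop:CubisExpl} for the $(1+N_1+\dots+N_b)$-point function then gives $(2\lambda)^3\,\hat{\mathrm{A}}^{\dag g}_{J,z}\,\G_g(w_1^1,\dots,w_{N_b}^b)=\G_g(z\,|\,w_1^1,\dots,w_{N_b}^b)$, which is the assertion.

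The main obstacle is organisational rather than conceptual: one must verify carefully that the full-$J$ operator, restricted to one $(1+\dots+1)$-summand, reproduces precisely the lower-$|J|$ boundary creation operator — matching both the weights $\tfrac{1}{\varrho_0 z^3\zeta}$ and the truncation of the $\varrho_l$-sum — and that the leftover derivatives of the rational prefactors cancel, the latter being the one-line identity above. Everything else is a direct application of \sref{Theorem}{finaltheorem} and \sref{Proposition}{Prop:CubisExpl}.
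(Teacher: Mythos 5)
Your proof is correct and follows essentially the same route as the paper's one-line argument: expand via the complexified \sref{Proposition}{Prop:CubisExpl}, observe that the derivation $\hat{\mathrm{A}}^{\dag g}_{J,z}$ annihilates the prefactors $1/(\zeta_1^2-\zeta_2^2)$ (your partial-fraction cancellation is exactly this fact), and invoke \sref{Theorem}{finaltheorem} on the surviving $(1+\dots+1)$-point functions. You merely spell out two details the paper leaves implicit — the harmless truncation of the $\varrho_l$-sum and the $(2\lambda)^3$ normalisation — and, like the paper, you tacitly restrict to inputs whose $(1+\dots+1)$-point constituents lie in the range of \sref{Theorem}{finaltheorem} (i.e.\ $b\geq 2$ when $g=0$).
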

\begin{proof}
This follows from the change to complex variables in the
equation of \sref{Proposition}{Prop:CubisExpl} and $\hat{\mathrm{A}}^{\dag g}_{z_J,z}
\bigg(\frac{1}{(z^{j_1}_{i_1})^2-(z^{j_2}_{i_2})^2}\bigg)=0$ for $j_1\neq j_2$ and $i_1\neq i_2$.
\end{proof}

\begin{rmk}
 The motivation of defining the boundary creation operator is due to the $\D<2$ case by the implicit 
 equation \eqref{eq:cimplicit} of $c$. For $\D\geq 2$, we have formally a different result for $T(x)c$
 since a different implicit equation holds (see \sref{Corollary}{Coro:1P}) due to the renormalisation constants.
 Nevertheless, the \sref{Definition}{DefOp} is universal for $\D< 8$.
\end{rmk}

\subsection[Solution for $b=1$ and $g>0$]{Solution for $b=1$ and $g>0$\footnote{Parts of this subsection are taken from our
paper \cite{Grosse:2019nes}}}
It remains to check that the 1-point function $\G_g(z)$ at genus 
$g\geq 1$ satisfies the assumptions of \sref{Theorem}{finaltheorem} and 
\sref{Corollary}{coro2}, namely: 
\begin{enumerate}\itemsep 0pt
\item  $\G_g(z)$ depends only on
the moments $\varrho_0,\dots,\varrho_{3g-2}$ of the measure, 

\item $z\mapsto \G_g(z)$ is holomorphic on $\mathbb{C}\setminus \{0\}$ and 
an odd function of $z$.
\end{enumerate}
We establish these properties by solving \eqref{eq:CubicSchwingerComplex} for $b=1$
via a formula for the inverse of $\hat{K}_z$. This formula is inspired by 
topological recursion.
\begin{dfnt}\label{def:Bell}
The {\sffamily Bell polynomials} are defined by
\begin{align*}
B_{n,k}(x_1,...,x_{n-k+1})=\sum \frac{n!}{j_1!j_2!...j_{n-k+1}!}\left(\frac{x_1}{1!}\right)^{j_1}\left(\frac{x_2}{2!}\right)^{j_2}...\left(\frac{x_{n-k+1}}{(n-k+1)!}\right)^{j_{n-k+1}}
	\end{align*}
	for $n\geq 1$, where the sum is over non-negative integers $j_1,...,j_{n-k+1}$ with $j_1+...+j_{n-k+1}=k$ and $1j_1+...+(n-k+1)j_{n-k+1}=n$. 
Moreover, one defines $B_{0,0}=1$ and $B_{n,0}=B_{0,k}=0$ for $n,k>0$.
\end{dfnt}
\noindent
An important application is Fa\`a di 
Bruno's formula, the $n$-th order chain rule:
\begin{align}\label{Faadi}
\frac{d^n}{dx^n}f(g(x))=\sum_{k=1}^{n}f^{(k)}(g(x)) \,
B_{n,k}(g'(x),g''(x),...,g^{(n-k+1)}(x)).
\end{align}
\begin{prps}\label{Prop5}
Let $f(z)=\sum_{k=0}^\infty \frac{a_{2k}}{z^{2k}}$ be an even Laurent 
series about $z=0$ bounded at $\infty$.
Then the inverse of the integral operator $\hat{K}_z$ of 
\sref{Definition}{defint} is given by the residue formula
\begin{align*}
&\Big[z^2\hat{K}_z\frac{1}{z}\Big]^{-1}f(z) 
=- \Res\displaylimits_{z'\to 0}\left[ K(z,z')\, f(z')dz'\right],\\
&\text{where}\qquad 
K(z,z'):=\frac{1}{\lambda(\G_0(z')-\G_0(-z'))(z'^2-z^2)}.
\end{align*}
\end{prps}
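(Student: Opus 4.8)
The plan is to verify directly that the proposed residue formula inverts the operator $z^2\hat{K}_z\frac{1}{z}$ on the space of even Laurent series bounded at infinity. Writing $\hat{L}_z := z^2 \hat{K}_z \frac{1}{z}$, I would first compute, using \sref{Lemma}{lemmaopK}, how $\hat{L}_z$ acts on the monomial basis $\{z^{-2k}\}_{k\geq 0}$: from $\hat{K}_z(1/z)=1/\sqrt{Z}$ and $\hat{K}_z(z^{-3-2n})=\sum_{k=0}^n \varrho_k z^{-2n-2+2k}$ one gets $\hat{L}_z(1)=z^2/\sqrt{Z}$ (so the constant term is not in the image as a \emph{bounded} function and must be excluded, consistent with $f$ being a series in $1/z^{2k}$ with the $k=0$ term playing the role of the asymptotic value) and $\hat{L}_z(z^{-2n-2})=\sum_{k=0}^n \varrho_k z^{-2n+2k}$. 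Thus $\hat{L}_z$ is, in this basis, ``lower-triangular'' with diagonal entry $\varrho_0$; since $\varrho_0\neq 0$ for small $\lambda$, it is invertible on the completed space of such series, and the inverse is uniquely determined. This reduces the proposition to checking that the right-hand side reproduces this inverse.

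Next I would unpack the residue. Since $\G_0(z')-\G_0(-z')$ is an \emph{odd} function of $z'$ with a simple pole at $z'=0$ (from the $z'/(2\lambda\sqrt Z)$ term in \eqref{eq:Cubic1PComplex}, together with the even integral contributions), the quantity $\lambda z'(\G_0(z')-\G_0(-z'))$ is even, holomorphic and nonvanishing near $z'=0$, with value $\varrho_0$ at $z'=0$ — indeed $\lambda z'(\G_0(z')-\G_0(-z'))=\hat K_{z'}\!\big(\tfrac1{z'}\big)\cdot z' \cdot z'$ evaluated appropriately; more concretely one reads off its Laurent expansion $\lambda z'(\G_0(z')-\G_0(-z'))=\varrho_0+\varrho_1 z'^2+\varrho_2 z'^4+\cdots$ again by \sref{Lemma}{lemmaopK} applied to $1/z'$. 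Hence
\begin{align*}
K(z,z')=\frac{1}{(z'^2-z^2)\big(\varrho_0+\varrho_1 z'^2+\varrho_2 z'^4+\cdots\big)},
\end{align*}
which for $f(z')=z'^{-2k-2}$ has a pole at $z'=0$ of order $2k+2$. Extracting $-\Res_{z'\to 0}$ amounts to a Fa\`a di Bruno / Bell-polynomial expansion of $1/(\varrho_0+\varrho_1 z'^2+\cdots)$ combined with the geometric expansion of $1/(z'^2-z^2)$ in powers of $z'^2/z^2$; this is exactly where \sref{Definition}{def:Bell} and formula \eqref{Faadi} enter. The computation should produce a finite Laurent polynomial in $1/z$ whose leading term is $\varrho_0^{-1} z^{-2k}$, matching the inverse of the triangular action above.

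The cleanest way to finish, avoiding the full Bell-polynomial bookkeeping, is to prove the two statements $\hat{L}_z\circ\big(-\Res_{z'\to 0}K(z,z')\,\cdot\,dz'\big)=\mathrm{id}$ and its reverse on the monomial basis, by swapping the order of $\hat{L}_z$ (acting in $z$) with the residue (in $z'$): one shows $\hat{L}_z K(z,z') = \frac{1}{z'^2-z^2}\cdot(\text{something regular at }z'=0)$ plus a term whose residue at $z'=0$ vanishes, using that $\hat{K}_z$ applied to $\frac{1}{z^2-z'^2}$ (as a function of $z$, with $z'$ a parameter) can be evaluated in closed form — this is the same kind of manipulation already carried out in the proof of \sref{Proposition}{Prop:Cubic1+1}. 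Then a Cauchy-type residue identity picks out $f(z)$. The main obstacle I anticipate is precisely this interchange: $\hat{K}_z$ contains the finite-range integral $\int_{\sqrt{1+c}}^{\sqrt{(1+2e(\Lambda^2))^2+c}}\!dt\,\tilde\varrho(t)\frac{f(z)-f(t)}{z^2-t^2}$, so one must check that the ``non-polynomial'' integral part of $\hat K_z$ annihilates the relevant residue contributions — i.e.\ that when $K(z,z')$ is fed through the integral operator, the result is still holomorphic at $z'=0$ after taking the residue — which relies on the explicit form of $\G_0$ from \sref{Proposition}{Prop:1Punkt} and the defining relation of the Kontsevich times $\varrho_k$. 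Once that cancellation is established, matching coefficients in $1/z$ is routine.
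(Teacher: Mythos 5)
Your framing is essentially the paper's: reduce to monomials $f(z')=(z')^{-2k}$, use the Laurent expansion of $\lambda(\G_0(z')-\G_0(-z'))$ in the Kontsevich times, invert it via Fa\`a di Bruno / Bell polynomials, and verify the composition with $z^2\hat{K}_z\frac{1}{z}$ on the monomial basis. The triangularity observation (that $z^2\hat{K}_z\frac{1}{z}$ sends $z^{-2n-2}\mapsto \varrho_0 z^{-2n}+\dots$ with $\varrho_0\neq 0$, so a one-sided inverse is automatically the unique inverse) is a nice addition that the paper leaves implicit. However, there is a concrete error at the centre of your setup: $\G_0(z')-\G_0(-z')$ does \emph{not} have a simple pole at $z'=0$. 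From \eqref{eq:Cubic1PComplex} one finds $\lambda(\G_0(z')-\G_0(-z'))=\sum_{l\geq 0}\varrho_l (z')^{2l+1}$, an odd function with a simple \emph{zero} at $z'=0$. Hence $K(z,z')$ is odd in $z'$ with a simple pole there, and your displayed formula $K(z,z')=\frac{1}{(z'^2-z^2)(\varrho_0+\varrho_1 z'^2+\cdots)}$ is missing a factor $\frac{1}{z'}$. This is not cosmetic: with an even $K$ and an even $f$, every residue at $z'=0$ would vanish and the right-hand side of the proposition would be identically zero. The same slip propagates into your pole-order count and your claim about the leading term $\varrho_0^{-1}z^{-2k}$ (it should be $\varrho_0^{-1}z^{-2k-2}$, consistent with your own triangularity analysis).

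More importantly, neither of your two routes to the finish is actually carried out, and the step you defer is precisely the content of the proof. With the corrected $K$, one computes $-\Res_{z'\to 0}\big[K(z,z')\,(z')^{-2k}dz'\big]=\frac{1}{\varrho_0}\sum_{j=0}^{k}\frac{S_j}{j!\,z^{2k-2j+2}}$, where the $S_j$ are the Bell-polynomial coefficients of $\big(\sum_l \frac{\varrho_l}{\varrho_0}\tau^l\big)^{-1}$; one must then apply $z^2\hat{K}_z\frac{1}{z}$ via \sref{Lemma}{lemmaopK} and show that the resulting double sum over products $S_j\varrho_i$ collapses to $-z^{-2k}$. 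That collapse rests on the Bell-polynomial identity $\sum_{j=1}^{n-k}\binom{n}{j}x_j B_{n-j,k}=(k+1)B_{n,k+1}$ (eq.\ \eqref{xBell}), which you neither state nor replace. Your proposed shortcut — commuting the operator past the residue and evaluating $\hat{K}_z$ on $\frac{1}{z(z'^2-z^2)}$ in closed form as in \sref{Proposition}{Prop:Cubic1+1} — is plausible but unverified, and you yourself flag the interchange as the main obstacle and leave it open. As written, the proposal assembles the right ingredients but does not prove the proposition.
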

\begin{proof}
The formulae (\ref{eq:Cubic1PComplex}) give rise to the series expansion 
\begin{align}
\label{Gzminusz}
\lambda(\G_0(z')-\G_0(-z'))= 
\sum_{l=0}^\infty \varrho_l \cdot (z')^{2l+1},
\end{align}
where the Kontsevich times $\varrho_l$ are given by \sref{Definition}{KontTime}.
The series of its reciprocal is found using \eqref{Faadi}:
\begin{align}
&\frac{1}{\lambda(\G_0(z')-\G_0(-z'))} = \frac{1}{z' \varrho_0}
\sum_{m=0}^\infty \frac{(z')^{2m}}{m!} S_m,
\label{S-Bell}
\\
&S_m:= \frac{d^m}{d \tau^m}\Big|_{\tau=0} \Big(\sum_{l=0}^\infty 
\frac{\varrho_l}{\varrho_0} \tau^l\Big)^{-1} 
= \sum_{i=0}^m\frac{(-1)^i i!}{\varrho_0^{i}} 
B_{m,i}(1!\varrho_1,2! \varrho_2,...,(m-i+1)!\varrho_{m-i+1}).
\nonumber
\end{align}
Multiplication by the geometric series gives 
\begin{align}
K(z,z')=-\frac{1}{z^2 z' \varrho_0} 
\sum_{n,m=0}^\infty \frac{(z')^{2m+2n}}{m! z^{2n}} S_m.
\label{Kzz}
\end{align}
The residue of a monomial in 
$f(z')=\sum_{k=0}^\infty \frac{a_{2k}}{(z')^{2k}}$ is then
\begin{align}\label{Res}
\Res_{z'\to 0}\left[ K(z,z')\frac{dz'}{(z')^{2k}}\right]
=-\frac{1}{\varrho_0}\sum_{j=0}^{k} \frac{S_j}{j!z^{2k-2j+2}}.
\end{align}
In the next step we apply the operator $z^2\hat{K}\frac{1}{z}$ 
to \eqref{Res}, where \sref{Lemma}{lemmaopK} is used:
\begin{align}
&z^2 \hat{K}_z\Big(\frac{1}{z}
\frac{(-1)}{\varrho_0} 
\sum_{j=0}^{k} \frac{S_j}{j!z^{2k-2j+2}}\Big)
=-\frac{z^2}{\varrho_0} 
\sum_{j=0}^{k}\sum_{i=0}^{k-j}\frac{S_j\varrho_i}{j! z^{2k-2j-2i+2}}
\nonumber
\\
&=-\sum_{j=0}^{k}\frac{S_{k-j}}{(k-j)! z^{2j}} 
-\frac{1}{\varrho_0} 
\sum_{i=0}^{k-1} \sum_{j=i+1}^{k}
\frac{S_{k-j}\varrho_{j-i}}{(k-j)! z^{2i}}.
\label{Bell2}
\end{align}
The last sum over $j$ is treated as follows, where the 
Bell polynomials are inserted for $S_m$:
\begin{align*}
&\sum_{j=i+1}^{k}\frac{S_{k-j} \varrho_{j-i}}{(k-j)!}
=\sum_{j=1}^{k-i}\frac{S_{k-j-i}\varrho_{j}}{(k-j-i)! }
\\
&=\sum_{j=1}^{k-i}\sum_{l=0}^{k-j-i}\frac{(-1)^{l}l!}{(k-j-i)!
\varrho_0^{l}}\,\varrho_jB_{k-j-i,l}(1!\varrho_1,...,(k-j-i-l+1)!
\varrho_{k-j-i-l+1})
\\
&=\sum_{l=0}^{k-j-i}\frac{(-1)^{l}l!}{\varrho_0^{l}(k-i)!}
\sum_{j=1}^{k-i-l}\binom{k-i}{j}j!\,\varrho_j
B_{k-j-i,l}(1!\varrho_1,...,(k-j-i-l+1)!\varrho_{k-j-i-l+1})
\\
&=\sum_{l=0}^{k-i}\frac{(-1)^{l}(l+1)!}{\varrho_0^{l}(k-i)!} 
B_{k-i,l+1}(1!\varrho_1,...,(k-i-l)!\varrho_{k-i-l})
\\
&=-\varrho_0\frac{S_{k-i}}{(k-i)!}.
\end{align*}
We have used $B_{n,0}=0$ and $B_{0,n}=0$ for $n>0$ to eliminate some 
terms, changed the order of sums, and used the following identity 
for the Bell polynomials \cite[§ Lemma 5.9]{Grosse:2016pob}
\begin{align}
\sum_{j=1}^{n-k}\binom{n}{j}x_jB_{n-j,k}(x_1,...,x_{n-j-k+1})
=(k+1)B_{n,k+1}(x_1,...,x_{n-k}).
\label{xBell}
\end{align}
Inserting back, we find that \eqref{Bell2} reduces to the ($j=k$)-term
of the first sum in the last line of \eqref{Bell2}, i.e.\
\begin{align*}
&z^2 \hat{K}_z \Big(
\frac{1}{z} \Res\displaylimits_{z'\to 0}\Big[ 
K(z,z')\frac{dz'}{(z')^{2k}}\Big]\Big)= -\frac{1}{z^{2k}}.
\end{align*}
This finishes the proof.
\end{proof}
\begin{thrm} \label{thm:G-residue}
For any $g\geq 1$ and $z\in \mathbb{C}\setminus \{0\}$ 
one has 
\begin{align}
\G_g(z) = \frac {\lambda}{ z} 
\Res\displaylimits_{z'\to 0} 
\left[ K(z,z') \Big\{
\sum_{h=1}^{g-1}\G_h(z') 
\G_{g-h}(z') +\G_{g-1}(z',z')\Big\} 
(z')^2 dz'\right].
\label{eq-G1-final}
\end{align}
\end{thrm}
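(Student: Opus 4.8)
The plan is to recognise that \sref{Theorem}{thm:G-residue} is an immediate consequence of two ingredients established earlier: the Schwinger--Dyson equation \eqref{eq:CubicSchwingerComplex} specialised to $b=1$, and the inversion formula for $\hat{K}_z$ proved in \sref{Proposition}{Prop5}. First I would write \eqref{eq:CubicSchwingerComplex} for a single boundary component $J=\emptyset$ and genus $g\geq 1$, which reads
\begin{align*}
 \hat{K}_{z}\G_g(z)+\lambda\,\G_{g-1}(z,z)+\lambda\sum_{h=1}^{g-1}\G_h(z)\G_{g-h}(z)=0,
\end{align*}
since the excluded terms $(h,I)=(0,\emptyset)$ and $(h,I)=(g,\emptyset)$ remove exactly the $h=0$ and $h=g$ summands, and the last line of \eqref{eq:CubicSchwingerComplex} is absent because $J$ is empty. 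Hence
\begin{align*}
 \hat{K}_{z}\G_g(z)=-\lambda\Big(\sum_{h=1}^{g-1}\G_h(z)\G_{g-h}(z)+\G_{g-1}(z,z)\Big).
\end{align*}

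Next I would bring this to the normal form on which \sref{Proposition}{Prop5} operates. Write $\G_g(z)=z^{-1}\cdot\big(z\,\G_g(z)\big)$ and observe that $z\,\G_g(z)$ is, by the inductive hypothesis of \sref{Corollary}{coro2} (odd $\G_h$, $\G_{g-h}$ and odd-in-each-variable $\G_{g-1}(\cdot,\cdot)$), an even Laurent series in $z$ bounded at $\infty$ — more precisely a rational function with poles only at $z=0$; likewise the right-hand side is $z^{-2}$ times such an even series. Applying $\big[z^2\hat{K}_z\tfrac1z\big]^{-1}$ to $z^2$ times the SDE and using \sref{Proposition}{Prop5}, which gives $\big[z^2\hat{K}_z\tfrac1z\big]^{-1}f(z)=-\Res_{z'\to0}\big[K(z,z')f(z')\,dz'\big]$ with $K(z,z')=\big(\lambda(\G_0(z')-\G_0(-z'))(z'^2-z^2)\big)^{-1}$, one obtains
\begin{align*}
 z\,\G_g(z)= -\Res\displaylimits_{z'\to0}\Big[K(z,z')\cdot\big(-\lambda\big)\Big(\sum_{h=1}^{g-1}\G_h(z')\G_{g-h}(z')+\G_{g-1}(z',z')\Big)(z')^2\,dz'\Big],
\end{align*}
and dividing by $z$ yields exactly \eqref{eq-G1-final}.

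The step requiring care is the verification that the SDE-image lies in the domain of \sref{Proposition}{Prop5}, i.e.\ that the relevant functions are genuinely even Laurent series bounded at infinity so that $\big[z^2\hat{K}_z\tfrac1z\big]$ is invertible on them; this is where the structural inductive assumptions of \sref{Theorem}{finaltheorem} and \sref{Corollary}{coro2} enter, and one must check the base cases $g=1$ (and $g=2$, which needs $\G_1(z,z)$, obtained from $\G_1$ via \sref{Theorem}{finaltheorem}) are consistent — these follow from the explicit formula in \sref{Proposition}{Prop:Cubic1+1+1}. I would then close by noting the induction on $2g-1$: the right-hand side of \eqref{eq-G1-final} involves only $\G_h$ with $h<g$ and the lower-genus two-point function $\G_{g-1}(z,z)$, all of which are holomorphic on $\mathbb{C}\setminus\{0\}$ and odd, hence $K(z,z')\{\cdots\}(z')^2$ has the residue structure producing a rational function of $z$ with poles only at $0$; this simultaneously re-establishes the hypotheses used by \sref{Theorem}{finaltheorem} for genus $g$, so the two theorems are really proved in tandem. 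The only genuine subtlety — not a hard one — is keeping track of signs and of the factor $(z')^2$ when converting $\tfrac1z\cdot z^2\cdot(\text{SDE})$ into the residue formula.
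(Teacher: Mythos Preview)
Your proposal is correct and follows essentially the same route as the paper: specialise the Schwinger--Dyson equation \eqref{eq:CubicSchwingerComplex} to $b=1$, then invert $\hat K_z$ via \sref{Proposition}{Prop5}, checking by induction in $g$ that the bracketed expression is an even Laurent polynomial bounded at $\infty$ so that the inversion applies. One small imprecision: for the base case $g=1$ the only contribution is $\G_0(z',z')=\lambda^2/(z')^4$, which comes from \sref{Proposition}{Prop:Cubic1+1} rather than \sref{Proposition}{Prop:Cubic1+1+1}; otherwise your account of the mutual induction with \sref{Theorem}{finaltheorem} matches the paper's.
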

\begin{proof} 
The formula arises when applying \sref{Proposition}{Prop5} to
\eqref{eq:Cubic1PComplex} with $b=1$ and holds if 
the function in $\{~\}$ is an even Laurent polynomial in $z'$ 
bounded in $\infty$.
This is the case for $g=1$ where only $\G_0(z',z')
=\frac{\lambda^2}{(z')^4}$ contributes. Evaluation of the residue 
reconfirms the 1-point function of \sref{Proposition}{Prop:Cubic1+1+1}. We proceed by induction in 
$g\geq 2$, assuming that all 
$\G_h(z')$ with $1\leq h<g$ on the rhs of \eqref{eq-G1-final} 
are odd Laurent polynomials bounded in $\infty$; 
their product is even. The induction hypothesis 
also verifies the assumption of 
\sref{Theorem}{finaltheorem} so that 
$\G_{g-1}({-}z',{-}z')=-\G_g(z',{-}z')=\G_g(z',z')$ is even and, because of 
$\G_{g-1}(z',z'')=(2\lambda)^3 \hat{\mathrm{A}}^{\dag g}_{z'',z'} 
\G_{g-1}(z'')$, inductively a 
Laurent polynomial bounded in $\infty$. 
Thus, equation \eqref{eq-G1-final} holds for genus $g\geq 2$ and, 
as consequence of \eqref{Res}, $\G_g(z)$ is 
again an odd Laurent polynomial bounded in $\infty$. 
Equation~\eqref{eq-G1-final}  is thus proved 
for all $g\geq 1$, and the assumption of 
\sref{Theorem}{finaltheorem} is verified.
\end{proof}
\noindent
A more precise characterisation can be given. It relies on
\begin{dfnt}
A polynomial $P(x_1,x_2,\dots)$ is called \textit{$n$-weighted} if 
\begin{align*}\sum_{k=1}^\infty k x_k \frac{\partial}{\partial x_k} P(x_1,x_2,\dots)
=nP(x_1,x_2,\dots).
\end{align*}
\end{dfnt}
\noindent 
The Bell polynomials $B_{n,k}(x_1,\dots,x_{n-k+1})$ are $n$-weighted. 
The number of monomials in an $n$-weighted polynomial is $p(n)$, 
the  number of partitions of $n$.
The product of an $n$-weighted 
by an $m$-weighted polynomial is $(m+n)$-weighted.
\begin{prps}
\label{structure-Gz}
 For $g\geq 1$ one has
\[
\G_g(z)= (2 \lambda)^{4g-1} 
\sum_{k=0}^{3g-2} \frac{P_{3g-2-k}(\varrho)}{\varrho_0^{2g-1} z^{2k+3}},
\]
where $P_0\in \mathbb{Q}$ and the $P_j(\varrho)$ with $j\geq 1$ are 
$j$-weighted polynomials in 
$\{\frac{\varrho_1}{\varrho_0},\dots,\frac{\varrho_j}{\varrho_0}\}$
with rational coefficients. 
\end{prps}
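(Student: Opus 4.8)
The plan is to prove the proposition by strong induction on $g\geq 1$, using the residue formula of \sref{Theorem}{thm:G-residue} together with the explicit kernel expansion obtained in the proof of \sref{Proposition}{Prop5}. For the bookkeeping it is convenient to say that a Laurent expression $H(z)$ is \emph{of type $(g)$} if
\[
H(z)=(2\lambda)^{4g-1}\varrho_0^{-(2g-1)}\sum_{k=0}^{3g-2}\frac{Q_{3g-2-k}(\varrho)}{z^{2k+3}},
\]
where each $Q_j$ is a $j$-weighted polynomial in $\varrho_1/\varrho_0,\dots,\varrho_j/\varrho_0$ with rational coefficients; the claim is precisely that $\G_g$ is of type $(g)$. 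The base case $g=1$ is \sref{Proposition}{Prop:Cubic1+1+1}: there $\G_1(z)=\frac{\lambda^3\varrho_1}{\varrho_0^2z^3}-\frac{\lambda^3}{\varrho_0 z^5}$, which is of type $(1)$ with $Q_1=\tfrac18\varrho_1/\varrho_0$ and $Q_0=-\tfrac18\in\mathbb{Q}$. (Equivalently one may re-derive it by evaluating the residue \eqref{eq-G1-final} with only $\G_0(z',z')=\lambda^2/(z')^4$ contributing and using \eqref{Res}.)

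The first step of the induction is to analyse the inhomogeneity $F(z'):=\sum_{h=1}^{g-1}\G_h(z')\G_{g-h}(z')+\G_{g-1}(z',z')$ appearing in \eqref{eq-G1-final}, assuming $\G_h$ is of type $(h)$ for all $1\leq h<g$. For a product term $\G_h(z')\G_{g-h}(z')$ the $(2\lambda)$-exponents add to $4g-2$, the $\varrho_0$-exponents to $-(2g-2)$, and — since $\varrho_0$ carries weight $0$ while the $(z')^{-(2a+3)}$-coefficient in $\G_h$ has weight $3h-2-a$ — the coefficient of $(z')^{-2m}$ in the product is a $(3g-1-m)$-weighted rational polynomial in $\varrho_i/\varrho_0$, with $6\leq 2m\leq 6g-2$. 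For the diagonal term I would use \sref{Theorem}{finaltheorem} in the form $\G_{g-1}(z_1,z_2)=(2\lambda)^3\hat{\mathrm{A}}^{\dag g}_{z_1,z_2}\G_{g-1}(z_1)$ (applicable because $\G_{g-1}$ is odd and rational in the $\varrho$'s by induction, cf.\ \sref{Corollary}{coro2}), and then check the three families of terms in \sref{Definition}{DefOp} monomial by monomial: acting on a term $z_1^{-(2a+3)}$ of weight $3(g{-}1)-2-a$, each of $\frac{(3+2l)\varrho_{l+1}}{\varrho_0 z_2^3}\partial_{\varrho_l}$, $\frac{3+2l}{z_2^{5+2l}}\partial_{\varrho_l}$ ($l\geq 1$) and $\frac{1}{\varrho_0 z_2^3 z_1}\partial_{z_1}$ produces exactly one extra factor $\varrho_0^{-1}$, multiplies by $(2\lambda)^3$, keeps odd powers in $z_1$ and introduces odd powers of $z_2$, and — as one verifies in each case — makes the new coefficient $(3g-4-a-c)$-weighted on a resulting monomial $z_1^{-(2a+3)}z_2^{-(2c+3)}$; the delicate one is the $l=0$ term, where $\partial_{\varrho_0}$ applied to $\varrho_0^{-n}\prod(\varrho_i/\varrho_0)^{\alpha_i}$ returns $-(n+\textstyle\sum\alpha_i)\varrho_0^{-n-1}\prod(\varrho_i/\varrho_0)^{\alpha_i}$, again one extra $\varrho_0^{-1}$ with unchanged weight. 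Restricting to $z_1=z_2=z'$ then shows $\G_{g-1}(z',z')$ has the same $(2\lambda)^{4g-2}\varrho_0^{-(2g-2)}$-structure as the product terms, is even in $z'$, and its $(z')^{-2m}$-coefficient is $(3g-1-m)$-weighted. Hence $F(z')$ is an even Laurent polynomial bounded at $\infty$, and $F(z')(z')^2=(2\lambda)^{4g-2}\varrho_0^{-(2g-2)}\sum_{k=2}^{3g-2}\tilde b_k(z')^{-2k}$ with each $\tilde b_k$ a $(3g-2-k)$-weighted rational polynomial in $\varrho_1/\varrho_0,\dots$ (involving only $\varrho_0,\dots,\varrho_{3g-4}$).

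The final step is to apply \sref{Theorem}{thm:G-residue} and the residue identity \eqref{Res} termwise:
\[
\G_g(z)=\frac{\lambda}{z}\Res\displaylimits_{z'\to 0}\big[K(z,z')F(z')(z')^2dz'\big]
=-\frac{\lambda}{\varrho_0}(2\lambda)^{4g-2}\varrho_0^{-(2g-2)}\sum_{k=2}^{3g-2}\tilde b_k\sum_{j=0}^{k}\frac{S_j}{j!\,z^{2(k-j)+3}},
\]
where $S_j$, given by \eqref{S-Bell}, is — by homogeneity of the Bell polynomials and the identity \eqref{xBell} already used in \sref{Proposition}{Prop5} — a $j$-weighted rational polynomial in $\varrho_1/\varrho_0,\dots,\varrho_j/\varrho_0$. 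Relabelling $k'=k-j$, the coefficient of $z^{-(2k'+3)}$ is $-\sum_{j}\frac{\tilde b_{k'+j}S_j}{j!}$, each summand of weight $(3g-2-k'-j)+j=3g-2-k'$, hence a $(3g-2-k')$-weighted rational polynomial; together with the overall prefactor $\tfrac{\lambda}{\varrho_0}(2\lambda)^{4g-2}\varrho_0^{-(2g-2)}=\tfrac14(2\lambda)^{4g-1}\varrho_0^{-(2g-1)}$ this is exactly the asserted form, $P_{3g-2-k'}$ being that polynomial up to the rational factor $-\tfrac14$. For $k'=3g-2$ only $j=0$, $k=3g-2$ survives ($S_0=1$), so $P_0\propto \tilde b_{3g-2}\in\mathbb{Q}$ (weight $0$); and the expression is manifestly odd and holomorphic on $\mathbb{C}\setminus\{0\}$, closing the induction and verifying the hypotheses of \sref{Theorem}{finaltheorem} and \sref{Corollary}{coro2}.

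The main obstacle is the refined bookkeeping for the diagonal term $\G_{g-1}(z',z')$: one must show that applying the boundary creation operator $\hat{\mathrm{A}}^{\dag g}$ of \sref{Definition}{DefOp} and then setting the two arguments equal keeps the power of $2\lambda$, the power of $\varrho_0$, and the weight in exact lockstep with the power of $z'$ — this, rather than the residue computation (which is just \eqref{Res}), is where the argument has to be carried out term by term, with the $\partial_{\varrho_0}$ piece requiring the most care.
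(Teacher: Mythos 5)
Your proposal is correct and follows essentially the same route as the paper's proof: induction on $g$ via the residue formula of \sref{Theorem}{thm:G-residue}, with the product terms handled by the induction hypothesis, the diagonal term $\G_{g-1}(z',z')$ analysed by tracking the three families of contributions in the boundary creation operator of \sref{Definition}{DefOp}, and the final step using \eqref{Res} together with the $j$-weightedness of $S_j(\varrho)$; your weight bookkeeping is in fact somewhat more explicit than the paper's. The only blemish is the harmless arithmetic slip $\lambda(2\lambda)^{4g-2}=\tfrac12(2\lambda)^{4g-1}$, not $\tfrac14(2\lambda)^{4g-1}$, which is absorbed into the rational coefficients anyway.
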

\begin{proof}
The case $g=1$ is directly checked. We proceed by induction in $g$ for both 
terms in $\{~\}$ in \eqref{eq-G1-final}. 
The hypothesis gives
$\G_h(z')\G_{g-h}(z')= (2\lambda)^{4g-2} \sum_{k=0}^{3g-4} 
\frac{P_{3g-4-k}(\varrho)}{\varrho_0^{2g-2} (z')^{2k+6}}$. In the
second term in $\{~\}$, 
$\G_{g-1}(z,z)=(2\lambda)^3 \hat{\mathrm{A}}^{\dag\,g{-}1}_{z,z} 
\G_{g-1}(z)$, the three 
types of contributions in the boundary creation operator act as follows:
\begin{align*}
&(2\lambda)^3 
 \hat{\mathrm{A}}^{\dag\,g{-}1}_{z',z'} \G_{g-1}(z')\\
=& \frac{(2\lambda)^{4g-2}}{\varrho_0^{2g-2} }
\sum_{k=0}^{3g-5} 
\Big( 
\sum_{l=0}^{3g-5-k} \!\!\!
\Big(\frac{\varrho_{l+1}}{\varrho_0 (z')^3} {+}\frac{1}{(z')^{5+2l}} \Big) 
\frac{P_{3g-5-k-l}(\varrho)}{(z')^{2k+3}}
{+} \frac{1}{(z')^4} 
\frac{P_{3g-5-k}(\varrho)}{(z')^{2k+4}}\Big)
\\
=& \frac{(2\lambda)^{4g-2}}{\varrho_0^{2g-2} }
\sum_{k=0}^{3g-5} 
\Big( 
\frac{P_{3g-4-k}(\varrho)}{(z')^{2k+6}}
+\frac{P_{3g-5-k}(\varrho)}{(z')^{2k+8}}\Big),
\end{align*}
which has the same structure as $\G_h(z')\G_{g-h}(z')$.
Application of \eqref{Res} yields 
\begin{align*}
&\frac{\lambda}{z}\Res\displaylimits_{z'\to 0} 
\Big[K(z,z') dz' (z')^2 (2\lambda)^{4g-2}
\sum_{k=0}^{3g-4} 
\frac{P_{3g-4-k}(\varrho)}{\varrho_0^{2g-2} (z')^{2k+6}}\Big]\\
=&(2\lambda)^{4g-1}\sum_{k=0}^{3g-4} \sum_{j=0}^{k+2} 
\frac{P_{3g-4-k}(\varrho)S_j(\varrho)}{\varrho_0^{2g-1} z^{2k+7-2j}}
\\
=&(2\lambda)^{4g-1}\sum_{k=0}^{3g-2} 
\frac{P_{3g-2-k}(\varrho)}{\varrho_0^{2g-1} z^{2k+3}},
\end{align*}
because $S_j(\varrho)$ is also a $j$-weighted polynomial by 
\eqref{S-Bell}.
\end{proof}
\noindent
In particular, this proves the assumption of 
\sref{Theorem}{finaltheorem}, namely that 
$\G_g(z)$ depends only on $\{\varrho_0,\dots, \varrho_{3g-2}\}$.
To be precise, we reciprocally increase the genus in 
\sref{Theorem}{finaltheorem} and \sref{Proposition}{structure-Gz}.

\subsection{Link to Topological Recursion}\label{sec.linkcub}
A $(1+1+...+1)$-point
function of genus $g$ with $b$ boundary components fulfils a
universal structure called \textit{topological recursion}. To
introduce it, we have to define the following functions:
\begin{dfnt}\label{def1}
 The function $\omega_{g,b}$ is defined by
 \begin{align*}
  \omega_{g,b}(z_1,...,z_b):=&\left(\prod_{i=1}^b z_i\right) 
  \left(\G_g(z_1,...,z_b)
+16\lambda^2\frac{\delta_{g,0}\delta_{2,b}}{(z_1^2-z_2^2)^2}\right)dz_1\,..dz_n
 \end{align*}
 as symmetric differential forms on $\hat{\C}^n$ with $\hat{\C}=\C\cup \{\infty\}$,
and the {\sffamily spectral curve} $y(z(x))$ by $x(z)=z^2-c$ and
\begin{align}\label{spec:cubic}
 y(z)=&-\G_0(-z)= \frac{z}{2\lambda\sqrt{Z}}
+\frac{\nu}{2}-\frac{1}{4\lambda}\int_{\sqrt{1+c}}^{\sqrt{(1+2e(\Lambda^2))^2+c}}dt
\frac{\tilde{\varrho}(t)}{t(t-z)}.
\end{align}
\end{dfnt}
\noindent
It can be checked that with these definitions, up to
trivial redefinitions by powers of $2\lambda$, the theorems proved
in \cite{Eynard:2016yaa} apply for topological recursion. These determine all 
$\omega_{g,B}$ with $2-2g-B<0$ out of the initial data 
$y(z),x(z)$ and $\omega_{0,2}$:
\begin{thrm}[{\cite[Thm.\ 6.4.4]{Eynard:2016yaa}}]\label{theorem1}
 For $2-2g-(1+b)<0$ and $J=\{z_1,...,z_b\}$ the 
function $\omega_{g,b+1}(z_0,...,z_B)$ is given by topological recursion
 \begin{align*}
  &\omega_{g,b+1}(z_0,...,z_b)\\
  =&\Res\displaylimits_{z\to 0}
\bigg[K(z_0,z)\,dz\Big(\omega_{g-1,b+2}(z,-z,J)+
  \sum^{\quad \prime}_{\substack{h+h'=g\\ I\uplus I' =J}}\omega_{h,|I|+1}(z,I)
  \omega_{h',|I'|+1}(-z,{I'})\Big)\bigg],
 \end{align*}
 where $K(z_0,z)=\frac{\lambda}{(z^2-z_0^2)(y(z)-y(-z))}$ and 
 the sum $\sum^{\prime}$ excludes $(h,I)=(0,\emptyset)$ and $(h,I)=(g,J)$.
\end{thrm}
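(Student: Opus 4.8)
The statement is quoted verbatim from \cite{Eynard:2016yaa}, so the task is to check that the data introduced in \sref{Definition}{def1} --- the spectral curve $x(z)=z^2-c$, $y(z)=-\G_0(-z)$, together with $\omega_{0,2}$ --- satisfies the hypotheses under which \cite[Thm.\ 6.4.4]{Eynard:2016yaa} holds, and that the family $\omega_{g,b}$ reconstructed by topological recursion from this data coincides with the one built from the correlation functions $\G_g$. The plan is to verify admissibility of the spectral curve, to identify $\omega_{0,2}$ with the Bergman kernel, and then to show that the Schwinger--Dyson equations \eqref{eq:CubicSchwingerComplex} are, after applying $[z_1^2\hat{K}_{z_1}\tfrac{1}{z_1}]^{-1}$, literally the topological recursion formula of \sref{Theorem}{theorem1}.

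First I would establish admissibility. Since $-1<c\leq 0$ for $|\lambda|<\lambda_c$, the branch cut of $\G_0$ lies in $]-\sqrt{(1+2e(\Lambda^2))^2+c},-\sqrt{1+c}]$, away from a neighbourhood of $z=0$; hence $y(z)=-\G_0(-z)$ is holomorphic near $z=0$. The map $x(z)=z^2-c$ has $x'(z)=2z$, so $z=0$ is its unique simple ramification point, with local deck transformation $\sigma(z)=-z$, and from \eqref{spec:cubic} together with \sref{Definition}{KontTime} one gets $y'(0)=\tfrac{1}{2\lambda}\bigl(\tfrac{1}{\sqrt{Z}}-\tfrac12\int \tilde{\varrho}(t)t^{-3}\,dt\bigr)=\tfrac{\varrho_0}{2\lambda}\neq 0$ for small $\lambda$, so $\omega_{0,1}=y\,dx$ has the required simple behaviour at the branch point. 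A short computation with $\G_0(z_1,z_2)$ from \sref{Proposition}{Prop:Cubic1+1} gives
\[
z_1z_2\Bigl(\G_0(z_1,z_2)+\frac{16\lambda^2}{(z_1^2-z_2^2)^2}\Bigr)=\frac{4\lambda^2}{(z_1-z_2)^2},
\]
so $\omega_{0,2}(z_1,z_2)=\dfrac{(2\lambda)^2\,dz_1\,dz_2}{(z_1-z_2)^2}$ is the Bergman kernel up to the overall constant $(2\lambda)^2$; this is one of the ``trivial redefinitions by powers of $2\lambda$''.

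Next I would match the recursion itself. For $b=1$ this is already \sref{Theorem}{thm:G-residue}: applying \sref{Proposition}{Prop5} to \eqref{eq:CubicSchwingerComplex} inverts $\hat{K}_{z_1}$ and yields exactly the $\omega_{g,1}$-case of \sref{Theorem}{theorem1}, once one checks that the kernel $K(z_0,z)=\tfrac{\lambda}{(z^2-z_0^2)(y(z)-y(-z))}$ of the theorem agrees --- using $y(z)-y(-z)=\G_0(z)-\G_0(-z)$ and the expansion \eqref{Gzminusz} --- with the kernel of \sref{Proposition}{Prop5} up to powers of $2\lambda$ and the fixed sign. For general $b$ I would apply $[z_1^2\hat{K}_{z_1}\tfrac{1}{z_1}]^{-1}$ to the full equation \eqref{eq:CubicSchwingerComplex}: the quadratic and genus-lowering terms become, under $\Res_{z\to 0}$, the symmetric sum $\sum'_{h+h'=g,\,I\uplus I'=J}\omega_{h,|I|+1}(z,I)\,\omega_{h',|I'|+1}(-z,I')$ of \sref{Theorem}{theorem1}, while the boundary-mixing derivative term $(2\lambda)^3\sum_{\zeta\in J}\tfrac{\partial}{\zeta\partial\zeta}\tfrac{\G_g(z_1,J\setminus\{\zeta\})-\G_g(\zeta,J\setminus\{\zeta\})}{z_1^2-\zeta^2}$ has to be shown to be precisely the contribution of the unstable factor $\omega_{0,2}$ inside the TR sum (it is regular in $z_1$ at $0$ otherwise). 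Equivalently one can argue by induction on $b$, observing that the boundary creation operator $\hat{\mathrm{A}}^{\dag g}_{J,z}$ of \sref{Definition}{DefOp} implements exactly the action of adding one more boundary on the TR side; this is the content of \sref{Theorem}{finaltheorem} and \sref{Corollary}{coro1}.

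The main obstacle will be this general-$b$ reconciliation: one must track how the unstable pieces $\omega_{0,1}$ and $\omega_{0,2}$, which appear implicitly on the right-hand side of topological recursion but not in \eqref{eq:CubicSchwingerComplex}, are reproduced --- concretely, matching the derivative term above against the $\omega_{0,2}$-boundary insertions in the TR sum, while carefully bookkeeping all powers of $2\lambda$ and the sign of $K$. Everything else is routine given \sref{Proposition}{Prop5}, \sref{Theorem}{thm:G-residue}, \sref{Theorem}{finaltheorem} and \sref{Proposition}{structure-Gz}, which already establish that each $\G_g(z_1,\dots,z_b)$ is the holomorphic, odd-in-each-variable function with poles only at $z_i=0$ that topological recursion manufactures from the spectral curve of \sref{Definition}{def1}.
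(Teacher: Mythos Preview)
The paper does not give its own proof of this theorem: it is quoted as \cite[Thm.~6.4.4]{Eynard:2016yaa}, and the surrounding text only asserts that ``with these definitions, up to trivial redefinitions by powers of $2\lambda$, the theorems proved in \cite{Eynard:2016yaa} apply''. The paper then explicitly declines to verify the general case, saying that the $J=\emptyset$ case is essentially \sref{Theorem}{thm:G-residue} and that ``for us there is no need to prove the general case because higher $\omega(J)$ can be obtained from \sref{Theorem}{finaltheorem}''. So there is nothing to compare your argument against in the paper itself beyond those two sentences.

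Your proposal is correct and substantially more thorough than what the paper writes. Your check that $x$ has a single simple branch point at $z=0$ with $y'(0)=\varrho_0/(2\lambda)\neq 0$, and your computation reducing $\omega_{0,2}$ to $(2\lambda)^2\,dz_1\,dz_2/(z_1-z_2)^2$, are exactly the admissibility verifications the paper leaves implicit. Your identification of the $b=1$ case with \sref{Theorem}{thm:G-residue} is precisely the paper's remark. For general $b$ you go beyond the paper: your plan to invert $\hat{K}_{z_1}$ on the full equation \eqref{eq:CubicSchwingerComplex} and match the derivative term against the $\omega_{0,2}$-insertions in the topological-recursion sum is the natural direct route, but the paper avoids it entirely and instead appeals to the boundary-creation mechanism of \sref{Theorem}{finaltheorem}.

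One point worth making explicit, which the paper stresses in the paragraph after the theorem: Eynard's own proof proceeds by contour deformation and requires $y(z)$ to be \emph{meromorphic}, which holds only for finite $\mathcal{N}',V$. In the continuum limit $y(z)$ has a branch cut along $[\sqrt{1+c},\sqrt{(1+2e(\Lambda^2))^2+c}]$, so quoting \cite{Eynard:2016yaa} directly is not strictly justified there. Your use of \sref{Proposition}{Prop5}, \sref{Theorem}{thm:G-residue} and \sref{Theorem}{finaltheorem} is exactly the combinatorial bypass the paper advocates, and it is what makes the statement hold after the large-$\mathcal{N},V$ limit; you could say this more plainly.
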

\noindent
This theorem motivated our ansatz 
for an inverse of $\hat{K}_z$ as the residue involving $K(z,z')$.
The case $J=\emptyset$ of \sref{Theorem}{theorem1} is essentially the same 
as \sref{Theorem}{thm:G-residue}.
For us there is no need to prove the 
general case because higher $\omega(J)$ can be obtained from 
\sref{Theorem}{finaltheorem}. 

Notice that our proof differs completely and is more combinatorial in comparison to 
the proof of \sref{Theorem}{theorem1}, since \sref{Theorem}{theorem1}
considers a deformation of a complex contour for which
$y(z)$ has to be a meromorphic function (which can be achieved 
for finite $\mN',V$) instead of a sectional holomorphic function.
The contour is then moved through all possible poles and picks up the residues. However, moving the contour 
through a cut (even of infinite length for $\Lambda^2\to\infty$) is much harder and cannot be done 
without further effort. 

In other words, we have shown that the proofs of \sref{Theorem}{finaltheorem} and \sref{Theorem}{thm:G-residue}
are also valid if the large $\mN,V$-limit is performed first, which implies that the topological recursion and 
the large  $\mN,V$-limit commutes in that special case.

\section[Free Energy]{Free Energy $F_g$\footnote{This section is taken from our paper
\cite{Grosse:2019nes}}}\label{Sec:CubicFreeEnergy}
The free energy $F_g$ generates the connected vacuum graphs of genus $g$. Since the partition function $\Z[0]$
generates (not necessarily
connected) graphs, the free energy is understood as the logarithm of the partition function or equivalently after genus expansion
\begin{align*}
 \Z[0]=\exp\left(\sum_{g=0}^\infty V^{2-2g} F_g\right).
\end{align*}
To derive the free energies, the inverse of the boundary creation operator has to be defined in such a way 
that it produces the 
free energy $F_g$  from the 1-point function $\G_g(z)$ of genus $g$ uniquely. 
\begin{dfnt}
We introduce the operators
\begin{align}
\hat{\mathrm{A}}^{\dag}_z &:=
\sum_{l= 0}^{\infty} \Big(-\frac{(3+2l) \varrho_{l+1}}{
\varrho_0 z^3}+\frac{3+2l}{z^{5+2l}}\Big)
\frac{\partial}{\partial \varrho_l}, &
\hat{\mathrm{N}} &= - \sum_{l=0}^\infty
\varrho_l\frac{\partial}{\partial \varrho_l} ,
\nonumber
\\
\hat{\mathrm{A}}_{\check{z}}  f(\bullet) &:=-
\sum_{l=0}^\infty \Res\displaylimits_{z\to 0}
\Big[\frac{z^{4+2l} \varrho_l}{3+2l} f(z) dz\Big].
\end{align}
We call $\hat{\mathrm{A}}_{\check{z}}$ a {\sffamily boundary
annihilation operator} acting on Laurent polynomials $f$. 
\end{dfnt}
\begin{prps}\label{prop:F}
There is a unique function $F_g$ of $\{\varrho_l\}$ satisfying 
$\G_g(z) = (2\lambda)^3 \hat{\mathrm{A}}^{\dag}_z F_g$, 
\begin{align*}
F_1=-\frac{1}{24} \log \varrho_0,\qquad 
F_g&= 
\frac{1}{(2g-2)(2\lambda)^3} 
\hat{\mathrm{A}}_{\check{z}}  \G_g(\bullet) 
\text{ for } g\geq 1.
\end{align*}
The $F_g$ have for $g>1$ a presentation as
\begin{equation}\label{Fg-homogeneous}
F_g= (2 \lambda)^{4g-4} \frac{P_{3g-3}(\varrho) }{\varrho_0^{2g-2}},
\end{equation}
where $P_{3g-3}(\varrho)$ is a $(3g-3)$-weighted polynomial in 
$\{\frac{\varrho_1}{\varrho_0},\dots,
\frac{\varrho_{3g-3}}{\varrho_0}\}$.
\end{prps}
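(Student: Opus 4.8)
The plan is to reduce the statement to two elementary commutation identities together with the homogeneity of $\G_g$ already isolated in \sref{Proposition}{structure-Gz}; the only genuinely delicate point is \emph{existence} of a $\hat{\mathrm{A}}^{\dag}_z$-primitive. For \textbf{uniqueness}, note that the coefficient of $z^{-(5+2l)}$ in $\hat{\mathrm{A}}^{\dag}_z F$ equals $(3+2l)\,\partial F/\partial\varrho_l$, so $\hat{\mathrm{A}}^{\dag}_z F=0$ forces $\partial F/\partial\varrho_l=0$ for every $l$; hence $\ker\hat{\mathrm{A}}^{\dag}_z$ consists of constants and $F_g$ is determined up to an additive constant. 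This constant is fixed for $g\geq2$ by requiring $F_g$ to have the homogeneous form \eqref{Fg-homogeneous} (a $(3g-3)$-weighted polynomial with $3g-3\geq3$ has no constant term) and for $g=1$ by the stated $-\tfrac1{24}\log\varrho_0$, which one checks by direct substitution against $\G_1$ of \sref{Proposition}{Prop:Cubic1+1+1}.

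Next I would establish the two \textbf{operator identities}. For any $g=g(\varrho)$ the coefficient of $z^{-(5+2m)}$ in $\hat{\mathrm{A}}^{\dag}_z g$ is $(3+2m)\,\partial_{\varrho_m}g$ (the $z^{-3}$-part of $\hat{\mathrm{A}}^{\dag}_z$ never reaches that order), so $\hat{\mathrm{A}}_{\check z}\hat{\mathrm{A}}^{\dag}_z g=-\sum_m\tfrac{\varrho_m}{3+2m}(3+2m)\partial_{\varrho_m}g=-\sum_m\varrho_m\partial_{\varrho_m}g=\hat{\mathrm{N}}g$, i.e.\ $\hat{\mathrm{A}}_{\check z}\hat{\mathrm{A}}^{\dag}_z=\hat{\mathrm{N}}$. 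Using $[\hat{\mathrm{N}},\partial_{\varrho_l}]=\partial_{\varrho_l}$ and $\hat{\mathrm{N}}(\varrho_{l+1}/\varrho_0)=0$ one gets $[\hat{\mathrm{N}},\hat{\mathrm{A}}^{\dag}_z]=\hat{\mathrm{A}}^{\dag}_z$, equivalently $\hat{\mathrm{A}}^{\dag}_z\hat{\mathrm{N}}=(\hat{\mathrm{N}}-1)\hat{\mathrm{A}}^{\dag}_z$. Combining them, on the image of $\hat{\mathrm{A}}^{\dag}_z$ one has $\hat{\mathrm{A}}^{\dag}_z\hat{\mathrm{A}}_{\check z}=\hat{\mathrm{N}}-1$, since $\hat{\mathrm{A}}^{\dag}_z\hat{\mathrm{A}}_{\check z}(\hat{\mathrm{A}}^{\dag}_z H)=\hat{\mathrm{A}}^{\dag}_z\hat{\mathrm{N}}H=(\hat{\mathrm{N}}-1)\hat{\mathrm{A}}^{\dag}_z H$.

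Now \textbf{existence}: for $g\geq2$ I must show $\G_g\in\mathrm{im}\,\hat{\mathrm{A}}^{\dag}_z$, i.e.\ that the system $(3+2m)\,\partial_{\varrho_m}H=[\G_g]_{5+2m}$ is integrable and that the induced $z^{-3}$-coefficient $-\varrho_0^{-1}\sum_{l\geq0}\varrho_{l+1}[\G_g]_{5+2l}$ equals $[\G_g]_3$. The last condition is exactly the $z^{-2}$-component of the $b=1$ Schwinger--Dyson equation \eqref{eq:CubicSchwingerComplex}: writing $\G_g(z)=\sum_{j\geq1}[\G_g]_{2j+1}z^{-2j-1}$ and applying $\hat K_z$ through \sref{Lemma}{lemmaopK} gives $\hat K_z\G_g=\sum_{n\geq1}z^{-2n}\sum_{k\geq0}\varrho_k[\G_g]_{2(n+k)+1}$, and since the right-hand side of \eqref{eq:CubicSchwingerComplex} carries no $z^{-2}$ term the $n=1$ line reads $\sum_{k\geq0}\varrho_k[\G_g]_{2k+3}=0$, which rearranges to the required identity; integrability (and the identification of $H$ with the genus-$g$ free energy of $\log\Z[0]$) I would take from the boundary-creation interpretation of $\hat{\mathrm{A}}^{\dag}_z$ set up before \sref{Definition}{DefOp}. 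Once $\G_g=(2\lambda)^3\hat{\mathrm{A}}^{\dag}_z F_g$ is available, the homogeneity $\hat{\mathrm{N}}\G_g=(2g-1)\G_g$ of \sref{Proposition}{structure-Gz} combined with $\hat{\mathrm{A}}^{\dag}_z\hat{\mathrm{A}}_{\check z}=\hat{\mathrm{N}}-1$ on the image gives the dilaton relation $\hat{\mathrm{A}}^{\dag}_z\hat{\mathrm{A}}_{\check z}\G_g=(2g-2)\G_g$, so that $\tfrac1{(2g-2)(2\lambda)^3}\hat{\mathrm{A}}_{\check z}\G_g$ is a $\hat{\mathrm{A}}^{\dag}_z$-primitive of $\tfrac1{(2\lambda)^3}\G_g$ and, by uniqueness, equals $F_g$. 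I expect this existence step --- making the free-energy interpretation of $\hat{\mathrm{A}}^{\dag}_z$ rigorous, or checking the integrability conditions straight from the recursion of \sref{Theorem}{thm:G-residue} --- to be the main obstacle.

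Finally, for the \textbf{weighted form} \eqref{Fg-homogeneous}, I would insert $\G_g(z)=(2\lambda)^{4g-1}\varrho_0^{-(2g-1)}\sum_{k=0}^{3g-2}P_{3g-2-k}(\varrho)\,z^{-2k-3}$ from \sref{Proposition}{structure-Gz} into $F_g=\tfrac1{(2g-2)(2\lambda)^3}\hat{\mathrm{A}}_{\check z}\G_g$; by \eqref{Res} the residue picks out the $k=l+1$ term, giving $F_g=-\tfrac{(2\lambda)^{4g-4}}{(2g-2)\varrho_0^{2g-1}}\sum_{l=0}^{3g-3}\tfrac{\varrho_l}{3+2l}P_{3g-3-l}(\varrho)$. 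The $l=0$ term contributes $\tfrac1{3}P_{3g-3}(\varrho)/\varrho_0^{2g-2}$; for $l\geq1$, writing $\varrho_l=\varrho_0(\varrho_l/\varrho_0)$ displays each term as $\varrho_0^{-(2g-2)}$ times an $l$-weighted monomial times the $(3g-3-l)$-weighted $P_{3g-3-l}$, i.e.\ $\varrho_0^{-(2g-2)}$ times a $(3g-3)$-weighted polynomial in $\{\varrho_1/\varrho_0,\dots,\varrho_{3g-3}/\varrho_0\}$. Summing gives $F_g=(2\lambda)^{4g-4}\varrho_0^{-(2g-2)}P_{3g-3}(\varrho)$ as claimed.
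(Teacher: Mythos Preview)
Your route is genuinely different from the paper's. The paper takes $F_g := \tfrac{1}{(2g-2)(2\lambda)^3}\hat{\mathrm{A}}_{\check z}\G_g$ as the \emph{definition} for $g\geq 2$, first derives the weighted form \eqref{Fg-homogeneous} from \sref{Proposition}{structure-Gz} (your final paragraph carries out the same residue computation), reads off $\hat{\mathrm{N}}F_g=(2g-2)F_g$ from that form, and then uses only $\hat{\mathrm{A}}_{\check z}\hat{\mathrm{A}}^{\dag}_\bullet=\hat{\mathrm{N}}$ to obtain $\hat{\mathrm{A}}_{\check z}\bigl(\hat{\mathrm{A}}^{\dag}_z F_g-\tfrac{1}{(2\lambda)^3}\G_g\bigr)=0$, concluding $\hat{\mathrm{A}}^{\dag}_z F_g=\tfrac{1}{(2\lambda)^3}\G_g$ by an injectivity claim for $\hat{\mathrm{A}}_{\check z}$. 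You instead attempt existence ($\G_g\in\mathrm{im}\,\hat{\mathrm{A}}^{\dag}_z$) first and then derive the annihilation formula via your second identity $\hat{\mathrm{A}}^{\dag}_z\hat{\mathrm{A}}_{\check z}=\hat{\mathrm{N}}-1$ on the image. Your $z^{-2}$-component argument for the $z^{-3}$ compatibility constraint is correct (the inhomogeneity $\lambda\G_{g-1}(z,z)+\lambda\sum_{h=1}^{g-1}\G_h\G_{g-h}$ contains only powers $z^{-2n}$ with $n\geq 2$) and is not made explicit in the paper.

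Your admitted gap is real and is the crux: the mixed-partial integrability conditions do \emph{not} follow from the heuristic boundary-creation interpretation, which was only a formal motivation for $\D<2$ before $\hat{\mathrm{A}}^{\dag}_z$ was defined independently in \sref{Definition}{DefOp}. Interestingly, the paper's final step hits the same wall in different form: $\hat{\mathrm{A}}_{\check z}$ is not injective on Laurent polynomials with the correct weight grading alone --- for instance $f(z)=\varrho_0^{-3}\bigl(3(\varrho_1/\varrho_0)^3 z^{-5}-5(\varrho_1/\varrho_0)^2 z^{-7}\bigr)$ has all the right weights for $g=2$ and lies in $\ker\hat{\mathrm{A}}_{\check z}$ --- and it is injective precisely on $\mathrm{im}\,\hat{\mathrm{A}}^{\dag}_z$, which is what both arguments ultimately need. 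To close the gap one would have to verify $\G_g\in\mathrm{im}\,\hat{\mathrm{A}}^{\dag}_z$ independently, most naturally by induction on $g$ through the residue recursion of \sref{Theorem}{thm:G-residue}.
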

\begin{proof} 
The case $g=1$ is checked by direct comparison with \eqref{Prop:Cubic1+1+1}. 
From \sref{Proposition}{structure-Gz}
we conclude 
\begin{align*}
\frac{1}{(2\lambda)^3} \hat{\mathrm{A}}_{\check{z}} \G_g(\bullet)
&=-
(2 \lambda)^{4g-4} \Res\displaylimits_{z\to 0}\Big[
\sum_{l=0}^\infty \frac{\varrho_lz^{4+2l}}{(3+2l)}
\sum_{k=0}^{3g-2} \frac{P_{3g-2-k}(\varrho)}{\varrho_0^{2g-1} z^{2k+3}}
dz\Big]
\\
&= (2 \lambda)^{4g-4} 
\sum_{k=1}^{3g-2} \frac{
\frac{\varrho_{k-1}}{\varrho_0}
\cdot P_{3g-2-k}(\varrho)}{\varrho_0^{2g-2}}
= (2 \lambda)^{4g-4} \frac{P_{3g-3}(\varrho) }{\varrho_0^{2g-2}},
\end{align*}
which confirms \eqref{Fg-homogeneous}. Observe that the 
total $\varrho$-counting operator $\hat{\mathrm{N}}$ applied to 
any polynomial in $\{\frac{\varrho_1}{\varrho_0},
\frac{\varrho_2}{\varrho_0},\dots\}$ is zero. Therefore, for $g>1$,
\[
\hat{\mathrm{N}} 
\Big(\frac{1}{(2\lambda)^3} \hat{\mathrm{A}}_{\check{z}} \G_g(\bullet)\Big)
= (2g-2) \cdot 
\Big(\frac{1}{(2\lambda)^3} \hat{\mathrm{A}}_{\check{z}} \G_g(\bullet)\Big).
\]
The boundary annihilation operator is designed to satisfy
$\hat{\mathrm{A}}_{\check{z}} \circ \hat{\mathrm{A}}^{\dag}_{\bullet} 
=\hat{\mathrm{N}}$.
Dividing the previous equation
by $(2g-2)$ and inserting the ansatz for $F_g$ given in the 
proposition, we have
\[
0= \hat{\mathrm{N}} F_g 
-\frac{1}{(2\lambda)^3} \hat{\mathrm{A}}_{\check{z}} \G_g(\bullet)
= \hat{\mathrm{A}}_{\check{z}} \Big(\hat{\mathrm{A}}^{\dag}_{\bullet} F_g
-\frac{1}{(2\lambda)^3} \G_g(\bullet)\Big).
\]
Since $f(z):=\hat{\mathrm{A}}^{\dag}_{z} F_g
-\frac{1}{(2\lambda)^3} \G_g(z)$ is by (\ref{Fg-homogeneous}) 
and Proposition~\ref{structure-Gz} a Laurent polynomial 
bounded at $\infty$, application of 
$\hat{\mathrm{A}}_{\check{z}}$ can only vanish if $f(z)\equiv 0$. 
This finishes the proof.
\end{proof}
\begin{rmk}
  \sref{Proposition}{prop:F} shows that the $F_g$ provide the most
  condensed way to describe the non-planar sector of the
  $\Phi^3$-matricial QFT model.  All information about the genus-$g$
  sector is encoded in the $p(3g-3)$ rational numbers which form the
  coefficients in the $(3g-3)$-weighted polynomial in
  $\{\frac{\varrho_1}{\varrho_0},
  \frac{\varrho_2}{\varrho_0},\dots\}$. From these polynomials we
  obtain the $(1+\dots+1)$-point function with $b$ boundary components
  via $\G_g(z)=(2\lambda)^3 \hat{\mathrm{A}}^{\dag}_z F_g$
  followed by \sref{Theorem}{finaltheorem}.
\end{rmk}

\begin{lemma}
For $(2g+b-2)>0$, the operator $\hat{\mathrm{N}}$ measures the 
Euler characteristics, 
\begin{align*} 
\hat{\mathrm{N}} \G_g(z_1,...,z_b) = (2g+b-2)\G_g(z_1,...,z_b) .
\end{align*}
\end{lemma}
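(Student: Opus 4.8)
The plan is to reduce the statement to a single commutation relation between the Euler-type operator $\hat{\mathrm{N}}=-\sum_{l\geq 0}\varrho_l\frac{\partial}{\partial\varrho_l}$ and the boundary creation operators $\hat{\mathrm{A}}^{\dag g}_{J,z}$ of \sref{Definition}{DefOp}, and then to run an induction on the number $b$ of boundary components using the solution formula of \sref{Theorem}{finaltheorem}, which exhibits $\G_g(z_1,\dots,z_b)$ as a single boundary creation operator applied to $\G_g(z_1,\dots,z_{b-1})$, up to an overall factor $(2\lambda)^3$.

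First I would establish the commutator $[\hat{\mathrm{N}},\hat{\mathrm{A}}^{\dag g}_{J,z}]=\hat{\mathrm{A}}^{\dag g}_{J,z}$. The operator $\hat{\mathrm{N}}$ is the Euler vector field assigning weight $-1$ to every $\varrho_l$; hence, as identities of differential operators, $[\hat{\mathrm{N}},\frac{\partial}{\partial\varrho_l}]=\frac{\partial}{\partial\varrho_l}$ and $[\hat{\mathrm{N}},\varrho_0^{-1}]=\varrho_0^{-1}$, whereas the degree-zero quantities $\varrho_{l+1}/\varrho_0$, the purely $z,\zeta$-dependent coefficients $(3+2l)z^{-3}$, $(3+2l)z^{-5-2l}$, $(z^3\zeta)^{-1}$, and the derivations $\frac{\partial}{\partial\zeta}$ all commute with $\hat{\mathrm{N}}$. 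Therefore each of the three summands in $\hat{\mathrm{A}}^{\dag g}_{J,z}$ raises the $\hat{\mathrm{N}}$-weight by exactly one, and summing gives the claimed commutator, independently of the cutoff $3g-3+|J|$ in the defining sum. The same bookkeeping shows that $\hat{\mathrm{N}}$ annihilates any polynomial in $\{\varrho_1/\varrho_0,\varrho_2/\varrho_0,\dots\}$ and multiplies $\varrho_0^{-n}$ by $n$, a fact already used in the proof of \sref{Proposition}{prop:F}.

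Next I would check the base cases. For $b=1$ and $g\geq 1$, \sref{Proposition}{structure-Gz} gives $\G_g(z)=(2\lambda)^{4g-1}\sum_{k}P_{3g-2-k}(\varrho)\,\varrho_0^{-(2g-1)}z^{-(2k+3)}$ with $P_j$ a polynomial in $\{\varrho_1/\varrho_0,\dots\}$, so by the previous remark $\hat{\mathrm{N}}\G_g(z)=(2g-1)\G_g(z)=(2g+1-2)\G_g(z)$. For $g=0$ the seed is the $\varrho$-independent function $\G_0(z_1,z_2)=\frac{(2\lambda)^2}{z_1z_2(z_1+z_2)^2}$ of \sref{Proposition}{Prop:Cubic1+1}, which is killed by $\hat{\mathrm{N}}$, matching the value $2\cdot 0+2-2=0$ (this case lies below the threshold $2g+b-2>0$ but seeds the recursion). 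For the induction step, \sref{Theorem}{finaltheorem} yields $\G_g(z_1,\dots,z_b)=(2\lambda)^3\,\hat{\mathrm{A}}^{\dag g}_{z_1,\dots,z_b}\,\G_g(z_1,\dots,z_{b-1})$, with the convention $\G_0(z_1,z_2)=\G_0(z_1|z_2)$ for $g=0$; here the overall factor telescopes because $3b-3-(3(b-1)-3)=3$. Applying $\hat{\mathrm{N}}$, pushing it through $\hat{\mathrm{A}}^{\dag g}_{z_1,\dots,z_b}$ via $\hat{\mathrm{N}}\hat{\mathrm{A}}^{\dag g}=\hat{\mathrm{A}}^{\dag g}(\hat{\mathrm{N}}+1)$, and inserting the inductive hypothesis $\hat{\mathrm{N}}\G_g(z_1,\dots,z_{b-1})=(2g+b-3)\G_g(z_1,\dots,z_{b-1})$ gives $\hat{\mathrm{N}}\G_g(z_1,\dots,z_b)=(2g+b-2)\G_g(z_1,\dots,z_b)$, as required.

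The one genuinely delicate point is the commutator step: $\hat{\mathrm{N}}$ is not a derivation of the field of rational functions in the $\varrho_l$, so the bracket must be interpreted as an identity of differential operators and the $\varrho_0^{-1}$-factors in the first and third parts of $\hat{\mathrm{A}}^{\dag g}_{J,z}$ tracked with care; everything after that is short bookkeeping. I expect no further obstacle, since \sref{Theorem}{finaltheorem} and \sref{Proposition}{structure-Gz} (whose hypotheses are established there by a joint induction on $g$) supply exactly the structural input the argument needs.
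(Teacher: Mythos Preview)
Your proof is correct and follows essentially the same approach as the paper: establish the commutator $[\hat{\mathrm{N}},\hat{\mathrm{A}}^{\dag g}_{J,z}]=\hat{\mathrm{A}}^{\dag g}_{J,z}$ and then induct on $b$ via \sref{Theorem}{finaltheorem}. The only cosmetic difference is the base case --- the paper checks the two $(2g+b-2)=1$ cases directly and for $g\geq 2$, $b=1$ goes through $\hat{\mathrm{N}}F_g=(2g-2)F_g$ from \sref{Proposition}{prop:F}, whereas you invoke \sref{Proposition}{structure-Gz} for all $g\geq 1$, $b=1$; both are equivalent.
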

\begin{proof}
Both cases with $(2g+b-2)=1$ are directly checked. The general case follows 
by induction from $[\hat{\mathrm{N}},\hat{\mathrm{A}}^{\dag g}_{J,z}]=\hat{\mathrm{A}}^{\dag g}_{J,z}$ 
in combination with \sref{Theorem}{finaltheorem} and 
$\hat{\mathrm{N}} F_g=(2g-2)F_g$ for $g\geq 2$.
\end{proof}
\begin{cor}
\begin{align*}
\hat{\mathrm{A}}_{\check{z}} \G_g(\bullet,z_2,...,z_b)
= (2\lambda)^{3} (2g+b-3)\G_g(z_2,...,z_b)
\end{align*}
whenever $(2g+b-3)>0$.
\end{cor}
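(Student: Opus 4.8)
The plan is to read the left-hand side as the boundary annihilation operator $\hat{\mathrm{A}}_{\check{z}}$ applied, in one extra variable, to a $(1+\dots+1)$-point function that has itself been produced by one application of the boundary creation operator; everything then collapses via the two structural identities $\hat{\mathrm{A}}_{\check{z}}\circ\hat{\mathrm{A}}^{\dag}_{\bullet}=\hat{\mathrm{N}}$ (recorded in the proof of \sref{Proposition}{prop:F}) and $\hat{\mathrm{N}}\,\G_g(z_2,\dots,z_b)=(2g+b-3)\,\G_g(z_2,\dots,z_b)$, the latter being the preceding Lemma, whose hypothesis $2g+(b-1)-2>0$ is exactly the assumed $2g+b-3>0$. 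For $b=1$ the statement is, with the convention $\G_g(\,)=F_g$, nothing but \sref{Proposition}{prop:F} rearranged, so it suffices to treat $b\geq 2$.

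For $b\geq 2$ I would first use \sref{Theorem}{finaltheorem} together with \sref{Corollary}{coro1} and the full symmetry of the cubic correlation functions, in the form $\G_g(z_1,\dots,z_b)=(2\lambda)^3\,\hat{\mathrm{A}}^{\dag g}_{\{z_1,\dots,z_{b-1}\},z_b}\,\G_g(z_1,\dots,z_{b-1})$ obtained by comparing the $b$- and $(b-1)$-instances of \eqref{eqfinalthm}, to rewrite
\begin{align*}
\G_g(\bullet,z_2,\dots,z_b)=(2\lambda)^3\,\hat{\mathrm{A}}^{\dag g}_{\{z_2,\dots,z_b\},\bullet}\,\G_g(z_2,\dots,z_b).
\end{align*}
By \sref{Proposition}{structure-Gz} and induction on $b$ through \sref{Theorem}{finaltheorem}, the function $\G_g(z_2,\dots,z_b)$ depends only on the moments $\varrho_0,\dots,\varrho_{3g-3+(b-1)}$; consequently, on this function the truncation in $\hat{\mathrm{A}}^{\dag g}_{\{z_2,\dots,z_b\},\bullet}$ is immaterial, so that $\hat{\mathrm{A}}^{\dag g}_{\{z_2,\dots,z_b\},\bullet}\,\G_g(z_2,\dots,z_b)=\hat{\mathrm{A}}^{\dag}_{\bullet}\,\G_g(z_2,\dots,z_b)+\sum_{\zeta\in\{z_2,\dots,z_b\}}\frac{1}{\varrho_0\bullet^{3}\zeta}\frac{\partial}{\partial\zeta}\,\G_g(z_2,\dots,z_b)$, which is a Laurent polynomial in $\bullet$ on which $\hat{\mathrm{A}}_{\check{z}}$ is defined.

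It then remains to apply $\hat{\mathrm{A}}_{\check{z}}$ in the $\bullet$-variable to the two summands. The first gives $\hat{\mathrm{A}}_{\check{z}}\circ\hat{\mathrm{A}}^{\dag}_{\bullet}\,\G_g(z_2,\dots,z_b)=\hat{\mathrm{N}}\,\G_g(z_2,\dots,z_b)=(2g+b-3)\,\G_g(z_2,\dots,z_b)$ by the Lemma. The second summand is a $\bullet$-independent coefficient times $\bullet^{-3}$, and $\hat{\mathrm{A}}_{\check{z}}(\bullet^{-3})=-\sum_{l\geq 0}\frac{\varrho_l}{3+2l}\Res_{z\to 0}[z^{1+2l}\,dz]=0$ because every $z^{1+2l}$ is regular at the origin; hence the $\partial_{\zeta}$-part of $\hat{\mathrm{A}}^{\dag g}$ does not contribute. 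Reinstating the prefactor $(2\lambda)^3$ gives the asserted identity; the case $g=0$, $b\geq 4$ is handled by the same computation. The argument is essentially bookkeeping — the only point that needs a genuine check is the moment count that makes the truncation harmless, and after that no real obstacle remains.
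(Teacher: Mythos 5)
Your argument is correct and is exactly the (implicit) route the paper intends: write $\G_g(\bullet,z_2,\dots,z_b)=(2\lambda)^3\hat{\mathrm{A}}^{\dag g}_{\{z_2,\dots,z_b\},\bullet}\G_g(z_2,\dots,z_b)$, use $\hat{\mathrm{A}}_{\check{z}}\circ\hat{\mathrm{A}}^{\dag}_{\bullet}=\hat{\mathrm{N}}$ and the preceding Lemma with $b-1$ boundaries. The paper states the corollary without proof, and you correctly supply the two tacit checks — that the truncation of the $\varrho_l$-sum is harmless by the moment count, and that the $\partial_\zeta$-part only produces a multiple of $\bullet^{-3}$, which $\hat{\mathrm{A}}_{\check{z}}$ annihilates.
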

\noindent
Hence, up to a rescaling, $\hat{\mathrm{A}}_{\check{z}}$
indeed removes the boundary component previously located at $z$.  We
also have $\hat{\mathrm{A}}_{\check{z}} F_g=0$ for all $g\geq
1$, so the $F_g$ play the r\^ole of a vacuum or the 
free energy.
Note that $\G_0(z)$ cannot be produced with the operator $\hat{\mathrm{A}}^{\dag }_{z}$ by whatever $F_0$. 

\begin{rmk}\label{Rmk:F0}
 In the case of $\D=0$ the free energy of genus $g=0$ exists and is given by (adapted from \cite{Makeenko:1991ec})
 \begin{align*}
  F_0=&\frac{1}{12\lambda^2}\int_{0}^{\Lambda^2}dt\,r(t)\sqrt{(1+2e(t))^2+c}^3-
  \frac{1}{12\lambda^2}\int_{0}^{\Lambda^2}dt\,r(t)(1+2e(t))^3\\
  &-\frac{ c}{8\lambda^2}\int_{0}^{\Lambda^2}dt\,r(t)\sqrt{(1+2e(t))^2+c}
  -\frac{c^3}{12\cdot 4^3\lambda^4}\\
  &-\frac{1}{2}\int_{0}^{\Lambda^2}dx\,r(x)\int_{0}^{\Lambda^2}dt\,r(t)\log(\sqrt{(1+2e(t))^2+c}+
  \sqrt{(1+2e(x))^2+c}).
 \end{align*}
Acting with the operator $T(x)$ (defined in \eqref{eq:Tx})
on $F_0$ gives precisely $G^0(x)=\frac{ \tilde{W}((1+2e(x))^2)-(1+2e(x))}{2\lambda}$. 
All parts proportional to $T(x)c$ vanish which makes $F_0$ stationary with respect to $c$.
By performing the computation $T(x)F_0$, the implicit function of $c$ 
\eqref{eq:cimplicit} for $\D<2$
is of tremendous importance. Note that the planar free energy diverges for $\Lambda^2\to\infty$
such that finite $\mN',V$ or finite $\Lambda^2$ are necessary which implies $\D=0$. 
Any renormalisation constant prevents to write down $F_0$ in a closed form even for finite $\Lambda^2$
since the implicit equation of $c$ differs.
\end{rmk}

\subsection{A Laplacian to Compute Intersection Numbers}\label{Sec:CubicLaplacian}
The naive picture of intersection numbers is that they are counting the number of intersections of curves, which should give 
positive integers. However, if one assumes complex curves up to some equivalence class this will change tremendously 
and can give rational intersection numbers instead of integers.

Let $\mathcal{M}_{g,b}$ be the \textit{moduli space} of equivalence classes of complex curves of genus $g$ with $b$ distinct 
marked points, modulo biholomorphic reparametrisation. For a negative Euler characteristic $\chi=2-2g-b$, the moduli space 
$\mathcal{M}_{g,b}$ is locally parameterised by $d_{g,b}:=(3g-3+b)$ complex parameters called \textit{moduli}. 
The fact that $\mathcal{M}_{g,b}$ has rational intersection numbers is due to the orbifold structure, which looks 
locally like the quotient space of $\mathbb{C}^n$ under a linear action of a finite group.
An orbifold is a generalisation of a manifold, where the finite group is trivial. 

The moduli space $\mathcal{M}_{g,b}$ is, in general, not compact and can be compactified by adding degenerate curves. 
The Deligne-Mumford compactification \cite{PMIHES_1969__36__75_0} provides an analytic structure and is the usual compactification used in the literature
denoted by $\overline{\mathcal{M}}_{g,b}$.

The $b$ distinct marked points lead naturally to a family $\{\mathcal{L}_1,..,\mathcal{L}_b\}$ of complex
line bundles over $\overline{\mathcal{M}}_{g,b}$. A classification of the line bundles is given by the first 
Chern class $c_1(\mathcal{L}_i)$ which is an element of the second cohomology group
$H^2(\overline{\mathcal{M}}_{g,b},\mathbb{Q})$. The first Chern class $c_1(\mathcal{L}_i)$ is independent of the 
choice of the connection on $\mathcal{L}_i$ and 
has a curvature form 
as representative. 

The \textit{intersection numbers} of $\overline{\mathcal{M}}_{g,b}$ are then defined by
\begin{align}\label{eq:Intersec}
 \langle\tau_{d_1}...\tau_{d_s}\rangle:=\int_{\overline{\mathcal{M}}_{g,b}}\bigwedge_{i=1}^s 
 \big(c_1(\mathcal{L}_i)\big)^{d_i},\qquad \sum_{i=1}^sd_i=d_{g,b}
\end{align}
where the 2-forms $c_1(\mathcal{L}_i)$ are multiplied with a commutative wedge product.
These numbers are topological invariants of $\overline{\mathcal{M}}_{g,b}$.

It was then conjectured by Witten \cite{Witten:1990hr} that the generating function of these 
intersection numbers satisfies a partial differential equation called the Korteweg-de Vries equation. 
Additionally, it was known that the integrability of matrix models is due to partial differential equation of the free energy
which was also of Korteweg-de Vries type. This observation was subsequently proved to be not a coincidence
\begin{thrm}(\cite{Kontsevich:1992ti})\label{Thm:Kontsevich}
 The generating function of the intersection numbers of $\psi$-classes on the moduli space $\overline{\mathcal{M}}_{g,b}$ 
 defined in \eqref{eq:Intersec} is equal to the free energy $F_g$ of the Kontsevich model
 \begin{align*}
  F_g=\sum_{(k)}\frac{\langle\tau_{2}^{k_2}\tau_3^{k_3}...\tau_{3g-2}^{k_{3g-2}}\rangle}{(-t_1)^{2g-2+\sum_ik_i}}
  \prod_{i=2}^{3g-2}\frac{t_i^{k_i}}{k_i!},\qquad \sum_{i\geq 2}(i-1)k_i=3g-3,
 \end{align*}
where $t_{i+1}:=-(2i+1)!!\varrho_{i}$ of \sref{Definition}{KontTime}.
\end{thrm}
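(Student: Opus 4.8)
The plan is to reduce the statement to Kontsevich's original theorem by exploiting that the free energies $F_g$, viewed as functions of the renormalised Kontsevich times $\{\varrho_l\}$ of \sref{Definition}{KontTime}, are \emph{independent} of the spectral dimension $\D$.

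First I would record that for $g\geq 1$ the $1$-point function $\G_g(z)$ is, by \sref{Theorem}{thm:G-residue} and \sref{Proposition}{structure-Gz}, a fixed rational function of $\varrho_0,\dots,\varrho_{3g-2}$ with rational coefficients, and that by \sref{Proposition}{prop:F} the same is true of $F_g$ (with $F_1=-\tfrac1{24}\log\varrho_0$ and $F_g=(2\lambda)^{4g-4}P_{3g-3}(\varrho)/\varrho_0^{2g-2}$ for $g\geq 2$). The only places where $\D$ could enter — the integral operator $\hat{K}_z$, its residue kernel $K(z,z')$, and the constants $c,Z$ — have already been absorbed into the $\varrho_l$ via \sref{Lemma}{lemmaopK}; hence the map $\{\varrho_l\}\mapsto F_g$ is one and the same for every $\D<8$. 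Consequently it suffices to verify the claimed identity for one convenient choice, namely $\D=0$ with finite matrices and trivial renormalisation constants $Z-1=\kappa=\nu=\zeta=0$.

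In the $\D=0$ case I would use the shift $\Phi\mapsto\Phi-E/\lambda$ noted after \sref{Proposition}{Prop:CubicOneP}: after this translation the action of \sref{Section}{Sec:CubicSchwinger} loses its quadratic term and becomes, up to an additive constant, $V\,\mathrm{Tr}\big({-}\tfrac{E^2}{\lambda}\Phi+\tfrac{\lambda}{3}\Phi^3\big)$, which after a standard rescaling of $\Phi$ is precisely Kontsevich's matrix Airy integral. Evaluating \sref{Definition}{KontTime} with the Dirac measure $\varrho(Y)=\frac{8\lambda^2}{V}\sum_k r_k\delta(Y-4e_k^2)$ gives
\begin{align*}
\varrho_l=\delta_{l,0}-\frac{4\lambda^2}{V}\sum_{k} \frac{r_k}{(4e_k^2+c)^{l+3/2}},
\end{align*}
i.e.\ exactly the Kontsevich/Miwa times, up to the normalisation encoded by $t_{i+1}=-(2i+1)!!\,\varrho_i$. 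Kontsevich's theorem \cite{Kontsevich:1992ti}, together with Witten's identification of the partition function of this model with the generating function of $\psi$-class intersection numbers \cite{Witten:1990hr}, then yields
\begin{align*}
F_g=\sum_{(k)}\frac{\langle\tau_2^{k_2}\tau_3^{k_3}\cdots\tau_{3g-2}^{k_{3g-2}}\rangle}{(-t_1)^{2g-2+\sum_i k_i}}\prod_{i=2}^{3g-2}\frac{t_i^{k_i}}{k_i!},\qquad \sum_{i\geq 2}(i-1)k_i=3g-3 .
\end{align*}
Both sides being rational functions of $\{\varrho_l\}$ — the left by the previous paragraph, the right manifestly after substituting $t_{i+1}=-(2i+1)!!\,\varrho_i$ and $-t_1=\varrho_0$ — the identity that holds at $\D=0$ is an identity of rational functions in the $\varrho_l$, hence persists for all $\D<8$.

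I expect the main obstacle to be the careful bookkeeping in the $\D=0$ reduction: normalising the matrix Airy integral so that our $\varrho_l$ coincide with the times on the intersection side, tracking the powers of $2\lambda$, of $V$, and the double factorials $(2i+1)!!$, and checking that the constant and linear terms produced by the shift $\Phi\mapsto\Phi-E/\lambda$ affect only $F_0$ and an overall normalisation rather than the stable sector. Matching the unstable cases — reconciling $F_1=-\tfrac1{24}\log\varrho_0$ with $\langle\tau_1\rangle_{1,1}=\tfrac1{24}$ and fixing signs — is routine but must be done explicitly; everything for $g\geq 2$ is then forced by the $(3g-3)$-weighted polynomial structure established in \sref{Proposition}{prop:F} and \sref{Proposition}{structure-Gz}.
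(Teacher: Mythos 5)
The paper itself gives no proof of this statement: it is quoted as Kontsevich's theorem \cite{Kontsevich:1992ti}, and its validity for the renormalised times of \sref{Definition}{KontTime} at arbitrary spectral dimension $\D<8$ is asserted in the summary on exactly the grounds you use, namely that by \sref{Proposition}{structure-Gz} and \sref{Proposition}{prop:F} the $F_g$ depend on the model only through $\varrho_0,\dots,\varrho_{3g-3}$ in a $\D$-independent way, so the identity need only be checked against the $\D=0$ matrix Airy integral. Your proposal is therefore the intended justification and is sound; the one step worth stating explicitly is that the finite-matrix data sweep out a Zariski-dense set of tuples $(\varrho_0,\dots,\varrho_{3g-2})$, so that the $\D=0$ verification really does yield an identity of functions of the $\varrho_l$ rather than one valid only on a subvariety.
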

\noindent
Since we have now a prescription to derive the free energy $F_g$ for any $g>0$, also the intersection numbers can be 
derived simultaneously due to \sref{Theorem}{Thm:Kontsevich}.

To achieve this, we express for $b=1$ in \eqref{eq:CubicSchwingerComplex}
$\G_g(z)=(2\lambda)^3 \hat{\mathrm{A}}^{\dag}_z F_g$ and 
$\G_g(z,z)=(2\lambda)^3
(\hat{\mathrm{A}}^{\dag}_z +\frac{1}{\varrho_0 z^4} 
\frac{\partial}{\partial z})(\G_g(z))$ and
multiply by $\frac{2 V^{4-2g}}{(2\lambda)^8}
\mathcal{Z}^{np}_V$. Summation over $g\geq 1$ gives
\begin{align}\label{ident-Z}
&0 =
\Big(\frac{2V^2}{(2\lambda)^4} \hat{K}_z \hat{\mathrm{A}}^\dag_z + 
\Big(\hat{\mathrm{A}}^\dag_z +\frac{1}{\varrho_0 z^4} 
\frac{\partial}{\partial z} \Big)
\hat{\mathrm{A}}^\dag_z +\frac{V^2}{4 (2\lambda)^4 z^4} \Big)
\mathcal{Z}_V^{np},\\
&\text{where}\qquad 
\mathcal{Z}_V^{np}
:= \exp\Big(\sum_{g=1}^\infty V^{2-2g} F_g\Big).\nonumber
\end{align}
We invert $\hat{K}_z$ via \sref{Proposition}{Prop5} and apply 
$\hat{\mathrm{A}}_{\check{z}}$ given by the residue
in \sref{Proposition}{prop:F}:
\begin{align*}
&\frac{2V^2}{(2\lambda)^4} 
\hat{\mathrm{N}} \mathcal{Z}_V^{np}
\\
&= {-} \!\sum_{\ell=0}^\infty 
\Res\displaylimits_{z\to 0}\!\Big[dz  \frac{z^{3+2\ell} \varrho_\ell}{(3+2\ell)}
\Res\displaylimits_{z'\to 0}\!
\Big[ dz'(z')^2 K(z,z')\Big(\Big(
\hat{\mathrm{A}}^\dag_{z'} {+}\frac{1}{\varrho_0 (z')^4} 
\frac{\partial}{\partial z'} \Big)
\hat{\mathrm{A}}^\dag_{z'} {+}\frac{V^2}{4 (2\lambda)^4 (z')^4} \Big)
\Big]\Big]\mathcal{Z}_V^{np}.
\nonumber
\end{align*}
We insert $K(z,z')$ from \sref{Proposition}{Prop5}, expand only 
the geometric series about $z'=0$ while keeping (\ref{Gzminusz}). Then 
the outer residue in $z$ is immediate
\begin{align}
\frac{2V^2}{(2\lambda)^4} 
\hat{\mathrm{N}} \mathcal{Z}_V^{np}
&=  
\Res\displaylimits_{z'\to 0}
\Big[ 
(z')^3\,dz' \frac{\sum_{\ell=0}^\infty \frac{(z')^{2\ell} \varrho_\ell}{
(3+2\ell)}}{\sum_{j=0}^\infty \varrho_j  (z')^{2j}}
\Big(\Big(
\hat{\mathrm{A}}^\dag_{z'} +\frac{1}{\varrho_0 (z')^4} 
\frac{\partial}{\partial z'} \Big)
\hat{\mathrm{A}}^\dag_{z'} +\frac{V^2}{4 (2\lambda)^4 (z')^4} \Big)
\Big]\Big]\mathcal{Z}_V^{np}.
\nonumber
\end{align}
We rename $z'$ to $z$ and introduce the function
\begin{align*}
\mathcal{R}(z)=\frac{\sum_{\ell=0}^\infty \frac{ \varrho_\ell z^{2\ell}}{
(3+2\ell)}}{\sum_{j=0}^\infty \varrho_j  z^{2j}}
=\sum_{m=0}^\infty \mathcal{R}_m(\varrho)\, z^{2m}.
\end{align*}
The denominator is given by \eqref{S-Bell}, without the 
$\frac{1}{z'\varrho_0}$ 
prefactor. It combines with the numerator to 
\begin{align}
\mathcal{R}_m(\varrho)&= 
\frac{S_m(\varrho)}{3 m!} - \sum_{k=1}^m 
\frac{\varrho_k}{(3+2k)\varrho_0} \frac{S_{m-k}(\varrho)}{(m-k)!}
=-\frac{2}{3} \sum_{k=1}^m 
\frac{k \varrho_k}{(3+2k)\varrho_0} \frac{S_{m-k}(\varrho)}{(m-k)!},
\label{Rmrho}
\end{align}
where we have used \eqref{xBell} for the first $S_m(\varrho)$ 
to achieve better control of signs.

The residue of $\frac{V^2}{4 (2\lambda)^4 z^4}$ is immediate and can 
be moved to the lhs:
\begin{align*}
\frac{2V^2}{(2\lambda)^4} 
\Big(\hat{\mathrm{N}}-\frac{1}{24}\Big) \mathcal{Z}_V^{np}
&=  \sum_{m=0}^\infty \mathcal{R}_m(\varrho)
\Res\displaylimits_{z\to 0}
\Big[ z^{3+2m}\,dz 
\Big(\Big(
\hat{\mathrm{A}}^\dag_{z} +\frac{1}{\varrho_0 z^4} 
\frac{\partial}{\partial z} \Big)
\hat{\mathrm{A}}^\dag_{z} 
\Big)\Big]\mathcal{Z}_V^{np}
\nonumber
\\
&=\Big[  
\sum_{k=0}^\infty 
\Big(-\frac{3(3+2k)\varrho_1 \varrho_{k+1}\mathcal{R}_1(\varrho)}{\varrho_0^3}
+\frac{3(3+2k)\varrho_{k+1} \mathcal{R}_2(\varrho)}{\varrho_0^2}
\Big)\frac{\partial}{\partial \varrho_k} 
\\
&+
\sum_{k,l=0}^\infty \frac{(3+2k)(3+2l)\mathcal{R}_1(\varrho)}{\varrho_0^2}
\varrho_{l+1}\frac{\partial}{\partial \varrho_l}
\varrho_{k+1}\frac{\partial}{\partial \varrho_k}
\\
&-\sum_{k,l=0}^\infty \frac{(3+2k)(3+2l)\mathcal{R}_{l+2}(\varrho)}{\varrho_0}
\Big(
\varrho_{k+1}\frac{\partial}{\partial \varrho_k}
\frac{\partial}{\partial \varrho_l}
+\frac{\partial}{\partial \varrho_l}
\varrho_{k+1}\frac{\partial}{\partial \varrho_k}
\Big)
\\
&+\sum_{k,l=0}^\infty (3+2k)(3+2l)\mathcal{R}_{k+l+3}(\varrho)
\frac{\partial}{\partial \varrho_k}
\frac{\partial}{\partial \varrho_l}
\\
&+\sum_{k=0}^\infty \frac{3(3{+}2k)\varrho_{k+1}\mathcal{R}_2(\varrho)}{\varrho_0^2}
\frac{\partial}{\partial \varrho_k}
-\sum_{k=0}^\infty \frac{(3{+}2k)(5{+}2k)\mathcal{R}_{k+3}(\varrho)}{\varrho_0}
\frac{\partial}{\partial \varrho_k}
\Big]\mathcal{Z}_V^{np}.
\end{align*}
Next we separate the $\varrho_0$-derivatives:
\begin{align*}
&\frac{2V^2}{(2\lambda)^4} 
\Big(\hat{\mathrm{N}}-\frac{1}{24}\Big) \mathcal{Z}_V^{np}
\nonumber
\\
&= 
\Big[ 
\Big(\frac{9 \mathcal{R}_1(\varrho) \varrho_1^2}{\varrho_0^2}
- \frac{18 \mathcal{R}_{2}(\varrho)\varrho_1}{\varrho_0}
+9\mathcal{R}_{3}(\varrho)\Big)
\frac{\partial^2}{\partial \varrho_0^2}
\\
& +\Big(
-\frac{9\varrho_1^2 \mathcal{R}_1(\varrho)}{\varrho_0^3}
+\frac{18\varrho_{1}\mathcal{R}_2(\varrho)}{\varrho_0^2}
+\frac{15 \mathcal{R}_1(\varrho) \varrho_2}{\varrho_0^2}
-\frac{30 \mathcal{R}_{3}(\varrho)}{\varrho_0}\Big)
\frac{\partial}{\partial \varrho_0}
\\
&
+\sum_{k=1}^\infty 6(3+2k)\Big(
\mathcal{R}_{k+3}(\varrho)
-\frac{\mathcal{R}_{2}(\varrho) \varrho_{k+1}}{\varrho_0}
-\frac{\mathcal{R}_{k+2}(\varrho)\varrho_{1}}{\varrho_0}
+\frac{\mathcal{R}_1(\varrho)\varrho_{k+1}\varrho_{1}}{\varrho_0^2}\Big)
\frac{\partial}{\partial \varrho_k}
\frac{\partial}{\partial \varrho_0}
\\
&
+
\sum_{k,l=1}^\infty (3+2k)(3+2l)\Big(
\frac{ \varrho_{l+1}\varrho_{k+1} \mathcal{R}_1(\varrho)}{\varrho_0^2}
+ \mathcal{R}_{k+l+3}(\varrho)
- \frac{2\varrho_{k+1}\mathcal{R}_{l+2}(\varrho)}{\varrho_0}
\Big)
\frac{\partial}{\partial \varrho_k}
\frac{\partial}{\partial \varrho_l}
\\
&
+\sum_{k=1}^\infty (3k+2)
\Big(-\frac{3\varrho_1 \varrho_{k+1}\mathcal{R}_1(\varrho)}{\varrho_0^3}
+\frac{6\varrho_{k+1} \mathcal{R}_2(\varrho)}{\varrho_0^2}
\\
&\qquad\qquad 
+ \frac{(5+2k)\varrho_{k+2} \mathcal{R}_1(\varrho)}{\varrho_0^2}
-\frac{2(5+2k)\mathcal{R}_{k+3}(\varrho)}{\varrho_0}
\Big)\frac{\partial}{\partial \varrho_k}
\Big]\mathcal{Z}_V^{np}.
\end{align*}
We isolate $F_1$, i.e.\ 
$\mathcal{Z}_V^{np}=\varrho_0^{-\frac{1}{24}} 
\mathcal{Z}_V^{stable}$, where 
$\mathcal{Z}_V^{stable}
=\exp\Big(\sum_{g=2}^\infty V^{2-2g} F_g\Big)$. We commute the factor 
$\varrho_0^{-\frac{1}{24}}$ in front of $[~~]$ and move it to the other side:
\begin{align*}
\frac{2V^2}{(2\lambda)^4} \hat{\mathrm{N}} \mathcal{Z}_V^{stable}
&= 
\Big[ 
\Big(\frac{49 \varrho_1^2 \mathcal{R}_1(\varrho)}{64\varrho_0^4}
- \frac{49 \varrho_1 \mathcal{R}_{2}(\varrho)}{32\varrho_0^3}
-\frac{5 \mathcal{R}_1(\varrho) \varrho_2}{8\varrho_0^3}
+\frac{105 \mathcal{R}_{3}(\varrho)}{64\varrho_0^2}\Big)
\\
&\!\!\!
+
\Big(\frac{9 \mathcal{R}_1(\varrho) \varrho_1^2}{\varrho_0^2}
- \frac{18 \mathcal{R}_{2}(\varrho)\varrho_1}{\varrho_0}
+9\mathcal{R}_{3}(\varrho)\Big)
\frac{\partial^2}{\partial \varrho_0^2}
\\
& \!\!\!
+\Big(
-\frac{39\varrho_1^2 \mathcal{R}_1(\varrho)}{4\varrho_0^3}
+\frac{39\varrho_{1}\mathcal{R}_2(\varrho)}{2\varrho_0^2}
+\frac{15 \mathcal{R}_1(\varrho) \varrho_2}{\varrho_0^2}
-\frac{123 \mathcal{R}_{3}(\varrho)}{4\varrho_0}\Big)
\frac{\partial}{\partial \varrho_0}
\\
&\!\!\!
+\sum_{k=1}^\infty 6(3{+}2k)\Big(
\mathcal{R}_{k+3}(\varrho)
-\frac{\mathcal{R}_{2}(\varrho) \varrho_{k+1}}{\varrho_0}
-\frac{\mathcal{R}_{k+2}(\varrho)\varrho_{1}}{\varrho_0}
+\frac{\mathcal{R}_1(\varrho)\varrho_{k+1}\varrho_{1}}{\varrho_0^2}\Big)
\frac{\partial}{\partial \varrho_k}
\frac{\partial}{\partial \varrho_0}
\\
&\!\!\!
+
\sum_{k,l=1}^\infty (3{+}2k)(3{+}2l)\Big(
\frac{ \varrho_{l+1}\varrho_{k+1} \mathcal{R}_1(\varrho)}{\varrho_0^2}
+ \mathcal{R}_{k+l+3}(\varrho)
- \frac{2\varrho_{k+1}\mathcal{R}_{l+2}(\varrho)}{\varrho_0}
\Big)
\frac{\partial}{\partial \varrho_k}
\frac{\partial}{\partial \varrho_l}
\\
&\!\!\!
+\sum_{k=1}^\infty (3{+}2k)\Big(
-\frac{\mathcal{R}_{k+3}(\varrho)}{4 \varrho_0}
+\frac{25\varrho_{k+1} \mathcal{R}_2(\varrho)}{4\varrho_0^2}
+\frac{\mathcal{R}_{k+2}(\varrho)\varrho_{1}}{4\varrho_0^2}
-\frac{13\varrho_1 \varrho_{k+1}\mathcal{R}_1(\varrho)}{4\varrho_0^3}
\\
&\qquad\qquad 
+ \frac{(5{+}2k)\varrho_{k+2} \mathcal{R}_1(\varrho)}{\varrho_0^2}
-\frac{2(5{+}2k)\mathcal{R}_{k+3}(\varrho)}{\varrho_0}
\Big)\frac{\partial}{\partial \varrho_k}
\Big]\mathcal{Z}_V^{stable}.
\end{align*}
Next observe
\begin{align*}
\hat{\mathrm{N}}\mathcal{Z}_V^{stable}
&
=\sum_{g=2}^\infty (V^{-2})^{g-1}(2g-2) \mathcal{Z}_g
=2V^{-2}\frac{d}{dV^{-2}} \sum_{g=2}^\infty (V^{-2})^{g-1}
\mathcal{Z}_g
=2V^{-2}\frac{d}{dV^{-2}} \mathcal{Z}_V^{stable}.
\end{align*}
Consequently, we obtain a parabolic differential equation in $V^{-2}$
which is easily solved. Inserting 
\[
\mathcal{R}_1(\varrho)=-\frac{2}{15} \frac{\varrho_1}{\varrho_0},\quad
\mathcal{R}_2(\varrho)=\frac{2}{15} \frac{\varrho_1^2}{\varrho_0^2}
-\frac{4}{21} \frac{\varrho_2}{\varrho_0},\quad
\mathcal{R}_3(\varrho)=
-\frac{2}{15} \frac{\varrho_1^3}{\varrho_0^3}
+\frac{34}{105} \frac{\varrho_1\varrho_2}{\varrho_0^2}
-\frac{2}{9} \frac{\varrho_3}{\varrho_0},
\]
we have:
\begin{thrm}
\label{thm:Deltarho}
When expressed in terms of the moments of the measure $\varrho$, 
the stable partition function is given by
\begin{align*}
 \mathcal{Z}_V^{stable}
&:=\exp\Big(\sum_{g=2}^\infty V^{2-2g} F_g(\varrho)\Big)
= \exp\Big( -\frac{(2\lambda)^4}{V^2} \Delta_{\varrho}+F_2(\rho)\Big) 1,\qquad
\end{align*}
where 
\begin{align}
F_2&= \frac{(2\lambda)^4}{V^2}
\Big(
-\frac{21 \varrho_1^3 }{160\varrho_0^5}
+\frac{29}{128} \frac{\varrho_1\varrho_2}{\varrho_0^4}
-\frac{35}{384} \frac{\varrho_3}{\varrho_0^3}
\Big),
\\
\Delta_\varrho &:=
-\Big(-\frac{6\varrho_1^3}{5\varrho_0^3} + \frac{111
  \varrho_1\varrho_2}{70 \varrho_0^2} -
\frac{\varrho_3}{2\varrho_0}\Big)
\frac{\partial^2}{\partial \varrho_0^2}
-\Big(
\frac{2\varrho_1^3}{\varrho_0^4} - \frac{1097 \varrho_1\varrho_2}{280
  \varrho_0^3} +\frac{41 \varrho_3}{24 \varrho_0^2}\Big)
\frac{\partial}{\partial \varrho_0}
\nonumber
\\
&
-\sum_{k=1}^\infty (3+2k)\Big(
\Big(-\frac{2\varrho_1^2}{5\varrho_0^3} 
+ \frac{2 \varrho_2}{7\varrho_0^2}\Big)\varrho_{k+1}
-\frac{3\mathcal{R}_{k+2}(\varrho)\varrho_{1}}{2\varrho_0}
+\frac{3\mathcal{R}_{k+3}(\varrho)}{2}
\Big)
\frac{\partial^2}{\partial \varrho_k \partial \varrho_0}
\nonumber
\\
&
+\sum_{k,l=1}^\infty (3{+}2k)(3{+}2l)\Big(
\frac{ \varrho_1 \varrho_{l+1}\varrho_{k+1}}{30\varrho_0^2}
+\frac{\varrho_{k+1}\mathcal{R}_{l+2}(\varrho)}{4\varrho_0}
+ \frac{\varrho_{l+1}\mathcal{R}_{k+2}(\varrho)}{4\varrho_0}
-\frac{\mathcal{R}_{k+l+3}(\varrho)}{4}
\Big)
\frac{\partial^2}{\partial \varrho_k\partial \varrho_l}
\nonumber
\\
&
-\sum_{k=1}^\infty (3+2k)\Big(
\Big(\frac{19 \varrho_1^2}{60 \varrho_0^4} 
- \frac{25 \varrho_2}{84 \varrho_0^3}\Big)\varrho_{k+1}
+\frac{\varrho_{1}\mathcal{R}_{k+2}(\varrho)}{16\varrho_0^2}
-\frac{\mathcal{R}_{k+3}(\varrho)}{16 \varrho_0}
\nonumber
\\
&\qquad\qquad 
- \frac{(5+2k)\varrho_1\varrho_{k+2}}{30\varrho_0^3}
-\frac{(5+2k)\mathcal{R}_{k+3}(\varrho)}{2\varrho_0}
\Big)\frac{\partial}{\partial \varrho_k}
\end{align}
and $\mathcal{R}_m(\varrho)$ given by \eqref{Rmrho}.
\end{thrm}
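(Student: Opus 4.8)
The plan is to observe that the full content of the genus-$g$, one-boundary Schwinger--Dyson equations \eqref{eq:CubicSchwingerComplex} has already been condensed in \eqref{ident-Z} into a single identity for $\mathcal{Z}_V^{np}=\exp\big(\sum_{g\geq 1}V^{2-2g}F_g\big)$. First I would invert $\hat{K}_z$ in \eqref{ident-Z} by the residue formula of \sref{Proposition}{Prop5} and then apply the boundary annihilation operator $\hat{\mathrm{A}}_{\check{z}}$, which by construction satisfies $\hat{\mathrm{A}}_{\check{z}}\circ\hat{\mathrm{A}}^{\dag}_{\bullet}=\hat{\mathrm{N}}$. This collapses the left-hand side of \eqref{ident-Z} to $\frac{2V^2}{(2\lambda)^4}\hat{\mathrm{N}}\mathcal{Z}_V^{np}$, while the right-hand side turns into a second- plus first-order differential operator in the moments $\{\varrho_k\}$ whose coefficients are the Taylor coefficients $\mathcal{R}_m(\varrho)$ of $\mathcal{R}(z)$. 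The residues are evaluated with the Bell-polynomial expansion \eqref{Kzz}, the residue formula \eqref{Res}, and the Bell identity \eqref{xBell}, exactly as in the proof of \sref{Proposition}{Prop5}; collecting terms produces the long bracketed expression for $\frac{2V^2}{(2\lambda)^4}\hat{\mathrm{N}}\mathcal{Z}_V^{np}$ displayed before the theorem.

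Next I would peel off $F_1=-\frac{1}{24}\log\varrho_0$ by writing $\mathcal{Z}_V^{np}=\varrho_0^{-1/24}\mathcal{Z}_V^{stable}$ with $\mathcal{Z}_V^{stable}=\exp\big(\sum_{g\geq 2}V^{2-2g}F_g\big)$. Since $\hat{\mathrm{N}}$ is a derivation and $\hat{\mathrm{N}}\varrho_0^{-1/24}=\tfrac{1}{24}\varrho_0^{-1/24}$, the combination $\hat{\mathrm{N}}-\tfrac{1}{24}$ that appears after moving the residue of the $\tfrac{V^2}{4(2\lambda)^4z^4}$-term to the left-hand side reduces to $\varrho_0^{-1/24}\hat{\mathrm{N}}$ when acting on $\varrho_0^{-1/24}\mathcal{Z}_V^{stable}$; conjugating the right-hand differential operator by $\varrho_0^{-1/24}$ shifts its coefficients and generates an additional zeroth-order (multiplication by a function of $\varrho$) term. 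The outcome is a master equation
\[
\frac{2V^2}{(2\lambda)^4}\,\hat{\mathrm{N}}\,\mathcal{Z}_V^{stable}=\mathcal{D}\,\mathcal{Z}_V^{stable},
\]
with $\mathcal{D}$ the bracketed operator displayed immediately before the theorem statement, manifestly independent of $V$. One should also note that $\mathcal{D}$ is well defined termwise: by \sref{Proposition}{structure-Gz} and \sref{Proposition}{prop:F} only $\varrho_0,\dots,\varrho_{3g-3}$ occur in $F_g$, so the sums over $k,l$ in $\mathcal{D}$ truncate on each coefficient of the genus expansion.

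The last step is to integrate. Because $F_g$ enters $\mathcal{Z}_V^{stable}$ with weight $V^{2-2g}$, the grading operator acts as $\hat{\mathrm{N}}\mathcal{Z}_V^{stable}=2V^{-2}\tfrac{d}{dV^{-2}}\mathcal{Z}_V^{stable}$, so with $s:=V^{-2}$ the master equation becomes the first-order linear ODE $\tfrac{d}{ds}\mathcal{Z}_V^{stable}=\tfrac{(2\lambda)^4}{4}\,\mathcal{D}\,\mathcal{Z}_V^{stable}$ with $V$-independent generator. The initial condition is $\mathcal{Z}_V^{stable}\to 1$ as $V\to\infty$ (i.e.\ at $s=0$), so the unique solution is $\mathcal{Z}_V^{stable}=\exp\big(\tfrac{(2\lambda)^4}{4V^2}\mathcal{D}\big)1$. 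Writing $\mathcal{D}=\mathcal{D}_0+\mathcal{D}_{\mathrm{diff}}$ with $\mathcal{D}_0$ the zeroth-order part and $\mathcal{D}_{\mathrm{diff}}$ the genuine differential part, and substituting the closed forms of $\mathcal{R}_1,\mathcal{R}_2,\mathcal{R}_3$ obtained from \eqref{Rmrho}, one identifies $\tfrac{(2\lambda)^4}{4}\mathcal{D}_{\mathrm{diff}}=-(2\lambda)^4\Delta_\varrho$ and $\tfrac{(2\lambda)^4}{4}\mathcal{D}_0=V^2F_2$; in particular $F_2(\varrho)=\tfrac{(2\lambda)^4}{4}\mathcal{D}_0\cdot 1$ is the genus-$2$ coefficient, which comes out equal to the stated $(3g{-}3)$-weighted expression with $g=2$, consistent with \sref{Proposition}{prop:F}. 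This gives $\mathcal{Z}_V^{stable}=\exp\big(-\tfrac{(2\lambda)^4}{V^2}\Delta_\varrho+F_2\big)1$.

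The hard part will be the bookkeeping in the second step: tracking signs while expanding the double residue in Bell polynomials, applying \eqref{xBell} repeatedly, and --- most delicately --- separating out the $\varrho_0$-derivatives after the $\varrho_0^{-1/24}$ conjugation so that $\mathcal{D}$ emerges in the closed form above rather than as an unreadable sum of residues. The ODE step itself, and the final substitution of $\mathcal{R}_1,\mathcal{R}_2,\mathcal{R}_3$, are routine.
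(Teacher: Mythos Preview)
Your proposal is correct and follows essentially the same route as the paper: invert $\hat{K}_z$ in \eqref{ident-Z} via \sref{Proposition}{Prop5}, apply $\hat{\mathrm{A}}_{\check{z}}$ to collapse the left side to $\hat{\mathrm{N}}\mathcal{Z}_V^{np}$, evaluate the remaining residue through the function $\mathcal{R}(z)$, strip off $F_1$ by conjugating with $\varrho_0^{-1/24}$, and then solve the resulting constant-coefficient ODE in $V^{-2}$. The only cosmetic difference is that the paper performs the residue bookkeeping directly with the series expansions of $\hat{\mathrm{A}}^\dag_z$ and $\mathcal{R}(z)$ rather than invoking the Bell identity \eqref{xBell} again, but the content is the same.
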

\noindent
The $F_g(\varrho)$ are recursively extracted from 
$\mathcal{Z}_g(\varrho):=\frac{1}{(g-1)!} (-\Delta_\varrho+F_2(\varrho))^{g-1} 1$ by
\begin{align*}
F_g(\varrho)&=\textstyle \mathcal{Z}_g(\varrho)
- \frac{1}{(g-1)!}\sum_{k=2}^{g-1} B_{g-1,k}\big(
\big\{h! F_{h+1}(\varrho)\}_{h=1}^{g-k}\big)\\
&= \textstyle \mathcal{Z}_g(\varrho)- \frac{1}{(g-1)!}
\sum_{k=2}^{g-1} (-1)^{k-1}(k-1)!B_{g-1,k}\big(
\big\{h! \mathcal{Z}_{h+1}(\varrho)\}_{h=1}^{g-k}\big).
\end{align*}

These free energies are listed in different conventions in the
literature. As mentioned before, for any spectral dimension $\D<8$  the 
planar sector is renormalised such that \sref{Theorem}{Thm:Kontsevich}
holds. Therefore, \sref{Theorem}{thm:Deltarho} produces the intersection 
numbers via the differential operator $\Delta_\varrho$. 

The formula can easily be implemented in computer
algebra and quickly computes the 
free energies $F_g(t)$ to moderately large $g$. For convenience we list
\begin{align*}
F_3&=
\frac{1225}{144}{\cdot}\frac{t_2^6}{6!(-t_1)^{10}} 
+ \frac{193}{288} {\cdot} \frac{t_2^4 t_3}{4! (-t_1)^9} 
+ \frac{205}{3456} {\cdot} \frac{t_2^2 t_3^2}{2!2! (-t_1)^8} 
+ \frac{53}{1152} {\cdot} \frac{t_2^3 t_4}{3! (-t_1)^8} \\
&+ \frac{583}{96768} {\cdot} \frac{t_3^3}{3! (-t_1)^7} 
+ \frac{1121}{241920} {\cdot} \frac{t_2 t_3 t_4}{(-t_1)^7} 
+ \frac{17}{5760} {\cdot} \frac{t_2^2 t_5}{2! (-t_1)^7}
+ \frac{607}{1451520} {\cdot} \frac{t_4^2}{2! (-t_1)^6} \\
&+ \frac{503}{1451520} {\cdot} \frac{t_3 t_5}{(-t_1)^6} 
+ \frac{77}{414720} {\cdot} \frac{t_2 t_6}{(-t_1)^6} 
+ \frac{1}{82944} {\cdot} \frac{t_7}{(-t_1)^5}
\end{align*}
with $t_{i+1}=-(2i+1)!!\varrho_i$,
which is already given in \cite[eq.~(5.30)]{Itzykson:1992ya}.

\sref{Theorem}{thm:Deltarho} seems to be closely related with 
$\exp(\sum_{g\geq 0} F_g)= \exp(\hat{W})1$ proved by Alexandrov
\cite{Alexandrov:2010bn}, where
$\hat{W}:=\frac{2}{3}\sum_{k=1}^\infty (k+\frac{1}{2}) t_k
\hat{L}_{k-1}$ involves the generators $\hat{L}_n$ of the Virasoro
algebra. Including  $V$ and moving
$\exp(V^2 F_0+F_1)$ to the other side, our $\Delta_\varrho$ is in principle
obtained via Baker-Campbell-Hausdorff formula from Alexandrov's
equation.

\subsection{Deformed Virasoro Algebra}\label{Sec:CubicVirasoro}
The Kontsevich model without any renormalisation constants has some global constraints $L_n\Z=0$, where
$L_n$ is a differential operator with $n\in\N$. These operators
form a Witt algebra
\begin{align*}
 [L_n,L_m]=(n-m)L_{n+m},
\end{align*}
an infinite-dimensional Lie algebra. The differential operator
can be found by reducing the partition function to an integral over eigenvalues $x_i$ of the 
Hermitian matrices. The partition function is then unchanged under the diffeomorphism
$x_i\mapsto x_i^{n+1}\frac{d}{dx_i}$ which gives 
exactly the generators $L_n$ \cite{Makeenko:1991ec}. 

Since the renormalisation constants change the partition function, and are in general 
divergent in the limit $\Lambda^2\to\infty$, the Virasoro constraints are affected. Furthermore, the 
differential operator depends explicitly on $c(\lambda)$ which obeys different implicit equations for different spectral dimensions.

To find the right Virasoro constraints for any $\D$
we return to \eqref{ident-Z}, but instead of applying the inverse of
$\hat{K}_z$ we directly take the residue
\begin{align*}
\tilde{L}_n:= \Res_{z\to 0} \Big[z^{3+2n}
\Big(\frac{2V^2}{(2\lambda)^4} \hat{K}_z \hat{\mathrm{A}}^\dag_z + 
(\hat{\mathrm{A}}^\dag_z )^2 +\frac{1}{\varrho_0 z^4} \frac{\partial 
\hat{\mathrm{A}}^\dag_z }{\partial z} 
+\frac{V^2}{4(2\lambda)^4 z^4} \Big)dz\Big].
\nonumber
\end{align*}
By construction, $\tilde{L}_n \mathcal{Z}^{np}_V=0$. Recall that in
the Kontsevich model $L_n$ annihilates the \emph{full}
partition function.
However, for $\D>0$ the free energy $F_0$ does not exist 
such that the constraints can annihilate only the non-planar part of the partition function.
As explained below, these $\tilde{L}_n$ do not satisfy
the commutation relations of the Virasoro algebra exactly. 
An explicit expression is obtained from 
\begin{align*}
\hat{K}_z \hat{\mathrm{A}}^{\dag}_z &=
\sum_{l= 0}^{\infty} \sum_{j=0}^{l} \frac{(3+2l)\varrho_{l-j} }{z^{4+2j}}
\frac{\partial}{\partial \varrho_l},
\\
\frac{1}{\varrho_0 z^4} \frac{\partial}{\partial z}
\hat{\mathrm{A}}^{\dag}_z
&=
\sum_{l= 0}^\infty \Big(\frac{3(3+2l) \varrho_{l+1}}{
\varrho_0^2 z^8}-\frac{(3+2l)(5+2l)}{\varrho_0 z^{10+2l}}\Big)
\frac{\partial}{\partial \varrho_l},
\\
\hat{\mathrm{A}}^{\dag}_z\hat{\mathrm{A}}^{\dag}_z 
&=\sum_{k= 0}^{\infty} \Big(\frac{(5+2k) \varrho_{k+2}}{
\varrho_0 z^3}-\frac{5+2k}{z^{7+2k}}\Big)\frac{(3+2k)}{
\varrho_0 z^3}
\frac{\partial}{\partial \varrho_k}
\\
&+
\sum_{k= 0}^{\infty} \Big({-}\frac{3\varrho_{1}}{
\varrho_0 z^3}+\frac{3}{z^{5}}\Big)
\Big(\frac{(3{+}2k) \varrho_{k+1}}{
\varrho_0^2 z^3}\Big)
\frac{\partial}{\partial \varrho_k}
+
\sum_{l,k= 0}^{\infty} \frac{(3{+}2l) (3{+}2k) \varrho_{k+1} \varrho_{l+1}}{
\varrho_0^2 z^6}
\frac{\partial^2}{\partial \varrho_l \partial \varrho_k}
\\
& -
\sum_{l,k= 0}^{\infty} \frac{2(3+2l)(3+2k) \varrho_{l+1}}{
\varrho_0 z^{8+2k}}
\frac{\partial^2}{\partial \varrho_l \partial \varrho_k}
+ 
\sum_{l,k= 0}^{\infty}
\frac{(3+2l)(3+2k)}{z^{10+2l+2k}}
\frac{\partial^2}{\partial \varrho_l \partial \varrho_k}.
\end{align*}
Evaluating the residues and defining $A=\frac{(2\lambda)^4}{4V^2}$ and rescaling
$\hat{L}_n:=A \tilde{L}_n$ gives 
\begin{align*}
\hat{L}_0&=\frac{1}{16}+ \frac{1}{2}\sum_{l=0}^\infty (3+2l)\varrho_l 
\frac{\partial} {\partial \varrho_l},
\\
\hat{L}_1&=\frac{1}{2}\sum_{l=0}^\infty 
(5+2l)\varrho_{l}\frac{\partial}{\partial \varrho_{l+1}}
+A\Big(\sum_{k=0}^\infty \frac{(3+2k)}{\varrho_0^2}
\varrho_{k+1}\frac{\partial}{\partial \varrho_k}
- \frac{3\varrho_1}{\varrho_0^2}\Big)
\sum_{l=0}^\infty (3+2l)\varrho_{l+1}
\frac{\partial}{\partial \varrho_l}
\nonumber
\\
& \hspace*{-0.5cm} \text{and for $n\geq 2$: } \nonumber
\\
\nonumber
\hat{L}_{n}&=\frac{1}{2}
\sum_{l=0}^\infty 
(3{+}2n{+}2l)\varrho_{l}\frac{\partial}{\partial \varrho_{n+l}}
+A
 \delta_{n,2} \sum_{l=0}^\infty \frac{6(3+2l) \varrho_{l+1}}{
\varrho_0^2}\frac{\partial}{\partial \varrho_{l}}
-A\frac{2(2n{-}3)(2n{-}1)}{\varrho_0}
\frac{\partial}{\partial \varrho_{n-3}}
\nonumber
\\
&+A 
\sum_{l= 0}^{n-3}(3+2l)(2n-2l-3)
\frac{\partial^2}{\partial \varrho_l \partial \varrho_{n-3-l}}
-A\sum_{l= 0}^{\infty} \frac{2(3+2l)(2n-1) \varrho_{l+1}}{
\varrho_0}
\frac{\partial^2}{\partial \varrho_{n-2} \partial \varrho_l}.
\nonumber
\end{align*}
To write it in a more compact way, it is convenient to introduce the
differential operator
\begin{align}\label{roh-1}
 \hat{D}:=
 \sum_{l=0}^\infty \frac{(3+2l)\varrho_{l+1}}{\varrho_0}
\frac{\partial}{\partial \varrho_l}\;.
\end{align}
Note that $\frac{\partial}{\partial \varrho_{l}}
\hat{D}\neq\hat{D}\frac{\partial}
{\partial \varrho_{l}}$. The result is:
\begin{lemma} \label{LemmaVira}
The nonplanar partition function $\mathcal{Z}_V^{np}
:= \exp\Big(\sum_{g=1}^\infty V^{2-2g} F_g\Big)$ satisfies 
the constraints $\hat{L}_n\mathcal{Z}^{np}_V=0$ for all $n\in
\mathbb{N}$, where 
\begin{align}\label{Virasoro}
\hat{L}_0&=\frac{1}{16}+ \frac{1}{2}\sum_{l=0}^\infty (3+2l)\varrho_l 
\frac{\partial} {\partial \varrho_l},
\\
\hat{L}_1&=\frac{1}{2}\sum_{l=0}^\infty 
(5+2l)\varrho_{l}\frac{\partial}{\partial \varrho_{l+1}}
+A\hat{D}^2
\nonumber
\\
& \hspace*{-0.5cm} \text{and for $n\geq 2$: } \nonumber
\\
\nonumber
\hat{L}_{n}&=\frac{1}{2}
\sum_{l=0}^\infty 
(3{+}2n{+}2l)\varrho_{l}\frac{\partial}{\partial \varrho_{n+l}}
+A 
\sum_{l= 0}^{n-3}(3+2l)(2n-2l-3)
\frac{\partial^2}{\partial \varrho_l 
\partial \varrho_{n-3-l}}
\nonumber
\\
&-2A(2n-1)\frac{\partial}
{\partial \varrho_{n-2}}\hat{D},
\nonumber
\end{align}
where $\hat{D}$ is the differential operator defined by \eqref{roh-1} 
and $A=\frac{(2\lambda)^4}{4V^2}$.
\end{lemma}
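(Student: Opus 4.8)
The skeleton is already in place: \eqref{ident-Z} is an identity of $\varrho$-differential operators applied to $\mathcal{Z}_V^{np}$ that holds for every $z$ in a punctured neighbourhood of the origin, and since $\mathcal{Z}_V^{np}$ does not depend on $z$, multiplying by $z^{3+2n}$ and taking $\Res_{z\to 0}$ yields $\tilde L_n\mathcal{Z}_V^{np}=0$; the rescaling $\hat L_n:=A\tilde L_n$ with $A=\frac{(2\lambda)^4}{4V^2}$ turns this into $\hat L_n\mathcal{Z}_V^{np}=0$. Hence the entire content of the lemma is the explicit evaluation of these residues and their repackaging in the form \eqref{Virasoro}. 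I would do this in three steps.

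First, I would expand the three composite operators $\hat K_z\hat{\mathrm{A}}^{\dag}_z$, $\frac{1}{\varrho_0 z^4}\partial_z\hat{\mathrm{A}}^{\dag}_z$ and $(\hat{\mathrm{A}}^{\dag}_z)^2$ as Laurent series in $z$ with $\varrho$-differential-operator coefficients, exactly as recorded just above the lemma. These follow from the definitions of $\hat{\mathrm{A}}^{\dag}_z$ and $\hat K_z$ together with \sref{Lemma}{lemmaopK}: for $\hat K_z\hat{\mathrm{A}}^{\dag}_z$ one applies $\hat{\mathrm{A}}^{\dag}_z$ first and then uses $\hat K_z(z^{-3})=\varrho_0 z^{-2}$ and $\hat K_z(z^{-(5+2l)})=\sum_{j=0}^{l+1}\varrho_j z^{-(2l+4-2j)}$, observing that the $j=l+1$ term cancels the $-\frac{(3+2l)\varrho_{l+1}}{\varrho_0 z^3}\partial_{\varrho_l}$ contribution and, after re-indexing, leaves $\hat K_z\hat{\mathrm{A}}^{\dag}_z=\sum_{l\ge0}\sum_{j=0}^{l}\frac{(3+2l)\varrho_{l-j}}{z^{4+2j}}\partial_{\varrho_l}$. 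The other two follow by direct differentiation, the only subtlety being that in $(\hat{\mathrm{A}}^{\dag}_z)^2$ the inner derivative $\partial_{\varrho_k}$ also acts on the coefficient $\frac{\varrho_{l+1}}{\varrho_0}$, producing the $\delta_{k,l+1}$ and $-\delta_{k,0}$ corrections that become the first-order terms of the resulting double sum.

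Second, I would read off $\Res_{z\to 0}\big[z^{3+2n}(\,\cdot\,)\,dz\big]$ from each expansion, keeping only the $z^{-1}$-coefficient: $\frac{2V^2}{(2\lambda)^4}\hat K_z\hat{\mathrm{A}}^{\dag}_z$ contributes the term with $4+2j=4+2n$ together with the surviving sum over the remaining index (giving $\frac12\sum_l(3+2n+2l)\varrho_l\partial_{\varrho_{n+l}}$ after multiplying by $A$); $(\hat{\mathrm{A}}^{\dag}_z)^2$ contributes the second-order terms with $10+2l+2k=4+2n$ and their first-order corrections; $\frac{1}{\varrho_0 z^4}\partial_z\hat{\mathrm{A}}^{\dag}_z$ contributes a $\partial_{\varrho_{n-3}}$ term and a $\delta_{n,2}$ term; and $\frac{V^2}{4(2\lambda)^4 z^4}$ contributes only for $n=0$, producing the $\frac{1}{16}$ in $\hat L_0$. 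Collecting these reproduces the intermediate, non-compact expressions for $\hat L_0$, $\hat L_1$ and $\hat L_n$ with $n\ge2$ displayed above; $\hat L_0$ is already in final form.

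Third, I would recast these into the $\hat D$-form with $\hat D=\sum_{l\ge0}\frac{(3+2l)\varrho_{l+1}}{\varrho_0}\partial_{\varrho_l}$. For $n\ge2$ the double-derivative term equals $-2A(2n-1)\hat D\,\partial_{\varrho_{n-2}}$, which differs from the target $-2A(2n-1)\partial_{\varrho_{n-2}}\hat D$ by $-2A(2n-1)[\partial_{\varrho_{n-2}},\hat D]$; a one-line computation — again noting that $\partial_{\varrho_{n-2}}$ acts on the coefficient $\frac{\varrho_{l+1}}{\varrho_0}$, giving a $\frac{\delta_{n-2,l+1}}{\varrho_0}$ piece and a $-\frac{\varrho_{l+1}\delta_{n-2,0}}{\varrho_0^2}$ piece — shows this commutator is exactly $-A\frac{2(2n-3)(2n-1)}{\varrho_0}\partial_{\varrho_{n-3}}+A\delta_{n,2}\sum_{l\ge0}\frac{6(3+2l)\varrho_{l+1}}{\varrho_0^2}\partial_{\varrho_l}$ (using $2A(2n-1)|_{n=2}=6A$), i.e.\ precisely the two extra terms in the intermediate form. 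This yields the $n\ge2$ line of \eqref{Virasoro}; the analogous manipulation turns the second sum of the intermediate $\hat L_1$ into $A\hat D^2$. I expect the main — and essentially only — obstacle to be this bookkeeping: tracking which residue term lands in which summand of the $\hat D$-form, and treating the index-boundary cases consistently ($n=1$, where $\partial_{\varrho_{n-3}}$ would be meaningless and the quadratic part collapses to $A\hat D^2$; $n=2$, where the $\delta_{n,2}$ term is exactly the commutator correction). Everything else is routine residue and Leibniz algebra.
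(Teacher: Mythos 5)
Your proposal is correct and follows essentially the same route as the paper: starting from the identity \eqref{ident-Z}, taking $\Res_{z\to 0}[z^{3+2n}(\cdot)\,dz]$ of the Laurent expansions of $\hat K_z\hat{\mathrm{A}}^{\dag}_z$, $\frac{1}{\varrho_0 z^4}\partial_z\hat{\mathrm{A}}^{\dag}_z$ and $(\hat{\mathrm{A}}^{\dag}_z)^2$, rescaling by $A$, and absorbing the $\delta_{n,2}$ and $\partial_{\varrho_{n-3}}$ corrections via the commutator $[\partial_{\varrho_{n-2}},\hat D]$ to reach the compact $\hat D$-form. All the individual residue evaluations and the commutator bookkeeping you describe check out against the paper's intermediate expressions, so there is nothing to add.
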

\noindent
With the commutation rules
\begin{align*}
\Big[\hat{D},\varrho_l\Big]
&= \frac{3+2l}{\varrho_0} \varrho_{l+1}\;, & l\geq 0\;,
\\
\Big[\hat{D},\frac{\partial}{\partial
  \varrho_l} \Big]
&= -\frac{1+2l}{\varrho_0} \frac{\partial}{\partial \varrho_{l-1}}
& l\geq 1\;, && 
\Big[\hat{D},\frac{\partial}{\partial
  \varrho_0} \Big]
&= \frac{1}{\varrho_0} \hat{D}\;
\nonumber,
\end{align*}
we end up after long but straightforward computation:
\begin{lemma}
 The generators $\hat{L}_n$ of \sref{Lemma}{LemmaVira} obey the commutation relation
 \begin{align*}
  [\hat{L}_0,\hat{L}_n]=-n\hat{L}_n
 \end{align*}
and for any $m,n\geq 1$,
\begin{align*}
[\hat{L}_m,\hat{L}_n]
&= (m-n) \hat{L}_{m+n}
-4A (m+1)B_{n-2}\hat{L}_{m-1}
+4A (n+1)B_{m-2}\hat{L}_{n-1}
\\
&- 4A \delta_{m,1} \frac{n(n+1)}{\varrho_0^2} \hat{L}_{n-2}
+ 4A \delta_{n,1} \frac{m(m+1)}{\varrho_0^2} \hat{L}_{m-2}
\nonumber
\end{align*}
where 
\[
B_m:= (2m+3)\frac{\partial}{\partial \varrho_{m}}
\frac{1}{\varrho_0} \quad\text{for }m\geq 0\;,\qquad
B_{-1}:= -\frac{1}{2}\Big\{\hat{D},
\frac{1}{\varrho_0} \Big\} \;.
\]
\end{lemma}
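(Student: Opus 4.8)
The plan is to verify both relations by a direct computation from the explicit expressions for $\hat{L}_n$ in \sref{Lemma}{LemmaVira}, organised according to powers of the parameter $A=\frac{(2\lambda)^4}{4V^2}$. For $n\geq 1$ write $\hat{L}_n=\hat{\ell}_n+A\hat{q}_n$, where $\hat{\ell}_n=\frac{1}{2}\sum_{l\geq 0}(3+2n+2l)\varrho_l\frac{\partial}{\partial\varrho_{n+l}}$ is the first-order ``transport'' part, $\hat{q}_1=\hat{D}^2$, and for $n\geq 2$ one has $\hat{q}_n=\sum_{l=0}^{n-3}(3+2l)(2n-2l-3)\frac{\partial^2}{\partial\varrho_l\partial\varrho_{n-3-l}}-2(2n-1)\frac{\partial}{\partial\varrho_{n-2}}\hat{D}$; moreover $\hat{L}_0=\tfrac{1}{16}+\tfrac{1}{2}\hat{W}$ with $\hat{W}:=\sum_{l\geq 0}(3+2l)\varrho_l\frac{\partial}{\partial\varrho_l}$. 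Then $[\hat{L}_m,\hat{L}_n]$ splits into an $A^0$-part $[\hat{\ell}_m,\hat{\ell}_n]$, an $A^1$-part $[\hat{\ell}_m,\hat{q}_n]+[\hat{q}_m,\hat{\ell}_n]$, and an $A^2$-part $[\hat{q}_m,\hat{q}_n]$, and these must match, power by power, the corresponding parts of the claimed right-hand side after substituting $\hat{L}_{m+n}=\hat{\ell}_{m+n}+A\hat{q}_{m+n}$, $\hat{L}_{m-1}=\hat{\ell}_{m-1}+A\hat{q}_{m-1}$, and so on.

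First I would dispose of the grading relation: the commutators $[\hat{W},\varrho_l]=(3+2l)\varrho_l$, $[\hat{W},\frac{\partial}{\partial\varrho_l}]=-(3+2l)\frac{\partial}{\partial\varrho_l}$, and hence $[\hat{W},\hat{D}]=-\hat{D}$, show that $\hat{\ell}_n$, $\hat{q}_n$ and $\hat{D}^2$ are all homogeneous of weight $-2n$ under $\hat{W}$; therefore $[\hat{W},\hat{L}_n]=-2n\hat{L}_n$ for $n\geq1$ and $[\hat{L}_0,\hat{L}_n]=\tfrac{1}{2}[\hat{W},\hat{L}_n]=-n\hat{L}_n$. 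The $A^0$-part is the classical Witt computation: $[\hat{\ell}_m,\hat{\ell}_n]$ reduces to $[\varrho_l\frac{\partial}{\partial\varrho_{m+l}},\varrho_{l'}\frac{\partial}{\partial\varrho_{n+l'}}]$, supported on $l'=m+l$ or $l=n+l'$, and re-indexing the two resulting sums while combining the weights $(3+2m+2l)$ and $(3+2n+2l)$ yields $(m-n)\hat{\ell}_{m+n}$. The $A^1$-part is where the deformation enters: $[\hat{\ell}_m,\hat{q}_n]$ needs $[\hat{\ell}_m,\frac{\partial}{\partial\varrho_k}]$ and $[\hat{\ell}_m,\hat{D}]$, the latter handled via $[\hat{D},\varrho_l]=\frac{3+2l}{\varrho_0}\varrho_{l+1}$, $[\hat{D},\frac{\partial}{\partial\varrho_l}]=-\frac{1+2l}{\varrho_0}\frac{\partial}{\partial\varrho_{l-1}}$ and $[\hat{D},\frac{\partial}{\partial\varrho_0}]=\frac{1}{\varrho_0}\hat{D}$; collecting terms one recognises the operators $B_m=(2m+3)\frac{\partial}{\partial\varrho_m}\frac{1}{\varrho_0}$ and reassembles $(m-n)\hat{q}_{m+n}-4(m+1)B_{n-2}\hat{\ell}_{m-1}+4(n+1)B_{m-2}\hat{\ell}_{n-1}$ plus the boundary pieces $\mp4\delta_{\bullet,1}\frac{\bullet(\bullet+1)}{\varrho_0^2}\hat{\ell}_{\bullet-2}$. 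Finally the $A^2$-part $[\hat{q}_m,\hat{q}_n]$ uses the same $\hat{D}$-commutators iterated once more and must reproduce $-4(m+1)B_{n-2}\hat{q}_{m-1}+4(n+1)B_{m-2}\hat{q}_{n-1}$ together with the analogous $\delta$-terms; consistently, formally ill-defined symbols such as $\hat{L}_{-1}$ arise only for $m=n=1$, where the two $\delta$-contributions cancel, and $B_{-1}$ appears exactly when an index $n-2$ or $m-2$ equals $-1$, which is why it is defined as $-\tfrac{1}{2}\{\hat{D},\tfrac{1}{\varrho_0}\}$.

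The routine steps are the re-indexing and coefficient matching; the main obstacle is the precise bookkeeping in the $A^1$- and $A^2$-parts, in particular tracking how the iterated commutators of $\hat{D}$ with the second-order pieces generate the $B_m$ with the correct weights, and handling the low-order exceptional cases $n\in\{1,2\}$ (where $\hat{q}_1=\hat{D}^2$ is not of the generic quadratic shape and the sum $\sum_{l=0}^{n-3}$ is empty). A useful safeguard is to first verify the identity for small $m+n$, say $(m,n)\in\{(1,2),(2,3),(1,3)\}$, term by term before attempting the general manipulation, and to check throughout that every monomial on both sides carries weight $-2(m+n)$ under $\hat{W}$, which matches $\hat{L}_{m+n}$, $B_{n-2}\hat{L}_{m-1}$, and the $\delta$-terms.
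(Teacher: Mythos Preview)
Your proposal is correct and follows essentially the same route as the paper: a direct computation of $[\hat{L}_m,\hat{L}_n]$ using the commutation rules $[\hat{D},\varrho_l]=\frac{3+2l}{\varrho_0}\varrho_{l+1}$, $[\hat{D},\frac{\partial}{\partial\varrho_l}]=-\frac{1+2l}{\varrho_0}\frac{\partial}{\partial\varrho_{l-1}}$ (for $l\geq1$) and $[\hat{D},\frac{\partial}{\partial\varrho_0}]=\frac{1}{\varrho_0}\hat{D}$, which the paper simply summarises as ``long but straightforward''. Your organisation by powers of $A$ and your use of the $\hat{W}$-grading as a consistency check are more explicit than anything the paper writes down, but the underlying mechanism is identical.
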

\begin{rmk}
  The differential operator $\hat{D}$ has its origin in the implicit
  definition of the constant $c$ (see \sref{Corollary}{Coro:1P}) and the dependence
  of $\varrho_l$ on $c$. Since the expression
 \begin{align*}
  \varrho_{-1}:=-8\lambda^2\int_0^{\Lambda^2} dx\;\frac{r(x)}{\sqrt{(1+2e(x))^2+c}}, 
\qquad  (\varrho_{-1}=c\;\text{for}\;\D<2)
 \end{align*}
diverges for any $\D\geq2$ in the limit $\Lambda^2\to\infty$, it was necessary
to construct the analogue of the 
derivative $\frac{\partial}{\partial c}$ 
through the differential  operator $\hat{D}$. 
Replacing the differential operator by 
$\hat{D}\mapsto \frac{\partial}{\partial \varrho_{-1}}$ and the generators by
 \begin{align*}
  \hat{L}_n \mapsto \hat{L}_n+\frac{1}{2}\varrho_{-1}\frac{\partial}{\partial \varrho_{-1}}=L_n,
 \end{align*}
 recovers the original undeformed Virasoro algebra. As explained
 above, $\varrho_{-1}$ and, consequently, the standard Virasoro
 generators do not exist in $\D\geq 2$.  The renormalisation 
 necessary for $\D\geq2$ alters the definition of $c$ and prevents the
 construction of $\hat{L}_{-1}$ and $F_0$ which in $\D=0$ depend on $\varrho_{-1}$ (see \sref{Remark}{Rmk:F0}).
 Higher topologies ($\chi\leq 0$) are not affected because any 
 explicit $c$-dependence drops out.
\end{rmk}

\section{Over-Renormalisation}\label{Sec:CubicOverRenom}
In this short section which has not yet appeared in the literature, we look upon the question whether over-renormalisation would produce any
problems for the cubic model. We will understand over-renormalisation by renormalising a model of dimension $D$ 
as a model of dimension $\tilde{D}>D$ which involves more renormalisation than necessary. 
Of course, more conditions will fix these additional renormalisation constants. However, on
the level of Feynman graphs over-renormalisation would creates a big number of additional 
but finite counterterms due to Zimmermann's
forest formula \cite{Zimmermann:1969jj}. 
A very natural question is whether this over-subtraction of the additional counterterms converges or produces new problems. 
For each Feynman diagram, it is clear that the diagram is still finite after over-subtraction, but whether a global problem 
appears or not, is unclear. This question can only be answered if a model is resummable after renormalisation. 
Therefore, we can tackle this question perfectly for matrix field theories.

We saw that any correlation function is built recursively by the 1-point 
function with \sref{Theorem}{finaltheorem} and \sref{Theorem}{thm:G-residue}.
It is therefore sufficient to show the convergence of the 1-point function.
The general solution
is given by \sref{Corollary}{Coro:1P}. Taking the renormalisation with respect to a 
higher dimensional model, the integrals are clearly convergent in \sref{Corollary}{Coro:1P}.
This directly implies that the large number of over-subtracted terms can not destroy resummablility. 

To be more precise, we selected two examples: 

\begin{exm}($D=2$ Moyal space with renormalisation of $D=4$)\label{Ex:D24}
\\
 Linear eigenvalues $e(x)=x$ and for $D=2$ with growing multiplicity of the form $r(x)=1$
 implies $\varrho(Y)=\frac{2\lambda^2}{\sqrt{Y}}$. Then \sref{Corollary}{Coro:1P} gives for $D=4$ for the shifted 
 1-point function after integration and simplification
 \begin{align*}
  \tilde{W}(X)=&1+\sqrt{X+c}-\sqrt{1+c}-2\lambda^2\bigg\{
  \log\bigg(1+\frac{1}
  {\sqrt{1+c}}\bigg)\\
 &-\frac{1}{\sqrt{X}}\log\bigg(\frac{(\sqrt{X}+\sqrt{X+c})(\sqrt{X}+1)}
  {\sqrt{X}\sqrt{1+c}+\sqrt{X+c}}\bigg)\bigg\}\\
  1=&\sqrt{1+c}+\lambda^2\bigg\{\frac{1-2\sqrt{1+c}}{\sqrt{1+c}}+2\sqrt{1+c}\log\bigg(1+\frac{1}
  {\sqrt{1+c}}\bigg)\bigg\}.
 \end{align*}
The function $c(\lambda)$ is uniquely invertible for $|\lambda|<\lambda_c\approx 0.6886$.
Expanding $c$ in $\lambda$ by \eqref{eq:cexpansion} gives $c(\lambda)=-\lambda^22(2\log 2-1)
+\lambda^4(5-16\log 2+12(\log 2)^2)
+\lambda^6 (\frac{21}{2}-
46\log 2+66(\log2)^2-32 (\log2)^3)+\mathcal{O}(\lambda^8)$.
Inserting in \eqref{eq:G1} gives the first orders of the 1-point function 
\begin{align*}
 G^0(x)=\lambda \frac{\log(1+x)-x}{1+2x}
 -\lambda^3 \frac{x^2(3+4x)(2\log2-1)^2}{(1+2x)^3}+\mathcal{O}(\lambda^5)
\end{align*}
which is confirmed in \sref{App.}{App:PertCubic} by Feynman graph calculations.
\end{exm}

\begin{exm}($D=2$ Moyal space with renormalisation of $D=6$)\label{Ex:D26}
\\
 Again, linear eigenvalues $e(x)=x$ and for $D=2$ with growing multiplicity of the form $r(x)=1$
 implies $\varrho(Y)=\frac{2\lambda^2}{\sqrt{Y}}$. Then \sref{Corollary}{Coro:1P} gives with $D=6$ for the shifted 
 1-point function after integration and simplification
 \begin{align*}
  \tilde{W}(X)=&\sqrt{X+c}\sqrt{1+c}-c+2\lambda^2\bigg\{ \sqrt{1+c}-\sqrt{X+c}-\frac{1}{2}+\frac{\sqrt{X+c}}{2\sqrt{1+c}}
\\&+ (\sqrt{1+c}(\sqrt{X+c}-\sqrt{1+c})-1) \log\bigg(1+\frac{1}{\sqrt{1+c}}\bigg)\\
 &+\frac{1}{\sqrt{X}}\log\bigg(\frac{(\sqrt{X}+\sqrt{X+c})(1+\sqrt{X})}
  {\sqrt{X}\sqrt{1+c}+\sqrt{X+c}}\bigg)\bigg\}\\
  -c=&\lambda^2\bigg\{\frac{1}{2(1+c)}+3-6\sqrt{1+c}+2(3c+2)\log\bigg(1+\frac{1}
  {\sqrt{1+c}}\bigg)\bigg\}.
 \end{align*}
 The function $c(\lambda)$ is uniquely invertible for $|\lambda|<\lambda_c\approx 0.8891$.
Expanding $c$ in $\lambda$ by \eqref{eq:cexpansion} gives $c(\lambda)=-\lambda^2(4\log2-\frac{5}{2})
+\lambda^4\frac{96(\log2)^2-132\log2+45}{4}
-\lambda^6 \frac{4608(\log2)^3-9600( \log2)^2+6672 \log 2-1545}{32}+\mathcal{O}(\lambda^8)$.
Inserting in \eqref{eq:G1} gives the first orders of the 1-point function 
\begin{align*}
 G^0(x)=\lambda \frac{2\log(1+x)-x(2-x)}{2(1+2x)}
 +\lambda^3 \frac{x^3(2+3x)(8\log2-5)^2}{4(1+2x)^3}+\mathcal{O}(\lambda^5)
\end{align*}
which is confirmed in \sref{App.}{App:PertCubic} by Feynman graph calculations.
\end{exm}
\noindent
Unexpectedly, the convergence radius on the $D=2$ Moyal space increases if the renormalisation corresponding to a higher
dimensional model is taken. This counterintuitive behaviour is even more fascinating if we notice 
that the coupling constant is additionally renormalised by a finite factor 
for the 6-dimensional renormalisation. This finite factor 
(infinite for the original $D=6$ Moyal space) shifts the convergence radius to a similar value in 
comparison to the $D=2$ and $D=4$ renormalisation.

\section{Summary}
The construction of the renormalised $\Phi^3$ matrix field theory model 
is complete. 
We established an algorithm to compute any correlation function 
$G^{(g)}(x_1^1,..,x^1_{N_1}|..|x_1^b,..,x^b_{N_b})$ for all genus $g$ 
with spectral dimension $\D<8$:

\begin{enumerate}\itemsep 0pt
\item Compute the free energy $F_g(\varrho)$ for $g\geq2$ via \sref{Theorem}{thm:Deltarho}.
It encodes the $p(3g-3)$ intersection numbers 
of $\psi$-classes on the moduli space of complex curves of genus $g$. 
Take $F_1=-\frac{1}{24} \log \varrho_0$ for $g=1$. 

\item Apply to $F_g(\varrho)$ according to 
\sref{Proposition}{prop:F} and \sref{Theorem}{finaltheorem} 
the boundary creation operators $
\hat{\mathrm{A}}^{\dag g}_{z_1,\dots,z_b}\circ 
\dots \hat{\mathrm{A}}^{\dag g}_{z_1,z_2} \circ 
\hat{\mathrm{A}}^{\dag g}_{z_1}$ defined in \sref{Definition}{def1}.
Multiply by $(2\lambda)^{4g+3b-4}$ to obtain
$\G_g(z_1,...,z_b)$ for $g>0$. Take \sref{Corollary}{Coro:1P} as well as 
the second part of
\sref{Theorem}{finaltheorem} for $g=0$.

\item Transform the variables by $z_\beta(x^\beta)= (
(1+2e(x^\beta))^2+c)^{1/2}$, where $c$ is given 
by the implicit equation of \sref{Corollary}{Coro:1P}, 
to obtain $G^{(g)}(x^1|...|x^b)=\G_g(z_1(x^1),...,z_b(x^b))$.

\item Pass to 
$G^{(g)}(x_1^1,..,x^1_{N_1}|..|x_1^b,..,x^b_{N_b})$ via difference quotients by
\sref{Proposition}{Prop:CubisRecur} which holds in the $\mN,V$-limit with $F_{p}\mapsto
\frac{1}{2}+e(x)$.

\end{enumerate}
\noindent
We remark that, in spite of the relation to the integrable Kontsevich
model, this $\Phi^3$-model provides a fascinating toy model for
a possible QFT in higher dimensions 
which shows many facets of renormalisation. Our
exact formulae can be expanded about $\lambda=0$ via \sref{Corollary}{Coro:1P} and
agree with the usual perturbative renormalisation which needs
Zimmermann's forest formula \cite{Zimmermann:1969jj}. Also note that 
at fixed genus $g$ one expects $\mathcal{O}(n!)$ graphs with $n$
vertices so that a convergent summation at fixed $g$ cannot be
expected a priori. To our knowledge, this is the first bosonic model which is just-renormalisable 
and still resummable for each genus.
All correlation functions have a finite radius of convergent
in $\lambda$. Taking the renormalisation procedure of a higher dimensional model 
for lower dimensions creates no problem, even though infinitely many finite counterterms are subtracted 
additionally.
Moreover, for $6\leq \D<8$ the $\beta$-function of the
coupling constant is positive for real $\lambda$ (see \sref{Remark}{rmk:beta}), which in this
particular case possesses not the slightest problem for summation. Furthermore, the renormalon problem appears 
(see \sref{Remark}{rmk:renorm}) also for $6\leq \D<8$.


\chapter[Quartic Interaction]{Quartic Interaction: Grosse-Wulkenhaar Model}\label{ch:quartic}
We consider in this chapter the matrix field theory model with quartic interaction.
This type of interaction was also studied 
in the past with great interest, however from a completely different 
perspective than the cubic model. The motivation goes back to QFT.

In QFT, a scalar model with a quartic self-interaction is 
the first natural choice for a rigorous understanding of QFT in general. 
The potential is, whitout renormalisation, bounded from below such that the model has good convergence property.
However, even in 2D the scalar $\Phi^4$ model was for a long time not understood rigorously. It was achieved by
constructive QFT in a series of papers by Glimm and Jaffe \cite{Glimm:1968kh,Glimm:1968kh1,Glimm:1968kh2,Glimm:1968kh3}. 

Later, interest in QFT models on deformed spacetime arose which was motivated by a fundamental minimal 
scale in Nature also called Planck scale. An example of a deformation is the Moyal space described in \sref{Sec.}{Sec.Moyal}.
It was recognised that a model living on the $D$-dimensional 
Moyal space forces an UV/IR mixing problem \cite{Minwalla:1999px}.
This problem could be resolved by introducing a harmonic oscillator term \cite{Grosse:2003nw,Grosse:2004yu}.
Assuming the self-dual point $\Omega=1$ (see \eqref{eq:SelfDual}) for the for the harmonic oscillator term, it was proved 
\cite{Disertori:2006nq}
that the $\beta$-function vanishes $(\beta=0)$ to all orders in perturbation theory. Furthermore, 
the model was proved to be renormalisable to all orders 
in perturbation theory in $D=4$ \cite{Grosse:2003aj}. 

With the techniques described in \sref{Sec.}{Sec:SDE}, which were first used in \cite{Disertori:2006nq}, 
a nonlinear integral equation for the 2-point function was derived in \cite{Grosse:2009pa} 
and proved to have a solution in \cite{Grosse:2012uv}.
We will derive this solution as special case in \sref{Sec.}{Sec.4dSol} since the 4D Moyal space corresponds to a special 
case of the matrix field theory model. 

We will start in \sref{Sec.}{Sec.quartSD} with computing the SDE in its full generality. 
The distinction between $(N_1+..+N_b)$-point function for even or odd $N_i$ will be discussed. This distinction does not 
exist in the cubic model. In \sref{Sec.}{Sec.quartSolution} the solution of the 2-point function will be computed, 
where $\D=0$ (finite matrices) and the $D=4$ Moyal space are discussed as particular examples. Applying the solution 
in \sref{Sec.}{Sec.quartHO} to
the SDE creates a special form of the equation compared to the cubic model.
The explicit structure of the recursive equation for the planar $N$-point function $(b=1)$ is analysed in 
\sref{Sec.}{Sec.quartRec} which is \textit{nonlinear} in comparison to \sref{Proposition}{Prop:CubisExpl} for the cubic model.

\section{Schwinger-Dyson Equations}\label{Sec.quartSD}
We derive the SDEs in the way described in \sref{Sec.}{Sec:SDE}. 
In contrast to the cubic model, the Ward-Takahashi identity has to be applied in its full 
generality with the degenerate terms $W^1[J],W^2[J]$ appearing in 
\sref{Theorem}{Thm:Raimar}. 

Recall the action with a quartic interaction
\begin{align}\label{eq:actionquart}
  S[\Phi]&=V\bigg(\sum_{n,m=0}^\mN \!\!\!
\frac{H_{nm}}{2}
\Phi_{nm}\Phi_{mn}\!
+\frac{\lambda}{4} \!\!\!\!\!
\sum_{n,m,k,l=0}^\mN
\!\!\!\!\!
\Phi_{nm}
 \Phi_{mk}
\Phi_{kl}\Phi_{ln}\!\bigg),\\
H_{nm}&=E_n+E_m,\nonumber
\end{align}
where two eigenvalues $E_i,E_j$ are not necessarily distinct. The partition function is due to \eqref{eq:Part2} given by
\begin{align}
 \mathcal{Z}[J]=&\int \mathcal{D} \Phi \exp\left(-S[\Phi]+V\mathrm{Tr}(J\Phi) \right)\nonumber\\\label{eq:partionquart}
 =&K\,\exp\Big(-\frac{\lambda }{4V^3}\sum_{n,m,k,l=0}^\mN
 \frac{\partial^4}{\partial J_{nm}\partial 
 J_{mk}\partial J_{kl}\partial J_{ln}}
 \Big)\mathcal{Z}_{\mathrm{free}}[J],
\\\nonumber
\mathcal{Z}_{\mathrm{free}}[J]:=&
 \exp\bigg(V\sum_{n,m=0}^\mN\frac{J_{nm}J_{mn}}{2H_{nm}}\bigg),
\end{align}
and $K:=\int \mathcal{D}\Phi\exp\Big(-VZ\sum_{n,m=0}^\mN\frac{H_{nm}}{2}
\Phi_{nm}\Phi_{mn}\Big)=\prod_{n,m=0}^\mN\sqrt{\frac{2}{VZH_{nm}}}$.

\begin{rmk}\label{Rmk:quartRenorm}
 The action and the partition function are not renormalised in \eqref{eq:actionquart} and \eqref{eq:partionquart}. 
 The mass renormalisation changes $\mu^2\mapsto\mu^2_{bare}$, the coupling constant renormalisation $\lambda\mapsto
 \lambda_{bare}$ and the field renormalisation $\Phi\mapsto\sqrt{Z}\Phi$. Inserting in the action
 gives the substitution $E_n\to Z E_n+const$ and $\lambda \to Z^2\lambda_{bare}$ for all SDEs. 
 We will see later that the coupling constant does not need renormalisation. Thus, we can take $\lambda=\lambda_{bare}$. 
 We will avoid the factors of $Z$ for this section. Renormalisation of the quartic model
 will be discussed in greater detail in \sref{Sec.}{Sec.quartSolution}.
\end{rmk}
\noindent
The quartic model has no 1-point function which becomes directly clear from perturbative expansion. 
Therefore, we start with the SDE of the 2-point function which was already computed before.
\begin{prps}(\cite[eq. 3.4]{Grosse:2012uv})\label{Prop:Quart2P}
 The 2-point function of the quartic matrix field theory model satisfies
 \begin{align*}
  G_{|pq|}&=
\frac{1}{E_p+E_q}-\frac{\lambda}{E_p+E_q} 
\bigg\{  G_{|pq|}\Big(\frac{1}{V^2}G_{|p|p|}+\frac{1}{V}\sum_{n=0}^\mN G_{|pn|}\Big)
+ \frac{1}{V^4}G_{|p|p|pq|}\\
&+\frac{1}{V^3}\sum_{n=0}^\mN G_{|pn|pq|}
+\frac{1}{V^2}( G_{|pppq|}+G_{|pqpq|})
+\frac{1}{V}\sum_{n=0}^\mN \frac{G_{|pq|}- G_{|nq|}}{E_n-E_p} 
+\frac{1}{V^2} \frac{G_{|p|q|}- G_{|q|q|}}{E_q-E_p} 
\bigg\}\;.
 \end{align*}
\begin{proof}
 By definition, the 2-point function for $p,q$ with $E_p\neq E_q$ is
 \begin{align*}
  G_{|pq|}=&\frac{1}{V}\frac{\partial^2}{\partial J_{pq}\partial J_{qp}}\log\Z[J]\bigg\vert_{J=0}\\
  =&\frac{1}{H_{pq}}-\frac{\lambda}{H_{pq}\Z[0]}\sum_{n,m=0}^\mN
  \frac{\partial^4\Z[J]}{\partial J_{qp}\partial J_{pn}\partial J_{nm}\partial J_{mq}}\bigg\vert_{J=0}\\
  =&\frac{1}{H_{pq}}-\frac{\lambda}{H_{pq}\Z[0]V^3}
  \frac{\partial^2 (W^1_p[J]+W^2_p[J])\Z[J]}{\partial J_{qp}\partial J_{pq}}\bigg\vert_{J=0}\\
  &\qquad -\frac{\lambda}{H_{pq}\Z[0]V^2}
  \sum_{n,m=0}^\mN\frac{1}{E_m-E_p}
  \frac{\partial^2}{\partial J_{mq}\partial J_{qp}}\bigg(J_{mn}\frac{\partial}{\partial J_{pn}}-
  J_{np}\frac{\partial}{\partial J_{nm}}\bigg)\Z[J]\bigg\vert_{J=0}\\
  =&
\frac{1}{H_{pq}}-\frac{\lambda}{H_{pq}} 
\bigg\{  G_{|pq|}\Big(\frac{1}{V^2}G_{|p|p|}+\frac{1}{V}\sum_{n=0}^\mN G_{|pn|}\Big)
+ \frac{1}{V^4}G_{|p|p|pq|}+\frac{1}{V^3}\sum_{n=0}^\mN G_{|pn|pq|}\\
&
+ \frac{1}{V^2}( G_{|pppq|}+G_{|pqpq|})
+\frac{1}{V}\sum_{n=0}^\mN \frac{G_{|pq|}- G_{|nq|}}{E_n-E_p} 
+\frac{1}{V^2} \frac{G_{|p|q|}- G_{|q|q|}}{E_q-E_p}\bigg\},
 \end{align*}
where we used in the second line of equation \eqref{eq:SDeqGeneral} and in the third and fourth line of
\sref{Theorem}{Thm:Raimar}. The last two lines are achieved by acting with the second order 
derivative $\frac{\partial^2}{\partial J_{qp}\partial J_{pq}}$ on $\Z[J]$ or 
$W^1_p[J]$, and fixing $n$ or $m$ by acting with $\frac{\partial^2}{\partial J_{mq}\partial J_{qp}}$ on $J_{mn}$
or $J_{np}$, respectively.

Since the lhs is regular if $E_p=E_q$, the rhs has a well-defined limit due to 
the continuation of the correlation functions to differentiable functions.
\end{proof}
\end{prps}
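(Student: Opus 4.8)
The plan is to specialise the general Schwinger--Dyson equation \eqref{eq:SDeqGeneral} to $b=1$, $N_1=2$, base point $J_{pq}$, and then to collapse the intermediate double sum with the Ward--Takahashi identity of \sref{Theorem}{Thm:Raimar}. Since a quartic interaction makes $\frac{\partial S_{int}}{\partial\Phi_{pq}}$ cubic in $\Phi$, formula \eqref{eq:SintAbl} turns \eqref{eq:SDeqGeneral} into
\begin{align*}
 G_{|pq|}=\frac{1}{H_{pq}}-\frac{\lambda}{H_{pq}\,\Z[0]\,V^{2}}
 \sum_{n,m=0}^{\mN}\frac{\partial^{4}\Z[J]}{\partial J_{qp}\,\partial J_{pn}\,\partial J_{nm}\,\partial J_{mq}}\bigg|_{J=0}.
\end{align*}
First I would recognise inside $\sum_n\partial_{J_{pn}}\partial_{J_{nm}}$ the pattern of \sref{Theorem}{Thm:Raimar} (with the roles $q\mapsto p$, $p\mapsto m$ there), which rewrites this contribution as a \emph{degenerate} piece $\delta_{m,p}(W^{1}_{p}[J]+W^{2}_{p}[J])\Z[J]$ plus a \emph{regular} piece $\frac{V}{E_m-E_p}\sum_n(J_{mn}\partial_{J_{pn}}-J_{np}\partial_{J_{nm}})\Z[J]$. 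Carrying along the still-present derivatives $\partial_{J_{qp}}$ (base point) and $\partial_{J_{mq}}$, the $m$-sum collapses on the $\delta$-piece and produces $\partial^2_{J_{qp}J_{pq}}(W^1_p[J]+W^2_p[J])\Z[J]$, while the regular piece keeps the weight $\tfrac{1}{E_m-E_p}$.

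Second, I would evaluate each piece at $J=0$. For the degenerate piece the task is to insert the generating-function expressions for $W^1_p[J]$ and $W^2_p[J]$ from \sref{Theorem}{Thm:Raimar} written out via \eqref{eq:DefCorrSour}, apply the two remaining derivatives $\partial_{J_{qp}}\partial_{J_{pq}}$, and set $J=0$: the quadratic part $W^1_p$ yields $G_{|pq|}\big(\tfrac{1}{V^2}G_{|p|p|}+\tfrac{1}{V}\sum_n G_{|pn|}\big)$ (one factor supplies the $G_{|p|}$-loop, the other the $2$-point response), while the linear part $W^2_p$ yields $\tfrac{1}{V^4}G_{|p|p|pq|}$, $\tfrac{1}{V^3}\sum_n G_{|pn|pq|}$ and $\tfrac{1}{V^2}(G_{|pppq|}+G_{|pqpq|})$ depending on whether a summation index inside $W^2_p$ is equated to $p$ or not. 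For the regular piece I would act with $\partial_{J_{mq}}\partial_{J_{qp}}$ on $\sum_n(J_{mn}\partial_{J_{pn}}-J_{np}\partial_{J_{nm}})\Z[J]$; at $J=0$ only terms in which the two derivatives strip off both explicit sources survive, which forces $n=q$ in two distinct ways, and — summing $m$ against $\tfrac{1}{E_m-E_p}$ and separating the case $m=q$ — reassembles precisely the difference quotients $\tfrac{1}{V}\sum_n\frac{G_{|pq|}-G_{|nq|}}{E_n-E_p}$ and $\tfrac{1}{V^2}\frac{G_{|p|q|}-G_{|q|q|}}{E_q-E_p}$, the numerator difference arising from the two ways a derivative can hit $\log\Z$ versus a product of first derivatives of $\log\Z$.

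Finally I would append the regularity remark: $G_{|pq|}$ is manifestly regular as $E_p\to E_q$, and every term on the right is, order by order in $\lambda$, a rational (hence, after continuation of $n\mapsto E_n$, differentiable) function of the eigenvalues, so the identity proved for $E_p\neq E_q$ extends to $E_p=E_q$ by continuity. I expect the main obstacle to be the bookkeeping in step two --- tracking which of the remaining $J$-derivatives lands on an explicit source, which on $\Z$ (resp.\ $\log\Z$), and which boundary/topology label each emerging correlation function carries, so as to reproduce faithfully the degenerate-index contributions bundled into $W^1_p,W^2_p$ and the exact numerator differences; the WT reduction itself is a mechanical application of \sref{Theorem}{Thm:Raimar}.
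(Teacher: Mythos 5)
Your proposal follows the paper's proof exactly: specialise \eqref{eq:SDeqGeneral} to the quartic vertex, apply \sref{Theorem}{Thm:Raimar} to the inner double derivative $\sum_n\partial_{J_{pn}}\partial_{J_{nm}}$ (with the index substitution you describe), and evaluate the degenerate $(W^1_p,W^2_p)$ and regular pieces at $J=0$, separating the coincident-index cases. One bookkeeping slip in your sketch: the factorised terms $G_{|pq|}\big(\tfrac{1}{V^2}G_{|p|p|}+\tfrac{1}{V}\sum_{n}G_{|pn|}\big)$ do not come from $W^1_p$ but from the $J$-independent part of $W^2_p$ multiplying $\partial^2_{J_{qp}J_{pq}}\Z[J]$ (i.e.\ both remaining derivatives landing on $\Z[J]$ in the product $(W^1_p+W^2_p)\Z[J]$); $W^1_p$ itself contributes nothing here, since each of its two factors carries a length-one boundary based at $p$ and is annihilated either by the vanishing $1$-point function or by the source derivatives requiring $p=q$.
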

\noindent
Expanding the SDE of \sref{Proposition}{Prop:Quart2P} in the genus yields at order $V^{-2g}$ 
\begin{align}\label{eq:2Pphi4}
 G^{(g)}_{|pq|}&=
\frac{\delta_{0,g}}{E_p+E_q}-\frac{\lambda}{E_p+E_q} 
\bigg\{  \sum_{h=0}^{g-1} G^{(h)}_{|pq|}G^{(g-1-h)}_{|p|p|}+\sum_{h=0}^{g} G^{(h)}_{|pq|}
\frac{1}{V}\sum_{n=0}^\mN G^{(g-h)}_{|pn|}
+ G^{(g-2)}_{|p|p|pq|}\\
&+\frac{1}{V}\sum_{n=0}^\mN G^{(g-1)}_{|pn|pq|}
+ G^{(g-1)}_{|pppq|}+G^{(g-1)}_{|pqpq|}
+\frac{1}{V}\sum_{n=0}^\mN \frac{G^{(g)}_{|pq|}- G^{(g)}_{|nq|}}{E_n-E_p} 
+ \frac{G^{(g-1)}_{|p|q|}- G^{(g-1)}_{|q|q|}}{E_q-E_p} \bigg\}.\nonumber
\end{align}
Notice that the planar 2-point function satisfies a nonlinear equation, whereas the 2-point function of genus $g>0$
of Euler characteristic $\chi=1-2g$ satisfies a linear equation with a inhomogeneity depending on correlation functions of 
Euler characteristic $\chi'>\chi$. 

The 2-point function of genus $g$ depends on 4-point functions of genus $g-1$. In general, correlation functions with 
boundary lengths larger than 2 are recursively expressed by correlation functions only of lengths 1 or 2. This recursive 
behavior was found in \cite{Grosse:2012uv} for the $N$-point 
function and the $(N+M)$-point function of genus $g$, where $N+M$ is even. 
Let us define the shorthand notation $G^{(g)}_{|\mathcal{I}|}:=G^{(g)}_{|I^1|..|I^b|}:
=G^{(g)}_{|p_1^1..p^1_{N_1}|..|p_1^b..p^b_{N_b}|}$ for $\mathcal{I}=\{I^1,..,I^b\}$ and $I^\beta=\{p_1^\beta,..,p_{N_\beta}^\beta\}$
for $\beta\in\{1,..,b\}$ and the cardinality $|\mathcal{I}|=|I^1|+..+|I^b|$. The recursive equation, in 
full generality, reads for any number of boundary components: 
\begin{prps}\label{Prop:quartRec}
Let $\mathcal{J}=\{J^2,..,J^b\}$, $J^\beta=\{p^\beta_1,..,p^\beta_{N_\beta}\}$ and $\beta\in\{2,..,b\}$.
 Then, the $(N_1+..+N_b)$-function with $b$ boundary components satisfies for $N_1\geq 3$ and $N=\sum_{i=1}^bN_i$ 
 even the recursive
 equation
 \begin{align*}
  G_{|p_1^1 p^1_2..p^1_{N_1}|\mathcal{J}|}=&-\frac{\lambda}{E_{p^1_2}-E_{p^1_{N_1}}}\bigg\{\frac{1}{V^2}
  \sum_{k=2}^{N_1}\frac{G_{|p^1_2p^1_3..p^1_k|p^1_{k+1}..p^1_{N}p^1_1|\mathcal{J}|}
 -G_{|p^1_1p^1_2p^1_3..p^1_{k-1}|p^1_{k}..p^1_{N_1}|\mathcal{J}|}}{E_{p^1_k}-E_{p^1_1}}\\
 +&\sum_{\beta=2}^{b}\sum_{k=1}^{N_\beta}\frac{G_{|p_1^\beta p_2^\beta..p_k^\beta p^1_2p^1_3..p^1_{N_1}
 p^1_1 p_{k+1}^\beta..p^\beta_{N_\beta}|\mathcal{J}\backslash \{J^\beta\}|}-
 G_{|p_1^\beta p_2^\beta..p_{k-1}^\beta p^1_1 p^1_2..p^1_{N_1}
  p_{k}^\beta..p^\beta_{N_\beta}|\mathcal{J}\backslash \{J^\beta\}|}}{E_{p^\beta_k}-E_{p^1_1}}\\
  +&\sum_{k=2}^{N_1}
  \sum_{\substack{\mathcal{I}\uplus \mathcal{I}'=\mathcal{J}}}\frac{G_{|p^1_{k+1}..p^1_{N_1}p^1_1|\mathcal{I}|} 
  G_{|p^1_2..p^1_k|\mathcal{I}'|}-
 G_{|p^1_{k}..p^1_{N_1}|\mathcal{I}|} G_{|p^1_1 p^1_2..p^1_{k-1}|\mathcal{I}'|}}{E_{p^1_k}-E_{p^1_1}}\bigg\}
 \end{align*}
 with $p^i_{N_i+j}\equiv p^i_j$. The last row sums only over $N_1-k+1+|\mathcal{I}|$ even.
 \begin{proof}
  Assume $p^i_j$ such that all $E_{p^i_j}$ are pairwise different. Set $a=p^1_1$, $d=p^1_2$ and $c=p^1_{N_1}$ to have 
  clear distinction between these and the remaining $p^i_j$. Define all derivatives besides the distinct ones by
  \begin{align*}
   \hat{D}_{dc}=\frac{\partial^{N_1+..+N_b-2}}{\partial J_{dp^1_3}\partial J_{p^1_3p^1_4}..
   \partial J_{p^1_{N_1-1}c} \partial J_{p^2_1p^2_2}\partial J_{p^2_2p^2_3}..
   \partial J_{p^2_{N_2}p^2_1} ..\partial J_{p^b_1p^b_2}..
   \partial J_{p^b_{N_b}p^b_1}}.
  \end{align*}
Bringing the global denominator of the proposition to the other side yields by definition of the correlation function
\begin{align*}
 &(E_{d}-E_{c})G_{|p_1^1 p^1_2..p^1_{N_1}|\mathcal{J}|}\\
 =&(E_{d}-E_{c})V^{b-2} \hat{D}_{dc}\frac{\partial^2}{\partial J_{ca}\partial J_{ad}}
 \log \Z[J]\big\vert_{J=0}\\
 =&V^{b-2} \hat{D}_{dc}\frac{\partial^2}{\partial J_{ca}\partial J_{ad}}(H_{ad}-H_{ac})\log \Z[J]\big\vert_{J=0}\\
 =&K\hat{D}_{dc}V^{b-1}\left(\frac{\partial}{\partial J_{ca}} \frac{\exp(-V S_{int}(\frac{1}{V}\partial))J_{da}}
 {\Z[J]}-
 \frac{\partial}{\partial J_{ad}} \frac{\exp(-V S_{int}(\frac{1}{V}\partial))J_{ac}}{\Z[J]}\right)\Z_{free}[J]\big\vert_{J=0}\\
 =&-V^{b-4}\lambda \hat{D}_{dc}\sum_{n,m}\left(\frac{\partial}{\partial J_{ca}}
 \frac{\frac{\partial^3}{\partial J_{an}\partial J_{nm}\partial J_{md}}}{\Z[J]}-\frac{\partial}{\partial J_{ad}}
\frac{ \frac{\partial^3}{\partial J_{cm}\partial J_{mn}\partial J_{ma}}}{\Z[J]}\right)\Z[J]\big\vert_{J=0},
\end{align*}
where equation \eqref{eq:SDeqGeneral} was applied. 
For $E_m=E_a$, the bracket vanishes for regular and non-regular terms since $\frac{\partial}{\partial J_{ca}}$ and 
$\frac{\partial}{\partial J_{ad}}$ do not act on $\frac{1}{\Z[J]}$ because it gives 0 after taking $J=0$ (no cycle in $a$).
Therefore, we can assume $E_m\neq E_a$ and apply the Ward-Takahasi identity of \sref{Proposition}{Prop:WardId} to have
\begin{align*}
 =&-\lambda \hat{D}_{dc}V^{b-3}\sum_{\substack{n,m \\ m\neq a} }\bigg(\frac{\partial}{\partial J_{ca}}
 \frac{\frac{\partial}{\partial J_{md}}}{\Z[J](E_m-E_a)}
 \big(J_{mn}\frac{\partial}{\partial J_{an}}-
 J_{na}\frac{\partial}{\partial J_{nm}}\big)\\
 &\qquad +\frac{\partial}{\partial J_{ad}}\frac{
 \frac{\partial}{\partial J_{cm}}}{\Z[J](E_m-E_a)}\big(J_{an}\frac{\partial}{\partial J_{mn}}-
 J_{nm}\frac{\partial}{\partial J_{na}}\big)\bigg)\Z[J]\big\vert_{J=0}.
\end{align*}
In the first line, $\frac{\partial}{\partial J_{ca}}$ has to act for the second term on $J_{na}$, where $n=c$. 
In the second line, $\frac{\partial}{\partial J_{ad}}$ has to act for the first term on $J_{an}$, where $n=d$. 
Both resulting terms cancel for any $m$. If 
$\frac{\partial}{\partial J_{md}}$ acts on $J_{mn}$ in the first line, and 
$\frac{\partial}{\partial J_{cm}}$ on $J_{nm}$ in the second line, the resulting terms cancel as well. 
We end up with the surviving terms
\begin{align*}
 =-\lambda \hat{D}_{dc}V^{b-3}\sum_{\substack{n,m \\ m\neq a} }\frac{1}{E_m-E_a}
 \left(\frac{\partial}{\partial J_{ca}}J_{mn}\frac{
 \frac{\partial^2}{\partial J_{an}\partial J_{md}}}{\Z[J]}
 -\frac{\partial}{\partial J_{ad}}J_{nm}\frac{
 \frac{\partial^2}{\partial J_{cm}\partial J_{na}}}{\Z[J]}\right)\Z[J]\big\vert_{J=0}.
\end{align*}
 Applying next the identity $ \frac{\partial_x \partial_yf}{f}=\partial_x\partial_y\log(f)+\partial_x\log(f) \partial_y\log(f)
$ yield
\begin{align*}
 &=-\lambda \hat{D}_{dc}V^{b-3}\sum_{\substack{n,m \\ m\neq a} }\frac{1}{E_m-E_a}
 \bigg\{\frac{\partial}{\partial J_{ca}}J_{mn}
 \frac{\partial^2}{\partial J_{an}\partial J_{md}}\log\Z[J]-\frac{\partial}{\partial J_{ad}}J_{nm}
 \frac{\partial^2}{\partial J_{cm}\partial J_{na}}\log\Z[J]\\
 &\qquad\qquad\qquad\qquad\qquad\qquad +
 \frac{\partial}{\partial J_{ca}}J_{mn}
 \bigg(\frac{\partial}{\partial J_{an}}\log\Z[J]\bigg)\bigg(\frac{\partial}{\partial J_{md}}\log\Z[J]\bigg)\\
& \qquad\qquad\qquad\qquad\qquad\qquad-\frac{\partial}{\partial J_{ad}}J_{nm}
 \bigg(\frac{\partial}{\partial J_{cm}}\log\Z[J]\bigg)
 \bigg(\frac{\partial}{\partial J_{na}}\log\Z[J]\bigg)\bigg\}\bigg\vert_{J=0}.
\end{align*}
  The $n,m$ are fixed by a derivative acting on $J_{mn}$ (or $J_{nm}$). In the first line, there are two possibilities, 
  either 
  a derivative of the form $\frac{\partial}{\partial J_{p^1_k p^1_{k+1}}}$ fixes the $n,m$, which produces 
  separated cycles, or a derivative of the form $\frac{\partial}{\partial J_{p^\beta_k p^\beta_{k+1}}}$ with 
  $\beta>1$ the $n,m$, which merges the first cycle with the $\beta^{\text{th}}$-cycle. In the last two lines,
  it is only possible that 
  a derivative of the form $\frac{\partial}{\partial J_{p^1_k}\partial J_{p^1_{k+1}}}$ fixes the $n,m$, otherwise $J=0$ 
  vanishes. Acting with the remaining derivatives of $\hat{D}_{dc}$ on the 
  product of the logarithms by considering 
  the Leibniz rule leads to the assertion for pairwise different $E_{p^i_j}$. 
  
  The expression stays true for coinciding $E_{p^i_j}$ since the lhs is regular which induces a well-defined limit of the rhs
  by continuation to differentiable functions. From perturbative considerations it is clear that the number of external legs 
  is necessarily even, which corresponds to even $N_1-k+1+|\mathcal{I}|$.
  
 \end{proof}
\end{prps}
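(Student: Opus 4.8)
The plan is to derive the recursion directly from the Schwinger--Dyson machinery of \sref{Sec.}{Sec:SDE}, mimicking the structure of the proof of \sref{Proposition}{Prop:Quart2P} but now tracking a boundary of arbitrary length $N_1\geq 3$. First I would assume that all $E_{p^i_j}$ are pairwise distinct; the general case follows at the end by the regularity/continuation argument used throughout this chapter, since the left-hand side is manifestly regular and the correlation functions extend to differentiable functions of their (continuously interpolated) indices. I would single out the three special indices $a:=p^1_1$, $d:=p^1_2$, $c:=p^1_{N_1}$, and collect all the remaining $J$-derivatives defining the correlation function into one operator $\hat{D}_{dc}$, so that $G_{|p^1_1p^1_2\dots p^1_{N_1}|\mathcal{J}|}=V^{b-2}\hat{D}_{dc}\,\partial_{J_{ca}}\partial_{J_{ad}}\log\Z[J]\big|_{J=0}$.

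The key algebraic step is to write the prefactor $(E_d-E_c)$ as $(E_a+E_d)-(E_a+E_c)=H_{ad}-H_{ac}$ and to recognise that $H_{ad}\,\partial_{J_{ad}}$ acting on $\Z_{free}$ produces a factor $J_{da}$ (and similarly for $ac$), exactly as in the first steps of \eqref{eq:SDeqGeneral}. This converts $(E_d-E_c)G_{|\dots|}$ into a difference of two terms each containing $\exp(-VS_{int}[\tfrac1V\partial])J_{(\cdot)}$, and applying the quartic interaction derivative \eqref{eq:SintAbl} turns each $J$ into a triple $J$-derivative $\sum_{n,m}\partial^3/(\partial J\,\partial J\,\partial J)$. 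Next I would apply the Ward--Takahashi identity of \sref{Proposition}{Prop:WardId} to trade one summed second-order derivative $\sum_n\partial_{J_{an}}\partial_{J_{nm}}$ for a first-order derivative times a source $J$, picking up the characteristic factor $1/(E_m-E_a)$. One must check that the terms with $E_m=E_a$ drop out --- they do, because no cycle in the index $a$ survives after setting $J=0$ --- so the Ward identity is legitimately applicable. A careful bookkeeping of which derivatives hit the newly created $J$'s shows that several contributions cancel pairwise (precisely the ``$n=c$'' and ``$n=d$'' terms, and the terms where $\partial_{J_{md}}$ hits $J_{mn}$ cancel against the mirror image in the $ac$-branch).

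After these cancellations one is left with two ``Wick-merging'' type terms of the form $\partial_{J_{ca}}J_{mn}\,\partial_{J_{an}}\partial_{J_{md}}\log\Z$ and its mirror; using $\partial_x\partial_y(\log f)$-vs-$(\partial_x\partial_y f)/f$ I would split each into a connected piece and a product of two $\log\Z$-derivatives. Fixing $n,m$ by letting one of the derivatives in $\hat{D}_{dc}$ land on $J_{mn}$ yields three cases: (i) a $\partial_{J_{p^1_k p^1_{k+1}}}$-derivative does it, cutting the first boundary into two separate cycles (the first sum over $k$ in the statement); (ii) a $\partial_{J_{p^\beta_k p^\beta_{k+1}}}$-derivative with $\beta>1$ does it, merging boundary $1$ with boundary $\beta$ (the second sum); (iii) in the disconnected (product-of-logs) piece only a cut of the first boundary is possible, producing $G_{|\dots|\mathcal{I}|}G_{|\dots|\mathcal{I}'|}$ with all spectator boundaries $\mathcal{J}$ distributed over $\mathcal{I}\uplus\mathcal{I}'$ (the third sum). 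Finally, distributing the remaining $\hat{D}_{dc}$-derivatives over the product by the Leibniz rule reproduces exactly the index patterns written in the proposition, and the parity constraint ``$N_1-k+1+|\mathcal I|$ even'' follows because the quartic model has only even-length boundaries, as already seen perturbatively.

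I expect the main obstacle to be the combinatorial bookkeeping in the last two paragraphs: identifying precisely which of the many terms generated by the Ward identity and by $\partial_{J_{ca}}J_{mn}$, $\partial_{J_{ad}}J_{nm}$ cancel, and then reading off the correct cyclic index assignments $p^1_2p^1_3\cdots p^1_{N_1}p^1_1$ versus $p^1_1p^1_2\cdots p^1_{k-1}$ on the two sides of each difference quotient (keeping track of the convention $p^i_{N_i+j}\equiv p^i_j$). The algebra of converting $(E_d-E_c)$ into $H_{ad}-H_{ac}$ and of the single Ward-identity application is routine once one has fixed notation; the genuinely delicate point is that, unlike the cubic case where one interaction-derivative suffices, here the quartic vertex forces a triple derivative whose two ``free'' legs can be contracted in the three topologically distinct ways above, and one must argue that no further term survives.
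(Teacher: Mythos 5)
Your proposal is correct and follows essentially the same route as the paper's own proof: the same choice of distinguished indices $a=p^1_1$, $d=p^1_2$, $c=p^1_{N_1}$, the rewriting $(E_d-E_c)=H_{ad}-H_{ac}$, a single application of the Ward--Takahashi identity of \sref{Proposition}{Prop:WardId} after checking the $E_m=E_a$ terms vanish, the split into connected and product pieces via the $\log\Z$ identity, and the same three-case analysis of which derivative fixes $n,m$. The closing regularity argument for coinciding eigenvalues and the parity remark also match the paper.
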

\noindent
After genus expansion, \sref{Proposition}{Prop:quartRec} gives at order $V^{-2g}$
\begin{align}\nonumber
  G^{(g)}_{|p_1^1 p^1_2..p^1_{N_1}|\mathcal{J}|}=&-\frac{\lambda}{E_{p^1_2}-E_{p^1_{N_1}}}\bigg\{
  \sum_{k=2}^{N_1}\frac{G^{(g-1)}_{|p^1_2p^1_3..p^1_k|p^1_{k+1}..p^1_{N_1}p^1_1|\mathcal{J}|}
 -G^{(g-1)}_{|p^1_1p^1_2p^1_3..p^1_{k-1}|p^1_{k}..p^1_{N_1}|\mathcal{J}|}}{E_{p^1_k}-E_{p^1_1}}\\\label{eq:quartRec}
 +&\sum_{\beta=2}^{b}\sum_{k=1}^{N_\beta}\frac{G^{(g)}_{|p_1^\beta p_2^\beta..p_k^\beta p^1_2p^1_3..p^1_{N_1}
 p^1_1 p_{k+1}^\beta..p^\beta_{N_\beta}|\mathcal{J}\backslash \{J^\beta\}|}-
 G^{(g)}_{|p_1^\beta p_2^\beta..p_{k-1}^\beta p^1_1 p^1_2..p^1_{N_1}
  p_{k}^\beta..p^\beta_{N_\beta}|\mathcal{J}\backslash \{J^\beta\}|}}{E_{p^\beta_k}-E_{p^1_1}}\\
  +&\sum_{h+h'=g}\sum_{k=2}^{N_1}
  \sum_{\substack{\mathcal{I}\uplus \mathcal{I}'=\mathcal{J}}}\frac{G^{(h)}_{|p^1_{k+1}..p^1_{N_1}p^1_1|\mathcal{I}|} 
  G^{(h')}_{|p^1_2..p^1_k|\mathcal{I}'|}-
 G^{(h)}_{|p^1_{k}..p^1_{N_1}|\mathcal{I}|} G^{(h')}_{|p^1_1 p^1_2..p^1_{k-1}|\mathcal{I}'|}}{E_{p^1_k}-E_{p^1_1}}\bigg\}.\nonumber
 \end{align}

\begin{exm}\label{Exm:quartRec}
 Take $\mathcal{J}=\emptyset$ and $g=0$. Then, \eqref{eq:quartRec} gives
 \begin{align*}
  G^{(0)}_{|p_1 p_2..p_{N}|}=&-\lambda
  \sum_{k=1}^{\frac{N-2}{2}}
  \frac{G^{(0)}_{|p_{2k+2}..p_{N}p_1|} 
  G^{(0)}_{|p_2..p_{2k+1}|}-
 G^{(0)}_{|p_{2k+1}..p_{N}|} G^{(0)}_{|p_1 p_2..p_{2k}|}}{(E_{p_{2k+1}}-E_{p_1})(E_{p_2}-E_{p_{N}})},
 \end{align*}
 where $p_i\equiv p^1_i$.
 This result coincides with \cite[Prop. 3.4]{Grosse:2012uv} since all correlation functions with boundary
 of odd length vanishes.
\end{exm}
\noindent 
The recursive equation of \sref{Proposition}{Prop:quartRec} is \textit{nonlinear}
on the rhs (strictly different to the cubic model). 
A correlation function with boundary lengths larger than 
3 is therefore built recursively by multiplications of correlation functions of shorter boundary lengths. 
This is a further big difference to the cubic model, since
the recursive equation of \sref{Proposition}{Prop:CubisRecur} is linear.

Notice also that the first boundary labelled by $p^1_1,..,p_{N_1}^1$ as 
well as the labellings $p^1_1,p^1_2$ and $p^1_{N_1}$ play 
a special r\^ole for the recursion. One could write the recursion for any other three different adjacent 
labellings $p^\beta_k,p^\beta_{k+1}$ and $p^\beta_{k-1}$, where 
$p^\beta_k$ is then called the \textit{base point}. 
Since all correlation functions have a unique solution, at least from the 
perturbative expansion in $\lambda$, the recursive equation needs to be independent of the order of the base points 
chosen for each recursion until only correlation functions of boundary length 1 and 2 survive.
The explicit structure is analysed and discussed extensively in \sref{Sec.}{Sec.quartRec} for \sref{Example}{Exm:quartRec}.

\begin{rmk}
 The recursive equation of 
 \sref{Proposition}{Prop:CubisRecur} is invariant under the renormalisation
 discussed in \sref{Remark}{Rmk:quartRenorm}, $\lambda\to Z^2\lambda_{bare}$ and $E_n\to ZE_n+const$. 
 Since this holds in particular for the planar 4-point function
 by setting $N=4$ in \sref{Example}{Exm:quartRec}, there
 is no need for additional coupling constant renormalisation $\lambda_{bare}=\lambda$.
\end{rmk}
\noindent
Next, we will consider an analogue to the boundary creation operator in the quartic interacting case. 
Assume for the following considerations that all $E_k$ are distinct with multiplicity one.
Assume further that the $E_k$ can be varied in a small disjoint neighbourhood $U_k$ such that we have
\begin{align}\nonumber
 &\frac{\partial}{\partial E_i} \int D\Phi \,e^{-V\Tr\big(E\Phi^2+\frac{\lambda}{4}\Phi^4 -J\Phi\big)}
 =-V\int D\Phi\,\sum_{n=0}^\mN \Phi_{in}\Phi_{ni} 
 \,e^{-V\Tr\big(E\Phi^2+\frac{\lambda}{4}\Phi^4 -J\Phi\big)}\\\label{eq:quartdef1}
 =&-\frac{1}{V}\sum_{n=0}^\mN\frac{\partial^2}{\partial J_{in}\partial J_{ni}}\int D\Phi\, 
 \,e^{-V\Tr\big(E\Phi^2+\frac{\lambda}{4}\Phi^4 -J\Phi\big)}.
\end{align}
Having multiplicities $r_k$
for the distinct eigenvalues $e_k=E_q=..=E_{q+r_k-1}$ changes \eqref{eq:quartdef1} to
\begin{align}\label{eq:quartdef2}
 \frac{V}{r_k}\frac{\partial}{\partial E_q} \int D\Phi \,e^{-V\Tr\big(E\Phi^2+\frac{\lambda}{4}\Phi^4 -J\Phi\big)}
 =-\sum_{n=0}^\mN\frac{\partial^2}{\partial J_{qn}\partial J_{nq}}\int D\Phi\, 
 \,e^{-V\Tr\big(E\Phi^2+\frac{\lambda}{4}\Phi^4 -J\Phi\big)}.
\end{align}
We conclude with this idea that the
derivative wrt $E_q$ on a correlation function gives:
\begin{prps}\label{Prop:quartder}
 Let $\mathcal{J}=\{J^1,..,J^b\}$, $J^\beta=\{p^\beta_1,..,p^\beta_{N_\beta}\}$ and $\beta\in\{1,..,b\}$.
 Let $e_k$ be the distinct eigenvalues of $E$ of multiplicity $r_k$ with $e_k=E_q=..=E_{q+r_k-1}$.
 Then, we have for $N=\sum_{i=1}^bN_i$ even, and $E_q\neq E_{p^i_j}$
 \begin{align*}
  -\frac{V}{r_k}\frac{\partial}{\partial E_q}G_{|\mathcal{J}|}=\frac{1}{V}\sum_{n=0}^\mN G_{|qn|\mathcal{J}|}
  +\frac{1}{V^2}
 G_{|q|q|\mathcal{J}|}+\sum_{\beta=1}^b\sum_{k=1}^{N_\beta}G_{|qp^\beta_{k}..p^\beta_{N_\beta+k}|\mathcal{J}
  \backslash \{J^\beta\}|}
  +\!\!\!\sum_{\mathcal{I}\uplus \mathcal{I}'=\mathcal{J}}G_{|q|\mathcal{I}|}G_{|q|\mathcal{I}'|}
 \end{align*}
 with $p^i_{N_i+j}\equiv p^i_j$. The last sum over $\mathcal{I}$ contains 
 only terms with odd $|\mathcal{I}|$.
 \begin{proof}
  Assume $p^i_j$ such that all $E_ {p^i_j}$ are pairwise distinct. Let 
  \begin{align*}
   \hat{D}=\frac{\partial^{N_1+..+N_b}}{\partial J_{p^1_1p^1_2}\partial J_{p^1_2p^1_3}..
   \partial J_{p^1_{N_1}p^1_1} \partial J_{p^2_1p^2_2}\partial J_{p^2_2p^2_3}..
   \partial J_{p^2_{N_2}p^2_1} ..\partial J_{p^b_1p^b_2}..
   \partial J_{p^b_{N_b}p^b_1}}.
  \end{align*}
  Due to \eqref{eq:quartdef2} the derivative wrt to $E_q$ commutes with the derivative wrt $ J_{nm}$. Using the definition 
  of the correlation function yields
  \begin{align*}
   -\frac{V}{r_k}\frac{\partial}{\partial E_q}G_{|\mathcal{J}|}=&
   -\frac{V^{b-1}}{r_k}\hat{D}\frac{\partial}{\partial E_q}\log \Z[J]\big\vert_{J=0}\\
   =&V^{b-2}\hat{D}\sum_{n=0}^\mN \frac{\frac{\partial^2}{\partial J_{qn}\partial J_{nq}}\Z[J]}{\Z[J]}\bigg\vert_{J=0}\\
   =&V^{b-2}\hat{D}\sum_{n=0}^\mN\bigg\{\bigg(\frac{\partial\log \Z[J]}{\partial J_{qn}}\bigg)
   \bigg(\frac{\partial\log \Z[J]}{\partial J_{nq}}\bigg)+\frac{\partial^2\log \Z[J]}{\partial J_{qn}\partial J_{nq}}
   \bigg\}\bigg\vert_{J=0}.
  \end{align*}
Since $E_q\neq E_ {p^i_j}$ the multiplicative term generates a cycle only in the case of $n=q$ 
and vanishes otherwise for $J=0$. 
The last term survives and creates one more boundary for $n=q$
and merges two boundaries for $n=p^i_j$, which gives the expression of the Proposition.
\\
For coinciding $E_{p^i_j}$ the rhs is regular which induces regularity on the lhs.
 \end{proof}
\end{prps}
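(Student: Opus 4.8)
The plan is to treat $\frac{\partial}{\partial E_q}$ exactly like a vertex insertion in the derivation of the Schwinger--Dyson equations, since $E$ enters the action \eqref{eq:actionquart} only through the diagonal quadratic term. First I would invoke the identities \eqref{eq:quartdef1}--\eqref{eq:quartdef2}: differentiating with respect to the distinct eigenvalue $e_k=E_q=\dots=E_{q+r_k-1}$ of multiplicity $r_k$ produces the operator identity $\frac{V}{r_k}\frac{\partial}{\partial E_q}\Z[J]=-\sum_n\frac{\partial^2}{\partial J_{qn}\partial J_{nq}}\Z[J]$, in which $\frac{\partial}{\partial E_q}$ commutes with every $J$-derivative and hence can be pushed through the derivatives $\hat{D}$ that build $G_{|\mathcal{J}|}$.

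Next I would apply $-\frac{V}{r_k}\frac{\partial}{\partial E_q}$ to $G_{|\mathcal{J}|}=V^{b-2}\hat{D}\log\frac{\Z[J]}{\Z[0]}\big\vert_{J=0}$, rewrite the resulting $\frac{\partial^2\Z}{\partial J_{qn}\partial J_{nq}}/\Z$ as $\frac{\partial^2\log\Z}{\partial J_{qn}\partial J_{nq}}+\big(\frac{\partial\log\Z}{\partial J_{qn}}\big)\big(\frac{\partial\log\Z}{\partial J_{nq}}\big)$, act with the remaining $\hat{D}$, set $J=0$, and read off correlation functions through \eqref{eq:DefCorrSour}. Under the genericity hypothesis that all $E_{p_j^i}$ are pairwise distinct and $E_q\neq E_{p_j^i}$, the rest is a case distinction on the intermediate index $n$ combined with a Leibniz expansion of $\hat{D}$: in the product (``disconnected-looking'') term only $n=q$ yields a cycle through $q$ at $J=0$, and distributing $\hat{D}$ over the two logarithmic factors gives exactly $\sum_{\mathcal{I}\uplus\mathcal{I}'=\mathcal{J}}G_{|q|\mathcal{I}|}G_{|q|\mathcal{I}'|}$; in the connected two-point term, $n=q$ contributes both the $n=q$ summand of $\frac{1}{V}\sum_nG_{|qn|\mathcal{J}|}$ and the separate two-boundary term $\frac{1}{V^2}G_{|q|q|\mathcal{J}|}$, while $n=p^\beta_k$ merges the new single-face boundary into $J^\beta$, producing $\sum_{\beta=1}^b\sum_{k=1}^{N_\beta}G_{|qp^\beta_k\cdots p^\beta_{N_\beta+k}|\mathcal{J}\setminus\{J^\beta\}|}$. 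The parity restrictions --- that $N=\sum_iN_i$ be even and that the $\mathcal{I}$-sum run only over odd $|\mathcal{I}|$ --- follow from the elementary vanishing of any quartic-model correlation function whose total number of indices is odd.

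Finally I would argue, as in the Schwinger--Dyson proofs quoted earlier, that the left-hand side $-\frac{V}{r_k}\frac{\partial}{\partial E_q}G_{|\mathcal{J}|}$ remains regular when some of the $E_{p_j^i}$ collide, so that both sides extend continuously from the generic locus and the distinctness hypotheses can be dropped, using that correlation functions continue to differentiable functions of the eigenvalue labels. The main obstacle is purely combinatorial: after the Leibniz expansion one must keep track of which of the $b$ boundaries, or which adjacent pair of them, each factor of $\hat{D}$ lands on, so that the three effects --- creation of a new boundary at $q$, merging of $J^\beta$ with that new boundary, and splitting of $\mathcal{J}$ into $\mathcal{I}\uplus\mathcal{I}'$ --- are each counted exactly once and with the correct power of $V$. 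The most delicate point is the coincidence $n=q$, which feeds simultaneously the $n=q$ term of $\frac{1}{V}\sum_nG_{|qn|\mathcal{J}|}$ and the genuinely different term $\frac{1}{V^2}G_{|q|q|\mathcal{J}|}$; that is where an off-by-one error is easiest to make.
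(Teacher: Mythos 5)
Your proposal follows the paper's own proof essentially verbatim: the identity \eqref{eq:quartdef2} to trade $\partial/\partial E_q$ for $-\sum_n\partial^2/\partial J_{qn}\partial J_{nq}$, the split of $\frac{\partial^2\Z}{\partial J_{qn}\partial J_{nq}}/\Z$ into $\partial^2\log\Z+(\partial\log\Z)(\partial\log\Z)$, and the same case analysis on $n$ (with $n=q$ feeding both the new-boundary terms and the product term, and $n=p^\beta_k$ merging boundaries), concluding with the same regularity argument for coinciding eigenvalues. The approach and all key steps match; no gaps.
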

\begin{cor}\label{Cor:quartF}
 Let $F:=\frac{1}{V^2}\log \Z[0]$. With the eigenvalue distribution assumed in \sref{Proposition}{Prop:quartder}, 
 we have in particular
 \begin{align*}
  -\frac{V}{r_k}\frac{\partial}{\partial E_q}F=\frac{1}{V}\sum_{n=0}^\mN G_{|qn|}+\frac{1}{V^2}G_{|q|q|}.
 \end{align*}
\end{cor}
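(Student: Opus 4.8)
The plan is to read off \sref{Corollary}{Cor:quartF} as the ``$\mathcal{J}=\emptyset$'' instance of \sref{Proposition}{Prop:quartder}, with the single proviso that $F$ is built from $\log\Z[0]$ rather than from $\log\frac{\Z[J]}{\Z[0]}$, so that the formula of that proposition cannot be quoted verbatim but the short computation behind it must be rerun once more. Concretely, I would start from \eqref{eq:quartdef2} evaluated at $J=0$,
\begin{align*}
 \frac{V}{r_k}\frac{\partial}{\partial E_q}\Z[0]=-\sum_{n=0}^\mN
 \frac{\partial^2\Z[J]}{\partial J_{qn}\partial J_{nq}}\bigg\vert_{J=0},
\end{align*}
which is legitimate because, by \eqref{eq:quartdef1}, the $E_q$-derivative commutes with every $J$-derivative. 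Dividing by $\Z[0]$ and using the standard identity
\begin{align*}
 \frac{1}{\Z}\frac{\partial^2\Z}{\partial J_{qn}\partial J_{nq}}
 =\frac{\partial^2\log\Z}{\partial J_{qn}\partial J_{nq}}
 +\Big(\frac{\partial\log\Z}{\partial J_{qn}}\Big)\Big(\frac{\partial\log\Z}{\partial J_{nq}}\Big)
\end{align*}
turns the right-hand side into connected quantities.

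Next I would evaluate at $J=0$. The product of first derivatives is proportional to $\langle\Phi_{qn}\rangle_c\langle\Phi_{nq}\rangle_c$ and vanishes identically, since the quartic model has no $1$-point function. For the second-derivative term I would use the correspondence between $J$-derivatives and correlation functions recalled in \sref{Sec.}{Sec:SDE}: for $n\neq q$ one gets $V\,G_{|qn|}$, whereas the degenerate case $n=q$ gives $V^2\langle\Phi_{qq}\Phi_{qq}\rangle_c=V\,G_{|qq|}+G_{|q|q|}$ via $\langle\Phi_{qq}^2\rangle_c=\frac{1}{V}G_{|qq|}+\frac{1}{V^2}G_{|q|q|}$. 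Summing over $n$ yields
\begin{align*}
 \frac{V}{r_k}\frac{\partial}{\partial E_q}\log\Z[0]
 =-\Big(V\sum_{n=0}^\mN G_{|qn|}+G_{|q|q|}\Big),
\end{align*}
and multiplying both sides by $-\tfrac{1}{V^2}$ together with $F=\tfrac{1}{V^2}\log\Z[0]$ gives the claimed identity.

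I do not expect any serious obstacle; the corollary is essentially bookkeeping once \sref{Proposition}{Prop:quartder} is in place. The only point that needs attention is the degenerate contribution $n=q$: one must keep the extra $\frac{1}{V^2}G_{|q|q|}$ hidden inside $V^2\langle\Phi_{qq}^2\rangle_c$, which is precisely the ``$\frac{1}{V^2}G_{|q|q|\mathcal{J}|}$'' term of \sref{Proposition}{Prop:quartder} specialised to $\mathcal{J}=\emptyset$. All the other terms of that proposition are empty here --- the sums over boundaries $\beta$ trivially, and the sum over partitions $\mathcal{I}\uplus\mathcal{I}'=\mathcal{J}$ because $|\mathcal{I}|=0$ is not odd --- consistently with the two-term right-hand side of the corollary. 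Since $\mathcal{J}=\emptyset$, the usual regularity/continuity argument for coinciding eigenvalues is not even needed.
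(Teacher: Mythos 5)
Your proposal is correct and follows essentially the same route as the paper: the corollary is the $\mathcal{J}=\emptyset$ specialisation of \sref{Proposition}{Prop:quartder} (with the convention $G_{|\emptyset|}=F$, trivial $\hat D$ and prefactor $V^{b-2}=V^{-2}$), and your rerun of \eqref{eq:quartdef2} plus the $\log\Z$ identity, the vanishing of the 1-point function, and the degenerate decomposition $V^2\langle\Phi_{qq}^2\rangle_c=V G_{|qq|}+G_{|q|q|}$ is exactly the content of that proof. No gaps.
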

\noindent
Notice that \sref{Proposition}{Prop:quartder} has a problem if $E_q=E_{p^i_j}$, therefore it is natural to define a 
derivative $\mathcal{T}_{p}$ which ignores the fact that $G$ depends (possibly) explicitly on $E_q$:
\begin{dfnt}\label{Def:Dp}
Let $\mathcal{J}=\{J^1,..,J^b\}$, $J^\beta=\{p^\beta_1,..,p^\beta_{N_\beta}\}$ and $\beta\in\{1,..,b\}$.
 The operator $\mathcal{T}_{q}$ is defined on the correlation function $G_{|\mathcal{J}|}$ by 
 \begin{align*}
  \mathcal{T}_q G_{|\mathcal{J}|}:=\frac{1}{V}\sum_{n=0}^\mN G_{|qn|\mathcal{J}|}
  +\frac{1}{V^2}
 G_{|q|q|\mathcal{J}|}+\sum_{\beta=1}^b\sum_{k=1}^{N_\beta}G_{|qp^\beta_{k}p^\beta_{k+1}..p^\beta_{N_\beta+k}|\mathcal{J}
  \backslash \{J^\beta\}|}+\sum_{\mathcal{I}\uplus \mathcal{I}'=\mathcal{J}}G_{|q|\mathcal{I}|}G_{|q|\mathcal{I}'|},
 \end{align*}
 where $q\in \mathcal{J}$ is possible.
\end{dfnt}
\noindent 
The operator $\mathcal{T}_{q}$ can be understood 
due to the considerations of \sref{Proposition}{Prop:quartder} 
as a derivative which does not act on any external energies.
From the perturbative point of view, this operator acts only on the closed faces
of the Feynman graphs, even if $E_q$ coincides 
with an external face. 

The SDE of \sref{Proposition}{Prop:Quart2P} can be rewritten with $\mathcal{T}_{q}$ to 
\begin{align*}
  G_{|pq|}&=
\frac{1}{H_{pq}}-\frac{\lambda}{H_{pq}} 
\bigg\{  G_{|pq|}\mathcal{T}_p F+\frac{1}{V^2}\mathcal{T}_p G_{|pq|}
+\frac{1}{V}\sum_{n=0}^\mN \frac{G_{|pq|}- G_{|nq|}}{E_n-E_p} 
+\frac{1}{V^2} \frac{G_{|p|q|}- G_{|q|q|}}{E_q-E_p} 
\bigg\}\;
 \end{align*}
with $F$ defined in \sref{Corollary}{Cor:quartF}.

\begin{lemma}\label{Lemma:Dp}
 Let $\mathcal{J}=\{J^1,..,J^b\}$, $J^\beta=\{p^\beta_1,..,p^\beta_{N_\beta}\}$ and $\beta\in\{1,..,b\}$. Let further be 
 \begin{align*}
  \hat{D}=\frac{\partial^{N_1+..+N_b}}{\partial J_{p^1_1p^1_2}\partial J_{p^1_2p^1_3}..\partial J_{p^1_{N_1}p^1_1}
  ..\partial J_{p^b_1p^b_2}..\partial J_{p^b_{N_b}p^b_1}}
 \end{align*}
 with $G_{|\mathcal{J}|}=V^{b-2}\hat{D}\log\Z[J]\vert_{J=0}$. Then,
the identity
\begin{align*}
 \hat{D}(W^1_q+W^2_q)=\mathcal{T}_q G_{|\mathcal{J}|}
\end{align*}
holds,
where $W^1_q$ and $W^2_q$ are defined in \sref{Theorem}{Thm:Raimar}.
\begin{proof}
 Take the definition of $W^1_q$ and $W^2_q$ of \sref{Theorem}{Thm:Raimar} and act with $\hat{D}$ on it, which is exactly the 
 \sref{Definition}{Def:Dp}
 of $\mathcal{T}_q$ on $G_{|\mathcal{J}|}$.
\end{proof}
\end{lemma}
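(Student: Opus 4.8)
The proof is a direct bookkeeping computation: one expands $W^1_q[J]+W^2_q[J]$ into the explicit families of terms supplied by \sref{Theorem}{Thm:Raimar}, applies $\hat{D}$, sets $J=0$, and reads off the result through the generating relation \eqref{eq:DefCorrSour} for $\log(\Z[J]/\Z[0])$ (so that, precisely, the claim reads $V^{b-2}\hat{D}(W^1_q+W^2_q)|_{J=0}=\mathcal{T}_q G_{|\mathcal{J}|}$, in the same convention that gives $G_{|\mathcal{J}|}=V^{b-2}\hat{D}\log(\Z[J]/\Z[0])|_{J=0}$). First I would recall the shape of the two functionals: by \sref{Theorem}{Thm:Raimar}, $W^2_q[J]$ is a generating functional of exactly the form \eqref{eq:DefCorrSour} carrying three families of coefficients, namely $\tfrac{1}{V^2}G^{(g)}_{|q|q|\dots|}$, $\tfrac{1}{V}\sum_n G^{(g)}_{|qn|\dots|}$, and $\sum_{M\geq3}\sum G^{(g)}_{|nqnq_1\cdots q_{M-3}|\dots|}J_{nq_1}J_{q_1q_2}\cdots J_{q_{M-3}n}$, while $W^1_q[J]$ is the bilinear expression $\big(\tfrac1V\sum_g V^{-2g}G^{(g)}_{|q|\dots|}\big)\big(\tfrac1V\sum_{g'}V^{-2g'}G^{(g')}_{|q|\dots|}\big)$ built from two such functionals.

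Then I would treat the four contributions in turn. \textbf{(a)} In the family $\tfrac1V\sum_n G^{(g)}_{|qn|\dots|}$ all $N_1+\cdots+N_b$ derivatives of $\hat{D}$ hit the dummy boundary monomials $\prod_\beta \J_{p_1^\beta\cdots p_{N_\beta}^\beta}$; the prefactors $\tfrac1{b!}$ and $\tfrac1{N_\beta}$ cancel against permutations of boundaries and cyclic rotations inside a boundary exactly as in the derivation of \eqref{eq:DefCorrSour}, and the genus sum reassembles $G$ from its $G^{(g)}$, leaving $\tfrac1V\sum_n G_{|qn|\mathcal{J}|}$. \textbf{(b)} The family $\tfrac1{V^2}G^{(g)}_{|q|q|\dots|}$ is treated identically and yields $\tfrac1{V^2}G_{|q|q|\mathcal{J}|}$. \textbf{(c)} For the third family the explicit source string $J_{nq_1}J_{q_1q_2}\cdots J_{q_{M-3}n}$ must be saturated by the $N_\beta$ derivatives belonging to one single boundary $J^\beta$ of $\mathcal{J}$; this forces $M=N_\beta+2$ and merges that boundary with the distinguished $q$-structure, while the remaining derivatives reproduce $\mathcal{J}\backslash\{J^\beta\}$, and the $N_\beta$ cyclically inequivalent ways to identify $n$ with some $p^\beta_k$ produce the base-point sum, giving $\sum_{\beta=1}^b\sum_{k=1}^{N_\beta}G_{|qp^\beta_k p^\beta_{k+1}\cdots p^\beta_{N_\beta+k}|\mathcal{J}\backslash\{J^\beta\}|}$. \textbf{(d)} For $W^1_q[J]$ the Leibniz rule distributes the $b$ boundaries of $\mathcal{J}$ between the two factors; the contribution in which a subset $\mathcal{I}$ feeds the first factor and $\mathcal{I}'=\mathcal{J}\backslash\mathcal{I}$ the second gives $G_{|q|\mathcal{I}|}G_{|q|\mathcal{I}'|}$, and since in the quartic model correlation functions vanish whenever a boundary of odd total length occurs (cf.\ \sref{Example}{Exm:quartRec}), the length-$1$ boundary $|q|$ present in each factor forces $|\mathcal{I}|$ and $|\mathcal{I}'|$ odd. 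Summing \textbf{(a)}--\textbf{(d)} reproduces verbatim the four terms of \sref{Definition}{Def:Dp}, which is the assertion.

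I expect the only delicate point to be the combinatorics of step \textbf{(c)}: correctly identifying the free summation labels $n,q_1,\dots,q_{M-3}$ of the explicit source monomial with the derivative labels of one boundary, and checking that each matching is counted once per cyclic class, so that the base-point sum over $k$ appears with multiplicity one and no spurious factor creeps in from the $\tfrac1{N_\beta}$ (which there belongs to a \emph{different}, dummy, boundary). The parallel verification of the powers of $V$ and of the prefactors $\tfrac1{b!},\tfrac1{N_\beta}$ in all four steps is routine but must be carried out in lockstep with the conventions \eqref{eq:DefCorr}--\eqref{eq:DefCorrSour}; beyond \sref{Theorem}{Thm:Raimar} no new idea is needed, which is why the statement admits the one-line proof given above.
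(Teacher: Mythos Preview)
Your proposal is correct and is precisely the approach the paper takes: the paper's one-sentence proof (``act with $\hat{D}$ on the definitions of $W^1_q,W^2_q$ and recognise \sref{Definition}{Def:Dp}'') is exactly the bookkeeping computation you spell out in steps \textbf{(a)}--\textbf{(d)}. Your remark that the identity should be read with the factor $V^{b-2}$ and evaluation at $J=0$, in parallel with $G_{|\mathcal{J}|}=V^{b-2}\hat{D}\log\Z[J]\vert_{J=0}$, is a fair clarification of the paper's shorthand.
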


\begin{rmk}
 In the limit $\mN,V\to\infty$, the operator $\mathcal{T}_q$ converges with $e(x)$ and $r(x)$ described 
in \sref{Sec.}{Sec:LargeLimit} formally to the functional derivative
\begin{align*}
 \mathcal{T}_q\to D(x):=-\frac{\delta}{\delta e(t)}\frac{\delta }{\delta (r(t)dt)}\bigg\vert_{t=x}.
\end{align*}
Setting $t=x$ after derivation, the operator $D(x)$ avoids the degenerate case, i.e. 
$D(x)$ ignores an explicit dependence on $e(x)$. 
\end{rmk}
\noindent
The next step is to derive the SDEs in its full generality. \sref{Proposition}{Prop:quartRec} shows that it is sufficient 
to assume boundaries of lengths 1 and 2. However, we have different SDEs if the base point is chosen
from a boundary of length 1 or length 2. For the base point coming from boundary length 1, the SDE reads:
\begin{prps}\label{Prop:quart11P}
 Let $\mathcal{J}=\{J^2,..,J^{2b},J^{2b+1},..,J^{2b+b'}\}$, $J^\beta=\{p^\beta\}$ for $\beta\in\{2,..,2b\}$ and
 $J^\beta=\{q^\beta_1,q^\beta_2\}$ for $\beta\in\{2b+1,..,2b+b'\}$. Then, the $(1+..+1+2+..+2)$-point function
 with $2b$ boundaries of length 1 and $b'$ boundaries of length 2 satisfies
 \begin{align*}
  &G_{|p|\mathcal{J}|}=-\frac{\lambda}{H_{pp}}\bigg\{\frac{1}{V}\sum_{m=0}^\mN \frac{G_{|p|\mathcal{J}|}
  -G_{|m|\mathcal{J}|}}{E_m-E_p}
  +\sum_{\beta=2}^{2b}\frac{G_{|pp^\beta|\mathcal{J}\backslash\{J^\beta\}|}
 -G_{|p^\beta p^\beta|\mathcal{J}\backslash\{J^\beta\}|}}{E_{p^\beta}-E_p}
 \\
 &+\sum_{i=1}^2\sum_{\beta=2b+1}^{b'+2b}\frac{G_{|pq^{\beta}_iq^{\beta}_{i+1}|\mathcal{J}\backslash\{J^{\beta}\}|}
 -G_{|q^{\beta}_iq^{\beta}_iq^{\beta}_{i+1}|\mathcal{J}\backslash\{ J^{\beta}\}|}}{E_{q^{\beta}_i}-E_p}
  +\frac{1}{V^2}\mathcal{T}_p G_{|p|\mathcal{J}|}+
  \sum_{\mathcal{I}\uplus \mathcal{I}'=\mathcal{J}} G_{|p|\mathcal{I}'|}\mathcal{T}_p G_{|\mathcal{I}|}\bigg\},
 \end{align*}
where $q^\beta_3\equiv q^\beta_1$, $G_{|\emptyset|}=F$ of \sref{Corollary}{Cor:quartF}, 
$\mathcal{T}_pG$ is defined in \sref{Definition}{Def:Dp}, and the sums over the sets are restricted to 
correlation functions where the boundary lengths sum to an even number.
 \begin{proof}
  Let us assume $p^i$, $q^i_j$ and $p$ such that $E_{p^i},E_{q^i_j}$ and $E_p$ are pairwise different. Define further 
  \begin{align*}
   \hat{D}=\frac{\partial^{2b+b'-1}}{\partial J_{p^2p^2}\partial J_{p^3p^3}..
   \partial J_{p^{2b}p^{2b}} \partial J_{q^{2b+1}_1q^{2b+1}_2}\partial J_{q^{2b+1}_2p^{2b+1}_1}..
   \partial J_{q^{2b+b'}_1q^{2b+b'}_2}\partial J_{q^{2b+b'}_2p^{2b+b'}_1}}.
  \end{align*}
  By definition, the correlation function can be expressed by
  \begin{align*}
   G_{|p|\mathcal{J}|}=&V^{2b+b'-2}\hat{D}\frac{\partial}{\partial J_{pp}}\log\Z[J]\big\vert_{J=0}\\
   =&-\frac{V^{2b+b'-4}\lambda \hat{D}}{H_{pp}}\sum_{n,m}\frac{\frac{\partial^3}{\partial J_{pn}\partial J_{nm}
   \partial J_{mp}}}{\Z[J]}\Z[J]\big\vert_{J=0}\\
   =&-\frac{V^{2b+b'-4}\lambda \hat{D}}{H_{pp}}\frac{1}{\Z[J]}\bigg\{
   \sum_{n,m}\frac{\partial}{
   \partial J_{mp}}\frac{V}{E_m-E_p}\big(J_{mn}\frac{\partial}{\partial J_{pn}}-
 J_{np}\frac{\partial}{\partial J_{nm}}\big)\Z[J]\\
 &\qquad\qquad \qquad\qquad\qquad+\frac{\partial}{\partial J_{pp}}(W^1_p[J]+W^2_p[J])\Z[J]\bigg\}\bigg\vert_{J=0},
  \end{align*}
where we used \eqref{eq:SDeqGeneral} and
\sref{Theorem}{Thm:Raimar}. The quotient difference term is treated as usual which gives 
the quotient difference terms of the proposition. The last line is rewritten to
\begin{align}
 \frac{1}{\Z[J]}\frac{\partial}{\partial J_{pp}}(W^1_p[J]+W^2_p[J])\Z[J]
 =
 \frac{\partial}{\partial J_{pp}}(W^1_p[J]+W^2_p[J])+(W^1_p[J]+W^2_p[J])
 \frac{\partial}{\partial J_{pp}}\log \Z[J].\label{eq:quarttemp}
 \end{align}
Apply \sref{Lemma}{Lemma:Dp} to the second and last term of \eqref{eq:quarttemp}.
The first term produces a multiplication of two correlation functions, where each has a boundary of length 1 labelled by
$p$. The third term of \eqref{eq:quarttemp} produces a cubic term since $W^1_p[J]$ is already quadratic. 
Collecting all terms finishes the proof, after considering regularity on both sides of the equation for 
coinciding eigenvalues $E_i$.
 \end{proof}
\end{prps}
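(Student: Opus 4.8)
The plan is to mirror the two-step derivation already used for \sref{Proposition}{Prop:Quart2P} and \sref{Proposition}{Prop:quartRec}: first strip off the base point $p$ using the defining Schwinger--Dyson relation \eqref{eq:SDeqGeneral}, and then reduce the resulting third-order $J$-derivative with the Ward--Takahashi identity of \sref{Theorem}{Thm:Raimar}. Concretely, I would assume $p^i$, $q^i_j$ and $p$ so that all eigenvalues $E_{p^i}$, $E_{q^i_j}$, $E_p$ are pairwise distinct, and write $G_{|p|\mathcal{J}|}=V^{2b+b'-2}\hat{D}\,\frac{\partial}{\partial J_{pp}}\log\frac{\Z[J]}{\Z[0]}\big\vert_{J=0}$, with $\hat{D}$ the product of $J$-derivatives of all boundaries collected in $\mathcal{J}$ (so $\hat{D}$ has $2b+b'-1$ derivatives). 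Inserting \eqref{eq:SDeqGeneral} with base point $J_{pp}$ together with the quartic expression \eqref{eq:SintAbl} for $\frac{\partial S_{int}}{\partial\Phi_{pp}}$ turns this into
\[
-\frac{\lambda V^{2b+b'-4}}{H_{pp}}\,\hat{D}\,\frac{1}{\Z[J]}\sum_{n,m=0}^{\mN}
\frac{\partial^3\Z[J]}{\partial J_{pn}\partial J_{nm}\partial J_{mp}}\Big\vert_{J=0}.
\]
After performing the outermost $\frac{\partial}{\partial J_{mp}}$, the remaining $\sum_{n}\frac{\partial^2}{\partial J_{pn}\partial J_{nm}}\Z[J]$ is exactly the combination governed by \sref{Theorem}{Thm:Raimar}. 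The degenerate sector $E_m=E_p$ drops out because $\frac{\partial}{\partial J_{mp}}$ cannot build a cycle containing $p$ after setting $J=0$, so only the regular part $\frac{V}{E_m-E_p}\big(J_{mn}\frac{\partial}{\partial J_{pn}}-J_{np}\frac{\partial}{\partial J_{nm}}\big)\Z[J]$ and the diagonal term $\frac{\partial}{\partial J_{pp}}(W^1_p[J]+W^2_p[J])\Z[J]$ survive.

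For the regular part I act with the surviving $\frac{\partial}{\partial J_{mp}}$ and then with $\hat{D}$, fixing the intermediate indices $n,m$ by letting one derivative hit $J_{mn}$ or $J_{np}$; exactly as in \sref{Proposition}{Prop:quartRec}, this produces the three quotient-difference sums in the statement --- the sum over $m$ with one-point functions, the sums over the $2b$ length-$1$ boundaries, and the sums over the $b'$ length-$2$ boundaries with the cyclic convention $q^\beta_3\equiv q^\beta_1$. For the diagonal term I use the identity
\[
\frac{1}{\Z[J]}\frac{\partial}{\partial J_{pp}}\big(W^1_p[J]+W^2_p[J]\big)\Z[J]
=\frac{\partial}{\partial J_{pp}}\big(W^1_p[J]+W^2_p[J]\big)
+\big(W^1_p[J]+W^2_p[J]\big)\frac{\partial}{\partial J_{pp}}\log\Z[J],
\]
and then apply $\hat{D}$ with the Leibniz rule. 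The contribution $\hat{D}\,\frac{\partial}{\partial J_{pp}}(W^1_p+W^2_p)$ is, by \sref{Lemma}{Lemma:Dp} applied to the correlation function carrying one extra length-$1$ boundary labelled $p$, equal to $\frac{1}{V^2}\mathcal{T}_p G_{|p|\mathcal{J}|}$ once the powers of $V$ in the prefactor are absorbed; the contribution where $\hat{D}$ distributes between $W^1_p+W^2_p$ and $\frac{\partial}{\partial J_{pp}}\log\Z$ gives, again by \sref{Lemma}{Lemma:Dp}, the sum $\sum_{\mathcal I\uplus \mathcal I'=\mathcal J}G_{|p|\mathcal I'|}\,\mathcal{T}_p G_{|\mathcal I|}$, the quadratic part of $W^1_p$ being already accounted for because $\mathcal{T}_p$ of \sref{Definition}{Def:Dp} contains its own quadratic piece. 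Setting $G_{|\emptyset|}=F$ via \sref{Corollary}{Cor:quartF} and collecting everything yields the asserted equation for pairwise distinct eigenvalues.

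The main obstacle is combinatorial rather than analytic: one must verify that the distribution of $\hat{D}$ over the three factors of $(W^1_p+W^2_p)\frac{\partial}{\partial J_{pp}}\log\Z$ and over the products hidden inside $W^1_p$ and $W^2_p$ reassembles \emph{exactly} into the four families of terms in the proposition, with the correct signs, and that only correlation functions whose boundary lengths sum to an even number survive --- the latter parity statement being read off from the perturbative expansion of \sref{Sec.}{Sec:Pert}. Finally, coinciding eigenvalues are disposed of as in \sref{Proposition}{Prop:Quart2P}: the left-hand side is manifestly regular, so the quotient-difference terms on the right admit a unique limit obtained by continuing the correlation functions to differentiable functions of the eigenvalues and applying l'H\^opital's rule.
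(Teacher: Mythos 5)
Your proposal follows essentially the same route as the paper: strip the base point with \eqref{eq:SDeqGeneral}, reduce the third-order derivative with \sref{Theorem}{Thm:Raimar}, treat the quotient-difference part as in the earlier propositions, split the diagonal contribution via the $\log\Z$ identity, and identify the resulting pieces with $\mathcal{T}_p$ through \sref{Lemma}{Lemma:Dp}; the identification of the terms and the regularity argument match the paper's proof. The only blemish is the self-contradictory sentence claiming the degenerate sector $E_m=E_p$ ``drops out'' while simultaneously retaining the diagonal term $\frac{\partial}{\partial J_{pp}}(W^1_p+W^2_p)\Z[J]$ --- that vanishing argument belongs to \sref{Proposition}{Prop:quartRec}, whereas here the $m=p$ contribution is precisely what generates the $\mathcal{T}_p$ terms; since you keep it, the computation is unaffected.
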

\begin{exm}
 Applying the genus expansion to \sref{Proposition}{Prop:quart11P}, putting $b=1$, $b'=0$ with $\mathcal{J}=J^2=\{p^2\}$,
 we achieve the linear equation for the planar $(1+1)$-point function
 \begin{align*}
  &G^{(0)}_{|p|p^2|}=-\frac{\lambda}{H_{pp}}\bigg\{\frac{1}{V}\sum_{m=0}^\mN \frac{G^{(0)}_{|p|p^2|}
  -G^{(0)}_{|m|p^2|}}{E_m-E_p}
  +\frac{G^{(0)}_{|pp^2|}
 -G^{(0)}_{|p^2p^2|}}{E_{p^2}-E_p}
 + G^{(0)}_{|p|p^2|}\frac{1}{V} \sum_{n=0}^\mN  G^{(0)}_{|np|}\bigg\}.
 \end{align*}
\end{exm}
\noindent 
For the base point coming from boundary length 2, the SDE reads:
\begin{prps}\label{Prop:quart21P}
 Let $\mathcal{J}=\{J^1,..,J^{2b},J^{2b+2},..,J^{2b+b'}\}$, $J^\beta=\{p^\beta\}$ for $\beta\in\{1,..,2b\}$ and
 $J^\beta=\{q^\beta_1,q^\beta_2\}$ for $\beta\in\{2b+2,..,2b+b'\}$. Then, the $(2+1+..+1+2+..+2)$-point function
 with $2b$ boundaries of length 1 and $b'$ boundaries of length 2 satisfies
 \begin{align*}
  &G_{|q_1q_2|\mathcal{J}|}=-\frac{\lambda}{H_{q_1q_2}}\bigg\{\frac{1}{V}\sum_{m=0}^\mN \frac{G_{|q_1q_2|\mathcal{J}|}
  -G_{|mq_2|\mathcal{J}|}}{E_m-E_{q_1}}+\frac{1}{V^2}\frac{G_{|q_1|q_2|\mathcal{J}|}
  -G_{|q_2|q_2|\mathcal{J}|}}{E_{q_2}-E_{q_1}}\\
  &+\!\sum_{\beta=1}^{2b}\frac{G_{|q_2q_1p^\beta|\mathcal{J}\backslash\{J^\beta\}|}
 -G_{|q_2 p^\beta p^\beta|\mathcal{J}\backslash\{J^\beta\}|}}{E_{p^\beta}-E_{q_1}}
 +\!\sum_{i=1}^2\sum_{\beta=2b+2}^{b'+2b}\frac{G_{|q_2q_1q^{\beta}_{i+1}q^{\beta}_{i}|\mathcal{J}\backslash\{J^{\beta}\}|}
 -G_{|q_2 q^{\beta}_i q^{\beta}_{i+1}q^{\beta}_{i}|\mathcal{J}\backslash\{ J^{\beta}\}|}}{E_{q^{\beta}_i}-E_{q_1}}\\
 &+\sum_{\mathcal{I}\uplus \mathcal{I}'=\mathcal{J}}
 G_{|q_2|\mathcal{I}|}\frac{G_{|q_1|\mathcal{I}'|}-G_{|q_2|\mathcal{I}'|}}{E_{q_2}-E_{q_1}}
 +\frac{1}{V^2}\mathcal{T}_{q_1} G_{|q_1q_2|\mathcal{J}|}
 +
  \sum_{\mathcal{I}\uplus \mathcal{I}'=\mathcal{J}} G_{|q_1q_2|\mathcal{I}'|}\mathcal{T}_{q_1} G_{|\mathcal{I}|}\bigg\},
 \end{align*}
where $q^\beta_3\equiv q^\beta_1$, $G_{|\emptyset|}=F$ of \sref{Corollary}{Cor:quartF}, $\mathcal{T}_{p}G$ is defined in \sref{Definition}{Def:Dp}, and the sums over the sets are restricted to 
correlation functions, where the boundary lengths sum to an even number.
 \begin{proof}
  Let us assume $p^i$, $q^i_j$ and $q_1,q_2$ such that $E_{p^i},E_{q^i_j}$ and $E_p$ are pairwise different. Define further 
  \begin{align*}
   \hat{D}=\frac{\partial^{2b+b'-2}}{\partial J_{p^1p^1}\partial J_{p^2p^2}..
   \partial J_{p^{2b}p^{2b}} \partial J_{q^{2b+1}_1q^{2b+1}_2}\partial J_{q^{2b+2}_2p^{2b+2}_1}..
   \partial J_{q^{2b+b'}_1q^{2b+b'}_2}\partial J_{q^{2b+b'}_2p^{2b+b'}_1}}.
  \end{align*}
  By definition, the correlation function reads
  \begin{align*}
   G_{|q_1q_2|\mathcal{J}|}=&V^{2b+b'-2}\hat{D}\frac{\partial^2}{\partial J_{q_1q_2}\partial J_{q_2q_1}}
   \log\Z[J]\big\vert_{J=0}\\
   =&-\frac{V^{2b+b'-4}\lambda \hat{D}}{H_{q_1q_2}}
   \frac{\partial}{\partial J_{q_2q_1}}\sum_{n,m}\frac{\frac{\partial^3}{\partial J_{q_1n}\partial J_{nm}
   \partial J_{mq_2}}}{\Z[J]}\Z[J]\big\vert_{J=0}\\
   =&-\frac{V^{2b+b'-4}\lambda \hat{D}}{H_{q_1q_2}}\frac{\partial}{\partial J_{q_2q_1}}\frac{1}{\Z[J]}\bigg\{
   \sum_{n,m}\frac{\partial}{
   \partial J_{mq_2}}\frac{V}{E_m-E_{q_1}}\big(J_{mn}\frac{\partial}{\partial J_{q_1n}}-
 J_{nq_1}\frac{\partial}{\partial J_{nm}}\big)\Z[J]\\
 &\qquad\qquad \qquad\qquad\qquad+\frac{\partial}{\partial J_{q_1q_2}}(W^1_{q_1}[J]+W^2_{q_1}[J])\Z[J]\bigg\}\bigg\vert_{J=0},
  \end{align*}
where we have used \eqref{eq:SDeqGeneral} and
\sref{Theorem}{Thm:Raimar}. The quotient difference term is treated as usual which gives 
the quotient difference terms of the proposition, where more terms appear
compared to \sref{Proposition}{Prop:quart11P}. These terms are generated by acting with the 
derivative $\frac{\partial}{\partial J_{q_2q_1}}$ on $J_{mn}$ and $J_{nq_1}$ respectively and fixing $m=q_2$.
The last line is rewritten to
\begin{align}\nonumber
 &\frac{1}{\Z[J]}\frac{\partial}{\partial J_{q_1q_2}}(W^1_{q_1}[J]+W^2_{q_1}[J])\Z[J]\\
 =&
 \frac{\partial}{\partial J_{q_1q_2}}(W^1_{q_1}[J]+W^2_{q_1}[J])+(W^1_{q_1}[J]
 +W^2_{q_1}[J])
 \frac{\partial}{\partial J_{q_1q_2}}\log \Z[J].\label{eq:quarttemp1}
 \end{align}
Applying \sref{Lemma}{Lemma:Dp} and collecting all terms finishes the proof, where the
regularity conditions manage coinciding eigenvalues $E_i$.
 \end{proof}
\end{prps}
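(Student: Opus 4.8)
The plan is to mimic the proof of \sref{Proposition}{Prop:quart11P}, the only new feature being that the base point is drawn from a boundary component of length~$2$, so that an extra $J$-derivative $\frac{\partial}{\partial J_{q_2q_1}}$ sits next to the one on which \eqref{eq:SDeqGeneral} is used; this extra derivative is precisely what produces the additional $q_1,q_2$-adjacency terms on the right-hand side. First I would assume that $E_{p^i}$, $E_{q^i_j}$, $E_{q_1}$ and $E_{q_2}$ are pairwise distinct. This is harmless: every correlation function is, order by order, a rational function of the eigenvalues, hence extends to a differentiable function of the indices, and the final identity then holds for coinciding eigenvalues by L'H\^opital, exactly as in all previous SDEs. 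I would then write
\begin{align*}
 G_{|q_1q_2|\mathcal{J}|}=V^{2b+b'-2}\,\hat{D}\,\frac{\partial^2}{\partial J_{q_1q_2}\partial J_{q_2q_1}}\log\Z[J]\Big\vert_{J=0},
\end{align*}
with $\hat{D}$ the product of $J$-derivatives attached to all the other boundary components, and apply \sref{Lemma}{Lemma:Trick} via \eqref{eq:SDeqGeneral} to the single derivative $\frac{\partial}{\partial J_{q_1q_2}}$. For the quartic interaction this brings in $\sum_{n,m}\frac{\partial^3}{\partial J_{q_1n}\partial J_{nm}\partial J_{mq_2}}$ acting on $\Z[J]$; the free-propagator term drops since $\hat{D}$ is nontrivial, and we are left with $-\frac{\lambda}{H_{q_1q_2}}$ times suitable powers of $V$ multiplying $\hat{D}\,\frac{\partial}{\partial J_{q_2q_1}}\,\frac{1}{\Z[J]}\sum_{n,m}\frac{\partial^3\Z[J]}{\partial J_{q_1n}\partial J_{nm}\partial J_{mq_2}}\big\vert_{J=0}$.

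The core step is to rewrite the inner triple derivative with \sref{Theorem}{Thm:Raimar} applied to the index pair $(q_1,m)$ with summed intermediate index $n$. For $m\neq q_1$ this yields the quotient-difference structure $\frac{V}{E_{q_1}-E_m}\sum_n\big(J_{q_1n}\frac{\partial}{\partial J_{mn}}-J_{nm}\frac{\partial}{\partial J_{nq_1}}\big)\Z[J]$, whereas the degenerate index $m=q_1$ contributes $\frac{\partial}{\partial J_{q_1q_2}}\big((W^1_{q_1}[J]+W^2_{q_1}[J])\Z[J]\big)$. In the non-degenerate part I would let $\frac{\partial}{\partial J_{q_2q_1}}$ and the sum $\sum_m\frac{\partial}{\partial J_{mq_2}}$ act on the two sources $J_{q_1n}$ and $J_{nm}$; since the expression is evaluated at $J=0$ one of these must be consumed, which pins $n$ and $m$, and after the remaining derivatives in $\hat{D}$ are distributed by the Leibniz rule one recovers, term by term, the plain sum $\frac{1}{V}\sum_m\frac{G_{|q_1q_2|\mathcal{J}|}-G_{|mq_2|\mathcal{J}|}}{E_m-E_{q_1}}$, the single term $\frac{1}{V^2}\frac{G_{|q_1|q_2|\mathcal{J}|}-G_{|q_2|q_2|\mathcal{J}|}}{E_{q_2}-E_{q_1}}$, and the new adjacency families that appear when $\frac{\partial}{\partial J_{q_2q_1}}$ itself eats a source and forces the pattern $q_2q_1$ — these are exactly the $G_{|q_2q_1p^\beta|\cdots|}$, $G_{|q_2q_1q^\beta_{i+1}q^\beta_i|\cdots|}$ and $G_{|q_2|\mathcal{I}|}\frac{G_{|q_1|\mathcal{I}'|}-G_{|q_2|\mathcal{I}'|}}{E_{q_2}-E_{q_1}}$ terms — plus all mergings and products of correlation functions produced by distributing $\hat{D}$. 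For the degenerate part I would expand $\frac{1}{\Z}\frac{\partial}{\partial J_{q_1q_2}}\big((W^1_{q_1}+W^2_{q_1})\Z\big)=\frac{\partial}{\partial J_{q_1q_2}}(W^1_{q_1}+W^2_{q_1})+(W^1_{q_1}+W^2_{q_1})\frac{\partial\log\Z}{\partial J_{q_1q_2}}$ and apply \sref{Lemma}{Lemma:Dp}, so that the action of $\hat{D}$ on $W^1_{q_1}+W^2_{q_1}$ becomes $\mathcal{T}_{q_1}$ applied to correlation functions; this delivers $\frac{1}{V^2}\mathcal{T}_{q_1}G_{|q_1q_2|\mathcal{J}|}$ and $\sum_{\mathcal{I}\uplus\mathcal{I}'=\mathcal{J}}G_{|q_1q_2|\mathcal{I}'|}\mathcal{T}_{q_1}G_{|\mathcal{I}|}$, the cubic-in-$G$ pieces being automatically hidden inside $\mathcal{T}_{q_1}$ through \sref{Definition}{Def:Dp}. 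Collecting all contributions and then dropping the distinct-eigenvalue assumption by regularity completes the proof.

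The step I expect to be the main obstacle is precisely the bookkeeping in the non-degenerate part: relative to \sref{Proposition}{Prop:quart11P} there is the extra base-point derivative $\frac{\partial}{\partial J_{q_2q_1}}$, and one must verify that each of its possible targets — consuming $J_{q_1n}$, consuming $J_{nm}$, or being passed to $\hat{D}$ — matches one and only one of the displayed term families, with the right sign and the right parity constraint (boundary lengths summing to an even number). The parity constraint is not produced by the algebra directly; it has to be read off from the perturbative expansion, just as in \sref{Example}{Exm:quartRec}.
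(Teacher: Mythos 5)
Your proposal follows the paper's proof of this proposition essentially step for step: the same split into the Ward-degenerate part $m=q_1$ (handled via the product rule and \sref{Lemma}{Lemma:Dp}, yielding the $\mathcal{T}_{q_1}$-terms) and the non-degenerate quotient-difference part, with the extra base-point derivative $\frac{\partial}{\partial J_{q_2q_1}}$ accounting for the new $q_2q_1$-adjacency families, the free term dropping because $\hat{D}$ is nontrivial, and the same appeal to regularity for coinciding eigenvalues and to perturbation theory for the parity constraint. The one slip is in your quoted Ward identity: the inner pair occurring in the quartic vertex is $\sum_n\frac{\partial^2}{\partial J_{q_1n}\partial J_{nm}}$, which by \sref{Theorem}{Thm:Raimar} (with $q=q_1$, $p=m$) gives $\frac{V}{E_m-E_{q_1}}\sum_n\big(J_{mn}\frac{\partial}{\partial J_{q_1n}}-J_{nq_1}\frac{\partial}{\partial J_{nm}}\big)\Z[J]$ rather than your swapped version with sources $J_{q_1n}$ and $J_{nm}$, and the subsequent bookkeeping of where $\frac{\partial}{\partial J_{q_2q_1}}$ and $\sum_m\frac{\partial}{\partial J_{mq_2}}$ land only reproduces the displayed term families after this orientation is corrected.
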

\begin{exm}
 Applying the genus expansion to \sref{Proposition}{Prop:quart21P}, setting $b=0$, $b'=2$ with 
 $\mathcal{J}=J^2=\{q^2_1 q^2_2\}$
 gives the linear equation for the planar $(2+2)$-point function
 \begin{align*}
  &G^{(0)}_{|q_1q_2|q^2_1 q^2_2|}=-\frac{\lambda}{H_{q_1q_2}}
  \bigg\{\frac{1}{V}\sum_{m=0}^\mN \frac{G^{(0)}_{|q_1q_2|q^2_1 q^2_2|}
  -G^{(0)}_{|mq_2|q^2_1 q^2_2|}}{E_m-E_{q_1}}
 +\sum_{i=1}^2\frac{G^{(0)}_{|q_2q_1q^2_{i+1}q^2_{i}|}
 -G^{(0)}_{|q_2 q^{2}_i q^{2}_{i+1}q^{2}_{i}|}}{E_{q^{2}_i}-E_{q_1}}\\
  &\qquad\qquad\qquad\qquad\qquad\qquad  +
   G^{(0)}_{|q_1q_2|q^2_1 q^2_2|}\frac{1}{V} \sum_{n=0}^\mN  G^{(0)}_{|nq_1|}+ G^{(0)}_{|q_1q_2|} \mathcal{T}_{q_1} G^{(0)}_{|q^2_1 q^2_2|}
\bigg\}.
 \end{align*}
\end{exm}
\noindent
We emphasise that the SDE for a correlation function with base point from a boundary of length 2 has originally a 
term of the form $\frac{1}{V}\sum_n G_{|q_1n|\mathcal{J}|}$. Since we include this term inside 
the derivative $\mathcal{T}_{q_1}
G_{|\mathcal{J}|}$, the SDE has a much simpler recursive structure. 

Performing the genus-expansion for \sref{Proposition}{Prop:quart11P} and \sref{Proposition}{Prop:quart21P} leads to linear 
recursive equations for a correlation function of Euler characteristic $\chi$ where the inhomogeneous part is some 
$g^{g,\mathcal{J}}_{inh}$
depending on 
correlation functions of Euler characteristic $\chi'>\chi$ by
\begin{align}\label{eq:quartHOSD1}
 \hat{K}^1_p G^{(g)}_{|p|\mathcal{J}|}=&g^{g,\mathcal{J}}_{inh}\\\label{eq:quartHOSD2}
 \hat{K}^2_{q_1} G^{(g)}_{|q_1q_2|\mathcal{J}|}=&g^{g,\mathcal{J}}_{inh},
 \end{align}
 where
 \begin{align}\label{eq:quartHOOP1}
 \hat{K}^1_p f(p):=&f(p)\bigg\{H_{pp}+\frac{\lambda}{V} \sum_{n=0}^\mN\bigg(\frac{1}{E_n-E_p}+G^{(0)}_{|np|}\bigg)\bigg\}-
 \frac{\lambda}{V}\sum_{n=0}^\mN\frac{f(n)}{E_n-E_p}\\\label{eq:quartHOOP2}
 \hat{K}^2_{q_1} f(q_1,q_2):=&f(q_1,q_2)\bigg\{H_{q_1q_2}+\frac{\lambda}{V} \sum_{n=0}^\mN
 \bigg(\frac{1}{E_n-E_{q_1}}+G^{(0)}_{|nq_1|}\bigg)\bigg\}-
 \frac{\lambda}{V}\sum_{n=0}^\mN\frac{f(n,q_2)}{E_n-E_{q_1}}.
\end{align}
Comparing with the cubically interacting model the equations 
\eqref{eq:quartHOSD1} and \eqref{eq:quartHOSD2} share some similar structure with \eqref{eq:cubicSD1}. 
The function $ H_{pp}+\frac{\lambda}{V}\sum_n G^{(0)}_{|np|}$ seems to take the analog r\^ole of $W^{(0)}_{|p|}$. However, 
the first equation to solve is a nonlinear 
equation for $G^{(0)}_{|pq|}$ which is a function depending on two variables instead of one. Solving this
needs a completely different strategy than for cubic interaction. The result is changed tremendously 
and has a different structure
which can also be seen in the 
perturbative expansion since 
hyperlogarithms survives at any order in $\lambda$ (see \sref{App.}{App:PertQuartic}).

\section[Solution of the Planar 2-Point Function]
{Solution of the Planar 2-Point Function}
\label{Sec.quartSolution}
In this section we are analysing the structure of the planar 2-point function. 
The main idea for that came
from an observation of the earlier known special case on the $\D=D=2$ Moyal space \cite{Panzer:2018tvy}. 
The function $I_D(w)$ defined in \sref{Definition}{def:ID} seems to have 
an ''involutive'' structure $I_D(-I_D(w))\,"=``-w$ which holds only formally since the domains have to be
specified.
Nevertheless, this formal property 
gave the right ansatz for the general solution. \sref{Sec.}{sec:tau}-\ref{sec.fm} 
is taken from our joint paper \cite{Grosse:2019jnv} with H. Grosse and R. Wulkenhaar,
where the major discovery is found by R. Wulkenhaar. The notation is adapted to the rest of the thesis, and
later results will build up from this. 

An important tool which plays an incredible r\^ole throughout this section is the Lagrange-B\"urmann inversion formula:
\begin{thrm}(\cite{Lagrange:1770??,Buermann:1799??})
\label{thm:Lagrange-inversion}
Let $\phi(w)$ be analytic at $w=0$ with $\phi(0)\neq 0$ and 
$f(w) := \frac{w}{\phi(w)}$. Then the inverse $g(z)$ of $f(w)$
with $z=f(g(z))$ is analytic at $z=0$ and given by
\begin{equation}
g(z) = \sum_{n=1}^{\infty} \frac{z^n}{n!} 
\frac{d^{n-1}}{d w^{n-1}}\Big|_{w=0} (\phi(w))^n\;.
\label{eq:Lagrange}
\end{equation}
More generally, if $H(z)$ is an arbitrary analytic function 
with $H(0)=0$, then
\begin{equation}
H(g(z)) = \sum_{n=1}^{\infty} \frac{z^n}{n!} 
\frac{d^{n-1}}{d w^{n-1}}\Big|_{w=0} \Big( H'(w) \big(\phi(w)\big)^n \Big)\;.
\label{eq:Buermann}
\end{equation}
\end{thrm}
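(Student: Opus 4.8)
Since $f(0)=0$ and $f'(0)=1/\phi(0)\neq 0$, the holomorphic inverse function theorem produces a disc $D_r=\{|w|<r\}$ on which $\phi$ is zero-free and $f$ is biholomorphic onto a neighbourhood of $0$, with holomorphic inverse $g$ satisfying $g(0)=0$. Thus $g$ is analytic at $0$, and so is $H\circ g$; since $H(g(0))=H(0)=0$ it is given near $0$ by a convergent power series $H(g(z))=\sum_{n=1}^\infty c_n z^n$. The whole statement — including the convergence of \eqref{eq:Lagrange} and \eqref{eq:Buermann}, with \eqref{eq:Lagrange} the case $H(w)=w$, $H'\equiv 1$ — therefore reduces to the coefficient identity
\begin{align*}
c_n=\frac{1}{n!}\,\frac{d^{n-1}}{dw^{n-1}}\bigg|_{w=0}\Big(H'(w)\,(\phi(w))^n\Big).
\end{align*}

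To prove this I would use the Cauchy coefficient formula $c_n=\frac{1}{2\pi\I}\oint_{|z|=\varepsilon}z^{-n-1}H(g(z))\,dz$, valid for $\varepsilon$ small enough that the circle lies in $f(D_r)$, and then substitute $w=g(z)$, i.e.\ $z=f(w)=w/\phi(w)$ with $dz=f'(w)\,dw$. For $\varepsilon$ small the image $\gamma=g(\{|z|=\varepsilon\})$ is a simple positively oriented loop around $w=0$ inside $D_r$, and the integral becomes
\begin{align*}
c_n=\frac{1}{2\pi\I}\oint_{\gamma}\frac{H(w)\,f'(w)\,(\phi(w))^{n+1}}{w^{n+1}}\,dw
=\frac{1}{2\pi\I}\oint_{\gamma}H(w)\,\frac{f'(w)}{f(w)^{n+1}}\,dw.
\end{align*}
Using $\dfrac{f'(w)}{f(w)^{n+1}}=-\dfrac{1}{n}\,\dfrac{d}{dw}\Big(\dfrac{(\phi(w))^n}{w^n}\Big)$ and integrating by parts — the boundary term vanishing on the closed contour — turns this into
\begin{align*}
c_n=\frac{1}{n}\cdot\frac{1}{2\pi\I}\oint_{\gamma}H'(w)\,\frac{(\phi(w))^n}{w^n}\,dw.
\end{align*}
The remaining integral is the residue at $w=0$ of $w^{-n}H'(w)(\phi(w))^n$, i.e.\ the coefficient of $w^{n-1}$ in the Taylor expansion of the holomorphic function $H'(w)(\phi(w))^n$, which is $\frac{1}{(n-1)!}\frac{d^{n-1}}{dw^{n-1}}\big|_{w=0}\big(H'(w)(\phi(w))^n\big)$. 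Multiplying by $1/n$ yields the claimed expression for $c_n$.

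There is no genuinely hard step: the result is classical and the computation above is the standard residue proof. The only points deserving a line of care are that $\gamma$ is a simple loop encircling $0$ exactly once inside $D_r$ (clear for small $\varepsilon$ from the open mapping property of $g$) and that the integration-by-parts boundary term drops on a closed contour. As an alternative one could carry out the same manipulation entirely within the ring of formal power series and invoke the analyticity of $g$ only at the end, to promote the resulting formal identity to the stated convergent one.
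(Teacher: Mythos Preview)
Your proof is correct and is the standard residue/contour-integral argument for Lagrange--B\"urmann inversion. Note, however, that the paper does not actually prove this theorem: it is stated as a classical result with references to Lagrange and B\"urmann and then used as a tool throughout \sref{Sec.}{Sec.quartSolution}. So there is no ``paper's own proof'' to compare against; you have supplied a complete proof where the paper simply cites the literature.
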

\noindent
Taking the renormalisation for $\D<6$ for the quartic model into account (see \sref{Remark}{Rmk:quartRenorm}) gives
the nonlinear 
equation for the planar 2-point function by \sref{Proposition}{Prop:Quart2P}
\begin{align}
ZG^{(0)}_{pq}=\frac{1}{E_p+E_q}-\frac{\lambda}{V(E_p+E_q)}
\sum_{n=0}^{\mathcal{N}} 
\Big( ZG^{(0)}_{pq} \;ZG^{(0)}_{pn}
- \frac{ZG^{(0)}_{nq} -ZG^{(0)}_{pq} }{E_{n}-E_p}\Big) \;.
\label{Gab-orig}
\end{align}
Taking the large $\mN,V$-limit discussed in \sref{Sec.}{Sec:LargeLimit}, we write
\begin{align}
G_{pq}^{(0)}=: G(x,y)\Big|_{x=E_a-\mu^2_{bare}/2,\;y=E_b-\mu^2_{bare}/2}\;,
\label{Gab-continuation}
\end{align}
then $G(x,y)$ originally defined only on the (shifted) spectrum of $E$
extends to a sectionally holomorphic function which satisfies the
integral equation
\begin{align}
&(\mu_{bare}^2{+}x{+}y)ZG(x,y)
= 1-\lambda
\int_0^{\Lambda^2} \!\! dt\;\rho_0(t)
\Big( ZG(x,y) \;ZG(x,t) 
- \frac{ZG(t,y) -ZG(x,y)}{t-x}\Big) \;.
\label{eq:Gint}
\end{align}
Here we have used $\varrho_0(t)=r(t)$,
where $e(x)=x$\footnote{The linear case $e(x)=x$
can be assumed without loss of generality. For arbitrary $e(x)$ the measure has to be changed to 
$\varrho_0(t)=\frac{r(e^{-1}(t))}{e'(e^{-1}(t))}$.} 
and $r(x)$ are differentiable function defined in \sref{Sec.}{Sec:LargeLimit}.
 
We assume that the measure $\rho_0(t)$ 
is a H\"older-continuous function. The final 
result will make
perfect sense even for $\rho_0$ being a linear combination of Dirac
measures. Intermediate steps become more 
transparent if $\rho_0\in C^{0,\alpha}([0,\Lambda^2])$ is assumed. 
Using techniques for boundary values of sectionally holomorphic
functions, explained in detail in \cite{Tricomi85,Grosse:2012uv, Grosse:2014lxa,
  Panzer:2018tvy}, one finds that 
a solution for $G(a,b)$ at $0 < a,b < \Lambda^2$ should be searched in
the form
\begin{align}
ZG(a,b)
= \frac{e^{\mathcal{H}_a^{\Lambda}[\tau_b(\bullet)]} 
\sin \tau_b(a)}{\lambda \pi \rho_0(a)}
= \frac{e^{\mathcal{H}_b^{\Lambda}[\tau_a(\bullet)]} 
\sin \tau_a(b)}{\lambda \pi \rho_0(b)}\;,\quad 
\label{Gab-ansatz}
\end{align}
where the \textit{angle function} $\tau_a: (0,\Lambda^2)\to [0,\pi]$ 
for  $\lambda>0$ and $\tau_a: (0,\Lambda^2)\to [-\pi,0]$ 
for  $\lambda<0$ remains to be determined. Here, 
\begin{align*}
\mathcal{H}^{\Lambda}_a[f(\bullet)]:= \frac{1}{\pi} 
\lim_{\epsilon\to 0} \int_{[0,\Lambda^2]\setminus
  [a-\epsilon,a+\epsilon]} \frac{dt\; f(t)}{t-a}
=\lim_{\epsilon\to 0} 
\mathrm{Re}\Big(\frac{1}{\pi} 
\int_0^{\Lambda^2} \frac{dt\; f(t)}{t-(a+\mathrm{i}\epsilon)}\Big)
\end{align*}
denotes the finite Hilbert transform.
We go with the ansatz (\ref{Gab-ansatz}) into 
(\ref{eq:Gint}) at $x=a+\mathrm{i}\epsilon$ and $y=b$:
\begin{align}
&\Big(\mu_{bare}^2+a+b+\lambda \pi
  \mathcal{H}^{\Lambda}_a[\rho_0(\bullet)]
+\frac{1}{\pi} \int_0^{\Lambda^2} dt\;
e^{\mathcal{H}_t^{\Lambda}[\tau_a(\bullet)]} 
\sin \tau_a(t)
\Big) ZG(a,b)
\nonumber
\\
&= 1 + 
 \mathcal{H}^{\Lambda}_a\big[ e^{\mathcal{H}_\bullet ^{\Lambda}[\tau_b]} 
\sin \tau_b(\bullet)\big]\;.
\label{intGab-ansatz}
\end{align}
A H\"older-continuous function $\tau:(0,\Lambda^2)\to [0,\pi]$ or
$\tau:(0,\Lambda^2)\to [-\pi,0]$ satisfies
\begin{align*}
\mathcal{H}^{\Lambda}_a\big[ e^{\mathcal{H}_\bullet^{\Lambda}[\tau]} 
\sin \tau(\bullet)\big] &=  e^{\mathcal{H}_a^{\Lambda}[\tau]} 
\cos \tau(a)-1 \;,  &
\int_0^{\Lambda^2} \!\!dt\;
e^{\pm \mathcal{H}_t^{\Lambda}[\tau(\bullet)]} 
\sin \tau(t) &= \int_0^{\Lambda^2} \!\! dt\;\tau(t)\;.
\end{align*}
The first identity appeared in \cite{Tricomi85}, the second one was
proved in \cite{Panzer:2018tvy}. Inserting both identities 
into (\ref{intGab-ansatz})
gives with (\ref{Gab-ansatz}) a consistency relation for the angle function:
\begin{align}
\tau_a(p) = \arctan
\bigg(\frac{\lambda \pi
  \rho_0(p)}{\mu_{bare}^2+a+p+\lambda \pi
  \mathcal{H}^{\Lambda}_p[\rho_0(\bullet)]
+ \frac{1}{\pi} \int_0^{\Lambda^2} dt \;\tau_p(t) }\bigg)\;,
\label{cottauba}
\end{align}
where the $\arctan$-branch in $[0,\pi]$ is selected for $\lambda>0$
and the branch in $[-\pi,0]$ for $\lambda<0$.

\subsection[Solution of the Angle Function]{Solution of the Angle Function\footnote{This subsection is taken from our paper
\cite{Grosse:2019jnv}}}
\label{sec:tau}

We succeed in solving (\ref{cottauba}) for \emph{any} 
H\"older-continuous $\rho_0$ of spectral dimension $\D< 6$. The difficulty 
was to \emph{guess} 
the solution; verifying it is a straightforward exercise in complex
analysis. The main
step is to deform the measure function. 
We first introduce structures 
for a fictitious measure $\rho_c$; later $\rho_c$ will be
particularly chosen.
\begin{dfnt}
\label{def:h}
Let $\rho_c$ be a 
H\"older-continuous function on some interval $[\nu_D,\Lambda_D^2]$. 
For $\mu^2=\mu_{bare}^2$ in $D=2\lfloor\frac{ \D}{2}\rfloor =0$, and $\mu^2>\min(0,-\nu_D)$ a free 
parameter in
$D=2\lfloor\frac{ \D}{2}\rfloor \in \{2,4\}$, define functions $h_{00}\equiv h_0,
h_{02},h_{04},h_2,h_4$ on
$\mathbb{C}\setminus [\nu_D,\Lambda_D^2]$ by
\begin{align}
h_{0D}(z)&:=\int_{\nu_D}^{\Lambda_D^2} 
\frac{dt\;\rho_c(t)}{t-z}\;,\qquad 
h_{0}(z):= h_{00}(z)\;,
\label{eq:hDD}
\qquad
\\
h_2(z)&:=h_{02}(z)-h_{02}(-\mu^2)=(z+\mu^2) 
\int_{\nu_2}^{\Lambda_2^2} \frac{dt\;\rho_c(t)}{(t+\mu^2)(t-z)}\;,
\nonumber
\\
h_4(z)&:=h_{04}(z)-h_{04}({-}\mu^2)-(z{+}\mu^2)h_{04}'( {-}\mu^2)
=(z{+}\mu^2)^2 
\int_{\nu_4}^{\Lambda_4^2}\!\!\!
\frac{dt\;\rho_c(t)}{(t+\mu^2)^2(t-z)}\,.
\nonumber
\end{align}
\end{dfnt}
\begin{dfnt}
\label{def:J}
For $\lambda\in \mathbb{C}$ and $h_D$ as given in 
\sref{Definition}{def:h}, we introduce functions $R_D$ on 
$\mathbb{C}\setminus [-\mu^2-\Lambda_D^2, -\mu^2-\nu_D]$ by
\begin{align}
R_D(z):=z-\lambda h_D(-\mu^2-z)
\equiv z-\lambda (-z)^{\frac{D}{2}} 
\int_{\nu_D}^{\Lambda_D^2}
\frac{dt\;\rho_c(t)}{(t+\mu^2)^{\frac{D}{2}} (t+\mu^2+z)}
\;.
\label{eq:J}
\end{align}
\end{dfnt}
\noindent
The limits $\lim_{\Lambda_D^2\to \infty} h_D(z)$ and 
$\lim_{\Lambda_D^2\to \infty} R_D(z)$ exist for $\rho_c$
of spectral dimension $\D$ according to \sref{Definition}{Def:Spec}. 
We have
\begin{align*}
D\in \{0,2\}\quad\Rightarrow \quad  
R_D'(z)&=1+\lambda h_{0D}'(-\mu^2-z)\;,
\end{align*}
which is uniformly positive on $\mathbb{R}_+$ 
for real $\lambda >  -(h_{0D}'(-\mu^2))^{-1}$ in $D\in \{0,2\}$.
In contrast,
\begin{align*}
R_4'(z)&=1-\lambda h_{04}'(-\mu^2)+\lambda h_{04}'(-\mu^2-z)\;,
\end{align*}
which is uniformly positive in the opposite
region of real $\lambda < (h_{04}'(-\mu^2))^{-1}$. 
\begin{lemma}
\label{lem:U}
Let $|\lambda| < \big(\int_{\nu_D}^{\Lambda_D^2} dt 
\frac{\rho_c(t)}{(t+\mu^2/2)^2}
+\delta_{D,4} \int_{\nu_D}^{\Lambda_D^2} dt 
\frac{\rho_c(t)}{(t+\mu^2)^2}
\big)^{-1}$. Then:
\begin{enumerate}
\item  $R_D$ is a biholomorphic map from a right half plane
$\mathcal{R}_\mu:= \{z\in \mathbb{C}\;:~ \mathrm{Re}(z) > -\frac{\mu^2}{2}\}$
onto a domain $U_D\subset \mathbb{C}$. For $\lambda$ real, $U_D$ 
contains $[R_D(\nu_D),\infty)$. 

\item For $\lambda$ real, $\mathrm{Im}(R_D(z))$ and $\mathrm{Im}(z)$ 
have the same sign for every $z\in \mathcal{R}_\mu$.
\end{enumerate}
\begin{proof}
 1.\quad  We show that $R_D$ is injective on $\mathcal{R}_\mu$. 
Any two points $z_0\neq z_1 \in \mathcal{R}_\mu$ can be connected by a straight line 
$[0,1]\ni s \mapsto c(s)=z_0+(z_1-z_0)s \in \mathcal{R}_\mu$. 
Then for $D\in\{0,2\}$
\begin{align*}
\big|R_D(z_1)-R_D(z_0)\big| &= |z_1-z_0| \Big|1 +\lambda \int_0^1 ds 
\int_{\nu_D}^{\Lambda_D^2} 
\frac{dt\;\rho_c(t)}{(t+\mu^2+ c(s))^2}\Big|
\nonumber
\\
&
\geq 
|z_1-z_0| \Big(1 - \sup_{s\in [0,1]} |\lambda| 
\int_{\nu_D}^{\Lambda_D^2} 
\frac{dt\;\rho_c(t)}{|t+\mu^2+ c(s)|^2}\Big)>0\;.
\end{align*}
For $D=4$ we have 
\begin{align*}
&\big|R_4(z_1)-R_4(z_0)\big| 
\nonumber
\\
&= |z_1-z_0||1-\lambda h_{04}'(-\mu^2)| 
\Big|1 +\frac{\lambda}{1- \lambda h_{04}'(-\mu^2)} 
\int_0^1 ds \int_{\nu_4}^{\Lambda_4^2} 
\frac{dt\;\rho_c(t)}{(t+\mu^2+ c(s))^2}\Big|
\end{align*}
which under the adapted condition leads to the same conclusion 
$\big|R_4(z_1)-R_4(z_0)\big| > 0$. 

It follows from basic properties of holomorphic functions that
$R_D$ is, as 
holomorphic and injective function,  even a biholomorphic map 
$R_D:\mathcal{R}_\mu \to U_D:= R_D(\mathcal{R}_\mu)$.

\smallskip

2.\quad For $\lambda$ real we have
\begin{align*}
\mathrm{Im}(R_D(x+\mathrm{i}y)) 
= y\Big\{(1-\lambda h_{04}'(-\mu^2))^{\delta_{D,4}} 
+\lambda \int_{\nu_D}^{\Lambda_D^2} \frac{dt\;
\rho_c(t)}{(t+\mu^2+x)^2+y^2}\Big\}\;.
\end{align*}
The term in $\{~\}$ is strictly positive by the same reasoning as above.
\end{proof}
\end{lemma}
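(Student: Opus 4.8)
The plan is to prove the two claims of the lemma --- biholomorphicity of $R_D$ from the right half plane $\mathcal{R}_\mu$ onto its image, and the sign-preservation of the imaginary part --- by a single common mechanism: a uniform lower bound on the modulus of the divided difference $\frac{R_D(z_1)-R_D(z_0)}{z_1-z_0}$ along straight line segments inside $\mathcal{R}_\mu$, and a matching positivity statement for $\operatorname{Im}(R_D(x+\mathrm{i}y))$. The key arithmetic input is that every $z\in\mathcal{R}_\mu$ satisfies $\operatorname{Re}(z)>-\tfrac{\mu^2}{2}$, hence $\operatorname{Re}(t+\mu^2+z)>t+\tfrac{\mu^2}{2}\geq \nu_D+\tfrac{\mu^2}{2}>0$ for all $t\in[\nu_D,\Lambda_D^2]$; this makes the integral kernels $(t+\mu^2+z)^{-1}$ and $(t+\mu^2+z)^{-2}$ absolutely bounded on $\mathcal{R}_\mu$, and in particular $|t+\mu^2+c(s)|\geq t+\mu^2/2$ on any segment $c$ joining two points of $\mathcal{R}_\mu$, so that the supremum over the segment of $\int \rho_c(t)\,dt/|t+\mu^2+c(s)|^2$ is dominated by $\int \rho_c(t)\,dt/(t+\mu^2/2)^2$ (plus the extra term with $(t+\mu^2)^{-2}$ coming from the subtraction of $h_{04}'(-\mu^2)$ when $D=4$). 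This is precisely the quantity appearing in the hypothesis $|\lambda|<(\dots)^{-1}$.

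First I would write $R_D'(z)$ explicitly in the three cases using Definitions~\ref{def:h} and~\ref{def:J}: for $D\in\{0,2\}$ one gets $R_D'(z)=1+\lambda\int_{\nu_D}^{\Lambda_D^2}\frac{dt\,\rho_c(t)}{(t+\mu^2+z)^2}$, and for $D=4$ one gets $R_4'(z)=1-\lambda h_{04}'(-\mu^2)+\lambda\int_{\nu_4}^{\Lambda_4^2}\frac{dt\,\rho_c(t)}{(t+\mu^2+z)^2}$, after differentiating under the integral sign (justified by local uniform convergence on $\mathcal{R}_\mu$). Then, for injectivity, I would parametrise the segment $c(s)=z_0+s(z_1-z_0)$, $s\in[0,1]$, which stays in the convex set $\mathcal{R}_\mu$, and write
\begin{align*}
R_D(z_1)-R_D(z_0)=(z_1-z_0)\int_0^1 R_D'(c(s))\,ds .
\end{align*}
For $D\in\{0,2\}$ the bound $\bigl|\int_0^1 R_D'(c(s))\,ds\bigr|\geq 1-|\lambda|\sup_{s}\int\frac{dt\,\rho_c(t)}{|t+\mu^2+c(s)|^2}\geq 1-|\lambda|\int\frac{dt\,\rho_c(t)}{(t+\mu^2/2)^2}>0$ follows from the half-plane estimate above and the hypothesis on $|\lambda|$. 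For $D=4$ I would factor out $1-\lambda h_{04}'(-\mu^2)$ --- which is a fixed nonzero number under the hypothesis, since $h_{04}'(-\mu^2)=\int\frac{dt\,\rho_c(t)}{(t+\mu^2)^2}$ and $|\lambda|$ times this is $<1$ --- and apply the same estimate to the remaining factor with $\lambda/(1-\lambda h_{04}'(-\mu^2))$ in place of $\lambda$; the extra $(t+\mu^2)^{-2}$ term in the hypothesis is exactly what makes this rescaled coupling still admissible. Injectivity of the holomorphic map $R_D$ on the (open, connected) set $\mathcal{R}_\mu$ then gives, by the standard theorem that an injective holomorphic function has nowhere-vanishing derivative and open image, that $R_D:\mathcal{R}_\mu\to U_D:=R_D(\mathcal{R}_\mu)$ is biholomorphic. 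For the statement that $U_D\supseteq[R_D(\nu_D),\infty)$ when $\lambda$ is real, I would note that $R_D$ restricted to the real ray $(-\mu^2/2,\infty)$ is real-valued, continuous, with $R_D'>0$ there (uniform positivity on $\mathbb{R}_+$ was noted just before the lemma, and extends to $(-\mu^2/2,\infty)$ by the same bound) and $R_D(x)\to+\infty$ as $x\to\infty$; since $\nu_D\geq -\mu^2/2$ (or one takes the closure appropriately), the image of this ray under $R_D$ is a half-line $[R_D(\nu_D),\infty)$, which lies in $U_D$.

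For part 2, I would compute directly, for real $\lambda$ and $z=x+\mathrm{i}y$ with $x>-\mu^2/2$,
\begin{align*}
\operatorname{Im}\bigl(R_D(x+\mathrm{i}y)\bigr)
= y\Bigl\{(1-\lambda h_{04}'(-\mu^2))^{\delta_{D,4}}
+\lambda\int_{\nu_D}^{\Lambda_D^2}\frac{dt\,\rho_c(t)}{(t+\mu^2+x)^2+y^2}\Bigr\},
\end{align*}
using $\operatorname{Im}\frac{1}{t+\mu^2+z}=\frac{-y}{(t+\mu^2+x)^2+y^2}$ and, for $D=4$, that the polynomial subtraction terms $-\lambda h_{04}(-\mu^2)-\lambda(z+\mu^2)h_{04}'(-\mu^2)\cdot(\text{sign})$ contribute only $-\lambda y\,h_{04}'(-\mu^2)$ to the imaginary part (the constant $h_{04}(-\mu^2)$ is real and drops out; one must be careful that the definition of $h_4$ in Definition~\ref{def:h} is via $(z+\mu^2)^2\int\frac{dt\,\rho_c(t)}{(t+\mu^2)^2(t-z)}$, so I would rather compute $\operatorname{Im}$ of that form and check it collapses to the displayed expression). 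The braced term is $\geq 1-|\lambda|\int\frac{dt\,\rho_c(t)}{(t+\mu^2+x)^2+y^2}-\delta_{D,4}|\lambda|\int\frac{dt\,\rho_c(t)}{(t+\mu^2)^2}$, and since $(t+\mu^2+x)^2+y^2\geq(t+\mu^2/2)^2$ on $\mathcal{R}_\mu$, this is $\geq 1-|\lambda|\bigl(\int\frac{dt\,\rho_c(t)}{(t+\mu^2/2)^2}+\delta_{D,4}\int\frac{dt\,\rho_c(t)}{(t+\mu^2)^2}\bigr)>0$ by hypothesis; hence $\operatorname{Im}(R_D(z))$ and $y=\operatorname{Im}(z)$ have the same sign.

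\textbf{Main obstacle.} The routine part is the estimates; the delicate points are (i) correctly handling the $D=4$ subtraction terms so that the displayed derivative and imaginary-part formulas are right and the rescaled coupling $\lambda/(1-\lambda h_{04}'(-\mu^2))$ is genuinely controlled by the stated hypothesis --- this is where the extra $\int\rho_c(t)/(t+\mu^2)^2\,dt$ in the bound on $|\lambda|$ is consumed --- and (ii) verifying the boundary claim $U_D\supseteq[R_D(\nu_D),\infty)$, which requires knowing that the ray $(-\mu^2/2,\infty)$ lies in $\mathcal{R}_\mu$ (or its closure, handled by a limiting argument) and that $R_D$ maps it monotonically onto a closed half-line; one should also make sure the half-plane $\mathcal{R}_\mu$ is the right domain, i.e.\ that $\nu_D > -\mu^2/2$ or that equality is only boundary. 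Everything else is differentiation under the integral sign and the convexity of the half plane.
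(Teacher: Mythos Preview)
Your proposal is correct and follows essentially the same approach as the paper: the segment parametrisation $c(s)=z_0+s(z_1-z_0)$ together with the half-plane bound $|t+\mu^2+c(s)|\geq t+\mu^2/2$ for injectivity, the factoring out of $1-\lambda h_{04}'(-\mu^2)$ in the $D=4$ case, and the direct computation of $\operatorname{Im}(R_D(x+\mathrm{i}y))$ for part~2 are exactly what the paper does. Your discussion of the ray $[R_D(\nu_D),\infty)\subset U_D$ via monotonicity of $R_D$ on $(-\mu^2/2,\infty)$ is in fact more explicit than the paper, which merely asserts this containment.
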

\noindent
We can now define the `$\lambda$-deformed' measure:
\begin{dfnt} 
Given $\lambda \in \mathbb{R}$, $\mu^2>0$ and a H\"older-continuous function 
$\rho_0:[0,\Lambda^2]\to \mathbb{R}_+$ of spectral dimension $\D$ according to 
\sref{Definition}{Def:Spec}. Then a function 
$\rho_\lambda$ on $[\nu_D,\Lambda_D^2]$ is implicitly defined 
by the equations
\begin{align}
\rho_0(t) &=:\rho_{\lambda}(R^{-1}_D(t))\quad
\Leftrightarrow\quad 
\rho_{\lambda}(x)= \rho_0(R_D(x))\;,
\label{tilderho}
\\
\Lambda_D^2&:=R_D^{-1}(\Lambda^2)\;,\qquad
\nu_D:=R_D^{-1}(0)\;,
\nonumber
\end{align}
where $h_D$ in $R_D$ is defined via (\ref{eq:J}) and (\ref{eq:hDD}) by the same 
function $\rho_c\mapsto \rho_\lambda$. 
\label{def:rho}
\end{dfnt}
\begin{rmk}
  The deformation from $\rho_0$ to $\rho_{\lambda}$ is the analogue of
  the deformation from $E$ to $E_c:=\sqrt{E^2+\frac{1}{4} c(\lambda)}$
  in the cubic model of \sref{Ch.}{chap:cubic}.  There the deformation parameter
  $c(\lambda)$ is implicitly defined in \sref{Corollary}{Coro:1P}. Neither
  that equation nor (\ref{tilderho}) in the quartic model can in
  general be solved in terms of `known' functions.
\end{rmk}

\begin{dfnt}
\label{def:ID}
Given $\lambda \in \mathbb{R}$, $\mu^2>0$ and a H\"older-continuous function 
$\rho_0:[0,\Lambda^2]\to \mathbb{R}_+$ of spectral dimension $\D$ 
according to 
\sref{Definition}{Def:Spec}. Let 
$\rho_\lambda$ be its associated deformed measure 
according to \sref{Definition}{def:rho}, and let 
$\lambda$ satisfy the 
requirements of \sref{Lemma}{lem:U} so that $R_D:\mathcal{R}_\mu \to U_D$
is biholomorphic. Then a holomorphic function
$I_D : U_D\setminus [0,\Lambda^2]\ni w \mapsto I_D(w) \in \mathbb{C}$ 
is defined by
\begin{align}
I_D(w):= -R_D(-\mu^2-R_D^{-1}(w))
=\mu^2 + R_D^{-1}(w)+ \lambda h_D(R^{-1}_D(w))\;,
\label{ID}
\end{align}
where $\mu$ in (\ref{ID}) and in \sref{Definition}{def:J} are the same 
and $R_D,h_D$ are defined with the deformed measure 
$\rho_c\mapsto \rho_\lambda$. 
\end{dfnt}

\begin{thrm}
\label{thm:tau}
Let $\rho_0:[0,\Lambda^2]\to \mathbb{R}_+$ be a H\"older-continuous
measure of spectral dimension $\D$ and
$\rho_{\lambda}$ its deformation according to 
\sref{Definition}{def:rho} for a 
real coupling constant $\lambda$  
with $|\lambda|<
\big(\int_{\nu_D}^{\Lambda_D^2} dt 
\frac{\rho_{\lambda}(t)}{(t+\mu^2/2)^2}
+\delta_{D,4} \int_{\nu_D}^{\Lambda_D^2} dt 
\frac{\rho_{\lambda}(t)}{(t+\mu^2)^2}
\big)^{-1}$. Then the consistency 
equation \eqref{cottauba} for the angle function is
solved by 
\begin{align}
\tau_a(p)= \lim_{\epsilon\to 0} 
\mathrm{Im}\big(\log (a+I_D(p+\mathrm{i}\epsilon))\big)\;,
\label{tauap-final}
\end{align}
with $I_D$ given by \sref{Definition}{def:ID},
provided that the following relations between $\mu_{bare}$ and $\mu$
are arranged:
$\mu_{bare}^2=\mu^2$ for $\D<2$ and
\begin{align}
2\leq \D&<4:\quad \mu_{bare}^2=\mu^2-2\lambda h_{02}(-\mu^2)\;,
\nonumber
\\
4\leq \D&<6:\quad \mu_{bare}^2=\mu^2\big(1-\lambda h_{04}'(-\mu^2)\big) 
-2\lambda h_{04}(-\mu^2)\;.
\label{massren}
\end{align}
\end{thrm}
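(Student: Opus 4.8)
The plan is to verify the claimed solution \eqref{tauap-final} by direct substitution into the consistency equation \eqref{cottauba}, using the boundary-value structure of the sectionally holomorphic function $a+I_D(z)$ near the cut $[0,\Lambda^2]$. First I would record the key analytic facts about $I_D$: by \sref{Definition}{def:ID}, $R_D^{-1}$ maps the domain $U_D$ biholomorphically onto the right half plane $\mathcal{R}_\mu$ by \sref{Lemma}{lem:U}, and $I_D(w) = \mu^2 + R_D^{-1}(w) + \lambda h_D(R_D^{-1}(w))$. Since $R_D^{-1}(w)\in\mathcal{R}_\mu$ has positive imaginary part when $\mathrm{Im}(w)>0$ (part 2 of \sref{Lemma}{lem:U}), and $h_D(z)$ is Herglotz-type on $\mathbb{C}\setminus[\nu_D,\Lambda_D^2]$, one checks that $\mathrm{Im}(I_D(p+\mathrm{i}\epsilon)) > 0$ for $p\in(0,\Lambda^2)$, so that $a+I_D(p+\mathrm{i}\epsilon)$ lies in the upper half plane and $\tau_a(p)=\arg(a+I_D(p+\mathrm{i}0))$ is well-defined in $[0,\pi]$ (respectively $[-\pi,0]$ for $\lambda<0$); this confirms $\tau_a$ has the range required for the ansatz \eqref{Gab-ansatz}.

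Next I would compute the two ingredients of \eqref{cottauba} from \eqref{tauap-final}. The imaginary part of $\log(a+I_D(p+\mathrm{i}0))$ is $\tau_a(p)$, while its real part is $\log|a+I_D(p+\mathrm{i}0)|$; the function $p\mapsto \log(a+I_D(p))$ is holomorphic on $\mathbb{C}\setminus[0,\Lambda^2]$ (the argument never hits $(-\infty,0]$ under the stated hypotheses) and decays suitably, so its real part on the cut is the finite Hilbert transform of $\tau_a$ up to the boundary constant. Concretely I would establish
\begin{align*}
\log|a + \mathrm{Re}\,I_D(p)|^{\,} \;=\; \mathcal{H}^\Lambda_p[\tau_a(\bullet)] + \log(\text{something determined by }a),
\end{align*}
and, taking the real and imaginary parts of $a+I_D(p+\mathrm{i}0)$, deduce $\cot\tau_a(p) = \dfrac{a+\mathrm{Re}\,I_D(p+\mathrm{i}0)}{\mathrm{Im}\,I_D(p+\mathrm{i}0)}$. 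The denominator $\mathrm{Im}\,I_D(p+\mathrm{i}0)$ is, via the Sokhotski--Plemelj formula applied to $h_D$ and the relation $\rho_\lambda(R_D^{-1}(t)) = \rho_0(t)$ from \sref{Definition}{def:rho}, exactly $\lambda\pi\rho_0(p)$. The numerator $a+\mathrm{Re}\,I_D(p+\mathrm{i}0)$ must then be matched against $\mu_{bare}^2 + a + p + \lambda\pi\mathcal{H}^\Lambda_p[\rho_0(\bullet)] + \frac1\pi\int_0^{\Lambda^2} dt\,\tau_p(t)$. The term $\frac1\pi\int_0^{\Lambda^2}\tau_p(t)\,dt$ I would handle using the second Tricomi-type identity already quoted (proved in \cite{Panzer:2018tvy}), $\int_0^{\Lambda^2} e^{\pm\mathcal{H}_t^\Lambda[\tau]}\sin\tau(t)\,dt = \int_0^{\Lambda^2}\tau(t)\,dt$, combined with the structure of $\mathrm{Re}\,I_D$.

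The main obstacle will be the careful bookkeeping of the subtraction terms that distinguish $D=0$, $D=2$ and $D=4$: the definitions of $h_2$ and $h_4$ in \sref{Definition}{def:h} involve subtracting the value (and, for $D=4$, also the derivative) of $h_{0D}$ at $-\mu^2$, and these subtractions are precisely what produce the mass-renormalisation relations \eqref{massren}. I expect the correct strategy is to expand $\mathrm{Re}\,I_D(p+\mathrm{i}0) = \mu^2 + R_D^{-1}(p) + \lambda\,\mathrm{Re}\,h_D(R_D^{-1}(p))$, rewrite $R_D^{-1}(p)$ using $p = R_D(R_D^{-1}(p)) = R_D^{-1}(p) - \lambda h_D(-\mu^2 - R_D^{-1}(p))$, and then compare the $\Lambda^2\to\infty$ behaviour order by order; the divergent pieces (logarithmic in $D=4$, none for $D<2$) are absorbed into $\mu_{bare}^2$ exactly as in \eqref{massren}, while the finite remainder reproduces the linear term $p$ plus $\lambda\pi\mathcal{H}^\Lambda_p[\rho_0]$. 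Once the numerator and denominator of $\cot\tau_a(p)$ are both identified, \eqref{cottauba} follows, and the fact that $\tau_a$ is H\"older-continuous (inherited from $\rho_0$ via the biholomorphic change of variables $R_D$) legitimises all the Hilbert-transform manipulations used along the way. This completes the verification.
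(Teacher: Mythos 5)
Your overall strategy --- verify \eqref{tauap-final} by direct substitution, identify $\lim_{\epsilon\to 0}\mathrm{Im}(a+I_D(p+\mathrm{i}\epsilon))=\lambda\pi\rho_\lambda(R_D^{-1}(p))=\lambda\pi\rho_0(p)$ via the boundary values of $h_D$ and \sref{Definition}{def:rho}, and thereby reduce the problem to matching $\lim_{\epsilon\to 0}\mathrm{Re}(I_D(a+\mathrm{i}\epsilon))$ against $\mu_{bare}^2+a+\lambda\pi\mathcal{H}^\Lambda_a[\rho_0]+\frac{1}{\pi}\int_0^{\Lambda^2}dp\,\tau_a(p)$ --- is exactly the opening of the paper's proof. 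The gap lies in the one step you leave essentially unaddressed, which is also the hardest one: the evaluation of $\frac{1}{\pi}\int_0^{\Lambda^2}dp\,\tau_a(p)$ for the candidate angle function. The Tricomi-type identity $\int_0^{\Lambda^2}dt\,e^{\pm\mathcal{H}^\Lambda_t[\tau]}\sin\tau(t)=\int_0^{\Lambda^2}dt\,\tau(t)$ that you invoke cannot do this job: it trades $\int\tau$ for an integral involving $e^{\mathcal{H}[\tau]}\sin\tau$, i.e.\ for the Hilbert transform of $\tau_a$ itself, which is precisely the quantity one does \emph{not} yet control at this stage (it is only computed afterwards, in \sref{Theorem}{prop:HT}); invoking it here is circular. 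What the paper actually does is rewrite $\int_0^{\Lambda^2}dp\,\mathrm{Im}\log(a+I_D(p+\mathrm{i}\epsilon))$ as a contour integral $\frac{1}{2\pi\mathrm{i}}\int_{R_D(\gamma_\epsilon)}dw\,\log(\cdots)$ around the cut, substitute $w=R_D(z)$, and then --- after renaming $\rho_\lambda\mapsto\rho_c$ and promoting $\lambda$ to an independent complex variable in a small ball so that the logarithm admits a uniformly convergent expansion on $\gamma_\epsilon$ --- close the contour, evaluate the residues, and resum the resulting series with the Lagrange inversion formula \eqref{eq:Lagrange} and the B\"urmann formula \eqref{eq:Buermann}; the identity theorem for holomorphic functions then extends the closed-form answer back to the original real $\lambda$. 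Without some version of this device, the verification of \eqref{aim} does not go through.

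A second, smaller inaccuracy: the mass-renormalisation relations \eqref{massren} do not arise by absorbing $\Lambda^2\to\infty$ divergences order by order. The entire computation is carried out at finite $\Lambda_D^2$, and \eqref{massren} emerges as the exact algebraic identity $\mu_{bare}^2=2\kappa_D+c_D^{-1}\mu^2$ with $\kappa_D=-\lambda h_{0D}(-\mu^2)$ and $c_4=(1-\lambda h_{04}'(-\mu^2))^{-1}$, forced when the contour-integral evaluation of $\int_0^{\Lambda^2}dp\,\tau_a(p)$ is compared with \eqref{aim}. The subtractions built into $h_2$ and $h_4$ in \sref{Definition}{def:h} are indeed the source of these constants, as you suspect, but the mechanism is an exact bookkeeping of $\kappa_D$ and $c_D$, not a divergence cancellation in the cut-off limit.
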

\noindent\emph{Proof.}
Assume (\ref{tauap-final}). Then for the given range of $\lambda$ 
we have for $0< p< \Lambda^2$
\begin{align}
\tau_a(p)
&= \lim_{\epsilon\to 0} 
\arctan 
\Big(\frac{\mathrm{Im}(a+I_D(p+\mathrm{i}\epsilon))}{
\mathrm{Re}(a+I_D(p+\mathrm{i}\epsilon))}\Big)
= \arctan \Big(
\frac{\lim_{\epsilon\to 0} 
\lambda \,\mathrm{Im}(h_D(R_D^{-1}(p+\mathrm{i}\epsilon)))}{
\lim_{\epsilon\to 0} \mathrm{Re}(a+I_D(p+\mathrm{i}\epsilon))}\Big)
\nonumber
\\
& = \arctan \Big(
\frac{\lambda \pi  \rho_{\lambda}(R_D^{-1}(p))}{
\lim_{\epsilon\to 0} \mathrm{Re}(a+I_D(p+\mathrm{i}\epsilon))}\Big)
\nonumber
\\
&= \arctan \Big(
\frac{\lambda \pi  \rho_0(p)}{
\lim_{\epsilon\to 0} \mathrm{Re}(a+I_D(p+\mathrm{i}\epsilon))}\Big)\;,
\label{tauReIm}
\end{align}
where 2.~of \sref{Lemma}{lem:U}, the definition of $h_D$ and the 
defining relation (\ref{tilderho}) between $\rho_0$ and
$\rho_{\lambda}$ have been used. The $\arctan$ ranges in 
$[0,\pi]$ for $\lambda>0$ and in 
$[-\pi,0]$ for $\lambda<0$. Comparison with (\ref{cottauba}) shows
(after renaming variables) that we have to prove
\begin{align}
\lim_{\epsilon\to 0} \mathrm{Re}(I_D(a+\mathrm{i}\epsilon))
= \mu_{bare}^2+a+\lambda\pi \mathcal{H}^{\Lambda}_a[\rho_0(\bullet)]
+\frac{1}{\pi} \int_0^{\Lambda^2} \!\! dp\;\tau_a(p)\;.
\label{aim}
\end{align}

We evaluate the integral over $\tau_a$. For $p>\Lambda^2$ we have 
$R_D^{-1}(p)>\Lambda_D^2$ and consequently 
$\mathrm{Im}(h_D(R_D^{-1}(p+\mathrm{i}\epsilon)))=0$. This implies 
\begin{align*}
\frac{1}{\pi} \int_0^{\Lambda^2} dp\;\tau_a(p)
&=\lim_{\epsilon\to 0} \frac{1}{\pi}
\int_0^\infty dp\; \mathrm{Im}\log (a+I_D(p+\mathrm{i}\epsilon))
\nonumber
\\
&=\lim_{\epsilon\to 0}\frac{1}{2\pi \mathrm{i}} 
\int_{R_D(\gamma_\epsilon)} dw\; \log (a+I_D(w))= 
\lim_{\epsilon\to 0} T^\epsilon_D(a)\;,
\nonumber
\\
\text{where}\quad T^\epsilon_D(a)&:= \frac{1}{2\pi \mathrm{i}} 
\int_{R_D(\gamma_\epsilon)} dw\; \log \Big(
\frac{a+I_D(w)}{a+\kappa_D+(R_D^{-1}(w)+\mu^2)/c_D}\Big)\;.
\end{align*}
In the second line, the contour $\gamma_\epsilon$ encircles
$[\nu_D,\infty )$ clockwise
at distance $\epsilon$, i.e.\ it goes straight from 
$+\infty-\mathrm{i}\epsilon$ to $\nu_D-\mathrm{i}\epsilon$, in a left 
half circle to 
$\nu_D+\mathrm{i}\epsilon$ and straight again to 
$+\infty+\mathrm{i}\epsilon$. The
denominator included in $T^\epsilon_D(a)$ is holomorphic in $U_D$ and
does not contribute for $\epsilon\to 0$. The
constants $c_D$ and $\kappa_D$ are chosen as 
$c_2=c_0=1$, $\kappa_0=0$ and 
\begin{align}
c_4&=\frac{1}{1-\lambda h_{04}'(-\mu^2)}\;,\qquad 
\kappa_D=-\lambda h_{0D}(-\mu^2)\quad\text{for } D\in \{2,4\}\;.
\label{ckappa}
\end{align}
We  insert (\ref{ID}) and transform to $w=R(z)$:
\begin{align}
T^\epsilon_D(a)
&= \frac{1}{2\pi \mathrm{i}} 
\int_{\gamma_\epsilon} dz\; R_D'(z) \log \Big(
\frac{a+\mu^2+z+\lambda h_D(z)}{a+\kappa_D+(z+\mu^2)/c_D}\Big)
\nonumber
\\
&= \frac{1}{2\pi \mathrm{i}c_D} 
\int_{\gamma_\epsilon} dz\; \big(1+\lambda c_D h_{0D}'(-\mu^2-z)\big) 
\log \Big(1+  \frac{\lambda c_D h_{0D}(z)}{
c_D (a+\kappa_D) +z+\mu^2}\Big)\;.
\label{Te-1}
\end{align}

The function $h_D$ in (\ref{Te-1}) is defined with the $\lambda$-deformed 
measure $\rho_{\lambda}$. We will now 
\begin{itemize}\itemsep 0pt
\item rename $\rho_\lambda$ to $\rho_{c}$ and the given coupling constant 
$\lambda$ to $\lambda_c\in \mathbb{R}$,
\item consider a general complex $\lambda\in \mathbb{C}$ (i.e.\ 
$h_{0D}$ will be taken as in (\ref{eq:hDD}) without any relation between 
$\rho_c$ and $\lambda$),

\item take $\epsilon$ a fixed positive number. 
\end{itemize}
In this setting, $z$ in (\ref{Te-1}) keeps distance $\epsilon$ 
from $[\nu_D,\infty)$ so that (\ref{Te-1}) becomes a holomorphic function of 
$\lambda$ in a sufficiently small open ball around the
origin. We choose its radius so small that the logarithm admits 
a uniformly convergent power series expansion on $\gamma_\epsilon$. 
Hence, integral and series commute:
\begin{align}
T^\epsilon_D(a)
&= -\sum_{n=1}^\infty \frac{(-\lambda c_D)^n}{n c_D}
\frac{1}{2\pi \mathrm{i}} 
\int_{\gamma_\epsilon} dz\; \big(1+\lambda c_D h_{0D}'(-\mu^2-z)\big) 
\frac{\big(h_{0D}(z)\big)^n}{
\big(c_D (a+\kappa_D)+z+\mu^2\big)^n}\;.
\label{Ta}
\end{align}
Since $h_{0D}(z)\propto z^{-1}$ for $|z|\to \infty$, we can close
$\gamma_\epsilon$ by a large circle to a closed contour 
$\bar{\gamma}_\epsilon$ which avoids $[\nu_D,\infty)$. 

We first evaluate the part without $h_{0D}'$ (and the global 
factor $c_D^{-1}$) by the residue theorem. 
Since $h_{0D}(z)$ is holomorphic in $\mathbb{C}\setminus [\nu_D,\infty)$,
only the pole of order $n$ at $z=-c_D (a+\kappa_D)-\mu^2$ contributes:
\begin{align*} 
K_D^\epsilon(a)&:=  -\sum_{n=1}^\infty \frac{(-\lambda c_D)^n}{n}
\frac{1}{2\pi \mathrm{i}} 
\int_{\bar{\gamma}_\epsilon} dz\; 
\frac{\big(h_{0D}(z)\big)^n}{
\big(c_D (a+\kappa_D)+z+\mu^2\big)^n}
\nonumber
\\
&=  -\sum_{n=1}^\infty \frac{(-\lambda c_D)^n}{n!}
\frac{d^{n-1}}{dw^{n-1}}\Big|_{w=0}
\big(h_{0D}(w-c_D(a+\kappa_D)-\mu^2)\big)^n\;.
\end{align*}
Setting $\phi(w)=h_{0D}(w-c_D(a+\kappa_D)-\mu^2)$, the 
Lagrange inversion formula (\ref{eq:Lagrange}) 
shows that $w=-K_D^\epsilon(a)$ is the
inverse solution of the equation
$-\lambda c_D= f(-K_D^\epsilon(a))$, where $f(w)=\frac{w}{\phi(w)}$.
This means 
\begin{align}
\lambda c_D h_{0D}\big(-K_D^\epsilon(a)-c_D(a+\kappa_D)-\mu^2\big)
= K_D^\epsilon(a)\;.
\label{KDe}
\end{align}
Introducing $z(a):= K_D^\epsilon(a)+c_D(a+\kappa_D)$,
equation (\ref{KDe}) becomes 
\begin{align}
z(a)-\lambda c_D h_{0D}(-\mu^2-z(a))= 
c_D(a +\kappa_D)\;.
\label{eq:za}
\end{align}
Comparing with \sref{Definition}{def:J}, equation (\ref{eq:za})
boils down for any $D\in \{0,2,4\}$ to 
$a=R_D(z(a))$. But $a \in [0,\Lambda^2] \subset U_D$ so that we can
invert to $z(a)=R_D^{-1}(a)$. In summary, we have proved
a useful perturbative formula for $R_D^{-1}$:
\begin{lemma} For any $a\in [0,\Lambda^2]$ and $\lambda$ satisfying the assumptions of \sref{Lemma}{lem:U}, 
the inverse function of $R_D$ defined in 
(\ref{eq:J}) admits a convergent representation
\begin{align*}
R_D^{-1}(a) &= c_D (a+\kappa_D)-
\sum_{n=1}^\infty \frac{(-\lambda c_D)^n}{n!}
\frac{d^{n-1}}{dw^{n-1}}\Big|_{w=0}
\big(h_{0D}(w-c_D(a+\kappa_D)-\mu^2)\big)^n
\;.
\end{align*}
\end{lemma}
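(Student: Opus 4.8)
The plan is to take the series identity for $K_D^\epsilon(a)$ that has already emerged from the residue computation and recognise it as an instance of Lagrange--B\"urmann inversion. Concretely, the residue evaluation in the proof of \sref{Theorem}{thm:tau} produced
\begin{align*}
K_D^\epsilon(a)= -\sum_{n=1}^\infty \frac{(-\lambda c_D)^n}{n!}
\frac{d^{n-1}}{dw^{n-1}}\Big|_{w=0}
\big(h_{0D}(w-c_D(a+\kappa_D)-\mu^2)\big)^n\;,
\end{align*}
and setting $\phi(w):= h_{0D}(w-c_D(a+\kappa_D)-\mu^2)$, the right-hand side is exactly the image under \eqref{eq:Lagrange} (with the roles of variable and parameter exchanged) of the functional inverse of $f(w)=w/\phi(w)$. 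Hence $w=-K_D^\epsilon(a)$ satisfies $-\lambda c_D = f(-K_D^\epsilon(a))$, which, after clearing denominators, is the relation \eqref{KDe}; substituting $z(a):=K_D^\epsilon(a)+c_D(a+\kappa_D)$ turns this into \eqref{eq:za}, i.e.\ $z(a)-\lambda c_D h_{0D}(-\mu^2-z(a))=c_D(a+\kappa_D)$.

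The next step is to match \eqref{eq:za} against the definition of $R_D$ in \sref{Definition}{def:J}. Here one uses the explicit values of $c_D$ and $\kappa_D$ from \eqref{ckappa}: for $D\in\{0,2\}$ one has $c_D=1$ (and $\kappa_0=0$), so \eqref{eq:za} reads directly $z(a)-\lambda h_{0D}(-\mu^2-z(a))=a+\kappa_D$, which by the identities $h_0=h_{00}$, $h_2(z)=h_{02}(z)-h_{02}(-\mu^2)$ (and the choice $\kappa_2=-\lambda h_{02}(-\mu^2)$) collapses to $a=R_D(z(a))$; for $D=4$ the extra factor $c_4=(1-\lambda h_{04}'(-\mu^2))^{-1}$ and $\kappa_4=-\lambda h_{04}(-\mu^2)$ are precisely engineered so that $c_4(a+\kappa_4)=z(a)-\lambda c_4 h_{04}(-\mu^2-z(a))$ rearranges into $a=R_4(z(a))$ after using $h_4(z)=h_{04}(z)-h_{04}(-\mu^2)-(z+\mu^2)h_{04}'(-\mu^2)$. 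In all three cases one gets $a=R_D(z(a))$. Since by hypothesis $a\in[0,\Lambda^2]$, and \sref{Lemma}{lem:U} (together with the coupling-constant bound assumed there) guarantees $R_D:\mathcal{R}_\mu\to U_D$ is biholomorphic with $[0,\Lambda^2]\subset U_D$, the equation $a=R_D(z(a))$ can be inverted to give $z(a)=R_D^{-1}(a)$, i.e.
\begin{align*}
R_D^{-1}(a)=c_D(a+\kappa_D)-\sum_{n=1}^\infty \frac{(-\lambda c_D)^n}{n!}
\frac{d^{n-1}}{dw^{n-1}}\Big|_{w=0}
\big(h_{0D}(w-c_D(a+\kappa_D)-\mu^2)\big)^n\;.
\end{align*}

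Two points need care and constitute the real work. First, the interchange of sum and integral leading to the residue series is only justified after the regrouping already performed in the excerpt: $\rho_\lambda$ is renamed $\rho_c$, $\lambda$ is treated as a free complex parameter in a small ball around the origin, and $\epsilon>0$ is kept fixed so that $\gamma_\epsilon$ stays at positive distance from $[\nu_D,\infty)$; on that contour $h_{0D}(z)=O(z^{-1})$ is bounded and the logarithm has a uniformly convergent power series, so Fubini applies. The stated convergence of the final series for $R_D^{-1}$ on $[0,\Lambda^2]$ then follows from the bound on $|\lambda|$ in \sref{Lemma}{lem:U}, which is exactly the radius ensuring injectivity of $R_D$ and hence analyticity of $R_D^{-1}$ at each point of $[0,\Lambda^2]$. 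Second — and this is the main obstacle — one must verify that the Lagrange inversion formula \eqref{eq:Lagrange} is being applied in a regime where its hypotheses ($\phi$ analytic at $0$, $\phi(0)\neq0$) genuinely hold: $\phi(0)=h_{0D}(-c_D(a+\kappa_D)-\mu^2)$ must be nonzero and $\phi$ analytic near $0$, which holds because $-c_D(a+\kappa_D)-\mu^2\notin[\nu_D,\infty)$ (this is where positivity of $\mu^2$ and the sign of $a+\kappa_D$ enter) and because $\rho_c\geq0$ is not identically zero makes $h_{0D}$ nonvanishing off the cut. Once these analyticity and non-degeneracy checks are in place the identification is purely formal, so I expect the bookkeeping around the $D=4$ constants $c_4,\kappa_4$ and the domain conditions for $\phi(0)\neq0$ to be the only genuinely delicate points. $\square$
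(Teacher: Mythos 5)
Your proposal is correct and follows essentially the same route as the paper: the residue evaluation yields the series for $K_D^\epsilon(a)$, Lagrange inversion identifies $-K_D^\epsilon(a)$ as the inverse of $f(w)=w/\phi(w)$ at $-\lambda c_D$, and the substitution $z(a)=K_D^\epsilon(a)+c_D(a+\kappa_D)$ together with the specific values of $c_D,\kappa_D$ from \eqref{ckappa} turns \eqref{eq:za} into $a=R_D(z(a))$, inverted via \sref{Lemma}{lem:U}. Your additional checks (Fubini on $\bar\gamma_\epsilon$ for fixed $\epsilon$ and small complex $\lambda$, and the hypotheses $\phi(0)\neq 0$, $\phi$ analytic at $0$ for the Lagrange formula) are sound and slightly more careful than the paper, but do not change the argument.
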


We continue with (\ref{Ta}). We insert (\ref{eq:hDD}) for $h_{0D}'$ and
change the integration order:
\begin{align*}
T^\epsilon_D(a)
&= -(a+\kappa_D) +\frac{1}{c_D} R_D^{-1}(a)
\\
&+ \lambda \int_{\nu_D}^{\Lambda_D^2} dt\;
\rho_c(t) \frac{d}{dt}
\Big(\frac{1}{2\pi\mathrm{i}} \int_{\bar{\gamma}_\epsilon} 
\frac{dz}{t+\mu^2+z}
\sum_{n=1}^\infty \frac{(-\lambda c_D)^n}{n}
\frac{\big(h_{0D}(z)\big)^n}{
\big(c_D (a+\kappa_D)+z+\mu^2\big)^n}\Big)\;.
\nonumber
\end{align*}
We first look at generic points $t\neq c_D(a+\kappa_D)$. This is 
no restriction because for H\"older-continuous $\rho_c$, 
ordinary and improper integral (the point $t=c_D(a+\kappa_D)$ removed) 
agree. The 
residue theorem picks up the simple 
pole at $z=-\mu^2-t$, for which we resum
the series to the logarithm, and the pole of order $n$ at 
$z=-c_D (a+\kappa_D)-\mu^2$:
\begin{align}
T^\epsilon_D(a)
&= -a-\kappa_D +\frac{R_D^{-1}(a)}{c_D}
- \lambda \int_{\nu_D}^{\Lambda_D^2} \!\!dt\;
\rho_c(t) \frac{d}{dt}
\log\Big(1+\frac{\lambda c_D h_{0D}(-t-\mu^2)}{
c_D (a+\kappa_D) -t}\Big)
\label{Ta-4}
\\
& + \lambda \int_{\nu_D}^{\Lambda_D^2} \!\! 
dt\;\rho_c(t) \frac{d}{dt}
\Big\{ \! \sum_{n=1}^\infty \frac{(-\lambda c_D)^n}{n!} 
\frac{d^{n-1}}{d w^{n-1}}\Big|_{w=0} \Big(
H'_{t,a}(w) 
\big(h_{0D}(w{-}\mu^2{-}c_D (a{+}\kappa_D))\big)^n\Big)\Big\},
\nonumber
\end{align}
where $H_{t,a}(w):= \log \frac{w+t-c_D(a+\kappa_D)}{t-c_D(a+\kappa_D)}$.
The original dependence on $\epsilon$ dropped out.
The B\"urmann formula (\ref{eq:Buermann}) 
identifies the term 
in $\{~\}$ of the last line of (\ref{Ta-4}) 
as $H_{t,a}(-K^\epsilon_D(a))$: 
\begin{align}
T^\epsilon_D(a)
&= -a-\kappa_D+\frac{R_D^{-1}(a)}{c_D}
+ \lambda \int_{\nu_D}^{\Lambda_D^2} \!\! 
dt\;\rho_c(t) \frac{d}{dt}
\log \Big(\frac{t-K^\epsilon_D(a)-c_D(a+\kappa_D)}{
t-c_D(a+\kappa_D)-\lambda c_D h_{0D}({-}t{-}\mu^2)}\Big)
\nonumber
\\
&= -a-\kappa_D +\frac{R_D^{-1}(a)}{c_D}
+ \lambda \int_{\nu_D}^{\Lambda_D^2} \!\! 
dt\;\rho_c(t) \frac{d}{dt}
\log \Big(\frac{t-R_D^{-1}(a)}{R_D(t)- a}\Big)\;.
\label{Ta-5}
\end{align}
We have used $K^\epsilon_D(a)+c_D(a+\kappa_D)=R_D^{-1}(a)$ and 
rearranged the denominator with (\ref{eq:J}) to $R_D(t)-a$. 

We stress that (\ref{Ta-5}) is proved for complex $\lambda$ in a 
ball about the origin of small radius determined by $\epsilon$. 
The identity theorem for holomorphic functions allows us to enlarge 
the domain of $\lambda$ on both sides back to the original domain of 
the theorem. This includes the original real value $\lambda=\lambda_c$ we 
started with, where $\rho_c=\rho_{\lambda}$ on the 
rhs and $\lim_{\epsilon\to 0} T^\epsilon_D(a)
=\frac{1}{\pi} \int_0^{\Lambda^2} dp\;\tau_a(p)$ on the lhs. Therefore, 
for the original real $\lambda$, 
\begin{align}
\frac{1}{\pi} \int_0^{\Lambda^2} dp\;\tau_a(p)
&= -a-\kappa_D +\frac{R_D^{-1}(a)}{c_D}
+ \lambda \int_{\nu_D}^{\Lambda_D^2} \!\! 
dt\;\rho_{\lambda}(t) \frac{d}{dt}
\log \Big(\frac{t-R_D^{-1}(a)}{R_D(t)- a}\Big)\;,
\label{Ta-6}
\end{align}
where also $R_D$ is built from $\rho_{\lambda}$.

The $t$-integral in (\ref{Ta-6}) does not need any exception
point. But for the next step it is useful to remove an
$\epsilon$-interval about $t=R_D^{-1}(a)$ to take the logarithms
apart. These principal value integrals can equivalently be written as 
limit of the real part when shifting $a$ to $a+\mathrm{i}\epsilon$:
\begin{align*}
\frac{1}{\pi} \int_0^{\Lambda^2} dp\;\tau_a(p)
&=\lim_{\epsilon\to 0} \mathrm{Re}\Big( 
-a-\kappa_D +\frac{R_D^{-1}(a)}{c_D}
+ \lambda \int_{\nu_D}^{\Lambda_D^2} \!\! 
\frac{dt\;\rho_{\lambda}(t)}{t-R_D^{-1}(a+\mathrm{i}\epsilon)}
\nonumber
\\
&\qquad- \lambda \int_{\nu_D}^{\Lambda_D^2} \!\! 
dt\;\rho_{\lambda}(t) \frac{d}{dt}
\log (R_D(t)- (a+\mathrm{i}\epsilon))\Big)
\nonumber
\\
&=\lim_{\epsilon\to 0} \mathrm{Re}\Big( 
-a-2\kappa_D +R_D^{-1}(a)+\mu^2 (1-c_D^{-1})
+ \lambda h_D(R_D^{-1}(a+\mathrm{i}\epsilon)) 
\nonumber
\\
&\qquad- \lambda \int_{0}^{\Lambda^2} \!\! 
\frac{dx\;\rho_\lambda(R_D^{-1}(x))}{
x- (a+\mathrm{i}\epsilon)}\Big)\;.
\end{align*}
Here we have completed the first $t$-integral 
$h_{0D}(R_D^{-1}(a+\mathrm{i}\epsilon))$ with (\ref{eq:hDD})
to 
$h_D(R_D^{-1}(a+\mathrm{i}\epsilon))$ 
and transformed in the second integral to $x=R_D(t)$.
Taking the relation (\ref{tilderho}) to the original measure into account
and recalling the definition~(\ref{ID}) of $I_D(a)$, 
we precisely confirm our aim (\ref{aim}) provided that
\begin{align*}
\mu_{bare}^2= 2\kappa_D + c_D^{-1}\mu^2\;.
\end{align*}
This finishes the proof. \hspace*{\fill} $\square$%

\begin{exm}\label{Ex:4D2}
 On the $\D=D=2$ Moyal space the eigenvalues increases linear $e(x)=x$ with the measure $r(x)=1$ 
which induces for the undeformed measure $\varrho_0(x)=r(x)=1$. 
The deformed measure $\varrho_\lambda$ coincides with the undeformed measure due to \sref{Definition}{def:rho}
$ \varrho_\lambda(x)=\varrho_0(R_2(x))=1$.
The function $R_2(x)$ was defined in \sref{Definition}{def:J} such that we have
\begin{align}
 R_2(x)=x+\lambda x\int_0^\infty \frac{dt}{(t+\mu^2) (t+\mu^2+x)}=x+\lambda \log\bigg(1+\frac{x}{\mu^2}\bigg) .
\end{align}
Setting $\mu^2=1$ which corresponds to $\mu_{bare}^2=1-2\lambda
\log(1+\Lambda^2)$ provides 
the inverses by the branches of Lambert-W 
\cite{Corless:1996??}, in particular
\begin{align}
R_2^{-1}(z)=
\lambda
W_0\Big(\frac{1}{\lambda}e^{\frac{1+z}{\lambda}}\Big)-1\;.
\end{align}
where $W_0$ is the principal branch of the Lambert function. Inserting into 
\sref{Definition}{ID} leads to
\begin{align}
\tau_a(p) &= \mathrm{Im}\log\big(a+I(p{+}\mathrm{i}\epsilon)\big)\,, &
I(z)&:= \lambda W_0\Big(\frac{1}{\lambda}e^{\frac{1+z}{\lambda}}\Big)
-\lambda\log \Big(1-
W_0\Big(\frac{1}{\lambda}e^{\frac{1+z}{\lambda}}\Big)\Big)\;,
\label{Lambert}
\end{align}
which was already found in \cite{Panzer:2018tvy}.
\end{exm}

\subsection[Solution of the 2-Point Function]{Solution of the 2-Point Function\footnote{This subsection is taken from our paper
\cite{Grosse:2019jnv}}}
With $\tau_a(p)$ determined, it remains to evaluate the Hilbert
transform in the equation (\ref{Gab-ansatz}) for 
the planar 2-point function $G(a,b)$. We first establish a general 
integral representation. In the next subsection this integral will be 
evaluated for the case of finite matrices.
\begin{thrm} 
\label{prop:HT}
The renormalised 2-point function of the  
quartic matrix field theory model with spectral dimension $\D$ and $D=2\lfloor\frac{\D}{2}\rfloor$ is given by 
\begin{align}
G(a,b)&:=
\frac{\mu^{2\delta_{D,4}} \exp (N_D(a,b))}{(\mu^2+a+b)}\;,
\label{Gab-D=4}
\end{align}
where
\begin{align}
N_D(a,b)
&=
\frac{1}{2\pi\mathrm{i}}
\int_{-\infty}^\infty dt \; 
\Big\{
\log \big(a-R_D(-\tfrac{\mu^2}{2}-\mathrm{i}t)\big)
\frac{d}{dt} \log\big(b-R_D(-\tfrac{\mu^2}{2}+\mathrm{i}t)\big)
\nonumber
\\
& \qquad\qquad
-\log \big(a-(-\tfrac{\mu^2}{2}-\mathrm{i}t)\big)
\frac{d}{dt} \log\big(b-(-\tfrac{\mu^2}{2}+\mathrm{i}t)\big)
\nonumber
\\
& \qquad\qquad
-\delta_{D,4}
\log \big(-R_D(-\tfrac{\mu^2}{2}-\mathrm{i}t)\big)
\frac{d}{dt} \log\big(-R_D(-\tfrac{\mu^2}{2}+\mathrm{i}t)\big)
\nonumber
\\
& \qquad\qquad 
+\delta_{D,4}\log \big(-(-\tfrac{\mu^2}{2}-\mathrm{i}t)\big)
\frac{d}{dt} \log\big(-(-\tfrac{\mu^2}{2}+\mathrm{i}t)\big)\Big\}
\label{ND}
\end{align}
and $R_D$ is built via (\ref{eq:J}) and (\ref{eq:hDD}) with the deformed 
measure $\rho_\lambda$ defined in (\ref{tilderho}). 
For $4\leq \D<6$, $G(a,b)$ is only determined up to a
multiplicative constant which here is normalised to 
$G(0,0)=1$ independently of $\mu$. 
For $\D<4$ there is an alternative representation
\begin{align}
G(a,b):=
\frac{\displaystyle (\mu^2{+}a{+}b)\exp \Big\{
\frac{1}{2\pi \mathrm{i}}
\int_{-\infty}^\infty \!\!\! dt \;
\log \Big(\frac{a-R_D({-}\frac{\mu^2}{2}{-}\mathrm{i}t)}{
a-({-}\frac{\mu^2}{2}{-}\mathrm{i}t)}\Big)
\frac{d}{dt}
\log \Big(\frac{b-R_D({-}\frac{\mu^2}{2}{+}\mathrm{i}t)}{
b-({-}\frac{\mu^2}{2}{+}\mathrm{i}t)}\Big)\Big\}
}{
(\mu^2+b+R^{-1}_D(a))
(\mu^2+a+R^{-1}_D(b))}.
\label{Gab-final}
\end{align}
\end{thrm}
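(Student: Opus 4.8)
\textbf{Proof strategy for \sref{Theorem}{prop:HT}.}
The plan is to start from the ansatz \eqref{Gab-ansatz} for the planar 2-point function, which was shown in \sref{Sec.}{sec:tau} to be consistent provided the angle function $\tau_a$ solves \eqref{cottauba}. By \sref{Theorem}{thm:tau} we already know the explicit solution $\tau_a(p)=\lim_{\epsilon\to 0}\mathrm{Im}\big(\log(a+I_D(p+\mathrm{i}\epsilon))\big)$ with $I_D$ as in \sref{Definition}{def:ID}. So the only remaining task is to evaluate the finite Hilbert transform $\mathcal{H}_b^\Lambda[\tau_a(\bullet)]$ appearing in \eqref{Gab-ansatz} and simplify the prefactor $\frac{\sin\tau_a(b)}{\lambda\pi\rho_0(b)}$. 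First I would rewrite $ZG(a,b)=\frac{e^{\mathcal{H}_b^\Lambda[\tau_a]}\sin\tau_a(b)}{\lambda\pi\rho_0(b)}$ using $\sin\tau_a(b)=\mathrm{Im}\,e^{\mathrm{i}\tau_a(b)}$ and the boundary-value identity $e^{\mathrm{i}\tau_a(b+\mathrm{i}\epsilon)}$ expressed through $\log(a+I_D(b+\mathrm{i}\epsilon))$, combining the real Hilbert transform with the imaginary-part boundary value into a single contour integral of $\log(a+I_D(w))\,d\log(\cdots)$ around $[\nu_D,\infty)$.

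The key computational step is then to deform this contour. Since $I_D(w)=-R_D(-\mu^2-R_D^{-1}(w))$ and $R_D$ is biholomorphic from the right half-plane $\mathcal{R}_\mu$ onto $U_D$ (by \sref{Lemma}{lem:U}), I would substitute $w=R_D(z)$ so that the integrand becomes a function on $\mathcal{R}_\mu$; the branch cut $[\nu_D,\infty)$ in the $w$-plane pulls back to the line $\mathrm{Re}(z)=-\tfrac{\mu^2}{2}$ together with the image of the spectral cut. The integral over $[\nu_D,\infty)$ then turns into a vertical-line integral $\int_{-\infty}^\infty dt$ along $z=-\tfrac{\mu^2}{2}+\mathrm{i}t$, producing $\log\big(a-R_D(-\tfrac{\mu^2}{2}-\mathrm{i}t)\big)\,\tfrac{d}{dt}\log\big(b-R_D(-\tfrac{\mu^2}{2}+\mathrm{i}t)\big)$. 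To get a convergent integral one must subtract the free ($\lambda=0$) version — this accounts for the second line of \eqref{ND}. In $D=4$ there is an extra subtraction tied to the behaviour at the origin (the $-\mu^2$-expansion point in $h_4$ and the overall undetermined multiplicative constant), which produces the $\delta_{D,4}$-terms in lines three and four of \eqref{ND}; the normalisation $G(0,0)=1$ fixes the constant. Finally, the overall factor $\mu^{2\delta_{D,4}}/(\mu^2+a+b)$ and, for the alternative form \eqref{Gab-final}, the denominators $(\mu^2+b+R_D^{-1}(a))(\mu^2+a+R_D^{-1}(b))$ arise from evaluating the residual pole contributions picked up when the contour crosses $z=-\mu^2-R_D^{-1}(a)$ and $z=R_D^{-1}(b)$ during the deformation, combined with the identity $a+I_D(R_D(z))=\mu^2+z+\lambda h_D(z)$.

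For the alternative representation \eqref{Gab-final} valid for $\D<4$, I would go back to the single contour integral and instead of pushing everything onto the vertical line, split it so that the non-subtracted principal-value pieces collect into the explicit denominators $(\mu^2+b+R_D^{-1}(a))$ and $(\mu^2+a+R_D^{-1}(b))$ — these are exactly the factors $\mu^2+a+R_D^{-1}(b) = a+I_D^{-1}\text{-type expression}$ coming from the zeros of $b-R_D(z)$ and $a+I_D(w)$ inside the contour. The equivalence of \eqref{Gab-D=4} and \eqref{Gab-final} in the overlap range then follows from the perturbative formula for $R_D^{-1}$ proved inside \sref{Theorem}{thm:tau}, or more cleanly by checking both satisfy the defining integral equation \eqref{eq:Gint} and have the same perturbative expansion, which is unique.

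\textbf{Main obstacle.} The hard part will be the contour manipulation in the $D=4$ case: controlling the behaviour at infinity (where $R_4'$ does not tend to $1$ but to $1-\lambda h_{04}'(-\mu^2)$), keeping track of which residues are picked up when deforming through $\mathcal{R}_\mu$, and verifying that the subtraction terms render all integrals absolutely convergent for measures of spectral dimension $\D<6$. A secondary subtlety is justifying the interchange of limits ($\epsilon\to 0$, contour deformation, and — for Dirac-type measures — the H\"older-continuous approximation), for which I would again invoke the identity theorem and uniqueness of the perturbative series to transfer the result from the nice case $\rho_0\in C^{0,\alpha}$ to the general case, exactly as done in the proof of \sref{Theorem}{thm:tau}.
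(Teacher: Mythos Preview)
Your overall architecture is right and matches the paper: start from the ansatz \eqref{Gab-ansatz}, express the Hilbert transform of $\tau_a$ as a contour integral, substitute $w=R_D(z)$, and deform part of the integrand to the vertical line $\mathrm{Re}(z)=-\tfrac{\mu^2}{2}$ to produce the kernel $N_D$. The $\delta_{D,4}$ subtraction is also correctly located, though the reason is more specific than ``behaviour at the origin'': in $D=4$ the integrand behaves like $\log c_4^{-1}$ at infinity (with $c_4\neq 1$), so the contour cannot be closed unless one subtracts the same Hilbert transform at a reference point $a=b=r$; the normalisation $G(0,0)=1$ then absorbs the $r$-dependence.

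There is, however, a genuine gap in how you obtain the explicit denominators. The paper does \emph{not} get $(\mu^2+b+R_D^{-1}(a))$ by picking up a residue at $z=-\mu^2-R_D^{-1}(a)$: that point lies to the left of the line $\mathrm{Re}(z)=-\tfrac{\mu^2}{2}$ and is never crossed. Instead, after the substitution $w=R_D(z)$ the paper splits $\frac{R_D'(z)}{R_D(z)-b}=\frac{1}{z-b}+\frac{d}{dz}\log\big(\frac{R_D(z)-b}{z-b}\big)$. Only the second piece (products of logarithms) is pushed to the vertical line to give $\tilde N_D$; the first piece leaves a contour integral $\int_{\gamma_\epsilon}\frac{dz}{z-b}\log\big(1+\frac{\lambda c_D h_{0D}(z)}{c_D(a+\kappa_D)+z+\mu^2}\big)$ that cannot be evaluated by residues alone. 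Exactly as in the proof of \sref{Theorem}{thm:tau}, one expands the logarithm, closes the contour, and recognises the resulting series as a B\"urmann formula $H_{a,b}(-L_D(a))$ with $L_D(a)=R_D^{-1}(a)-a$, yielding $\log\frac{\mu^2+a+b}{\mu^2+b+R_D^{-1}(a)}$. So the Lagrange--B\"urmann step is not optional here; without it you will not close the argument for the alternative representation \eqref{Gab-final}. The residue you \emph{do} pick up is at $w=b+\mathrm{i}\epsilon'$ (before the substitution), and this is what cancels $\frac{\sin\tau_a(b)}{\lambda\pi\rho_0(b)}$ and supplies the factor $\frac{1}{\mu^2+a+R_D^{-1}(b)}$.
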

\noindent
\emph{Proof.} We rely on structures developed during the proof 
of \sref{Theorem}{thm:tau}. 
The Hilbert transform of $\tau_a$ given by (\ref{tauap-final}) 
can be written as 
\begin{align*}
\mathcal{H}^{\Lambda}_b\big[\tau_a(\bullet)]
&= \lim_{\epsilon'\to 0 }\lim_{\epsilon\to 0 }
\mathrm{Re}\Big(\frac{1}{\pi}
\int_0^{\infty} dp \;
\frac{\mathrm{Im}\log \big(\frac{
a+I_D(p+\mathrm{i}\epsilon)}{
\mu^2+a+R_D^{-1}(p)}\big)}{p-(b+\mathrm{i}\epsilon')}\Big)
\nonumber
\\
&=\lim_{\epsilon'\to 0 }\lim_{\epsilon\to 0 }
\mathrm{Re}\Big(\frac{1}{2\pi\mathrm{i}}
\int_{R_D(\gamma_{\epsilon})} dw \;
\frac{\log \big(\frac{a+I_D(w)}{
\mu^2+a+R_D^{-1}(w)}\big)}{w-(b+\mathrm{i}\epsilon')}\Big)
\;.
\end{align*}
In the second line, $\epsilon'$ must be chosen much larger than 
$\epsilon$ so that $R_D(\gamma_{\epsilon})$ separates 
$b+\mathrm{i}\epsilon'$ from $\mathbb{R}_+$. As before, we are allowed 
to include a holomorphic denominator $\mu^2+a+R_D^{-1}(w)$. In contrast to 
the procedure in \sref{Theorem}{thm:tau} we choose it such that it 
has individually a  limit for $\Lambda\to \infty$. This leads to
the large-$w$ behaviour 
\[
\frac{a+I_D(w)}{
\mu^2+a+R_D^{-1}(w)}\propto 
\frac{1}{c_D}+ \mathcal{O}(1/R_D^{-1}(w))\;.
\]
Thus, for $\D<4$ where $c_D=1$, the integrand decays 
sufficiently fast to deform $\gamma_{\epsilon}$ near $\infty$. For
$\D\geq 4$, however, $c_4\neq 1$ prevents the deformation. This forces us to
subtract the Hilbert transform
$\mathcal{H}^{\Lambda}_r\big[\tau_r(\bullet)]$ at some reference point
$a=b=r>0$. We first move $R_D(\gamma_{\epsilon})$ past the pole 
$w=b+\mathrm{i}\epsilon'$ at expense of its residue. In the remaining 
integral (which is automatically real) we transform to $w=R_D(z)$:
\begin{subequations}
\begin{align}
&\mathcal{H}^{\Lambda}_b\big[\tau_a(\bullet)]
-\delta_{D,4}\mathcal{H}^{\Lambda}_r\big[\tau_r(\bullet)]
\nonumber
\\
&=\lim_{\epsilon'\to 0 }
\mathrm{Re}\Big(
\log \Big(\frac{a+I_D(b+\mathrm{i}\epsilon')}{
\mu^2+a+R_D^{-1}(b+\mathrm{i}\epsilon')}\Big)
-\delta_{D,4} 
\log \Big(\frac{r+I_D(r+\mathrm{i}\epsilon')}{
\mu^2+r+R_D^{-1}(r+\mathrm{i}\epsilon')}\Big)\Big)
\label{Hilbtau-12}
\\
& +
\lim_{\epsilon\to 0 }\frac{1}{2\pi\mathrm{i}}
\int_{\gamma_{\epsilon}} dz \; R_D'(z) 
\Big(
\frac{\log \big(\frac{a+\mu^2+z+\lambda h_D(z)}{
\mu^2+a+z}\big)}{R_D(z)-b}
-\delta_{D,4}
\frac{\log \big(\frac{r+\mu^2+z+\lambda h_D(z)}{
\mu^2+r+z}\big)}{R_D(z)-r}\Big)
\;.
\label{Hilbtau-13}
\end{align}
\label{Hilbtau-1all}
\end{subequations}
The line \eqref{Hilbtau-12} evaluates to 
\begin{align}
\eqref{Hilbtau-12}
= \log\Big[ \frac{\lambda\pi \rho_0(b)}{\sin \tau_a(b)}
\cdot \frac{1}{(\mu^2+a+R_D^{-1}(b))}
\Big( \frac{\lambda\pi \rho_0(r)}{\sin \tau_r(r)}
\cdot \frac{1}{(\mu^2+r+R_D^{-1}(r))}\Big)^{-\delta_{D,4}}
\Big]\;,
\label{Hilbtau-2f}
\end{align}
where real and imaginary part of 
$a+I_D(b+\mathrm{i}\epsilon')$ are rearranged to $\tau_a(b)$ as 
in (\ref{tauReIm}).  

In the last line (\ref{Hilbtau-13}), we write
\[
\frac{R_D'(z)}{R_D(z)-b}
= \frac{d}{dz} \log 
(R_D(z)-b)
= \frac{1}{z-b}
+\frac{d}{dz} \log \Big(\frac{z-b-\lambda h_D(-\mu^2-z)}{z-b}\Big)\;.
\]
Inserted back into \eqref{Hilbtau-13} we deform in the parts with
products of logarithms the contour $\gamma_\epsilon$ into the straight
line $-\frac{\mu^2}{2}+\mathrm{i}\mathbb{R}$. No poles or branch cuts are hit
during this deformation because $R_D(z)$ and $R_D(-\mu^2-z)$ are
holomorphic on the slit half plane $\{\mathrm{Re}(z)>-\frac{3}{5}\mu^2
\} \setminus [\nu_D, \infty)$. In this way we produce 
integrals which are
manifestly symmetric in both variables:
\begin{align}
\tilde{N}_D(a,b)
&= \frac{1}{2\pi \mathrm{i}}
\int_{-\infty}^\infty dt \;
\Big\{
\log \Big(1
+ \lambda \frac{h_D(-\frac{\mu^2}{2}+\mathrm{i}t)}{
\frac{\mu^2}{2} +a+\mathrm{i}t}\Big)
\frac{d}{dt}
\log \Big(1+\frac{\lambda h_D(-\frac{\mu^2}{2}-\mathrm{i}t)}{
b+\frac{\mu^2}{2}-\mathrm{i}t}\Big)
\nonumber
\\
&\quad - \delta_{D,4}
\log \Big(1
+ \lambda \frac{h_D(-\frac{\mu^2}{2}+\mathrm{i}t)}{
\frac{\mu^2}{2} +r+\mathrm{i}t}\Big)
\frac{d}{dt}
\log \Big(1+\frac{\lambda h_D(-\frac{\mu^2}{2}-\mathrm{i}t)}{
r+\frac{\mu^2}{2}-\mathrm{i}t}\Big)
\Big\}\;.
\label{tildeN}
\end{align}
The counterterm for $\D\geq 4$ is indispensable for convergence. 
Now the line (\ref{Hilbtau-13}) becomes
\begin{align}
\eqref{Hilbtau-13}
= \tilde{N}_D(a,b)
&+ \lim_{\epsilon\to 0} \frac{1}{2\pi\mathrm{i}}
\int_{\gamma_{\epsilon}} dz \; 
\Big(
\frac{\log \big(
\frac{1}{c_D}+ \frac{a-ac_D^{-1}+\kappa_D
+\lambda h_{0D}(z)
}{
\mu^2+a+z}\big)}{z-b}
\nonumber
\\
&\qquad\qquad-\delta_{D,4}
\frac{\log \big(
\frac{1}{c_D}+ \frac{r-rc_D^{-1}+\kappa_D
+\lambda h_{0D}(z)
}{
\mu^2+r+z}\big)}{z-r}
\Big)\;.
\label{Hilbtau-3}
\end{align}
For any $D\in \{0,2,4\}$ we can add the convergent integral 
$\frac{1}{2\pi\mathrm{i}}
\int_{\gamma_{\epsilon}} dz \; 
\big(
\frac{\log c_D}{z-b}
-\delta_{D,4}\frac{\log c_D}{z-r}\big)=0$ 
(in $D\in \{0,2\}$ we have $c_D=1$ whereas for $D=4$ we 
close $\gamma_\epsilon$ and use the residue theorem). 

We follow the same strategy as in \sref{Theorem}{thm:tau}: $\rho_\lambda$ is 
renamed to $\rho_c$ and held fixed, $h_D$ and $R_D$ are built with $\rho_c$  
and an independent complex $\lambda$ in a sufficiently small ball 
about the origin. Its radius is determined 
by $\epsilon$ which is also kept fixed.
Also $\Lambda_D^2$ is still finite, and 
$a-ac_D^{-1}$ and $\kappa_D$ have according to (\ref{ckappa}) a factor 
$\lambda$ in front of them. After all, 
the logarithm in (\ref{Hilbtau-3}) admits a uniformly convergent power series
expansion for any $z$ on $\gamma_\epsilon$. Every term of the
expansion decays
sufficiently fast for $z\to \infty$ to admit a closure of
$\gamma_\epsilon$ to the contour $\bar{\gamma}_\epsilon$ that avoids
$[\nu_D,\infty)$. We proceed by the residue theorem. This is simpler
than in \sref{Theorem}{thm:tau} because $\frac{1}{z-b}, \frac{1}{z-r}$ 
and $h_{0D}(z)$ are holomorphic in the interior 
of $\bar{\gamma}_\epsilon$ and on $\bar{\gamma}_\epsilon$ itself:
\begin{align*}
&\frac{1}{2\pi\mathrm{i}}
\int_{\gamma_{\epsilon}} dz \; 
\bigg(
\frac{\log \big(
\frac{1}{c_D}+ \frac{a-ac_D^{-1}+\kappa_D
+\lambda h_{0D}(z)
}{
\mu^2+a+z}\big)}{z-b}
-\delta_{D,4}
\frac{\log \big(
\frac{1}{c_D}+ \frac{r-rc_D^{-1}+\kappa_D
+\lambda h_{0D}(z)
}{
\mu^2+r+z}\big)}{z-r}\bigg)
\nonumber
\\
&=
-\sum_{n=1}^\infty \frac{(-\lambda c_D)^n}{n!}
\frac{d^{n-1}}{dw^{n-1}}\Big|_{w=0}
\Big(
H'_{a,b}(w) \big(\tfrac{a-ac_D^{-1}+\kappa_D}{\lambda}
+ h_{0D}(w-\mu^2-a)\big)^n
\nonumber
\\
&\qquad\qquad\qquad - 
\delta_{D,4} 
H'_{r,r}(w) \big(\tfrac{r-rc_D^{-1}+\kappa_D}{\lambda}
+ h_{0D}(w-\mu^2-r)\big)^n\Big)\;,
\end{align*}
where $H_{a,b}(w)=\log\big(\frac{w-\mu^2-a-b}{-\mu^2-a-b}\big)$. 
We apply the B\"urmann formula (\ref{eq:Buermann}). 
For that we need the auxiliary series
\[
-L_D(a):= \sum_{n=1}^\infty \frac{(-\lambda c_D)^n}{n!}
\frac{d^{n-1}}{dw^{n-1}}\Big|_{w=0}
\Big(\frac{a-ac_D^{-1}+\kappa_D}{\lambda}
+ h_{0D}(w-\mu^2-a)\Big)^n\;.
\]
In the same way as in the proof of (\ref{KDe}), 
the Lagrange inversion formula (\ref{eq:Lagrange}) yields 
\[
-L_D(a)=-\lambda c_D \Big(\frac{a-ac_D^{-1}+\kappa_D}{\lambda}
+ h_{0D}(-L(a)-\mu^2-a)\Big)\;,
\]
which by (\ref{eq:hDD}) and (\ref{eq:J}) rearranges into 
$a=R_D(a+L_D(a))$ for any  $D\in \{0,2,4\}$. 
We invert it to $a+L_D(a)=R_D^{-1}(a)$, but question this step for $D=4$ in 
\sref{Remark}{rem:triv}. The  
B\"urmann formula (\ref{eq:Buermann}) now gives
\begin{align}
\eqref{Hilbtau-13}
&
= \tilde{N}_D(a,b)
- H_{a,b}(-L_D(a))+\delta_{D,4}H_{r,r}(-L_D(r))
\nonumber
\\
&= \tilde{N}_D(a,b)
+ \log\Big(\frac{\mu^2+a+b}{\mu^2+b+R_D^{-1}(a)}\Big)
- \delta_{D,4}\log\Big(\frac{\mu^2+2r}{\mu^2+r+R_D^{-1}(r)}\Big)\;.
\label{Hilbtau-3f}
\end{align}
By the identity theorem for holomorphic functions, this equation holds
in the larger common $\lambda$-holomorphicity domain of both sides.  
It contains the original real coupling constant so that 
$R_D$ in (\ref{Hilbtau-3f}) extends to the situation
formulated in the proposition.

It remains to collect the pieces: We want to evaluate
(\ref{Gab-ansatz}). We set $Z=1$ in $D\in\{0,2\}$ and 
$Z=C_r e^{\mathcal{H}^{\Lambda}_r[\tau_r(\bullet)]}$ in $D=4$, where
$C_r$ is a finite number. We thus need the exponential of 
(\ref{Hilbtau-1all}), which is the exponential 
of (\ref{Hilbtau-2f}) times the exponential of 
(\ref{Hilbtau-3f}). This is to be multiplied by 
$\frac{\sin \tau_b(a)}{\lambda \pi \rho_0(a)}$ which cancels with 
the corresponding term in (\ref{Hilbtau-2f}):
\begin{align}
G(a,b):=
\frac{(\mu^2+a+b)\exp (\tilde{N}_D(a,b))}{
(\mu^2+b+R^{-1}_D(a))
(\mu^2+a+R^{-1}_D(b))}
\Big(C_r
\frac{\lambda\pi \rho_0(r)}{\sin \tau_r(r)}
\cdot \frac{\mu^2+2r}{(\mu^2+r+R_D^{-1}(r))^2}\Big)^{-\delta_{D,4}}
\;.
\label{Gab-prelim}
\end{align}
For $D\in\{0,2\}$ this 
already gives (\ref{Gab-final}) after reconstructing $R_D$ from $h_D$. 

As we will discuss in \sref{Remark}{rem:triv} after the proof, 
this equation is not appropriate for all cases of
$D=4$. We can already in \eqref{Hilbtau-13} deform the 
contour $\gamma_\epsilon$ to the straight line 
$-\frac{\mu^2}{2}+\mathrm{i}\mathbb{R}$. 
After trading $h_D$ in 
(\ref{tildeN}) for $R_D$ via (\ref{eq:J}), 
equation~(\ref{Hilbtau-3f}) can be written as
\begin{align*}
&\log\Big(\frac{\mu^2+a+b}{\mu^2+b+R_D^{-1}(a)}
\Big(\frac{\mu^2+2r}{\mu^2+r+R_D^{-1}(r)}\Big)^{-\delta_{D,4}}\Big)
\nonumber
\\
&= 
\frac{1}{2\pi\mathrm{i}}
\int_{-\infty}^\infty dt \; 
\Big\{
\log \Big(\frac{a-R_D(-\tfrac{\mu^2}{2}-\mathrm{i}t)}{
a-(-\tfrac{\mu^2}{2}-\mathrm{i}t)}\Big)
\frac{d}{dt} \log\big(b-(-\tfrac{\mu^2}{2}+\mathrm{i}t)\big)
\nonumber
\\
& \qquad\qquad-\delta_{D,4}
\log \Big(\frac{r-R_D(-\tfrac{\mu^2}{2}-\mathrm{i}t)}{
r-(-\tfrac{\mu^2}{2}-\mathrm{i}t)}\Big)
\frac{d}{dt} \log\big(r-(-\tfrac{\mu^2}{2}+\mathrm{i}t)\big)
\Big\}\;.
\end{align*}
Inserting this and its flip $a\leftrightarrow b$ back into 
(\ref{Gab-prelim}) gives rise to a representation where $R_D^{-1}$ 
is avoided completely:
\begin{align*}
G(a,b)&:=
\frac{\exp (\tilde{\tilde{N}}_D(a,b))}{(\mu^2+a+b)}
\Big(\frac{C_r}{\mu^2+2r}
\frac{\lambda\pi \rho_0(r)}{\sin \tau_r(r)}\Big)^{-\delta_{D,4}}
\nonumber
\\
\tilde{\tilde{N}}_D(a,b))
&=
\frac{1}{2\pi\mathrm{i}}
\int_{-\infty}^\infty dt \; 
\Big\{
\log \big(a-R_D(-\tfrac{\mu^2}{2}-\mathrm{i}t)\big)
\frac{d}{dt} \log\big(b-R_D(-\tfrac{\mu^2}{2}+\mathrm{i}t)\big)
\nonumber
\\
& \qquad\qquad
-\log \big(a-(-\tfrac{\mu^2}{2}-\mathrm{i}t)\big)
\frac{d}{dt} \log\big(b-(-\tfrac{\mu^2}{2}+\mathrm{i}t)\big)
\nonumber
\\
& \qquad\qquad
-\delta_{D,4}
\log \big(r-R_D(-\tfrac{\mu^2}{2}-\mathrm{i}t)\big)
\frac{d}{dt} \log\big(r-R_D(-\tfrac{\mu^2}{2}+\mathrm{i}t)\big)
\nonumber
\\
& \qquad\qquad 
+\delta_{D,4}\log \big(r-(-\tfrac{\mu^2}{2}-\mathrm{i}t)\big)
\frac{d}{dt} \log\big(r-(-\tfrac{\mu^2}{2}+\mathrm{i}t)\big)\Big\}\;.
\end{align*}
We can absorb the
$r$-dependent factors arising for $D=4$ 
by an appropriate choice of
$C_r$ and then adjust $C_r$ further to have $G(0,0)=1$. 
This amounts to replace $\tilde{\tilde{N}}_4(a,b)$ 
by $N_4(a,b):=\tilde{\tilde{N}}_4(a,b)-\tilde{\tilde{N}}_4(0,0)$.
\hfill $\square$%

\begin{rmk}
\label{rem:triv}
The representation (\ref{Gab-final}), renormalised to 
$\frac{G(a,b)}{G(0,0)}$, might fail for $4\leq \D<6$.
For finite $\Lambda$, as seen in the proof above, 
the representations 
(\ref{Gab-final}) and (\ref{Gab-D=4})+(\ref{ND}) are equivalent for $|\lambda|$ 
small enough.  But in the limit $\Lambda\to \infty$ it can happen that
$R_4(\mathbb{R}_+)$ defined by (\ref{eq:J}) develops an
upper bound for any $\lambda>0$, independently of whether $\rho_c$ is
discrete or continuous. In such a case
$R_4^{-1}(a)$ does not exist for all $a\in \mathbb{R}$ and  
(\ref{Gab-final}) becomes meaningless for $\Lambda\to \infty$,
whereas (\ref{Gab-D=4})+(\ref{ND}) do not show any problem.

In \sref{Sec.}{Sec.4dSol} we prove that for the measure function 
$\rho_0(t)=t$, of
spectral dimension exactly $\D=D=4$, there is no such problem. But other cases 
with $4<\D <6$ are very likely
affected. It is the identification $a+L_4(a)=R_4^{-1}(a)$ made before 
(\ref{Hilbtau-3f}) which might fail for $\Lambda\to \infty$.
For the same reasons, also $\tau_a(p)$ given in (\ref{tauap-final}) 
with (\ref{ID}) does not have a limit 
$\Lambda\to \infty$ for $4\leq\D<6$ and $\lambda>0$. Such problems have 
been noticed in \cite{Grosse:2014lxa}. They concern only 
auxiliary functions; the final result (\ref{Gab-D=4})+(\ref{ND}) 
is consistent for all $\lambda>0$.
\end{rmk}

\subsection[Solution for Finite Matrices $(\D=0)$]{Solution for Finite Matrices $(\D=0)$\footnote{This subsection is taken from our paper
\cite{Grosse:2019jnv}}}\label{sec.fm}
\begin{thrm}
\label{thm:main}
Consider the quartic matrix field theory model with the 
self-adjoint $\mathcal{N}\times\mathcal{N}$-matrix $E$ having distinct 
eigenvalues $0<e_1<e_2<\dots<e_{\mN'}$ of multiplicities $r_1,r_2,\dots,r_{\mN'}$. 
These data encode a meromorphic function
\begin{align*}
R(z):=z-\frac{\lambda}{V} \sum_{k=1}^{\mN'}
\frac{\varrho_k}{\varepsilon_k+z}\;,
\end{align*}
where $\{\varepsilon_k,\varrho_k\}_{k=1,\dots,{\mN'}}$ are the unique 
solutions in an open neighbourhood of $\lambda=0$ of 
\begin{align}
e_l  &= \varepsilon_l 
-\frac{\lambda}{V} \sum_{k=1}^{\mN'} \frac{\varrho_k}{
\varepsilon_k +\varepsilon_l}\;, &
1 &= \frac{r_l}{\varrho_l} 
-\frac{\lambda}{V} \sum_{k=1}^{\mN'} 
\frac{\varrho_k}{(\varepsilon_k+\varepsilon_l)^2}\;,
\quad \text{for }l=1,\dots,{\mN'}\;,
\label{er-implicit}
\end{align}
with  $\lim_{\lambda\to 0} \varepsilon_k=e_k$ and 
$\lim_{\lambda\to 0} \varrho _k=r_k$. For any $u \neq -\varepsilon_k$,
let $z\in \{u,\hat{u}^1,\dots ,\hat{u}^{\mN'}\}$ be the list of roots 
of $R(z)=R(u)$. Then the planar 2-point function
$G^{(0)}_{pq}$
satisfying the equation \eqref{eq:2Pphi4} for $g=0$ that extends into the complex plane, 
in an open neighbourhood of $\lambda=0$, 
 is solved by the \underline{rational} function 
\begin{align}
\mathcal{G}^{(0)}(z,w) &= 
\frac{\displaystyle \Bigg(
1-\frac{\lambda}{V}\sum_{k=1}^{\mN'} \frac{r_k}{(R(\varepsilon_k)-R(-w))
(R(z)-R(\varepsilon_k))} \prod_{j=1}^{\mN'} 
\frac{R(w)-R(-\widehat{\varepsilon_k}^j)}{R(w)-R(\varepsilon_j)}
\Bigg)}{
R(w)-R(-z)} 
\label{Gzw-rational}
\end{align}
with $G_{pq}^{(0)} =\mathcal{G}^{(0)}(\varepsilon_p,\varepsilon_q)$.
This function $\mathcal{G}^{(0)}(z,w)$ is symmetric in $z,w$ and
defined outside poles located at $z+w=0$, at
$z=\widehat{\varepsilon_k}^m$ and at $w=\widehat{\varepsilon_l}^n$,
for $k,l,m,n=1,\dots,\mN'$. 
\end{thrm}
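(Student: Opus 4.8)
The plan is to specialise \sref{Theorem}{prop:HT} to the discrete measure $\rho_0(t) = \sum_{k=1}^{\mN'} \frac{r_k}{V}\delta(t-e_k)$, which has spectral dimension $\D = 0$, and to evaluate all the complex-analytic ingredients explicitly using the residue theorem. First I would record that for this $\rho_0$ the deformed measure of \sref{Definition}{def:rho} is again a sum of Dirac measures supported at points $\varepsilon_k := R_0^{-1}(e_k)$ with weights to be identified; plugging $\rho_\lambda(t) = \sum_k \frac{\varrho_k}{V}\delta(t-\varepsilon_k)$ into the defining relations $\rho_0 = \rho_\lambda \circ R_0^{-1}$, $\Lambda_0^2 = R_0^{-1}(\Lambda^2)$, together with the Jacobian factor coming from $\delta(R_0(x) - e_l)$, produces exactly the two systems of equations in \eqref{er-implicit}: the first is $e_l = R_0(\varepsilon_l)$ written out via \sref{Definition}{def:J} with $\mu^2 \to 0$ (so $R(z) = z - \frac{\lambda}{V}\sum_k \frac{\varrho_k}{\varepsilon_k+z}$), and the second is the statement $r_l = \varrho_l R_0'(\varepsilon_l)$ coming from the change of variables in the delta function. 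The implicit function theorem applied at $\lambda = 0$, where $\varepsilon_k = e_k$ and $\varrho_k = r_k$, gives the unique analytic solution branch claimed.

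Next I would compute $N_0(a,b)$ from \eqref{ND} (the $\D < 4$ case, so the $\delta_{D,4}$ terms drop and $\mu^2$ can be sent to its bare value; equivalently I would work with the representation \eqref{Gab-final}). The integrand is a product of derivatives of $\log(a - R(-\tfrac{\mu^2}{2}\mp\mathrm{i}t))$ and $\log(a - (-\tfrac{\mu^2}{2}\mp\mathrm{i}t))$; since $R$ is now a \emph{rational} function, $a - R(z)$ has finitely many zeros and poles, all explicitly the roots $\{z, \hat{z}^1,\dots,\hat{z}^{\mN'}\}$ of $R(z) = R(\text{pt})$ and the poles $-\varepsilon_k$. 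I would close the contour $-\tfrac{\mu^2}{2}+\mathrm{i}\mathbb{R}$ into a half-plane and evaluate by residues, turning the exponential of the integral into a finite product over the roots $\hat{\varepsilon_k}^j$ and the shifted eigenvalues $\varepsilon_k$. Matching the combinatorial bookkeeping — which half-plane, which roots contribute, and how the $\prod_j \frac{R(w) - R(-\hat{\varepsilon_k}^j)}{R(w) - R(\varepsilon_j)}$ factor and the $\frac{\lambda}{V}\sum_k \frac{r_k}{\cdots}$ prefactor assemble — is the technical heart; I expect to use $r_k = \varrho_k R'(\varepsilon_k)$ and partial-fraction identities for $R$ to simplify residues of $R'/(R - \text{const})$. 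Finally I would check symmetry in $z \leftrightarrow w$ (manifest after the symmetric rewriting, or by the symmetry of \eqref{Gab-final}), locate the poles at $z+w = 0$, $z = \hat{\varepsilon_k}^m$, $w = \hat{\varepsilon_l}^n$ directly from the formula, and verify $\mathcal{G}^{(0)}(\varepsilon_p,\varepsilon_q) = G^{(0)}_{pq}$ by confirming it satisfies \eqref{eq:2Pphi4} at $g=0$ — or simply invoke that the solution of that nonlinear equation analytic near $\lambda = 0$ is unique, so it suffices to show \eqref{Gzw-rational} reduces to the known angle-function solution \eqref{Gab-ansatz}.

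The main obstacle will be the residue computation producing the precise product formula \eqref{Gzw-rational}: one must track carefully that $a - R(z)$, as a rational function of degree $\mN' + 1$ in numerator, factors as $\mathrm{const}\cdot\frac{\prod_{j}(z - \hat{u}^j)(z-u)}{\prod_k(z+\varepsilon_k)}$ where $u = R^{-1}(a)$ lies in the physical half-plane and the $\hat{u}^j$ are the spurious roots, and that when one forms $\exp N_0$ the pole and zero contributions from the two contours combine so that the $u$-dependence reorganises into $\frac{1}{R(w) - R(-z)}$ times the bracketed correction. A secondary subtlety is the normalisation $G(0,0) = 1$ versus the overall constant in \eqref{Gab-final}; for $\D = 0$ everything is finite so this is just a consistency check rather than a genuine difficulty, but one should confirm the constant in \eqref{Gzw-rational} is indeed fixed by demanding the correct limit $\mathcal{G}^{(0)}(z,w) \to \frac{1}{\mu^2 + z + w}$ as $\lambda \to 0$. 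I would also note that an independent, purely algebraic verification is available: insert the ansatz \eqref{Gzw-rational} directly into \eqref{eq:2Pphi4} with $g=0$ and check it term by term, which avoids the contour integrals entirely and may be the cleanest route to present, with \sref{Theorem}{prop:HT} supplying the motivation for why one expects a rational answer in the first place.
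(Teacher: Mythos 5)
Your proposal is correct and follows the same overall strategy as the paper: specialise to the Dirac measure, read off the system \eqref{er-implicit} from $e_l=R(\varepsilon_l)$ and $r_l=\varrho_l R'(\varepsilon_l)$ (your derivation of both relations from the deformation $\rho_\lambda=\rho_0\circ R$ and the delta-function Jacobian is exactly the paper's), then exploit the rationality of $R$ to evaluate the remaining contour integral and finally analyse symmetry and poles. The organisational differences are worth noting. You start from the finished representation \eqref{Gab-final}/\eqref{ND} and propose to close the straight-line contour by residues; the paper instead returns to the earlier representation \eqref{calGdef} with the contour encircling the support, expands the logarithm as a power series in $\lambda$, computes residues term by term, and resums the $n$-th order poles at $z=-R(u)$ via the Lagrange inversion and B\"urmann formulae \eqref{eq:Lagrange}--\eqref{eq:Buermann} — this resummation, which identifies the auxiliary series $-L(u)$ with $u-R(u)$, is the technical device you would need to replace if you keep the logarithm intact, since a logarithm cannot be handled by naive contour closing. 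More importantly, the paper's route passes through a two-step structure you do not spell out: the residue computation first yields the rational but not manifestly symmetric ``branch'' formula \eqref{calG-branch}, and the specific shape of \eqref{Gzw-rational} — the bracket $1-\frac{\lambda}{V}\sum_k\frac{r_k}{\cdots}\prod_j\cdots$ — only emerges after applying the interpolation identity \eqref{basiclemma} to rewrite the product over roots as a sum over $\varepsilon_k$ (see \eqref{calG-sum} and \sref{Corollary}{Cor:Gab}); without that identity the symmetrisation you describe as ``manifest after the symmetric rewriting'' would not be manifest at all. Your fallback of inserting the ansatz directly into \eqref{eq:2Pphi4} and verifying it algebraically is legitimate and is essentially the content of \sref{Remark}{rmk:2p} and of the later Cauchy-matrix proof the paper cites, so the overall plan is sound; the one genuine gap to fill in a written-out version is the interpolation identity that converts the residue output into the stated symmetric form.
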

\begin{rmk}
 A further proof for \sref{Theorem}{thm:main}, more intuitive, was found later and
 uses the right ansatz coming from the argumentation below \cite{Schurmann:2019mzu}.
 Here, an important tool is applied which derives the inverse of a
 Cauchy matrix, which is a matrix of the form $(\frac{1}{x_i-y_j})_{i,j}$.
 The inverse of a Cauchy matrix is given by an explicit algebraic expression, where the sum of its rows 
 and columns have a particular form \cite{MR105798}.
\end{rmk}
\begin{proof}

For the original problem of (finite) $\mathcal {N}\times 
\mathcal {N}$-matrices, the construction of the deformed measure is 
particularly transparent. It gives rise to a rational function $R$
for which the remaining integral of \sref{Theorem}{prop:HT} can be
evaluated. 

In dimension $\D=0$ the special treatment of the lowest eigenvalue $E_1
=\frac{\mu_{bare}^2}{2}=\frac{\mu^2}{2}$ is no longer necessary. 
The notation simplifies considerably when redefining 
$R(z):= \frac{\mu^2}{2}+R_0(z-\frac{\mu^2}{2})$.
Let $0<e_1<e_2<\dots<e_{\mN'}$ be the eigenvalues of $E$ and 
$r_1,\dots,r_{\mN'}$ their multiplicities, with 
$\sum_{k=1}^{\mN'} r_k=\mathcal{N}$. We shift the measure
to $\rho(t):=\rho_0(t-\frac{\mu^2}{2})$:
\begin{align*}
\rho(t)=\frac{1}{V}\sum_{k=1}^{\mN'} r_k \delta(t-e_k)\;.
\end{align*}
The deformed measure is according to (\ref{tilderho}) given by 
\begin{align*}
\rho_\lambda (x-\tfrac{\mu^2}{2})=\rho_0 (R_0(x-\tfrac{\mu^2}{2}))=
\rho (R(x))=
\frac{1}{V} \sum_{k=1}^{\mN'} \frac{r_k}{R'(R^{-1}(e_k))} \delta(x-
R^{-1}(e_k))\;,
\end{align*}
where $R_0$, and thus $R$, arises via (\ref{eq:J}) and (\ref{eq:hDD}) 
from the same measure $\rho_\lambda$:
\begin{align}
R(z)=z-\frac{\lambda}{V} \sum_{k=1}^{\mN'}
\frac{\varrho_k}{\varepsilon_k+z}\;,\qquad 
\varrho_k:= \frac{r_k}{R'(R^{-1}(e_k))}\;,\quad
\varepsilon_k := R^{-1}(e_k)\;.
\label{Jx-N}
\end{align}
This equation and its derivative evaluated at $z_l=R^{-1}(e_l)=\varepsilon_l$ 
for $l=1,\dots,{\mN'}$ provide a system of $2{\mN'}$ equations for the $2{\mN'}$ parameters 
$\{\varepsilon_k,\varrho_k\}$:
\begin{align}
e_l  &= \varepsilon_l 
-\frac{\lambda}{V} \sum_{k=1}^{\mN'}
\frac{\varrho_k}{
\varepsilon_k +\varepsilon_l}\;, &
1 &= \frac{r_l}{\varrho_l} 
-\frac{\lambda}{V} \sum_{k=1}^{\mN'} 
\frac{\varrho_k}{(\varepsilon_k+\varepsilon_l)^2}\;.
\label{rho-finite}
\end{align}
The implicit function theorem guarantees a solution in an 
open $\lambda$-interval, and one explicitly constructs a sequence 
converging to the solution $\{\varepsilon_k,\varrho_k\}$. 
Alternatively, (\ref{rho-finite}) can be interpreted as a system of 
$2{\mN'}$ polynomial equations  (${\mN'}$ of them 
of degree ${\mN'}+1$, the other ${\mN'}$ of degree $2{\mN'}+1$). Such system 
have many solutions, and they will indeed be needed in intermediate
steps. The right solution
is the one which for $\lambda\to 0$ converges to $\{e_k,r_k\}$.


Recall from (\ref{Gab-continuation}) that $G^{(0)}_{pq}=G(x,y)$ where 
$x+\frac{\mu^2}{2}=R(\varepsilon_p)$ and
$y+\frac{\mu^2}{2}=R(\varepsilon_q)$. The ansatz
(\ref{Gab-ansatz}) for $G(x,y)$ is turned with 
(\ref{Hilbtau-1all})
and (\ref{Hilbtau-2f}) into the representation
\begin{align}
G^{(0)}_{pq}
=\mathcal{G}^{(0)}(\varepsilon_p,\varepsilon_q),\quad\!
\mathcal{G}^{(0)}(u,v) = \frac{1}{R(u)+v}
\exp\Big(\lim_{\epsilon \to 0} 
\frac{1}{2\pi \mathrm{i}} \int_{\gamma'_\epsilon}\!\!
dz\; R'(z) \frac{\log (1+\frac{\lambda h_0(z-\mu^2/2)}{
z+R(u)})}{R(z)-R(v)}\Big).
\label{calGdef}
\end{align}
Here, the integration variable $z$ in (\ref{Hilbtau-13}) 
is shifted into $z+\frac{\mu^2}{2}\mapsto z$, and $\gamma_\epsilon'$ is the
shifted contour which encircles $[\nu_0+\frac{\mu^2}{2},\infty)$.
We have  $\lambda h_0(z-\mu^2/2)=-z-R(-z)$ from (\ref{eq:J}). 

For any $v,z\notin \{-\varepsilon_1,\dots,-\varepsilon_{\mN'}\}$, we can expand 
the rational function $R(z)-R(v)$ according to (\ref{Jx-N}) into 
\begin{align}
R(z)-R(v)
=(z-v) \prod_{k=1}^{\mN'} \frac{z-\hat{v}^k}{z+\varepsilon_k}\;.
\label{J-rational}
\end{align}  
Here, $\hat{v}^1,\dots,\hat{v}^{\mN'}$ are the other roots of the numerator
polynomial; they are functions of $v$ and the initial data
$E,\lambda$. For real $v$ it follows from the intermediate value theorem 
that these roots are interlaced between the poles
$\{-\varepsilon_k\}$ of $R$. In particular, for $v\geq 0$ and
$\lambda>0$ all $\hat{v}^k$ are real and located in 
$-\varepsilon_{k+1} < \hat{v}^k < -\varepsilon_{k}$ for $k=1,\dots {\mN'}-1$
and $\hat{v}^{\mN'} < -\varepsilon_{\mN'}$.

Inserting (\ref{J-rational}) into 
(\ref{calGdef}) gives
\begin{align}
\mathcal{G}^{(0)}(u,v) &= \frac{1}{R(u)+v}
\exp\Big\{\lim_{\epsilon \to 0} \frac{1}{2\pi \mathrm{i}}
\int_{\gamma'_\epsilon}
dz\; 
\Big(\frac{1}{z-v}
+\sum_{k=1}^{\mN'} \frac{1}{z-\hat{v}^k}
-\sum_{k=1}^{\mN'} \frac{1}{z+\varepsilon_k}
\Big)
\nonumber
\\
&\qquad\qquad\qquad\qquad \qquad\qquad\qquad 
\times \log \Big(1+ \frac{\lambda}{z+R(u)} \cdot \frac{1}{V}
\sum_{l=1}^{\mN'} \frac{\varrho_l}{\varepsilon_l-z}
\Big)\Big\}\;.
\label{calGdef-1}
\end{align}
\begin{lemma}
For $u,v>0$, a posteriori extented to a neighbourhood 
of $[\nu_0+\frac{\mu^2}{2},\infty)$, one has 
\begin{align}
\mathcal{G}^{(0)}
(u,v) &= \frac{1}{R(v)-R(-u)} 
\prod_{k=1}^{\mN'}
\frac{R(u)-R(-\hat{v}^k)}{R(u)-R(\varepsilon_k)}\;.
\label{calG-branch}
\end{align}
\end{lemma}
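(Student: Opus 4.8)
The plan is to evaluate the contour integral in (\ref{calGdef-1}) by the same power-series-and-residue technique already used in the proofs of \sref{Theorem}{thm:tau} and \sref{Theorem}{prop:HT}. First I would put the two factors of the integrand into closed rational form: the partial-fraction decomposition (\ref{J-rational}) identifies $\frac{1}{z-v}+\sum_k\frac{1}{z-\hat v^k}-\sum_k\frac{1}{z+\varepsilon_k}=\frac{R'(z)}{R(z)-R(v)}$, while the definition (\ref{eq:J}) of $R$ (equivalently $\frac{\lambda}{V}\sum_l\frac{\varrho_l}{\varepsilon_l-z}=-z-R(-z)$, which I would only invoke at the very end) lets us keep $X(z):=\frac{1}{V}\sum_l\frac{\varrho_l}{\varepsilon_l-z}$ as a fixed rational function of $z$. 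Following the strategy of the earlier proofs I would freeze the data $\{\varepsilon_k,\varrho_k\}$ determined by the implicit equations (\ref{rho-finite}), fix $\epsilon>0$, regard $R$ as defined from these fixed $\{\varepsilon_k,\varrho_k\}$ by (\ref{Jx-N}) with a free complex $\lambda$ in a small ball about the origin, and expand $\log\big(1+\frac{\lambda X(z)}{z+R(u)}\big)=-\sum_{n\ge1}\frac{1}{n}\big(\frac{-\lambda X(z)}{z+R(u)}\big)^n$, which converges uniformly on $\gamma'_\epsilon$ and hence may be integrated term by term.

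For each $n$ the integrand is now rational and decays like $z^{-(2n+1)}$ at infinity, so closing $\gamma'_\epsilon$ on the complement of $[\nu_0+\frac{\mu^2}{2},\infty)$ and applying the residue theorem reduces the $n$-th term to residues at the simple poles $z=\hat v^k$ and $z=-\varepsilon_k$ and at the pole of order $n$ at $z=-R(u)$; the poles $z=v$ and $z=\varepsilon_l$ lie in the region encircled by $\gamma'_\epsilon$ and are not picked up. Resummation over $n$ turns the simple-pole contributions into geometric series, giving $\sum_k\log\frac{R(u)-R(-\hat v^k)}{R(u)+\hat v^k}$ and $\sum_k\log\frac{R(u)-\varepsilon_k}{R(u)-R(\varepsilon_k)}$, while the order-$n$ pole at $z=-R(u)$ is exactly of Lagrange–Bürmann type: (\ref{eq:Buermann}) with $\phi(w)=X(w-R(u))$ and $H'(w)=\frac{R'(w-R(u))}{R(w-R(u))-R(v)}$, together with the identification $g(-\lambda)=R(u)-u$ of the inverse (which solves $-\lambda=g/\phi(g)$ by (\ref{Jx-N}) and reduces to $g(0)=0$ at $\lambda=0$), produces the single term $\log\frac{R(v)-R(-R(u))}{R(v)-R(-u)}$ once $H$ is normalised by $H(0)=0$.

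Exponentiating the sum of these three contributions and multiplying by the prefactor $\frac{1}{R(u)+v}$, one last application of (\ref{J-rational}), now evaluated at $z=-R(u)$ so that $\prod_k\frac{R(u)+\hat v^k}{R(u)-\varepsilon_k}=\frac{R(v)-R(-R(u))}{R(u)+v}$, makes the factors $\frac{1}{R(u)+v}$, $R(v)-R(-R(u))$ and $\prod_k(R(u)-\varepsilon_k)$ cancel, leaving precisely $\frac{1}{R(v)-R(-u)}\prod_k\frac{R(u)-R(-\hat v^k)}{R(u)-R(\varepsilon_k)}$, i.e.\ (\ref{calG-branch}). The identity theorem in $\lambda$ then extends the equality from the small ball back to the full neighbourhood of $\lambda=0$ in which $\{\varepsilon_k,\varrho_k\}$ exist, where $R(\varepsilon_k)=e_k$; the restriction to $u,v>0$ in the statement is removed the same way by analytic continuation, and the specialisation $G^{(0)}_{pq}=\mathcal{G}^{(0)}(\varepsilon_p,\varepsilon_q)$ is harmless.

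I expect the main obstacle to be the bookkeeping at the order-$n$ pole $z=-R(u)$: one has to verify that the Lagrange–Bürmann inversion selects the correct branch $g(-\lambda)=R(u)-u$ (fixed by the $\lambda\to0$ limit) and that the normalisation $H(0)=0$ is arranged, and — more delicate — one must control which poles fall inside versus outside $\gamma'_\epsilon$, in particular that for $\lambda$ near $0$ the auxiliary roots $\hat v^k$ stay close to the poles $-\varepsilon_k$ of $R$ and hence in the complement, while $v$ and the $\varepsilon_l$ remain on the encircled ray. The uniform-convergence justification for the term-by-term integration, and the repeated use of (\ref{J-rational}), are routine once the pole locations are pinned down.
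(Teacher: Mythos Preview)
Your proposal is correct and follows essentially the same route as the paper: expand the logarithm for small complex $\lambda$ at fixed $\epsilon$, close the contour, collect the simple-pole residues at $z=\hat v^k$ and $z=-\varepsilon_k$ (resummed to logarithms) and the order-$n$ residue at $z=-R(u)$ via Lagrange--B\"urmann with $H_{u,v}(w)=\log\frac{R(w-R(u))-R(v)}{R(-R(u))-R(v)}$ and auxiliary inverse $R(u)+L(u)=u$, then simplify the resulting product using (\ref{J-rational}) at $z=-R(u)$. The only cosmetic difference is that the paper phrases the exclusion of $z=v,\,\varepsilon_k$ as ``outside the closed contour'' rather than ``inside the original $\gamma'_\epsilon$'', and leaves the identity-theorem continuation in $\lambda$ implicit from the earlier proofs; your added remarks on branch selection and pole localisation are exactly the bookkeeping points one has to check.
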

\noindent
\emph{Proof.}
As before, for finite $\epsilon$ and for $\lambda$ in a small open ball, 
the logarithm in (\ref{calGdef-1}) can be
expanded. After closing the integration contour,
the residue theorem picks up the simple poles at 
$z= \hat{v}^k$ and 
$z=- \varepsilon_k$ and the poles of $n$-th order at 
$z=-R(v)$. The other candidates $z=v$ and $z=\varepsilon_k$ are
outside the contour for real $u,v$. The poles of $n$-th order
combine (up to a global sign) to a B\"urmann formula
(\ref{eq:Buermann}) for
$H_{u,v}(w):=
\log\big(\frac{R(w-R(u))-R(v)}{
R(-R(u))-R(v)}\big)$, where $w\mapsto -L(u)$ solves the auxiliary
integral 
\begin{align*}
-L(u)&:=-\lim_{\epsilon \to 0} \frac{1}{2\pi \mathrm{i}}
\int_{\gamma'_\epsilon}
dz\; 
\log \Big(1+ \frac{\lambda}{z+R(u)} \cdot \frac{1}{V}
\sum_{l=1}^{\mN'} \frac{\varrho_l}{\varepsilon_l-z}
\Big)\Big\}
\\
&= \sum_{n=1}^\infty \frac{(-\lambda)^n}{n!} 
\frac{d^{n-1}}{dt^{n-1}}\Big|_{t=0} 
\Big(\frac{1}{V}
\sum_{l=1}^{\mN'} \frac{\varrho_l}{\varepsilon_l+R(u)-t}
\Big)^n\;.
\end{align*}
The Lagrange inversion formula (\ref{eq:Lagrange}) gives
$-\lambda \frac{1}{V}
\sum_{l=1}^{\mN'} \frac{\varrho_l}{\varepsilon_l+R(u)+L(u)}
=-L(u)$, which is solved by 
$R(u)+L(u)=u$. Putting everything together, the integral  
(\ref{calGdef-1}) evaluates to 
\begin{align*}
\mathcal{G}^{(0)}(u,v) &= \frac{1}{R(u)+v}
\cdot \frac{R(-R(u))-R(v)}{R(-u)-R(v)} 
\prod_{k=1}^{\mN'}
\frac{R(u)-R(-\hat{v}^k)}{\hat{v}^k+R(u)} 
\prod_{k=1}^{\mN'}
\frac{R(u)-\varepsilon_k}{R(u)-R(\varepsilon_k)}\;.
\end{align*}
The identity (\ref{J-rational}) applied for $z=-R(u)$ simplifies this to 
(\ref{calG-branch}).\hspace*{\fill} 
\end{proof}
\bigskip\noindent
The representation (\ref{calG-branch}) is rational in the first variable. 
There are two ways to
proceed. First, we can expand (\ref{calG-branch}) via
(\ref{J-rational}) to
\begin{align}
\mathcal{G}^{(0)}
(u,v) &= \frac{\prod_{k=1}^{\mN'} (u-\varepsilon_k)}{
(u+v)\prod_{k=1}^{\mN'} (u+\hat{v}^k)}\!
\prod_{k=1}^{\mN'} 
\frac{
(u+\hat{v}^k)\prod_{l=1}^{\mN'} 
(\hat{u}^l+\hat{v}^k)
}{
\prod_{l=1}^{\mN'} (\hat{v}^k-\varepsilon_l)}
\prod_{k=1}^{\mN'} \frac{\prod_{l=1}^{\mN'} 
(\varepsilon_k+\varepsilon_l)}{
(u-\varepsilon_k)\prod_{l=1}^{\mN'} 
(\varepsilon_k-\hat{u}^l)}
\nonumber
\\
&=\frac{1}{u+v} \prod_{k,l=1}^{\mN'} \frac{(\varepsilon_k+\varepsilon_l)
(-\hat{u}^l-\hat{v}^k)}{
(\varepsilon_k-\hat{u}^l)(\varepsilon_l-\hat{v}^k)}\;.
\label{calG-branch-1}
\end{align}
This formula is manifestly symmetric in $u,v$ --- a crucial property below. But 
it needs all roots of $R$, which exist only in a neighbourhood of
$[\nu_0+\frac{\mu^2}{2},\infty)$, not globally. 

The limit $u\to \varepsilon_p$ of (\ref{calG-branch}) 
gives with $r_p=\varrho_p R'(\varepsilon_p)$:
\begin{cor} 
\label{Cor:Gab}
For any $p=1,\dots,{\mN'}$ and $v$ in a neighbourhood of $\mathbb{R}_+$ 
one has
\begin{align}
-\frac{\lambda}{V} r_p \mathcal{G}^{(0)}(\varepsilon_p,v)
= 
\frac{\prod_{k=1}^{\mN'} (R(\varepsilon_p)-R(-\hat{v}^k))}{
\prod_{p\neq j=1}^{\mN'} (R(\varepsilon_p)-R(\varepsilon_j))}\;.
\label{corGev}
\end{align}
In particular, for any $p,q=1,\dots,{\mN'}$ one has
\begin{align}
\mathcal{G}^{(0)}(\varepsilon_p,\varepsilon_q)
= -\frac{V}{\lambda r_p}
\frac{\prod_{k=1}^{\mN'} (R(\varepsilon_p)-R(-\widehat{\varepsilon_q}^k))}{
\prod_{p\neq j=1}^{\mN'} (R(\varepsilon_p)-R(\varepsilon_j))}
= -\frac{V}{\lambda r_q}
\frac{\prod_{k=1}^{\mN'} (R(\varepsilon_q)-R(-\widehat{\varepsilon_p}^k))}{
\prod_{q\neq j=1}^{\mN'} (R(\varepsilon_q)-R(\varepsilon_j))}\;.
\label{corGee}
\end{align}
\begin{proof}
Next, we recall the basic lemma
\begin{align}
\sum_{j=0}^{{\mN'}} 
\frac{\prod_{k=1}^{\mN'}(x_j-c_k)}{
\prod_{j\neq k=0}^{{\mN'}} (x_j-x_k)}=1\;,
\label{basiclemma}
\end{align}
valid for pairwise different $x_0,\dots,x_{{\mN'}}$ and any
$c_1,\dots,c_{\mN'}$ (The rational function of $x_0$ has potential 
simple poles at $x_0=x_k$, $k=1,\dots, {\mN'}$, but all residues cancel. Hence, 
it is an entire function of $x_0$, by symmetry in all $x_k$. The
behaviour for $x_0\to \infty$ gives the assertion.).
We use (\ref{basiclemma}) for $x_0=R(u)$,
$x_k=R(\varepsilon_k)$ and 
$c_k=R(-\hat{v}^k)$ to rewrite (\ref{calG-branch}) as
\begin{align}
\mathcal{G}^{(0)}(u,v) &= \frac{1}{R(v)-R(-u)} 
\Big(1+\sum_{k=1}^{\mN'} \frac{1}{R(u)-R(\varepsilon_k)} 
\frac{\prod_{l=1}^{\mN'}( R(\varepsilon_k)-R(-\hat{v}^l))}{
\prod_{k\neq j=1}^{\mN'} (R(\varepsilon_k)-R(\varepsilon_j))}\Big)
\nonumber
\\
&= \frac{1}{R(v)-R(-u)} 
\Big(1+\frac{\lambda}{V}
\sum_{k=1}^{\mN'} \frac{r_k \mathcal{G}^{(0)}(\varepsilon_k,v)}{
R(\varepsilon_k)-R(u)} 
\Big)\;.
\label{calG-sum}
\end{align}
The second line results from (\ref{corGev}).
Using the symmetry 
$\mathcal{G}^{(0)}(\varepsilon_k,v)=\mathcal{G}^{(0)}(v,\varepsilon_k)$, 
the  previous formulae give rise to a representation of 
$\mathcal{G}^{(0)}(u,v)$ which is \emph{rational in both variables}. 
The assertion 
(\ref{Gzw-rational}) in \sref{Theorem}{thm:main} follows  
from symmetry 
$\mathcal{G}(\varepsilon_k,w)=\mathcal{G}(w,\varepsilon_k)$ and 
insertion of $\mathcal{G}(w,\varepsilon_k)$ given by (\ref{calG-branch})
into (\ref{calG-sum}). We could also insert 
the symmetrised version of (\ref{calG-sum}), 
\begin{align}
\mathcal{G}^{(0)}
(\varepsilon_k,v) &= \frac{1}{R(\varepsilon_k)-R(-v)} 
\Big(1+\frac{\lambda}{V} 
\sum_{l=1}^{\mN'} \frac{r_l \mathcal{G}^{(0)}(\varepsilon_l,\varepsilon_k)}{
R(\varepsilon_l)-R(v)} \Big)\;,
\label{calG-sumab}
\end{align}
back into (\ref{calG-sum}). 
The remaining assertion of \sref{Theorem}{thm:main} about the 
poles of $\mathcal{G}^{(0)}(z,w)$ will be established in 
\sref{Proposition}{prop:RFE} below.
\end{proof}
\end{cor}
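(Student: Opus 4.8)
The plan is to evaluate the branch representation
\[
\G^{(0)}(u,v) = \frac{1}{R(v)-R(-u)}\prod_{k=1}^{\mN'}\frac{R(u)-R(-\hat{v}^k)}{R(u)-R(\varepsilon_k)}
\]
of \eqref{calG-branch} first at the eigenvalue $u=\varepsilon_p$ (which will give \eqref{corGev}) and then additionally at $v=\varepsilon_q$ (which will give \eqref{corGee}). The delicate point is that this evaluation is a $0\cdot\infty$ indeterminacy: as $u\to\varepsilon_p$ the $k=p$ factor $\tfrac{1}{R(u)-R(\varepsilon_p)}$ develops a pole, while $R(-u)\to R(-\varepsilon_p)=\infty$ forces the prefactor $\tfrac{1}{R(v)-R(-u)}$ to vanish. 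So the first task is to exhibit a genuine finite limit.

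First I would rationalise the prefactor: applying the factorisation \eqref{J-rational} with $z=-u$ gives $\tfrac{1}{R(v)-R(-u)}=\tfrac{1}{u+v}\prod_{k}\tfrac{u-\varepsilon_k}{u+\hat{v}^k}$, so that the zero $(u-\varepsilon_p)$ cancels the pole of the $k=p$ term, with $\lim_{u\to\varepsilon_p}\tfrac{u-\varepsilon_p}{R(u)-R(\varepsilon_p)}=\tfrac{1}{R'(\varepsilon_p)}$ (finite since $R'(\varepsilon_p)\approx 1$ for small $\lambda$). Collecting the surviving factors at $u=\varepsilon_p$ yields a closed product for $\G^{(0)}(\varepsilon_p,v)$ in which $\prod_k(R(\varepsilon_p)-R(-\hat{v}^k))$ already matches the numerator of the right-hand side of \eqref{corGev} and $\prod_{k\neq p}(R(\varepsilon_p)-R(\varepsilon_k))^{-1}$ matches its denominator; what remains to dispose of are the factors $\tfrac{1}{R'(\varepsilon_p)}$, $\tfrac{1}{\varepsilon_p+v}$, together with $\prod_{k}(\varepsilon_p+\hat{v}^k)^{-1}\prod_{k\neq p}(\varepsilon_p-\varepsilon_k)$.

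The identity that clears these is obtained by computing the residue of $R$ at its simple pole $z=-\varepsilon_p$ in two ways. Directly from $R(z)=z-\tfrac{\lambda}{V}\sum_k\tfrac{\varrho_k}{\varepsilon_k+z}$ one reads off $\Res_{z=-\varepsilon_p}R(z)=-\tfrac{\lambda}{V}\varrho_p$; inserting the factorised form $R(z)=R(v)+(z-v)\prod_k\tfrac{z-\hat{v}^k}{z+\varepsilon_k}$ and picking up the $k=p$ pole gives, after tracking the signs, $(\varepsilon_p+v)\tfrac{\prod_k(\varepsilon_p+\hat{v}^k)}{\prod_{k\neq p}(\varepsilon_p-\varepsilon_k)}$. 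Equating the two expresses $\prod_k(\varepsilon_p+\hat{v}^k)$ through $\varrho_p$, $(\varepsilon_p+v)$ and the product $\prod_{k\neq p}(\varepsilon_p-\varepsilon_k)$, which is exactly the combination appearing in the limit. Substituting it and using $r_p=\varrho_p R'(\varepsilon_p)$ (the definition of $\varrho_p$ in \eqref{Jx-N}) collapses all leftover factors to $r_p/(\varrho_p R'(\varepsilon_p))=1$, which establishes \eqref{corGev}.

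For \eqref{corGee} I would specialise \eqref{corGev} to $v=\varepsilon_q\in\mathbb{R}_+$ (admissible for small $\lambda$, where $\varepsilon_q\approx e_q>0$ and both sides are continuous in $v$ near $\mathbb{R}_+$), whereupon the roots $\hat{v}^k$ of $R(z)=R(v)$ become the companion roots $\widehat{\varepsilon_q}^k$ of $R(z)=R(\varepsilon_q)$; this gives the first equality. The second equality is the same statement with $p\leftrightarrow q$, which follows from the symmetry $\G^{(0)}(u,v)=\G^{(0)}(v,u)$ that is manifest in the alternative representation \eqref{calG-branch-1}. The main obstacle is nothing conceptually deep but rather the careful handling of the removable $0\cdot\infty$ limit together with the sign bookkeeping in the two-fold residue computation; once the residue identity is in place the corollary is immediate.
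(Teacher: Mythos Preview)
Your proof is correct and follows the same approach the paper indicates just before the Corollary (take the limit $u\to\varepsilon_p$ in \eqref{calG-branch} and invoke $r_p=\varrho_p R'(\varepsilon_p)$), though you take a slight detour: instead of passing through \eqref{J-rational} and a separate residue identity, one can read off directly from $R(z)=z-\tfrac{\lambda}{V}\sum_k\tfrac{\varrho_k}{\varepsilon_k+z}$ that $R(-u)\sim\tfrac{\lambda}{V}\tfrac{\varrho_p}{u-\varepsilon_p}$ as $u\to\varepsilon_p$, whence $\tfrac{1}{(R(v)-R(-u))(R(u)-R(\varepsilon_p))}\to-\tfrac{V}{\lambda\varrho_p R'(\varepsilon_p)}=-\tfrac{V}{\lambda r_p}$ and \eqref{corGev} drops out immediately. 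Your derivation of \eqref{corGee} by specialisation and the symmetry from \eqref{calG-branch-1} matches the paper exactly.
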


\begin{rmk}\label{rmk:2p}
For any $z\neq \varepsilon_k$ one has 
\begin{align}
R(z) +\frac{\lambda}{V}  \sum_{k=1}^{\mN'} r_k
\mathcal{G}^{(0)}(z,\varepsilon_k)
+\frac{\lambda}{V}  \sum_{k=1}^{\mN'}
\frac{r_k}{R(\varepsilon_k)-R(z)}
=-R(-z)\;.
\label{Jzz}
\end{align}
We already know this identity. 
The original equation (\ref{Gab-orig}) 
for $G^{(0)}_{pq}
=\mathcal{G}^{(0)}(\varepsilon_p,\varepsilon_q)$ extends to 
complex variables $\varepsilon_p \mapsto z$ and 
$\varepsilon_q\mapsto w$, with $w+z\neq 0$ and 
$w,z \notin \{\varepsilon_k\}_{k=1,\dots {\mN'}} \cup 
\{\widehat{\varepsilon_k}^l\}_{k,l=1,\dots {\mN'}}$:
\begin{align}
&\Big\{R(z)+R(w) +\frac{\lambda}{V}  \sum_{k=1}^{\mN'} r_k
\mathcal{G}^{(0)}(z,\varepsilon_k)
+\frac{\lambda}{V}  \sum_{k=1}^{\mN'}
\frac{r_k}{R(\varepsilon_k)-R(z)}\Big\} \mathcal{G}^{(0)}(z,w)
\nonumber
\\[-1ex]
&= 1+ 
\frac{\lambda}{V}  \sum_{k=1}^{\mN'}
\frac{r_k\,\mathcal{G}^{(0)}(\varepsilon_k,w)}{
R(\varepsilon_k)-R(z)} \;.
\label{Gzw-orig}
\end{align}
Now (\ref{Jzz}) follows by comparison with (\ref{calG-sum}). 

Equation (\ref{Jzz}) has also been established for
H\"older-continuous measure in 
\sref{Theorem}{thm:tau}. Namely, when expressed in terms of the 
angle function and variables 
$a+\frac{\mu^2}{2}=R(\varepsilon_p)$ and
$b+\frac{\mu^2}{2}=R(\varepsilon_q)$, the terms 
$\{~\}$ in the first
line of (\ref{Gzw-orig}) 
become $\lambda\pi\rho_0(a) \cot \tau_b(a)$. In (\ref{tauReIm})
we had found 
$\lambda\pi\rho_0(a) \cot \tau_b(a)
= \lim_{\epsilon\to 0} \mathrm{Re}(b+I_0(a+\mathrm {i}\epsilon))$,
which translates into 
$\lambda\pi\rho_0(a) \cot \tau_b(a)=R(w)-R(-z)$ for $z\neq
\varepsilon_k$. From that starting point we had 
derived (\ref{calG-branch}) so that finding 
(\ref{Jzz}) from (\ref{calG-sum}) is no surprise.

But there is another line of arguments. We could have started with
(\ref{calG-branch}) as an ansatz, from which alone we arrive at
(\ref{calG-sum}). If we could also prove (\ref{Jzz}) from
(\ref{calG-branch}) alone, then (\ref{Gzw-orig}) is a consequence of
the ansatz (\ref{calG-branch}), and we have proved that
(\ref{calG-sum}) solves (\ref{Gzw-orig}). To directly verify
(\ref{Jzz}) as identity for rational functions, note that both sides
approach $z$ for $z\to \infty$. The rhs has poles only at
$z=\varepsilon_k$ with residue $\frac{\lambda}{V}
\varrho_k$. The same poles with the same residues also arise on the
lhs, taking $r_k/R'(\varepsilon_k)=\varrho_k$ into account. But the
lhs also has potential poles at $z=-\varepsilon_j$ and at all
$z=\widehat{\varepsilon_j}^n$.  We have $\Res_{z\to -\varepsilon_j}
R(z)= -\frac{\lambda}{V} \varrho_j$. Taking
(\ref{calG-branch}) for $\mathcal{G}^{(0)}(z,\varepsilon_l)$ in which
we have $\lim_{z\to -\varepsilon_j}
\frac{R(z)-R(-\widehat{\varepsilon_l}^k)}{ R(z)-R(\varepsilon_k)}=1$
for any $k,l$, one easily finds that
$\mathcal{G}^{(0)}(-\varepsilon_j,\varepsilon_l)$ is regular for
$j\neq l$ and that $\Res_{z\to -\varepsilon_j}
\frac{\lambda}{V} r_j\mathcal{G}^{(0)}(z,\varepsilon_j)=
\frac{\lambda}{V} \frac{r_j}{R'(\varepsilon_j)}$, which thus
cancels $\Res_{z\to -\varepsilon_j} R(z)=
-\frac{\lambda}{V} \varrho_j$.

Finally, from (\ref{calG-sum}) we conclude 
\[
\Res\displaylimits_{z\to \widehat{\varepsilon_j}^n}
\mathcal{G}^{(0)}(z,\varepsilon_k)=
-\frac{\lambda r_j }{V R'(\widehat{\varepsilon_j}^n)} 
\frac{\mathcal{G}^{(0)}(\varepsilon_j,\varepsilon_k)}{R(\varepsilon_k)
-R(-\widehat{\varepsilon_j}^n)}
=\! \frac{r_j}{R'(\widehat{\varepsilon_j}^n) r_k} \frac{\prod_{n \neq l=1}^{\mN'}
  (R(\varepsilon_k)-R(-\widehat{\varepsilon_j}^l))}{
\prod_{k\neq m=1}^{\mN'}   (R(\varepsilon_k)-R(\varepsilon_m))},
\]
where (\ref{corGee}) together with $\varrho_kR'(\varepsilon_k)=r_k$ 
has been used in the second equality.
The basic lemma (\ref{basiclemma}) in ${\mN'}$ variables $x_k=R(\varepsilon_k)$
gives 
$\mathrm{Res}_{z\to \widehat{\varepsilon_j}^n}
\sum_{k=1}^{\mN'} r_k \mathcal{G}^{(0)}(z,\varepsilon_k)=
\frac{r_j}{R'(\widehat{\varepsilon_j}^n)}$, 
which precisely cancels 
$\Res_{z\to \widehat{\varepsilon_j}^n}
\sum_{k=1}^{\mN'}
\frac{r_k}{R(\varepsilon_k)-R(z)}
= -\frac{r_j}{R'(\widehat{\varepsilon_j}^n)}
$. In summary, (\ref{Jzz}) is a corollary of (\ref{calG-branch}).
\hspace*{\fill} $\square$%
\end{rmk}

\begin{prps}
\label{prop:RFE}
The planar 2-point function has the (manifestly symmetric) 
rational fraction expansion 
\begin{align*}
\mathcal{G}^{(0)}(z,w)
&=\frac{1}{z+w}\bigg(1+\frac{\lambda^2}{V^2}
\sum_{k,l,m,n=1}^{\mN'}
\frac{C_{k,l}^{m,n}}{(z-\widehat{\varepsilon_k}^m)(w-\widehat{\varepsilon_l}^n)}
\bigg)\;,
\\
C_{k,l}^{m,n}&:= 
\frac{(\widehat{\varepsilon_k}^m +\widehat{\varepsilon_l}^n) 
r_k r_l \mathcal{G}^{(0)}(\varepsilon_k,\varepsilon_l)}{
R'(\widehat{\varepsilon_k}^m)R'(\widehat{\varepsilon_l}^n)
(R(\varepsilon_l)-R(-\widehat{\varepsilon_k}^m))
(R(\varepsilon_k)-R(-\widehat{\varepsilon_l}^n))
}\;.\nonumber
\end{align*}
\end{prps}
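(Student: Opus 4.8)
The plan is to prove that the right-hand side, which I denote $\mathcal{G}_{\mathrm{RHS}}(z,w)$, agrees with $\mathcal{G}^{(0)}(z,w)$ by treating both as rational functions of $(z,w)$ and comparing principal parts. By \sref{Theorem}{thm:main}, $\mathcal{G}^{(0)}$ is rational and symmetric with poles confined to the lines $z+w=0$, $z=\widehat{\varepsilon_k}^m$ and $w=\widehat{\varepsilon_l}^n$ ($k,l,m,n=1,\dots,\mN'$); from \eqref{calG-sum} and the symmetrised formula \eqref{calG-sumab} these poles are simple, and from the product formula \eqref{calG-branch-1}---using that the auxiliary roots satisfy $\widehat{w}^k\to-\varepsilon_k$ as $w\to\infty$---one gets $\mathcal{G}^{(0)}(z,w)=\mathcal{O}(1/z)$ and $\mathcal{O}(1/w)$ with the normalisation $\lim_{w\to\infty}(z+w)\mathcal{G}^{(0)}(z,w)=1$. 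The same four structural properties hold for $\mathcal{G}_{\mathrm{RHS}}$ by inspection; symmetry uses $C^{n,m}_{l,k}=C^{m,n}_{k,l}$, which is immediate from the symmetry of $\mathcal{G}^{(0)}(\varepsilon_k,\varepsilon_l)$. Writing $\delta(x):=\prod_{k,m=1}^{\mN'}(x-\widehat{\varepsilon_k}^m)$, it follows that $\mathcal{G}^{(0)}$, $\mathcal{G}_{\mathrm{RHS}}$, and hence their difference $D:=\mathcal{G}^{(0)}-\mathcal{G}_{\mathrm{RHS}}$ all have the shape $\frac{Q(z,w)}{(z+w)\delta(z)\delta(w)}$ with $Q$ a symmetric polynomial of degree at most $\mN'^2$ in each of $z,w$.

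The heart of the argument is the iterated residue at $(z,w)=(\widehat{\varepsilon_k}^m,\widehat{\varepsilon_l}^n)$. From \eqref{calG-sum} the pole of $\mathcal{G}^{(0)}(z,w)$ at $z=\widehat{\varepsilon_k}^m$ comes only from the summand $\frac{r_k\mathcal{G}^{(0)}(\varepsilon_k,w)}{R(\varepsilon_k)-R(z)}$ (the factor $\frac{1}{R(w)-R(-z)}$ being regular there for generic $w$), which gives $\Res_{z\to\widehat{\varepsilon_k}^m}\mathcal{G}^{(0)}(z,w)=\frac{-\lambda r_k}{VR'(\widehat{\varepsilon_k}^m)}\cdot\frac{\mathcal{G}^{(0)}(\varepsilon_k,w)}{R(w)-R(-\widehat{\varepsilon_k}^m)}$. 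Next, by \eqref{calG-sumab} the pole of $\mathcal{G}^{(0)}(\varepsilon_k,w)$ at $w=\widehat{\varepsilon_l}^n$ comes from $\frac{r_l\mathcal{G}^{(0)}(\varepsilon_l,\varepsilon_k)}{R(\varepsilon_l)-R(w)}$, and using the identity $R(\widehat{\varepsilon_l}^n)=R(\varepsilon_l)$ to evaluate $R(w)-R(-\widehat{\varepsilon_k}^m)$ at $w=\widehat{\varepsilon_l}^n$, one obtains exactly $\Res_{w\to\widehat{\varepsilon_l}^n}\Res_{z\to\widehat{\varepsilon_k}^m}\mathcal{G}^{(0)}(z,w)=\frac{\lambda^2}{V^2}\cdot\frac{C^{m,n}_{k,l}}{\widehat{\varepsilon_k}^m+\widehat{\varepsilon_l}^n}$. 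The same iterated residue of $\mathcal{G}_{\mathrm{RHS}}$ is read off immediately (the $\frac{1}{z+w}$ prefactor is regular there, only the $(k,m)$- resp.\ $(l,n)$-term of the double sum contributes) and equals the same quantity. Hence $Q$ vanishes at every point of the product grid $\{\widehat{\varepsilon_k}^m\}_{k,m}\times\{\widehat{\varepsilon_l}^n\}_{l,n}$.

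To conclude, note that for generic spectral data and $\lambda$ in the relevant neighbourhood the $\mN'^2$ numbers $\widehat{\varepsilon_k}^m$ are pairwise distinct; a polynomial of degree $\le\mN'^2$ in $z$ vanishing at all of them is a constant multiple of $\delta(z)$, so $Q(z,w)=c(w)\delta(z)$ with $\deg c\le\mN'^2$, and symmetry $c(w)\delta(z)=c(z)\delta(w)$ forces $c=\kappa\delta$ for a constant $\kappa$, i.e.\ $D(z,w)=\frac{\kappa}{z+w}$. Multiplying by $z+w$ and letting $w\to\infty$ gives $\kappa=1-1=0$, so $\mathcal{G}^{(0)}=\mathcal{G}_{\mathrm{RHS}}$; the non-generic cases then follow by continuity of both sides in the spectral data. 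I expect the main obstacle to be the bookkeeping in the double-residue computation: tracking the derivatives $R'(\widehat{\varepsilon_k}^m)$, $R'(\widehat{\varepsilon_l}^n)$, the denominators $R(\varepsilon_l)-R(-\widehat{\varepsilon_k}^m)$ and $R(\varepsilon_k)-R(-\widehat{\varepsilon_l}^n)$, and the repeated use of $R(\widehat{\varepsilon_l}^n)=R(\varepsilon_l)$, while making sure every pole encountered is simple so that no spurious contributions enter.
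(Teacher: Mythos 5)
Your strategy mirrors the paper's own proof: read off the pole structure and the normalisation at infinity from \eqref{calG-sum}, \eqref{calG-sumab} and \eqref{calG-branch-1}, compute the iterated residues at $(z,w)=(\widehat{\varepsilon_k}^m,\widehat{\varepsilon_l}^n)$, and finish by partial fractions. The residue bookkeeping you describe is correct and does reproduce $\frac{\lambda^2}{V^2}C^{m,n}_{k,l}/(\widehat{\varepsilon_k}^m+\widehat{\varepsilon_l}^n)$. (One caveat: you quote the pole structure from \sref{Theorem}{thm:main}, but the paper explicitly defers that assertion to this very proposition; the regularity at $z=-\hat{w}^n$, where the denominator $R(w)-R(-z)$ of \eqref{calG-sum} vanishes, must be checked directly with the basic lemma \eqref{basiclemma}, as the paper does in the first paragraph of its proof.)

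The genuine gap is in the interpolation step. From the vanishing of $Q$ on the grid $S\times S$, $S=\{\widehat{\varepsilon_k}^m\}$ with $|S|=\mN'^2$, you conclude $Q(z,w)=c(w)\delta(z)$. This implication is false: for fixed $w$ you only know that $Q(\cdot,w)$ vanishes on $S$ when $w$ itself lies in $S$, so with $\deg_zQ,\deg_wQ\le \mN'^2$ the most you can deduce is $Q(z,w)=c(w)\delta(z)+e(z)\delta(w)$ with $\deg e<\mN'^2$. The symmetric polynomial $Q=\delta(z)+\delta(w)$ is a concrete counterexample: it vanishes on the grid, respects your degree bounds, and yields $D=\frac{1}{(z+w)\delta(z)}+\frac{1}{(z+w)\delta(w)}$, which has only allowed simple poles, all double residues equal to zero, and $(z+w)D\to 0$ in the joint limit $z,w\to\infty$. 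Matching double residues therefore does not pin the function down. What rescues the argument is an ingredient you already recorded but deploy too late: $\lim_{w\to\infty}(z+w)\mathcal{G}^{(0)}(z,w)=1$ for each \emph{fixed} $z$, and likewise for the proposed right-hand side. Applied to $F=(z+w)D=Q/(\delta(z)\delta(w))$ this kills the coefficient of $w^{\mN'^2}$ in $Q$ identically in $z$ (and, by symmetry, that of $z^{\mN'^2}$), so $\deg_zQ,\deg_wQ\le \mN'^2-1$; only then does grid-vanishing force $Q\equiv 0$. Equivalently, one must match the single residues $\Res_{z\to\widehat{\varepsilon_k}^m}(z+w)\mathcal{G}^{(0)}(z,w)$ as rational functions of $w$ (poles only at $w\in S$, decay at infinity), not merely their values at the grid points. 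With that repair your proof closes; as written, the step ``so $Q(z,w)=c(w)\delta(z)$'' does not follow.
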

\noindent
\emph{Proof.} Expanding the first denominator in
(\ref{calG-sum}) via (\ref{J-rational}), $\mathcal{G}^{(0)}(u,v)$ has
potential poles at $u=-\hat{v}^n$ for every $n=1,\dots,{\mN'}$. 
However, for $u=-\hat{v}^n$ the sum in the first line of
(\ref{calG-sum}) becomes 
$\sum_{k=1}^{\mN'} \frac{1}{(R(-\hat{v}^n)-R(\varepsilon_k))}
\frac{\prod_{l=1}^{\mN'}( R(\varepsilon_k)-R(-\hat{v}^l))}{
\prod_{k\neq j=1}^{\mN'} (R(\varepsilon_k)-R(\varepsilon_j))}
=-1$ when using the basic lemma (\ref{basiclemma}). Consequently,
$\mathcal{G}^{(0)}(z,w)$ is regular at $z=-\hat{w}^n$ and by symmetry
at $w=-\hat{z}^n$. 

This leaves the diagonal $z+w=0$ and 
the complex lines ($z=
\widehat{\varepsilon_k}^m$, any $w$) and
($w=\widehat{\varepsilon_l}^n$, any $z$)
as the only possible poles of $\mathcal{G}^{(0)}(z,w)$.
The function $(z+w)\mathcal{G}^{(0)}(z,w)$ approaches 
$1$ for $z,w\to \infty$. Its residues at $
z=\widehat{\varepsilon_k}^m ,w=\widehat{\varepsilon_l}^n$ are
obtained from (\ref{calG-sum}):
\begin{align*}
&\Res\displaylimits_{z\to \widehat{\varepsilon_k}^m, w\to
  \widehat{\varepsilon_l}^n}
(z+w)\mathcal{G}^{(0)}(z,w)
\nonumber
\\
&= -\frac{(\widehat{\varepsilon_k}^m +\widehat{\varepsilon_l}^n)}{
(R(\varepsilon_l)-R(-\widehat{\varepsilon_k}^m))} 
\frac{\lambda r_k}{V R'(\widehat{\varepsilon_k}^m)}
\Res\displaylimits_{w\to  \widehat{\varepsilon_l}^n}
\mathcal{G}^{(0)}(\varepsilon_k,w)
\nonumber
\\
&=\Big(\frac{\lambda}{V}\Big)^2 
\frac{(\widehat{\varepsilon_k}^m +\widehat{\varepsilon_l}^n)
r_k r_l 
\mathcal{G}^{(0)}(\varepsilon_k,\varepsilon_l)
}{R'(\widehat{\varepsilon_k}^m)R'(\widehat{\varepsilon_l}^n)
(R(\varepsilon_l)-R(-\widehat{\varepsilon_k}^m))
(R(\varepsilon_k)-R(-\widehat{\varepsilon_l}^n))
} \;.
\end{align*}
The second line follows from (\ref{calG-sumab}). \hspace*{\fill}$\square$

\begin{exm}
 The extreme case of a single $r_1=\mathcal{N}$-fold degenerate
eigenvalue $E=\frac{\mu^2}{2}\cdot \mathrm{id}$ corresponds to a 
standard Hermitian 1-matrix model with action 
$S[\Phi]=V\,\mathrm{Tr}( \frac{\mu^2}{2}\Phi^2+
\frac{\lambda}{4}\Phi^4)$. This purely quartic case 
was studied in \cite{Brezin:1977sv}. Transforming 
$M \mapsto \sqrt{V} \mu \Phi$ and $g=\frac{\lambda}{4\mu^4}$ and $V=\mN$
brings \cite[eq.\ (3)]{Brezin:1977sv} into our conventions.
The equations (\ref{er-implicit}) reduce for $E_1=\frac{\mu^2}{2}$ 
and $\mN'=1$ to
\begin{align}
\frac{\mu^2}{2}  &=\varepsilon_1
- \frac{\lambda\varrho_1}{\mathcal{N}(2\varepsilon_1)}\;, 
&
1 &= \frac{\mathcal{N}}{\varrho_1} 
- \frac{\lambda\varrho_1}{\mathcal{N}(2\varepsilon_1)^2}
\label{rho-degenerate}
\end{align}
with principal solution (i.e.\ $\lim_{\lambda\to 0}
\varepsilon_1=\frac{\mu^2}{2}$)
\begin{align}
\varepsilon_1&=\frac{1}{6} \big(2\mu^2+\sqrt{\mu^4+12\lambda}\big) \;,&
\varrho_1&=\mathcal{N}\cdot  
\frac{\mu^2\sqrt{\mu^4+12\lambda}-\mu^4+ 12\lambda}{18 \lambda}\;.
\end{align}
The other root $\widehat{\varepsilon_1}^1$ with 
$R(\widehat{\varepsilon_1}^1)
=\widehat{\varepsilon_1}^1
- \frac{\lambda\varrho_1}{\mathcal{N}(\varepsilon_1
+\widehat{\varepsilon_1}^1)}=R(\varepsilon_1)=\frac{\mu^2}{2}$ 
is found to be
\begin{align}
\widehat{\varepsilon_1}^1=-\frac{1}{6}
\big(\mu^2+2\sqrt{\mu^4+12\lambda}\big)= 
\tfrac{\mu^2}{2}-2\varepsilon_1 \;.
\end{align}
The planar 2-point function $G_{11}^{(0)}$ can be evaluated via 
(\ref{corGee}) or  (\ref{calG-branch-1}) to 
\begin{align}
G_{11}^{(0)}= -\frac{1}{\lambda}\Big(\frac{\mu^2}{2}
-R(-\widehat{\varepsilon_1}^1)\Big)=
\frac{4}{3} \cdot  \frac{\mu^2+2\sqrt{\mu^4+12\lambda}}{
(\mu^2+\sqrt{\mu^4+12\lambda})^2}= 
-\frac{2\widehat{\varepsilon_1}^1}{(\varepsilon_1-
\widehat{\varepsilon_1}^1)^2}\;.
 \label{G000}
\end{align}
The result can be put into 
$G_{11}^{(0)}=\frac{1}{3\mu^2} a^2(4-a^2)$
for $a^2=\frac{2\mu^2}{\mu^2+\sqrt{\mu^4+12\lambda}}$ 
and thus agrees with the literature: 
This value for $a^2$, which
corresponds to $\frac{a^2 \lambda}{\mu^2}=\varepsilon_1-\frac{\mu^2}{2}$,
solves
\cite[eq.~(17a)]{Brezin:1977sv} for $g:=\frac{\lambda}{4\mu^4}$ 
so that (\ref{G000}) reproduces\footnote{In \cite{Brezin:1977sv} expectation values of traces 
$\langle \mathrm{Tr}(M^{2p})\rangle$ are studied, whereas we consider 
$\langle M_{11} M_{11}\rangle$, see also \sref{Remark}{rmk:hermitian}} 
\cite[eq.~(27)]{Brezin:1977sv} for $p=1$ (and the convention 
$G_{11}^{(0)}=\frac{1}{\mu^2}$ for $\lambda=0$). 

The meromorphic extension $\mathcal{G}^{(0)}(z,w)$ is most conveniently 
derived from \sref{Proposition}{prop:RFE} after 
cancelling the 
two representations 
(\ref{G000}) for
$G_{11}^{(0)}=\mathcal{G}^{(0)}(\varepsilon_1,\varepsilon_1)$:
\begin{align}
\mathcal{G}^{(0)}(z,w)=\frac{1}{z+w}\Big(1-
\frac{(\varepsilon_1+\widehat{\varepsilon_1}^1)^2}{
(z-\widehat{\varepsilon_1}^1)(w-\widehat{\varepsilon_1}^1)}\Big)
=
\frac{1}{z+w}\Big(1-\frac{\mu^4(1-a^2)^2}{
(3a^2 z+\mu^2)(3a^2 w+\mu^2)}\Big)\;.
\end{align}
We have used $R'(\widehat{\varepsilon_1}^1)=
\frac{\widehat{\varepsilon_1}^1-\varepsilon_1}{
\widehat{\varepsilon_1}^1+\varepsilon_1}$.
\end{exm}

\begin{cor}\label{cor:f0}
 Let the planar free energy be $F^{(0)}$ with $F=:\sum_{g=0}^\infty V^{-2g} F^{(g)}$ and 
 $F:=V^{-2}\log \Z[0]$. With the distinct eigenvalues $e_q$ of multiplicity $r_q$ of $E$, 
 we have
 \begin{align*}
  \frac{V}{r_q}\frac{\partial}{\partial e_q}F^{(0)}=\frac{e_q-\varepsilon_q}{\lambda}+\frac{1}{V}\sum_{k=0}^{\mN'}r_k
  \bigg(\frac{1}{e_k-e_q}-\frac{1}{R'(\varepsilon_k)(\varepsilon_k-\varepsilon_q)}\bigg).
 \end{align*}
\begin{proof}
 Combing \sref{Corollary}{Cor:quartF} after genus expansion with \eqref{Jzz} with $z=e_q$ gives the assertion.
\end{proof}
\end{cor}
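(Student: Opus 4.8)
The plan is to feed the genus expansion into the identity of \sref{Corollary}{Cor:quartF} and then eliminate the planar $2$-point function by means of the quadratic relation \eqref{Jzz}. First I would substitute $F=\sum_{g\geq 0}V^{-2g}F^{(g)}$, $G_{|qn|}=\sum_{g\geq 0}V^{-2g}G^{(g)}_{|qn|}$ and $G_{|q|q|}=\sum_{g\geq 0}V^{-2g}G^{(g)}_{|q|q|}$ into \sref{Corollary}{Cor:quartF} and read off the planar ($g=0$) sector. Since the two-boundary term $\frac{1}{V^2}G_{|q|q|}$ and the higher-genus $G^{(g)}_{|qn|}$ are of strictly higher topological order, only
\[
\frac{V}{r_q}\frac{\partial}{\partial e_q}F^{(0)}
=-\frac{1}{V}\sum_{n=0}^{\mN}G^{(0)}_{|qn|}
=-\frac{1}{V}\sum_{k=0}^{\mN'}r_k\,\mathcal{G}^{(0)}(\varepsilon_q,\varepsilon_k)
\]
survives, where the matrix sum has been re-organised by distinct eigenvalues (the summand depending only on $e_q,e_n$) and $G^{(0)}_{|qk|}=\mathcal{G}^{(0)}(\varepsilon_q,\varepsilon_k)$ by \sref{Theorem}{thm:main}. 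It then remains to evaluate $\frac{1}{V}\sum_{k}r_k\,\mathcal{G}^{(0)}(\varepsilon_q,\varepsilon_k)$.

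Next I would invoke the identity \eqref{Jzz} of \sref{Remark}{rmk:2p}, evaluated at the point $z=\varepsilon_q=R^{-1}(e_q)$ (i.e.\ the argument corresponding to the eigenvalue $e_q$, as indicated in the paper), taking it as the limit $z\to\varepsilon_q$ since \eqref{Jzz} is stated only for $z\neq\varepsilon_k$. Using $R(\varepsilon_q)=e_q$, $R(\varepsilon_k)=e_k$, the expansion $R(-\varepsilon_q)=-\varepsilon_q-\frac{\lambda}{V}\sum_{k}\frac{\varrho_k}{\varepsilon_k-\varepsilon_q}$ and the relation $\varrho_k=r_k/R'(\varepsilon_k)$ from \eqref{Jx-N}, the identity rearranges into
\[
-\frac{1}{V}\sum_{k}r_k\,\mathcal{G}^{(0)}(\varepsilon_q,\varepsilon_k)
=\frac{e_q-\varepsilon_q}{\lambda}
+\frac{1}{V}\sum_{k}r_k\Big(\frac{1}{e_k-e_q}-\frac{1}{R'(\varepsilon_k)(\varepsilon_k-\varepsilon_q)}\Big),
\]
which together with the previous display is exactly the assertion. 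All steps after the genus expansion are a one-line algebraic rearrangement once \eqref{Jzz} and the implicit equations \eqref{Jx-N} are in hand.

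The step I expect to require the most care is the degenerate point $z=\varepsilon_q$. Three contributions in \eqref{Jzz} become singular individually as $z\to\varepsilon_q$: the $k=q$ term of $\frac{\lambda}{V}\sum_{k}\frac{r_k}{R(\varepsilon_k)-R(z)}$, the $k=q$ pole inside $R(-z)=-z-\frac{\lambda}{V}\sum_k\frac{\varrho_k}{\varepsilon_k-z}$, and thereby $R(-z)$ itself; the sum $\frac{\lambda}{V}\sum_{k}r_k\,\mathcal{G}^{(0)}(z,\varepsilon_k)$, in contrast, is finite term by term. I would verify that the two $\frac{1}{z-\varepsilon_q}$-singularities cancel against each other, using $r_q=\varrho_q R'(\varepsilon_q)$, and that the finite remainders assemble into the $k=q$ contribution $\frac{1}{V}r_q\frac{R''(\varepsilon_q)}{2R'(\varepsilon_q)^2}$ of the claimed sum, which is precisely $\frac{1}{V}r_q\lim_{\varepsilon_k\to\varepsilon_q}\big(\frac{1}{e_k-e_q}-\frac{1}{R'(\varepsilon_k)(\varepsilon_k-\varepsilon_q)}\big)$. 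This regularisation is legitimate by the standing convention of the paper that the correlation functions and the function $R$ extend to differentiable functions of the eigenvalues, so that the coinciding-eigenvalue terms are defined by continuity (L'H\^opital's rule); this is exactly the mechanism already used to justify \eqref{Jzz} and \sref{Remark}{rmk:2p} itself.
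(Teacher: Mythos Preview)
Your proof is correct and follows exactly the route indicated by the paper: genus-expand \sref{Corollary}{Cor:quartF} to isolate the planar piece, then use the identity \eqref{Jzz} together with $R(\varepsilon_q)=e_q$, $R(\varepsilon_k)=e_k$, the expansion of $R(-\varepsilon_q)$, and $\varrho_k=r_k/R'(\varepsilon_k)$ to rewrite $-\tfrac{1}{V}\sum_k r_k\,\mathcal{G}^{(0)}(\varepsilon_q,\varepsilon_k)$ in the asserted form. The paper's proof writes ``$z=e_q$'', but the correct substitution in \eqref{Jzz} is the one you make, $z=\varepsilon_q=R^{-1}(e_q)$; your additional discussion of the $k=q$ regularisation via $r_q=\varrho_q R'(\varepsilon_q)$ and L'H\^opital is the appropriate justification for that limit and supplies detail the paper omits.
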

\noindent Integrating the rhs of the equation of \sref{Corollary}{cor:f0} wrt 
$e_q$, multiplying with $\frac{r_q}{V}$ and summing over $q$ gives in principle the planar
free energy $F^{(0)}$. This integration is very hard to perform,
since all $\varepsilon_k$ and $R'(\varepsilon_k)$ 
depend intrinsically on $e_q$.

\subsection[Solution on the $D=4$ Moyal Space]{Solution on the $D=4$ Moyal Space
\footnote{This subsubsection is taken from our paper \cite{Grosse2020}}}\label{Sec.4dSol}
On the $\D=D=4$ Moyal space the eigenvalues increase linearly, $e(x)=x$, with the measure $r(x)=x$ 
which induces for the undeformed measure $\varrho_0(x)=r(x)=x$. 
The deformed measure $\varrho_\lambda$ is then, due to \sref{Definition}{def:rho}, given by
\begin{align*}
 \varrho_\lambda(x)=\varrho_0(R_4(x))=R_4(x).
\end{align*}
On the other hand, the function $R_4(x)$ was defined in \sref{Definition}{def:J} such that the deformed measure 
obeys a further integral equation
\begin{align}\label{Fred}
 \varrho_\lambda(x)=x-\lambda x^2\int_0^\infty \frac{dt\varrho_\lambda(t)}{(t+\mu^2)^2 (t+\mu^2+x)}.
\end{align}
This integral equation is a linear integral equation and of Fredholm type. Finding the solution of 
\eqref{Fred} solves together with \sref{Theorem}{prop:HT} the noncommutative $\Phi^4_4$-QFT model exactly.
\begin{prps}\label{Prop:Jx-final}
 Equation \eqref{Fred} is solved by 
 \begin{align}
\label{Jx-final}
 \varrho_\lambda (x)=x  \;_2F_1\Big(\genfrac{}{}{0pt}{}{
\alpha_\lambda,\;1-\alpha_\lambda}{2}\Big|-\frac{x}{\mu^2}\Big),\qquad
\text{where} \quad
\alpha_\lambda=\left\{
\begin{array}{cl}
\dfrac{\arcsin(\lambda \pi)}{\pi} & \text{for }
|\lambda| \leq \frac{1}{\pi} ,\\
\dfrac{1}{2}+\mathrm{i} \dfrac{\mathrm{arcosh}(\lambda \pi)}{\pi} & \text{for }
\lambda \geq \frac{1}{\pi}.
\end{array}\right.
\end{align}
Moreover, the particular choice
$\mu^2=\frac{\alpha_\lambda(1-\alpha_\lambda)}{\lambda}$ provides 
a natural choice for the boundary condition
differently from the condition ($\frac{\partial}{\partial a}G^{(0)}(a,0)\vert_{a=0}= -1$) 
induced by Zimmermann's forest formula.
\begin{proof}
It is convenient to symmetrise the Fredholm equation \eqref{Fred}.
Dividing by $\frac{x}{\mu^2+x}$ and defining
$\tilde{\varrho}_\lambda(x):=\frac{R_4(x)}{x(\mu^2+x)}$, we have
\begin{align}
 \tilde{\varrho}_\lambda(x)&=\frac{1}{\mu^2+x}
-\lambda \int_0^\infty \!\! dt\;\frac{\tilde{\varrho}_\lambda(t)\,tx}{(\mu^2+t)
 (\mu^2+x)(\mu^2+x+t)}
\nonumber
\\*
&= \frac{c_\lambda}{\mu^2+x}-\lambda \int_0^\infty \!\! dt\;
 \frac{\tilde{\varrho}_\lambda(t)}{\mu^2+x+t},\label{intc}
\end{align}
where $c_\lambda =1+\lambda\mu^2\int_0^\infty dt\;
\frac{\tilde{\varrho}_\lambda(t)}{\mu^2+t}=
1+\lambda\mu^2\int_0^\infty dt\;\frac{R_4(t)}{t(\mu^2+t)^2}$.
The second line results by (not so obvious) rational fraction expansion.
As proved in \sref{App.}{app},
there exists for $\lambda>-\frac{1}{\pi}$
a solution $\tilde{\varrho}_\lambda\in L^2(\mathbb{R}_+)$,
which means $\lim_{t\to \infty} t \tilde{\varrho}_\lambda (t)=0$.
Another transformation $\phi(x)=\mu^2\tilde{\varrho}_\lambda(x\mu^2)$
simplifies the problem to
\begin{align}\label{feq}
\phi(x)=\frac{c_\lambda}{1+x}
-\lambda \int_0^\infty \!\! dt\; \frac{\phi(t)}{1+t+x},\qquad
\phi(0)=1.
\end{align}
The aim is to find the differential operator $D_x$ acting on
\eqref{feq} which is reproduced under the integral on $\phi(t)$ such
that all appearing inhomogeneous parts vanish, i.e.
\begin{align*}
 D_x\phi(x)=-\lambda\int_0^\infty dt\frac{D_t \phi(t)}{1+t+x}.
\end{align*}
We compute derivatives and integrate by parts, taking the boundary 
values at $0$ and $\infty$ into account:
\begin{align}
\phi'(x)=-\frac{c_\lambda}{(1+x)^2}
+\lambda\int_0^\infty \!\! dt\;\frac{\phi'(t)}{1+t+x}+\frac{\lambda}{1+x}.
\label{dphi}
\end{align}
Also the product with $1+x$ simplifies by integration by parts:
\begin{align}
(1+x)\phi'(x)&=-\frac{c_\lambda}{(1+x)}
-\lambda\int_0^\infty \!\! dt\;\frac{t \phi'(t)}{1+t+x}
\;.
\label{xdphi}
\end{align}
We differentiate once more:
\begin{align*}
(1+x)\phi''(x)
+ \phi'(x)
&=\frac{c_\lambda}{(1+x)^2}
+\lambda\int_0^\infty \frac{dt}{(1+t+x)}\,
\frac{d}{dt}( t\phi'(t))\;,
\nonumber
\\
(1+x)\phi''(x)
&=\frac{2c_\lambda}{(1+x)^2}
+\lambda\int_0^\infty \!\! dt\;\frac{t \phi''(t)}{1+t+x}
-\frac{\lambda}{1+x}\;.
\end{align*}
We multiply by $x$ and integrate by parts:
\begin{align*}
x(1+x)\phi''(x)
&=\frac{2c_\lambda}{(1+x)}-\frac{2c_\lambda}{(1+x)^2}
-\lambda\int_0^\infty \frac{dt\;  t(1+t) \phi''(t)}{1+t+x}
+\frac{\lambda}{1+x}\;.
\end{align*}
We subtract twice \eqref{dphi} and add four times \eqref{xdphi}:
\begin{align*}
x(1+x)\phi''(x)+(2+4x) \phi'(x)
&=-\frac{2c_\lambda+\lambda}{(1+x)}
-\lambda\int_0^\infty \!\!\! dt\;\frac{t(1+t) \phi''(t)+ (2+4t)\phi'(t)
}{1+t+x}.
\end{align*}
Finally, we add $\frac{2c_\lambda+\lambda}{c_\lambda}$ times
\eqref{feq} to get $D_x=x(1+x)\frac{d^2}{dx^2}+(2+4x)\frac{d}{dx}+\frac{2c_\lambda+\lambda}{c_\lambda}$, or equivalently
\begin{align}
0&=(\mathrm{id}+\lambda \hat{A}_1) g,\qquad \text{where}
\label{idA}
\\
g(x)&=x(1+x)\phi''(x)+(2+ 4x) \phi'(x)+ \frac{2c_\lambda+\lambda}{c_\lambda}
\phi(x),
\nonumber
\end{align}
and $\hat{A}_\mu$ is the integral operator with kernel
$\hat{A}_\mu(t,u)=\frac{1}{u+t+\mu^2}$. The arguments given in
\sref{App.}{app} show that $\hat{A}_\mu$ has
spectrum $[0,\pi]$ for any $\mu\geq 0$. Therefore, equation
\eqref{idA} has for
$\lambda>-\frac{1}{\pi}$ only the trivial solution $g(x)=0$, which is
a standard hypergeometric differential equation.
The normalisation $\phi(0)=1$ uniquely fixes the solution to
\begin{align}
\phi(x)&=
{}_2F_1\Big(\genfrac{}{}{0pt}{}{1{+}\alpha_\lambda,\;2{-}\alpha_\lambda}{2}
\Big| -x\Big)
\nonumber
\\
&=\frac{1}{1+x} {}_2F_1\Big(\genfrac{}{}{0pt}{}{\alpha_\lambda,\;1{-}\alpha_\lambda}{2}
\Big| -x\Big)
\;,\quad  c_\lambda=\frac{\lambda}{\alpha_\lambda(1{-}\alpha_\lambda)}.
\label{phi-sol}
\end{align}

It remains to satisfy the boundary condition
$c_\lambda=1+\lambda \int_0^\infty dt\;\frac{\phi(t)}{1+t}$
given after (\ref{intc}).
The integral can be evaluated via the Euler integral
 \cite[\S 9.111]{gradshteyn2007},
\begin{align*}
\int_0^\infty \!\!\! dt\;\frac{\phi(t)}{1+t}
&= \frac{\Gamma(2)}{\Gamma(1-\alpha_\lambda)\Gamma(1+\alpha_\lambda)}
\int_0^\infty dt
\int_0^1 du \;\frac{u^{-\alpha_\lambda}(1-u)^{\alpha_\lambda}}{(1+ut)^{\alpha_\lambda}(1+t)^2}
\nonumber
\\
&= \frac{1}{\Gamma(1-\alpha_\lambda)\Gamma(1+\alpha_\lambda)}
\int_0^1 ds
\int_0^1 du \;\frac{u^{-\alpha_\lambda}(1-u)^{\alpha_\lambda}
(1-s)^{\alpha_\lambda}}{(1-(1-u)s)^{\alpha_\lambda}}
\nonumber
\\
&= \frac{1}{\Gamma(1-\alpha_\lambda)\Gamma(2+\alpha_\lambda)}
\int_0^1 du \;u^{-\alpha_\lambda}(1-u)^{\alpha_\lambda}\;
{}_2F_1\Big(\genfrac{}{}{0pt}{}{\alpha_\lambda,\;1}{2+\alpha_\lambda}
\Big|1-u\Big)
\nonumber
\\
&= \frac{1}{\Gamma(1-\alpha_\lambda)\Gamma(2+\alpha_\lambda)}
\int_0^1 du \;u^{\alpha_\lambda}(1-u)^{-\alpha_\lambda} \;
\Big\{
\frac{(1+\alpha_\lambda)}{\alpha_\lambda}
{}_2F_1\Big(\genfrac{}{}{0pt}{}{\alpha_\lambda,\;1}{1+\alpha_\lambda}
\Big|u\Big)
\nonumber
\\
&\qquad\qquad\qquad -\frac{1}{\alpha_\lambda}
{}_2F_1\Big(\genfrac{}{}{0pt}{}{\alpha_\lambda,\;2}{2+\alpha_\lambda}
\Big|u\Big)
\Big\}
\nonumber
\\
&= \frac{1}{\Gamma(1-\alpha_\lambda)\Gamma(2+\alpha_\lambda)}
\Big\{
\frac{(1+\alpha_\lambda)}{\alpha_\lambda}
\frac{\Gamma(1+\alpha_\lambda)\Gamma(1-\alpha_\lambda) \Gamma(1-\alpha_\lambda)}{
\Gamma(2-\alpha_\lambda)\Gamma(1)}
\nonumber
\\
&\qquad\qquad\qquad -\frac{1}{\alpha_\lambda}
\frac{\Gamma(2+\alpha_\lambda)\Gamma(1+\alpha_\lambda) \Gamma(1-\alpha_\lambda)\Gamma(1-\alpha_\lambda)}{
\Gamma(2)\Gamma(2)\Gamma(1)}
\Big\}
\nonumber
\\
&= \frac{1}{\alpha_\lambda(1-\alpha_\lambda)} -\Gamma(\alpha_\lambda)\Gamma(1-\alpha_\lambda).
\end{align*}
Here we have transformed $t=\frac{s}{1-s}$, evaluated first
the $s$-integral \cite[\S 9.111]{gradshteyn2007}
to a hypergeometric function, used its contiguous relation
\cite[\S 9.137.17]{gradshteyn2007} so that the remaining integrals are
known from \cite[\S 7.512.4]{gradshteyn2007} and
\cite[\S 7.512.3]{gradshteyn2007}.
We thus conclude
\begin{align*}
c_\lambda= 1 +
\frac{\lambda}{\alpha_\lambda(1-\alpha_\lambda)}- \frac{\lambda \pi}{\sin (\alpha_\lambda \pi)}
\stackrel{!}{=}\frac{\lambda}{\alpha_\lambda(1-\alpha_\lambda)}
\end{align*}
with solution
\begin{align}
\sin(\alpha_\lambda\pi)=\lambda\pi\;,\qquad
\alpha_\lambda=\left\{
\begin{array}{cl}
\dfrac{\arcsin(\lambda \pi)}{\pi} & \text{for }
|\lambda| \leq \frac{1}{\pi} ,\\
\dfrac{1}{2}+\mathrm{i} \dfrac{\mathrm{arcosh}(\lambda \pi)}{\pi} & \text{for }
\lambda \geq \frac{1}{\pi}.
\end{array}\right.
\label{sol-alpha}
\end{align}
The branch is uniquely selected by the requirement
$\lim_{\lambda\to 0} c_\lambda =1$. For $\lambda<-\frac{1}{\pi}$ there is
no solution for which $c_\lambda$ and $\phi$ are real.
Transforming back to $\tilde{\rho}_\lambda$ and $R_4$ gives
the result announced in \sref{Proposition}{Prop:Jx-final}, which provides the
2-point function $G(x,y)$ via
\sref{Theorem}{prop:HT}.

The choice of $\mu^2=\frac{\alpha_\lambda(1-\alpha_\lambda)}{\lambda}$ which 
is a natural choice is discussed
in great details by perturbative 
analysis in \sref{App.}{App:Solv2} by comparison to the angle function.
\end{proof}
\end{prps}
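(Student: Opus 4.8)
The plan is to turn the Fredholm equation \eqref{Fred} into a hypergeometric ODE by manufacturing a differential operator that annihilates every inhomogeneous term sitting under the integral. First I would symmetrise: since $R_4=\varrho_\lambda$ here (because $\varrho_0=\mathrm{id}$), set $\tilde\varrho_\lambda(x):=\varrho_\lambda(x)/(x(\mu^2+x))$, divide \eqref{Fred} by $x/(\mu^2+x)$, and perform a rational-fraction expansion of the kernel to reach $\tilde\varrho_\lambda(x)=\frac{c_\lambda}{\mu^2+x}-\lambda\int_0^\infty dt\,\frac{\tilde\varrho_\lambda(t)}{\mu^2+x+t}$, with the constant $c_\lambda=1+\lambda\mu^2\int_0^\infty dt\,\frac{\tilde\varrho_\lambda(t)}{\mu^2+t}$ to be fixed afterwards by self-consistency. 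Rescaling $\phi(x):=\mu^2\tilde\varrho_\lambda(x\mu^2)$ reduces everything to $\mu^2=1$, i.e.\ $\phi(x)=\frac{c_\lambda}{1+x}-\lambda\int_0^\infty dt\,\frac{\phi(t)}{1+t+x}$ with $\phi(0)=1$.

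Next I would differentiate this equation, multiply by suitable powers of $1+x$ or $x$, and integrate by parts while tracking the boundary values at $0$ and $\infty$, combining the resulting identities into one second-order operator $D_x=x(1+x)\frac{d^2}{dx^2}+(2+4x)\frac{d}{dx}+\frac{2c_\lambda+\lambda}{c_\lambda}$ with the property that $g:=D_x\phi$ satisfies the \emph{homogeneous} equation $(\mathrm{id}+\lambda\hat A_1)g=0$, where $\hat A_\mu$ is the integral operator with kernel $\frac1{u+t+\mu^2}$. The decisive analytic input is that $\hat A_\mu$ is a positive Carleman-type operator with spectrum $[0,\pi]$ for every $\mu\ge0$ (made transparent by the substitution $t=e^s$, which turns it into a convolution diagonalised by Fourier transform). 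Hence for $\lambda>-\frac1\pi$ the value $-\frac1\lambda$ lies outside the spectrum, $\mathrm{id}+\lambda\hat A_1$ is injective, so $g\equiv0$: $\phi$ solves a standard hypergeometric differential equation, and the normalisation $\phi(0)=1$ pins it to $\phi(x)={}_2F_1\!\big(1{+}\alpha_\lambda,2{-}\alpha_\lambda;2;-x\big)=\frac1{1+x}{}_2F_1\!\big(\alpha_\lambda,1{-}\alpha_\lambda;2;-x\big)$, with $c_\lambda=\frac{\lambda}{\alpha_\lambda(1-\alpha_\lambda)}$ identifying $c_\lambda$ with the exponent parameter $\alpha_\lambda$.

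It then remains to enforce the self-consistency constraint $c_\lambda=1+\lambda\int_0^\infty dt\,\frac{\phi(t)}{1+t}$ that defined $c_\lambda$. I would evaluate this integral via Euler's double-integral representation of ${}_2F_1$, a contiguous relation to split it, and the Beta-type integrals of \cite{gradshteyn2007}, obtaining $c_\lambda=1+\frac{\lambda}{\alpha_\lambda(1-\alpha_\lambda)}-\frac{\lambda\pi}{\sin(\alpha_\lambda\pi)}$. Comparison with $c_\lambda=\frac{\lambda}{\alpha_\lambda(1-\alpha_\lambda)}$ forces $\sin(\alpha_\lambda\pi)=\lambda\pi$, the branch being selected by $\lim_{\lambda\to0}c_\lambda=1$, which yields the two cases of $\alpha_\lambda$ and the non-existence of a real solution for $\lambda<-\frac1\pi$. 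Undoing the rescalings restores $\varrho_\lambda(x)=x\,{}_2F_1\!\big(\alpha_\lambda,1{-}\alpha_\lambda;2;-x/\mu^2\big)$ for arbitrary $\mu^2>0$; then via \sref{Theorem}{prop:HT} this gives the $2$-point function. For the last assertion I would note that $\mu^2=\frac{\alpha_\lambda(1-\alpha_\lambda)}{\lambda}$ is exactly $1/c_\lambda$ and argue, through the perturbative comparison with the angle function carried out in \sref{App.}{App:Solv2}, that this singles out the natural boundary condition.

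The step I expect to be the main obstacle is the analytic underpinning of the integration-by-parts argument: one must secure a priori the existence of an $L^2(\mathbb{R}_+)$ solution of the symmetrised equation with enough decay (e.g.\ $\lim_{t\to\infty}t\tilde\varrho_\lambda(t)=0$) to discard all boundary terms, which goes through Fredholm theory applied to $\mathrm{id}+\lambda\hat A_\mu$, and this in turn rests on a rigorous, $\mu$-uniform determination of $\mathrm{spec}(\hat A_\mu)=[0,\pi]$ — the genuine technical core. A secondary difficulty is the bookkeeping of coefficients in the repeated differentiation, so that every inhomogeneity cancels exactly, together with the correct use of contiguous hypergeometric identities in the final integral evaluation.
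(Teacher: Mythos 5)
Your proposal is correct and follows essentially the same route as the paper's proof: the same symmetrisation and rescaling to $\phi$, the same operator $D_x=x(1+x)\frac{d^2}{dx^2}+(2+4x)\frac{d}{dx}+\frac{2c_\lambda+\lambda}{c_\lambda}$ leading to $(\mathrm{id}+\lambda\hat A_1)g=0$, the same spectral argument $\mathrm{spec}(\hat A_\mu)=[0,\pi]$, and the same Euler-integral evaluation forcing $\sin(\alpha_\lambda\pi)=\lambda\pi$. Nothing essential is missing.
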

\noindent
The proof presented here was not the way how we found the solution, however it is for a reader the most 
transparent one. The perturbative analysis 
outsourced to the \sref{App.}{App:Solv2} provided the first more natural idea. 
We compared the perturbative result of \eqref{cottauba}
with the perturbative result of \eqref{Fred}.
The details came by using 
the Maple package \textsc{HyperInt} \cite{Panzer:2014caa}, which computed iterated integral in a 
symbolical way. Up to the $10^\text{th}$ order in $\lambda$, two coupled differential equations 
were found which gave a conjectural solution. This conjectural solution was then proved with the 
Meijer G-function which can be found in \sref{App.}{App:Meijer}. The easy proof above was realised 
after these discoveries.

\subsubsection*{Effective Spectral Dimension}
Let $\varrho_0(x)dx$ be the spectral measure of the
operator $E$ in the initial action of the noncommutative $\Phi^4_4$ model. The main discovery
of \sref{Sec.}{sec:tau} was that the interaction
$\frac{\lambda}{4} \mathrm{Tr}(\Phi^4)$ effectively modifies the
spectral measure to $\varrho_\lambda(x)dx$. What before, when expressed
in terms of $\varrho_0(x)dx$, was intractable became suddenly exactly
solvable in terms of the deformation $\varrho_\lambda(x)dx$.
For 4-dimensional Moyal space, one has
$\varrho_0(x)=x$ and $\varrho_\lambda(x)=R_4(x)$. The explicit solution
\eqref{Jx-final} shows that the deformation is
drastic: it changes the spectral dimension $\D$
defined in \sref{Definition}{Def:Spec} to an effective spectral dimension
$\D_\lambda:=\inf\{p\;:~ \int_0^\infty dt \;\frac{\varrho_\lambda(t)}{(1+t)^{p/2}}
<\infty\}$.

\begin{lemma}
\label{lem:bound}
For any $|\alpha_\lambda|<\frac{1}{2}$ one has
\[
\frac{1}{(1+x)^{\alpha_\lambda}} \leq
{}_2F_1\Big(\genfrac{}{}{0pt}{}{\alpha_\lambda,\;1{-}\alpha_\lambda}{2}
\Big| -x\Big)
\leq \frac{\Gamma(1-2\alpha_\lambda)}{
\Gamma(2-\alpha_\lambda)\Gamma(1-\alpha_\lambda)}
\frac{1}{(1+x)^{\alpha_\lambda}}\;.
\]
\end{lemma}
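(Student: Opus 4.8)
\emph{The plan is to} reduce the two-sided estimate, via a Pfaff transformation, to an $x$-independent pair of bounds on a hypergeometric series with nonnegative coefficients. Write $\alpha:=\alpha_\lambda$ and work in the relevant regime $0\le\alpha<\tfrac12$ (the real branch, i.e.\ $0\le\lambda\pi<1$). For $x\ge 0$ put $u:=\tfrac{x}{1+x}\in[0,1)$. The standard Pfaff transformation ${}_2F_1(a,b;c;z)=(1-z)^{-a}\,{}_2F_1(a,c{-}b;c;\tfrac{z}{z-1})$ with $a=\alpha$, $b=1-\alpha$, $c=2$, $z=-x$ gives
\[
{}_2F_1\Big(\genfrac{}{}{0pt}{}{\alpha,\,1-\alpha}{2}\Big|{-}x\Big)
=(1+x)^{-\alpha}\,{}_2F_1\Big(\genfrac{}{}{0pt}{}{\alpha,\,1+\alpha}{2}\Big|u\Big).
\]
First I would check this identity: both sides are holomorphic on a neighbourhood of $[0,\infty)$, so it suffices to compare power series at $x=0$. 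Consequently the lemma is equivalent to
\[
1\le {}_2F_1\Big(\genfrac{}{}{0pt}{}{\alpha,\,1+\alpha}{2}\Big|u\Big)\le \frac{\Gamma(1-2\alpha)}{\Gamma(2-\alpha)\Gamma(1-\alpha)},\qquad u\in[0,1),
\]
since $u\mapsto x/(1+x)$ is a bijection $[0,\infty)\to[0,1)$.

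Next I would prove these two bounds. Let $g(u):={}_2F_1(\alpha,1+\alpha;2;u)$, with Taylor coefficients $\frac{(\alpha)_n(1+\alpha)_n}{(2)_n\,n!}$. For $0\le\alpha<\tfrac12$ these are all nonnegative: $(\alpha)_n\ge 0$ because $\alpha\ge 0$, while $(1+\alpha)_n>0$ and $(2)_n\,n!>0$ for all $n$. Hence $g$ is nondecreasing on $[0,1)$ with $g(0)=1$, which is the lower bound. For the upper bound, $\mathrm{Re}(c-a-b)=1-2\alpha>0$ by hypothesis, so Gauss's summation theorem applies and
\[
\lim_{u\to 1^-}g(u)=g(1)=\frac{\Gamma(2)\Gamma(1-2\alpha)}{\Gamma(2-\alpha)\Gamma(1-\alpha)}=\frac{\Gamma(1-2\alpha)}{\Gamma(2-\alpha)\Gamma(1-\alpha)};
\]
monotonicity of $g$ (the convergence $g(u)\to g(1)$ being guaranteed by Abel's theorem, all terms being positive) then upgrades this limit to $g(u)\le g(1)$ for every $u\in[0,1)$. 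Substituting back through the Pfaff identity and $u=x/(1+x)$ finishes the estimate for all $x\ge0$.

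\emph{The only genuinely delicate point} is the range of validity: the argument uses $\alpha\ge 0$ essentially (nonnegativity of $(\alpha)_n$). For $\alpha<0$ the coefficients of $g$ alternate in sign and the lower bound as written fails near $x=0$, so the lemma is to be read under the standing hypothesis $\alpha_\lambda\ge 0$, which is precisely the regime $\lambda\ge 0$ in which the quartic model is being solved in \sref{Section}{Sec.4dSol}. Apart from that, the proof is routine: it consists of the Pfaff identity, positivity of a hypergeometric series, and the Gauss evaluation at $u=1$, the latter requiring exactly the assumed inequality $1-2\alpha>0$. (An alternative, slightly less sharp, route is to estimate the Euler integral representation ${}_2F_1(\alpha,1-\alpha;2;-x)=\frac{1}{\Gamma(1-\alpha)\Gamma(1+\alpha)}\int_0^1 t^{-\alpha}(1-t)^{\alpha}(1+xt)^{-\alpha}\,dt$ using $(1+x)^{-\alpha}\le(1+xt)^{-\alpha}\le1$ for $t\in[0,1]$, $\alpha\ge0$; but this does not directly produce the stated upper constant, so the Pfaff reduction is preferable.)
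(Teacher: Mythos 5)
Your proof is correct and is essentially the paper's argument: the paper applies the Euler/Pfaff transformation to write the function as $(1+x)^{-\alpha_\lambda}$ times ${}_2F_1\big(\genfrac{}{}{0pt}{}{2-\alpha_\lambda,\,1-\alpha_\lambda}{2}\big|\,u\big)\big(1-u\big)^{1-2\alpha_\lambda}$ with $u=\tfrac{x}{1+x}$, which by Euler's identity is exactly your $g(u)={}_2F_1\big(\genfrac{}{}{0pt}{}{\alpha_\lambda,\,1+\alpha_\lambda}{2}\big|\,u\big)$, and then concludes that this factor increases from $1$ to the Gauss value $\tfrac{\Gamma(1-2\alpha_\lambda)}{\Gamma(2-\alpha_\lambda)\Gamma(1-\alpha_\lambda)}$. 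The only difference is that the paper cites a monotonicity theorem of Ponnusamy--Vuorinen for this last step, whereas you establish it directly from the nonnegativity of the Taylor coefficients together with Gauss summation, which is self-contained and arguably preferable. Your caveat is also well taken: for $\alpha_\lambda<0$ the coefficients of $g$ are nonpositive for $n\geq 1$, so $g$ decreases and the two inequalities swap (with the constants exchanged); the lemma as stated, and the paper's cited monotonicity, really hold only for $0\leq\alpha_\lambda<\tfrac12$, though the corollary on the effective spectral dimension is unaffected since one still gets two-sided bounds by constant multiples of $(1+x)^{-\alpha_\lambda}$.
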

\noindent
\emph{Proof.} We transform with \cite[\S 9.131.1]{gradshteyn2007} to
\[
{}_2F_1\Big(\genfrac{}{}{0pt}{}{\alpha_\lambda,\;1{-}\alpha_\lambda}{2}
\Big| -x\Big)
=\Big(\frac{1}{1+x}\Big)^{\alpha_\lambda}\;
\frac{\displaystyle
{}_2F_1\Big(\genfrac{}{}{0pt}{}{2-\alpha_\lambda,\;1{-}\alpha_\lambda}{2}
\Big| \frac{x}{1+x}\Big) }{\displaystyle
\Big(1-\frac{x}{1+x}\Big)^{2\alpha_\lambda-1}}.
\]
By \cite[Thm. 1.10]{ponnusamy_vuorinen_1997}, the fraction on the rhs
is strictly increasing from 1 at $x=0$ to its limit
$\frac{B(2,1-2\alpha_\lambda)}{B(2-\alpha_\lambda,1-\alpha_\lambda)}
=\frac{\Gamma(1-2\alpha_\lambda)}{
\Gamma(2-\alpha_\lambda)\Gamma(1-\alpha_\lambda)}$
for $x\to \infty$. \hspace*{\fill}$\square$%

\begin{cor}
For $|\lambda|<\frac{1}{\pi}$, the deformed measure
$\varrho_\lambda=R_4$ of 4-dimensional Moyal space has spectral dimension
$\D_\lambda=4-2\frac{\arcsin(\lambda\pi)}{\pi}$.
\end{cor}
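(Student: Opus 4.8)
The plan is to read off the spectral dimension directly from the decay rate of $\varrho_\lambda(x) = R_4(x) = x\,{}_2F_1\bigl(\genfrac{}{}{0pt}{}{\alpha_\lambda,\,1-\alpha_\lambda}{2}\big|-\tfrac{x}{\mu^2}\bigr)$ as $x\to\infty$, using the two-sided bound from \sref{Lemma}{lem:bound}. Since $\mu^2$ is a fixed positive constant, the bound gives, for $|\lambda|<\frac{1}{\pi}$ (so that $|\alpha_\lambda|<\frac12$), constants $0<c_1\leq c_2$ with
\begin{align*}
  c_1\,\frac{x}{(1+x/\mu^2)^{\alpha_\lambda}} \leq \varrho_\lambda(x) \leq c_2\,\frac{x}{(1+x/\mu^2)^{\alpha_\lambda}}
\end{align*}
for all $x\geq 0$. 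Hence $\varrho_\lambda(x)$ behaves asymptotically like a constant times $x^{1-\alpha_\lambda}$.

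Next I would plug this into the defining integral for $\D_\lambda$. We have $\int_0^\infty dt\,\frac{\varrho_\lambda(t)}{(1+t)^{p/2}} \asymp \int_1^\infty dt\, t^{1-\alpha_\lambda-p/2}$, which converges precisely when $1-\alpha_\lambda - p/2 < -1$, i.e. $p > 4 - 2\alpha_\lambda$; the integrability near $t=0$ is automatic since $\varrho_\lambda$ is continuous there. Taking the infimum over admissible $p$ as in \sref{Definition}{Def:Spec} therefore yields $\D_\lambda = 4 - 2\alpha_\lambda$. Substituting $\alpha_\lambda = \frac{\arcsin(\lambda\pi)}{\pi}$ from \sref{Proposition}{Prop:Jx-final} gives exactly $\D_\lambda = 4 - 2\frac{\arcsin(\lambda\pi)}{\pi}$, as claimed.

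The only genuinely delicate point is making the "$\asymp$" rigorous at the boundary $p = 4 - 2\alpha_\lambda$: there the comparison integral diverges logarithmically, so one must confirm that the full integral also diverges there (the lower bound handles this) while converging for every $p$ strictly larger (the upper bound handles this), which is precisely what an infimum-definition needs. This is routine once the two-sided bound of \sref{Lemma}{lem:bound} is in hand, so there is no real obstacle; the substance of the corollary is entirely carried by that lemma and by the explicit value of $\alpha_\lambda$, and the proof is essentially a two-line estimate.
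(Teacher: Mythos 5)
Your proposal is correct and follows exactly the paper's own route: the paper's proof of this corollary consists precisely of invoking \sref{Lemma}{lem:bound} together with $\varrho_\lambda=R_4$ and the explicit hypergeometric form \eqref{Jx-final}, so that $\varrho_\lambda(x)\asymp x^{1-\alpha_\lambda}$ and the infimum in \sref{Definition}{Def:Spec} evaluates to $4-2\alpha_\lambda$. Your additional remarks on the boundary case $p=4-2\alpha_\lambda$ only make explicit what the paper leaves implicit.
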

\noindent
\emph{Proof.}
Lemma~\ref{lem:bound} together with $\varrho_\lambda(x)=R_4(x)$
and \eqref{Jx-final} gives the assertion.
\hspace*{\fill}$\square$%

\bigskip\noindent
The change of spectral dimension is important. If instead of \eqref{Fred}
the function $R_4$ was given by
$\tilde{R}(x)=x-\lambda x^2 \int_0^\infty dt\; \frac{\varrho_0(t)}{
(t+\mu^2)^2(t+\mu^2+x)}$, then for $\varrho_0(x)=x$
this function $\tilde{R}$ is bounded above. Hence, $\tilde{R}^{-1}$
needed in higher topological sectors could not exist globally on
$\mathbb{R}_+$, which would render the model inconsistent for any $\lambda>0$.
The dimension drop down to $\D_\lambda=4-2\frac{\arcsin(\lambda\pi)}{\pi}$
avoids this (triviality) problem.

\subsubsection*{Perturbative Expansion of the 2-Point Function}
For a perturbative expansion, the exact solution of the planar 2-point function via \sref{Theorem}{prop:HT} and 
\sref{Proposition}{Prop:Jx-final} is not practical to expand in small $\lambda$. The expression 
\eqref{Gab-ansatz} is much more convenient for a perturbative analysis, where the measure is $\varrho_0(a)=a$ and the 
field renormalisation constant $Z$ is chosen to satisfy $G(0,0)=1$.
The angle function $\tau_b$ is derived perturbatively from \eqref{cottauba}, where the mass renormalisation is taken by 
\begin{align*}
 \mu_{bare}^2=1-\lambda \Lambda^2-\frac{1}{\pi} \int_0^{\Lambda^2} dt\, \tau_0(t).
\end{align*}
The computations are first done for finite $\Lambda^2$ where the limit is obviously convergent. 
All details about the perturbative expansion to higher order is discussed very detailed in 
\sref{App.}{App:Solv2}. We present 
here only the first two orders which are 
\begin{align*}
 p\lambda\pi\cot(\tau_a(p))&=1+a+p+\lambda\left((1+p)\log(1+p)-p\log(p)\right)\\
 &+\lambda^2\left(-p\zeta_2 +(1+p)\log(1+p)^2+(1+2p)\mathrm{Li}_2(-p)\right)+\mathcal{O}(\lambda^3)
\end{align*}
and after inserting in \eqref{Gab-ansatz}
\begin{align}\nonumber
 G(a,b)=&\frac{1}{1+a+b}-\frac{\lambda}{(1+a+b)^2}\{(1+a)\log(1+a)+(1+b)\log(1+b)\}\\\nonumber
 &+\frac{\lambda^2}{(1+a+b)^3}\{\zeta_2 ab+(1+a)(1+b)\log(1+a)\log(1+b)\\\label{eq:quartG2}
&\qquad  -a(1+b)\log(1+b)^2-b(1+a)\log(1+a)^2\\\nonumber
 &\qquad -(1+b+2a+2ab+a^2)\mathrm{Li}_2(-a)-(1+a+2b+2ab+b^2)\mathrm{Li}_2(-b)\}\\
 &+\mathcal{O}(\lambda^3).\nonumber
\end{align}
This result coincides only with the perturbative expansion through Feynman 
graphs 
for \textit{hyperlogarithms with two letters} (see \sref{App.}{App:Pert} for the definition of hyperlogarithms) 
according to the computations in \sref{App.}{App:PertQuartic}. 
Since the boundary conditions for Zimmermann's forest formula are different
both expansion cannot coincide completely.
It is hard to adjust the forest formula to obey $\frac{\partial}{\partial a}^{(0)}G(a,0)\vert_{a=0}
=-1-\lambda+\lambda^2+\mathcal{O}(\lambda^3)$ which is the right boundary condition for \eqref{eq:quartG2}.

On the other hand, it is also possible to change $\mu^2\neq \frac{\alpha_\lambda (1-\alpha_\lambda)}{\lambda}$ in 
\sref{Theorem}{prop:HT} and 
\sref{Proposition}{Prop:Jx-final} such 
that the 
condition $\frac{\partial}{\partial a}^{(0)}G(a,0)\vert_{a=0}
=-1$ is obeyed, which includes recursively some work. One would determine for arbitrary $\mu^2(\lambda)$ 
the functions $R_4$, $R_4^{-1}$ and
$I(w)$ in an expansion in $\lambda$ to get the angle function $\tau_b(a)$ as an expansion (depending on $\mu^2$).
Inserting then 
the expanded angle function in \eqref{Gab-ansatz} such that $G(0,0)=1$ and  $\frac{\partial}{\partial a}^{(0)}G(a,0)\vert_{a=0}
=-1$ holds order by order,
fixes $\mu^2$ at the first orders in $\lambda$. This is done more explicitly in \sref{App.}{App:PertQuartic}, where 
it is shown that for the same boundary conditions both approaches coincide perfectly!

\begin{rmk}\label{rmk:renorm1}
The quartic model on the $D=4$ Moyal space admits the 
renormalon problem which generates \textit{no problem} 
for the exact formula.
Determining the amplitude of the Feynman graph below according to the Feynman rules together with Zimmermann's forest formula
gives
\begin{align*}
 \frac{(-\lambda)^{3+2n}}{(1+2a)^6}\int_0^\infty y\,dy\frac{\big((1+y)\log(1+y)-y\big)^n}{(1+a+y)^{3+n}}
 \sim \,  \frac{(-\lambda)^{3+2n}}{ (1+2a)^6}\underbrace{\int_R^\infty \frac{dy}{y^2}\log(y)^n}_{\sim  n!}.
\end{align*}
\hspace*{20ex}
\def\svgwidth{0.5\textwidth}
\begingroup%
  \makeatletter%
  \providecommand\color[2][]{%
    \errmessage{(Inkscape) Color is used for the text in Inkscape, but the package 'color.sty' is not loaded}%
    \renewcommand\color[2][]{}%
  }%
  \providecommand\transparent[1]{%
    \errmessage{(Inkscape) Transparency is used (non-zero) for the text in Inkscape, but the package 'transparent.sty' is not loaded}%
    \renewcommand\transparent[1]{}%
  }%
  \providecommand\rotatebox[2]{#2}%
  \ifx\svgwidth\undefined%
    \setlength{\unitlength}{183.27583008bp}%
    \ifx\svgscale\undefined%
      \relax%
    \else%
      \setlength{\unitlength}{\unitlength * \real{\svgscale}}%
    \fi%
  \else%
    \setlength{\unitlength}{\svgwidth}%
  \fi%
  \global\let\svgwidth\undefined%
  \global\let\svgscale\undefined%
  \makeatother%
  \begin{picture}(1,0.44109824)%
    \put(0,0){\includegraphics[width=\unitlength,page=1]{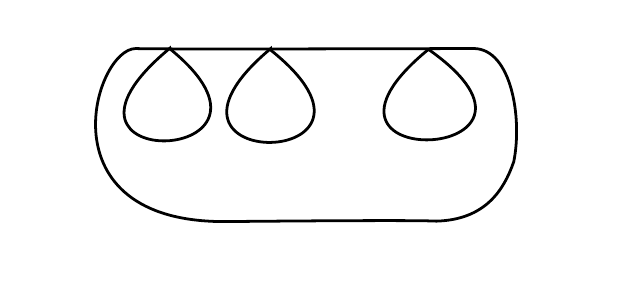}}%
    \put(0.5064805,0.29470261){\color[rgb]{0,0,0}\makebox(0,0)[lb]{\smash{...}}}%
    \put(0.2355266,0.27287758){\color[rgb]{0,0,0}\makebox(0,0)[lb]{\smash{$y_1$}}}%
    \put(0.39870971,0.26975602){\color[rgb]{0,0,0}\makebox(0,0)[lb]{\smash{$y_2$}}}%
    \put(0.64050068,0.2650791){\color[rgb]{0,0,0}\makebox(0,0)[lb]{\smash{$y_n$}}}%
    \put(0,0){\includegraphics[width=\unitlength,page=2]{Renormalon1.pdf}}%
    \put(0.41223765,0.39135252){\color[rgb]{0,0,0}\makebox(0,0)[lb]{\smash{$a$}}}%
    \put(0.43316292,0.14504152){\color[rgb]{0,0,0}\makebox(0,0)[lb]{\smash{$y$}}}%
    \put(0.18820989,0.05483218){\color[rgb]{0,0,0}\makebox(0,0)[lb]{\smash{$a$}}}%
    \put(0.41893156,0.00962305){\color[rgb]{0,0,0}\makebox(0,0)[lb]{\smash{$a$}}}%
    \put(0.88193385,0.20604836){\color[rgb]{0,0,0}\makebox(0,0)[lb]{\smash{$a$}}}%
    \put(-0.00543494,0.22467763){\color[rgb]{0,0,0}\makebox(0,0)[lb]{\smash{$a$}}}%
    \put(0,0){\includegraphics[width=\unitlength,page=3]{Renormalon1.pdf}}%
    \put(0.63060868,0.03760601){\color[rgb]{0,0,0}\makebox(0,0)[lb]{\smash{$a$}}}%
  \end{picture}%
\endgroup%
\\
\end{rmk}

\section{Higher Order Correlation Functions}\label{Sec.quartHO}
The 2-point function is the starting point in solving the entire hierarchy of all correlation functions
for the quartic model.
We have seen that the SDEs of \sref{Proposition}{Prop:quart11P} and 
\sref{Proposition}{Prop:quart21P} have when 
using $\mathcal{T}_q$ defined in \sref{Definition}{Def:Dp} a universal structure of the form
\begin{align}\label{K1}
 \hat{K}^1_p G^{(g)}_{|p|\mathcal{J}|}=&g^{g,\mathcal{J}}_{inh}\\\label{K2}
 \hat{K}^2_{q_1} G^{(g)}_{|q_1q_2|\mathcal{J}|}=&g^{g,\mathcal{J}}_{inh},
 \end{align}
 where the operators $\hat{K}^i$ are
 \begin{align*}
 \hat{K}^1_p f(p):=&f(p)\bigg\{H_{pp}+\frac{\lambda}{V} \sum_{n=0}^\mN\bigg(\frac{1}{E_n-E_p}+G^{(0)}_{|np|}\bigg)\bigg\}-
 \frac{\lambda}{V}\sum_{n=0}^\mN\frac{f(n)}{E_n-E_p}\\
 \hat{K}^2_{q_1} f(q_1,q_2):=&f(q_1,q_2)\bigg\{H_{q_1q_2}+\frac{\lambda}{V} \sum_{n=0}^\mN
 \bigg(\frac{1}{E_n-E_{q_1}}+G^{(0)}_{|nq_1|}\bigg)\bigg\}-
 \frac{\lambda}{V}\sum_{n=0}^\mN\frac{f(n,q_2)}{E_n-E_{q_1}}
\end{align*}
and $g^{g,\mathcal{J}}_{inh}$ is a inhomogeneity of less topology, i.e. a larger Euler characteristic than 
$G^{(g)}$. On the other hand, we have used the important identity \eqref{Jzz} for the proof of \sref{Theorem}{thm:main} 
mentioned in \sref{Remark}{rmk:2p}, which implies with the definition of $R(z)$ at $z=\varepsilon_p$
\begin{align*}
 E_p+\frac{\lambda}{V} \sum_{n=0}^\mN\bigg(\frac{1}{E_n-E_p}+G^{(0)}_{|np|}\bigg)=-R(-R^{-1}(E_p)),
\end{align*}
where $R^{-1}(z)$ is the principal branch with $R(\varepsilon_p)=E_p$ due to \sref{Lemma}{lem:U}. The 
expression makes sense for $n=p$ after inserting it into $\hat{K}^1$ and $\hat{K}^2$ and assuming a differentiable 
interpolation between the discrete point $E_p$. It is natural to pass from $G^{(0)}_{..}\to\G^{(0)}(..)$ by
\begin{align}
 G^{(g)}_{|p_1^1..p_{N_1}^1|..|p_1^b..p_{N_b}^b|}=:\G^{(g)}(\varepsilon_{p_1^1},
 \varepsilon_{p_2^1},..,\varepsilon_{p_{N_1}^1}|..| \varepsilon_{p_1^b},
 ..,\varepsilon_{p_{N_b}^b}),
\end{align}
where $\varepsilon_p=R^{-1}(E_p)$ due to \sref{Theorem}{thm:main}.

Assuming that $\G^{(g)}$ can be analytically continued except of for some particular points (poles)
yields for the SDEs \eqref{K1} and \eqref{K2}
\begin{align}\label{KK1}
 &(R(z)-R(-z))\G^{(g)}(z|\tilde{\mathcal{J}})-\frac{\lambda}{V}\sum_{n=0}^{\mN'}
 r_n\frac{\G^{(g)}(\varepsilon_n|\tilde{\mathcal{J}})}{R(\varepsilon_n)-R(z)}=g^{g,\tilde{\mathcal{J}}}_{inh}\\\label{KK2}
 &(R(z)-R(-w))\G^{(g)}(z,w|\tilde{\mathcal{J}})-\frac{\lambda}{V}\sum_{n=0}^{\mN'}
 r_n\frac{\G^{(g)}(\varepsilon_n,w|\tilde{\mathcal{J}})}{R(\varepsilon_n)-R(z)}=g^{g,\tilde{\mathcal{J}}}_{inh},
\end{align}
where $\tilde{\mathcal{J}}$ is a set of some complex numbers.

Comparing with \sref{Definition}{defint} and \eqref{eq:CubicSchwingerComplex} of the cubic model
implies exactly the same structure for the SDE \eqref{KK1}, 
where the base point is taken from the boundary of length one.
The SDE \eqref{KK2}, where the base point is taken from the boundary of length two, 
shows a new structure of more complexity. 

Formally, all correlation functions can be derived recursively by inverting the equations \eqref{KK1} and \eqref{KK2}. 
Take the planar $(1+1)$-point function as an example. It obeys with 
$g=0$ and $\mathcal{J}=\{w\}$ for  \eqref{KK1}
the equation 
\begin{align}
\big( R(z)-R(-z)\big) \mathcal{G}^{(0)}(z|w)
- \frac{\lambda}{V} \sum_{k=0}^{\mN'}
\frac{r_k\mathcal{G}^{(0)}(\varepsilon_k|w) }{
R(\varepsilon_k)-R(z)}
&=
\lambda \frac{\mathcal{G}^{(0)}(z,w)-\mathcal{G}^{(0)}(w,w)}{
R(z)-R(w)}\;.
\label{calG11:GW}
\end{align}
Let  $z\in \{0,\pm \alpha_0,\dots,\pm \alpha_{\mN'}\}$ 
be the solutions of $R(z)-R(-z)=0$, with all $\alpha_k>0$. When treating 
(\ref{calG11:GW}) as a Carleman-type singular integral equation as in 
 \cite{Grosse:2012uv}, it is clear that $\mathcal{G}^{(0)}(\alpha_k|w)$ 
is regular for all $w>0$. Therefore, setting $z\mapsto \alpha_k$ gives 
a system of $k$ affine equations 
\begin{align}
\frac{1}{V} \sum_{l=0}^{\mN'}
\frac{r_l\mathcal{G}^{(0)}(\varepsilon_l|w) }{
R(\alpha_k)-R(\varepsilon_l)}
&=
\frac{\mathcal{G}^{(0)}(\alpha_k,w)-\mathcal{G}^{(0)}(w,w)}{
R(\alpha_k)-R(w)}\;,\qquad k=0,\dots,{\mN'}\;.
\end{align}
They are easily solved by the inverse Cauchy matrix \cite{Schurmann:2019mzu} for 
$\frac{r_l}{V} 
\mathcal{G}^{(0)}(\varepsilon_l|w)$, $l=0,\dots,{\mN'}$, in terms of 
the planar 2-point function $\mathcal{G}^{(0)}(z,w)$, which are already known. 
Moreover, since $\mathcal{G}^{(0)}(z,w)$ depends on $z$ only via $R(\pm z)$, 
setting instead $z\mapsto -\alpha_k$ in (\ref{calG11:GW}) gives the same 
$\mathcal{G}^{(0)}(\varepsilon_l|w)$. With these 
$\mathcal{G}^{(0)}(\varepsilon_l|w)$ determined, (\ref{calG11:GW})
gives the explicit formula for $\mathcal{G}^{(0)}(z|w)$:
\begin{align}
\mathcal{G}^{(0)}(z|w)
&=\frac{\displaystyle
 \frac{\lambda}{V} \sum_{k=0}^{\mN'}
\frac{r_k\mathcal{G}^{(0)}(\varepsilon_k|w) }{
R(\varepsilon_k)-R(z)}
+
\lambda \frac{\mathcal{G}^{(0)}(z,w)-\mathcal{G}^{(0)}(w,w)}{
R(z)-R(w)}
}{R(z)-R(-z)} \;.
\label{G11-sol}
\end{align}
The global denominator 
$R(z)-R(-z)$ introduces a pole only at $z=0$, but not at $z=\pm \alpha_k$.

This procedure can be applied recursively for \eqref{KK1} and \eqref{KK2} (see \cite{Schurmann:2019mzu} for more 
information). However, the procedure does not capture the analytic structure of the correlation function. The poles 
which characterise the analytic structure are not revealed. It is known for instance that the $(1+1)$-point function is
symmetric $\mathcal{G}^{(0)}(z|w)=\mathcal{G}^{(0)}(w|z)$ which is hardly to check through the representation 
\eqref{G11-sol}. 

Comparing to the cubic model and its link to topological recursion (see \sref{Sec.}{sec.linkcub})
it is natural to conjecture also here a connection to topological recursion through the identity \eqref{Jzz} and the 
all SDEs. The spectral curve is conjecturally a rational plane algebraic curve given by
\begin{align}\label{spec:quart}
 \mathcal{E}(x(z),y(z))=0\qquad \text{with}\qquad x(z)=R(z)\,,\quad y(z)=-R(-z)
\end{align}
with the corresponding Riemann surface $\hat{\C}=\C\cup\{\infty\}$. The rational function $R(z)$ was defined in 
\sref{Theorem}{thm:main} implicitly by the eigenvalues $e_k$ of the external matrix $E$ together with
its corresponding multiplicities
$r_k$. However, it is not yet clear how the meromorphic $n$-forms $\omega_{g,n}$
coming from the spectral curve $\mathcal{E}(x(z),y(z))$, $z\in\hat{\C}$, the 1-form $\omega_{0,1}(z)=y(z)\,dx(z)$ 
and the 2-form $\omega_{0,2}(z_1,z_2)=\frac{dz_1\,dz_2}{(z_1-z_2)^2}$ through topological recursion \cite{Eynard:2007kz,
Eynard:2016yaa},
are linked to the correlation
functions of the quartic matrix field theory model. A detailed analysis of the pole structure of $\omega_{g,n}$ can bring more 
light into the dark,
but the state of not knowing the branch point $dx(z)=0$ explicitly makes the approach of topological recursion more
difficult. On the other hand, applying the procedure of inverse Cauchy matrices yields also inconvenient results.
Studying the connection between the quartic matrix field theory model and topological recursion is work in progress.

\section[Explicit Form of the Planar Recursive Equation]
{Explicit Form of the Planar Recursive Equation\footnote{This section is
taken from our paper \cite{DeJong}}}\label{Sec.quartRec}
This section studies the recursive equation of the planar $N$-point function with one boundary component
in more detail
\begin{equation}
G^{(0)}_{p_{0}\ldots p_{N-1}}=-\lambda\sum_{l=1}^{\frac{N-2}{2}}
\frac{G^{(0)}_{p_{0}\ldots p_{2l-1}}\cdot G^{(0)}_{p_{2l}\ldots p_{N-1}} 
- G^{(0)}_{p_{1}\ldots p_{2l}}\cdot G^{(0)}_{p_{0}p_{2l+1}\ldots p_{N-1}}}{
(E_{p_{0}}-E_{p_{2l}})(E_{p_{1}}-E_{p_{N-1}})}\;.\label{e:rr}
\end{equation}
which was given by
\sref{Example}{Exm:quartRec}.
We will omit the superscript of $G^{(0)}$, which indicates the genus, and let $\lambda=-1$, which is for the analysis 
of the recursion irrelevant, for the rest of the section.

Interestingly, the same relation \eqref{e:rr} appears in the planar
sector of the 2-matrix model for mixed correlation functions
\cite{Eynard:2005iq}.  The distinction between even $b_{2i}$ and odd
$b_{2i+1}$ matrix indices in \eqref{e:rr} corresponds to the different
matrices of the 2-matrix model.

The complete expression for the ($N=2k+2$)-point function 
$G_{p_{0}p_{1}\ldots p_{2k+1}}$ 
according to (\ref{e:rr}) yields $2^kc_k$ terms of the form
\begin{equation}
\frac{\pm G_{p_{i}p_{j}}\cdots G_{p_{l}p_{u}}}{
(E_{p_{n}}-E_{p_{m}})\cdots(E_{p_{v}}-E_{p_{w}})}
\label{expansion}
\end{equation}
with $i<j$, $l<u$, $n<m$ and $v<w$, where
$c_k=\frac{1}{k+1}\binom{2k}{k}$ is the $k$th Catalan number.
However, some of the terms cancel. 
We will answer the questions: \emph{Which terms 
survive the cancellations? Can they be explicitly characterised, without 
going into the recursion?} The answer will be encoded in \emph{Catalan
tables}.

First we discuss the symmetries of $G_{p_{0}\ldots p_{N-1}}$ induced by \eqref{e:rr} and not by its definition as an 
expectation value (which actually are the same). Then, we will introduce \textit{Catalan tuples} and \textit{Catalan tables}, 
certain trees and operations on them.
The Catalan numbers
$c_{k}=\frac{1}{k+1}\binom{2k}{k}$ will count various parts of our
results and will be related to the number
$d_{k}=\frac{1}{k+1}\binom{3k+1}{k}$ of 
Catalan tables of length $k+1$, see \sref{Proposition}{thrm:NrT}.

The main part is
\sref{Theorem}{thrm:RC} that Catalan tables precisely encode the 
surviving terms in the 
expansion of $G_{p_{0}\ldots p_{N-1}}$ with 
specified designated node. 

Both the Catalan tables and the $G_{p_{0}\ldots p_{N-1}}$ can be depicted conveniently 
as chord diagrams with threads, which is discussed in \sref{App.}{App:Chord}. 
Through these diagrams it will become clear that the recursion 
relation (\ref{e:rr}) is related to well-known combinatorial
problems~\cite{MR1909861,MR1603749}.

\subsection{Symmetries\label{sec:sym}}

\sref{Theorem}{prop:HT} proves the symmetry of the 2-point function, $G_{p_{i}p_{j}}=G_{p_{j}p_{i}}$. 
Because there is an even number of antisymmetric factors in 
the denominator of each term, it follows immediately that
\begin{equation}
G_{p_{0}p_{1}\ldots p_{N-1}}=G_{p_{N-1}\ldots p_{1}p_{0}}\;.\label{e:rr_s}
\end{equation}
Our aim is to prove cyclic invariance $G_{p_{0}p_{1}\ldots
  p_{N-1}}=G_{p_{1}\ldots p_{N-1}b_0}$.
We proceed by induction. Assuming that all $n$-point functions 
with $n\leq N-2$ are cyclically invariant, it is not difficult to check that
\begin{align}
&\hspace{-8mm}G_{p_{0}p_{1}\ldots p_{N-1}}
=\sum_{l=1}^{\frac{N-2}{2}}\frac{G_{p_{0}\ldots p_{2l-1}}\cdot 
G_{p_{2l}\ldots p_{N-1}} 
- G_{p_{1}\ldots p_{2l}}\cdot G_{p_{0}p_{2l+1}\ldots  p_{N-1}}}{
(E_{p_{0}}-E_{p_{2l}})(E_{p_{1}}-E_{p_{N-1}})}\nonumber
\\
&=-\sum_{l=1}^{\frac{N-2}{2}}\frac{G_{p_{0}p_{N-1}\ldots
    p_{2l+1}}\cdot 
G_{p_{2l}\ldots p_{1}}-G_{p_{N-1}\ldots p_{2l}}
\cdot G_{p_{2l-1}\ldots p_{1}p_{0}}}{
(E_{p_{0}}-E_{p_{2l}})(E_{p_{1}}-E_{p_{N-1}})}\nonumber
\\
&=\sum_{k=1}^{\frac{N-2}{2}}\frac{G_{p_{0}p_{N-1}\ldots
    p_{N-2k+1}}\cdot G_{p_{N-2k}\ldots p_{1}} 
- G_{p_{N-1}\ldots p_{N-2k}}\cdot 
G_{p_0 p_{N-2k-1}\ldots p_{1}}}{
(E_{p_{0}}-E_{p_{N-2k}})(E_{p_{N-1}}-E_{p_{1}})}\nonumber
\\
&=G_{p_{0}p_{N-1}\ldots p_{1}}
=G_{p_{1}\ldots p_{N-1}p_0}\;.\label{e:rr_rot}
\end{align}
The transformation $2l=N-2k$ and the symmetry (\ref{e:rr_s}) are 
applied here to rewrite the sum. This shows cyclic invariance.

Although the $N$-point functions are invariant under a cyclic
permutation of its indices, the preferred expansion into surving terms 
(\ref{expansion}) will depend on the choice of a designated node
$p_{0}$, the \textit{root}. Our preferred expansion will have a clear combinatorial
significance, but it cannot be unique because of 
\begin{align}
\hspace*{-1em}
\frac{1}{E_{p_i}{-}E_{p_j}}\cdot\frac{1}{E_{b_j}{-}E_{b_n}}
+\frac{1}{E_{b_n}{-}E_{b_i}}\cdot\frac{1}{E_{b_i}{-}E_{b_j}}
+\frac{1}{E_{b_j}{-}E_{b_n}}\cdot\frac{1}{E_{b_n}{-}E_{b_i}}=0\;.
\label{rfp}
\end{align}
These identities must be employed several times to establish cyclic 
invariance of our preferred expansion.

\subsection{Catalan Tuples\label{sec:CT}}

\begin{dfnt}[Catalan tuple]\label{dfnt:cattup}
  A Catalan tuple $\tilde{e}=(e_{0},\ldots,e_{k})$ of length
  $k\in\mathbb{N}_{0}$ is a tuple of integers $e_{j}\geq0$ for
  $j=0,\ldots,k$, such that
\begin{equation*}
\sum_{j=0}^{k}e_{j}=k\qquad \text{ and }\qquad 
\sum_{j=0}^{l}e_{j}>l\quad\text{for }l=0,\ldots,k-1\;.
\end{equation*}
The set of Catalan tuples of length $|\tilde{e}|:=k$ is denoted by
$\mathcal{C}_{k}$. 
\end{dfnt}
\noindent
For $\tilde{e}=(e_{0},\ldots,e_{k})$
it follows immediately that $e_{k}=0$ and $e_{0}>0$, if $k>0$. 

\begin{exm}
We have $\mathcal{C}_{0}=\{(0)\}$, $\mathcal{C}_{1}=\{(1,0)\}$ and 
$\mathcal{C}_{2}=\{(2,0,0),(1,1,0)\}$. 
  All Catalan tuples of length $3$ are given in the first column of
  \sref{Table}{f:CTk3}. 
\end{exm}

We now define two particular compositions of these objects. 
\sref{App.}{app:ex} provides a few examples.
%

\begin{dfnt}[$\circ$-composition]\label{dfnt:circ}
The composition $\circ:\mathcal{C}_{k}
\times\mathcal{C}_{l}\rightarrow \mathcal{C}_{k+l+1}$ is given by
\begin{align*}
(e_{0},\ldots,e_{k})\circ(f_{0},\ldots,f_{l})
&:=(e_{0}+1,e_{1},\ldots,e_{k-1},e_{k},f_{0},f_{1},\ldots,f_{l})\;.
\end{align*}
\end{dfnt}
\noindent
No information is lost in this composition, i.e.\ it is possible to 
uniquely retrieve both terms. In particular, $\circ$ cannot be 
associative or commutative. Consider for a Catalan tuple 
$\tilde{e}=(e_{0},\ldots,e_{k})$ partial sums 
$p_{l}:\mathcal{C}_{k}\rightarrow \{0,\ldots,k\}$ and maps 
$\sigma_{a}:\mathcal{C}_{k}\rightarrow \{0,\ldots,k\}$ defined by
\begin{align}
p_{l}(\tilde{e}) &:=-l+\sum_{j=0}^{l}e_{j}\;,\qquad\text{for } 
l=0,\ldots,k-1\;,
\\
\sigma_{a}(\tilde{e})&:=\min\{l\,|\,p_{l}(\tilde{e})=a\}\;.
\nonumber
\end{align}
Then 
\begin{equation}
\tilde{e}=(e_{0},\ldots,e_{k})
= (e_{0}-1,e_{1},\ldots,e_{\sigma_{1}(\tilde{e})})
\circ(e_{\sigma_{1}(\tilde{e})+1},\ldots,e_{k})\;.
\label{circ-factor}
\end{equation}
Because $\sigma_{1}(\tilde{e})$ exists for any
$\tilde{e}\in\mathcal{C}_{k}$ with $k\geq1$, every Catalan tuple has
unique $\circ$-factors. Only these two Catalan tuples, composed by
$\circ$, yield $(e_{0},\ldots,e_{k})$. This implies that the number
$c_k$ of Catalan tuples in $\mathcal{C}_{k}$ satisfies Segner's recurrence
relation
\begin{equation*}
c_{k}=\sum_{m=0}^{k-1}c_{m}c_{k-1-m}
\end{equation*}
together with $c_0=1$, which is solved by the Catalan numbers
$c_k=\frac{1}{k+1}\binom{2k}{k}$.

The other composition of Catalan tuples is a variant of the $\circ$-product. 
\begin{dfnt}[$\bullet$-composition]\label{dfnt:bullet}
The composition $\bullet:\mathcal{C}_{k}\times\mathcal{C}_{l}
\rightarrow \mathcal{C}_{k+l+1}$ is given by
\begin{equation*}
(e_{0},\ldots,e_{k})\bullet(f_{0},\ldots,f_{l})=(e_{0}+1,f_{0},\ldots,f_{l},e_{1},\ldots,e_{k})\quad.
\end{equation*}
\end{dfnt}
As in the case of the composition $\circ$, \sref{Definition}{dfnt:circ},
no information is lost in the product $\bullet$. It is reverted by
\begin{equation}
\tilde{e}=(e_0,\ldots,e_k)
=(e_0-1,e_{1+\sigma_{e_{0}-1}(\tilde{e})},\ldots,e_k) 
\bullet (e_1,\ldots ,e_{\sigma_{e_{0}-1}(\tilde{e})})\;.
\label{bullet-factor}
\end{equation}
Because $\sigma_{e_{0}-1}(\tilde{e})$ exists for any 
$\tilde{e}\in\mathcal{C}_{k}$ with $k\geq 1$
(also for $e_0=1$ where $\sigma_{e_0-1}(\tilde{e}) = k$),
every Catalan tuple has a unique pair of $\bullet$-factors.

Out of these Catalan tuples we will construct three sorts of trees: 
\emph{pocket tree}, \emph{direct tree}, \emph{opposite tree}. They are
all planted plane trees, which means they are embedded into the plane and 
planted into a monovalent 
phantom root which connects to a unique vertex that we consider as the 
(real) root. 
We adopt the convention that the phantom root 
is not shown; its implicit presence manifests in a different 
counting of the valencies the real root.
Pocket tree and direct tree are the same, but
their r\^ole will be different. 
Their drawing algorithms are given by the next definitions.
\begin{dfnt}[direct tree, pocket tree] \label{dfnt:rpt}
For a Catalan tuple $(e_{0},\ldots,e_{k})\in\mathcal{C}_{k}$, 
draw $k+1$ vertices on a line. Starting at the root $l=0$:
\begin{itemize}
\item unless $l=0$, connect this vertex to the last vertex ($m<l$) 
with an open half-edge;
\item if $e_{l}>0$: $e_{l}$ half-edges must be attached to vertex $l$;
\item move to the next vertex.
\end{itemize}
For direct trees, vertices will be called \emph{nodes} and edges will be 
called \emph{threads}; they are oriented from left to right.
For pocket trees, vertices are called \emph{pockets}.
\end{dfnt}

\begin{dfnt}[opposite tree]\label{dfnt:ot}
For a Catalan tuple $(e_{0},\ldots,e_{k})\in\mathcal{C}_{k}$, 
draw $k+1$ vertices on a line. Starting at the root $l=0$:
\begin{itemize}
\item if $e_{l}>0$: $e_{l}$ half-edges must be attached to vertex $l$;
\item if $e_{l}=0$:
\item[-] connect this vertex to the last vertex ($m<l$) with an open half-edge;
\item[-] if this vertex is now not connected to the 
last vertex ($n\leq m<l$) with an open half-edge, repeat this until it is;
\item move to the next vertex.
\end{itemize}
For opposite trees, vertices will be called \emph{nodes} and edges
will be called \emph{threads}; they are oriented from left to right.
\end{dfnt}

Examples of these trees can be seen in \sref{Figure}{f:PT} and
\sref{Table}{f:CTk3}. It will be explained in Sec.~\ref{sec:bijection} how these
trees relate to the recurrence relation (\ref{e:rr}) and how to 
label the nodes. 
The pocket trees will often be represented
with a top-down orientation, instead of a left-right one.

\begin{figure}[!ht]
\setlength{\unitlength}{0.75mm}
\hspace{3cm}\begin{picture}(100,42)
\put(-27,32){\mbox{DT:}}
\put(-27,5){\mbox{OT:}}
\multiput(-11,29)(10,0){15}{\textbullet}
\put(-5,30){\oval(10,5)[t]}
\put(0,30){\oval(20,10)[t]}
\put(5,30){\oval(30,15)[t]}
\put(25,30){\oval(10,5)[t]}
\put(35,30){\oval(10,5)[t]}
\put(40,30){\oval(20,10)[t]}
\put(45,30){\oval(30,15)[t]}
\put(30,30){\oval(80,20)[t]}
\put(75,30){\oval(10,5)[t]}
\put(85,30){\oval(10,5)[t]}
\put(90,30){\oval(20,10)[t]}
\put(90,30){\oval(40,15)[t]}
\put(55,30){\oval(130,25)[t]}
\put(60,30){\oval(140,30)[t]}

\multiput(-11,-1)(10,0){15}{\textbullet}
\put(-5,0){\oval(10,5)[t]}
\put(0,0){\oval(20,10)[t]}
\put(25,0){\oval(70,25)[t]}
\put(50,0){\oval(120,30)[t]}
\put(55,0){\oval(130,35)[t]}
\put(60,0){\oval(140,40)[t]}
\put(40,0){\oval(40,20)[t]}
\put(35,0){\oval(10,5)[t]}
\put(40,0){\oval(20,10)[t]}
\put(45,0){\oval(30,15)[t]}
\put(90,0){\oval(40,20)[t]}
\put(85,0){\oval(30,15)[t]}
\put(85,0){\oval(10,5)[t]}
\put(90,0){\oval(20,10)[t]}

\end{picture}
\caption{Direct tree (upper) and the opposite tree (lower) for the 
Catalan tuple $(6,0,0,1,3,0,0,0,2,2,0,0,0,0,0)
=(5,0,0,1,3,0,0,0,2,2,0,0,0,0)\circ(0)
=(5,0,1,3,0,0,0,2,2,0,0,0,0,0)\bullet (0)$.\label{f:PT}}
\end{figure}
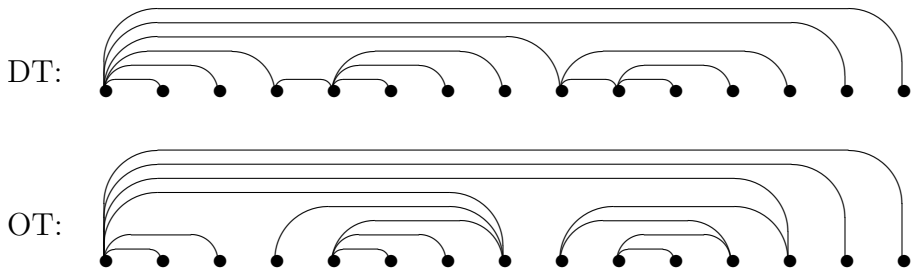

\begin{table}[!ht]
$\begin{array}{|c|c|c|c|}
\hline
{\text{Catalan tuple}} &
\text{pocket tree} &
\text{direct tree} &
\text{opposite tree} 
\\ \hline 
\raisebox{5mm}{(3,0,0,0)} 
&
\begin{picture}(20,13)
\put(10,7){\textbullet}
\put(5,2){\textbullet}
\put(10,2){\textbullet}
\put(15,2){\textbullet}
\put(11,8){\line(-1,-1){5}}
\put(11,8){\line(1,-1){5}}
\put(11,8){\line(0,-1){5}}
\end{picture}
&
\begin{picture}(27,14)
\put(5,2){\textbullet}
\put(10,2){\textbullet}
\put(15,2){\textbullet}
\put(20,2){\textbullet}
\put(8.5,3){\oval(5,5)[t]}
\put(11,3){\oval(10,10)[t]}
\put(13.5,3){\oval(15,15)[t]}
\end{picture}
&
\begin{picture}(27,14)
\put(5,2){\textbullet}
\put(10,2){\textbullet}
\put(15,2){\textbullet}
\put(20,2){\textbullet}
\put(8.5,3){\oval(5,5)[t]}
\put(11,3){\oval(10,10)[t]}
\put(13.5,3){\oval(15,15)[t]}
\end{picture}
\\ \hline 
\raisebox{6mm}{(2,1,0,0)} 
&
\begin{picture}(20,14)
\put(10,10){\textbullet}
\put(5,5){\textbullet}
\put(15,5){\textbullet}
\put(5,0){\textbullet}
\put(11,11){\line(-1,-1){5}}
\put(11,11){\line(1,-1){5}}
\put(6,6){\line(0,-1){5}}
\end{picture}
&
\begin{picture}(27,13)
\put(5,2){\textbullet}
\put(10,2){\textbullet}
\put(15,2){\textbullet}
\put(20,2){\textbullet}
\put(8.5,3){\oval(5,5)[t]}
\put(13.5,3){\oval(5,5)[t]}
\put(13.5,3){\oval(15,10)[t]}
\end{picture}
&
\begin{picture}(27,14)
\put(5,2){\textbullet}
\put(10,2){\textbullet}
\put(15,2){\textbullet}
\put(20,2){\textbullet}
\put(11,3){\oval(10,10)[t]}
\put(13.5,3){\oval(5,5)[t]}
\put(13.5,3){\oval(15,15)[t]}
\end{picture}
\\ \hline 
\raisebox{6mm}{(2,0,1,0)} 
&
\begin{picture}(20,14)
\put(10,10){\textbullet}
\put(5,5){\textbullet}
\put(15,5){\textbullet}
\put(15,0){\textbullet}
\put(11,11){\line(-1,-1){5}}
\put(11,11){\line(1,-1){5}}
\put(16,6){\line(0,-1){5}}
\end{picture}
&
\begin{picture}(27,13)
\put(5,2){\textbullet}
\put(10,2){\textbullet}
\put(15,2){\textbullet}
\put(20,2){\textbullet}
\put(8.5,3){\oval(5,5)[t]}
\put(18.5,3){\oval(5,5)[t]}
\put(11,3){\oval(10,10)[t]}
\end{picture}
&
\begin{picture}(27,13)
\put(5,2){\textbullet}
\put(10,2){\textbullet}
\put(15,2){\textbullet}
\put(20,2){\textbullet}
\put(8.5,3){\oval(5,5)[t]}
\put(18.5,3){\oval(5,5)[t]}
\put(13.5,3){\oval(15,10)[t]}
\end{picture}
\\
\hline
\raisebox{6mm}{(1,2,0,0)} 
&
\begin{picture}(20,13)
\put(10,6){\textbullet}
\put(5,1){\textbullet}
\put(10,11){\textbullet}
\put(15,1){\textbullet}
\put(11,7){\line(-1,-1){5}}
\put(10.8,7){\line(1,-1){5}}
\put(10.8,12){\line(0,-1){5}}
\end{picture}
&
\begin{picture}(27,15)
\put(5,2){\textbullet}
\put(10,2){\textbullet}
\put(15,2){\textbullet}
\put(20,2){\textbullet}
\put(8.5,3){\oval(5,5)[t]}
\put(13.5,3){\oval(5,5)[t]}
\put(16,3){\oval(10,10)[t]}
\end{picture}
&
\begin{picture}(27,13)
\put(5,2){\textbullet}
\put(10,2){\textbullet}
\put(15,2){\textbullet}
\put(20,2){\textbullet}
\put(16,3){\oval(10,10)[t]}
\put(13.5,3){\oval(5,5)[t]}
\put(13.5,3){\oval(15,15)[t]}
\end{picture}
\\ \hline 
\raisebox{8mm}{(1,1,1,0)} 
&
\begin{picture}(20,18)
\put(10,15){\textbullet}
\put(10,10){\textbullet}
\put(10,5){\textbullet}
\put(10,0){\textbullet}
\put(11,15.5){\line(0,-1){15}}
\end{picture}
&
\begin{picture}(27,18)
\put(5,5){\textbullet}
\put(10,5){\textbullet}
\put(15,5){\textbullet}
\put(20,5){\textbullet}
\put(8.5,6){\oval(5,5)[t]}
\put(13.5,6){\oval(5,5)[t]}
\put(18.5,6){\oval(5,5)[t]}
\end{picture}
&
\begin{picture}(27,18)
\put(5,5){\textbullet}
\put(10,5){\textbullet}
\put(15,5){\textbullet}
\put(20,5){\textbullet}
\put(13.5,6){\oval(15,15)[t]}
\put(16,6){\oval(10,10)[t]}
\put(18.5,6){\oval(5,5)[t]}
\end{picture}
\\ \hline 
\end{array}
$
\caption{The Catalan tuples and the corresponding planted plane trees 
for $k=3$. The phantom roots are not shown. The  real root is on top for the pocket tree and on the left for direct and opposite trees.
\label{f:CTk3}}
\end{table}

\subsection{Catalan Tables\label{sec:Ctab}}

A Catalan table is a `Catalan tuple of Catalan tuples':
\begin{dfnt}[Catalan table]\label{dfnt:cattab}
A \emph{Catalan table of length $k$} is a tuple 
$T_k=\langle\tilde{e}^{(0)},\tilde{e}^{(1)},\ldots,
\tilde{e}^{(k)}\rangle$ of Catalan tuples $\tilde{e}^{(j)}$, such 
that 
$(1+|\tilde{e}^{(0)}|,|\tilde{e}^{(1)}|,\ldots,|\tilde{e}^{(k)}|)$,
the \emph{length} tuple of $T_k$, is 
itself a Catalan tuple of length $k$. We let $\mathcal{T}_{k}$
be the set of all Catalan tables of length $k$. 
The constituent $\tilde{e}^{(j)}$ in a Catalan table is called the 
$j$-th pocket.
\end{dfnt}
\noindent 
We will show in \sref{Sec.}{sec:bijection} that a
Catalan table contains all information about individual terms in the
expansion (\ref{expansion}) of the $N$-point function $G_{p_0\ldots p_{N-1}}$.

Recall the composition $\circ$ from \sref{Definition}{dfnt:circ} and the fact
that any Catalan tuple of length $\geq 1$ has a unique pair of 
$\circ$-factors. We extend $\circ$ as follows to Catalan tables:
\begin{dfnt}[$\smalllozenge$-operation]\label{dfnt:triangle}
The operation $\smalllozenge: \mathcal{T}_{k}\times
\mathcal{T}_{l}\rightarrow\mathcal{T}_{k+l}$ is given by
\begin{equation*}
\langle \tilde{e}^{(0)},\ldots,\tilde{e}^{(k)}\rangle 
\smalllozenge\langle \tilde{f}^{(0)},\ldots,\tilde{f}^{(l)}\rangle 
:= \langle \tilde{e}^{(0)} \circ \tilde{f}^{(0)},\tilde{e}^{(1)},
\ldots,\tilde{e}^{(k)}, \tilde{f}^{(1)},\ldots,\tilde{f}^{(l)}\rangle\;.
\end{equation*}
\end{dfnt}
\noindent
Now suppose the Catalan table on the right-hand side is given. If the
$0^{\mathrm{th}}$ pocket has length $\geq 1$, then it uniquely factors
into $\tilde{e}^{(0)} \circ \tilde{f}^{(0)}$. Consider 
\begin{equation}
\hat{k}= \sigma_{1+|\tilde{f}^{(0)}|}
\big((1+|\tilde{e}^{(0)}\circ
\tilde{f}^{(0)}|,|\tilde{e}^{(1)}|,\ldots,
|\tilde{e}^{(k)}|, |\tilde{f}^{(1)}|,\ldots,|\tilde{f}^{(l)}|)\big)\;.
\label{k-lozenge}
\end{equation}
By construction, $\tilde{k}=k$ so that $\smalllozenge$ can 
be uniquely reverted. Note also that Catalan tables 
$\langle (0),\tilde{e}_1,\dots,\tilde{e}_k\rangle$ do not have a 
$\smalllozenge$-decomposition.

The composition $\bullet$ of Catalan tuples is extended as follows to 
Catalan tables:
\begin{dfnt}[$\smallblacklozenge$-operation]\label{dfnt:box}
The operation $\smallblacklozenge: 
\mathcal{T}_{k}\times\mathcal{T}_{l}\rightarrow\mathcal{T}_{k+l}$ is given by
\begin{equation*}
\langle \tilde{e}^{(0)},\ldots,\tilde{e}^{(k)}\rangle 
\smallblacklozenge\langle
\tilde{f}^{(0)},\ldots,\tilde{f}^{(l)}\rangle 
:= \langle \tilde{e}^{(0)} ,\tilde{e}^{(1)}\bullet\tilde{f}^{(0)}, 
\tilde{f}^{(1)},\ldots,\tilde{f}^{(l)}, \tilde{e}^{(2)},\ldots,
\tilde{e}^{(k)}\rangle\;.
\end{equation*}
\end{dfnt}
\noindent
If the $1^{\text{st} }$ pocket has length $\geq 1$, it uniquely
factors as $\tilde{e}^{(1)}\bullet\tilde{f}^{(0)}$, and we extract%
\begin{equation}
\hat{l}:=\sigma_{|\tilde{e}^{(0)}|+|\tilde{e}^{(1)}|+1}
\big((1+|\tilde{e}^{(0)}| ,|\tilde{e}^{(1)}\bullet\tilde{f}^{(0)}|, 
|\tilde{f}^{(1)}|,\ldots,|\tilde{f}^{(l)}|, |\tilde{e}^{(2)}|,\ldots,
|\tilde{e}^{(k)}|)\big)\;.
\label{l-blacklozenge}
\end{equation}
By construction $\hat{l}=l$, and $\smallblacklozenge$ is uniquely
reverted. 

We let $\mathcal{S}_k=\{
\langle \tilde{e}_0 , (0),\tilde{e}_2,\dots,\tilde{e}_k\rangle
\in \mathcal{T}_k\}$ be the subset of length-$k$ Catalan tables having 
$(0)$ as their first pocket. The Catalan tables $S\in 
\mathcal{S}_k$ are precisely those which do not have a 
$\smallblacklozenge$-decomposition. The distinction 
between $\mathcal{S}_l$ and ints complement in $\mathcal{T}_l$ is 
the key to determine the number of Catalan tables:
\begin{prps}\label{thrm:NrT}
  The set $\mathcal{T}_{k+1}$ of Catalan tables and its subset
  $\mathcal{S}_{k+1}$ with first pocket $(0)$ have cardinalities
 \begin{equation}
 d_{k}:=|\mathcal{T}_{k+1}|=\frac{1}{k+1}\binom{3k+1}{k}
\quad\text{and}\quad 
 h_{k}:=|\mathcal{S}_{k+1}|=\frac{1}{2k+1}\binom{3k}{k}\;.\label{e:res}
\end{equation}
\end{prps}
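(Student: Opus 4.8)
The plan is to compute the generating functions $\hat D(x):=\sum_{m\geq 1}|\mathcal{T}_m|\,x^m$ and $\hat H(x):=\sum_{m\geq 1}|\mathcal{S}_m|\,x^m$ from the two recursive decompositions of Catalan tables, and then to read off $d_k=[x^{k+1}]\hat D$ and $h_k=[x^{k+1}]\hat H$ by Lagrange--B\"urmann inversion (\sref{Theorem}{thm:Lagrange-inversion}). First I would note that $\mathcal{T}_0=\emptyset$: the length tuple of a length-$0$ table would be $(1+|\tilde e^{(0)}|)$, which never equals the unique length-$0$ Catalan tuple $(0)$. Hence every Catalan table has length $\geq1$, and the operations $\smalllozenge,\smallblacklozenge$ of \sref{Definition}{dfnt:triangle} and \sref{Definition}{dfnt:box} are only ever used to compose tables of length $\geq1$.

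Next I would establish the two functional equations. The operation $\smallblacklozenge$ modifies the first pocket $\tilde e^{(1)}$, replacing it by $\tilde e^{(1)}\bullet\tilde f^{(0)}$, which has length $\geq1$; so the image of $\smallblacklozenge$ is disjoint from $\mathcal{S}$. Conversely any $T\in\mathcal{T}_m\setminus\mathcal{S}_m$ has a first pocket of length $\geq1$, hence $\bullet$-factors via \eqref{bullet-factor}, and this yields a unique $\smallblacklozenge$-decomposition $T=T'\smallblacklozenge T''$ with $T'\in\mathcal{T}_k$, $T''\in\mathcal{T}_l$, $k+l=m$, by \eqref{l-blacklozenge}. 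This gives a bijection $\mathcal{T}_m\setminus\mathcal{S}_m\cong\bigsqcup_{k+l=m}\mathcal{T}_k\times\mathcal{T}_l$, i.e.\ $\hat D-\hat H=\hat D^2$. The operation $\smalllozenge$ modifies the zeroth pocket $\tilde e^{(0)}$ and leaves $\tilde e^{(1)}$ fixed, so it maps $\mathcal{S}_k\times\mathcal{T}_l$ into $\mathcal{S}_{k+l}$ and, being invertible by \eqref{circ-factor} and \eqref{k-lozenge}, realises exactly those tables of $\mathcal{S}_m$ whose zeroth pocket has length $\geq1$. A table in $\mathcal{S}_m$ with zeroth pocket $(0)$ has length tuple starting $(1,0,\dots)$, which contradicts \sref{Definition}{dfnt:cattup} unless $m=1$; the single exception is $\langle(0),(0)\rangle\in\mathcal{S}_1$. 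Thus $|\mathcal{S}_m|=\delta_{m,1}+\sum_{k+l=m}|\mathcal{S}_k|\,|\mathcal{T}_l|$, i.e.\ $\hat H=x+\hat H\hat D$.

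Eliminating $\hat H$ gives $\hat H=\hat D(1-\hat D)$ and then the single functional equation $\hat D\,(1-\hat D)^2=x$, which determines $\hat D$ uniquely as a formal power series with $\hat D(0)=0$ since $w(1-w)^2=w+\mathcal{O}(w^2)$. I would then apply \sref{Theorem}{thm:Lagrange-inversion} with $\phi(w)=(1-w)^{-2}$, so that $f(w)=w/\phi(w)=w(1-w)^2$ has compositional inverse $g=\hat D$. The Lagrange formula gives
\begin{align*}
d_k=[x^{k+1}]\hat D=\frac{1}{k+1}\,[w^{k}](1-w)^{-2(k+1)}=\frac{1}{k+1}\binom{3k+1}{k},
\end{align*}
and the B\"urmann formula \eqref{eq:Buermann} applied to $H(w)=w-w^2$, so $H'(w)=1-2w$, gives
\begin{align*}
h_k=[x^{k+1}]\hat H=\frac{1}{k+1}\Big(\binom{3k+1}{k}-2\binom{3k}{k-1}\Big).
\end{align*}
The elementary identities $\binom{3k+1}{k}=\frac{3k+1}{k}\binom{3k}{k-1}$ and $\binom{3k}{k-1}=\frac{k}{2k+1}\binom{3k}{k}$ then collapse this to $h_k=\frac{1}{2k+1}\binom{3k}{k}$ for $k\geq1$; the case $k=0$ is checked directly from $|\mathcal{S}_1|=1$.

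I expect the main obstacle to be the combinatorial bookkeeping in the second paragraph, namely verifying that $\smalllozenge$ and $\smallblacklozenge$ induce bijections onto the asserted subsets and, crucially, pinning down precisely which Catalan tables admit no decomposition, since those are exactly what produces the base term $x$ (respectively $\delta_{m,1}$) in each functional equation. All of this is essentially forced by the uniqueness statements already recorded in \eqref{circ-factor}, \eqref{bullet-factor}, \eqref{k-lozenge} and \eqref{l-blacklozenge}; once the two functional equations are in place, the rest is a quadratic manipulation of power series followed by a routine Lagrange--B\"urmann extraction and a binomial simplification.
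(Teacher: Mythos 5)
Your proposal is correct and follows essentially the same route as the paper: the same two decomposition-based functional equations $\hat D=\hat D^2+\hat H$ and $\hat H=x+\hat H\hat D$, the same cubic $\hat D(1-\hat D)^2=x$, and Lagrange inversion for $d_k$. The only (cosmetic) difference is that you extract $h_k$ by applying the B\"urmann formula to $\hat H=\hat D-\hat D^2$, whereas the paper derives a second cubic $\frac{H}{\sqrt{x}}\bigl(1-(\frac{H}{\sqrt{x}})^2\bigr)=\sqrt{x}$ and inverts that; both give $h_k=\frac{1}{2k+1}\binom{3k}{k}$.
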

\begin{proof}
Let 
\[
\mathcal{D}(x):=\sum_{k=1}^\infty x^k \sum_{T\in \mathcal{T}_k} T
\qquad\text{and}\qquad
\mathcal{H}(x):=\sum_{k=1}^\infty x^k \sum_{S\in \mathcal{S}_k} S
\]
be the generating function of the set of all Catalan tables and of 
those having $(0)$ as their first pocket, respectively. Then
\begin{align}
\mathcal{D}(x)=\mathcal{D}(x) \smallblacklozenge \mathcal{D}(x)
+\mathcal{H}(x)
\label{cG}
\end{align}
because precisely the complements 
$\mathcal{T}_k\setminus \mathcal{S}_k$ have 
a unique $\smallblacklozenge$-decomposition. With the exception of 
$\langle(0),(0)\rangle \in \mathcal{S}_1=\mathcal{T}_1$, all 
$S=\langle \tilde{e}^0,(0),
\tilde{e}^2,\dots,\tilde{e}^k\rangle
\in \mathcal{S}_k$ with $k\geq 2$ have $|\tilde{e}^0|\geq 1$. 
Therefore, they have a unique 
$\smalllozenge$-decomposition, where the left factor necessarily belongs to 
$\mathcal{S}_l$  for some $l$:
\begin{align}
\mathcal{H}(x)=\mathcal{H}(x) \smalllozenge \mathcal{D}(x)
+ x \langle(0),(0)\rangle\;.
\label{cH}
\end{align}
Introducing the generating functions 
$D(x)=\sum_{k=0}^\infty x^{k+1} d_k$ and 
$H(x)=\sum_{k=0}^\infty x^{k+1} h_k$ 
of the cardinalities
$d_{k}=|\mathcal{T}_{k+1}|$ and 
$h_{k}=|\mathcal{S}_{k+1}|$, eqs.\ (\ref{cG}) and (\ref{cH}) 
project to quadratic relations
\begin{equation}
D(x)=D(x)\cdot D(x)+H(x)
\qquad\text{and}\qquad 
H(x)=H(x)\cdot D(x)+x\;.
\label{e:quadraticequation}
\end{equation}
Multiplying the first equation by $H(x)$ and
the second one by $D(x)$ gives $x\cdot D(x)=H^{2}(x)$, which
separates (\ref{e:quadraticequation}) into cubic relations
\begin{equation}
D(x)(1-D(x))^2=x\qquad\text{and} \qquad
\frac{H(x)}{\sqrt{x}} \Big(1-\Big(\frac{H(x)}{\sqrt{x}}\Big)^2\Big)
=\sqrt{x}\;.
\label{e:cubicequation}
\end{equation}
The coefficients (\ref{e:res}) can now be obtained by
the Lagrange inversion formula.
The second equation of (\ref{e:cubicequation}) results by 
taking $f(w)=w(1-w)^2$ in \sref{Theorem}{thm:Lagrange-inversion}, i.e.\
$\phi(w)=\frac{1}{(1-w)^2}$ and 
$ G(x)=\sum_{n=1}^\infty \frac{x^n}{n!}\frac{d^{n-1}}{dw^{n-1}}
\big\vert_{w=0} \frac{1}{(1-w)^{2n}}$. 
For the first equation of (\ref{e:cubicequation}), set
$\frac{H(x)}{\sqrt{x}}=w$, $z=\sqrt{x}$ and
$\phi(w)=\frac{1}{1-w^2}$
in \sref{Theorem}{thm:Lagrange-inversion}. Then
$H(x)
= \sqrt{x} \sum_{n=1}^\infty \frac{\sqrt{x}^n}{n!} 
\frac{d^{n-1}}{dw^{n-1}}\big\vert_{w=0} \frac{1}{(1-w^2)^{n}}$.

\end{proof}
\begin{rmk}
Equations (\ref{e:cubicequation}) are 
higher-order variants of the equation $C(x)(1-C(x))=x$ for the 
generating function $C(x)=\sum_{n=0}^\infty c_nx^{n+1}$ of Catalan numbers.
\end{rmk}
\begin{cor}\label{cor:dk}
The number $d_k$ of Catalan tables satisfies
\begin{equation}
d_{k}=\sum_{(e_{0},\ldots,e_{k+1})\in\mathcal{C}_{k+1}} 
c_{e_{0}-1}c_{e_{1}}\cdots c_{e_{k}}c_{e_{k+1}}\;. \label{dsumc}
\end{equation}
\end{cor}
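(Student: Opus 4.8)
The statement to prove is Corollary~\ref{cor:dk}, namely the formula
\[
d_{k}=\sum_{(e_{0},\ldots,e_{k+1})\in\mathcal{C}_{k+1}}
c_{e_{0}-1}c_{e_{1}}\cdots c_{e_{k}}c_{e_{k+1}}\;.
\]
The idea is that this is an immediate reformulation of the definition of a Catalan table together with the known count $d_k=\frac{1}{k+1}\binom{3k+1}{k}$ from Proposition~\ref{thrm:NrT}. A Catalan table of length $k+1$ is, by Definition~\ref{dfnt:cattab}, a tuple $\langle\tilde{e}^{(0)},\ldots,\tilde{e}^{(k+1)}\rangle$ of Catalan tuples whose length tuple $(1+|\tilde{e}^{(0)}|,|\tilde{e}^{(1)}|,\ldots,|\tilde{e}^{(k+1)}|)$ is itself a Catalan tuple of length $k+1$. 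So the plan is: classify Catalan tables of length $k+1$ by their length tuple, which ranges over $\mathcal{C}_{k+1}$; for a fixed length tuple $(f_0,f_1,\ldots,f_{k+1})\in\mathcal{C}_{k+1}$, the pocket $\tilde{e}^{(0)}$ must satisfy $1+|\tilde{e}^{(0)}|=f_0$, i.e.\ $|\tilde{e}^{(0)}|=f_0-1$, and each $\tilde{e}^{(j)}$ for $j\geq 1$ must satisfy $|\tilde{e}^{(j)}|=f_j$.

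**Key steps.** First I would recall that the number of Catalan tuples of length $m$ is exactly the Catalan number $c_m$; this was established just after \eqref{circ-factor} via Segner's recurrence. Second, for a fixed admissible length tuple $(f_0,\ldots,f_{k+1})\in\mathcal{C}_{k+1}$, I would observe that the choices of the pockets $\tilde{e}^{(0)},\ldots,\tilde{e}^{(k+1)}$ are completely independent of one another once their lengths are fixed — the only constraint in Definition~\ref{dfnt:cattab} is on the length tuple, not on the internal structure of each pocket. Hence the number of Catalan tables with this length tuple is the product $c_{f_0-1}c_{f_1}\cdots c_{f_{k+1}}$. Third, summing over all admissible length tuples and relabelling $(f_0,\ldots,f_{k+1})$ as $(e_0,\ldots,e_{k+1})\in\mathcal{C}_{k+1}$ yields
\[
d_k=|\mathcal{T}_{k+1}|=\sum_{(e_0,\ldots,e_{k+1})\in\mathcal{C}_{k+1}}c_{e_0-1}c_{e_1}\cdots c_{e_k}c_{e_{k+1}}\;,
\]
which is precisely the claim. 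Here I would note that $e_0\geq 1$ for any Catalan tuple of length $\geq 1$, so $c_{e_0-1}$ is well-defined; and the value $|\mathcal{T}_{k+1}|=\frac{1}{k+1}\binom{3k+1}{k}$ comes directly from Proposition~\ref{thrm:NrT}.

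**Main obstacle.** Honestly there is no serious obstacle — the corollary is a bookkeeping consequence of the definition. The only point requiring a line of care is the claim that the internal structure of each pocket is unconstrained given its length: one should check that Definition~\ref{dfnt:cattab} imposes a condition solely on the length tuple, so that the decomposition into "choose the length tuple, then choose each pocket" is a genuine bijection $\mathcal{T}_{k+1}\leftrightarrow\bigsqcup_{(e_j)\in\mathcal{C}_{k+1}}\prod_j\mathcal{C}_{e_j'}$ (with $e_0'=e_0-1$, $e_j'=e_j$ for $j\geq 1$). Given this, the counting is a one-line product-rule argument, and combining it with Proposition~\ref{thrm:NrT} finishes the proof. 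Alternatively — and this is worth mentioning as a remark — one could prove \eqref{dsumc} purely combinatorially without invoking the closed form $\frac{1}{k+1}\binom{3k+1}{k}$, by the same classification argument, and then deduce the closed form from \eqref{dsumc} using the generating-function identities \eqref{e:cubicequation}; but since Proposition~\ref{thrm:NrT} is already available the direct route is shortest.
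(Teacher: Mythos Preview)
Your proposal is correct and takes essentially the same approach as the paper: classify Catalan tables by their length tuple in $\mathcal{C}_{k+1}$, observe that the pockets are chosen independently once their lengths are fixed, and multiply the Catalan-number counts. The only minor remark is that invoking the closed form $\frac{1}{k+1}\binom{3k+1}{k}$ from Proposition~\ref{thrm:NrT} is unnecessary here --- the corollary asserts only that $d_k=|\mathcal{T}_{k+1}|$ equals the sum, which your bijection already establishes.
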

\begin{proof}
There are $c_{|\tilde{e}_{0}|}\cdots c_{|\tilde{e}_{k+1}|}$ 
Catalan tables $\langle \tilde{e}_0,\dots,\tilde{e}_{k+1}\rangle$ 
of the same length tuple 
$(|\tilde{e}_{0}|+1,|\tilde{e}_{1}|,
\ldots,|\tilde{e}_{k+1}|)\in\mathcal{C}_{k+1}$. Set 
$e_0=|\tilde{e}_{0}|+1$ and 
$e_j=|\tilde{e}_{j}|$ for $j=1,\dots,k+1$.
\end{proof}



\subsection{The Bijection between Catalan Tables and 
Contributions to $G_{p_0\ldots p_{N-1}}$}
\label{sec:bijection}

\begin{dfnt}\label{dfnt:CTG}
To a Catalan table $T_{k+1}=\langle\tilde{e}^{(0)},\tilde{e}^{(1)},\ldots, 
\tilde{e}^{(k+1)}\rangle\in\mathcal{T}_{k+1}$ with $N/2=k+1$ we associate a monomial 
$[T]_{p_{0},\ldots,p_{N-1}}$ in $G_{p_{l}p_{m}}$ and 
$\frac{1}{E_{p_{l'}}-E_{p_{m'}}}$ as follows:
\begin{enumerate}
\item Build the pocket tree for the length tuple $(
  1+|\tilde{e}^{(0)}|,|\tilde{e}^{(1)}|,\ldots,
  |\tilde{e}^{(k+1)}|) \in \mathcal{C}_{k+1}$. It has $k+1$ edges
  and every edge has two sides. Starting from the root and turning
  counterclockwise, label the edge sides in consecutive
  order\footnote{This is the same order as in
    \cite[Fig.~5.14]{MR1676282}.\label{fn-order}}  from
  $p_{0}$ to $p_{N-1}$. An edge labelled $p_{l}p_{m}$ encodes a factor
  $G_{p_{l}p_{m}}$ in $G_{p_{0}\ldots p_{N-1}}$.

\item Label the $k+2$ vertices of the pocket tree by
  $P_{0},\ldots,P_{k+1}$ in consecutive order$^{\ref{fn-order}}$ when turning
  counterclockwise around the tree. Let $v(P_{m})$ be the valency of
  vertex $P_{m}$ (number of edges attached to $P_{m}$) and $L_{m}$ the
  distance between $P_{m}$ and the root $P_{0}$ (number of edges in
  shortest path between $P_{m}$ and $P_{0}$).

\item For every vertex $P_{m}$ that is not a leaf, read off the
  $2v(P_{m})$ side labels of edges connected to $P_{m}$. Draw two
  rows of $v(P_{m})$ nodes each. Label the nodes of the
  first row by the even edge side labels in natural order,
  i.e.\ starting at the edge closest to the root and proceed in the
  counterclockwise direction. Label the nodes of the other row by
  the odd edge side labels using the same edge order. Take the $m$-th
  Catalan tuple $\tilde{e}^{(m)}$ of the Catalan table. 
If $L_m$ is even, draw the
$\genfrac{\{}{\}}{0pt}{}{\text{direct}}{\text{opposite}}$ 
tree encoded by $\tilde{e}^{(m)}$ between the row of 
$\genfrac{\{}{\}}{0pt}{}{\text{even}}{\text{odd}}$ 
nodes.
If $L_m$ is odd, draw the
$\genfrac{\{}{\}}{0pt}{}{\text{opposite}}{\text{direct}}$ 
tree encoded by $\tilde{e}^{(m)}$ between the row of 
$\genfrac{\{}{\}}{0pt}{}{\text{even}}{\text{odd}}$ 
nodes. Encode a thread from $b_{l}$ to $b_{m}$ in the
 direct or opposite tree by a factor $\frac{1}{E_{p_{l}}-E_{p_{m}}}$.
\end{enumerate}
\end{dfnt}

\begin{rmk}\label{rmk:sidelabels}
In the proof below we sometimes have to insist that one side label of a
pocket edge is a particular $p_k$, whereas the label of the other side
does not matter. Is such a situation we will label the other side by 
$p_{\overline{k}}$. Note that if $p_k$ is 
$\genfrac{\{}{\}}{0pt}{}{\text{even}}{\text{odd}}$ then  $p_{\overline{k}}$ is 
$\genfrac{\{}{\}}{0pt}{}{\text{odd}}{\text{even}}$.
\end{rmk}

\begin{rmk}
  For the purpose of this article it is sufficient to mention that an
  explicit construction for the level function
  $L_{m}:\mathcal{C}_{k+1}\rightarrow\{0,\ldots, k\}$ exists.
\end{rmk}

\begin{exm}\label{ex:G12a}
Let 
$T=\langle(2,0,0),(1,1,0),(0),(0),(0),(1,0),(0)\rangle\in \mathcal{T}_6$.
Its length tuple is $(3,2,0,0,0,1,0)\in
\mathcal{C}_6$, which defines  the pocket tree: 
\[
\begin{picture}(75,33)
\put(25,27.5){\circle*{3}} 
\put(25,30){$P_{0}$}
\put(12.5,15){\circle*{3}}
\put(25,15){\circle*{3}}
\put(37.5,15){\circle*{3}}
\put(5,15){$P_{1}$}
\put(26,12){$P_{4}$}
\put(40,15){$P_{5}$}

\put(0,2.5){\circle*{3}}
\put(25,2.5){\circle*{3}}
\put(50,2.5){\circle*{3}}
\put(-7.5,2.5){$P_{2}$}
\put(27.5,2.5){$P_{3}$}
\put(52.5,2.5){$P_{6}$}

\put(25,27.5){\line(-1,-1){25}}
\put(14,21){\mbox{\small$p_{0}$}}
\put(2,9){\mbox{\small$p_{1}$}}
\put(6,5.5){\mbox{\small$p_{2}$}}
\put(17.5,17){\mbox{\small$p_{5}$}}
\put(25,27.5){\line(1,-1){25}}
\put(30.5,17){\mbox{\small$p_{8}$}}
\put(33,20.5){\mbox{\small$p_{11}$}}
\put(42.5,5){\mbox{\small$p_{9}$}}
\put(45,8){\mbox{\small$p_{10}$}}
\put(25,27.5){\line(0,-1){12.5}}
\put(21.5,20){\mbox{\small$p_{6}$}}
\put(25.5,20){\mbox{\small$p_{7}$}}
\put(12.5,15){\line(1,-1){12.5}}
\put(17,5.5){\mbox{\small$p_{3}$}}
\put(20,8.5){\mbox{\small$p_{4}$}}
\end{picture}
\]
The edge side labels encode 
\begin{equation*}
G_{p_{0}p_{5}}G_{p_{1}p_{2}}G_{p_{3}p_{4}}G_{p_{6}p_{7}} G_{p_{8}p_{11}}G_{p_{9}p_{10}}\;.
\end{equation*}
For vertex $P_{0}$, at even distance, we draw direct and opposite tree
encoded in $\tilde{e}^{(0)}=(2,0,0)$:
\begin{equation*}
\begin{picture}(80,8)
\put(0,2){\textbullet}
\put(10,2){\textbullet}
\put(20,2){\textbullet}
\put(-1,0){\mbox{\scriptsize$p_0$}}
\put(9,0){\mbox{\scriptsize$p_6$}}
\put(19,0){\mbox{\scriptsize$p_8$}}
\put(6,3){\oval(10,5)[t]}
\put(11,3){\oval(20,10)[t]}
\put(40,2){\textbullet}
\put(50,2){\textbullet}
\put(60,2){\textbullet}
\put(39,0){\mbox{\scriptsize$p_5$}}
\put(49,0){\mbox{\scriptsize$p_7$}}
\put(59,0){\mbox{\scriptsize$p_{11}$}}
\put(46,3){\oval(10,5)[t]}
\put(51,3){\oval(20,10)[t]}
\end{picture}
\end{equation*}
For vertex $P_{1}$, at odd distance, we draw opposite and direct tree
encoded in $\tilde{e}^{(1)}=(1,1,0)$:
\begin{equation*}
\begin{picture}(80,8)
\put(0,2){\textbullet}
\put(10,2){\textbullet}
\put(20,2){\textbullet}
\put(-1,0){\mbox{\scriptsize$p_{0}$}}
\put(9,0){\mbox{\scriptsize$p_{2}$}}
\put(19,0){\mbox{\scriptsize$p_{4}$}}
\put(16,3){\oval(10,5)[t]}
\put(11,3){\oval(20,10)[t]}
\put(40,2){\textbullet}
\put(50,2){\textbullet}
\put(60,2){\textbullet}
\put(39,0){\mbox{\scriptsize$p_{5}$}}
\put(49,0){\mbox{\scriptsize$p_{1}$}}
\put(59,0){\mbox{\scriptsize$p_{3}$}}
\put(56,3){\oval(10,5)[t]}
\put(46,3){\oval(10,5)[t]}
\end{picture}
\end{equation*}
For vertex $P_{5}$, at odd distance, we draw opposite and rooted tree
encoded in $\tilde{e}^{(5)}=(1,0)$:
\begin{equation*}
\begin{picture}(80,5)
\put(0,2){\textbullet}
\put(10,2){\textbullet}
\put(-1,0){\mbox{\scriptsize$p_{8}$}}
\put(9,0){\mbox{\scriptsize$p_{10}$}}
\put(6,3){\oval(10,5)[t]}
\put(40,2){\textbullet}
\put(50,2){\textbullet}
\put(39,0){\mbox{\scriptsize$p_{11}$}}
\put(49,0){\mbox{\scriptsize$p_{9}$}}
\put(46,3){\oval(10,5)[t]}
\end{picture}
\end{equation*}
They give rise to a factor 
\begin{align*}
&\hspace{-8mm}\frac{1}{(E_{p_{0}}-E_{p_{6}})(E_{p_{0}}-E_{p_{8}})(E_{p_{0}}-E_{p_{4}})(E_{p_{2}}-E_{p_{4}})
(E_{p_{8}}-E_{p_{10}})} \\ 
&\times\frac{1}{(E_{p_{5}}-E_{p_{7}})(E_{p_{5}}-E_{p_{11}})(E_{p_{5}}-E_{p_{1}})(E_{p_{1}}-E_{p_{3}})(E_{p_{11}}-E_{p_{9}})}\;.
\end{align*}
Later in \sref{Figure}{f:G12a} we give a diagrammatic 
representation of this Catalan table. 
\end{exm}

The following theorem shows that the Catalan tables correspond 
bijectively to the terms in the expansion of the recurrence relation 
(\ref{e:rr}).

\begin{thrm}\label{thrm:RC}
The recurrence (\ref{e:rr}) of $N$-point functions in the quartic 
matrix field theory model has the explicit solution 
\begin{equation*}
G_{p_0\ldots p_{N-1}}=\sum_{T\in \mathcal{T}_{k+1}} [T]_{p_0\ldots p_{N-1}}\;,
\end{equation*}
where the sum is over all Catalan tables of length $N/2=k+1$ 
and the monomials $[T]_{p_0\ldots p_{N-1}}$ are described in 
\sref{Definition}{dfnt:CTG}. 
\end{thrm}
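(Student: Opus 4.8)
The plan is to prove the formula by induction on $N=2(k+1)$, matching the recursive structure of \eqref{e:rr} against the $\smalllozenge$- and $\smallblacklozenge$-decompositions of Catalan tables established in \sref{Proposition}{thrm:NrT}. The base case is $N=2$, where $\mathcal{T}_1=\{\langle (0),(0)\rangle\}$ consists of a single Catalan table whose pocket tree is a single edge labelled $p_0,p_1$ with no non-leaf interior vertex beyond the root, and $[T]_{p_0p_1}=G_{p_0p_1}$, which is exactly $G_{p_0p_1}$. For the inductive step I would fix the root $p_0$ and show that the sum $\sum_{T\in\mathcal{T}_{k+1}}[T]_{p_0\ldots p_{N-1}}$ obeys the same recursion \eqref{e:rr}. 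Concretely, I would partition $\mathcal{T}_{k+1}$ according to whether a Catalan table admits a $\smallblacklozenge$-decomposition (its complement being $\mathcal{S}_{k+1}$, the tables with first pocket $(0)$), and within $\mathcal{S}_{k+1}$ further according to the $\smalllozenge$-decomposition. The key identities to exploit are \eqref{cG} and \eqref{cH}: $\mathcal{D}=\mathcal{D}\smallblacklozenge\mathcal{D}+\mathcal{H}$ and $\mathcal{H}=\mathcal{H}\smalllozenge\mathcal{D}+x\langle(0),(0)\rangle$, which should translate, after applying \sref{Definition}{dfnt:CTG}, into precisely the two terms $G_{p_0\ldots p_{2l-1}}G_{p_{2l}\ldots p_{N-1}}$ and $-G_{p_1\ldots p_{2l}}G_{p_0 p_{2l+1}\ldots p_{N-1}}$ divided by $(E_{p_0}-E_{p_{2l}})(E_{p_1}-E_{p_{N-1}})$, summed over $l$.

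\textbf{Key steps.}
First I would analyse how $[T]$ behaves under the two operations on the level of monomials. For $T=T_1\smallblacklozenge T_2$: the $\smallblacklozenge$-operation inserts a $\bullet$-factored pocket into the first pocket $\tilde{e}^{(1)}$, which geometrically attaches a subtree at the root's first child in the pocket tree; I need to check that this produces the factor $\frac{1}{(E_{p_0}-E_{p_{2l}})(E_{p_1}-E_{p_{N-1}})}$ together with the product $G_{\text{first block}}\cdot G_{\text{second block}}$, where the index split at $2l$ corresponds to the length $2l = 2(1+|\tilde{e}^{(0)}_{T_1}|)$ of the first pocket of $T_1$. Here the level function $L_m$ and the parity swap (direct tree at even level, opposite tree at odd level) in step 3 of \sref{Definition}{dfnt:CTG} must be tracked carefully: the root is at level $0$ (even), the $\bullet$-insertion happens at level $1$ (odd), and the opposite-tree drawing is exactly what generates the $\frac{1}{E_{p_1}-E_{p_{N-1}}}$ denominator with the correct orientation. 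Second, for $T=T_1\smalllozenge T_2$ within $\mathcal{S}_{k+1}$: the $\smalllozenge$-operation $\circ$-factors the zeroth pocket, which corresponds to the other way of splitting a chord diagram, producing the second term with its minus sign — the sign coming from the odd/even counting of antisymmetric factors, exactly as in the sign bookkeeping of \eqref{e:rr_s} and \eqref{e:rr_rot}. Third, I would verify that the decomposition is exhaustive and disjoint: every $T\in\mathcal{T}_{k+1}$ falls into exactly one case (either $\smallblacklozenge$-decomposable, or in $\mathcal{S}_{k+1}$ and then either $\smalllozenge$-decomposable or equal to $\langle(0),(0)\rangle$ only when $k=0$), so the recursion closes.

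\textbf{Main obstacle.}
The hard part will be the precise combinatorial dictionary in step 3 of \sref{Definition}{dfnt:CTG}: showing that the interplay of (a) the pocket-tree labelling turning counterclockwise, (b) the level function $L_m$ and its parity, and (c) the direct-vs-opposite tree choice produces exactly the denominators $(E_{p_i}-E_{p_j})$ with the right index pairs and the right overall sign, matching term by term with the $l$-th summand of \eqref{e:rr}. This requires a careful induction hypothesis that is strong enough: not just that $G_{p_0\ldots p_{2m-1}}=\sum_{T\in\mathcal{T}_m}[T]$ holds, but that it holds \emph{with any choice of root}, together with the covariance of $[T]$ under the relabelling induced by changing the root, so that when $T_1$ and $T_2$ are glued their internal monomials are of the claimed form regardless of where their roots sat. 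The cyclic-invariance identities \eqref{rfp} will be needed to reconcile the root-dependent preferred expansion coming from the Catalan tables with the manifestly root-independent recursion; I expect this reconciliation — essentially showing that the Catalan-table expansion is consistent across different root choices — to be the most delicate bookkeeping, and I would handle it by a separate lemma on how $[T]_{p_0\ldots p_{N-1}}$ transforms under a cyclic shift of indices, proved using \eqref{rfp} and the explicit reversion formulas \eqref{circ-factor} and \eqref{bullet-factor} for the $\circ$- and $\bullet$-products.
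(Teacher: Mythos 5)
Your overall skeleton --- induction on $N$, the base case $\langle(0),(0)\rangle$, and the dictionary matching the $\smalllozenge$- and $\smallblacklozenge$-decompositions of a Catalan table to the two families of products in \eqref{e:rr} --- is the right one, and it is essentially how the paper handles the two boundary classes of tables (zeroth pocket equal to $(0)$, respectively first pocket equal to $(0)$). But there is a genuine gap in the step you describe as ``verify that the decomposition is exhaustive and disjoint \dots so the recursion closes.'' The correspondence between summands of \eqref{e:rr} and Catalan tables is \emph{not} one-to-one: a generic table $T$ with $|\tilde e^{(0)}|\geq 1$ \emph{and} $|\tilde e^{(1)}|\geq 1$ admits \emph{both} a $\smalllozenge$- and a $\smallblacklozenge$-decomposition, and is therefore hit by two different summands of the recursion, one from each family at two different split indices $2l$ and $2l'$. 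Worse, neither summand individually equals $[T]$: in the term $\frac{[T_l]_{p_0\ldots p_{2l-1}}[T_{k-l+1}]_{p_{2l}\ldots p_{N-1}}}{(E_{p_0}-E_{p_{2l}})(E_{p_1}-E_{p_{N-1}})}$ the factor $\frac{1}{E_{p_1}-E_{p_{N-1}}}$ is a thread joining a node of the \emph{first} pocket's direct tree ($p_1$) to a node of the \emph{root} pocket's opposite tree ($p_{N-1}$); when the first pocket is nontrivial this is not a thread of any tree prescribed by \sref{Definition}{dfnt:CTG}, so the resulting monomial is not $[T]$ for any Catalan table. Under your partition (assign $T$ to the $\smallblacklozenge$-branch whenever it is $\smallblacklozenge$-decomposable) the term-by-term matching therefore fails for every table outside the two boundary classes.

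The missing idea is precisely the identity \eqref{rfp}, but deployed with the root held \emph{fixed}: writing $\frac{1}{E_{p_h}-E_{p_1}}\cdot\frac{1}{E_{p_1}-E_{p_{N-1}}}=\frac{1}{E_{p_h}-E_{p_1}}\cdot\frac{1}{E_{p_h}-E_{p_{N-1}}}+\frac{1}{E_{p_h}-E_{p_{N-1}}}\cdot\frac{1}{E_{p_1}-E_{p_{N-1}}}$, where $p_0p_h$ is the edge separating the root pocket from the first pocket and the thread $p_h$--$p_1$ is always present in the first pocket's direct tree, trades the bad thread $p_1$--$p_{N-1}$ for the good thread $p_h$--$p_{N-1}$ at the cost of a remainder in which the first pocket's direct tree falls apart. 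That remainder cancels against the analogous remainder produced by the \emph{other} summand (the $\smallblacklozenge$-type one, carrying the minus sign), and the two surviving halves add up to $[T]$. You do invoke \eqref{rfp}, but for a different purpose --- covariance under a change of root --- which is not actually needed: the induction never changes the designated node, since each sub-correlation function appearing in \eqref{e:rr} is simply expanded with its own first listed index as root. Without this fixed-root partial-fraction cancellation between the two families of terms, the induction does not close.
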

\begin{proof}
  We proceed by induction in $N$. For $N=2$ the only term in the
  2-point function corresponds to the Catalan table $\langle
  (0),(0)\rangle\in\mathcal{T}_{1}$. Its associated length tuple $(1,0)$ 
encodes the pocket tree
\begin{equation*}
\begin{picture}(20,13)
\put(10,12){\circle*{2.5}}
\put(10,0){\circle*{2.5}}
\put(10,12){\line(0,-1){12}}
\put(6.5,6){\mbox{\scriptsize$p_0$}}
\put(11,6){\mbox{\scriptsize$p_1$}}
\end{picture}
\end{equation*}
whose single edge corresponds to a factor $G_{p_0p_1}$. 
The Catalan tuples of both pockets have length $0$, so that there is no 
denominator.

\smallskip

For any contribution to $G$ with $N\geq 4$, encoded by  
a length-$N/2$ Catalan table $T_{N/2}$, 
it must be shown that $T_{N/2}$ splits in one or two ways into
smaller Catalan tables whose corresponding monomials 
produce $T_{N/2}$ via  (\ref{e:rr}). There are three cases to
consider.

\bigskip

\noindent [\RNum{1}] Let $T_{k+1}= \langle
(0),\tilde{e}^{(1)},\ldots,\tilde{e}^{(k+1)}\rangle \in
\mathcal{T}_{k+1}$ with $N/2=k+1$. 
\\
It follows from \sref{Definition}{dfnt:box} that there are uniquely defined
Catalan tables $T_{l}= \langle
\tilde{f},\tilde{e}^{(2)},\ldots,\tilde{e}^{(l+1)}\rangle\in\mathcal{T}_l$ and $T_{k-l+1}=
\langle (0),\tilde{e},\tilde{e}^{(l+2)},\ldots,\tilde{e}^{(k+1)}\rangle\in\mathcal{T}_{k-l+1}$
with $\tilde{e}^{(1)}=\tilde{e}\bullet \tilde{f}$ and consequently
$T_{k-l+1}\smallblacklozenge T_{l}=T_{k+1}$. The length $l=\hat{l}$ is 
obtained via (\ref{l-blacklozenge}). Recall that $T_{k+1}$
cannot be obtained by the $\smalllozenge$-composition because 
the zeroth pocket has length $|(0)|=0$. By induction, $T_{l}$
encodes a unique contribution $[T_{l}]_{p_{1}\ldots p_{2l}}$ to
$G_{p_{1}\ldots p_{2l}}$, and $T_{k-l+1}$ encodes a unique
contribution $[T_{k-l+1}]_{p_{0}p_{2l+1}\ldots p_{N-1}}$ to
$G_{p_{0}p_{2l+1}\ldots p_{N-1}}$. We have to show that
\begin{equation*}
-\frac{[T_{l}]_{p_{1}\ldots p_{2l}}[T_{k-l+1}]_{p_{0}p_{2l+1}\ldots
    p_{N-1}}}{(E_{p_{0}}-E_{p_{2l}})(E_{p_{1}}-E_{p_{N-1}})}
\end{equation*}
agrees with $[T_{k+1}]_{p_{0}\ldots p_{N-1}}$ encoded by
$T_{k+1}$. A detail of the pocket tree of $T_{k+1}$ 
sketching $P_0,P_1$ and their attached edges is
\vspace{2ex}
\begin{align}
\parbox{80mm}{\setlength{\unitlength}{1.25mm}\begin{picture}(60,14)
\put(23,15){\circle*{2}}
\put(25,15){\mbox{\scriptsize$P_{0}$}}
\put(23,15){\line(-4,-1){16}}
\put(7,11){\circle*{2}}
\put(3,12){\mbox{\scriptsize$P_{1}$}}
\put(7,11){\line(-3,-2){10}}
\put(7,11){\line(-1,-3){3.5}}
\put(7,11){\line(1,-2){6.5}}
\put(7,11){\line(2,-1){12}}
\put(12,13.5){\mbox{\scriptsize$p_{0}$}}
\put(15,11.5){\mbox{\scriptsize$p_{N{-}1}$}}
\put(-2,7){\mbox{\scriptsize$p_{1}$}}
\put(-2,3){\mbox{\scriptsize$p_{\overline{1}}$}}
\put(12,8.5){\mbox{\scriptsize$p_{N{-}2}$}}
\put(12.8,4.3){\mbox{\scriptsize$p_{\overline{N{-}2}}$}}
\put(4,-1){\mbox{\scriptsize$p_{2l}$}}
\put(6,1.5){\mbox{\scriptsize$p_{2l{+}1}$}}
\put(13,-0.5){\mbox{\scriptsize$p_{\overline{2l{+}1}}$}}
\put(1,4){\mbox{\scriptsize$p_{\overline{2l}}$}}
\thicklines 
\qbezier[3](2,7)(2.8,6)(4.6,5.5)
\qbezier[3](10.5,5.2)(11.8,5.5)(12.8,7.3)
\end{picture}}
\end{align}
Only the gluing of the
  direct and opposite tree encoded by $\tilde{e}=(e_{0},\ldots,e_{p})$
  with the direct and opposite tree encoded by
  $\tilde{f}=(f_{0},\ldots,f_{q})$ via a thread from $p_{0}$ to
  $p_{2l}$ and a thread from $p_{N-1}$ to $p_{1}$ remains to be shown;
edge sides encoding a 2-point function and all other pockets 
are automatic. A symbolic notation is used now to
  sketch the trees. Horizontal dots are used to indicate a general
  direct tree and horizontal dots with vertical dots above them
  indicate an opposite tree. Unspecified threads are indicated by
  dotted half-edges. The four trees mentioned above are depicted as
\begin{align*}
\setlength{\unitlength}{1.25mm}
\parbox{60mm}{\begin{picture}(40,22)(-5,-1)
\put(-5,18){\mbox{$\text{OT}_{\tilde{e}}=$}}
\put(10,15){\textbullet}
\put(15,15){\textbullet}
\put(18.75,15.5){\ldots}
\put(20,16.5){\mbox{$\vdots$}}
\put(25,15){\textbullet}
\put(9,13){\mbox{\scriptsize$p_{0}$}}
\put(14,13){\mbox{\scriptsize$p_{\overline{2l+1}}$}}
\put(24,13){\mbox{\scriptsize$p_{N-2}$}}
\qbezier(11,16)(18.5,27)(26,16)
\put(-5,4){\mbox{$\text{DT}_{\tilde{e}}=$}}
\put(10,2){\textbullet}
\put(15,2){\textbullet}
\put(19.25,2.5){\ldots}
\put(25,2){\textbullet}
\put(5,0){\mbox{\scriptsize$p_{N-1}$}}
\put(14,0){\mbox{\scriptsize$p_{2l+1}$}}
\put(24,0){\mbox{\scriptsize$p_{\overline{N-2}}$}}
\qbezier(11,3)(13.5,7)(16,3)
\thicklines
\qbezier[4](11,3)(11,6)(14,7)
\qbezier[4](16,3)(16,6)(19,7)
\qbezier[4](23,7)(26,6)(26,3)
\end{picture}}
\parbox{60mm}{\setlength{\unitlength}{1.25mm}\begin{picture}(40,22)(-5,-1)
\put(-6,5){\mbox{$\text{DT}_{\tilde{f}}=$}}
\put(10,2){\textbullet}
\put(16.75,2.5){\ldots}
\put(25,2){\textbullet}
\put(9,0){\mbox{\scriptsize$p_{1}$}}
\put(24,0){\mbox{\scriptsize$p_{\overline{2l}}$}}
\thicklines
\qbezier[4](11,3)(11,6)(14,7)
\qbezier[4](23,7)(26,6)(26,3)
\thinlines 
\put(-6,17){\mbox{$\text{OT}_{\tilde{f}}=$}}
\put(10,15){\textbullet}
\put(16.75,15.5){\ldots}
\put(18.2,16.5){\mbox{$\vdots$}}
\put(25,15){\textbullet}
\put(9,13){\mbox{\scriptsize$p_{\overline{1}}$}}
\put(24,13){\mbox{\scriptsize$p_{2l}$}}
\qbezier(11,16)(18.5,27)(26,16)
\end{picture}}
\end{align*}
Here $\tilde{e}$ describes $P_{1}$, at odd distance, so that
even-labelled nodes are connected by the opposite tree. Every edge in
the pocket tree has two sides labelled $p_{r}$ and $p_{s}$, where the
convention of \sref{Remark}{rmk:sidelabels} is used when the other side
label does not matter.

The first edge in the pocket tree has side labels
$p_{0}p_{N-1}$ and descends from the root pocket. The following edge
is $p_{\overline{2l+1}}p_{2l+1}$ where $2l+2\leq \overline{2l+1} \leq
N-2$ is an even number. The
final edge is $p_{N-2}p_{\overline{N-2}}$ where $2l+1\leq
\overline{N-2} \leq N-3$ is an odd number.

Next, $\tilde{f}$ encodes $P_{0}$ in the pocket tree belonging to
$[T_{l}]_{p_{1}\ldots p_{2l}}$. It lies at even distance, but, because
the labels at $G_{p_{1}\ldots p_{2l}}$ start with an odd one,
the odd nodes of $\tilde{f}$ are connected by the direct tree and the
even nodes by the opposite tree.  Again, $2\leq \overline{1}\leq 2l $
denotes an even number and $1\leq \overline{2l}\leq 2l-1 $ an odd
number. When pasting $\tilde{f}$ into $\tilde{e}$, the first edge
remains $p_{0}p_{N-1}$, which descends from the root. Then all edges
from $\tilde{f}$ follow and, finally, the remaining edges of
$\tilde{e}$. Thus, before taking the denominators into account, the
four trees are arranged as:
\begin{align}
\setlength{\unitlength}{1.25mm}
\parbox{60mm}{\begin{picture}(40,28)(-5,-1)
\put(0,19){\mbox{$\text{OT}_{\tilde{e}} {\cup} 
\text{OT}_{\tilde{f}}$:}}
\put(20,17){\textbullet}
\put(19,15){\mbox{\scriptsize$p_{0}$}}
\put(25,17){\textbullet}
\put(31.75,17.5){\ldots}
\put(33,18.5){\mbox{$\vdots$}}
\put(40.25,17){\textbullet}
\put(45,17){\textbullet}
\put(24,15){\mbox{\scriptsize$p_{\overline{1}}$}}
\put(39,15){\mbox{\scriptsize$p_{2l}$}}
\put(45,17){\textbullet}
\put(49,17.5){\ldots}
\put(50.25,18.5){\mbox{$\vdots$}}
\put(45,15){\mbox{\scriptsize$p_{\overline{2l+1}}$}}
\put(55.25,17){\textbullet}
\put(54,15){\mbox{\scriptsize$p_{N-2}$}}
\qbezier(26,18)(33.5,27)(41,18)
\qbezier(21,18)(38.5,35)(56,18)
\put(0,5){\mbox{$\text{DT}_{\tilde{e}} {\cup} 
\text{DT}_{\tilde{f}}$:}}
\put(20,2){\textbullet}
\put(16,0){\mbox{\scriptsize$p_{N-1}$}}
\put(25,2){\textbullet}
\put(25,0){\mbox{\scriptsize$p_{1}$}}
\put(31.25,2.5){\ldots}
 \put(40,2){\textbullet}
 \put(45,2){\textbullet}
 \put(55,2){\textbullet}
 \put(44,0){\mbox{\scriptsize$p_{2l+1}$}}
 \put(39,0){\mbox{\scriptsize$p_{\overline{2l}}$}}
 \put(55,0){\mbox{\scriptsize$p_{\overline{N-2}}$}}
\qbezier(21,3)(33.5,18)(46,3)
 \thicklines
\qbezier[4](21,3)(21,6)(24,8)
\qbezier[4](26,3)(26,6)(29,7)
\qbezier[4](38,7)(41,6)(41,3)
\qbezier[4](53,7)(56,6)(56,3)
\thinlines 
\end{picture}}
\end{align}
The denominator of
$\frac{1}{(E_{p_{0}}-E_{p_{2l}})(E_{p_{N-1}}-E_{p_{1}})}$ (with
rearranged sign) corresponds to a thread between the nodes $p_{0}$ and
$p_{2l}$ and one between the nodes $p_{N-1}$ and $p_{1}$: 
\begin{align}
\setlength{\unitlength}{1.25mm}
\parbox{60mm}{\begin{picture}(40,28)(-5,-1)
\put(0,19){\mbox{$\text{OT}_{\tilde{e} \bullet \tilde{f}}$:}}
\put(20,17){\textbullet}
\put(19,15){\mbox{\scriptsize$p_{0}$}}
\put(25,17){\textbullet}
\put(31.75,17.5){\ldots}
\put(33,18.5){\mbox{$\vdots$}}
\put(40.25,17){\textbullet}
\put(45,17){\textbullet}
\put(24,15){\mbox{\scriptsize$p_{\overline{1}}$}}
\put(39,15){\mbox{\scriptsize$p_{2l}$}}
\put(45,17){\textbullet}
\put(49,17.5){\ldots}
\put(50.25,18.5){\mbox{$\vdots$}}
\put(45,15){\mbox{\scriptsize$p_{\overline{2l+1}}$}}
\put(55.25,17){\textbullet}
\put(54,15){\mbox{\scriptsize$p_{N-2}$}}
\qbezier(26,18)(33.5,27)(41,18)
\qbezier(21,18)(38.5,35)(56,18)
\qbezier(21,18)(38.5,30)(41,18)
\put(0,5){\mbox{$\text{DT}_{\tilde{e}\bullet \tilde{f}}$:}}
\put(20,2){\textbullet}
\put(16,0){\mbox{\scriptsize$p_{N-1}$}}
\put(25,2){\textbullet}
\put(25,0){\mbox{\scriptsize$p_{1}$}}
\put(31.25,2.5){\ldots}
 \put(40,2){\textbullet}
 \put(45,2){\textbullet}
 \put(55,2){\textbullet}
 \put(44,0){\mbox{\scriptsize$p_{2l+1}$}}
 \put(39,0){\mbox{\scriptsize$p_{\overline{2l}}$}}
 \put(55,0){\mbox{\scriptsize$p_{\overline{N-2}}$}}
\qbezier(21,3)(33.5,18)(46,3)
\qbezier(21,3)(23.5,5)(26,3)
 \thicklines
\qbezier[4](21,3)(21,6)(24,8)
\qbezier[4](26,3)(26,6)(29,7)
\qbezier[4](38,7)(41,6)(41,3)
\qbezier[4](53,7)(56,6)(56,3)
\thinlines 
\end{picture}}
\end{align}
The result
is precisely described by
$\tilde{e}\bullet\tilde{f}=(e_{0}+1,f_{0},\ldots,f_{q},e_{1},\ldots,e_{p})$
with \sref{Definitions}{dfnt:rpt} and \ref{dfnt:ot}. Indeed, the
increased zeroth entry corresponds to one additional half-thread
attached to the first node $p_{N-1}$ and one additional half-thread to
$p_{0}$. For the direct tree the rules imply that the next node,
$p_{1}$, is connected to $p_{N-1}$. This is the new thread from the
denominators. The next operations are done within $\tilde{f}$,
labelled $p_{1},\ldots,p_{\overline{2l}}$, without any
change. Arriving at its final node $p_{\overline{2l}}$ all
half-threads of $\tilde{f}$ are connected. The next node, labelled
$p_{2l+1}$, connects to the previous open half-thread, which is the
very first node $p_{N-1}$. These and all the following connections
arise within $\tilde{e}$ and remain unchanged. Similarly, in the
opposite tree, we first open $e_{0}+1$ half-threads at the zeroth node
$p_{0}$. Since $f_{0}>0$, we subsequently open $f_{0}$ half-threads at
the first node $p_{\overline{1}}$. The next operations remain
unchanged, until we arrive at the final node $p_{2l}$ of
$\tilde{f}$. It corresponds to $f_{q}=0$, so that we connect it to all
previous open half-threads, first within $\tilde{f}$. However, because
$e_{0}+1>0$, it is connected by an additional thread to $p_{0}$ and
encodes the denominator of $\frac{1}{E_{p_{0}}-E_{p_{2l}}}$. This
consumes the additional half-thread attached to $p_{0}$. All further
connections are the same as within $\tilde{e}$. In conclusion, we
obtain precisely the Catalan table
$T_{k+1}=\langle(0),\tilde{e}^{(1)}\ldots \tilde{e}^{(N/2)}\rangle$ we
started with.

\bigskip

\noindent [\RNum{2}] Let $T_{k+1}= \langle
\tilde{e}^{(0)},(0),\tilde{e}^{(2)},\ldots,\tilde{e}^{(k+1)}\rangle
\in \mathcal{T}_{k+1}$ and
$N/2=k+1$. \\
There are uniquely defined Catalan tables $T_{l}= \langle
\tilde{e},(0),\tilde{e}^{(2)},\ldots,\tilde{e}^{(l)}\rangle
\in \mathcal{T}_{l}$ and
$T_{k-l+1}= \langle
\tilde{f},\tilde{e}^{(l+1)},\ldots,\tilde{e}^{(k+1)}\rangle
\in \mathcal{T}_{k-l+1}$  with
$\tilde{e}^{(0)}=\tilde{e}\circ \tilde{f}$ and, consequently, $T_{l}
\smalllozenge T_{k-l+1}=T_{k+1}$. The length $l=\hat{k}$ is 
obtained via (\ref{k-lozenge}). Recall that $T_{k+1}$ cannot be
obtained by the $\smallblacklozenge$-composition, because the first
entry has length $|(0)|=0$. By the induction hypothesis, $T_{l}$
encodes a unique contribution $[T_{l}]_{p_{0}\ldots p_{2l-1}}$ to
$G_{p_{0}\ldots p_{2l-1}}$ and $T_{k-l+1}$ encodes a unique
contribution $[T_{k-l+1}]_{p_{2l}\ldots p_{N-1}}$ to
$G_{_{2l}\ldots p_{N-1}}$. It remains to be shown that
\begin{equation*}
\frac{[T_{l}]_{p_{0}\ldots p_{2l-1}}[T_{k-l+1}]_{p_{2l}\ldots
    p_{N-1}}}{(E_{p_{0}}-E_{p_{2l}})(E_{p_{1}}-E_{p_{N-1}})}
\end{equation*}
agrees with $[T_{k+1}]_{p_{0}\ldots p_{N-1}}$ encoded by $T_{k+1}$.
A detail of the pocket tree of $T_{k+1}$ 
sketching $P_0,P_1$ and their attached edges is
\begin{align}
\parbox{80mm}{\setlength{\unitlength}{1.25mm}\begin{picture}(60,12)
\put(31,12){\circle*{2}}
\put(27,13){\mbox{\scriptsize$P_{0}$}}
\put(31,12){\line(-4,-1){16}}
\put(31,12){\line(-1,-1){10}}
\put(31,12){\line(1,-6){2}}
\put(31,12){\line(4,-3){12}}
\put(31,12){\line(6,-1){20}}
\put(15,8){\circle*{2}}
\put(11,7){\mbox{\scriptsize$P_{1}$}}
\put(19,10){\mbox{\scriptsize$p_{0}$}}
\put(19,7){\mbox{\scriptsize$p_{1}$}}
\put(23,6.5){\mbox{\scriptsize$p_{2}$}}
\put(25,4){\mbox{\scriptsize$p_{\overline{2}}$}}
\put(27,0){\mbox{\scriptsize$p_{\overline{2l{-}1}}$}}
\put(33,2){\mbox{\scriptsize$p_{2l{-}1}$}}
\put(35,6){\mbox{\scriptsize$p_{2l}$}}
\put(40.5,5){\mbox{\scriptsize$p_{\overline{2l}}$}}
\put(42,8){\mbox{\scriptsize$p_{\overline{N{-}1}}$}}
\put(40,11){\mbox{\scriptsize$p_{N{-}1}$}}
\thicklines 
\qbezier[3](39,7.5)(40.2,8.2)(40.8,9.8)
\qbezier[3](28,8)(29,7)(31,7)
\end{picture}}
\end{align}
As in case [\RNum{1}] only the gluing of the direct and opposite tree
encoded by $\tilde{e}=(e_{0},\ldots,e_{p})$ with the direct and
opposite tree encoded by $\tilde{f}=(f_{0},\ldots,f_{q})$ via a thread
from $p_{0}$ to $p_{2l}$ and a thread from $p_{1}$ to $p_{N-1}$ must
be demonstrated. Everything else is automatic. These trees are
\begin{align}
\setlength{\unitlength}{1.25mm}
\parbox{60mm}{\begin{picture}(40,20)
\put(-5,5){\mbox{$\text{OT}_{\tilde{e}}=$}}
\put(10,2){\textbullet}
\put(15,2){\textbullet}
\put(19,2.5){\ldots}
\put(20.25,3.5){\mbox{$\vdots$}}
\put(25,2){\textbullet}
\put(9,0){\mbox{\scriptsize$p_{1}$}}
\put(14,0){\mbox{\scriptsize$p_{\overline{2}}$}}
\put(24,0){\mbox{\scriptsize$p_{2l-1}$}}
\qbezier(11,3)(18.5,14)(26,3)
\put(-5,17){\mbox{$\text{DT}_{\tilde{e}}=$}}
\put(10,15){\textbullet}
\put(15.25,15){\textbullet}
\put(19,15.5){\ldots}
\put(25,15){\textbullet}
\put(9,13){\mbox{\scriptsize$p_{0}$}}
\put(15,13){\mbox{\scriptsize$p_{2}$}}
\put(24,13){\mbox{\scriptsize$p_{\overline{2l-1}}$}}
\qbezier(11,16)(13.5,20)(16,16)
\thicklines
\qbezier[4](11,16)(11,19)(14,20)
\qbezier[4](16,16)(16,19)(19,20)
\qbezier[4](23,20)(26,19)(26,16)
\end{picture}}
\parbox{60mm}{\setlength{\unitlength}{1.25mm}\begin{picture}(40,25)(-5,-1)
\put(-6,17){\mbox{$\text{DT}_{\tilde{f}}=$}}
\put(10,15){\textbullet}
\put(16.75,15.5){\ldots}
\put(25,15){\textbullet}
\put(9,13){\mbox{\scriptsize$p_{2l}$}}
\put(24,13){\mbox{\scriptsize$p_{\overline{N-1}}$}}
\thicklines
\qbezier[4](11,16)(11,19)(14,20)
\qbezier[4](23,20)(26,19)(26,16)
\thinlines 
\put(-6,5){\mbox{$\text{OT}_{\tilde{f}}=$}}
\put(10,2){\textbullet}
\put(16.75,2.5){\ldots}
\put(18,3.5){\mbox{$\vdots$}}
\put(25,2){\textbullet}
\put(9,0){\mbox{\scriptsize$p_{\overline{2l}}$}}
\put(24,0){\mbox{\scriptsize$p_{N-1}$}}
\qbezier(11,3)(18.5,14)(26,3)
\end{picture}}
\label{e:case-2sep}
\end{align}
The notation is the same as in case [\RNum{1}]. The first pocket
$P_{1}$, described by the Catalan tuple $(0)$, is only $1$-valent so
that the first edge is labelled $p_{0}p_{1}$. The direct trees in
(\ref{e:case-2sep}) are put next to each other and a thread between
$p_{0}$ and $p_{2l}$ is drawn for the denominator of
$\frac{1}{E_{p_{0}}-E_{p_{2l}}}$. Similarly, the opposite trees in
(\ref{e:case-2sep}) are put next to each other and a thread between
$p_{1}$ and $p_{N-1}$ is drawn for the denominator of
$\frac{1}{E_{p_{1}}-E_{p_{N-1}}}$:
\begin{align*}
\parbox{60mm}{\setlength{\unitlength}{1.25mm}\begin{picture}(40,25)(-5,-1)
\put(-5,5){\mbox{$\text{OT}_{\tilde{e}\circ \tilde{f}}=$}}
\put(15.25,2){\textbullet}
\put(20,2){\textbullet}
\put(23.75,2.5){\ldots}
\put(25,3.5){\mbox{$\vdots$}}
\put(30,2){\textbullet}
\put(14,0){\mbox{\scriptsize$p_{1}$}}
\put(19,0){\mbox{\scriptsize$p_{\overline{2}}$}}
\put(29,0){\mbox{\scriptsize$p_{2l-1}$}}
\qbezier(16,3)(23.5,14)(31,3)
\put(40,2){\textbullet}
\put(46.75,2.5){\ldots}
\put(48,3.5){\mbox{$\vdots$}}
\put(55.5,2){\textbullet}
\put(39,0){\mbox{\scriptsize$p_{\overline{2l}}$}}
\put(54,0){\mbox{\scriptsize$p_{N-1}$}}
\qbezier(41,3)(48.5,14)(56,3)
\put(28.5,16){\oval(25,13)[t]}
\put(-5,17){\mbox{$\text{DT}_{\tilde{e}\circ\tilde{f}}=$}}
\put(15.25,15){\textbullet}
\put(20.25,15){\textbullet}
\put(23.75,15.5){\ldots}
\put(30,15){\textbullet}
\put(14,13){\mbox{\scriptsize$p_{0}$}}
\put(20,13){\mbox{\scriptsize$p_{2}$}}
\put(29,13){\mbox{\scriptsize$p_{\overline{2l-1}}$}}
\qbezier(16,16)(18.5,20)(21,16)
\thicklines
\qbezier[4](16,16)(16,19)(19,20)
\qbezier[4](21,16)(21,19)(24,20)
\qbezier[4](28,20)(31,19)(31,16)
\put(40.5,15){\textbullet}
\put(46.75,15.5){\ldots}
\put(55.5,15){\textbullet}
\put(39,13){\mbox{\scriptsize$p_{2l}$}}
\put(54,13){\mbox{\scriptsize$p_{\overline{N-1}}$}}
\thicklines
\qbezier[4](41,16)(41,19)(44,20)
\qbezier[4](53,20)(56,19)(56,16)
\thinlines 
\put(36,3){\oval(40,15)[t]}
\end{picture}}
\end{align*}
The result are precisely the direct and opposite trees of
the composition
$\tilde{e}\circ\tilde{f}=(e_{0}+1,e_{1},\ldots,e_{p},f_{0},\ldots,f_{q})$. The
increase $e_{0}\rightarrow e_{0}+1$ opens an additional half-thread at
$b_{0}$ and an additional half-thread at $p_{1}$. In the direct tree,
this new half-thread is not used by $e_{1},\ldots,e_{p}$. Only when we
are moving to $f_{0}$, labelled $p_{2l}$, we have to connect it with
the last open half-thread, i.e.\ with $p_{0}$. After that the remaining
operations are unchanged compared with $\tilde{f}$. In the opposite
tree, the additional half-thread at $p_{1}$ is not used in
$e_{1},\ldots,e_{p}$. Because $f_{0}$, labelled
$p_{\overline{2l}}$, opens enough half-threads, it is not consumed by
$f_{0},\ldots,f_{q-1}$ either. Then, the last node $f_{q}$, labelled
$p_{N-1}$, successively connects to all nodes with open half-threads,
including $p_{1}$. In conclusion, we obtain precisely the Catalan
table $T_{k+1}=\langle\tilde{e}^{(0)},(0),\tilde{e}^{(2)}\ldots
\tilde{e}^{(N/2)}\rangle$ we started with.

\bigskip

\noindent [\RNum{3}] Finally, we consider a general
$T_{k+1}=\langle\tilde{e}^{(0)},\tilde{e}^{(1)},\tilde{e}^{(2)},\ldots
\tilde{e}^{(k+1)}\rangle\in\mathcal{T}_{k+1}$ with $k+1=N/2$, $|\tilde{e}^{(0)}|\geq1 $
and $|\tilde{e}^{(1)}|\geq 1$.
There are uniquely defined Catalan tables $T_{l}= \langle
\tilde{e},\tilde{e}^{(1)},\tilde{e}^{(2)},\ldots,\tilde{e}^{(l)}\rangle\in\mathcal{T}_l$
and $T_{k-l+1}= \langle
\tilde{f},\tilde{e}^{(l+1)},\ldots,\tilde{e}^{(k+1)}\rangle\in\mathcal{T}_{k-l+1}$ with
$\tilde{e}^{(0)}=\tilde{e}\circ \tilde{f}$ and consequently $T_{l}
\smalllozenge T_{k-l+1}=T_{k+1}$. Moreover, uniquely defined Catalan
tables $T_{l'}= \langle
\tilde{f}',\tilde{e}^{(2)},\ldots,\tilde{e}^{(l'+1)}\rangle\in\mathcal{T}_{l'}$ and
$T_{k-l'+1}= \langle
\tilde{e}^{(0)},\tilde{e}',\tilde{e}^{(l'+2)},\ldots,\tilde{e}^{(k+1)}\rangle\in\mathcal{T}_{k-l'+1}$
exist, such that $\tilde{e}^{(1)}=\tilde{e}'\bullet \tilde{f}'$ and
consequently $T_{k-l'+1}\smallblacklozenge T_{l'}=T_{k+1}$. We
necessarily have $l'\leq k-1$ and $l\geq 2$, because $l'=k$
corresponds to case [\RNum{1}] and $l=1$ to case [\RNum{2}]. By the
induction hypothesis, these Catalan subtables encode unique
contributions $[T_{l}]_{p_{0}\ldots p_{2l-1}}$ to
$G_{p_{0}\ldots p_{2l-1}}$, $[T_{k-l+1}]_{p_{2l}\ldots
  p_{N-1}}$ to $G_{p_{2l}\ldots p_{N-1}}$,
$[T_{l'}]_{p_{1}\ldots p_{2l'}}$ to $G_{p_{1}\ldots p_{2l'}}$
and $[T_{k-l'+1}]_{p_{0}p_{2l'+1}\ldots p_{N-1}}$ to
$G_{p_{0}p_{2l'+1}\ldots p_{N-1}}$. We have to show that
\begin{align}
&\hspace{-8mm}\frac{[T_{l}]_{p_{0}\ldots p_{2l-1}}[T_{N/2-l}]_{p_{2l}\ldots p_{N-1}}}
{(E_{p_{0}}-E_{p_{2l}})(E_{p_{1}}-E_{p_{N-1}})}-\frac{[T_{l'}]_{p_{1}\ldots p_{2l'}}
[T_{N/2-l'}]_{p_{0}p_{2l'+1}\ldots p_{N-1}}}{(E_{p_{0}}-E_{p_{2l'}})(E_{p_{1}}-E_{p_{N-1}})}\label{e:case3-G}
\end{align}
agrees with $[T_{k+1}]_{b_{0}\ldots,b_{N-1}}$. 

\smallskip

In the pocket tree of $T_{k+1}$, there must be an edge with side labels 
$p_{0}p_{h}$, where $3\leq h\leq N-3$ and $h$ is odd. Here is a detail
of the pocket tree of $T_{k+1}$ showing $P_0,P_1$:
\vspace{2ex}
\begin{align}
\parbox{80mm}{\setlength{\unitlength}{1.25mm}\begin{picture}(60,17)
\put(31,17){\circle*{2}}
\put(27,18){\mbox{\scriptsize$P_{0}$}}
\put(31,17){\line(-4,-1){24}}
\put(31,17){\line(-1,-1){10}}
\put(31,17){\line(1,-6){2}}
\put(31,17){\line(4,-3){12}}
\put(31,17){\line(6,-1){20}}
\put(7,11){\circle*{2}}
\put(3,12){\mbox{\scriptsize$P_{1}$}}
\put(7,11){\line(-3,-2){10}}
\put(7,11){\line(-1,-3){3.5}}
\put(7,11){\line(1,-2){6.5}}
\put(7,11){\line(2,-1){12}}
\put(12,13.5){\mbox{\scriptsize$p_{0}$}}
\put(15,11.5){\mbox{\scriptsize$p_{h}$}}
\put(-2,7){\mbox{\scriptsize$p_{1}$}}
\put(-2,3){\mbox{\scriptsize$p_{\overline{1}}$}}
\put(12,8.5){\mbox{\scriptsize$p_{h{-}1}$}}
\put(13,4.5){\mbox{\scriptsize$p_{\overline{h{-}1}}$}}
\put(20,11.5){\mbox{\scriptsize$p_{h{+}1}$}}
\put(25,9){\mbox{\scriptsize$p_{\overline{h{+}1}}$}}
\put(27,5){\mbox{\scriptsize$p_{\overline{2l{-}1}}$}}
\put(33,7){\mbox{\scriptsize$p_{2l{-}1}$}}
\put(35,11){\mbox{\scriptsize$p_{2l}$}}
\put(40.5,10){\mbox{\scriptsize$p_{\overline{2l}}$}}
\put(42,13){\mbox{\scriptsize$p_{\overline{N{-}1}}$}}
\put(40,16){\mbox{\scriptsize$p_{N{-}1}$}}
\put(3.5,-1){\mbox{\scriptsize$p_{2l'}$}}
\put(5.8,1.5){\mbox{\scriptsize$p_{2l'\!{+}1}$}}
\put(13,-0.5){\mbox{\scriptsize$p_{\overline{2l'{+}1}}$}}
\put(1,4){\mbox{\scriptsize$p_{\overline{2l'}}$}}
\thicklines 
\qbezier[3](39,12.5)(40.2,13.2)(40.8,14.8)
\qbezier[3](28,13)(29,12)(31,12)
\qbezier[3](2,7)(2.8,6)(4.6,5.5)
\qbezier[3](10.5,5.2)(11.8,5.5)(12.8,7.3)
\end{picture}}
\end{align}
The direct and
opposite trees for $\tilde{e},\tilde{f}$ and $\tilde{e}^{(1)}$ can 
be sketched as
\begin{align}
\parbox{80mm}{\setlength{\unitlength}{1.25mm}\begin{picture}(60,25)(8,-1)
\put(0,5){\mbox{$\text{OT}_{\tilde{e}}{\cup}\text{OT}_{\tilde{f}}=$}}
\put(20,2){\textbullet}
\put(25,2){\textbullet}
\put(29,2.5){\ldots}
\put(30.25,3.5){\mbox{$\vdots$}}
\put(35,2){\textbullet}
\put(19,0){\mbox{\scriptsize$p_{h}$}}
\put(24,0){\mbox{\scriptsize$p_{\overline{h+1}}$}}
\put(34,0){\mbox{\scriptsize$p_{2l-1}$}}
\qbezier(21,3)(28.5,14)(36,3)
\put(45,2){\textbullet}
\put(51.75,2.5){\ldots}
\put(53.1,3.5){\mbox{$\vdots$}}
\put(60,2){\textbullet}
\put(44,0){\mbox{\scriptsize$p_{\overline{2l}}$}}
\put(59,0){\mbox{\scriptsize$p_{N-1}$}}
\qbezier(46,3)(53.5,14)(61,3)
\put(45,15){\textbullet}
\put(51.57,15.5){\ldots}
\put(60,15){\textbullet}
\put(44,13){\mbox{\scriptsize$p_{2l}$}}
\put(59,13){\mbox{\scriptsize$p_{\overline{N-1}}$}}
\thicklines
\qbezier[4](46,16)(46,19)(49,20)
\qbezier[4](58,20)(61,19)(61,16)
\thinlines 
\put(0,17){\mbox{$\text{DT}_{\tilde{e}}{\cup}\text{DT}_{\tilde{f}}=$}}
\put(20,15){\textbullet}
\put(25,15){\textbullet}
\put(29,15.5){\ldots}
\put(35,15){\textbullet}
\put(19,13){\mbox{\scriptsize$p_{0}$}}
\put(24,13){\mbox{\scriptsize$p_{h+1}$}}
\put(34,13){\mbox{\scriptsize$p_{\overline{2l-1}}$}}
\qbezier(21,16)(23.5,20)(26,16)
\thicklines
\qbezier[4](21,16)(21,19)(24,20)
\qbezier[4](26,16)(26,19)(29,20)
\qbezier[4](33,20)(36,19)(36,16)
\end{picture}}\quad
\parbox{36mm}{\setlength{\unitlength}{1.25mm}\begin{picture}(20,25)(8,-1)
\put(3,17){\mbox{$\text{OT}_{\tilde{e}^{(1)}}=$}}
\put(15,15){\textbullet}
\put(20,15){\textbullet}
\put(23.75,15.5){\ldots}
\put(25,16.5){\mbox{$\vdots$}}
\put(30,15){\textbullet}
\put(14,13){\mbox{\scriptsize$p_{0}$}}
\put(20,13){\mbox{\scriptsize$p_{\overline{1}}$}}
\put(29,13){\mbox{\scriptsize$p_{h-1}$}}
\qbezier(16,16)(23.5,27)(31,16)
\put(3,5){\mbox{$\text{DT}_{\tilde{e}^{(1)}}=$}}
\put(15,2){\textbullet}
\put(20,2){\textbullet}
\put(23.75,2.5){\ldots}
\put(30,2){\textbullet}
\put(14,0){\mbox{\scriptsize$p_{h}$}}
\put(20,0){\mbox{\scriptsize$p_{1}$}}
\put(29,0){\mbox{\scriptsize$p_{\overline{h-1}}$}}
\qbezier(16,3)(18.5,7)(21,3)
\thicklines
\qbezier[4](16,3)(16,6)(19,7)
\qbezier[4](21,3)(21,6)(24,7)
\qbezier[4](28,7)(31,6)(31,3)
\end{picture}}
\label{e:case-3sepa}
\end{align}
The denominators of
$\frac{1}{(E_{p_{0}}-E_{p_{2l}})(E_{p_{1}}-E_{p_{N-1}})}$ in
(\ref{e:case3-G}) add threads from $p_{0}$ to $p_{2l}$ and from
$p_{1}$ to $p_{N-1}$. The first one connects the direct trees for
$\tilde{e}\cup \tilde{f}$ to the direct tree encoded by
$\tilde{e}^{(0)}=\tilde{e}\circ \tilde{f}$. The second thread does \emph{not}
give a valid composition of the opposite trees for $\tilde{e}\cup
\tilde{f}$.

This is a problem. The solution is to split this
contribution. Half of the contribution is sacrificed to bring the
other half in the desired form. Afterwards, the same procedure is
repeated for the other term in (\ref{e:case3-G}) with a
minus-sign. The remainders are the same and cancel each other, whereas
the other halves add up to yield the sought for monomial.

Returning to trees, we note that in the direct tree for the pocket
$\tilde{e}^{(1)}$ there is always a thread from $p_{h}$ to $p_{1}$,
encoding a factor $\frac{1}{E_{p_{h}}-E_{p_{1}}}$. With the factor
$\frac{1}{E_{p_{1}}-E_{p_{N-1}}}$ it fulfils
\begin{equation}
\frac{1}{E_{p_{h}}{-}E_{p_{1}}}\cdot \frac{1}{E_{p_{1}}{-}E_{p_{N-1}}}
= \frac{1}{E_{p_{h}}{-}E_{p_{1}}}\cdot
\frac{1}{E_{p_{h}}{-}E_{p_{N-1}}}
+\frac{1}{E_{p_{h}}{-}E_{p_{N-1}}}\cdot
\frac{1}{E_{p_{1}}{-}E_{p_{N-1}}}\;.
\label{e:ident-E}
\end{equation}
The first term on the right-hand side of (\ref{e:ident-E}) leaves the
direct tree $\text{DT}_{\tilde{e}^{(1)}}$ as it is and connects the
parts of $\text{OT}_{\tilde{e}}{\cup}\text{OT}_{\tilde{f}}$ via the
thread from $p_{h}$ to $p_{N-1}$ to form
$\text{OT}_{\tilde{e}^{(0)}}$, where $\tilde{e}^{(0)}=\tilde{e}\circ\tilde{f}$.

\smallskip

\noindent
[*] The final term in (\ref{e:ident-E}) also unites
$\text{OT}_{\tilde{e}}{\cup}\text{OT}_{\tilde{f}}$ and forms
$\text{OT}_{\tilde{e}^{(0)}}$, but it removes in
$\text{DT}_{\tilde{e}^{(1)}}$ the thread between $p_{h}$ and
$p_{1}$. It follows from $\tilde{e}^{(1)}=\tilde{e}'\bullet
\tilde{f}'$ that this tree falls apart into the subtrees
$\text{DT}_{\tilde{e}'}$, containing $p_{h}$, and
$\text{DT}_{\tilde{f}'}$, which contains $p_{1}$. These are multiplied
by a factor $\frac{1}{E_{p_{1}}-E_{p_{N-1}}}$. The second term in
(\ref{e:case3-G}) will remove them.

Indeed, direct and opposite trees for $\tilde{e}^{(0)},\tilde{e}'$ 
and $\tilde{f}'$ can be sketched as 
\begin{align}
\parbox{50mm}{\setlength{\unitlength}{1.25mm}\begin{picture}(30,25)(7,-1)
\put(2,5){\mbox{$\text{OT}_{\tilde{e}^{(0)}}=$}}
\put(15,2){\textbullet}
\put(20,2){\textbullet}
\put(23.75,2.5){\ldots}
\put(25.2,3.5){\mbox{$\vdots$}}
\put(30,2){\textbullet}
\put(14,0){\mbox{\scriptsize$p_{h}$}}
\put(19,0){\mbox{\scriptsize$p_{\overline{h+1}}$}}
\put(29,0){\mbox{\scriptsize$p_{N-1}$}}
\qbezier(16,3)(23.5,14)(31,3)
\put(2,17){\mbox{$\text{DT}_{\tilde{e}^{(0)}}=$}}
\put(15,15){\textbullet}
\put(20,15){\textbullet}
\put(23.75,15.5){\ldots}
\put(30,15){\textbullet}
\put(15,13){\mbox{\scriptsize$p_{0}$}}
\put(19,13){\mbox{\scriptsize$p_{h+1}$}}
\put(29,13){\mbox{\scriptsize$p_{\overline{N-1}}$}}
\qbezier(16,16)(18.5,20)(21,16)
\thicklines
\qbezier[4](16,16)(16,19)(19,20)
\qbezier[4](21,16)(21,19)(24,20)
\qbezier[4](28,20)(31,19)(31,16)
\end{picture}}\quad
\parbox{60mm}{\setlength{\unitlength}{1.25mm}\begin{picture}(40,25)(14,-1)
\put(5,17){\mbox{$\text{OT}_{\tilde{e}'}
{\cup} \text{OT}_{\tilde{f}'}\,{=}$}}
\put(25,15){\textbullet}
\put(30,15){\textbullet}
\put(33.75,15.5){\ldots}
\put(35.2,16.3){\mbox{$\vdots$}}
\put(40,15){\textbullet}
\put(50,15){\textbullet}
\put(60,15){\textbullet}
\put(53.75,15.5){\ldots}
\put(55.3,16.5){\mbox{$\vdots$}}
\put(24,13){\mbox{\scriptsize$p_{0}$}}
\put(30,13){\mbox{\scriptsize$p_{\overline{1}}$}}
\put(39,13){\mbox{\scriptsize$p_{2l'}$}}
\put(47,13){\mbox{\scriptsize$p_{\overline{2l'+1}}$}}
\put(59,13){\mbox{\scriptsize$p_{h-1}$}}
\qbezier(31,16)(35,24)(41,16)
\put(43.5,16){\oval(35,12)[t]}
\put(5,5){\mbox{$\text{DT}_{\tilde{e}'}
{\cup} \text{DT}_{\tilde{f}'}\,{=}$}}
\put(25,2){\textbullet}
\put(30,2){\textbullet}
\put(33.75,2.5){\ldots}
\put(40,2){\textbullet}
\put(50,2){\textbullet}
\put(60,2){\textbullet}
\put(24,0){\mbox{\scriptsize$p_{h}$}}
\put(30,0){\mbox{\scriptsize$p_{1}$}}
\put(39,0){\mbox{\scriptsize$p_{\overline{2l}}$}}
\put(47,0){\mbox{\scriptsize$p_{2l'+1}$}}
\put(59,0){\mbox{\scriptsize$p_{\overline{h-1}}$}}
\put(38.5,3){\oval(25,12)[t]}
\thicklines
\qbezier[4](26,3)(26,6)(27,8)
\qbezier[4](31,3)(31,6)(34,7)
\qbezier[4](38,7)(41,6)(41,3)
\qbezier[4](51,3)(51,6)(54,7)
\qbezier[4](58,7)(61,6)(61,3)
\end{picture}}
\end{align}
The direct tree $\text{DT}_{\tilde{e}^{(0)}}$ remains intact and the
thread from $p_{0}$ to $p_{2l'}$ encoded in the factor
$\frac{1}{(E_{p_{0}}-E_{p_{2l'}})}$ in (\ref{e:case3-G}) connects the
opposite trees for $\tilde{e}'\cup \tilde{f}'$ to form the opposite
tree for $\tilde{e}^{(1)}=\tilde{e}'\bullet \tilde{f}'$. The direct
trees $\text{DT}_{\tilde{e}'}{\cup} \text{DT}_{\tilde{f}'}$ remain
disconnected and are multiplied by $\frac{1}{(E_{p_{1}}-E_{p_{N-1}})}$
from (\ref{e:case3-G}). With the minus-sign from (\ref{e:case3-G})
they cancel the terms described in [*]. The other trees combined yield
precisely the direct and opposite trees for both $\tilde{e}^{(0)}$ and
$\tilde{e}^{(1)}$, so that the single Catalan table we started with is
retrieved.

This completes the proof. Bijectivity between Catalan tables and
contributing terms to $(N'{<}N)$-point functions is essential: Assuming
the above construction [\RNum{1}]--[\RNum{3}] missed Catalan subtables
$T_{l}, T_{N/2-l}$, then their composition $T_{l} \smalllozenge
T_{N/2-l}$ would be a new Catalan table of length $N/2$. However, all
Catalan tables of length $N/2$ are considered. Similarly for $T_{l'}
\smallblacklozenge T_{N/2-l'}$.
\end{proof}

This theorem shows that there is a one-to-one correspondence between
Catalan tables and the diagrams/terms in
$G_{p_{0}\ldots,p_{N-1}}$ with designated node $p_{0}$. The
choice of designated node does not influence $G_{p_{0}\ldots,p_{N-1}}$, but it does
alter its expansion.

\section{Summary}
We derived all SDEs for the quartic matrix field theory model and showed that they have one of the 
following particular forms (\sref{Proposition}{Prop:quart11P} 
and \sref{Proposition}{Prop:quart21P}) 
\begin{align}\label{K1form}
 \hat{K}^1_p G^{(g)}_{|p|\mathcal{J}|}=&g^{g,\mathcal{J}}_{inh}\\\label{K2form}
 \hat{K}^2_{q_1} G^{(g)}_{|q_1q_2|\mathcal{J}|}=&g^{g,\mathcal{J}}_{inh},
 \end{align}
if the base point is taken from a boundary component of length 1 or 2.
The operators $\hat{K}^i$ are linear operators defined in \eqref{eq:quartHOOP1} and \eqref{eq:quartHOOP2}, and 
$g^{g,\mathcal{J}}_{inh}$ the inhomogeneous part of the equation depending 
on correlation functions of less topology than $G^{(g)}$.

Especially, the form \eqref{K2form} is achieved by the operator $\mathcal{T}_q$ 
defined in \sref{Definition}{Def:Dp} which rewrites $\frac{1}{V}\sum_n G^{(g)}_{|q_1n|\mathcal{J}|}$
as a derivative such that $\hat{K}^2$ gets really the form \eqref{eq:quartHOOP2}. 

If a correlation function has boundary components of lengths $\geq 3$, it can be recursively written 
by applying \sref{Proposition}{Prop:quartRec} through correlation functions of 
boundary components of length one and/or two. 
The nonlinear rhs of \sref{Proposition}{Prop:quartRec} is strictly 
different to the linear rhs of the recursive equation
\sref{Proposition}{Prop:CubisRecur}
of the cubic model.

The main achievement for the quartic model was to find the exact solution of the planar 2-point function ($g=0$) for any 
spectral dimension $\D<6$ with an appropriate renormalisation. For $4\leq\D<6$, the quartic 
matrix field theory model is just-renormalisable. 
It was known that the angle function on the $D=2$ and $D=4$ Moyal space plays an 
important r\^ole, which gave us the right ansatz for the general case with the measure $\varrho_0(x)$, and the 
functions $R_D(z)$ and $I(w)$. The angle function is then computed by \sref{Theorem}{thm:tau}, where the 
important observation was to deform the measure to $\varrho_\lambda(x)=\varrho_0(R_D(x))$. The solution
of the planar 2-point function is summarised in \sref{Theorem}{prop:HT}, which is more or less 
straightforward to compute from the solution of the angle function. The \sref{Example}{Ex:D2} on the $D=2$ Moyal space 
recovers the result of \cite{Panzer:2018tvy}.

Two more examples are of particular interest, the solution for finite matrices ($\D=0$) discussed in \sref{Sec.}{sec.fm} with 
\sref{Theorem}{thm:main} and 
the solution on the $D=4$ Moyal space discussed in \sref{Sec.}{Sec.4dSol}.

For finite matrices, the planar 2-point function is given as a rational function. Furthermore, the rational function
$R(z)$ (defined in \sref{Theorem}{thm:main}) induces a preferred variable transformation $x\mapsto z=R^{-1}(x)$ which 
brings the SDEs \eqref{K1form} and \eqref{K2form} after analytic continuation to the form
\begin{align*}
 &(R(z)-R(-z))\G^{(g)}(z|\tilde{\mathcal{J}})-\frac{\lambda}{V}\sum_{n=0}^{\mN'}
 r_n\frac{\G^{(g)}(\varepsilon_n|\tilde{\mathcal{J}})}{R(\varepsilon_n)-R(z)}=g^{g,\tilde{\mathcal{J}}}_{inh}\\
 &(R(z)-R(-w))\G^{(g)}(z,w|\tilde{\mathcal{J}})-\frac{\lambda}{V}\sum_{n=0}^{\mN'}
 r_n\frac{\G^{(g)}(\varepsilon_n,w|\tilde{\mathcal{J}})}{R(\varepsilon_n)-R(z)}=g^{g,\tilde{\mathcal{J}}}_{inh},\\
&\text{where}\qquad  \G^{(g)}(\varepsilon_{p_1^1},
 \varepsilon_{p_2^1},..,\varepsilon_{p_{N_1}^1}|..| \varepsilon_{p_1^b},..,\varepsilon_{p_{N_b}^b})=
 G^{(g)}_{|p_1^1..p_{N_1}^1|..|p_1^b..p_{N_b}^b|},
\end{align*}
with $\varepsilon_p=R^{-1}(E_p)$. The new SDEs (after variable transformation) indicate a possible relation to topological recursion
with $x(z)=R(z)$ and $y(z)=-R(-z)$
since the form of the upper
equation is identical to the cubic model with the only difference of taking $x(z)=z^2-c$.

The second example with the solution on the $D=4$ Moyal space provides even more fascinating facts.
The deformed measure $\varrho_\lambda$ coincides with the function $R_4$ via $\varrho_\lambda(x)=\varrho_0(R_4(x))=R_4(x)$
which leads to Fredholm-type integral equation \eqref{Fred} solved in \sref{Proposition}{Prop:Jx-final}. The resulting 
hypergeometric function for the deformed 
measure changes the spectral dimension $\D\to\D_\lambda= 4-2\frac{\arcsin(\lambda\pi)}
{\pi}$ which avoids the triviality problem for this particular model, since $R_4$ has 
now a global inversion on $\R_+$ for $\lambda>0$
which is necessary for higher topological sectors. The explicit result of $R_4=\varrho_\lambda$ has 
a natural choice for $\mu^2$ in \sref{Proposition}{Prop:Jx-final}, which is discussed in very detail in 
\sref{App.}{App:Solv2}. This natural choice forces $\frac{\partial}{\partial a}G(a,0)\vert_{a=0}=c_\lambda\neq-1$.

Looking at the perturbative expansion, it becomes quite fast clear that at each order in $\lambda$ hyperlogarithms survive which 
induces a much higher complexity in comparison to the cubic model (no hyperlogarithms survive). Fixing the 
boundary condition by adjusting $\mu^2$ in \sref{Proposition}{Prop:Jx-final} to obey 
$\frac{\partial}{\partial a}G(a,0)\vert_{a=0}=-1$, the perturbative expansion of the exact solution 
and the perturbative expansion through Zimmermann's forest formula coincide perfectly, which is 
shown in \sref{App.}{App:PertQuartic}. The surviving hyperlogarithms of the perturbative expansion are due to the 
integral representation of the exact solution given in \sref{Theorem}{prop:HT} after inserting $R_4$. Notice also 
that the Feynman diagrammatic expansion admits on the $D=4$ Moyal space the renormalon problem (see. \sref{Remark}{rmk:renorm1})
and the number of planar graphs with one boundary grows for $n$ vertices with $\mathcal{O}(n!)$. However,
we provide here a second example, where the expansion with Zimmermann's forest formula 
for a just-renormalisable model is resummable, and has additionally
a potential which is bounded from below.

Any planar $N$-point function is given recursively through the 2-point function via \eqref{e:rr}. Writing out the 
recursion, cancellations appear which are related to non-crossing chord diagrams 
(see \sref{App.}{App:Chord}). The combinatorial structure is captured  
by Catalan tables defined in \sref{Definition}{dfnt:cattab} which can be understood as Catalan tuples of 
Catalan tuples. The bijection between the recursive equation after cancellations and Catalan tables is given in 
\sref{Theorem}{thrm:RC}, which probably has a higher topological generalisation due 
to the general recursion \eqref{eq:quartRec}
for a correlation function of genus $g$ and $b$ boundary components.


\chapter{Conclusion and Outlook}
One of the main achievements of this thesis was the generalisation of the solution of the Kontsevich model
to the spectral dimensions $\D<8$. We found a method to derive intersection numbers 
on the moduli space $\overline{\mathcal{M}}_{g,b}$ with partial differential equations after having applied the 
theorem of Kontsevich \cite{Kontsevich:1992ti}. The definition of $c$ is changed 
by the renormalisation procedure, but the structure of all correlation functions is not.

Furthermore, the exact solution of the 2-point function
and therefore of any planar $N$-point function was computed
for the Grosse-Wulkenhaar model for $\D<6$. 
For the special case on the 4-dimensional Moyal space,
it was proved earlier that a solution exists \cite{Grosse:2012uv}.
However, \sref{Theorem}{prop:HT} gives the solution for any eigenvalue distribution of the 
external matrix $E$ of spectral dimension $\D<6$. 

The computed results coincide with perturbative calculations via Zimmermann's forest formula 
after having taken the same boundary conditions. 
To our knowledge, these two models provide the first bosonic examples for exactly solvable models 
which are just-renormalisable. 
Perturbatively, the number of graphs grows factorially and the renormalon appears (see \sref{Remark}{rmk:renorm} and 
\ref{rmk:renorm1}). Even in the cubic case, the $\beta$-function is positive (see \sref{Remark}{rmk:beta})
which is also the case in quantum electrodynamics. Nevertheless, the resummability implies that 
cancellations prevent the renormalon problem. The factorial growth of the Feynman graphs, 
the renormalon problem and a positive $\beta$-function do not imply that a just-renormalisable 
model is not resummable.
\\

What remains to investigate is whether the genus expansion
\begin{align*}
 G=\sum_{g=0}^\infty V^{-2g}G^{(g)}
\end{align*}
is Borel summable.
For a generic set of regular spectral curves, it was proved that the free energies $F^{(g)}$
obtained by topological recursion
grow at most like $\mathcal{O}((kg)!r^{-g})$ for some $r>0$ and 
$k\leq 5$ \cite{Eynard:2019mps}. But the property of Borel summability is 
still unclear, since the Borel transform needs an analytic continuation along the positive real line. 

From a quantum field theoretical perspective, the 
question of analytic continuation $V\to 0$ is of big 
interest because it recovers the commutative space. However, it should be treated with caution. 
The harmonic oscillator term breaks translational invariance for any $V<\infty$. 
Therefore, the first step would be to generalise the results 
of this thesis for QFTs on noncommutative spaces away from the 
self-dual point at $\Omega=1$. With the help of Meixner 
polynomials, it was proved that the quartic model, 
for instance, is in 4 dimensions perturbatively renormalisable 
for any $0<\Omega\leq 1$ to all orders \cite{Grosse:2004yu}.
However, the limit $\Omega\to 0$ can be performed in $D=2$, but not in $D=4$ because this
generates again the UV/IR mixing problem.

Further analysis suggests that the 4-point connected 
Schwinger function on the 4-dimensional Moyal space is not reflection positive.
Therefore, the naive construction of the Schwinger functions via equation \eqref{eq:Schwinger}
fails for a noncommutative space. 
A different way to construct the Schwinger function is proposed in \cite{Wulkenhaar2019}. 
On a noncommutative geometry, Connes' distance formula \cite{Connes94noncommutativegeometry} 
provides a metric structure
via states.
This implies that it is more natural for 
noncommutative geometry to construct 
Schwinger functions via states
(see \cite{Wulkenhaar2019} for a detailed construction).

From a more geometric perspective, it is a natural question whether the quartic matrix field 
theory model satisfies topological recursion \cite{Eynard:2007kz}. Comparing the 
conjectured spectral curve of the quartic model \eqref{spec:quart} with the one of the cubic model
\eqref{spec:cubic}
leads to an interesting observation. For finite matrices (take $V=\mN$ for simplicity), 
both spectral curves have the general description
(up to trivial factors of $2\lambda$)
\begin{align*}
 y(z)=z\mp\frac{\lambda}{\mN}\sum_{k=1}^{\mN'}\frac{r_k}{x'(\varepsilon_k)
 (\varepsilon_k-z)},\qquad e_k=x(\varepsilon_k),
\end{align*}
where $x(z)=z^2-c$ holds with the upper sign for the cubic model and $x(z)=z-
\frac{\lambda}{\mN}\sum_{k=1}^{\mN'}\frac{r_k}{x'(\varepsilon_k)(z+\varepsilon_k)}$ with the lower sign for the 
quartic model.

At this point, it is important also to recall the spectral curve of the Hermitian 1-matrix model and the
Hermitian 2-matrix model. Assuming a genus zero spectral curve for the Hermitian 2-matrix model yields the 
combined form for their meromorphic $y$-function with 
\begin{align*}
 y(z)=a z+\sum_{k=1}^{d-1}\frac{\alpha_k}{z^k},\qquad \text{where}\quad a=0\quad 
 \text{for the Hermitian 1-matrix model}
\end{align*}
and $d$ is the degree of the potential.
The meromorphic $x$-function is
\begin{align*}
 x(z)=&\gamma\bigg(z+\frac{1}{z}\bigg)+b &&\text{for the Hermitian 1-matrix model}\\
 x(z)=&\frac{a}{z}+\sum_{k=1}^{\tilde{d}-1}\beta_kz^k &&\text{for the Hermitian 2-matrix model},
\end{align*}
where $\tilde{d}$ is the degree of the second potential of the Hermitian 2-matrix model.
The constants $\gamma,a,b,\alpha_k,\beta_k$ are implicitly defined by the
coupling constants of the potential(s) \cite{Eynard:2002kg,Eynard:2016yaa}.
If both potentials of the Hermitian 2-matrix model coincide the number of free parameters is reduced with the
identities $\alpha_k=\beta_k$ and $\tilde{d}=d$. 

It should be emphasised that the meromorphic $x$-function of the cubic model and of the 
Hermitian 1-matrix model has a global symmetry given by $x(z)=x(-z)$ and $x(z)=x(\frac{1}{z})$, respectively.
By contrast, the $x$-function of the quartic model and the Hermitian 2-matrix model has no symmetry. 
However, another symmetry
can be observed by
\begin{align*}
 y(z)=&-x(-z)\qquad &&\text{for the quartic model}\\
 y(z)=&\,x\bigg(\frac{1}{z}\bigg)\qquad 
 &&\text{for the Hermitian 2-matrix model with coinciding potentials.}
\end{align*}
From this observations and the fact that the double-scaling limit of the Hermitian 1-matrix model
coincides with the large $\mN$ limit of the Kontsevich model \cite{Ambjorn:1993sj}, a relation between the 
quartic matrix field theory model and the Hermitian 2-matrix model (with coinciding potentials) can be conjectured. The 
spectral curves of these models are natural generalisations of the Hermitian 
1-matrix model and the cubic model, where the a symmetry between $x(z)$ and $y(z)$ is achieved with 
different global involutions. 

This conjecture is supported by the observation that the planar correlation
function with one mixed boundary in the Hermitian 2-matrix model \cite{Eynard:2005iq}
obeys exactly the same recursive equation as the planar $N$-point function for the quartic matrix
field theory (\sref{Example}{Exm:quartRec}). It indicates that correlation functions 
with mixed boundaries are possibly related in general to correlation functions of even length. 

In \sref{Sec.}{Sec.quartRec}, the combinatorial structure of \sref{Example}{Exm:quartRec} was analysed 
and proved to consists after cancellations 
of $d_k=\frac{1}{1+k}\binom{3k+1}{k}$ terms. This number can be understood due to 
\sref{Corollary}{cor:dk}
\begin{align*}
 d_{k}=\sum_{(e_{0},\ldots,e_{k+1})\in\mathcal{C}_{k+1}} 
c_{e_{0}-1}c_{e_{1}}\cdots c_{e_{k}}c_{e_{k+1}}
\end{align*}
as Catalan numbers of Catalan numbers. Studying the generalisation of  \sref{Example}{Exm:quartRec} given 
by equation \eqref{eq:quartRec} can give more insight into the conjecture.

Catalan numbers $c_n$ can be generalised in several forms. One of them is given by the numbers
$C_{g,b}$ which is a topological 
generalisation graded by the genus $g$ and the number of boundaries $b$. The usual Catalan number $c_n$
corresponds in this picture to the genus $g=0$ and $b=1$ case. The Laplace transform 
of the generating functions of $C_{g,b}$
was proved to satisfy topological recursion \cite{Dumitrescu:2012dka}, too.
A further question is whether the number $d_k$ has a topological generalisation as well, which should be 
encoded in the recursive equation \eqref{eq:quartRec}.

If a relation between the Hermitian 2-matrix model and the quartic model turns out to exist, 
the quartic model would not satisfy topological recursion in the sense of \cite{Eynard:2007kz},
because also the Hermitian 2-matrix model has a completely different topological recursive form 
\cite{Eynard:2007gw}. Additionally, the nonlinearity of equation \eqref{eq:quartRec} suggests 
that the quartic model has a more complex topological recursive structure. 

Characterising the poles and branch 
points for the correlation functions of Euler characteristic $\chi=0$ and 
$\chi=-1$ will be the first step. The branch points are difficult to handle
because they cannot be computed 
explicitly. 
In the large $V,\mN$-limit, the number of poles and branch points
tends to infinity which is (to our knowledge) different to 
any other model described by topological recursion, especially
the cubic model with coinciding pole and branch point at $z=0$.

\counterwithin{equation}{chapter}
\appendix 
\chapter{Moyal Base}\label{App:Moyal}
The following was developed in \cite{GraciaBondia:1987kw}. In our $D$-dimensional 
notation, it is formulated as follows:

Let the continuous variable be $x=(x_1,..,x_{D})\in \R^D$ and the discrete 
$\narrowunderline{n}=(n_1,..,n_{D/2})\in\N^{D/2}$. 
The $\star$-product is defined by 
\begin{align}\label{eq:Moyal}
 &(g\star h)(x)=\int \frac{d^D k}{(2\pi)^D}\int d^Dy\, g(x+\tfrac{1}{2}\Theta k)\,h(x+y)e^{\mathrm{i} k\cdot y},\\
 &\Theta=\mathds{1}_{D/2}\otimes \left( \begin{array}{rr}
0 & \theta   \\
 -\theta & 0 
\end{array}\right),\qquad  \theta\in \R,\qquad g,h\in\mathcal{S}(\R^D).\nonumber
\end{align}
Let $\theta>0$.
The starting point of finding the full base is the unit vector (or vacuum) given by the Gaussian
\begin{align*}
 b_0(x)=2^{D/2}e^{-\frac{1}{\theta}(x_1^2+x_2^2+..+x_D^2)}.
\end{align*}
Inserting into \eqref{eq:StarProd} leads after factorising the integrals to
\begin{align*}
 &(b_0\star b_0)(x)\\
 =&2^D\int d^Dy\int \frac{d^D k}{(2\pi)^D}e^{\I k\cdot y}
 e^{-\frac{1}{\theta}\sum_{i=1}^{D/2} (x_{2i-1}+\frac{\theta}{2}k_{2i})^2+(x_{2i}+\frac{\theta}{2}k_{2i-1})^2+
 (x_{2i-1}+y_{2i-1})^2+(x_{2i}+y_{2i})^2}\\
 =&b_0(x).
\end{align*}
The creation and annihilation operators are defined by 
\begin{align*}
 a_i(x)&=\frac{1}{\sqrt{2}}(x_{2i-1}+\I x_{2i}), &&a^{\dagger}_i(x)=\frac{1}{\sqrt{2}}(x_{2i-1}-\I x_{2i}),\\
 \frac{\partial}{\partial a_i}&=\frac{1}{\sqrt{2}}(\frac{\partial}{\partial x_{2i-1}}
 -\I \frac{\partial}{\partial x_{2i}}), &&\,\,\,\frac{\partial}{\partial a^\dagger_i}=
 \frac{1}{\sqrt{2}}(\frac{\partial}{\partial x_{2i-1}}
 +\I \frac{\partial}{\partial x_{2i}}).
\end{align*}
Direct calculation leads for $g\in\mathcal{S}(\R^D)$ to
\begin{align}\label{App:Cond11}
 (a_i\star g)(x)&=a_i(x)g(x)+\frac{\theta}{2}\frac{\partial g}{\partial a^\dagger_i}(x),&& 
 (g\star a_i)(x)=a_i(x)g(x)-\frac{\theta}{2}\frac{\partial g}{\partial a^\dagger_i}(x),\\\label{App:Cond12}
 (a^\dagger_i\star g)(x)&=a^\dagger_i(x)g(x)-\frac{\theta}{2}\frac{\partial g}{\partial a_i}(x),&& 
 (g\star a^\dagger_i)(x)=a^\dagger_i(x)g(x)+\frac{\theta}{2}\frac{\partial g}{\partial a_i}(x).
\end{align}
Since $a_i,a_j$ (and $a^\dagger_i,a^\dagger_j$) act for $i\neq j$ on different variables, they commute 
with respect to the star product
\begin{align*}
 &([a_i,a_j]_\star\star g)(x)=([a_i,a^\dagger_j]_\star\star g)(x)=([a^\dagger_i,a_j]_\star\star g)(x)
 =([a^\dagger_i,a^\dagger_j]_\star\star g)(x)\\
 =&(g\star [a_i,a_j]_\star)(x)=(g\star[a_i,a^\dagger_j]_\star)(x)=(g\star[a^\dagger_i,a_j]_\star)(x)
 =(g\star[a^\dagger_i,a^\dagger_j]_\star)(x)=0,
\end{align*}
where $[g,h]_\star=g\star h-h\star g$. 
Acting with $a_i$ from the left or with $a^\dagger_j$ from the right for some $i,j\in\{1,..,D/2\}$ 
on the unit $b_0$ with respect to the $\star$-product vanishes exactly
\begin{align*}
 (a_i\star b_0)(x)=(b_0\star a_j^\dagger)=0,
\end{align*}
whereas $a_i^\dagger$ from the left and $a_j$ from the right take the r\^ole as creation operator with
\begin{align*}
 ((a_i^\dagger)^{\star m}\star b_0)(x)=&2^m(a^\dagger_i)^m(x)b_0(x)\\
 ( b_0\star(a_j)^{\star n})(x)=&2^n(a_j)^n(x)b_0(x),
\end{align*}
where $g^{\star m}=g\star g\star..\star g$ exactly $m$ times.
One further verifies for $m,n>0$
\begin{align}\label{App:Cond1}
 (a_i\star(a_i^\dagger)^{\star m}\star b_0)(x)=&m\theta ((a_i^\dagger)^{\star (m-1)}\star b_0)(x),\\\label{App:Cond2}
 (b_0\star a_j^{\star n}\star a_j)(x)=&n\theta ( b_0\star a_j^{\star (n-1)})(x).
\end{align}
Defining now the base with $i\in\{1,..,\frac{D}{2}\}$ for the 2-dimensional space
\begin{align*}
 f_{n_im_i}(x_{2i-1},x_{2i}):=\frac{1}{\sqrt{n_i!
 m_i!\theta^{n_i+m_i}}}
 ((a_i^\dagger)^{\star n} \star 2e^{-\frac{1}{\theta}(x_i^2+x_{i+1}^2)} \star a_i^{\star m_i})(x)
\end{align*}
leads to the base of the $D$-dimensional Moyal space
with the correct normalisation factor by
\begin{align*}
 b_{\narrowunderline{n},\narrowunderline{m}}(x):&=\frac{1}{\sqrt{\narrowunderline{n}!
 \narrowunderline{m}!\theta^{|\narrowunderline{n}|+|\narrowunderline{m}|}}}
 ((a^\dagger)^{\star \narrowunderline{n}} \star b_0 \star a^{\star \narrowunderline{m}})(x)\\
 =&f_{n_1m_2}
 (x_1,x_2)f_{n_2m_2}(x_3,x_4)..f_{n_{D/2}m_{D/2}}(x_{D-1}x_D),
\end{align*}
where $\narrowunderline{n}!=n_1!n_2!..n_{D/2}!$, $|\narrowunderline{n}|=n_1+n_2+..+n_{D/2}$, 
$(a^\dagger)^{\star \narrowunderline{n}}=(a_1^\dagger)^{\star n_1}\star(a_2^\dagger)^{\star n_2}\star
..\star(a_{D/2}^\dagger)^{\star n_{D/2}}$ and $a^{\star \narrowunderline{m}}=(a_1)^{\star m_1}\star(a_2)^{\star m_2}\star
..\star(a_{D/2})^{\star m_{D/2}}$.

The matrix multiplication rule follows by \eqref{App:Cond1} and \eqref{App:Cond2}
\begin{align*}
 (b_{\narrowunderline{n}\narrowunderline{m}}\star b_{\narrowunderline{k}
 \narrowunderline{l}})(x)=\delta_{\narrowunderline{m},\narrowunderline{k}}b_{\narrowunderline{n}
 \narrowunderline{l}}(x),
\end{align*}
where $\delta_{\narrowunderline{n},\narrowunderline{m}}=\delta_{n_1,m_1}..\delta_{n_{D/2}}\delta_{m_{D/2}}$.

For $g,h\in\mathcal{S}(\R^D)$ we have
\begin{align}\label{App:MoyalTr}
 \int d^Dx\,(g\star h)(x)=\int d^Dx\,g(x)h(x).
\end{align}
This is proved by inserting the defintion \eqref{eq:Moyal} and transforming 
$x\to x'=x+\frac{1}{2}\Theta k$ to get $g$ independent of the $k$ and $y$ integral. 
The next transformation is
$y\to y'=y-\frac{1}{2}\Theta k$, where the exponential become $\I k \cdot (y'+\frac{1}{2}\Theta k)=
\I k \cdot y'$. Now, the integral over $k$ is performed and gives a $\delta$-distribution $\delta(y')$. The integral over 
$y'$ leads to $g(x)h(x)$ inside the integral over $x$.

The condition \eqref{App:MoyalTr} implies a cyclic property (as for traces)
\begin{align}\label{App:Trace}
 \int d^Dx\,(g\star h)(x)=\int d^Dx\,g(x)h(x)=\int d^Dx\,(h\star g)(x).
\end{align}

furthermore, the base $b_{\narrowunderline{n}\narrowunderline{m}}$ obeys a trace property
\begin{align*}
 &\int d^Dx\, b_{\narrowunderline{n}\narrowunderline{m}}(x)\\
 =&\frac{1}{\sqrt{\narrowunderline{n}!
 \narrowunderline{m}!\theta^{|\narrowunderline{n}|+|\narrowunderline{m}|}}}
 \int d^Dx\,((a^\dagger)^{\star \narrowunderline{n}} \star b_0 \star a^{\star \narrowunderline{m}})(x)\\
 =&\frac{1}{\sqrt{\narrowunderline{n}!
 \narrowunderline{m}!\theta^{|\narrowunderline{n}|+|\narrowunderline{m}|}}}
 \int d^Dx\,(  a^{\star \narrowunderline{m}}\star (a^\dagger)^{\star \narrowunderline{n}}\star b_0)(x)\\
 =&\delta_{\narrowunderline{n},\narrowunderline{m}}\int d^Dx\,b_0(x)
 =\delta_{\narrowunderline{n},\narrowunderline{m}}(2\pi \theta)^{D/2},
\end{align*}
where we have used the cyclic property and \eqref{App:Cond1},\eqref{App:Cond2} in the third line.

Any Schwartz function $g\in \mathcal{S}(\R^D)$ can be expanded into a convergent series
\begin{align*}
 g(x)=\sum_{\narrowunderline{n},\narrowunderline{m}}g_{\narrowunderline{n}\narrowunderline{m}}
 b_{\narrowunderline{n}\narrowunderline{m}}(x),
\end{align*}
where the sum goes over all $n_i,m_j$ from $0$ to infinity and $g_{\narrowunderline{n}\narrowunderline{m}}=
g_{n_1m_1}..g_{n_{D/2}m_{D/2}}$. Furthermore, $g$ lies in $\mathcal{S}(\R^D)$ iff
\begin{align*}
 \sum_{n_i,m_i=0}^\infty ((2n_i+1)^{2k}(2n_i+1)^{2k} |g_{n_im_i}|^2)^{1/2}<\infty
\end{align*}
for all $i,k$.

Let $\Phi\in \mathcal{S}(\R^D)$ be the scalar field. 
The kinetic part (with the harmonic oscillator) of the action for scalar fields on the Moyal space is
\begin{align}\label{App:Kinetic}
 &\int \frac{d^Dx}{(8\pi)^{D/2}} \left(\frac{1}{2}\Phi\star 
 (-\Delta +\Omega^2\|2\Theta^{-1}\cdot x\|^2+\mu_0^2)\Phi\right)(x)\\
 =&\sum_{\narrowunderline{n},\narrowunderline{m},\narrowunderline{k},\narrowunderline{l}\in\N^{D/2}}
 \Phi_{\narrowunderline{n}\narrowunderline{m}}
 G_{\narrowunderline{n}\narrowunderline{m}
  ;\narrowunderline{k}\narrowunderline{l}}\Phi_{\narrowunderline{k}\narrowunderline{l}}
\end{align}
with the regulator $\Omega\in \R$, the Laplacian $\Delta=\sum_{i=1}^{D/2}(\frac{\partial^2}{\partial x_{2i-1}^2}+
\frac{\partial^2}{\partial x_{2i}^2})=\sum_{i=1}^{D/2}
\frac{\partial^2}{\partial a_i\partial a^\dagger_i}+\frac{\partial^2}{\partial a^\dagger_i\partial a_i}$ and expansion
 $\Phi(x)=\sum_{\narrowunderline{n},\narrowunderline{m}}\Phi_{\narrowunderline{n}\narrowunderline{m}}
 b_{\narrowunderline{n}\narrowunderline{m}}(x)$.  
 The kinetic term reads in the matrix base with $\|x\|^2=\sum_{i=1}^{D/2} a_ia^\dagger_i+a^\dagger_ia_i$ 
 and \eqref{App:Cond11}, \eqref{App:Cond12}
 \begin{align}\nonumber
  &G_{\narrowunderline{n}\narrowunderline{m}
  ;\narrowunderline{k}\narrowunderline{l}}=\int \frac{d^Dx}{(8\pi)^{D/2}}
  \bigg(\frac{1}{2}b_{\narrowunderline{n}\narrowunderline{m}}\star 
 \bigg(-\sum_{i=1}^D\frac{\partial^2}{\partial x_i^2} +\frac{4\Omega^2}{\theta^2}\|x\|^2+\mu_0^2\bigg)
 b_{\narrowunderline{k}\narrowunderline{l}}\bigg)(x)\\\nonumber
 =&\int \frac{d^Dx}{(8\pi)^{D/2}}
  \bigg(\frac{1}{2}b_{\narrowunderline{n}\narrowunderline{m}}\star 
 \bigg(-\sum_{i=1}^{D/2}\bigg(\frac{\partial^2}{\partial a^\dagger_i\partial a_i}+
 \frac{\partial^2}{\partial a_i\partial a^\dagger_i}
 -\frac{4\Omega^2}{\theta^2}(a_ia_i^\dagger+a^\dagger_ia_i)\bigg)+\mu_0^2\bigg)
 b_{\narrowunderline{k}\narrowunderline{l}}\bigg)(x)\\\nonumber
 =&\int \frac{d^Dx}{(8\pi)^{D/2}}
  \bigg(\frac{1+\Omega^2}{2\theta^2}b_{\narrowunderline{n}\narrowunderline{m}}\star 
 \sum_{i=1}^{D/2} (a_i\star a_i^\dagger+a_i^\dagger\star a_i)\star b_{\narrowunderline{k}\narrowunderline{l}}\\\nonumber
 &\qquad \qquad+\frac{1+\Omega^2}{2\theta^2}b_{\narrowunderline{n}\narrowunderline{m}}\star b_{\narrowunderline{k}\narrowunderline{l}}\star 
 \sum_{i=1}^{D/2} (a_i\star a_i^\dagger+a_i^\dagger\star a_i)\\\nonumber
 &\qquad \qquad-\frac{1-\Omega^2}{\theta^2}b_{\narrowunderline{n}\narrowunderline{m}}\star 
 \sum_{i=1}^{D/2} (a^\dagger_i\star b_{\narrowunderline{k}\narrowunderline{l}} \star a_i
 +a_i\star b_{\narrowunderline{k}\narrowunderline{l}} \star a^\dagger_i)
 +\frac{\mu_0^2}{2}
  b_{\narrowunderline{n}\narrowunderline{m}}\star b_{\narrowunderline{k}\narrowunderline{l}}\bigg)(x)\\\label{App:Laplacian}
  =&\left(\frac{\theta}{4}\right)^{D/2}\bigg(\frac{\mu_0^2}{2}+\frac{1+\Omega^2}{\theta}\sum_{i=1}^{D/2}(1+n_i+m_i)\bigg)
 \delta_{\narrowunderline{m},\narrowunderline{k}}\delta_{\narrowunderline{n},\narrowunderline{l}}\\\nonumber
 &- \left(\frac{\theta}{4}\right)^{D/2}\frac{1-\Omega^2}{\theta}\sum_{i=1}^{D/2}(\sqrt{n_im_i} \delta_{n_i-1,l_i}
 \delta_{m_i-1,k_i}+\sqrt{k_il_i} \delta_{n_i+1,l_i}
 \delta_{m_i+1,k_i})\check{\delta}^i_{\narrowunderline{m},\narrowunderline{k}}\check{\delta}^i_{\narrowunderline{n},\narrowunderline{l}},
 \end{align}
where $\check{\delta}^i_{\narrowunderline{n},\narrowunderline{m}}$ is $\delta_{\narrowunderline{n},\narrowunderline{m}}$
with omitted $\delta_{n_i,m_i}$.

\chapter{Schwinger Function on the Moyal Space}\label{App:Schwinger}
We are following the derivation given in \cite{Grosse:2013iva} and 
extend it to any dimension $D$. 
The connected Schwinger function is defined as
\begin{align}\label{App:eq:Schwinger}\nonumber
	S_c(\xi_1,..,\xi_N):=&\lim_{\Lambda^2\to\infty}\lim_{\stackrel{V,\mN'\to\infty}
		{\frac{\mN'}{V^{2/D}}=\Lambda^2}}\sum_{N_1+..+N_b=N} \sum_{\narrowunderline{p}_1^1,..,\narrowunderline{p}_{N_b}^b\in \mathbb{N}^{D/2}_{\mN'}}\frac{G_{|\narrowunderline{p}_1^1..
			\narrowunderline{p}_{N_1}^1|..|\narrowunderline{p}_1^b..
			\narrowunderline{p}_{N_b}^b|}}{(8\pi)^{D/2} b!}\\
	&\times \sum_{\sigma\in S_N} \prod_{\beta=1}^{b}\frac{b_{\narrowunderline{p}_1^\beta \narrowunderline{p}_2^\beta}(\xi_{\sigma(s_\beta+1)})..b_{\narrowunderline{p}_{N_\beta}^\beta \narrowunderline{p}_1^\beta}(\xi_{\sigma(s_\beta+N_\beta)})}{VN_\beta},
\end{align}
where $s_\beta=N_1+..+N_{beta-1}$ and $S_N$ is the symmetric group.

Abbriviations are used for the correlation functions
$G_{|\narrowunderline{p}_1^1..
	\narrowunderline{p}_{N_1}^1|..|\narrowunderline{p}_1^b..
	\narrowunderline{p}_{N_b}^b|}=G_{|\narrowunderline{P}^{N_1}|..|\narrowunderline{P}^{N_b}|}$ 
	with  $\narrowunderline{P}^{N_\beta}=\narrowunderline{p}_1^\beta..\narrowunderline{p}_{N_\beta}^\beta$.
	Additionally, we define the total norm 
$\|\narrowunderline{P}^{N_\beta}\|=|\narrowunderline{p}_1^\beta|+..+|\narrowunderline{p}_{N_\beta}^\beta|$. Furthermore, 
we assume that the correlation
function $G_{|\narrowunderline{P}^{N_1}|..|\narrowunderline{P}^{N_b}|}$ has a Fourier-Laplace transform $\tilde{G}_{\mN,V}(t_1,\narrowunderline{\omega}^{N_1}|..|t_b,\narrowunderline{\omega}^{N_b})$ depending on $\mN$ and $V$, where the Laplace transform is
used to distinguish the time $t_i\geq 0$ coordinate and $\narrowunderline{\omega}^{N_\beta}=(\omega^\beta_1,..,\omega^\beta_{N_\beta-1})\in\mathbb{R}^{N_\beta-1}$. Define further the product
$\langle \narrowunderline{\omega}^{N_\beta},\narrowunderline{P}^{N_\beta}\rangle=\sum_{i=1}^{N_\beta-1}\omega_i^\beta (|\narrowunderline{p}_i^\beta|-
|\narrowunderline{p}_{i+1}^\beta|)$ such that the Fourier-Laplace transform
takes the form
\begin{align}\label{App:Laplace}\nonumber
	G_{|\narrowunderline{P}^{N_1}|..|\narrowunderline{P}^{N_b}|}=
	\int_{\mathbb{R}^b_+}d(t_1,..,t_b)&\int_{\mathbb{R}^{N-b}}d(\narrowunderline{\omega}^{N_1},..,\narrowunderline{\omega}^{N_b})\tilde{G}_{\mN,V}(t_1,\narrowunderline{\omega}^{N_1}|..|t_b,\narrowunderline{\omega}^{N_b})
	\\&\times\prod_{\beta=1}^{b}e^{-\frac{t_\beta }{V^{2/D}}\|\narrowunderline{P}^{N_\beta}\|+\frac{\mathrm{i}}{V^{2/D}}\langle \narrowunderline{\omega}^{N_\beta},\narrowunderline{P}^{N_\beta}\rangle}
\end{align}
Since the limit $\mN',V\to\infty$ is taken,
the Fourier-Laplace transform converges in this limit to 
$\tilde{G}(t_1,\narrowunderline{\omega}^{N_1}|..|t_b,\narrowunderline{\omega}^{N_b})$ 
such that the inverse Fourier-Laplace transformation gets independent of $\mN,V$. 

Inserting \eqref{App:Laplace} into \eqref{App:eq:Schwinger} the index-sum and the product can be interchanged 
\begin{align*}
	\sum_{\narrowunderline{p}_1^1,..,\narrowunderline{p}_{N_b}^b\in \mathbb{N}^{D/2}_{\mN'}} 
	 \prod_{\beta=1}^{b}= \prod_{\beta=1}^{b} \sum_{\narrowunderline{p}_1^\beta,..,\narrowunderline{p}_{N_\beta}^\beta \in \mathbb{N}^{D/2}_{\mN'}}
\end{align*}
since the Fourier-Laplace transform $\tilde{G}_{\mN,V}(t_1,\narrowunderline{\omega}^{N_1}|..|t_b,\narrowunderline{\omega}^{N_b})$ 
is independent of the index-sum and the exponentials factorise. For each $\beta\in \{1,..,b\}$ the following 
series has to be determined
\begin{align}\nonumber
	&\frac{1}{V N_\beta }\sum_{\narrowunderline{p}_1^\beta,..,\narrowunderline{p}_{N_\beta}^\beta \in \mathbb{N}^{D/2}_{\mN'}}
	b_{\narrowunderline{p}_1^\beta \narrowunderline{p}_2^\beta}(\xi_{\sigma(s_\beta+1)})..b_{\narrowunderline{p}_{N_\beta}^\beta \narrowunderline{p}_1^\beta}(\xi_{\sigma(s_\beta+N_\beta)}) 
	e^{-\frac{t_\beta }{V^{2/D}}\|\narrowunderline{P}^{N_\beta}\|+\frac{\mathrm{i}}{V^{2/D}}\langle \narrowunderline{\omega}^{N_\beta},\narrowunderline{P}^{N_\beta}\rangle}\\
	=&\frac{1}{V N_\beta }\sum_{\narrowunderline{p}_1^\beta,..,\narrowunderline{p}_{N_\beta}^\beta \in \mathbb{N}^{D/2}_{\mN'}}
	b_{\narrowunderline{p}_1^\beta \narrowunderline{p}_2^\beta}(\xi_{\sigma(s_\beta+1)})..
	b_{\narrowunderline{p}_{N_\beta}^\beta \narrowunderline{p}_1^\beta}(\xi_{\sigma(s_\beta+N_\beta)})  
	(z^\beta_1)^{|\narrowunderline{p}_1^\beta|}..(z^\beta_{N_\beta})^{|\narrowunderline{p}
	_{N_\beta}^\beta|},\label{App:bFormel}
	\\
	&\qquad \text{where}\qquad z^\beta_{i}:= \left\{
	\begin{array}{ll}
		e^{-\frac{t_\beta}{V^{2/D}}+\mathrm{i}\frac{\omega_1^\beta}{V^{2/D}}} & i=1 \\
		e^{-\frac{t_\beta}{V^{2/D}}+\mathrm{i}\frac{\omega_i^\beta-\omega_{i-1}^\beta}{V^{2/D}}} & i\in\{2,..,N_\beta-1\} . \\
		e^{-\frac{t_\beta}{V^{2/D}}-\mathrm{i}\frac{\omega_{N_\beta-1}^\beta}{V^{2/D}}} & i=N_\beta \\	
	\end{array}
	\right.\nonumber
\end{align}
\\

Recall next a lemma and its corollary proved in \cite{ Grosse:2013iva} about the 2D base $f_{nm}(\xi),\xi\in\R^2$ 
of the Moyal plane
\begin{Alemma}(\cite[Lemma 4 + Corollary 5]{Grosse:2013iva})\label{Lemma:Raimar}
 Let for  $\xi,\eta\in\R^2$ be the scalar product, the norm and 
 the determinant of the matrix $(\xi,\eta)$ given by
  $\langle \xi,\eta\rangle ,\, \|\xi\|$ and $\det(\xi,\eta)$, respetively. Then, for $\xi_i\in\R^2$, $z_i\in\C$ with
  $|z_i|<1$ and $N\in\N$ (with the cyclic structure $N+i\equiv i$)
  \begin{align*}
   &\sum_{p_1,..,p_N=0}^\infty \prod_{i=1}^Nf_{p_ip_{i+1}}(\xi_i)z_i^{p_i}\\
   =&\frac{2^N}{1-\prod_{i=1}^N(-z_i)}\exp\left(-\frac{\sum_{i=1}^N\|\xi_i\|^2}{4 V^{2/D}}\,\frac{1+\prod_{i=1}^N(-z_i)}
   {1-\prod_{i=1}^N(-z_i)}\right)\\
   &\times \exp\left(-\sum_{1\leq k<l\leq N} \bigg(\frac{\langle \xi_k,\xi_l\rangle 
   -\mathrm{i} \det(\xi_k,\xi_l)}{2V^{2/D}}\,\frac{\prod_{i=k+1}^l(-z_i)}
   {1-\prod_{i=1}^N(-z_i)}\bigg)\right)
   \\ &\times \exp\left(-\sum_{1\leq k<l\leq N} \bigg(\frac{\langle \xi_k,\xi_l\rangle 
   +\mathrm{i} \det(\xi_k,\xi_l)}{2V^{2/D}}\,\frac{\prod_{i=l+1}^{N+k}(-z_i)}
   {1-\prod_{i=1}^N(-z_i)}\bigg)\right).
  \end{align*}
\end{Alemma}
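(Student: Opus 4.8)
\textbf{Plan of proof for Lemma \ref{Lemma:Raimar}.}

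The plan is to reduce the $N$-fold sum over the matrix base $f_{p_ip_{i+1}}$ to a Gaussian integral by exploiting the known integral representation of $f_{nm}(\xi)$ over the Moyal plane. First I would recall from \sref{App.}{App:Moyal} that each $f_{nm}(x_1,x_2)$ is built from star-powers of creation/annihilation operators acting on the Gaussian vacuum $b_0$, and that one has a generating-function identity packaging $\sum_{n,m} f_{nm}(\xi) z^n \bar w^m$ into an exponential of a quadratic form in $\xi$; equivalently, one uses the well-known coherent-state type formula expressing $f_{nm}(\xi)$ via a contour/Gaussian integral over an auxiliary complex variable. The cyclic structure $\prod_{i=1}^N f_{p_ip_{i+1}}(\xi_i)$ with summation over all $p_i$ then glues these $N$ generating variables into a single trace-like object, and the sum over $p_1,\dots,p_N$ produces a geometric series in the product $\prod_i(-z_i)$ — this is the origin of the overall prefactor $\bigl(1-\prod_{i=1}^N(-z_i)\bigr)^{-1}$ and of the $2^N$ normalisation.

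Second, I would carry out the resulting Gaussian integration over the $N$ auxiliary variables explicitly. The exponent is a Hermitian quadratic form whose matrix is a ``cyclic bidiagonal'' matrix with entries built from the $z_i$; its inverse can be written in closed form precisely because of the geometric-series structure, and the inverse entries are exactly the fractions $\frac{\prod_{i=k+1}^l(-z_i)}{1-\prod_i(-z_i)}$ and $\frac{\prod_{i=l+1}^{N+k}(-z_i)}{1-\prod_i(-z_i)}$ appearing in the statement. The diagonal contribution gives the $\|\xi_i\|^2$ term with coefficient $\frac{1}{4V^{2/D}}\cdot\frac{1+\prod_i(-z_i)}{1-\prod_i(-z_i)}$, and the off-diagonal contributions split into a holomorphic part carrying $\langle\xi_k,\xi_l\rangle-\mathrm{i}\det(\xi_k,\xi_l)$ and an antiholomorphic part carrying $\langle\xi_k,\xi_l\rangle+\mathrm{i}\det(\xi_k,\xi_l)$, according to whether the path from $k$ to $l$ runs ``forward'' or ``around'' the cycle. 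The combinations $\langle\xi_k,\xi_l\rangle\pm\mathrm{i}\det(\xi_k,\xi_l)$ are just $2\bar a_k a_l$ and $2 a_k\bar a_l$ in the complex coordinates $a(\xi)=\tfrac{1}{\sqrt2}(\xi_1+\mathrm{i}\xi_2)$ of \sref{App.}{App:Moyal}, which is why the split is clean.

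Third, I would check convergence and the corollary statement: since $|z_i|<1$ we have $|\prod_i(-z_i)|<1$, so the geometric series converges and the Gaussian is genuinely damped (the real part of the quadratic form is positive definite), justifying the interchange of sum and integral; the Schwartz decay of $f_{nm}$ then makes everything absolutely convergent. The main obstacle I expect is purely bookkeeping: correctly identifying the inverse of the cyclic bidiagonal quadratic form and matching the products $\prod_{i=k+1}^{l}$ versus $\prod_{i=l+1}^{N+k}$ with the right holomorphic/antiholomorphic factor, including getting all the signs $(-z_i)$ and the factor $\tfrac12$ versus $\tfrac14$ right. A convenient way to avoid a direct matrix inversion is to prove the formula by induction on $N$, peeling off one $f_{p_Np_1}(\xi_N)$ at a time using the two-variable star-product identities \eqref{App:Cond11}--\eqref{App:Cond2}; the induction step reduces to a single geometric resummation, and the cyclic denominator $1-\prod_i(-z_i)$ emerges naturally when the last index is closed up. I would present whichever of the two routes — direct Gaussian integration or induction on $N$ — gives the shorter bookkeeping, and cross-check the $N=1,2$ cases against the explicit base functions.
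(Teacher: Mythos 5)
The thesis states this lemma without proof, deferring entirely to \cite{Grosse:2013iva}, and your plan reproduces the strategy of that reference. Packaging the $f_{nm}$ into their Gaussian (coherent-state) generating function, resumming the cyclic contraction into a geometric series in $\prod_{i=1}^N(-z_i)$ (the alternating signs coming from the parity $f_{nm}(-\xi)=(-1)^{n+m}f_{nm}(\xi)$ of the base), and reading off the $\|\xi_i\|^2$ and $\langle\xi_k,\xi_l\rangle\pm\mathrm{i}\det(\xi_k,\xi_l)$ terms from the inverse of the resulting cyclic quadratic form is exactly how the identity is established; what remains is genuinely routine bookkeeping of signs together with the normalisation $V^{2/D}=\theta/4$.
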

Let $\mathbb{N}^{D/2} \ni \narrowunderline{p}_i^\beta=((p_i^\beta)_1
,..,(p_i^\beta)_{D/2})$ and $\mathbb{R}^{D/2}\ni\chi_i
=((\chi_i)_1,..,(\chi_i)_{D/2})$, then \eqref{App:bFormel} can be rearranged to
\begin{align*}
 &\sum_{\narrowunderline{p}_1^\beta,..,\narrowunderline{p}_{N_\beta}^\beta \in \mathbb{N}^{D/2}}
	b_{\narrowunderline{p}_1^\beta \narrowunderline{p}_2^\beta}(\xi_{\sigma(s_\beta+1)})..
	b_{\narrowunderline{p}_{N_\beta}^\beta \narrowunderline{p}_1^\beta}(\xi_{\sigma(s_\beta+N_\beta)})  
	(z^\beta_1)^{|\narrowunderline{p}_1^\beta|}..(z^\beta_{N_\beta})^{|\narrowunderline{p}
	_{N_\beta}^\beta|}\\
	=&\sum_{\narrowunderline{p}_1^\beta,..,\narrowunderline{p}_{N_\beta}^\beta \in \mathbb{N}^{D/2}}
	\prod_{i=1}^{D/2} \prod_{j=1}^{N_\beta}
	f_{(p_j^\beta)_i(p_{j+1}^\beta)_i}((\xi_{\sigma(s_\beta+j)})_{i},
	(\xi_{\sigma(s_\beta+j)})_i)(z_j^\beta)^{(p_j^\beta)_i}\\
	=&\prod_{i=1}^{D/2}
	\sum_{(p_1^\beta)_i,..,(p_{N_\beta}^\beta)_i =0}^\infty
	\prod_{j=1}^{N_\beta}f_{(p_j^\beta)_i(p_{j+1}^\beta)_i}((\xi_{\sigma(s_\beta+j)})_{i},
	(\xi_{\sigma(s_\beta+j)})_i)(z_j^\beta)^{(p_j^\beta)_i}
\end{align*}
in the $\mN'\to \infty$ limit such that \sref{Lemma}{Lemma:Raimar} can be used. The splitting of the double scaling 
limit $V,\mN'\to\infty$ with $\frac{\mN'}{V^{2/D}}=\Lambda^2$ and interchanging it with
$\tilde{G}_{\mN',V}$ is uncritical. However, all $z_i^\beta$'s get critical $|z_i^\beta|=1$. 
In an intermediate step by deriving 
\sref{Lemma}{Lemma:Raimar}, the series $\sum_{p=0}^\infty\frac{(p+k)!}{p!k!}((-z_1)...(-z_N))^p=\frac{1}{(1-
((-z_1)...(-z_N)))^{k+1}}$ has to be taken, where a restriction to $\mN'$ summands would give an error term 
proportional to $(z_1..z_N)^{\mN'}=e^{-\Lambda^2 N t}$. This error term vanishes in the following limit
of $V\to\infty$ together with the final limit of $\Lambda^2\to\infty$. With this argumentation,
\sref{Lemma}{Lemma:Raimar} can be used. 

As $|z_i|$ converges to 1 in the $V\to\infty$ limit, the denominator 
$1-\prod_{i=1}^N(-z_i)=1-(-1)^Ne^{-\frac{Nt}{V^{D/2}}}$ converges to 2 for $N$ odd. For even $N$, the denominator 
converges at leading order in $V$ to $\frac{N t}{V^{D/2}}$. Therefore, performing the limits in the discussed way 
leads to 
\begin{align*}
 &\lim_{\Lambda^2\to\infty}\lim_{\stackrel{V,\mN'\to\infty}
		{\frac{\mN'}{V^{2/D}}=\Lambda^2}}
	\sum_{\narrowunderline{p}_1^\beta,..,\narrowunderline{p}_{N_\beta}^\beta \in \mathbb{N}^{D/2}}\frac{
	b_{\narrowunderline{p}_1^\beta \narrowunderline{p}_2^\beta}(\xi_{\sigma(s_\beta+1)})..
	b_{\narrowunderline{p}_{N_\beta}^\beta \narrowunderline{p}_1^\beta}(\xi_{\sigma(s_\beta+N_\beta)})  
	(z^\beta_1)^{|\narrowunderline{p}_1^\beta|}..(z^\beta_{N_\beta})^{|\narrowunderline{p}
	_{N_\beta}^\beta|}}{VN_\beta}\\
	= & \left\{
\begin{array}{ll}
 \frac{2^{D N_\beta/2}}{N_\beta (N_\beta t_\beta)^{D/2}} \exp\left(-\frac{\|\sum_{j=1}^{N_\beta}(-1)^{j-1}
	\xi_{\sigma(s_\beta+j)}\|^2}
	{2N_\beta t_\beta}\right) \quad\quad\quad & \text{for $N_\beta$ even} \\
0 & \text{for $N_\beta$ odd.} \\
\end{array}
\right. 	
\end{align*}

Rewrite this result with the Gaussian integral to
\begin{align*}
 &\frac{2^{D N_\beta/2}}{N_\beta (N_\beta t_\beta)^{D/2}} \exp\left(-\frac{\|\sum_{j=1}^{N_\beta}(-1)^{j-1}
	\xi_{\sigma(s_\beta+j)}\|^2}
	{2N_\beta t_\beta}\right)\\
	=&\frac{2^{D N_\beta/2}}{(2\pi)^{D/2} N_\beta}
	\int_{\R^{D/2}}dp_\beta\, e^{-\frac{\N_\beta}{2}\|p_\beta\|^2t_\beta}\,\,
	e^{\mathrm{i} \langle p_\beta,\sum_{j=1}^{N_\beta}(-1)^{j-1}
	\xi_{\sigma(s_\beta+j)}\rangle }.
\end{align*}
Inserting this back into \eqref{App:eq:Schwinger} together with \eqref{App:Laplace} give the inverse 
Laplace transform in the variable $t_\beta$. The integrals over $\narrowunderline{\omega}^{N_\beta}$
give the inverse Fourier transform if within each boundary $\beta$ the variables are taken to be the same. 
We end up in the following representation of the Schwinger function 
\begin{align*}
 S_c(\xi_1,..,\xi_N)=&\sum_{\stackrel{N_1+..+N_b=N}{N_\beta\, \text{even}}}
 \sum_{\sigma\in S_N}\left(\prod_{\beta=1}^b \frac{2^{DN_\beta/2}}{N_\beta}\int\frac{d^D p^{\beta}}{(2\pi)^{D/2}}
 e^{\mathrm{i} p^\beta\cdot (\xi_{\sigma (s_\beta+1)}-
 \xi_{\sigma (s_\beta+2)}+..-\xi_{\sigma (s_\beta+N_\beta)})}\right)\\\nonumber
 &\times\frac{1}{(8\pi)^{D/2}b!}G^0\bigg(\underbrace{\frac{\|p^1\|^2}{2},..,\frac{\|p^1\|^2}{2}}_{N_1}\bigg\vert...\bigg\vert
 \underbrace{\frac{\|p^b\|^2}{2},..,\frac{\|p^b\|^2}{2}}_{N_b}\bigg),
\end{align*}
\begin{align*}
 \text{where}\quad &
 G^g(x_1^1,..,x_{N_1}^1|..|x_1^b,..,x_{N_b}^b)=\lim_{V,\mN',\Lambda^2}
 G^{(g)}_{|\narrowunderline{p}_1^1..\narrowunderline{p}_{N_1}^1|..|\narrowunderline{p}^b_{1}
 ..\narrowunderline{p}^b_{N_b}|}\big\vert_{|\narrowunderline{p}_i^j|=x_i^jV^{2/D}}.
\end{align*}

\chapter[Relevant Lemmata for 
Theorem 3.1]{Relevant Lemmata for 
\sref{Theorem}{finaltheorem}\footnote{This is taken from the appendix of our 
paper \cite{Grosse:2019nes}}}\label{appendixC}
\begin{assumption}\label{conj1}
We assume that $\G_g(z)$ is, for $g\geq 1$, a function of $z$ and of 
$\varrho_0,\dots,\varrho_{3g-2}$ (true for $g=1$). We take eq.\
\eqref{eqfinalthm} and in particular   
$\G_g(z,J):=(2\lambda)^3 \hat{\mathrm{A}}^{\dag g}_{J,z} 
\G_g(J)$
as a {\sffamily definition} of a family of functions
$\G_g(z_{1},J)$ and derive equations for that family. 
\end{assumption}

\begin{Alemma}\label{lemma5}
Let $J=\{z_2,...,z_b\}$. Then under \sref{Assumption}{conj1}
and with \sref{Definition}{defint} of 
the operator $\hat{K}_{z_1}$ one has
\begin{align*}
\hat{K}_{z_1} \G_g(z_1,J)
&=\frac{(2\lambda)^3}{z_1^2}
\bigg(\sum_{l= 0}^{3g-3+|J|} 
(3+2l)\frac{\partial \G_g(J)}{\partial \varrho_l}
\sum_{k=0}^{l}\frac{\varrho_k}{z_1^{2+2l-2k}}
+\sum_{\zeta\in J}\frac{1}{\zeta} \frac{\partial}{\partial \zeta}
\G_g(J)\bigg).
\end{align*}
\end{Alemma}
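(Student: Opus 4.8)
\textbf{Proof plan for \sref{Lemma}{lemma5}.}
The plan is to compute $\hat{K}_{z_1}$ applied to the definition $\G_g(z_1,J)=(2\lambda)^3\hat{\mathrm{A}}^{\dag g}_{J,z_1}\G_g(J)$ by using linearity of $\hat{K}_{z_1}$ together with the explicit monomial action $\hat{K}_z(\tfrac{1}{z})=\tfrac{1}{\sqrt Z}$ and $\hat{K}_z(\tfrac{1}{z^{3+2n}})=\sum_{k=0}^n\tfrac{\varrho_k}{z^{2n+2-2k}}$ from \sref{Lemma}{lemmaopK}. First I would write out $\hat{\mathrm{A}}^{\dag g}_{J,z_1}$ from \sref{Definition}{DefOp} as a sum of three kinds of terms acting on $\G_g(J)$: the $z_1^{-3}$-term $-\frac{(3+2l)\varrho_{l+1}}{\varrho_0 z_1^3}\frac{\partial}{\partial\varrho_l}$, the $z_1^{-5-2l}$-term $\frac{3+2l}{z_1^{5+2l}}\frac{\partial}{\partial\varrho_l}$, and the $\zeta$-derivative term $\frac{1}{\varrho_0 z_1^3\zeta}\frac{\partial}{\partial\zeta}$. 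Since $\hat{K}_{z_1}$ acts only on the $z_1$-dependence (the $\varrho_l$ and $\zeta$ are inert, being merely coefficients from the viewpoint of $\hat K_{z_1}$), each term reduces to $\hat{K}_{z_1}$ of a single negative power of $z_1$.

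The key computation is then: $\hat{K}_{z_1}\big(\frac{1}{z_1^3}\big)=\frac{\varrho_0}{z_1^2}$ and $\hat{K}_{z_1}\big(\frac{1}{z_1^{5+2l}}\big)=\sum_{k=0}^{l+1}\frac{\varrho_k}{z_1^{2l+4-2k}}$, both instances of \sref{Lemma}{lemmaopK} (with $n=1$ and $n=l+1$ respectively). Substituting these, the first family of terms contributes $(2\lambda)^3\sum_l(3+2l)\frac{\partial\G_g(J)}{\partial\varrho_l}\big(-\frac{\varrho_{l+1}}{\varrho_0}\cdot\frac{\varrho_0}{z_1^2}+\sum_{k=0}^{l+1}\frac{\varrho_k}{z_1^{2l+4-2k}}\big)$. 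The $k=l+1$ summand of the second sum is exactly $\frac{\varrho_{l+1}}{z_1^2}$, which cancels the $-\frac{\varrho_{l+1}}{z_1^2}$ coming from the first term; what remains is $\sum_{k=0}^{l}\frac{\varrho_k}{z_1^{2l+4-2k}}=\frac{1}{z_1^2}\sum_{k=0}^l\frac{\varrho_k}{z_1^{2+2l-2k}}$. The $\zeta$-derivative terms give $(2\lambda)^3\sum_{\zeta\in J}\frac{1}{\varrho_0\zeta}\frac{\partial\G_g(J)}{\partial\zeta}\cdot\hat{K}_{z_1}\big(\frac{1}{z_1^3}\big)=(2\lambda)^3\sum_{\zeta\in J}\frac{1}{\varrho_0\zeta}\frac{\partial\G_g(J)}{\partial\zeta}\cdot\frac{\varrho_0}{z_1^2}=\frac{(2\lambda)^3}{z_1^2}\sum_{\zeta\in J}\frac{1}{\zeta}\frac{\partial\G_g(J)}{\partial\zeta}$, the $\varrho_0$ cancelling cleanly. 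Collecting both contributions over the common factor $\frac{(2\lambda)^3}{z_1^2}$ yields the stated formula.

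The only genuinely delicate point is bookkeeping with the upper limit $3g-3+|J|$ of the $l$-sum in $\hat{\mathrm{A}}^{\dag g}_{J,z_1}$: one must check that under \sref{Assumption}{conj1}, namely that $\G_g(J)$ depends only on $\varrho_0,\dots,\varrho_{3g-3+|J|+?}$, the derivative $\frac{\partial\G_g(J)}{\partial\varrho_l}$ vanishes for $l$ large enough that the shift $l\mapsto l+1$ in the cancellation above stays within range, and that no boundary term is lost when re-indexing. This is the step I expect to require the most care, though it is ultimately a routine truncation argument: the weighted-polynomial structure of \sref{Proposition}{structure-Gz} (reciprocally available by the induction in genus) guarantees $\G_g$ and all the $\G_g(z,J)$ are Laurent polynomials in $z_1$ bounded at $\infty$, so all sums are finite and the formal manipulations with $\hat{K}_{z_1}$ are legitimate. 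Everything else is direct substitution of \sref{Lemma}{lemmaopK} and cancellation of the $\varrho_{l+1}/z_1^2$ terms.
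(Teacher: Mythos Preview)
Your proof is correct and follows exactly the approach indicated by the paper's (one-line) proof: write out $\hat{\mathrm{A}}^{\dag g}_{J,z_1}$ from \sref{Definition}{DefOp} and apply \sref{Lemma}{lemmaopK} term by term, then observe the cancellation between the $-\varrho_{l+1}/z_1^2$ piece and the $k=l{+}1$ summand. Your closing worry about the upper limit $3g-3+|J|$ is unnecessary: the cancellation occurs at each fixed $l$ (between the two pieces of $\hat{\mathrm{A}}^{\dag g}$ at that $l$), not via any shift $l\mapsto l+1$ in the outer sum, so no boundary term can be lost.
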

\begin{proof}
Take \sref{Definition}{DefOp} for $\hat{\mathrm{A}}^{\dag g}_{J,z} 
\G_g(J)$ 
and apply \sref{Lemma}{lemmaopK}.
\end{proof}

\begin{Alemma}\label{lemma6} Let $J=\{z_2,...,z_b\}$.
Then under \sref{Assumption}{conj1} one has 
\begin{align*}
&\frac{(2\lambda)^3}{\zeta} \frac{\partial}{\partial \zeta}
\frac{\G_g(z_1,{J\backslash\{\zeta\}})
-\G_g(\zeta,{J\backslash\{\zeta\}})}{z_1^2-\zeta^2}
+2\lambda \G_0(z_1,\zeta )\G_{g}(z_1,{J\backslash \{\zeta \}})
\\
&=
(2\lambda)^6
\Bigg[
\sum_{l= 0}^{3g-4+|J|} 
\Bigg(
-\sum_{n=0}^1 
\frac{(3+2l)(1+2n) \varrho_{l+1}}{\varrho_0 
z_1^{4-2n}\zeta^{3+2n}}
+
\sum_{n=0}^{l+2}
\frac{(3+2l)(1+2n)}{z_1^{6+2l-2n}\zeta^{3+2n}}
\Bigg)
\frac{\partial \G_g({J\backslash\{\zeta \}})}{\partial \varrho_l}
\\
&+\sum_{\xi\in J\backslash \{\zeta \}}
\sum_{n=0}^1\frac{1+2n}{\varrho_0 \xi z_1^{4-2n}\zeta^{3+2n}}
\frac{\partial \G_g({J\backslash\{\zeta \}})}{\partial \xi}
\Bigg].
\end{align*}
\end{Alemma}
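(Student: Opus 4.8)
\noindent The plan is to take \sref{Assumption}{conj1} at face value: regard the three quantities $\G_g(z_1,J\backslash\{\zeta\})$, $\G_g(\zeta,J\backslash\{\zeta\})$ and the factor $\G_g(z_1,J\backslash\{\zeta\})$ sitting inside $2\lambda\,\G_0(z_1,\zeta)\G_g(z_1,J\backslash\{\zeta\})$ all as the \emph{same} first-order differential operator applied at different spectral arguments, $\G_g(w,J\backslash\{\zeta\})=(2\lambda)^3\,\hat{\mathrm{A}}^{\dag g}_{J\backslash\{\zeta\},w}\,f$ with $f:=\G_g(J\backslash\{\zeta\})$ and $\hat{\mathrm{A}}^{\dag g}$ as in \sref{Definition}{DefOp}. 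Since $|J\backslash\{\zeta\}|=|J|-1$, the $\varrho$-sum there runs exactly up to $l=3g-4+|J|$, matching the RHS, and it is finite by \sref{Proposition}{structure-Gz}. Substituting this, together with $\G_0(z_1,\zeta)=\frac{(2\lambda)^2}{z_1\zeta(z_1+\zeta)^2}$ from \sref{Proposition}{Prop:Cubic1+1}, turns both sides into $(2\lambda)^6$ times explicit rational functions of $z_1,\zeta$ multiplying $\frac{\partial f}{\partial\varrho_l}$ and $\frac{\partial f}{\partial\xi}$ for $\xi\in J\backslash\{\zeta\}$. Because $f$ is independent of $z_1$ and $\zeta$ (and the $\varrho_l$ are constants under $\partial_\zeta$), it suffices to check the coefficient identity separately for each $\frac{\partial f}{\partial\varrho_l}$ and each $\frac{\partial f}{\partial\xi}$.

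For a fixed such derivative, the difference-quotient term on the LHS contributes $\frac{1}{\zeta}\frac{\partial}{\partial\zeta}\big(\frac{c(z_1)-c(\zeta)}{z_1^2-\zeta^2}\big)$, where $c(w)$ is one of the monomials $-\frac{(3+2l)\varrho_{l+1}}{\varrho_0 w^3}$, $\frac{3+2l}{w^{5+2l}}$ or $\frac{1}{\varrho_0\xi\,w^3}$ read off from $\hat{\mathrm{A}}^{\dag g}_{J\backslash\{\zeta\},w}$. I would evaluate $\frac{w^{-k}-t^{-k}}{w^2-t^2}$ by first using the elementary identity $\frac{w^{-k}-t^{-k}}{w-t}=-\sum_{j=0}^{k-1}\frac{w^j t^{k-1-j}}{w^k t^k}$ (already employed in the proof of \sref{Lemma}{lemmaopK}) and then dividing once more by $w+t$: for even $k$ this yields a Laurent polynomial in $w,t$, while for odd $k$ (here $k=3$ and $k=5+2l$) it leaves a residual $\frac{1}{w+t}$ piece. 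Applying $\frac1\zeta\partial_\zeta$ then generates poles of order up to two at $z_1=-\zeta$ from those residual pieces, together with a Laurent-polynomial remainder.

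The key point of the argument is that these poles at $z_1=-\zeta$ must cancel: expanding $2\lambda\,\G_0(z_1,\zeta)\G_g(z_1,J\backslash\{\zeta\})=\frac{(2\lambda)^3}{z_1\zeta(z_1+\zeta)^2}\,\G_g(z_1,J\backslash\{\zeta\})$ via \sref{Assumption}{conj1} produces exactly a second-order pole $\frac{1}{(z_1+\zeta)^2}$, and I expect its Laurent expansion at $z_1=-\zeta$ to match, term by term in each $\frac{\partial f}{\partial\varrho_l}$ and $\frac{\partial f}{\partial\xi}$, the singular part coming from the differentiated difference quotient. Once the sum of the two LHS contributions is shown to be holomorphic at $z_1=-\zeta$, it is a Laurent polynomial in $z_1^{-1},\zeta^{-1}$ (times $\varrho_0^{-1}$ in the relevant terms), and what remains is a finite bookkeeping: collecting these monomials and confirming that the surviving ones are precisely $\frac{1}{z_1^{4-2n}\zeta^{3+2n}}$ with $n=0,1$ (for the $\varrho_{l+1}/\varrho_0$ and for the $\xi$-pieces) and $\frac{1}{z_1^{6+2l-2n}\zeta^{3+2n}}$ with $0\le n\le l+2$, with the coefficients stated in the lemma, the $n$-ranges being inherited directly from the lengths of the geometric sums produced in the previous step. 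I anticipate that this cancellation-plus-bookkeeping is the only real obstacle; it uses no idea beyond \sref{Assumption}{conj1}, \sref{Proposition}{Prop:Cubic1+1} and the elementary partial-fraction identity, but the matching of coefficients and index ranges after the $\frac1\zeta\partial_\zeta$ differentiation requires care.
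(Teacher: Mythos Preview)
Your plan is correct and uses the same three inputs as the paper: \sref{Assumption}{conj1} to rewrite $\G_g(w,J\backslash\{\zeta\})$ as $(2\lambda)^3\hat{\mathrm A}^{\dag g}_{J\backslash\{\zeta\},w}f$, \sref{Proposition}{Prop:Cubic1+1} for $\G_0(z_1,\zeta)$, and the geometric-sum identity for $\frac{w^{-k}-t^{-k}}{w-t}$. The only difference is the order of operations, and the paper's order is neater. Rather than differentiating the difference quotient first and then cancelling the resulting second-order poles at $z_1=-\zeta$ against those of $\frac{1}{(z_1+\zeta)^2}$ by Laurent expansion, the paper observes that
\[
\G_0(z_1,\zeta)=\frac{(2\lambda)^2}{z_1\zeta(z_1+\zeta)^2}=-\frac{(2\lambda)^2}{z_1\zeta}\,\frac{\partial}{\partial\zeta}\frac{1}{z_1+\zeta},
\]
so that \emph{both} LHS terms acquire a common prefactor $\frac{(2\lambda)^6}{\zeta}\frac{\partial}{\partial\zeta}$. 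Inside the bracket one then only has to show that
\[
\frac{\hat{\mathrm A}^{\dag g}_{z_1}f-\hat{\mathrm A}^{\dag g}_{\zeta}f}{z_1^2-\zeta^2}-\frac{1}{z_1}\frac{\hat{\mathrm A}^{\dag g}_{z_1}f}{z_1+\zeta}
\]
is regular at $z_1=-\zeta$, which reduces to the elementary fact that the odd-$k$ geometric sums $\sum_{j=0}^{k-1}z_1^j\zeta^{k-1-j}$ combine with the extra $z_1^{-1}$ term into an even-power sum in $z_1,\zeta$. Only a \emph{first-order} factor $(z_1+\zeta)$ needs to cancel, and the subsequent single $\partial_\zeta$ yields the stated monomials $\frac{1+2n}{z_1^{\cdots}\zeta^{3+2n}}$ directly. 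Your route reaches the same destination, but via second-order pole matching that the paper avoids by pulling the derivative outside first.
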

\begin{proof}
\sref{Definition}{DefOp} gives with 
$\frac{\frac{1}{z_1^{3+2j}}-\frac{1}{y^{3+2j}}}{z_1^2-y^2}=
-\sum_{l=0}^{2j+2}\frac{z_1^ly^{2j+2-l}}{z_1^{3+2j}y^{3+2j}(z+y)}$
for the first term
\begin{align*}
&\frac{\G_g(z_1,{J\backslash\{\zeta\}})
-\G_g(\zeta,{J\backslash\{\zeta\}})}{
(2\lambda)^3(z_1^2-\zeta^2)}
\\
&= \sum_{l=0}^{3g-4+|J|}
\left(
-\frac{(3+2l) \varrho_{l+1}}{\varrho_0}
\Bigg(\frac{\frac{1}{z_1^3}-\frac{1}{\zeta^3}}{z_1^2-\zeta^2}\Bigg)
+(3+2l) 
\Bigg(\frac{\frac{1}{z_1^{5+2l}}-\frac{1}{\zeta^{5+2l}}}{z_1^2-\zeta^2}\Bigg)
\right)
\frac{\partial \G_g({J\backslash\{\zeta \}})}{\partial \varrho_l}
\\
&+\sum_{\xi\in J\backslash \{\zeta \}}\frac{1}{\varrho_0 \xi} 
\Bigg(\frac{\frac{1}{z_1^3}-\frac{1}{\zeta^3}}{z_1^2-\zeta^2}\Bigg)
\frac{\partial \G_g({J\backslash\{\zeta \}})}{\partial \xi}
\\
&=\sum_{l= 0}^{3g-4+|J|} 
\Bigg(
\frac{(3+2l) \varrho_{l+1}}{\varrho_0}
\frac{\sum_{n=0}^{2}z_1^n \zeta^{2-n}}{z_1^{3}\zeta^3(z_1+\zeta)}
-(3+2l)
\frac{\sum_{n=0}^{2l+4}z_1^n \zeta^{2l+4-n}}{z_1^{5+2l}\zeta^{5+2l}
(z_1+\zeta)}
\Bigg)
\frac{\partial \G_g({J\backslash\{\zeta \}})}{\partial \varrho_l}
\\
&-\sum_{\xi\in J\backslash \{\zeta \}}\frac{1}{\varrho_0 \xi}
\frac{\sum_{n=0}^{2}z_1^n \zeta^{2-n}}{z_1^{3}\zeta^3(z_1+\zeta)}
\frac{\partial \G_g({J\backslash\{\zeta \}})}{\partial \xi}.
\end{align*}
The second term reads
\begin{align*}
&\frac{1}{(2\lambda)^3} 
\G_0(z_1,\zeta )\G_{g}(z_1,{J\backslash \{\zeta \}})
\\
&=-\frac{4\lambda^2}{z_1\zeta} \frac{\partial}{\partial \zeta} 
\frac{1}{(z_1+\zeta)}\Bigg[
\sum_{l=0}^{3g-4+|J|} 
\left(\frac{-(3+2l) \varrho_{l+1}}{\varrho_0 z_1^3}
+\frac{(3+2l)}{z_1^{5+2l}} \right)
\frac{\partial \G_g({J\backslash\{\zeta \}})}{\partial \varrho_l}
\\
& +\sum_{\xi\in J\backslash \{\zeta \}}\frac{1}{\varrho_0\xi z_1^3}
\frac{\partial \G_g({J\backslash\{\zeta \}})}{\partial \xi}
\Bigg].
\end{align*}
The denominator $(z_1+\zeta)$ cancels in the combination 
of interest:
\begin{align*}
&\frac{8\lambda^3}{\zeta} \frac{\partial}{\partial \zeta}
\frac{\G_g(z_1,{J\backslash\{\zeta\}})
-\G_g(\zeta,{J\backslash\{\zeta\}})}{z_1^2-\zeta^2}
+2\lambda \G_0(z_1,\zeta )\G_{g}(z_1,{J\backslash \{\zeta \}})
\\
&= 
\frac{(2\lambda)^6}{\zeta} \frac{\partial}{\partial \zeta}
\Bigg[
\sum_{l= 0}^{3g-4+|J|} 
\Bigg(
\frac{(3+2l) \varrho_{l+1}}{\varrho_0}
\frac{z_1^2+\zeta^2}{z_1^4\zeta^3}
-(3+2l)
\frac{\sum_{n=0}^{l+2}z_1^{2n} \zeta^{2l+4-2n}}{z_1^{6+2l}\zeta^{5+2l}}
\Bigg)
\frac{\partial \G_g({J\backslash\{\zeta \}})}{\partial \varrho_l}
\\
&-\sum_{\xi\in J\backslash \{\zeta \}}\frac{1}{\varrho_0 \xi}
\frac{z_1^2+\zeta^2}{z_1^4\zeta^3}
\frac{\partial \G_g({J\backslash\{\zeta \}})}{\partial \xi}
\Bigg].
\end{align*}
The remaining $\zeta$-derivative confirms the assertion.
\end{proof}

\begin{Alemma}\label{lemma7}
Let $J=\{z_2,...,z_b\}$.
Then under \sref{Assumption}{conj1} one has 
\begin{align*}
-(2\lambda)^{3b-3}\hat{\mathrm{A}}^{\dag g}_{z_1,\dots,z_B}..
\hat{\mathrm{A}}^{\dag g}_{z_1,z_2}\hat{K}_{z_1} \G_g(z_1)=\!\lambda\!\sum_{h=1}^{g-1} \sum_{I\subset J}
\G_h(z_1,I)\G_{g-h}(z_1,{J\backslash I})
+\lambda \G_{g-1}(z_1,z_1,J).
\end{align*}
\end{Alemma}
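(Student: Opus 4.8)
The plan is to deduce the identity from the $b=1$ Schwinger--Dyson equation \eqref{eq:CubicSchwingerComplex}, already available independently, and to propagate it to $b$ boundary components by an induction on $b$ that rests on the Leibniz rule for the boundary creation operators, their commutativity, and \sref{Theorem}{finaltheorem} itself (equivalently \sref{Assumption}{conj1}, which defines the family $\G_g(z_1,J)$). First I would invoke \eqref{eq:CubicSchwingerComplex} for $b=1$, $J=\emptyset$, which reads $\hat{K}_{z_1}\G_g(z_1)=-\lambda\G_{g-1}(z_1,z_1)-\lambda\sum_{h=1}^{g-1}\G_h(z_1)\G_{g-h}(z_1)$. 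This is exactly the asserted identity for $b=1$: the operator chain is empty, $(2\lambda)^{3b-3}=1$, and the sum over $I\subset\emptyset$ collapses to a single term.

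Substituting this expression for $\hat{K}_{z_1}\G_g(z_1)$ into the left-hand side of the Lemma, the claim becomes, with $J=\{z_2,\dots,z_b\}$,
\begin{align*}
&(2\lambda)^{3b-3}\,\hat{\mathrm{A}}^{\dag g}_{z_1,\dots,z_b}\cdots\hat{\mathrm{A}}^{\dag g}_{z_1,z_2}\Big(\G_{g-1}(z_1,z_1)+\sum_{h=1}^{g-1}\G_h(z_1)\G_{g-h}(z_1)\Big)\\
&\qquad=\G_{g-1}(z_1,z_1,J)+\sum_{h=1}^{g-1}\sum_{I\subset J}\G_h(z_1,I)\,\G_{g-h}(z_1,J\backslash I)\,,
\end{align*}
so everything reduces to understanding how the chain of operators acts on $\G_{g-1}(z_1,z_1)$ and on each product $\G_h(z_1)\G_{g-h}(z_1)$.

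I would prove this by induction on $b$ using three ingredients. (i) Each $\hat{\mathrm{A}}^{\dag g}_{J,z}$ is a first-order differential operator in the moments $\{\varrho_l\}$ and in the $\zeta\in J$, hence a derivation; on a product $\G_h(z_1,I)\G_{g-h}(z_1,I')$, where $I\uplus I'$ is the set of boundary variables present, the Leibniz rule therefore applies, and since $\partial_\zeta$ annihilates any factor independent of $\zeta$ while the shared variable $z_1$ is correctly distributed over both factors, attaching the new variable $z$ yields precisely ``attach $z$ to the first factor'' plus ``attach $z$ to the second factor.'' (ii) Acting on a correlation function of genus $g'\le g$, the operator $\hat{\mathrm{A}}^{\dag g}_{J,z}$ may be replaced by $\hat{\mathrm{A}}^{\dag g'}$ with the matching variable set: by \sref{Proposition}{structure-Gz} (and its multi-boundary consequence via \sref{Corollary}{coro1}, available for genus below $g$ by the reciprocal induction on the genus) such a function depends only on finitely many moments, so the surplus $\partial_{\varrho_l}$-terms of the larger-genus operator, as well as the $\partial_\zeta$-terms indexed by absent variables, act as zero. (iii) The commutativity of the $\hat{\mathrm{A}}^{\dag g}$ established just before \sref{Theorem}{finaltheorem} lets me build up any subset $I\subset J$ of extra boundaries in whatever order the Leibniz expansion produces. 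Combining these with \sref{Theorem}{finaltheorem} in the form $(2\lambda)^3\hat{\mathrm{A}}^{\dag h}_{\{z_1\}\cup I,z}\G_h(z_1,I)=\G_h(z_1,I,z)$, the inductive step reads: applying $(2\lambda)^3\hat{\mathrm{A}}^{\dag g}_{z_1,\dots,z_b}$ to the $(b{-}1)$-version, the summand $\G_{g-1}(z_1,z_1,J')$ simply acquires the new boundary $z_b$ and becomes $\G_{g-1}(z_1,z_1,J)$, while each product splits into the two ways of attaching $z_b$, which reorganise into $\sum_{I\subset J}\G_h(z_1,I)\G_{g-h}(z_1,J\backslash I)$; the powers of $2\lambda$ balance because each of the $b-1$ attachments supplies exactly one factor $(2\lambda)^{-3}$ against the prefactor $(2\lambda)^{3b-3}$.

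The main obstacle will be the bookkeeping behind ingredient (ii): one must track precisely, at each stage of the induction and each branch of the Leibniz expansion, which moments $\varrho_l$ and which $z$-variables a given factor $\G_h(z_1,I)$ depends on, so that the replacement of $\hat{\mathrm{A}}^{\dag g}$ by the correct-genus operator is legitimate and the Leibniz rule delivers the clean subset sum over $I\subset J$ with the right coefficients and with no loss of the $\zeta=z_1$ contribution that is shared between the two factors. Once these structural points are pinned down, the remaining rearrangements are of the routine kind already carried out in \sref{Lemma}{lemma5} and \sref{Lemma}{lemma6}.
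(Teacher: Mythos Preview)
Your approach is correct and coincides with the paper's: start from the $b=1$ case of \eqref{eq:CubicSchwingerComplex}, then propagate by applying the chain of boundary creation operators and invoking the Leibniz rule together with \sref{Assumption}{conj1}. The paper compresses the entire argument into a single sentence (``operating with $(2\lambda)^{3b-3}\hat{\mathrm{A}}^{\dag g}_{z_1,\dots,z_b}\cdots\hat{\mathrm{A}}^{\dag g}_{z_1,z_2}$ and taking the Leibniz rule into account''), whereas you correctly spell out the bookkeeping---in particular your ingredient~(ii), that the surplus $\partial_{\varrho_l}$-terms of the genus-$g$ operator vanish on lower-genus factors thanks to \sref{Proposition}{structure-Gz}---which the paper leaves implicit.
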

\begin{proof}
Equation \eqref{eq:CubicSchwingerComplex} can be rewritten for $b=1$ as 
\[
-\hat{K}_{z_1} \G_g(z_1)
= \lambda \sum_{h=1}^{g-1} 
\G_h(z_1)\G_{g-h}(z_1)
+\lambda \G_{g-1}(z_1,z_1).
\]
Operating with $-(2\lambda)^{3B-3}\hat{\mathrm{A}}^{\dag g}_{z_1,\dots,z_B}...
\hat{\mathrm{A}}^{\dag g}_{z_1,z_2}$ and taking the Leibniz rule into account, the assertion follows. 
\end{proof}

\begin{Alemma}\label{lemma8}
Let $J=\{z_2,...,z_b\}$. Then under \sref{Assumption}{conj1} one has 
\begin{align*}
&(2\lambda)^3[\hat{K}_ {z_1},\hat{\mathrm{A}}^{\dag g}_{z_1,\dots,z_B}] 
\G_g(z_1,{J\backslash \{z_b\}})
\\
&=(2\lambda)^6\Bigg[\sum_{l=0}^{3g-4+|J|}\frac{3+2l}{z_1^2z_b^3}
\Bigg(\frac{\varrho_{l+1}}{\varrho_0 z_1^2}
+\frac{3 \varrho_{l+1}}{\varrho_0 z_b^2}
- \frac{1}{z_1^{4+2l}}
-\frac{(5+2l)}{z_b^{4+2l}}\Bigg)\frac{\partial}{\partial \varrho_{l}}
\\
&\qquad
-\sum_{l=0}^{3g-4+|J|}\sum_{k=0}^l\frac{(3+2l)(3+2k)}{z_1^{4+2l-2k}
z_b^{5+2k}}\frac{\partial}{\partial \varrho_{l}}
-\sum_{\xi\in J\backslash \{z_b \}}
\frac{1}{\varrho_0 z_1^2\xi z_b^3}
\Big(\frac{1}{z_1^2}+\frac{3}{z_b^2}\Big)
\frac{\partial}{\partial \xi}
	\Bigg]\G_g({J\backslash \{z_b\}}).
	\end{align*}
\end{Alemma}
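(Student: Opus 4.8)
The plan is to compute the commutator $[\hat{K}_{z_1},\hat{\mathrm{A}}^{\dag g}_{z_1,\dots,z_b}]$ head‑on, using the recursive structure of \sref{Theorem}{finaltheorem}. Write $S=J\setminus\{z_b\}=\{z_2,\dots,z_{b-1}\}$, so that $\G_g(z_1,S)$ is the $(b-1)$‑point function $\G_g(z_1,z_2,\dots,z_{b-1})$. Commutativity of the boundary creation operators (the Lemma right after \sref{Definition}{DefOp}) lets us treat $z_b$ as the last added variable, and the defining relation of \sref{Theorem}{finaltheorem} then gives $\hat{\mathrm{A}}^{\dag g}_{z_1,\dots,z_b}\G_g(z_1,S)=\frac{1}{(2\lambda)^3}\G_g(z_1,\dots,z_b)$. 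Hence
\[
[\hat{K}_{z_1},\hat{\mathrm{A}}^{\dag g}_{z_1,\dots,z_b}]\G_g(z_1,S)
=\frac{1}{(2\lambda)^3}\hat{K}_{z_1}\G_g(z_1,\dots,z_b)
-\hat{\mathrm{A}}^{\dag g}_{\{z_1\}\cup S,z_b}\big(\hat{K}_{z_1}\G_g(z_1,S)\big).
\]
The first term on the right is evaluated by \sref{Lemma}{lemma5} applied to the $b$‑point function $\G_g(z_1,z_2,\dots,z_b)$; the second by first applying \sref{Lemma}{lemma5} to the $(b-1)$‑point function $\G_g(z_1,S)$ and then acting with $\hat{\mathrm{A}}^{\dag g}_{\{z_1\}\cup S,z_b}$ from \sref{Definition}{DefOp}. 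This reduces both sides to explicit expressions.

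After this reduction, both terms are written in terms of $\partial_{\varrho_l}\G_g(S)$, $\partial_\zeta\G_g(S)$ ($\zeta\in S$) and their second derivatives, with coefficients rational in $z_1$ and $z_b$. In the first term one further expands $\partial_{\varrho_l}\G_g(z_2,\dots,z_b)$ and $\partial_\zeta\G_g(z_2,\dots,z_b)$ via $\G_g(z_2,\dots,z_b)=(2\lambda)^3\hat{\mathrm{A}}^{\dag g}_{S,z_b}\G_g(S)$ and the Leibniz rule. The $z_1$‑dependence is controlled with \sref{Lemma}{lemmaopK}: on Laurent monomials one has $\hat{K}_{z_1}z_1^{-(3+2n)}=\sum_{k=0}^n\varrho_k z_1^{-(2n+2-2k)}$ and $\hat{K}_{z_1}z_1^{-1}=Z^{-1/2}$, while $\hat{K}_{z_1}$ commutes with multiplication by anything independent of $z_1$ — in particular by the coefficients $-\tfrac{(3+2l)\varrho_{l+1}}{\varrho_0 z_b^3}+\tfrac{3+2l}{z_b^{5+2l}}$ and $\tfrac{1}{\varrho_0 z_b^3\zeta}$ for $\zeta\in S$, and by $\tfrac{1}{\varrho_0}$, $\tfrac{\varrho_{l+1}}{\varrho_0}$. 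The only summand of $\hat{\mathrm{A}}^{\dag g}_{\{z_1\}\cup S,z_b}$ that fails to commute with $\hat{K}_{z_1}$ is $\tfrac{1}{\varrho_0 z_b^3 z_1}\partial_{z_1}$, whose commutator with $\hat{K}_{z_1}$ on $z_1^{-(3+2n)}$ is again read off from \sref{Lemma}{lemmaopK}; and the $\varrho$‑dependence of $\hat{K}_{z_1}$ sits entirely in the multiplication operator $\lambda z_1(\G_0(z_1)-\G_0(-z_1))=\sum_{m\ge 0}\varrho_m z_1^{2m+2}$ (see \eqref{Gzminusz}), so $[\hat{K}_{z_1},\partial_{\varrho_l}]$ is multiplication by $-z_1^{2l+2}$.

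The main obstacle is the cancellation of all \emph{second}‑order derivatives of $\G_g(S)$ — the terms $\partial^2_{\varrho_l\varrho_m}\G_g(S)$ and $\partial_{\varrho_l}\partial_\zeta\G_g(S)$ — between the two halves of the commutator, since the operator asserted in \sref{Lemma}{lemma8} contains only first derivatives $\partial_{\varrho_l}$ and $\partial_\xi$. These double derivatives arise when $\hat{\mathrm{A}}^{\dag g}_{\{z_1\}\cup S,z_b}$ meets the $\partial_{\varrho_l}$'s produced by \sref{Lemma}{lemma5} in the second half, and when the two $\hat{\mathrm{A}}^{\dag g}$'s acting in sequence collide in the first half; matching them requires repeated partial‑fraction expansions $\frac{z_1^{-(3+2j)}-z^{-(3+2j)}}{z_1^2-z^2}=-\sum z_1^{a}z^{b}\cdots$ of exactly the kind already used in the proof of \sref{Lemma}{lemma6}. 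Once the bilinear terms cancel, one collects the coefficients of the surviving monomials $\frac{1}{z_1^{2i}z_b^{2j+1}}\partial_{\varrho_l}\G_g(S)$ and $\frac{1}{z_1^2\xi z_b^3}(\cdots)\partial_\xi\G_g(S)$ and verifies, order by order in the poles at $z_1=0$ and $z_b=0$, that they assemble into the operator displayed in \sref{Lemma}{lemma8}; a minor bookkeeping point is that the formally larger summation range $l\le 3g-3+|J|$ of \sref{Definition}{DefOp} may be truncated to $l\le 3g-4+|J|$ since the extra term annihilates $\G_g(S)$ by \sref{Proposition}{structure-Gz}. I expect the bilinear cancellation to be the only genuinely delicate step; the remainder is disciplined algebra.
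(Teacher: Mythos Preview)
Your approach is correct and essentially the same as the paper's: compute both halves of the commutator by applying \sref{Lemma}{lemma5} to $\G_g(z_1,J)$ and to $\G_g(z_1,J\setminus\{z_b\})$ respectively, expand $\G_g(J)=(2\lambda)^3\hat{\mathrm{A}}^{\dag g}_{z_2,\dots,z_b}\G_g(J\setminus\{z_b\})$ in the first and act with $\hat{\mathrm{A}}^{\dag g}_{z_1,\dots,z_b}$ in the second, then subtract. One small correction: the bilinear cancellation is easier than you suggest --- the second-order terms $\partial^2_{\varrho_l\varrho_{l'}}$, $\partial^2_{\varrho_l,\xi}$ and $\partial_\zeta\frac{1}{\xi}\partial_\xi$ appear with \emph{identical} coefficients in both halves (the paper just lists them), so they cancel by inspection; no partial-fraction identities of the \sref{Lemma}{lemma6} type are needed. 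The residual first-order pieces then assemble into the stated operator by straightforward bookkeeping.
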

\begin{proof}
The first term of the lhs, 
$\hat{K}_{z_1}\hat{\mathrm{A}}^{\dag g}_{z_1,\dots,z_b} 
\G_g(z_1,{J\backslash \{z_b\}})$, is given by \sref{Lemma}{lemma5} 
and $\G_g(J)=(2\lambda)^3\hat{\mathrm{A}}^{\dag g}_{z_2,\dots,z_b}
\G_g({J\backslash \{z_b\}})$ to
\begin{align*}
&\hat{K}_{z_1} (\G_g(z_1,J))
\\
&=\frac{(2\lambda)^6}{z_1^2}\Bigg[
\sum_{l=0}^{3g-3+|J|} (3+2l)\sum_{k=0}^{l}\frac{\varrho_k}{z_1^{2+2l-2k}}
\\
&\quad\times \frac{\partial}{\partial \varrho_{l}}\Bigg(
\sum_{l'=0}^{3g-4+|J|} \!\!
\Big({-}\frac{(3{+}2l') \varrho_{l'+1}}{\varrho_0 z_b^3}
+\frac{3{+}2l'}{z_b^{5+2l'}}\Big)
\frac{\partial}{\partial \varrho_{l'}}
+\!\!\! \sum_{\xi\in J\backslash \{z_b \}} \frac{1}{\varrho_0 \xi z_b^3}
\frac{\partial}{\partial \xi}\Bigg)
\big(\G_g({J\backslash \{z_b\}})\big)
\\
&+\sum_{\zeta\in J}\frac{1}{\zeta}\frac{\partial}{\partial \zeta}    
\Bigg(
\sum_{l'=0}^{3g-4+|J|} \!\!
\Big({-}\frac{(3{+}2l') \varrho_{l'+1}}{\varrho_0 z_b^3}
+\frac{3{+}2l'}{z_b^{5+2l'}}\Big)
\frac{\partial}{\partial \varrho_{l'}}
+ \!\!\! \sum_{\xi\in J\backslash \{z_b \}} \frac{1}{\varrho_0 \xi z_b^3}
\frac{\partial}{\partial \xi}\Bigg)
\big(\G_g({J\backslash \{z_b\}})\big)
\Bigg]
\\
&=\frac{(2\lambda)^6}{z_1^2}\Bigg[
\sum_{l=0}^{3g-4+|J|} \sum_{k=0}^{l+1}
\frac{-(5+2l)(3+2l)\varrho_k}{\varrho_0 z_1^{4+2l-2k}z_b^3}
\frac{\partial}{\partial \varrho_{l}}
+\sum_{l'= 0}^{3g-4+|J|}
\frac{3(3+2l')\varrho_0\varrho_{l'+1}}{\varrho_0^2 
z_1^2 z_b^3}\frac{\partial}{\partial \varrho_{l'}}
\\
&-\!\!\! \sum\limits_{\xi\in J\backslash \{z_b \}}
\frac{3\varrho_0}{\varrho_0^2z_1^2 \xi z_b^3}\frac{\partial}{\partial \xi}
+\sum_{l,l'=0}^{3g-4+|J|}
\sum\limits_{k=0}^{l}\frac{(3{+}2l) \varrho_k}{z_1^{2+2l-2k}}
\Big({-}\frac{(3{+}2l') \varrho_{l'+1}}{\varrho_0 z_b^3}
+\frac{3{+}2l'}{z_b^{5+2l'}}\Big)
\frac{\partial^2}{\partial\varrho_l\partial\varrho_{l'}}
\\
&+\sum_{l=0}^{3g-4+|J|}\sum_{k=0}^l\sum_{\xi\in J\backslash \{z_b \}}
\frac{(3+2l) \varrho_k}{\varrho_0 z_1^{2+2l-2k} \xi z_b^3 }
\frac{\partial^2}{\partial \varrho_l\partial \xi}
\\
&+\sum_{\zeta \in J\backslash \{z_b \}}\Bigg(
\sum_{l'=0}^{3g-4+|J|}\Big(-\frac{(3+2l') \varrho_{l'+1}}{\varrho_0 
\zeta z_b^3}+\frac{3+2l'}{\zeta z_b^{5+2l'}}\Big)
\frac{\partial^2}{\partial \zeta\partial \varrho_{l'}}
+
\sum_{\xi \in J\backslash \{z_b \}} \frac{1}{\varrho_0 \zeta z_b^3}
\frac{\partial}{\partial \zeta} \frac{1}{\xi} \frac{\partial}{\partial \xi}
\Bigg)
\\
&-\!\!\!\sum_{l'=0}^{3g-4+|J|}
\Big(-\frac{3(3+2l') \varrho_{l'+1}}{\varrho_0 z_b^5}
+\frac{(3+2l')(5+2l')}{z_b^{7+2l'}}\Big)
\frac{\partial}{\partial \varrho_{l'}}
-\!\!\! \sum_{\xi\in J\backslash \{z_b \}} \frac{3}{\varrho_0 \xi z_b^5}
\frac{\partial}{\partial \xi}\Bigg]\G_g({J\backslash \{z_b\}}).
\end{align*}
We have used that $\G_g({J\backslash \{z_b\}})$ can 
only depend on $\varrho_l$ for $l\leq 3g-4+|J|$.
For the second term of the lhs, 
$\hat{\mathrm{A}}^{\dag g}_{z_1,\dots,z_b}\hat{K}_{z_1}\G_g(z_1,{J\backslash z_b})$, 
\sref{Lemma}{lemma5} can also be used with $b-1$ instead of $b$:
\begin{align*}
&(2\lambda)^3\hat{\mathrm{A}}^{\dag g}_{z_1,\dots,z_b}\hat{K}_{z_1}\G_g(z_1,{J\backslash \{z_b\}})
\\
&=(2\lambda)^6 
\Bigg(\sum_{l'=0}^{3g-3+|J|}\!\!\! \Big(
{-}\frac{(3+2l') \varrho_{l'+1}}{\varrho_0 z_b^3}+\frac{3+2l'}{z_b^{5+2l'}}
\Big)\frac{\partial}{\partial \varrho_{l'}}
+\sum_{\xi\in J\backslash \{z_b \}} \frac{1}{\varrho_0 \xi z_b^3}
\frac{\partial}{\partial \xi}
+\frac{1}{\varrho_0 z_1 z_b^3}\frac{\partial}{\partial z_1}\Bigg)
\\
&\times \frac{1}{z_1^2}\Bigg[
\sum_{l= 0}^{3g-4-|J|} (3+2l)\sum_{k=0}^{l}\frac{\varrho_k}{z_1^{2+2l-2k}}
\frac{\partial}{\partial \varrho_l}
+\sum_{\zeta\in J\backslash \{z_b \}}\frac {1  }{\zeta} 
\frac{\partial}{\partial \zeta}\Bigg]
\G_g({J\backslash \{z_b\}})
\\
&=(2\lambda)^6\Bigg[
\sum_{l=0}^{3g-4+|J|}\sum_{k=0}^l\frac{(3+2l)}{z_1^{4+2l-2k}}
\Big( {-}\frac{(3+2k) \varrho_{k+1}}{\varrho_0 z_b^3}
+\frac{3+2k}{z_b^{5+2k}}\Big)\frac{\partial}{\partial \varrho_l}
\\
&+\sum_{l,l'= 0}^{3g-4-|J|}
\sum_{k=0}^l\frac{(3+2l)\varrho_k}{z_1^{4+2l-2k}}
\Big({-}\frac{(3+2l') \varrho_{l'+1}}{\varrho_0 z_b^3}
+\frac{3+2l'}{z_b^{5+2l'}}\Big)
\frac{\partial^2}{\partial\varrho_l\partial\varrho_{l'}}
\\
&+\sum_{\xi\in J\backslash \{z_b \}}
\bigg(\sum_{l=0}^{3g-4+|J|}\sum_{k=0}^{l} 
\frac{(3+2l)\varrho_k}{
\varrho_0 z_1^{4+2l-2k} \xi z_b^3}
\frac{\partial^2}{\partial \varrho_l\partial \xi}
+\sum_{\zeta\in J\backslash \{z_b \}}
\frac{1}{\varrho_0 z_1^2 \xi z_b^3}
\frac{\partial}{\partial \xi} \frac{1}{\zeta} 
\frac{\partial}{\partial \zeta} \bigg)
\\
&+\sum_{l'=0}^{3g-4+|J|}\sum_{\zeta\in J\backslash \{z_b \}}
\Big( {-}\frac{(3+2l') \varrho_{l'+1}}{\varrho_0 z_1^2 \zeta z_b^3}
+\frac{3+2l'}{z_1^2 \zeta z_b^{5+2l'}}\Big)
\frac{\partial^2}{\partial \varrho_{l'}\partial \zeta}
\\
&-\frac{2}{\varrho_0z_1^4 z_b^3}\bigg(\sum_{l=0}^{3g-4+|J|}\sum_{k=0}^{l} 
\frac{(3+2l)\varrho_k}{z_1^{2+2l-2k}}
\frac{\partial}{\partial \varrho_l}+ \sum_{\zeta\in J\backslash \{z_b \}}
\frac{1}{\zeta} \frac{\partial}{\partial \zeta}\bigg)
\\
&-\frac{1}{\varrho_0z_1^2 z_b^3}\sum_{l=0}^{3g-4+|J|}\sum_{k=0}^{l} 
\frac{(3+2l)(2+2l-2k)\varrho_k}{z_1^{4+2l-2k}}
\frac{\partial}{\partial \varrho_l}
\Bigg]\G_g({J\backslash \{z_b\}}).
\end{align*}
Subtracting the second from the first expression proves the Lemma.
\end{proof}

\begin{Alemma}\label{lemma9}
Let $J=\{z_2,...,z_b\}$.The linear integral equation \eqref{eq:CubicSchwingerComplex} is
under \sref{Assumption}{conj1} and with \sref{Definition}{DefOp} 
equivalent to the expression
\begin{align*}
0=&(2\lambda)^{3}[\hat{K}_{z_1},\hat{\mathrm{A}}^{\dag g}_{z_1,\dots,z_b}] 
\G_g(z_1,{J\backslash \{z_b\}})
+2\lambda \G_0(z_1,z_b)
\G_g(z_1,{J\backslash\{z_b\}})
\\
&+(2\lambda)^3\frac{1}{z_b}\frac{\partial}{\partial z_b}
\frac{\G_g(z_1,{J\backslash\{z_b\}})
-\G_g(z_b,{J\backslash\{z_b\}})}{z_1^2-z_b^2}.
\end{align*}
\end{Alemma}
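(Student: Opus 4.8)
\textbf{Proof proposal for Lemma \ref{lemma9}.}
The plan is to start from the linear integral equation \eqref{eq:CubicSchwingerComplex} for the $(1+b)$-point function of genus $g$, which reads
\begin{align*}
\hat{K}_{z_1}\G_g(z_1,J)
+\lambda\G_{g-1}(z_1,z_1,J)
+\lambda\!\!\sum_{\substack{I\uplus I'=J\\h+h'=g}}^{\prime}\!\!\G_h(z_1,I)\G_{h'}(z_1,I')
+(2\lambda)^3\sum_{\zeta\in J}\frac{1}{\zeta}\frac{\partial}{\partial\zeta}
\frac{\G_g(z_1,J\backslash\{\zeta\})-\G_g(\zeta,J\backslash\{\zeta\})}{z_1^2-\zeta^2}=0\;,
\end{align*}
and to substitute into it the \emph{definition} (under \sref{Assumption}{conj1}) $\G_g(z_1,J)=(2\lambda)^3\hat{\mathrm{A}}^{\dag g}_{J\backslash\{z_b\},z_1}\hat{\mathrm{A}}^{\dag g}_{\dots}\cdots\G_g(z_1)$, i.e.\ to peel off the last boundary creation operator $\hat{\mathrm{A}}^{\dag g}_{z_1,\dots,z_b}$ acting on the variable $z_b$. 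Concretely, I would single out in the $\sum_{\zeta\in J}$ term the summand $\zeta=z_b$, combine the remaining $\zeta\in J\backslash\{z_b\}$ terms and the genus/subset sum with the lower-$|J|$ version of \eqref{eq:CubicSchwingerComplex}, and then argue that those "bulk" pieces are reproduced exactly by $\hat{\mathrm{A}}^{\dag g}_{z_1,\dots,z_b}$ acting on the genus-$g$, $(1+b-1)$-point equation — leaving precisely the commutator and the two $z_b$-specific inhomogeneity terms.

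First I would write $\hat{K}_{z_1}\G_g(z_1,J) = \hat{K}_{z_1}(2\lambda)^3\hat{\mathrm{A}}^{\dag g}_{z_1,\dots,z_b}\G_g(z_1,J\backslash\{z_b\})$ and split it as $(2\lambda)^3\hat{\mathrm{A}}^{\dag g}_{z_1,\dots,z_b}\hat{K}_{z_1}\G_g(z_1,J\backslash\{z_b\}) + (2\lambda)^3[\hat{K}_{z_1},\hat{\mathrm{A}}^{\dag g}_{z_1,\dots,z_b}]\G_g(z_1,J\backslash\{z_b\})$. The first of these, together with $(2\lambda)^3\hat{\mathrm{A}}^{\dag g}_{z_1,\dots,z_b}$ applied to the remaining terms of the $(1+b-1)$-point equation, vanishes because that equation holds by induction on $b$ (or is the hypothesis being propagated). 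The catch is bookkeeping: applying a boundary creation operator to the nonlinear pieces $\G_{g-1}(z_1,z_1,J\backslash\{z_b\})$ and $\sum\G_h(z_1,I)\G_{h'}(z_1,I')$ via the Leibniz rule generates, upon inserting $z_b$, exactly the extra terms $\G_{g-1}(z_1,z_1,J)$ and the enlarged genus/subset sum of the $(1+b)$-point equation, plus products where $z_b$ sits on a separate $(1+1)$-factor $\G_0(z_1,z_b)$; one must check these reassemble correctly. The term $(2\lambda)^3\sum_{\zeta\in J\backslash\{z_b\}}\frac{1}{\zeta}\partial_\zeta\frac{\G_g(z_1,\cdot)-\G_g(\zeta,\cdot)}{z_1^2-\zeta^2}$, once $\hat{\mathrm{A}}^{\dag g}_{z_1,\dots,z_b}$ is applied and $z_b$ inserted, must likewise match the corresponding part of the $(1+b)$-point equation. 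This is where \sref{Lemma}{lemma6} enters: it is precisely the identity asserting that the $z_b$-derivative difference-quotient term plus $2\lambda\G_0(z_1,z_b)\G_g(z_1,J\backslash\{z_b\})$ equals a manifestly operator-valued expression that combines with the commutator of \sref{Lemma}{lemma8}.

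The key steps, in order, would be: (i) peel off $\hat{\mathrm{A}}^{\dag g}_{z_1,\dots,z_b}$ and produce the $\hat{K}_{z_1}$-commutator term; (ii) recognise that $(2\lambda)^3\hat{\mathrm{A}}^{\dag g}_{z_1,\dots,z_b}$ applied to the full $(1+b-1)$-point genus-$g$ equation is zero, so only the "boundary" contributions from the Leibniz rule survive; (iii) collect those boundary contributions and verify, term by term, that they equal the $z_b=\zeta$ summand of the $\sum_{\zeta\in J}$ term plus the new nonlinear pieces involving $z_b$ — this produces exactly $2\lambda\G_0(z_1,z_b)\G_g(z_1,J\backslash\{z_b\})$ (from the $(0,1)$-factorisation with $z_b$ on its own) together with $(2\lambda)^3\frac{1}{z_b}\partial_{z_b}\frac{\G_g(z_1,J\backslash\{z_b\})-\G_g(z_b,J\backslash\{z_b\})}{z_1^2-z_b^2}$; (iv) assemble the remainder into the stated identity. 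The main obstacle I anticipate is step (iii): matching the Leibniz-generated terms requires care in distinguishing, within the genus/subset sum $\sum^{\prime}_{I\uplus I'=J,\,h+h'=g}$, exactly which summands arise from applying a derivative-type versus residue-type piece of $\hat{\mathrm{A}}^{\dag g}_{z_1,\dots,z_b}$ to which factor, and confirming that the combinatorial excluded cases $(h,I)=(0,\emptyset),(g,J)$ line up on both sides. The actual algebraic closure of this matching is then exactly the content of \sref{Lemma}{lemma6} and \sref{Lemma}{lemma8}, which is why those are invoked as the two halves adding to $0$ in the proof of \sref{Theorem}{finaltheorem}; here I only need the reorganised form stated in \sref{Lemma}{lemma9}, so it suffices to verify that the reorganisation is an equivalent rewriting of \eqref{eq:CubicSchwingerComplex}, not that it vanishes.
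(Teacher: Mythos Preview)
Your proposal is essentially correct and follows the same route as the paper's proof: split off the commutator from $\hat{K}_{z_1}\G_g(z_1,J)=(2\lambda)^3\hat K_{z_1}\hat{\mathrm{A}}^{\dag g}_{z_1,\dots,z_b}\G_g(z_1,J\backslash\{z_b\})$, subtract $(2\lambda)^3\hat{\mathrm{A}}^{\dag g}_{z_1,\dots,z_b}$ applied to the $(b{-}1)$-point version of the equation, and track what survives. The paper packages your ``induction on $b$'' + Leibniz argument into \sref{Lemma}{lemma7}, which converts the nonlinear genus/subset sum directly into $-(2\lambda)^{3b-3}\hat{\mathrm{A}}^{\dag g}_{z_1,\dots,z_b}\cdots\hat{\mathrm{A}}^{\dag g}_{z_1,z_2}\hat K_{z_1}\G_g(z_1)$; this makes the substitution of the $(b{-}1)$-version of the equation a one-line step rather than an inductive bookkeeping exercise.

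Two small points where you drift from what is actually needed. First, \sref{Lemma}{lemma6} is \emph{not} used in the proof of \sref{Lemma}{lemma9}; it (together with \sref{Lemma}{lemma8}) enters only in \sref{Theorem}{finaltheorem} to show that the reorganised expression vanishes. You note this yourself at the very end, but earlier you invoke it as part of the reorganisation mechanism, which it is not. Second, the clean cancellation of the $\zeta\neq z_b$ difference-quotient terms in your step (iii) is not automatic from Leibniz alone: the paper makes explicit use of the two identities
\[
\Big[\hat{\mathrm{A}}^{\dag g}_{z_1,\dots,z_b},\tfrac{1}{\zeta}\tfrac{\partial}{\partial\zeta}\Big]=0
\qquad\text{and}\qquad
\hat{\mathrm{A}}^{\dag g}_{z_1,\dots,z_b}\,\tfrac{1}{z_1^2-\zeta^2}=0\quad(\zeta\in J\backslash\{z_b\}),
\]
which you would have to state and check to close the argument.
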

\begin{proof}
With \sref{Lemma}{lemma7} we can rewrite the linear 
integral equation \eqref{eq:CubicSchwingerComplex} in the form
\begin{align}\label{genuseqop}\nonumber
0=&(2\lambda)^{3b-3}\hat{K}_{z_1} \hat{\mathrm{A}}^{\dag g}_{z_1,\dots,z_b} \dots 
\hat{\mathrm{A}}^{\dag g}_{z_1,z_2} \G_g(z_1)
-(2\lambda)^{3b-3} \hat{\mathrm{A}}^{\dag g}_{z_1,\dots,z_b} \dots 
\hat{\mathrm{A}}^{\dag g}_{z_1,z_2} \hat{K}_{z_1} \G_g(z_1)
\\
&+2\lambda\G_g(z_1) \G_0(z_1,J)
+2\lambda \!\!\! \sum\limits_{\substack{I\subset J\\1\leq|I|<|J|}} \!\!\! 
\G_0(z_1,{I})\G_g(z_1,{J\backslash I})
\nonumber
\\
&+(2\lambda)^3
\sum\limits_{\zeta\in J} 
\frac{1}{\zeta} 
\frac{\partial}{\partial \zeta}
\frac{\G_g(z_1,{J\backslash\{\zeta\}})
-\G_g(\zeta,{J\backslash\{\zeta\}})}{z_1^2-\zeta^2}.
\end{align}
By using this formula for $
\hat{\mathrm{A}}^{\dag g}_{z_1,\dots,z_{b-1}} \dots \hat{\mathrm{A}}^{\dag g}_{z_1,z_2} \hat{K}_{z_1} \G_g(z_1)$ 
and inserting it back into \eqref{genuseqop} gives
\begin{align*}
0=&
(2\lambda)^{3b-3}[\hat{K}_{z_1}, \hat{\mathrm{A}}^{\dag g}_{z_1,\dots,z_b}] 
\hat{\mathrm{A}}^{\dag g}_{z_1,\dots,z_{b-1}} \dots 
\hat{\mathrm{A}}^{\dag g}_{z_1,z_2} \G_g(z_1)
\\
&+2\lambda\G_g(z_1)\G_0(z_1,J)-(2\lambda)^3\hat{\mathrm{A}}^{\dag g}_{z_1,\dots,z_b}
(\G_g(z_1)\G_0(z_1,{J\backslash z_b}))
\\
&+2\lambda \!\!\! \sum\limits_{\substack{I\subset J\\1\leq|I|<|J|}} \!\!\!
\G_0(z_1,{I})\G_g(z_1,{J\backslash I})
-(2\lambda)^4\hat{\mathrm{A}}^{\dag g}_{z_1,\dots,z_b} \!\!\! 
\sum\limits_{\substack{I\subset J\backslash\{z_b\}\\1\leq|I|<|J|-1}}
\!\!\! \G_0(z_1,{I})\G_g(z_1,{J\backslash \{I,z_b\}})
\\
&+(2\lambda)^3
\sum\limits_{\zeta \in J} 
\frac{1}{\zeta} 
\frac{\partial}{\partial \zeta}
\frac{\G_g(z_1,{J\backslash\{\zeta\}})
-\G_g(\zeta,{J\backslash\{\zeta\}})}{z_1^2-\zeta^2}
\\
&-(2\lambda)^6\hat{\mathrm{A}}^{\dag g}_{z_1,\dots,z_b}
\sum\limits_{\zeta \in J\setminus\{z_b\}} 
\frac{1}{\zeta} 
\frac{\partial}{\partial \zeta}
\frac{\G_g(z_1,{J\backslash\{\zeta,z_b\}})
-\G_g(\zeta,{J\backslash\{\zeta,z_b\}})}{z_1^2-\zeta^2}.
\end{align*}
The second and third line break down to
$2\lambda \G_0(z_1,z_b)\G_g(z_1,{J\backslash\{z_b\}}).$
Therefore, the assertion follows if we can show that, 
in the fourth line, the part of the sum which excludes 
$\zeta=z_b$ cancels with the fifth line.
This is true because of  
\[
\Big[\hat{\mathrm{A}}^{\dag g}_{z_1,\dots,z_b},\frac{1}{\zeta}\frac{\partial}{\partial \zeta}
\Big]=0 \quad\text{and} \quad 
\hat{\mathrm{A}}^{\dag g}_{z_1,\dots,z_b} \frac{1}{z_1^2-\zeta^2}=0.
\]
Consequently, the linear integral equation can be written 
by operators of the form given in this Lemma.
\end{proof}

\chapter{Perturbative Computations on the Moyal Space}\label{App:Pert}
The following calculations confirm the results derived in the thesis. We will show the first non-trivial examples in 
different dimensions which are already complicated enough. For correlation functions with more boundary components 
or higher genus, the number of Feynman graphs is to large for an appropriate example. 
For a more convenient representation, the ribbon graphs will be drawn with lines instead of ribbons.

\section{Cubic Interaction}\label{App:PertCubic}
We will focus on the planar 1-point function
on the Moyal space in different dimensions. The lowest order Feyman graphs with only 3-valent vertices 
$\Gamma_1,\Gamma_2,\Gamma_3,\Gamma_4,\Gamma_5\in\mathfrak{G}_x^{(0,1)}$ are:

\vspace{3ex}
\def\svgwidth{0.9\textwidth}
\begingroup%
  \makeatletter%
  \providecommand\color[2][]{%
    \errmessage{(Inkscape) Color is used for the text in Inkscape, but the package 'color.sty' is not loaded}%
    \renewcommand\color[2][]{}%
  }%
  \providecommand\transparent[1]{%
    \errmessage{(Inkscape) Transparency is used (non-zero) for the text in Inkscape, but the package 'transparent.sty' is not loaded}%
    \renewcommand\transparent[1]{}%
  }%
  \providecommand\rotatebox[2]{#2}%
  \ifx\svgwidth\undefined%
    \setlength{\unitlength}{472.19380798bp}%
    \ifx\svgscale\undefined%
      \relax%
    \else%
      \setlength{\unitlength}{\unitlength * \real{\svgscale}}%
    \fi%
  \else%
    \setlength{\unitlength}{\svgwidth}%
  \fi%
  \global\let\svgwidth\undefined%
  \global\let\svgscale\undefined%
  \makeatother%
  \begin{picture}(1,0.57267756)%
    \put(0,0){\includegraphics[width=\unitlength,page=1]{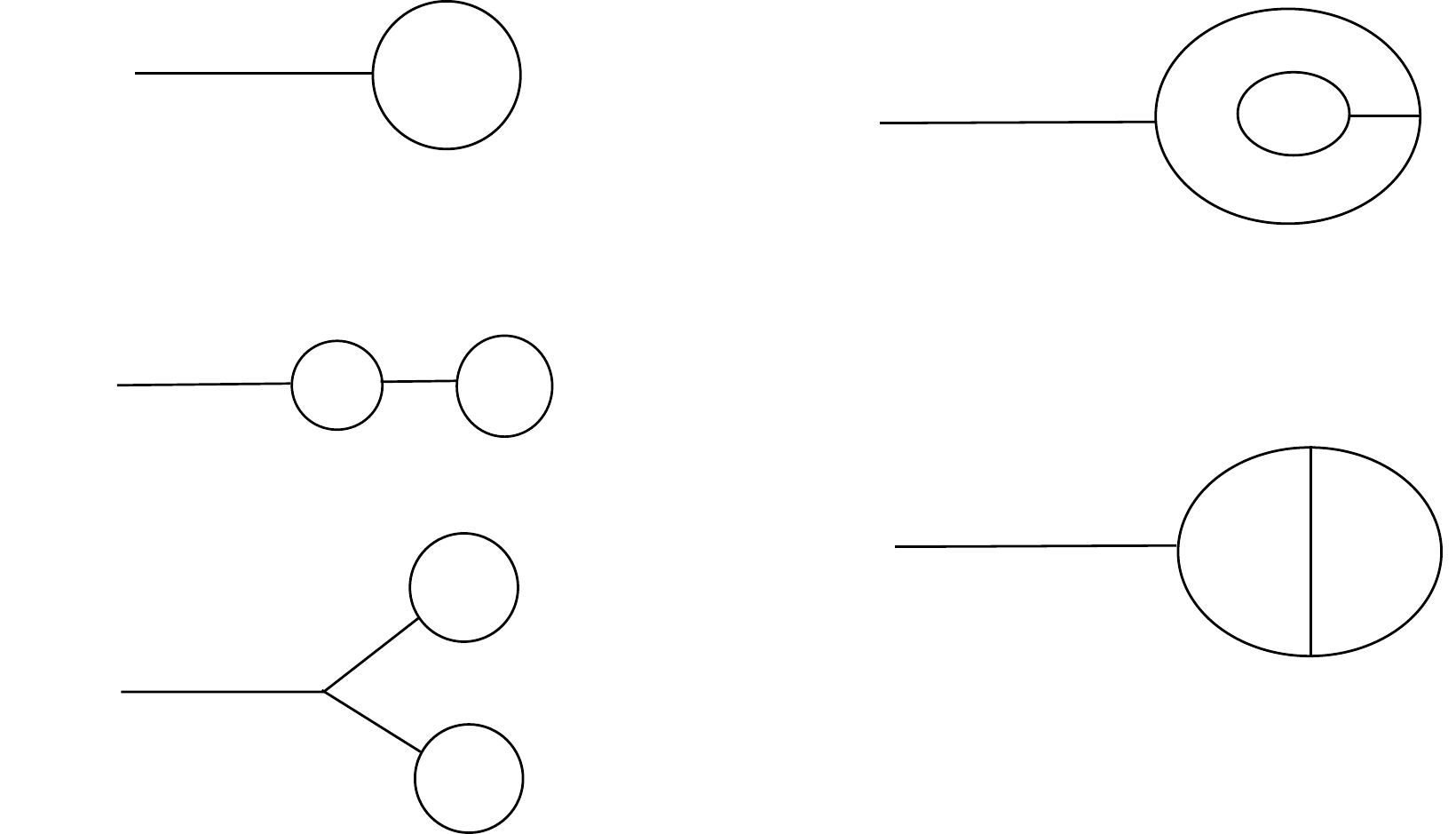}}%
    \put(-0.01410574,0.52921777){\color[rgb]{0,0,0}\makebox(0,0)[lb]{\smash{}}}%
    \put(0.01156556,0.5155264){\color[rgb]{0,0,0}\makebox(0,0)[lb]{\smash{$\Gamma_1$=}}}%
    \put(-0.0021095,0.30093152){\color[rgb]{0,0,0}\makebox(0,0)[lb]{\smash{$\Gamma_2$=}}}%
    \put(0.00131334,0.09213826){\color[rgb]{0,0,0}\makebox(0,0)[lb]{\smash{$\Gamma_3$=}}}%
    \put(0.52158513,0.48148632){\color[rgb]{0,0,0}\makebox(0,0)[lb]{\smash{$\Gamma_4$=}}}%
    \put(0.53442072,0.19225626){\color[rgb]{0,0,0}\makebox(0,0)[lb]{\smash{$\Gamma_5$=}}}%
    \put(0.15705259,0.54224176){\color[rgb]{0,0,0}\makebox(0,0)[lb]{\smash{$x$}}}%
    \put(0.13204959,0.32889643){\color[rgb]{0,0,0}\makebox(0,0)[lb]{\smash{$x$}}}%
    \put(0.15045582,0.11668035){\color[rgb]{0,0,0}\makebox(0,0)[lb]{\smash{$x$}}}%
    \put(0.7066674,0.2236441){\color[rgb]{0,0,0}\makebox(0,0)[lb]{\smash{$x$}}}%
    \put(0.68270749,0.52057553){\color[rgb]{0,0,0}\makebox(0,0)[lb]{\smash{$x$}}}%
    \put(0.28309089,0.51800838){\color[rgb]{0,0,0}\makebox(0,0)[lb]{\smash{$y$}}}%
    \put(0.21377834,0.30400069){\color[rgb]{0,0,0}\makebox(0,0)[lb]{\smash{$y_1$}}}%
    \put(0.3028754,0.16658502){\color[rgb]{0,0,0}\makebox(0,0)[lb]{\smash{$y_1$}}}%
    \put(0.84026138,0.19482344){\color[rgb]{0,0,0}\makebox(0,0)[lb]{\smash{$y_1$}}}%
    \put(0.81116731,0.47292923){\color[rgb]{0,0,0}\makebox(0,0)[lb]{\smash{$y_1$}}}%
    \put(0.33196955,0.30349864){\color[rgb]{0,0,0}\makebox(0,0)[lb]{\smash{$y_2$}}}%
    \put(0.30373112,0.03566131){\color[rgb]{0,0,0}\makebox(0,0)[lb]{\smash{$y_2$}}}%
    \put(0.92583239,0.19396775){\color[rgb]{0,0,0}\makebox(0,0)[lb]{\smash{$y_2$}}}%
    \put(0.86935544,0.49089918){\color[rgb]{0,0,0}\makebox(0,0)[lb]{\smash{$y_2$}}}%
  \end{picture}%
\endgroup%

\noindent
where the $\Gamma_i$'s should be understood as ribbon graphs.
Applying the Feynman rules $\mathfrak{h}$ of \sref{Sec.}{Sec.Feynman}, 
performing the $\mN',V$-limit of \sref{Sec.}{Sec:LargeLimit}
with the ratio $\frac{\mN'}{V^{2/D}}=\Lambda^2$ and taking
the eigenvalue distribution induced by the $D$-dimensional Moyal space of \sref{Sec.}{Sec.Moyal} leads for the first diagram 
$\Gamma_1$ to
\begin{align*}
 \lim_{\mN',V\to\infty}\mathfrak{h}(\Gamma_1)=\frac{-\lambda}{1+2x}
 \int_0^{\Lambda^2}\frac{dy\,r(y)}{1+x+y},
 \qquad \text{where}\quad r(x)=
 \frac{x^{D/2-1}}{(\frac{D}{2}-1)!}.
\end{align*}
To renormalise this kind of graphs we apply the procedure described in \sref{Sec.}{Sec.Zimmer} 
with \sref{Theorem}{Thm:Zimmermann}.
The superficial degree of divergence of the unique subgraph $\gamma\in\Gamma_1$ is by formula \eqref{eq:superficial}
\begin{align*}
 \omega(\gamma)=\frac{D}{2}(2-0-1-1)+\bigg(\frac{D}{2}-1\bigg) \frac{3\cdot 1-1}{2}=\frac{D}{2}-1.
\end{align*}
The graph $\Gamma_1$ has two forests $\U_{\Gamma_1}=\emptyset$ 
and $\U_{\Gamma_1}=\gamma$, where $\gamma$ consists of the face labelled by $y$, the attached edge and the vertex.
Then \sref{Theorem}{Thm:Zimmermann} leads to the renormalised diagram by 
\begin{align*}
 &-\lambda
 \int_0^{\Lambda^2}dy\,r(y) 
 \bigg\{\underbrace{\frac{1}{(1+x+y)(1+2x)}}_{U_{\Gamma_1}=\emptyset}
 -\underbrace{\underbrace{\frac{1}{1+2x}}_{I_{\Gamma_1\backslash\U_{\Gamma_1}}=I_{
\Gamma_1\backslash\gamma }}\Big(T^{\frac{D}{2}-1}_x\Big)\bigg(\frac{1}{1+x+y}\bigg)}_{U_{\Gamma_1}=\gamma}\bigg\}\\
=&\frac{-\lambda}{1+2x}\int_0^{\Lambda^2}dy\,r(y)
\bigg\{\frac{1}{1+x+y}-\Big(T^{\frac{D}{2}-1}_x\Big)\bigg(\frac{1}{1+x+y}\bigg)\bigg\}
\end{align*}
where $\Lambda^2\to\infty$ is now finite. 
This formula matches with our renormalisation conditions \eqref{eq:cond1}, \eqref{eq:cond2} and \eqref{eq:cond3}.
Taking $D=2,4,6$ gives specially
\begin{align*}
 &D=2:&& \frac{-\lambda}{1+2x}
 \int_0^{\infty}dy\,\bigg(\frac{1}{1+x+y}-\frac{1}{1+y}\bigg)= \lambda\frac{\log(1+x)}{1+2x}\\
 &D=4:&& \frac{-\lambda}{1+2x}
 \int_0^{\infty}dy\,y\bigg(\frac{1}{1+x+y}-\frac{1}{1+y}+\frac{x}{(1+y)^2}\bigg)= 
 \lambda\frac{x-(1+x)\log(1+x)}{1+2x}\\
 &D=6:&& \frac{-\lambda}{1+2x}
 \int_0^{\infty}dy\,\frac{y^2}{2}\bigg(\frac{1}{1+x+y}-\frac{1}{1+y}+\frac{x}{(1+y)^2}-\frac{x^2}{(1+y)^3}\bigg)\\
 & &&= \lambda \frac{2(1+x)^2\log(1+x)-x(2+3x)}{4(1+2x)}
\end{align*}
which coincides for the first order with \sref{Example}{Ex:D2}, \sref{Example}{Ex:D4} and \sref{Example}{Ex:D6}.

For the over-subtracted form in the $D=2$ with the renormalisation of $D=4$ and $D=6$ respectively, we have
\begin{align*}
 &\frac{-\lambda}{1+2x}
 \int_0^{\infty}dy\bigg(\frac{1}{1+x+y}-\frac{1}{1+y}+\frac{x}{(1+y)^2}\bigg)= 
 \lambda\frac{\log(1+x)-x}{1+2x}\\
 &\frac{-\lambda}{1+2x}
 \int_0^{\infty}dy\bigg(\frac{1}{1+x+y}-\frac{1}{1+y}+\frac{x}{(1+y)^2}-\frac{x^2}{(1+y)^3}\bigg)\\
 &= \lambda \frac{2\log(1+x)-x(2-x)}{2(1+2x)}
\end{align*}
which coincides for the first order with \sref{Example}{Ex:D24} and \sref{Example}{Ex:D26}.

\subsubsection*{Iterated Integrals}
The integrals appearing the first time for $\Gamma_2,\Gamma_3,\Gamma_4,\Gamma_5$ are of new complexity.
The integration theory of the appearing integrals is completely understood
in form of iterated integrals \cite{Brown:2009qja}. They form a
shuffle algebra, which is symbolically implemented in the Maple
package \textsc{HyperInt} \cite{Panzer:2014caa}.

The hyperlogarithms Hlog appearing in \textsc{HyperInt}
are defined by the iterated integrals via
\begin{align*}
 \mathrm{Hlog}(a,[k_1,...,k_n]):=&\int_0^a \frac{dx_1}{x_1-k_1}\int_0^{x_1}\frac{dx_2}{x_2-k_2}...
\int_0^{x_{n-1}}\frac{dx_n}{x_n-k_n},
\end{align*}
where the $k_i$ are called \textit{letters}. An alternative notation is
$\mathrm{Hlog}(a,[k_1,...,k_n])=L_{k_1,...,k_n}(a)$. Important special cases are
$\mathrm{Hlog}(a,[\underbrace{-k,...,-k}_n])=
\frac{\log(1+\tfrac{a}{k})^n}{n!}$ for $k\in \mathbb{N}^\times $,
$\mathrm{Hlog}(a,[\underbrace{0,...,0}_n]):=\frac{\log(a)^n}{n!}$ and
$\mathrm{Hlog}(a,[\underbrace{0,...,0}_n,-1])=-\mathrm{Li}_{1+n}(-a)$.

\subsubsection*{Renormalised $\Gamma_2$}
The graph $\Gamma_2$ has four forests which are $\U_{\Gamma_2}=\emptyset$, $\U_{\Gamma_2}=\{\gamma_1\}$,
 $\U_{\Gamma_2}=\{\gamma_2\}$ and $ \U_{\Gamma_2}=\{\gamma_1,\gamma_2\}$, where $\gamma_i$ is the ribbon subgraph
 with the face labelled by $y_i$ and its attached edges and vertices. Note there is no subgraph which includes both faces since
 this is not a 1PI graph. We have $g_{\gamma_1}=g_{\gamma_2}=0$, 
 $b_{\gamma_1}=b_{\gamma_2}=1$, $v_{\gamma_1}=2v_{\gamma_2}=2$, $N_{\gamma_1}=2N_{\gamma_2}=2$, and $k_3=2$ vertices for 
 $\gamma_1$ and $k_3=1$ vertex for $\gamma_2$. The superficial degree of divergence is therefore
 \begin{align*}
  \omega(\gamma_1)=\frac{D}{2}-2,\qquad 
  \omega(\gamma_2)=\frac{D}{2}-1.
 \end{align*}
The set of variables for the Taylor-subtraction is $f(\gamma_1)=f(\gamma_2)=\{x\}$ and the rational functions
$r_{\gamma_1}=\frac{1}{(1+x+y_1)^2}$ and $r_{\gamma_2}=\frac{1}{1+x+y_2}$.
Inserting in \sref{Theorem}{Thm:Zimmermann}, writing out the sum over the forests and simplifying yields
\begin{align*}
 \frac{(-\lambda)^3}{(1+2x)^2}\int_0^\infty dy_1\, r(y_1) 
 &\bigg\{\frac{1}{(1+x+y_1)^2}-\Big(T^{\frac{D}{2}-2}_x\Big)\Big(\frac{1}{(1+x+y_1)^2}\Big)\bigg\}\\
 \times\int_0^\infty dy_2\, r(y_2)&\bigg\{\frac{1}{1+x+y_2}-\Big(T^{\frac{D}{2}-1}_x\Big)\Big(\frac{1}{1+x+y_2}\Big)\bigg\}.
\end{align*}
This formula matches with our renormalisation conditions \eqref{eq:cond1}, \eqref{eq:cond2} and \eqref{eq:cond3}. 
Plugging in the different dimensions gives
\begin{align*}
 &D=2:&& \lambda^3\frac{\log(1+x)}{(1+2x)^2(1+x)}\\
 &D=4:&& \lambda^3\frac{\log(1+x)((1+x)\log(1+x)-x)}{(1+2x)^2} \\
 &D=6:&& \lambda^3 \frac{2(1+x)^3\log(1+x)^2-x(1+x)(5x+4)\log(1+x)+x^2(3x+2)}{4(1+2x)^2}
\end{align*}
and for the over-subtracted from in $D=2$ with
\begin{align*}
 &\text{renormalisation of }D=4:&& \lambda^3\frac{x(x-\log(1+x))}{(1+2x)^2(1+x)} \\
 &\text{renormalisation of }D=6:&& \lambda^3\frac{x^2(2\log(1+x)+x(x-2))}{2(1+x)(1+2x)^2}.
\end{align*}

\subsubsection*{Renormalised $\Gamma_3$}
The graph $\Gamma_3$ is renormalised in the same way as $\Gamma_1$ for each subgraph, respectively.
The general result is
\begin{align*}
 \frac{(-\lambda)^3}{(1+2x)^3}\int_0^\infty dy_1\, r(y_1) 
 &\bigg\{\frac{1}{1+x+y_1}-\Big(T^{\frac{D}{2}-1}_x\Big)\Big(\frac{1}{1+x+y_1}\Big)\bigg\}\\
 \times\int_0^\infty dy_2\, r(y_2)&\bigg\{\frac{1}{1+x+y_2}-\Big(T^{\frac{D}{2}-1}_x\Big)\Big(\frac{1}{1+x+y_2}\Big)\bigg\}.
\end{align*}
This formula matches with our renormalisation conditions \eqref{eq:cond1}, \eqref{eq:cond2} and \eqref{eq:cond3}.
Plugging the different dimensions in gives
\begin{align*}
 &D=2:&& -\lambda^3\frac{\log(1+x)^2}{(1+2x)^3}\\
 &D=4:&& -\lambda^3\frac{(x-(1+x)\log(1+x))^2}{(1+2x)^3}\\
 &D=6:&& -\lambda^3 \frac{(2(1+x)^2\log(1+x)-x(2+3x))^2}{16(1+2x)^3}
\end{align*}
and for the over-subtracted from in $D=2$ with
\begin{align*}
 &\text{renormalisation of }D=4:&& -\lambda^3\frac{(\log(1+x)-x)^2}{(1+2x)^3}\\
 &\text{renormalisation of }D=6:&& -\lambda^3 \frac{(2\log(1+x)-x(2-x))^2}{4(1+2x)^3}.
\end{align*}

\subsubsection*{Renormalised $\Gamma_4$}
The graph $\Gamma_4$ has six forests which are $\U_{\Gamma_4}=\emptyset$, $\U_{\Gamma_4}=\{\gamma_1\}$,
 $\U_{\Gamma_4}=\{\gamma_2\}$, $ \U_{\Gamma_4}=\{\gamma_{12}\}$,
 $ \U_{\Gamma_4}=\{\gamma_{12},\gamma_1\}$ and $ \U_{\Gamma_4}=\{\gamma_{12},\gamma_2\}$, 
 where $\gamma_{12}$ is the ribbon subgraph
 with the face labelled by $y_1$ and $y_2$ together with its attached edges and vertices. 
 The superficial degrees of divergence is therefore
 \begin{align*}
  \omega(\gamma_1)=\frac{D}{2}-4,\qquad 
  \omega(\gamma_2)=\frac{D}{2}-1,\quad 
  \omega(\gamma_{12})=D-4.
 \end{align*}
The set of variables for the Taylor-subtraction is $f(\gamma_1)=f(\gamma_{12})=\{x\}$ and 
$ f(\gamma_2)=\{y_1\}$. The rational functions are
$r_{\gamma_1}=r_{\gamma_{12}}=\frac{1}{(1+x+y_1)^2(1+2y_1)(1+y_1+y_2)}$ and $r_{\gamma_2}=\frac{1}{1+y_1+y_2}$.
Note that $\gamma_{12}$ has the subgraph $\gamma_2=o(\gamma_{12})$ or $\gamma_1=o(\gamma_{12})$, but then
$\gamma_{12}\backslash\gamma_1$ is only a face which does not count.
We will neglect the forest $\U_{\Gamma_4}=\{\gamma_1\}$ since the degree of divergence matters for dimensions $D\geq 8$.
Inserting in \sref{Theorem}{Thm:Zimmermann}, writing out the sum over the forests and simplifying yields
\begin{align*}
 \frac{(-\lambda)^3}{1+2x}\int_0^\infty dy_1\, r(y_1) 
 &\bigg\{\frac{1}{(1+x+y_1)^2}-\Big(T^{D-4}_x\Big)\Big(\frac{1}{(1+x+y_1)^2}\Big)\bigg\}\\
 \times\frac{1}{1+2y_1}\int_0^\infty dy_2\, r(y_2)&\bigg\{\frac{1}{1+y_1+y_2}
 -\Big(T^{\frac{D}{2}-1}_{y_1}\Big)\Big(\frac{1}{1+y_1+y_2}\Big)\bigg\}.
\end{align*}
Since the normalisation conditions \eqref{eq:cond1}, \eqref{eq:cond2} and \eqref{eq:cond3}
imply a different subtraction, we see here the first time an adaption of the forest formula in the first line 
by changing the degree of divergence
\begin{align}\nonumber
 \frac{(-\lambda)^3}{1+2x}\int_0^\infty dy_1\, r(y_1) 
 &\bigg\{\frac{1}{(1+x+y_1)^2}-\Big(T^{\frac{D}{2}-1}_x\Big)\Big(\frac{1}{(1+x+y_1)^2}\Big)\bigg\}\\\label{eq:app1}
 \times\frac{1}{1+2y_1}\int_0^\infty dy_2\, r(y_2)&\bigg\{\frac{1}{1+y_1+y_2}
 -\Big(T^{\frac{D}{2}-1}_{y_1}\Big)\Big(\frac{1}{1+y_1+y_2}\Big)\bigg\}
\end{align}
which corresponds to an appropriate subtraction of the $\gamma_{12}$ graph. Note that the $D=6$ case is unchanged.

Plugging in the different dimensions \eqref{eq:app1} and using \textsc{HyperInt} gives
\begin{align*}
 &D=2:&& \lambda^3\bigg(\frac{2\mathrm{Hlog}(x,[0,-1])+4x(1+x)(\log(2)^2-\zeta_2+1)+1}{(1+2x)^3}-
 \frac{2\log(1+x)}{x(1+2x)^2}\bigg)\\
 &D=4:&& \lambda^3\bigg(\frac{(1+2x+2x^2)\mathrm{Hlog}(x,[0,-1])-x^2(4x+3)(\log(2)^2-2\log2-\zeta_2+3)}{(1+2x)^3}\\
 & && \quad +\frac{(1+x)\log(1+x)-x(x+2)}{(1+2x)^3}\bigg)\\
 &D=6:&& \lambda^3\bigg(\frac{-2x(1+x)(1+3x+3x^2)\mathrm{Hlog}(x,[0,-1])+2x^4-15x^3-16x^2-4x}{4(1+2x)^3}\\
 & && \quad +\frac{(1+x)^2(2+7x+7x^2)
 \log(1+x)+2x^3(3x+2)(\log(2)^2-\log2-\zeta_2)}{8(1+2x)^3}\bigg)\\
\end{align*}
and for the over-subtracted from in $D=2$ with
\begin{align*}
 &\text{renormalisation of }D=4:&& \lambda^3\bigg(\frac{2\mathrm{Hlog}(x,[0,-1])}{(1+2x)^3}-\frac{(1+x)\log(1+x)}{x(1+2x)^3}\\
 & &&\!\!\!\!\!\!\!\!\!\!\!\!\!\!\!\!\!\!\!\! -\frac{8x^2(4x+3)(\log(2)^2-\log2-\zeta_2)+60x^3+44x^2+3x-2}{2(1+2x)^3}\bigg)\\
 &\text{renormalisation of }D=6:&& \lambda^3\bigg(\frac{2\mathrm{Hlog}(x,[0,-1])}{(1+2x)^3}-\frac{(1+x+x^2+x^3)
 \log(1+x)}{x(1+2x)^3}\\
 & &&\!\!\!\!\!\!\!\!\!\!\!\!\!\!\!\!\!\!\!\!\!\!\!\!\!\!\!\!\!\!\!\!\!\!\!\!\!\!\!\!\!\!\!\!\!\!\!\!\!\!\!\!\!\!\!\!
 +\frac{24x^3(3x+2)(4\log(2)^2-5\log2-4\zeta_2)+584x^4+392x^3+8x^2-9x+6}{6(1+2x)^3}\bigg).
\end{align*}

\subsubsection*{Renormalised $\Gamma_5$}
The graph $\Gamma_5$ has six forests which are $\U_{\Gamma_5}=\emptyset$, $\U_{\Gamma_5}=\{\gamma_1\}$,
 $\U_{\Gamma_5}=\{\gamma_2\}$, $ \U_{\Gamma_5}=\{\gamma_{12}\}$, $ \U_{\Gamma_5}=\{\gamma_{12},\gamma_1\}$ and 
 $ \U_{\Gamma_5}=\{\gamma_{12},\gamma_2\}$
  defined as above. 
  For this graph the overlapping divergences need the forest formula in its full generality.
 The superficial degrees of divergence is therefore
 \begin{align*}
  \omega(\gamma_1)=\frac{D}{2}-3,\qquad 
  \omega(\gamma_2)=\frac{D}{2}-2,\quad 
  \omega(\gamma_{12})=D-4.
 \end{align*}
The set of variables for the Taylor-subtraction is $f(\gamma_1)=\{x,y_2\}$, $f(\gamma_{2})=\{x,y_1\}$ and 
$ f(\gamma_{12})=\{x\}$. The rational functions are
$r_{\gamma_1}=\frac{1}{(1+x+y_1)^2(1+y_1+y_2)}$, $r_{\gamma_{2}}=\frac{1}{(1+y_1+y_2)(1+x+y_2)}$ 
and $r_{\gamma_{12}}=\frac{1}{(1+x+y_1)^2(1+y_1+y_2)(1+x+y_2)}$.
Inserting in \sref{Theorem}{Thm:Zimmermann}, writing out the sum over the forests and simplifying yields
\begin{align*}
 &\frac{(-\lambda)^3}{1+2x}\int_0^\infty dy_1\, r(y_1) \int_0^\infty dy_2\, r(y_2)\\
 \times&\bigg[\bigg\{\frac{1}{(1+y_1+y_2)}
 \bigg(1-\Big(T^{D-4}_x\Big)\bigg)\frac{1}{(1+x+y_1)^2(1+x+y_2)}\bigg\}\\
 +&\bigg\{\bigg(1
 -\Big(T^{D-4}_{x}\Big)\bigg)\frac{1}{(1+x+y_1)^2}
 \Big(-T^{\frac{D}{2}-2}_{x,y_1}\Big)\Big(\frac{1}{(1+y_1+y_2)(1+x+y_2)}\Big)\bigg\}\\
 +&\bigg\{\bigg(1
 -\Big(T^{D-4}_{x}\Big)\bigg)\frac{1}{1+x+y_2}
 \Big(-T^{\frac{D}{2}-3}_{x,y_2}\Big)\Big(\frac{1}{(1+y_1+y_2)(1+x+y_1)^2}\Big)\bigg\}\bigg].
\end{align*}
Notice that in the last and the second last line $T^{D-4}_{x}$ acts also on the other Taylor polynomial recursively, 
which depends
on $x$.
Since the normalisation conditions \eqref{eq:cond1}, \eqref{eq:cond2} and \eqref{eq:cond3}
implies a stronger subtraction, we adapt the forest formula
by changing the degree of divergence again for  the $\gamma_{12}$ graph to
\begin{align*}
 &\frac{(-\lambda)^3}{1+2x}\int_0^\infty dy_1\, r(y_1) \int_0^\infty dy_2\, r(y_2)\\
 \times&\bigg[\bigg\{\frac{1}{(1+y_1+y_2)}
 \bigg(1-\Big(T^{\frac{D}{2}-1}_x\Big)\bigg)\frac{1}{(1+x+y_1)^2(1+x+y_2)}\bigg\}\\
 +&\bigg\{\bigg(1
 -\Big(T^{\frac{D}{2}-1}_{x}\Big)\bigg)\frac{1}{(1+x+y_1)^2}
 \Big(-T^{\frac{D}{2}-2}_{x,y_1}\Big)\Big(\frac{1}{(1+y_1+y_2)(1+x+y_2)}\Big)\bigg\}\\
 +&\bigg\{\bigg(1
 -\Big(T^{\frac{D}{2}-1}_{x}\Big)\bigg)\frac{1}{1+x+y_2}
 \Big(-T^{\frac{D}{2}-3}_{x,y_2}\Big)\Big(\frac{1}{(1+y_1+y_2)(1+x+y_1)^2}\Big)\bigg\}\bigg].
\end{align*}
Plugging in the different dimensions and using \textsc{HyperInt} gives
\begin{align*}
 &D=2:&& \lambda^3\bigg(\frac{\log(1+x)^2-2\mathrm{Hlog}(x,[0,-1])+4x(1+x)(\zeta_2-1)-1}{(1+2x)^3}\\
 & &&\qquad +\frac{\log(1+x)}{x(1+x)(1+2x)^2}\bigg)\\
 &D=4:&& \lambda^3\bigg(\frac{-x(1+x)\log(1+x)^2-(1+2x+2x^2)\mathrm{Hlog}(x,[0,-1])}{(1+2x)^3}\\
 & &&\qquad  + \frac{-x^2(4x+3)\zeta_2+8x^3+8x^2+2x}{(1+2x)^3}-\frac{\log(1+x)}{(1+2x)^2}\bigg)\\
 &D=6:&& \lambda^3\bigg(\frac{-(1+3x)(1+x)^3\log(1+x)^2
 +x2(1+x)(1+3x+3x^2)\mathrm{Hlog}(x,[0,-1])}{4(1+2x)^3}\\
 & &&\qquad  + \frac{x^3(3x+2)\zeta_2-4x^4+4x^3+7x^2+2x}{4(1+2x)^3}-\frac{(1+x)(3x+2)\log(1+x)}{4(1+2x)^2}\bigg)\\
\end{align*}
and for the over-subtracted from in $D=2$ with
\begin{align*}
 &\text{renormalisation of }D=4:&& \lambda^3
 \bigg(\frac{\log(1+x)^2-2\mathrm{Hlog}(x,[0,-1])}{(1+2x)^3}+\frac{\log(1+x)}{x(1+2x)^2(1+x)}\\
 & &&  \!\!\!\!\!\!\!\!\!\!\!\!\!\!\!\!\!\!\!\!\!\!\!\!\!\!\!\!\!\!\!\!\!\!\!\!
 - \frac{8x^2\zeta_2(4x+3)(1+x)+52x^4+88x^3+41x^2+x-2}{2(1+2x)^3(1+x)}
 \bigg)\\
 &\text{renormalisation of }D=6:&& \lambda^3
 \bigg(\frac{\log(1+x)^2-2\mathrm{Hlog}(x,[0,-1])}{(1+2x)^3}+\frac{\log(1+x)}{x(1+2x)^2(1+x)}\\
 & &&  \!\!\!\!\!\!\!\!\!\!\!\!\!\!\!\!\!\!\!\!\!\!\!\!\!\!\!\!\!\!\!\!\!\!\!\!
 + \frac{96x^3\zeta_2(3x+2)(1+x)-476x^5-784x^4-319x^3+7x^2+3x-6}{6(1+2x)^3(1+x)}
 \bigg).
\end{align*}
    
\subsubsection*{Sum of $\Gamma_2,\Gamma_3,\Gamma_4,\Gamma_5$}
    
Taking the sum of the results of all diagrams of order $\lambda^3$ breaks down to
\begin{align*}
 &D=2:&& \lambda^3 \frac{(2\log2)^2(1+x)x}{(1+2x)^3}\\
 &D=4:&& -\lambda^3 \frac{(1-\log2)^2(4x+3)x^2}{(1+2x)^3}\\
 &D=6:&& \lambda^3 \frac{x^3(2+3x)(2\log2-1)^2}{16(1+2x)^3}\\
\end{align*}
and for the over-subtracted from in $D=2$ with
\begin{align*}
 &\text{renormalisation of }D=4:&& -\lambda^3 \frac{x^2(3+4x)(2\log2-1)^2}{(1+2x)^3}\\
 &\text{renormalisation of }D=6:&& \lambda^3 \frac{x^3(2+3x)(8\log2-5)^2}{4(1+2x)^3}
\end{align*}
which confirms \sref{Example}{Ex:D2}, \sref{Example}{Ex:D4}, \sref{Example}{Ex:D6}, \sref{Example}{Ex:D24} and 
\sref{Example}{Ex:D26}.
\\
We emphasise that for the final results all hyperlogarithms cancels perfectly.

\section{Quartic Interaction}\label{App:PertQuartic}
For the $D=4$ Moyal space, the 2-point function possesses at order $\lambda$ two graphs and 
at order $\lambda^2$ nine graphs. The exact result for the 2-point function given in \sref{Sec.}{Sec.4dSol} 
has a natural choice for the 
boundary conditions $G^{(0)}(0,0)$ and $\frac{\partial }{\partial a}G^{(0)}(a,0)\vert_{a=0}$ which does not coincide with 
the natural choice of Zimmermann's forest formula. We know that $G^{(0)}(0,0)=1$ also holds for the exact solution. From
\eqref{eq:quartG2}, the first orders for the derivative are $\frac{\partial}{\partial a}G^{(0)}(a,0)\vert_{a=0}
=-1-\lambda+\lambda^2+\mathcal{O}(\lambda^3)$. 
On the other hand, Zimmermann's forest formula demands
$\frac{\partial}{\partial a}G^{(0)}(a,0)\vert_{a=0}=-1$, which cannot coincide with the exact solution 
without further effort. 
Assuming a general $\mu^2$ in \sref{Theorem}{prop:HT} and 
\sref{Proposition}{Prop:Jx-final} and computing recursively order by order $R_4$, $R_4^{-1}$ and $I(w)$,
this is done in \sref{App.}{Sec2}.
The angle function $\tau_b(a)$ has now 
a general boundary condition. This boundary condition can then be fixed 
by inserting the angle function into \eqref{Gab-ansatz}, 
where $Z$ and $\mu^2$ are chosen to satisfy $G(0,0)=1$ and $\frac{\partial }{\partial a}G^{(0)}(a,0)\vert_{a=0}=-1$.

The first order is the same as usual 
\begin{align*}
 \frac{1}{1+a+b}.
\end{align*}
With $\mu^2=1+\lambda \mu_1^2+\lambda\mu^2_2+..$ and $Z= C_\lambda\cdot e^{\mathcal{H}_0[\tau_0]}$ with 
$C_\lambda=1+\lambda k_1+\lambda^2k_2+..<\infty$, it follows for the next order
\begin{align*}
 \lambda\bigg(\frac{k_1}{1+a+b}-\frac{\mu_1^2+1}{(1+a+b)^2}-\frac{(1+a)\log(1+a)+(1+b)\log(1+b)}{(1+a+b)^2}\bigg).
\end{align*}
The boundary conditions are achieved with $k_1=1$ and $\mu^2_1=0$, which leads to 
\begin{align}\label{od1}
 \lambda\bigg(\frac{a+b}{(1+a+b)^2}-\frac{(1+a)\log(1+a)+(1+b)\log(1+b)}{(1+a+b)^2}\bigg).
\end{align}
The second order is also straightforward to compute, which is with $k_1=1$ and $\mu^2_1=0$
\begin{align}\nonumber
 &\frac{\lambda^2}{(1+a+b)^2}\big[\zeta_2(1+a+b+ab)-\!\mu^2_2(1+a+b)+\!(a+b)^2\!-\!(1+a+b)+k_2(1+a+b)^2\\\nonumber
 &+(1+a)(1+b)\log(1+a)\log(1+b)-a(1+b)\log(1+b)^2-b(1+a)\log(1+a)^2\\\label{graph1}
 &-(1+b+2a+2ab+a^2)\mathrm{Li}_2(-a)-(1+a+2b+2ab+b^2)\mathrm{Li}_2(-b)\\\nonumber
 &+((1+a-b)-(1+a)(1+a+b))\log(1+a)\\\nonumber
 &+((1+b-a)-(1+b)(1+a+b))\log(1+b)\big].
\end{align}
The boundary conditions are fulfilled with $\mu^2_2=\zeta_2-2$ and $k_2=-1$. We see that it now coincides with the graph expansion 
below \eqref{graph}.

\subsubsection*{Graph Expansion}
The graphs up to second order are the following:
\vspace{2ex}

\def\svgwidth{0.9\textwidth}
\begingroup%
  \makeatletter%
  \providecommand\color[2][]{%
    \errmessage{(Inkscape) Color is used for the text in Inkscape, but the package 'color.sty' is not loaded}%
    \renewcommand\color[2][]{}%
  }%
  \providecommand\transparent[1]{%
    \errmessage{(Inkscape) Transparency is used (non-zero) for the text in Inkscape, but the package 'transparent.sty' is not loaded}%
    \renewcommand\transparent[1]{}%
  }%
  \providecommand\rotatebox[2]{#2}%
  \ifx\svgwidth\undefined%
    \setlength{\unitlength}{357.14335235bp}%
    \ifx\svgscale\undefined%
      \relax%
    \else%
      \setlength{\unitlength}{\unitlength * \real{\svgscale}}%
    \fi%
  \else%
    \setlength{\unitlength}{\svgwidth}%
  \fi%
  \global\let\svgwidth\undefined%
  \global\let\svgscale\undefined%
  \makeatother%
  \begin{picture}(1,0.6432017)%
    \put(0.0131113,0.58208716){\color[rgb]{0,0,0}\makebox(0,0)[lb]{\smash{$a$}}}%
    \put(0.30991091,0.58688714){\color[rgb]{0,0,0}\makebox(0,0)[lb]{\smash{$a$}}}%
    \put(0.60991046,0.59808714){\color[rgb]{0,0,0}\makebox(0,0)[lb]{\smash{$a$}}}%
    \put(0.02671128,0.48128727){\color[rgb]{0,0,0}\makebox(0,0)[lb]{\smash{$a$}}}%
    \put(0.32751089,0.48288729){\color[rgb]{0,0,0}\makebox(0,0)[lb]{\smash{$a$}}}%
    \put(0.62991041,0.44208733){\color[rgb]{0,0,0}\makebox(0,0)[lb]{\smash{$a$}}}%
    \put(0.01915035,0.43408734){\color[rgb]{0,0,0}\makebox(0,0)[lb]{\smash{$b$}}}%
    \put(0.03035035,0.53728721){\color[rgb]{0,0,0}\makebox(0,0)[lb]{\smash{$b$}}}%
    \put(0.35914989,0.53968721){\color[rgb]{0,0,0}\makebox(0,0)[lb]{\smash{$b$}}}%
    \put(0.29995,0.43568734){\color[rgb]{0,0,0}\makebox(0,0)[lb]{\smash{$b$}}}%
    \put(0.60954959,0.55648718){\color[rgb]{0,0,0}\makebox(0,0)[lb]{\smash{$b$}}}%
    \put(0.59994956,0.39728741){\color[rgb]{0,0,0}\makebox(0,0)[lb]{\smash{$b$}}}%
    \put(0.01378262,0.21676723){\color[rgb]{0,0,0}\makebox(0,0)[lb]{\smash{$a$}}}%
    \put(0.05102163,0.17116728){\color[rgb]{0,0,0}\makebox(0,0)[lb]{\smash{$b$}}}%
    \put(0.05627219,0.23288761){\color[rgb]{0,0,0}\makebox(0,0)[lb]{\smash{$y_1$}}}%
    \put(0.05317533,0.28998053){\color[rgb]{0,0,0}\makebox(0,0)[lb]{\smash{$y_2$}}}%
    \put(0.01342058,0.08012161){\color[rgb]{0,0,0}\makebox(0,0)[lb]{\smash{$b$}}}%
    \put(0.03071129,0.12448781){\color[rgb]{0,0,0}\makebox(0,0)[lb]{\smash{$a$}}}%
    \put(0.0615753,0.08038076){\color[rgb]{0,0,0}\makebox(0,0)[lb]{\smash{$y_1$}}}%
    \put(0.05597531,0.02198095){\color[rgb]{0,0,0}\makebox(0,0)[lb]{\smash{$y_2$}}}%
    \put(0.33938217,0.24316721){\color[rgb]{0,0,0}\makebox(0,0)[lb]{\smash{$a$}}}%
    \put(0.35662121,0.19836723){\color[rgb]{0,0,0}\makebox(0,0)[lb]{\smash{$b$}}}%
    \put(0.36898218,0.1165718){\color[rgb]{0,0,0}\makebox(0,0)[lb]{\smash{$a$}}}%
    \put(0.33982127,0.0725718){\color[rgb]{0,0,0}\makebox(0,0)[lb]{\smash{$b$}}}%
    \put(0.39677484,0.0747809){\color[rgb]{0,0,0}\makebox(0,0)[lb]{\smash{$y_1$}}}%
    \put(0.39197486,0.24838059){\color[rgb]{0,0,0}\makebox(0,0)[lb]{\smash{$y_1$}}}%
    \put(0.39637484,0.28798054){\color[rgb]{0,0,0}\makebox(0,0)[lb]{\smash{$y_2$}}}%
    \put(0.38637486,0.03238085){\color[rgb]{0,0,0}\makebox(0,0)[lb]{\smash{$y_2$}}}%
    \put(0.70591034,0.19168768){\color[rgb]{0,0,0}\makebox(0,0)[lb]{\smash{$a$}}}%
    \put(0.70074942,0.13648776){\color[rgb]{0,0,0}\makebox(0,0)[lb]{\smash{$b$}}}%
    \put(0.82437417,0.19838067){\color[rgb]{0,0,0}\makebox(0,0)[lb]{\smash{$y_1$}}}%
    \put(0.82637422,0.13758073){\color[rgb]{0,0,0}\makebox(0,0)[lb]{\smash{$y_2$}}}%
    \put(0.06787216,0.43031273){\color[rgb]{0,0,0}\makebox(0,0)[lb]{\smash{$y$}}}%
    \put(0.0600426,0.5929076){\color[rgb]{0,0,0}\makebox(0,0)[lb]{\smash{$y$}}}%
    \put(0.33917493,0.43398031){\color[rgb]{0,0,0}\makebox(0,0)[lb]{\smash{$y_1$}}}%
    \put(0.3423749,0.5979801){\color[rgb]{0,0,0}\makebox(0,0)[lb]{\smash{$y_1$}}}%
    \put(0.6431745,0.39398039){\color[rgb]{0,0,0}\makebox(0,0)[lb]{\smash{$y_1$}}}%
    \put(0.65517449,0.61358007){\color[rgb]{0,0,0}\makebox(0,0)[lb]{\smash{$y_1$}}}%
    \put(0.41317481,0.59878009){\color[rgb]{0,0,0}\makebox(0,0)[lb]{\smash{$y_2$}}}%
    \put(0.43677479,0.43318035){\color[rgb]{0,0,0}\makebox(0,0)[lb]{\smash{$y_2$}}}%
    \put(0.74997425,0.45958033){\color[rgb]{0,0,0}\makebox(0,0)[lb]{\smash{$y_2$}}}%
    \put(0.73557444,0.54518016){\color[rgb]{0,0,0}\makebox(0,0)[lb]{\smash{$y_2$}}}%
    \put(0,0){\includegraphics[width=\unitlength,page=1]{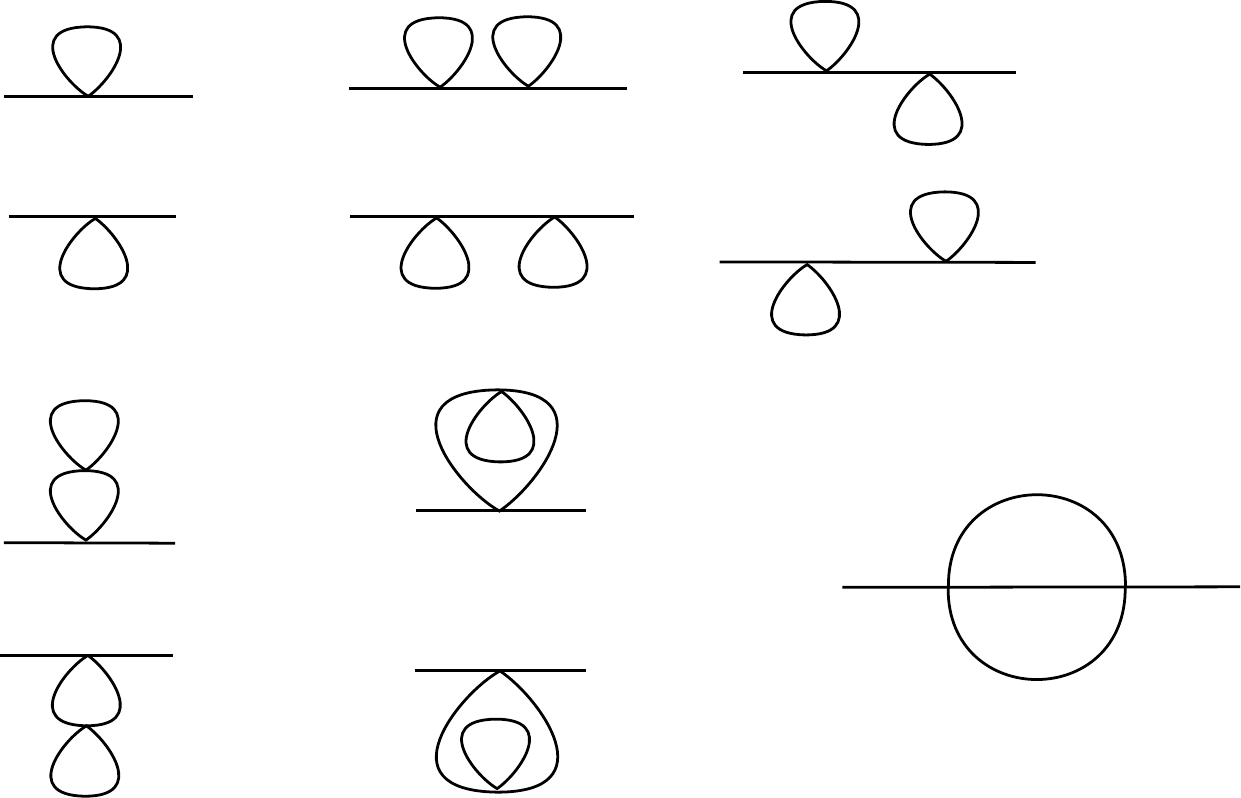}}%
  \end{picture}%
\endgroup%

\noindent
We will determine the expressions for the graphs analogously to previous section.
Only the last graph, the sunrise diagram, needs more discussions through Zimmermann's forest formula. 

For the two graphs at order $\lambda$, the forest formula gives
\begin{align*}
 &-\frac{\lambda}{(1+a+b)^2}\int_0^\infty\!\!\! y\,dy\bigg(\frac{1}{1+a+y}+
 \frac{1}{1+b+y}
 -\frac{2}{1+y}+\frac{a}{(1+y)^2}+\frac{b}{(1+y)^2}\bigg)\\
 &=-\lambda\frac{(1+a)\log(1+a)+(1+b)\log(1+b)-(a+b)}{(1+a+b)^2}
\end{align*}
which coincides with \eqref{od1}.

\subsubsection*{Graphs of the Second Order in $\lambda$}
The first four graphs are computed as the graphs of order $\lambda^1$ which leads to the four results
\begin{align}\label{DD1}
 &\lambda^2\frac{((1+a)\log(1+a)-a)^2}{(1+a+b)^3}\\\label{DD2}
 &\lambda^2\frac{((1+b)\log(1+b)-b)^2}{(1+a+b)^3}\\\label{D3}
 &\lambda^2\frac{((1+b)\log(1+b)-b)((1+a)\log(1+a)-a)}{(1+a+b)^3}\\\label{DD4}
 &\lambda^2\frac{((1+a)\log(1+a)-a)((1+b)\log(1+b)-b)}{(1+a+b)^3}.
\end{align}
The two graphs which have the second loop on top
have the forests $\emptyset,\{\gamma_1\},\{\gamma_2\}$ and $\{\gamma_1,\gamma_2\}$ according to the notation of the 
previous section. The subgraph $\{\gamma_{12}\}$ is not possible since it becomes disjoint after removing the vertex.
The result of these two graphs is therefore easily computed to
\begin{align}\label{DD5}
  \lambda^2\frac{a\log(1+a)+b\log(1+b)-(1+a)\log(1+a)^2-(1+b)\log(1+b)^2}{(1+a+b)^2}.
\end{align}
For the two graphs where the second loop is inside the first, Zimmermann's forest formula 
leads analogously to the discussions of the previous section of the graph $\Gamma_4$ to
\begin{align}\label{DD6}
 -\lambda^2\frac{(1+2a)\mathrm{Li}_2(-a)+(1+2b)\mathrm{Li}_2(-b)+(1+a)\log(1+a)+(1+b)\log(1+b)}{(1+a+b)^2}.
\end{align}
The sunrise graph needs Zimmermann's forest formula in its full beauty. Using the six forests 
$\emptyset$, $\{\gamma_1\}$,
 $\{\gamma_2\}$, $ \{\gamma_{12}\}$, $ \{\gamma_{12},\gamma_1\}$ and 
 $ \{\gamma_{12},\gamma_2\}$, inserting it in the forest formula and counting each degree of divergence leads after 
 simplifying to 
 \begin{align}\nonumber
  &\frac{\lambda^2}{(1+a+b)^2}\int_0^\infty \int_0^\infty y_1\,dy_1\,y_2\,dy_2\\\nonumber
  \times&\bigg[\bigg\{\frac{1}{(1+y_1+y_2)}
 \bigg(1-\Big(T^{1}_{a,b}\Big)\bigg)\frac{1}{(1+a+y_1)(1+b+y_2)}\bigg\}\\\nonumber
 +&\bigg\{\bigg(1
 -\Big(T^{1}_{a,b}\Big)\bigg)\frac{1}{1+b+y_2}
 \Big(-T^{0}_{a,y_2}\Big)\Big(\frac{1}{(1+y_1+y_2)(1+a+y_1)}\Big)\bigg\}\\\nonumber
 +&\bigg\{\bigg(1
 -\Big(T^{1}_{a,b}\Big)\bigg)\frac{1}{1+a+y_1}
 \Big(-T^{0}_{b,y_1}\Big)\Big(\frac{1}{(1+y_1+y_2)(1+b+y_2)}\Big)\bigg\}\bigg]\\\label{DD}
 =&\frac{\lambda^2}{(1+a+b)^3}\big(b(1+b)\mathrm{Li}_2(-b)+a(1+a)\mathrm{Li}_2(-a)-(1+a)(1+b)\log(1+a)\log(1+b)\\\nonumber
 &+(1+a)(1+a+b)\log(1+a)+(1+b)(1+a+b)\log(1+b)\\
 &+ab\zeta_2-a-b-a^2-b^2-2ab\big).\nonumber
 \end{align}
Summing up the result of all graphs at order $\lambda^2$ \eqref{DD1}-\eqref{DD} gives
\begin{align}\nonumber
 &\frac{\lambda^2}{(1+a+b)^3}\{\zeta_2 ab-a-b+(1+a)(1+b)\log(1+a)\log(1+b)\\\label{graph}
&\qquad  -a(1+b)\log(1+b)^2-b(1+a)\log(1+a)^2\\\nonumber
 &\qquad -(1+b+2a+2ab+a^2)\mathrm{Li}_2(-a)-(1+a+2b+2ab+b^2)\mathrm{Li}_2(-b)\\
 &\qquad -(a+2b+a^2+ab)\log(1+a)-(b+2a+b^2+ab)\log(1+b)\},
\end{align} 
Adjusting the boundary conditions for the exact solution in the right way, leads to the same 
results compared to \eqref{graph1}. The natural choice for the $\mu^2$ in \eqref{eq:quartG2}
coincides with graphs expansions for the the hyperlogarithms with two letters. The 
hyperlogarithms with less letters, e.g. $\log(1+a)$ and $\log(1+b)$, are affected by the boundary conditions.

\chapter[Analysis of the Fredholm Equation]{Analysis of the Fredholm 
Equation and a Second Proof of \\ \sref{Proposition}{Prop:Jx-final}\footnote{This is taken from our
paper \cite{Grosse2020}}}\label{App:Solv}
\section[Perturbative Analysis]{Perturbative Analysis}\label{App:Solv2}
\subsection{Direct Expansion}\label{Sec1}

Expanding equation \eqref{cottauba} with the renormalisation motivated by Taylor-subtraction 
\begin{align*}
 \mu_{bare}^2=1-\lambda \Lambda^2-\frac{1}{\pi} \int_0^{\Lambda^2} dt\, \tau_0(t).
\end{align*}
and finite cut-off gives
\begin{align}\label{winkeleq3}
 p\lambda\pi \cot(\tau_a(p))=1+a+p+\lambda p\log\bigg(\frac{\Lambda^2-p}{p}\bigg)
 +\frac{1}{\pi}
 \int_0^{\Lambda^2} dt\left(\tau_p(t)-\tau_0(t)\right).
\end{align}
The first order is read out directly
\begin{align*}
 p\lambda\pi\cot(\tau_a(p))=1+a+p+\mathcal{O}(\lambda^1)\quad \Rightarrow\quad
 \tau_a(p)=\frac{p\lambda\pi}{1+a+p}+\mathcal{O}(\lambda^2),
\end{align*}
which gives after inserting back at the next order
\begin{align*}
 p\lambda\pi\cot(\tau_a(p))=&1+a+p+\lambda\bigg((1+p)\log(1+p)-p\log(p)\\
 &+p\log\left(\frac{\Lambda^2-p}{1+p+\Lambda^2}\right)
 +\log\left(\frac{1+\Lambda^2}{1+p+\Lambda^2}\right)\bigg)+\mathcal{O}(\lambda^2).
\end{align*}
The limit $\Lambda^2\to \infty$ gives finite results for
$\cot(\tau_a(p))$ as well as for $\tau_a(p)$ order by order, however
the limit has to be taken with caution. Integral and limit do
\textit{not} commute.  Namely, for and expansion
$\tau_a(p)=\sum_{n=1}^\infty \lambda^n\tau_a^{(n)}(p)$ we have
\begin{align*}
 \lim_{\Lambda^2\to\infty}\int_0^{\Lambda^2} dt\left(\tau_p^{(n)}(t)-\tau_0^{(n)}(t)\right)\neq
 \int_0^{\infty} dt\lim_{\Lambda^2\to\infty}\left(\tau_p^{(n)}(t)-\tau_0^{(n)}(t)\right), \qquad n>1.
\end{align*}
As an example we will look at the next order of both integrals. They give
\begin{align*}
  &\lim_{\Lambda^2\to\infty}\frac{1}{\pi}\int_0^{\Lambda^2} dt\left(\tau_p^{(2)}(t)-\tau_0^{(2)}(t)\right)\\
 &=(1+p)\log(1+p)^2+(1+2p)\mathrm{Li}_2(-p)-p\zeta_2,\\[10pt]
 &\frac{1}{\pi}\int_0^{\infty} dt\lim_{\Lambda^2\to\infty}\left(\tau_p^{(2)}(t)-\tau_0^{(2)}(t)\right)\\
 &=\int_0^{\infty} dt\,t\left(\frac{t\log(t)-(1+t)\log(1+t)}{(1+t+p)^2}-
 \frac{t\log(t)-(1+t)\log(1+t)}{(1+t)^2}\right)\\
 &=(1+p)\log(1+p)^2+(1+2p)\mathrm{Li}_2(-p)+2p\zeta_2,
\end{align*}
respectively, where $\mathrm{Li}_n(x)$ is the $n^{\text{th}}$
polylogarithm and $\zeta_n\equiv\zeta(n)$ is the Riemann zeta value at
integer $n$. The last term makes the difference.  Taking the ''wrong''
second result and plugging it back into \eqref{winkeleq3} would lead
to divergences at the next order. Consequently, we have to treat the
perturbative expansion of \eqref{winkeleq3} with a finite cut-off
$\Lambda^2$ at all orders, where each order has a finite limit.


We computed the first 6 orders via \textsc{HyperInt} described in \sref{App.}{App:Pert} for finite
$\Lambda^2$. Sending $\Lambda^2\to\infty$ is well-defined
at any order as expected. The first orders read
explicitly
\begin{align}\nonumber
 \lim_{\Lambda^2\to\infty}p\lambda\pi\cot(\tau_a(p))&=1+a+p+\lambda\left((1+p)\log(1+p)-p\log(p)\right)\\
 &+\lambda^2\left(-p\zeta_2 +(1+p)\log(1+p)^2+(1+2p)\mathrm{Li}_2(-p)\right)\nonumber\\
 &+\lambda^3\big(\zeta_2 \log(1+p)- \frac{1+p}{2 p}\log(1+p)^2+(1+p)\log(1+p)^3\nonumber\\
 &\qquad +2p\zeta_3-2p\mathrm{Li}_3(-p)-(1+2p)\mathrm{Hlog}(p,[-1,0,-1])\nonumber\\
 &\qquad-2(2+3p)\mathrm{Hlog}(p,[0,-1,-1])\big)\label{perturbative}
 +\mathcal{O}(\lambda^4).
\end{align}

The defintion of the hyperlogarithms Hlog is given by iterated integrals and can be found in \sref{App.}{App:Pert}. 

The perturbative expansion shows that the branch point
at $p=-1$ plays an important role. Its boundary value is found to be
$
 \lim_{\substack{\Lambda^2\to\infty\\\varepsilon \searrow0}}
\cot(\tau_0(-1+\mathrm{i}\varepsilon))=
-\mathrm{i}+ \mathcal{O}(\lambda^7)$. It is natural to conjecture that it holds
at any order,
\begin{align}\label{conj}
 \lim_{\substack{\Lambda^2\to\infty\\ \varepsilon \searrow 0}}
\cot(\tau_0(-1+\mathrm{i}\varepsilon))=-\mathrm{i}.
\end{align}

The perturbative expansion with a finite cut-off $\Lambda^2$
is quite inefficient. The boundary value \eqref{conj} admits a more
efficient strategy.
We take the derivative of \eqref{winkeleq3} with respect to $p$:
\begin{align*}
 1+\lambda \log\Big(\frac{\Lambda^2{-}p}{p}\Big)
-\lambda\frac{\Lambda^2}{\Lambda^2{-}p}+\frac{1}{\pi}
 \int_0^{\Lambda^2}\!\!\! dt\;\frac{d\tau_p(t)}{dp}=
\lambda \pi \cot(\tau_a(p))+p \lambda\pi \frac{\partial}{\partial p}\cot(\tau_a(p)).
\end{align*}
Multiplying this equation by $p$ and subtracting it from \eqref{winkeleq3} again leads to
\begin{align}\label{AblFormel}
 -p^2\lambda\pi \frac{\partial}{\partial p}\cot(\tau_a(p))=
 1+a+\lambda\frac{p\Lambda^2}{\Lambda^2{-}p}
 +\frac{1}{\pi }\int_0^{\Lambda^2} \!\!\! dt
\left(\tau_p(t)-\tau_0(t)-p\frac{d\tau_p(t)}{dp}\right),
\end{align}
where the limit $\Lambda^2\to\infty$ is now safe from the
beginning and commutes with the integral.
We divide \eqref{AblFormel} by $-p^2$ and integrate it for
all orders higher than $\lambda^1$ over $p$ from
$-1$ (here  \eqref{conj} is assumed) up to some $q$ to get
$\lim_{\Lambda^2\to\infty} \lambda\pi \cot(\tau_a(q))$ on the lhs. On the rhs
the order of integrals $\int_{-1}^q dp\int_0^{\infty} dt$
can be exchanged. The integral over $p$ is
\begin{align}\label{intangl}
 \int_{-1}^q dp \frac{1}{p^2}\left(\tau_p(t)-\tau_0(t)
-p\frac{d\tau_p(t)}{dp}\right),
\end{align}
assuming H\"older continuity of $\tau_p(t)$ so that the integral splits after
taking principal values. The last term is
computed for small $\epsilon$ and all $\mathcal{O}(\lambda^{>1})$-contributions  via integration by parts
\begin{align}\nonumber
 \int_{[-1,q]\backslash [-\epsilon,\epsilon]}dp\frac{\frac{d\tau_p(t)}{dp}}{p}=&
 \frac{\tau_p(t)}{p}\bigg\vert_{p=\epsilon}^q+\frac{\tau_p(t)}{p}\bigg\vert_{p=-1}^{-\epsilon}
 +\int_{[-1,q]\backslash [-\epsilon,\epsilon]}dp\frac{\tau_p(t)}{p^2}\\
 =&\frac{\tau_q(t)}{q}+\tau_{-1}(t)+\int_{[-1,q]\backslash [-\epsilon,\epsilon]}dp\frac{\tau_p(t)}{p^2}
 -\frac{\tau_{-\epsilon}(t)+\tau_{\epsilon}(t)}{\epsilon}.\label{sitestep}
\end{align}
The first term in \eqref{intangl} cancels. The second term in
\eqref{intangl} integrates to a boundary term
$+2\frac{\tau_0(t)}{\epsilon}$, which is also cancelled by the last term
of \eqref{sitestep}.
Multiplying by $q$ and including the special
$\mathcal{O}(\lambda)$-contribution we arrive in the limit
$\Lambda^2\to \infty$ where \eqref{conj} is (conjecturally) available at
\begin{align}\label{winkeleq4}
 q\lambda\pi \cot(\tau_a(q))=1+a+q-\lambda q \log(q)
 +\frac{1}{\pi}
 \int_0^{\infty} dt\left(\tau_q(t)-(1+q)\tau_0(t)+q \tau_{-1}(t)\right).
\end{align}
This equation is much more appropriate for the perturbation theory
because the number of terms is reduced tremendously order by
order. Obviously, the first six order coincide with the earlier but
much harder perturbative expansion of \eqref{winkeleq3}.

Using \eqref{winkeleq4} the perturbative expansion is increased up to
$\lambda^9$ with \textsc{HyperInt}.  As consistency check of
assumption \eqref{conj} we inserted
the next orders
$\tau_a^{(n)}(p)$ into \eqref{Gab-ansatz} to get the expansion
$G(a,b)=\sum_{n=0}^\infty \lambda^n G^{(n)}(a,b)$. This confirmed
the symmetry $G^{(n)}(a,b)=G^{(n)}(b,a)$ which would easily be lost by
wrong assumptions. We are thus convinced to have
the correct expressions for $\tau_a^{(n)}(p)$ for $6<n<10$.

\subsection{Expansion of the Fredholm Equation}
\label{Sec2}

To access the angle function $\tau_a(p)$
we first have to determine the expansion of the
deformed measure ${\varrho}_\lambda(x)=R_4(x)$ through the Fredholm
equation \eqref{Fred}. The constant $\mu^2(\lambda)$ is not yet fixed and needs
a further expansion
\begin{align*}
 \mu^2=\sum_{n=0}^\infty \lambda^n\mu^2_n.
\end{align*}
First orders of the deformed measure are given iteratively through
\eqref{Fred}
\begin{align*}
 {\varrho}_\lambda(x)=&x-\lambda ((x+\mu^2_0)\Hlog(x,[-\mu^2_0])-x)\\
 &-\frac{\lambda^2}{\mu^2_0}(-\mu^2_0x\Hlog(x,[0,-\mu^2_0])+\mu^2_0(\mu^2_1+\mu^2_0+x)\Hlog(x,[-\mu^2_0])-x(\mu^2_1+\mu^2_0))\\
 &+\mathcal{O}(\lambda^3).
\end{align*}
Recall that the inverse of $R_4(x)={\varrho}_\lambda(x)=p$ exists for
all $p\in\mathbb{R}_+$ in case $\lambda< \left(\int_0^\infty
\frac{dt \,{\varrho}_\lambda (t)}{(t+\mu^2)^2 }\right)^{-1}$.
If $\varrho_\lambda(x)$ had the same asymptotics as $\varrho_0(x)=x$
then $R_4^{-1}$ could not be defined globally for $\lambda>0$.
We proved in \sref{Sec.}{Sec.4dSol}
that the asymptotics of $\varrho_\lambda(x)$ is
altered in such a way that $R_4^{-1}$ is defined.
Anyway, in each order of perturbative expansion the inverse $R_4^{-1}$
is globally defined on $\mathbb{R}_+$.
At this point it suffices to assume that $R_4^{-1}(p)$ is a formal power
series in $\lambda$, which is achieved by \eqref{Fred}
\begin{align*}
 R_4^{-1}(p)=p+\lambda (R_4^{-1}(p))^2\int_0^\infty\frac{dt\,{\varrho}_\lambda(t)}{(t+\mu^2)^2(t+\mu^2+R_4^{-1}(p))}.
\end{align*}
Expanding ${\varrho}_\lambda(t)$ and $\mu^2$, the first orders are
\begin{align*}
 R_4^{-1}(p)=&p-\lambda(p-(\mu^2_0+p)\Hlog(p,[-\mu^2_0]))\\
 &-\frac{\lambda^2}{\mu^2_0}(p\mu^2_0\Hlog(p,[0,-\mu^2_0])-2\mu^2_0(p+\mu^2_0)\Hlog(p,[-\mu^2_0,-\mu^2_0])\\
 &\qquad -\mu^2_0(\mu^2_1+\mu^2_0)\Hlog(p,[-\mu^2_0])+p(\mu^2_1+\mu^2_0))+\mathcal{O}(\lambda^3).
\end{align*}
The last step is to determine $\lim_{\varepsilon\to0}\mathrm{Re}I(p+i\varepsilon)=p\lambda\pi\cot(\tau_0(p))$ for $\Lambda^2\to \infty$
via
\begin{align*}
 I(z)=\mu^2+R_4^{-1}(z)+\lambda(\mu^2+R_4^{-1}(z))^2\int_0^\infty \frac{dt\,{\varrho}_\lambda(t)}{(t+\mu^2)^2(t-R_4^{-1}(z))},
\end{align*}
as a formal series.
The first few orders are
\begin{align*}
& \lim_{\varepsilon\to 0} I(p+i\varepsilon)
\\
&=\mu_0^2+p+\lambda(\mathrm{i}\pi p+\mu^2_0+\mu^2_1+(\mu^2_0+p)\Hlog(p,[-\mu^2_0])+p\log(\mu^2_0)-p\log(p)))\\
 &+\lambda^2\big(\mu_0^2(1-\zeta_2)+\mu_1^2+\mu_2^2-p\zeta_2
 +(\mu_0^2+\mu_1^2)\Hlog(p,[-\mu^2_0])\\
 &\quad +2(\mu_0^2+p)\Hlog(p,[-\mu^2_0,-\mu^2_0])-(\mu^2_0+2p)
 \Hlog(p,[0,-\mu^2_0])\big)+\mathcal{O}(\lambda^3).
\end{align*}
Comparing it with \eqref{perturbative} through equation \eqref{tauap-final}
fixes every $\mu^2_i$ uniquely and confirms
\begin{align*}
 \lim_{\varepsilon\to 0}I(p+\mathrm{i}\varepsilon)=\lambda\pi p\cot(\tau_0(p))+\mathrm{i}\lambda\pi p.
\end{align*}
Furthermore,
the first 10 orders are identical with the
expansion of \eqref{winkeleq4}, provided that the
$\mu^2_i$'s are fixed to
\begin{align}\label{mu}
 \mu^2=&1-\lambda+\frac{1}{6}(\pi\lambda)^2-\lambda \frac{1}{3}(\pi\lambda)^2
 +\frac{3}{40}(\pi\lambda)^4-\lambda\frac{8}{45}(\pi\lambda)^4
 +\frac{5}{112}(\pi\lambda)^6-\lambda\frac{4}{35}(\pi\lambda)^6\nonumber \\*
 &+\frac{35}{1152}(\pi\lambda)^8
 -\lambda\frac{128}{1575}(\pi\lambda)^8+\frac{63}{2816}(\pi\lambda)^{10}+\mathcal{O}(\lambda^{11}).
\end{align}
The conjectured behavior of $\cot(\tau_0(p))$ at $p=-1+\mathrm{i}\varepsilon$ in the previous subsection
\eqref{conj} is now equivalent to
\begin{align*}
 \lim_{\varepsilon\to0} I(-1+\mathrm{i}\varepsilon)=0\quad \Rightarrow\quad R_4^{-1}(-1)=-\mu^2.
\end{align*}
We find that
the expansion \eqref{mu} of $\mu^2$ obeys an unexpected boundary condition
\begin{align}\label{condmu}
 \int_0^\infty\frac{dt\,{\varrho}_\lambda(t)}{(\mu^2+t)^3}=\frac{1}{2}+\mathcal{O}(\lambda^{10}).
\end{align}

For further study we pass as in \sref{Sec.}{Sec.4dSol} to the rescaled measure
$\phi(x)=
\mu^2 \tilde{\varrho}_\lambda(\mu^2 x):=\frac{\varrho_\lambda(\mu^2 x)}{
\mu^2 x(1+ x)}$.
The pattern of coefficients of the $\mu^2$-expansion in \eqref{mu}
suggests to distinguish between even an odd powers in
$\lambda$. The even powers $\lambda^{2n}$ are given by the formula
\begin{align*}
 \frac{(2n-1)!!}{(2n)!! (2n+1)}=\frac{(2n)!}{4^nn!^2(2n+1)},
\end{align*}
and the odd powers $\lambda^{2n+1}$ by
\begin{align*}
 2\frac{(2n)!!}{(2n+1)!!(2n+2)}=2\frac{4^nn!^2}{(2n+2)!}.
\end{align*}
Both series are convergent for $|\lambda|<\frac{1}{\pi}$ with the result (up to order $\lambda^{10}$)
\begin{align*}
 \mu^2=\frac{\arcsin(\lambda\pi)}{\lambda\pi}-\lambda \left(\frac{\arcsin(\lambda\pi)}{\lambda\pi}\right)^2.
\end{align*}
This result suggests that $\frac{\arcsin(\lambda\pi)}{\pi}$ is a
better expansion parameter than $\lambda$ itself. The factors
$\pi^{2n}$ are produced by $\zeta_{2n}$ in the  iterated integrals.
We thus reorganise the perturbative solution of \eqref{feq}
into a series in $\frac{\arcsin(\lambda\pi)}{\pi}$.
The power of $\frac{\arcsin(\lambda\pi)}{\lambda\pi}$
depends on the
number of letters of the hyperlogarithm, which alternate between
 $-1$ and $0$.
The expansion which holds up to order $\lambda^{10}$ is given by
\begin{align}
\phi(x)=&c_\lambda \frac{\arcsin(\lambda\pi)}{\lambda\pi(1+x)}\sum_{n=0}^\infty \mathrm{Hlog}(x,[\underbrace{0,-1,...,0,-1}_{n}])
 \left(\frac{\arcsin(\lambda\pi)}{\pi}\right)^{2n}\\
 &-\lambda c_\lambda\frac{\arcsin(\lambda\pi)^2}{x(\lambda\pi)^2}\sum_{n=0}^\infty \mathrm{Hlog}(x,[-1,\underbrace{0,-1,...,0,-1}_{n}])
 \left(\frac{\arcsin(\lambda\pi)}{\pi}\right)^{2n},\nonumber
\end{align}
where the underbrace with $n$ means that we have $n$ times the letters $0$ and $-1$ in an alternating way.

In the limit $x\to0$ only the terms with $n=0$ in both sums survive,
\begin{align*}
1\equiv \phi(0)&=c_\lambda\frac{\arcsin(\lambda\pi)}{\lambda\pi}
\lim_{x\to0}\frac{\mathrm{Hlog}(x,[\,])}{1+x}-
 \lambda c_\lambda\frac{\arcsin(\lambda\pi)^2}{(\lambda\pi)^2}\lim_{x\to0}\frac{\mathrm{Hlog}(x,[-1])}{x}\\*
 &= \frac{c_\lambda}{\lambda}
\frac{\arcsin(\lambda\pi)}{\pi}
\Big(1-\frac{\arcsin(\lambda\pi)}{\lambda\pi}\Big).
\end{align*}
This value was found in \sref{Sec.}{Sec.4dSol} by another method.
We also remark that $c_\lambda=\frac{1}{\mu^2}$ for the special renormalisation.

Next define the functions
\begin{align*}
 f(x)&:=\sum_{n=0}^\infty \mathrm{Hlog}(x,[\underbrace{0,-1,...,0,-1}_{n}])
 \,\alpha_\lambda^{2n}\\
 g(x)&:=\sum_{n=0}^\infty\mathrm{Hlog}(x,[-1,\underbrace{0,-1,...,0,-1}_{n}])
 \,\alpha_\lambda^{2n},
\end{align*}
where $\alpha_\lambda=\frac{\arcsin(\lambda\pi)}{\pi}$.
Both together obey the differential equations
\begin{align*}
 f'(x)=\frac{\alpha_\lambda^2}{x}g(x)\qquad g'(x)=\frac{1}{1+x}f(x),
\end{align*}
or equivalently
\begin{align*}
 f''(x)+\frac{f'(x)}{x}-\alpha_\lambda^2 \frac{f(x)}{(1+x)x}=0,\quad
g''(x)+\frac{g'(x)}{1+x}-\alpha_\lambda^2 \frac{g(x)}{(1+x)x}=0,
\end{align*}
with the boundary conditions $f(0)=1$, $f'(0)=\alpha_\lambda^2$,
$g(0)=0$ and $g'(0)=1$.
The solution is given by hypergeometric functions $_2F_1$
\begin{align*}
 f(x)={}_2F_1\Big(\genfrac{}{}{0pt}{}{\alpha_\lambda,\;{-}\alpha_\lambda}{1}
\Big| -x\Big) \qquad
g(x)=\frac{x}{\alpha_\lambda^2}f'(x)=x{}_2F_1
\Big(\genfrac{}{}{0pt}{}{1{+}\alpha_\lambda,\;1{-}\alpha_\lambda}{2}
\Big| -x\Big).
\end{align*}
In summary. the solution of equation \eqref{feq} is conjectured to be
\begin{align}
\phi(x)&=\frac{\alpha_\lambda c_\lambda}{\lambda(1+x)}
{}_2F_1\Big(\genfrac{}{}{0pt}{}{\alpha_\lambda,\;{-}\alpha_\lambda}{1}
\Big| -x\Big)
 -\frac{\alpha_\lambda^2c_\lambda}{\lambda}
{}_2F_1
\Big(\genfrac{}{}{0pt}{}{1{+}\alpha_\lambda,\;1{-}\alpha_\lambda}{2}
\Big| -x\Big)
\nonumber
\\
&={}_2F_1
\Big(\genfrac{}{}{0pt}{}{1{+}\alpha_\lambda,\;2{-}\alpha_\lambda}{2}
\Big| -x\Big)
\end{align}
or equivalently for \eqref{Fred}
\begin{align}
 R_4(x)=\varrho_\lambda(x)=&\frac{x}{\mu^2}\bigg(1+\frac{x}{\mu^2}\bigg)f\bigg(\frac{x}{\mu^2}\bigg)
=x  \,{}_2F_1
\Big(\genfrac{}{}{0pt}{}{\alpha_\lambda,\;1{-}\alpha_\lambda}{2}
\Big| -\frac{x}{\mu^2}\Big),
\label{Jx}
\end{align}
where we have used
the Gauss recursion formula \cite[$\S$ 9.137.7]{gradshteyn2007}
for hypergeometric functions.
Finally, we note that
\[
\int_0^\infty \frac{dt\; \varrho_\lambda(t)}{(t+\mu^2)^3}
=\lim_{x\to 0} \frac{x-\varrho_\lambda(x)}{\lambda x^2}
= \frac{\alpha_\lambda(1-\alpha_\lambda)}{2\lambda\mu^2}=\frac{1}{2c_\lambda\mu^2}\;.
\]
Thus choosing $\mu^2=\frac{\alpha_\lambda(1-\alpha_\lambda)}{\lambda}$
we confirm (\ref{condmu}) exactly.

\section[Proof with Meijer G-Function]
{Proof with Meijer G-Function}\label{App:Meijer}
We find it interesting to directly check that the hypergeometric function
$\tilde{\varrho}_\lambda(x)=\frac{1}{\mu^2}\phi(\frac{x}{\mu^2})$,
see (\ref{phi-sol}), solves the integral equation (\ref{intc}).
The hypergeometric function can be expressed through the more general
Meijer-G function. A Meijer G-function is defined by
\begin{align}\label{AppDef}
 G^{m,n}_{p,q}\Big(z\Big|
\genfrac{}{}{0pt}{}{a_1,...,a_p}{b_1,...,b_q}\Big)=\frac{1}{2\pi \I}\int_L
\frac{\prod_{j=1}^m \Gamma(b_j-s)\prod_{j=1}^n \Gamma(1-a_j+s)}
{\prod_{j=m+1}^q \Gamma(1-b_j+s)\prod_{j=n+1}^p \Gamma(a_j-s)}z^sds,
\end{align}
with $m,n,p,q\in\N$, with $m\leq q$ and $n\leq p$, and
poles of
$\Gamma(b_j-s)$ different from poles of $\Gamma(1-a_j+s)$.
The infinite contour $L$ separates between the poles of
$\Gamma(b_j-s)$ and $\Gamma(1-a_j+s)$, and its behavior to
infinity depends on $m,n,p,q$ (see \cite[\S 9.3]{gradshteyn2007}).

The Meijer G-function has by definition the property
\begin{align}\label{AppInv}
 G^{m,n}_{p,q}\Big(z\Big|
\genfrac{}{}{0pt}{}{a_1,...,a_p}{b_1,...,b_q}\Big)=\frac{1}{z}G^{n,m}_{q,p}\Big(\frac{1}{z}\Big|
\genfrac{}{}{0pt}{}{-b_1,...,-b_q}{-a_1,...,-a_p}\Big).
\end{align}
It obeys the convolution formula \cite[$\S$ 7.811.1]{gradshteyn2007}
\begin{align}\nonumber
 &\int_0^\infty dx
 G^{m,n}_{p,q}\Big(\alpha x\Big|
\genfrac{}{}{0pt}{}{a_1,...,a_p}{b_1,...,b_q}\Big)
G^{m',n'}_{p',q'}\Big(\beta x\Big|
\genfrac{}{}{0pt}{}{a'_1,...,a'_{p'}}{b'_1,...,b'_{q'}}\Big)\\
&\qquad =\frac{1}{\alpha}G^{n+m' ,m+n'}_{q+p',p+q'}\Big(\frac{\beta}{\alpha}\Big|
\genfrac{}{}{0pt}{}{-b_1,..,-b_m,a'_1,..,a'_{p'},-b_{m+1},..,-b_q}
{-a_1,..,-a_n,b'_1,..,b'_{q'},-a_{n+1},..,-a_p}\Big),\label{AppConv}
\end{align}
which is the source of numerous impressive integrals over $\mathbb{R}_+$
of products of special functions.
If no two $b_j$ differ by an integer, either $p<q$ or $p=q$ with $|z|<1$,
then a Meijer G-function can
be expressed by hypergeometric functions
\begin{align}\label{AppHyp}
G^{m,n}_{p,q}\Big(z\Big|
\genfrac{}{}{0pt}{}{a_1,\dots,a_p}{b_1,\dots,b_q}\Big)
&=\sum_{k=1}^m\frac{\prod_{j=1}^{'m}\Gamma(b_j-b_k)\prod_{j=1}^n\Gamma(1+b_k-a_j)}
{\prod_{j=m+1}^q\Gamma(1+b_k-b_j)\prod_{j=n+1}^p\Gamma(a_j-b_k)}z^{b_k}
\\
&\times \,_pF_{q-1}
\Big(\genfrac{}{}{0pt}{}{1+b_k-a_1,\dots,1+b_k-a_p}{
1+b_k-b_1,..,\star,..,1+b_k-b_q}\Big|(-1)^{p-n-m}z\Big)\nonumber,
\end{align}
where primed sum and the $\star$ means that the term with $j=k$ is omitted.

We need another identity which is derived directly from the definition
\begin{align}\nonumber
 G^{3,2}_{3,3}\Big( z\Big|
\genfrac{}{}{0pt}{}{0,0,1}{b_1,b_2,0}\Big)=&\frac{1}{2\pi \I}\int_L
\frac{\Gamma(b_1-s)\Gamma(b_2-s)\Gamma(-s)\Gamma(1+s)^2}
{\Gamma(1-s)}z^sds
\\\nonumber
=&-\frac{1}{2\pi \I}\int_L
 \Gamma(b_1-s)\Gamma(b_2-s)\Gamma(s)\Gamma(1+s)
z^sds
\\\nonumber
=&\Gamma(b_1)\Gamma(b_2)-\frac{1}{2\pi \I}\int_{L'}
 \Gamma(b_1-s)\Gamma(b_2-s)\Gamma(s)\Gamma(1+s)
z^sds\\
=&\Gamma(b_1)\Gamma(b_2)-G^{2,2}_{2,2}\Big( z\Big|
\genfrac{}{}{0pt}{}{0,1}{b_1,b_2}\Big)\label{AppSpec},
\end{align}
where the contour is changed $L\to L'$ such that it is moved
through $s=0$ and picked up the residue.
The contour $L'$ fulfils the definition \eqref{AppDef} for $G^{2,2}_{2,2}\Big( z\Big|
\genfrac{}{}{0pt}{}{0,1}{b_1,b_2}\Big)$.

From \eqref{AppHyp} one can establish
\begin{align*}
\tilde{\varrho}_\lambda(t)&=
\frac{1}{\mu^2}
\frac{1}{\Gamma(2-\alpha_\lambda)\Gamma(1+\alpha_\lambda)}
G^{1,2}_{2,2}\Big(\frac{t}{\mu^2}\Big|
\genfrac{}{}{0pt}{}{\alpha_\lambda-1,-\alpha_\lambda}{0,-1}\Big)\;,
\end{align*}
and $\frac{1}{x+t+\mu^2}= \frac{1}{x+\mu^2}
{}_1F_0\big(\genfrac{}{}{0pt}{}{1}{-}\big|
{-}\frac{t}{x+\mu^2}\big)=
\frac{1}{x+\mu^2}
G^{1,1}_{1,1}\big(\frac{t}{x+\mu^2}\big|
\genfrac{}{}{0pt}{}{0}{0}\big)$.
The convolution theorem \eqref{AppConv} of
Meijer G-functions thus allows to evaluate the integral
\begin{align}
&\lambda
\int_0^\infty \frac{dt\;\tilde{\varrho}_\lambda(t)}{x+t+\mu^2}
\nonumber
\\
&\stackrel{\text{\eqref{AppConv}}}{=} \frac{\lambda}{\mu^2\Gamma(2-\alpha_\lambda)\Gamma(1+\alpha_\lambda)}
G^{2,3}_{3,3}\Big(\frac{x+\mu^2}{\mu^2}\Big|
\genfrac{}{}{0pt}{}{\alpha_\lambda-1,-\alpha_\lambda,0}{0,0,-1}\Big)
\nonumber
\\
&\stackrel{\text{\eqref{AppInv}}}{=}
\frac{\lambda}{(x+\mu^2)\Gamma(2-\alpha_\lambda)\Gamma(1+\alpha_\lambda)}
G^{3,2}_{3,3}\Big(\frac{\mu^2}{x+\mu^2}\Big|
\genfrac{}{}{0pt}{}{0,0,1}{1-\alpha_\lambda,\alpha_\lambda,0}\Big)
\nonumber
\\
&\stackrel{\text{\eqref{AppSpec}}}{=}
\frac{\lambda}{(x+\mu^2)\Gamma(2-\alpha_\lambda)\Gamma(1+\alpha_\lambda)}\left(\Gamma(1-\alpha_\lambda)
\Gamma(\alpha_\lambda)-
G^{2,2}_{2,2}\Big(\frac{\mu^2}{x+\mu^2}\Big|
\genfrac{}{}{0pt}{}{0,1}{1-\alpha_\lambda,\alpha_\lambda}\Big)\right)
\nonumber
\\
&\stackrel{\text{\eqref{AppHyp}}}{=}
\frac{\lambda}{(x+\mu^2)}
\Big\{\frac{1}{\alpha_\lambda(1-\alpha_\lambda)}
\nonumber
\\
&-
\frac{\Gamma(2\alpha_\lambda-1)
\Gamma(1-\alpha_\lambda)}{\Gamma(1+\alpha_\lambda)}
\Big(\frac{\mu^2}{x+\mu^2}\Big)^{1-\alpha_\lambda}
\;{}_2F_1
\Big(\genfrac{}{}{0pt}{}{2{-}\alpha_\lambda,\;1{-}\alpha_\lambda}{2{-}2\alpha_\lambda}
\Big| \frac{\mu^2}{x+\mu^2}\Big)
\nonumber
\\
& -\frac{\Gamma(1-2\alpha_\lambda)
\Gamma(\alpha_\lambda) }{\Gamma(2-\alpha_\lambda)}
\Big(\frac{\mu^2}{x+\mu^2}\Big)^{\alpha_\lambda}
\;{}_2F_1
\Big(\genfrac{}{}{0pt}{}{1{+}\alpha_\lambda,\;\alpha_\lambda}{2\alpha_\lambda}
\Big| \frac{\mu^2}{x+\mu^2}\Big)
\Big\}
\nonumber
\\
&=
\frac{1}{(x+\mu^2)}
\frac{\lambda}{\alpha_\lambda(1-\alpha_\lambda)}
- \frac{\lambda\pi}{\sin (\alpha_\lambda \pi)} \tilde{\varrho}_\lambda(x)\;.
\end{align}
We have used the expansion of a Meijer G-function into hypergeometric
functions and applied in the last step  \cite[\S 9.132.1]{gradshteyn2007}.
The result is precisely (\ref{intc})
provided that
$c_\lambda=\frac{\lambda}{\alpha_\lambda(1-\alpha_\lambda)}$ (see \eqref{phi-sol}) and
$\sin( \alpha_\lambda\pi)=\lambda\pi$ (see \eqref{sol-alpha}).

\section[On the Spectrum of the Fredholm Integral Operator]
{On the Spectrum of the Fredholm Integral Operator
\footnote{Contributed by Robert Seiringer to \cite{Grosse2020}}}\label{app}
Abstractly, the integral equation (\ref{intc}) is of the form
\[
\psi = f_\mu - \lambda A_\mu \psi,
\]
where $\psi(t)=\tilde{\varrho}_\lambda(t)$,
$f_\mu(t)= (t+\mu^2)^{-1}$ and $A_\mu$ is the operator with integral kernel
\begin{align}
A_\mu(t,u)=  \frac {u t}{(u+\mu^2)(u+t+\mu^2)(t+\mu^2)}.
\end{align}
Note that $A_\mu$ is symmetric and positive. The equation can thus be
solved for $\psi$ if $\lambda > \lambda_c = -\|A_\mu\|^{-1}$.

By scaling, the spectrum of $A_\mu$ is independent of $\mu$ for $\mu>0$.
We claim that
\begin{align}
\|A_\mu\| = \|A_0\| = \pi.
\end{align}
In particular, $\lambda_c = -1/\pi$.

Since $A_\mu$ has a positive kernel which is monotone in $\mu$, one
readily obtains $\|A_\mu\|\leq \|A_0\|$. On the other hand, $A_0$ is the
weak limit of $A_\mu$ as $\mu\to 0$, hence $\|A_0\|\leq \liminf_{\mu\to
  0}\|A_\mu\|$, which proves that $\|A_\mu\|=\|A_0\|$. Rational fraction expansion of $A_\mu$ gives $f_\mu$ 
  an additional factor and changes the integral kernel to $A_0\to
(u+t)^{-1}$. Introducing logarithmic coordinates, we have
\begin{align}
\int_0^\infty\int_0^\infty \frac{ \phi(u)^* \phi(t)}{u+t} du dt
&=
\int_{\mathbb{R}} \int_{\mathbb{R}} \frac{\phi^*(e^v) \phi(e^s)}{e^v + e^s}
e^{v+s} dv ds
\nonumber
\\
&=  \int_{\mathbb{R}} \int_{\mathbb{R}}
\frac{\phi^*(e^v)e^{v/2} \phi(e^s)e^{s/2}}{2 \cosh( \tfrac 12(v-s) )} dv ds
\end{align}
which can be diagonalised via Fourier transforms. Since
\[
\int_{\mathbb{R}} \frac 1 {2 \cosh(v/2) } dv = \pi,
\]
this shows that the spectrum of $A_0$ equals $[0,\pi]$, and
indeed $\|A_0\|=\|A_\mu\|=\pi$.

\chapter[Catalan Tuples and Catalan Tables]
{Examples for Catalan Tuples, Catalan Tables and its Diagrammatic Representation\footnote{This 
is taken from the appendix of our paper \cite{DeJong}}}\label{App:Expl}
\section{Examples for Catalan Tuples and Catalan Tables}
\label{app:ex}
Catalan tuples are introduced in \sref{Definition}{dfnt:cattup} and Catalan tables in \sref{Definition}{dfnt:cattab}.
\begin{exm}
We have $(1,0)=(0)\circ(0)$,  $(2,0,0)=(1,0)\circ(0)$,  
$(1,1,0)=(0)\circ(1,0)$ and 
$(3,1,0,0,2,0,0)=(2,1,0,0)\circ(2,0,0)$.
\end{exm}

\begin{exm}
We have $(1,0)=(0)\bullet(0)$,  $(2,0,0)=(1,0)\bullet (0)$,  
$(1,1,0)=(0)\bullet (1,0)$ and 
$(3,1,0,0,2,0,0)=(2,0,2,0,0)\bullet(1,0)$.
\end{exm}

\begin{exm} We have 
\begin{align*}
\hspace*{-1.4cm}
\mathcal{T}_1 &=\{\langle (0),(0)\rangle \}\;, 
\\
\hspace*{-1.4cm}\mathcal{T}_2 &=\{\langle (1,0),(0),(0)\rangle ,~ \langle (0),(1,0),(0)\rangle
\}
\\
\hspace*{-1.4cm}\mathcal{T}_3 &=\{\langle (2,0,0),(0),(0),(0)\rangle ,~
\langle (1,1,0),(0),(0),(0)\rangle ,~
\langle (1,0),(1,0),(0),(0)\rangle ,
\\
&\qquad \langle (1,0),(0),(1,0),(0)\rangle ,~
\langle (0),(2,0,0),(0),(0)\rangle ,~
\langle (0),(1,1,0),(0),(0)\rangle , 
\\
&\qquad \langle (0),(1,0),(1,0),(0)\rangle \}\;.
\end{align*}
Later in \sref{Figure}{f:G4} and \ref{f:G6} we give a diagrammatic representation
of the Catalan tables in $\mathcal{T}_2$ and $\mathcal{T}_3$,
respectively. 
\end{exm} 

\begin{exm}
We have $\langle(2,0,0),(0),(0),(0)\rangle=
\langle(1,0),(0),(0)\rangle \smalllozenge \langle(0),(0)\rangle$ and
$\langle(1,1,0),(0),(0),(0)\rangle=\langle(0),(0)\rangle\smalllozenge 
\langle(1,0),(0),(0)\rangle$. In \sref{Example}{ex:G12a} and 
\sref{Figure}{f:G12a} 
we considered the Catalan table
$\langle (2,0,0),(1,1,0),(0),(0),(0),(1,0),(0)\rangle
= \langle (1,0),(1,1,0),(0),(0),(0)\rangle 
\smalllozenge \langle (0),(1,0),(0)\rangle$.
Another example will be given in \sref{Example}{exm:triangle}.
\end{exm}

\begin{exm}
We have $\langle(0),(2,0,0),(0),(0)\rangle=
\langle(0),(1,0),(0)\rangle \smallblacklozenge \langle(0),(0)\rangle$ and
$\langle(0),(1,1,0),(0),(0)\rangle=\langle(0),(0)\rangle\smallblacklozenge 
\langle(1,0),(0),(0)\rangle$. In \sref{Example}{ex:G12a} and 
\sref{Figure}{f:G12a} 
we considered the Catalan table
$\langle (2,0,0),(1,1,0),(0),(0),(0),(1,0),(0)\rangle
= \langle (2,0,0),(0), (0), (1,0),(0)\rangle 
\smallblacklozenge \langle (1,0),(0),(0)\rangle$.
Another example will be given in \sref{Example}{exm:box}.
\end{exm}

\section{Chord Diagrams with Threads}\label{App:Chord}
For uncovering the combinatorial structure 
of (\ref{e:rr}), it was extremely helpful for us to have a 
graphical presentation as diagrams of chords and threads. 
To every term 
of the expansion (\ref{expansion}) of an $N$-point function we 
associate a diagram as follows: 
\begin{dfnt}[diagrammatic presentation]
Draw $N$ nodes on a circle, label them
from $p_0$ to $p_{N-1}$. Draw a green chord 
between $p_r,p_s$ for every
factor $G_{p_rp_s}$ in (\ref{expansion}) and a (orange
for $t,u$ even, blue for $t,u$ odd) 
thread between 
$p_t,p_u$ for every factor $\frac{1}{E_{p_t}-E_{p_u}}$. 
The convention $t<u$ is chosen so that the diagrams come with a sign.
\end{dfnt}
\noindent 
It was already known in \cite{Grosse:2012uv} that the chords do not cross each
other (using cyclic invariance (\ref{e:rr_rot})) and that the
threads do not cross the chords (using  (\ref{rfp})). But the
combinatorial structure was not understood in \cite{Grosse:2012uv} and no
algorithm for a canonical set of chord diagrams could be given. This work
repairs this omission.

The $N/2=k+1$ chords in such a diagram divide the circle into $k+2$
pockets. The pocket which contain the arc segment between the
designated nodes $p_{0}$ and $p_{N-1}$ is by definition the root 
pocket $P_{0}$. Moving in the
counterclockwise direction, every time a new pocket is entered it is
given the next number as index, as in \sref{Definition}{dfnt:CTG}. The
tree of these $k+2$ pockets, connecting vertices if the pockets border
each other, is the pocket tree. A pocket is called even (resp.\ odd)
if its index is even (resp.\ odd).

Inside  every even pocket, the orange threads (between even nodes) 
form the direct tree, the blue threads (between odd nodes) 
form the opposite tree. 
Inside every odd pocket, the orange threads (between even nodes) 
form the opposite tree, the blue threads (between odd nodes) 
form the direct tree. 

The sign $\tau$ of the diagram is given by 
\begin{equation}
\tau(T)=(-1)^{\sum_{j=1}^{k+1}e_{0}^{(j)}}\;,\label{e:sign}
\end{equation}
where $e^{(j)}_{0}$ is the first entry of the Catalan tuple 
corresponding to a pocket $P_{j}$. Indeed, 
for every pocket that is not a leaf or the root pocket, the chain of 
odd nodes starts with the highest index, which implies that every 
thread emanating from this node contributes a factor $(-1)$ to the
monomial (\ref{expansion}) compared with the lexicographic order
chosen there. In words: count for all pockets
other than the root pocket the total number $K$ of threads which go from
the smallest node into the pocket. The sign is even (odd) if $K$ is
even (odd). 

\sref{Figure}{f:G4} and \ref{f:G6} show Catalan tables and chord diagrams of the
$4$-point function and $6$-point function, respectively. 
\sref{Figure}{f:G12a} shows the chord diagram discussed in 
\sref{Example}{ex:G12a}.
\begin{figure}[!hpt]
\begin{picture}(120,30)\setlength{\unitlength}{1mm}
\put(25,3){\includegraphics[width=3cm]{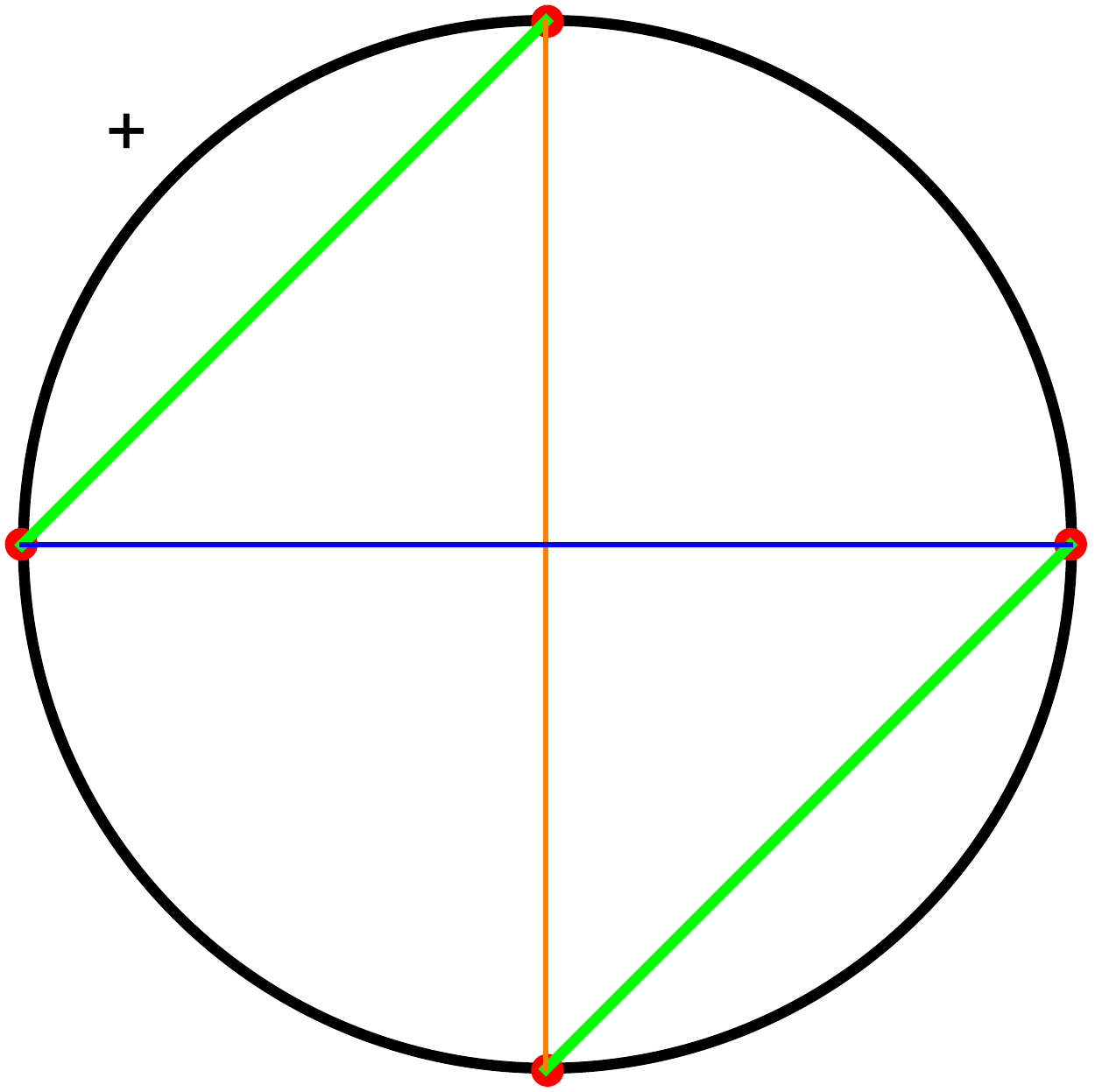}}
\put(65,3){\includegraphics[width=3cm]{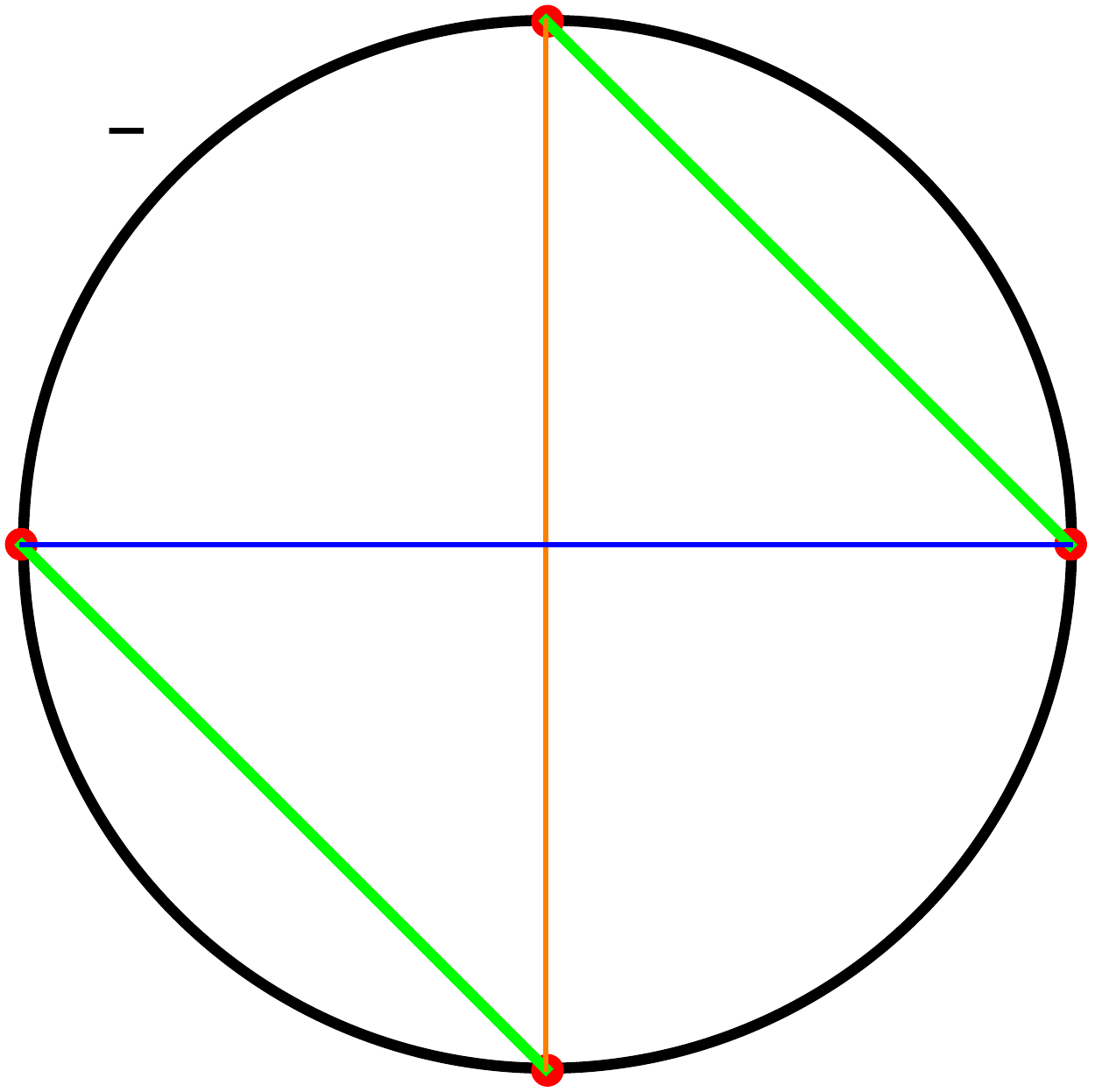}}
\put(30,0){\mbox{\scriptsize$\langle (1,0),(0),(0)\rangle$}}
\put(70,0){\mbox{\scriptsize$\langle (0),(1,0),(0)\rangle$}}
\end{picture}
\caption{\noindent The two chord diagrams and Catalan tables 
of $G^{(4)}_{p_{0}p_{1}p_{2}p_{3}}$.\label{f:G4}}
\end{figure}
\begin{figure}[!hpt]
\begin{picture}(120,73)\setlength{\unitlength}{1mm}
\put(0,41){\includegraphics[width=3cm]{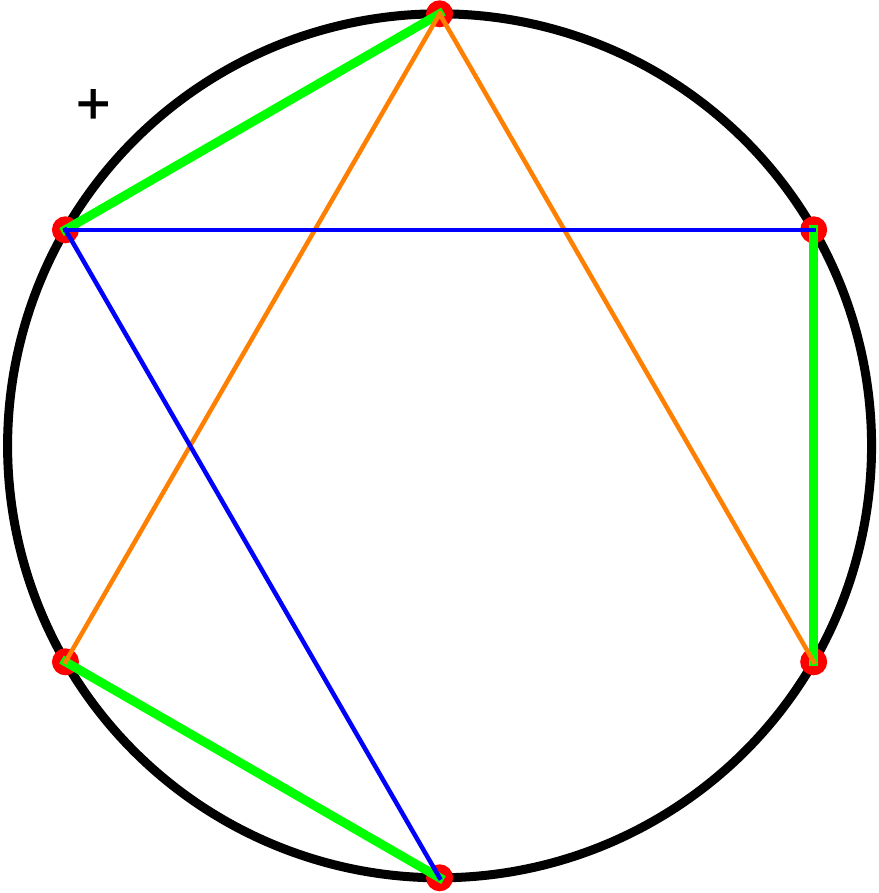}}
\put(35,41){\includegraphics[width=3cm]{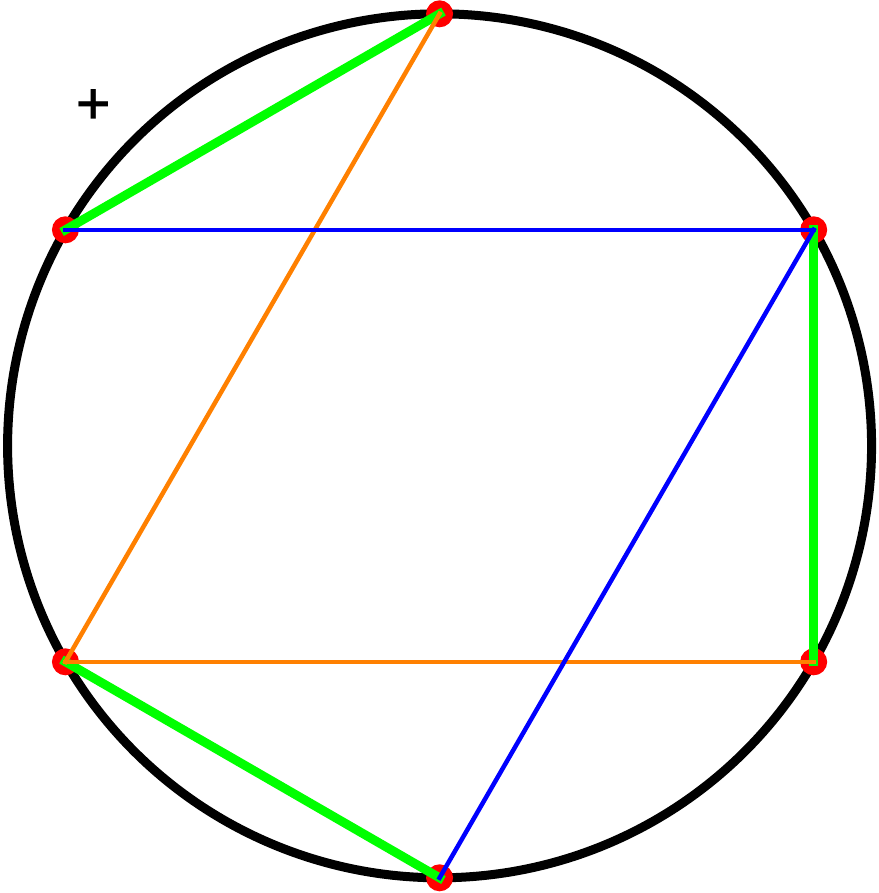}}
\put(70,41){\includegraphics[width=3cm]{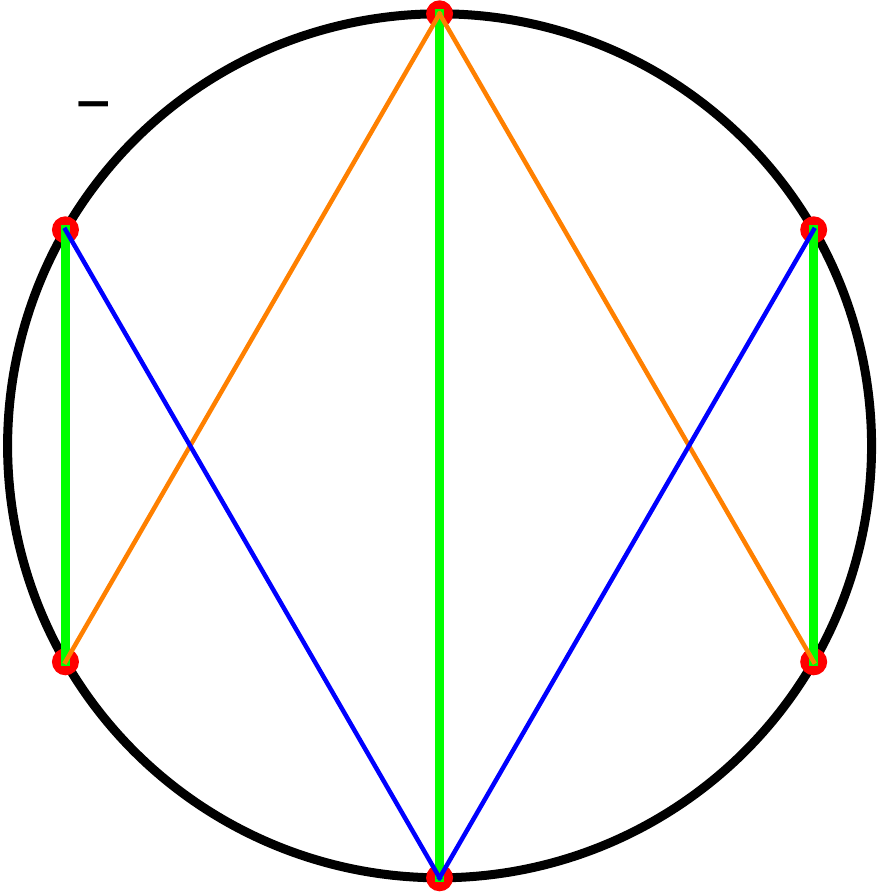}}
\put(105,41){\includegraphics[width=3cm]{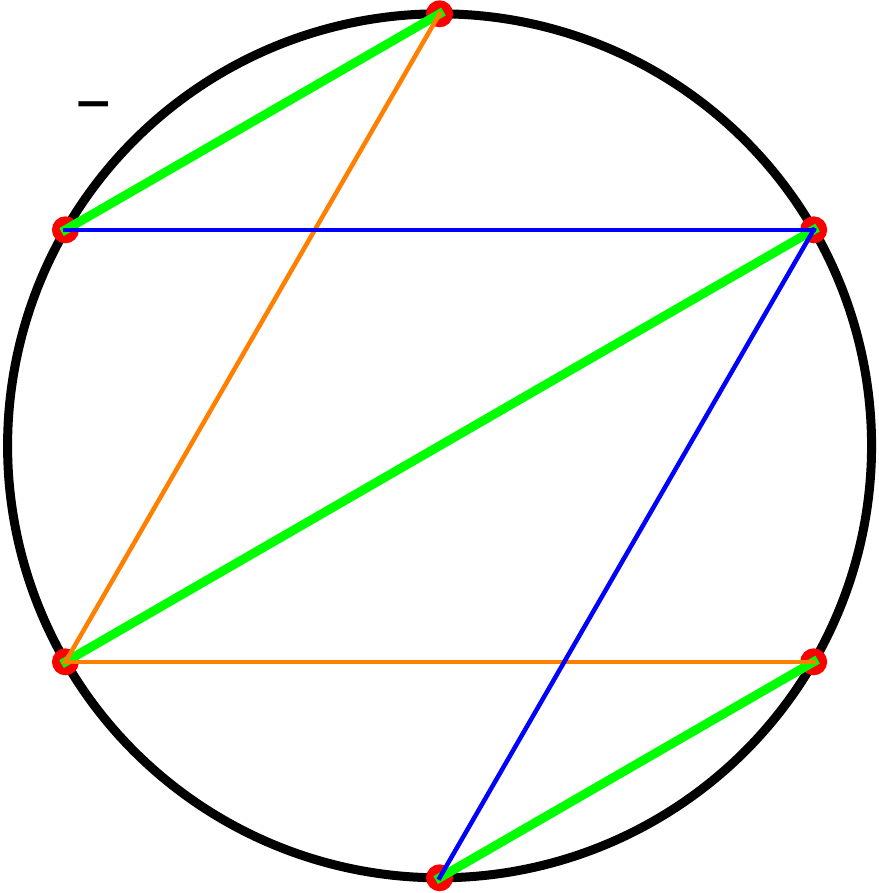}}
\put(0,38){\mbox{\scriptsize$\langle (2,0,0),(0),(0),(0)\rangle$}}
\put(35,38){\mbox{\scriptsize$\langle (1,1,0),(0),(0),(0)\rangle$}}
\put(70,38){\mbox{\scriptsize$\langle (1,0),(1,0),(0),(0)\rangle$}}
\put(105,38){\mbox{\scriptsize$\langle (1,0),(0),(1,0),(0)\rangle$}}
\put(17.5,3){\includegraphics[width=3cm]{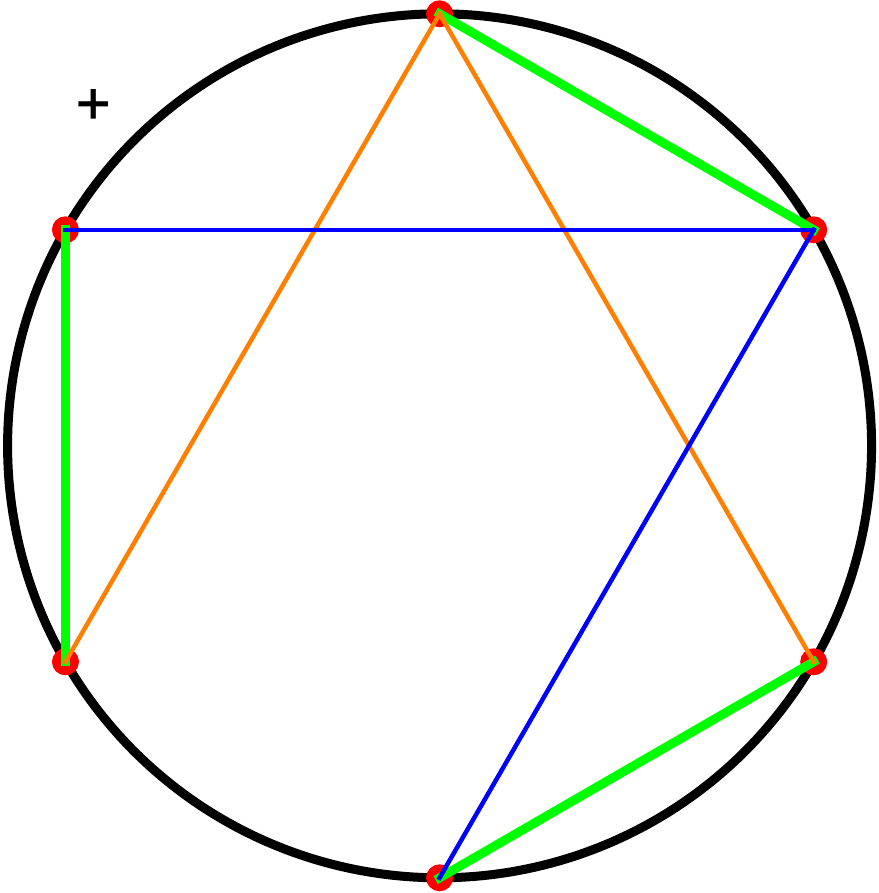}}
\put(52.5,3){\includegraphics[width=3cm]{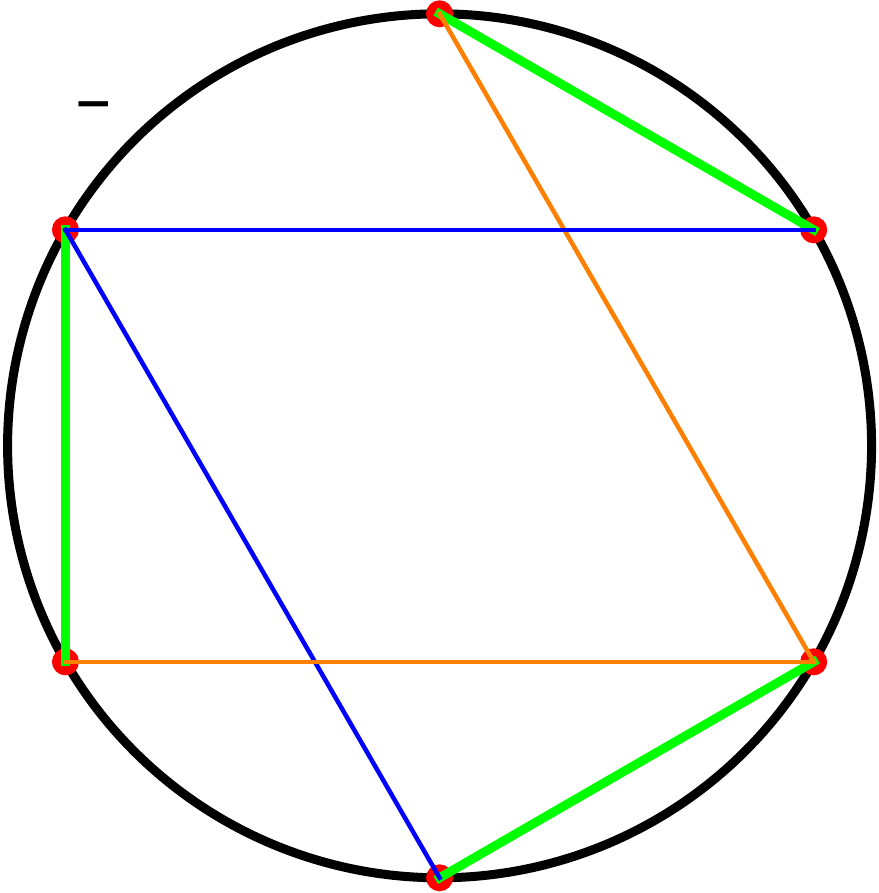}}
\put(87.5,3){\includegraphics[width=3cm]{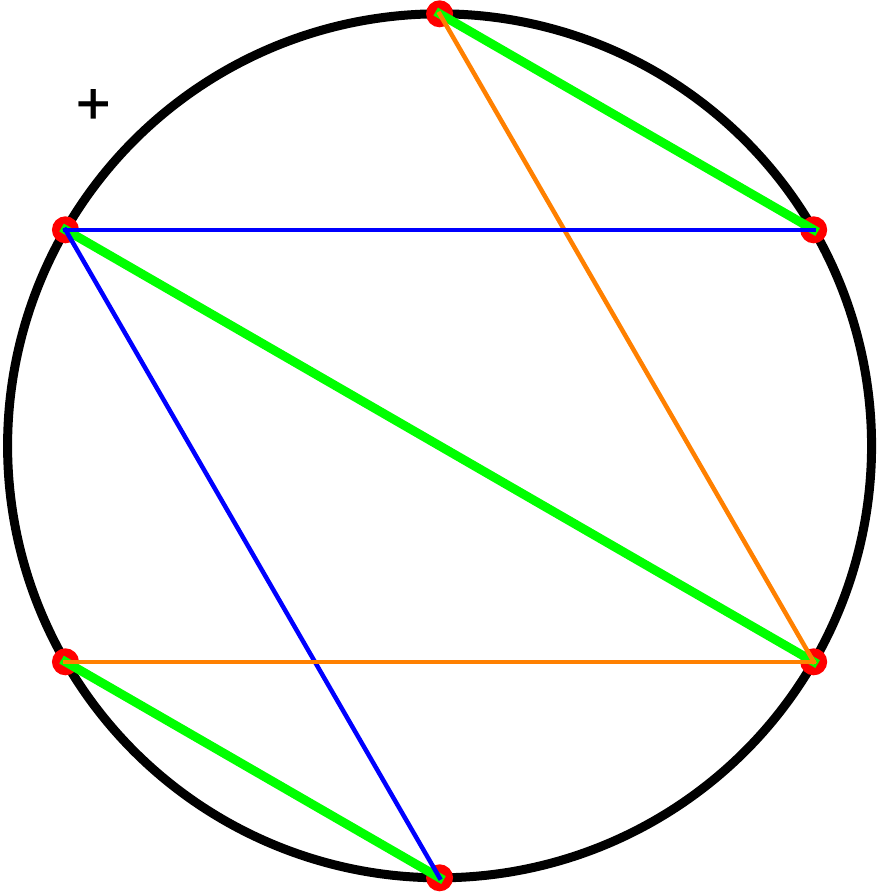}}
\put(17.5,0){\mbox{\scriptsize$\langle (0),(2,0,0),(0),(0)\rangle$}}
\put(52.5,0){\mbox{\scriptsize$\langle (0),(1,1,0),(0),(0)\rangle$}}
\put(87.5,0){\mbox{\scriptsize$\langle (0),(1,0),(1,0),(0)\rangle$}}
\end{picture}
\caption{The seven chord diagrams and Catalan tables of 
$G_{p_{0}p_{1}p_{2}p_{3}p_{4}p_{5}}$.\label{f:G6}}
\end{figure}
\begin{figure}[!hpt]
\begin{picture}(70,50)
\put(50,15){\includegraphics[width=5cm]{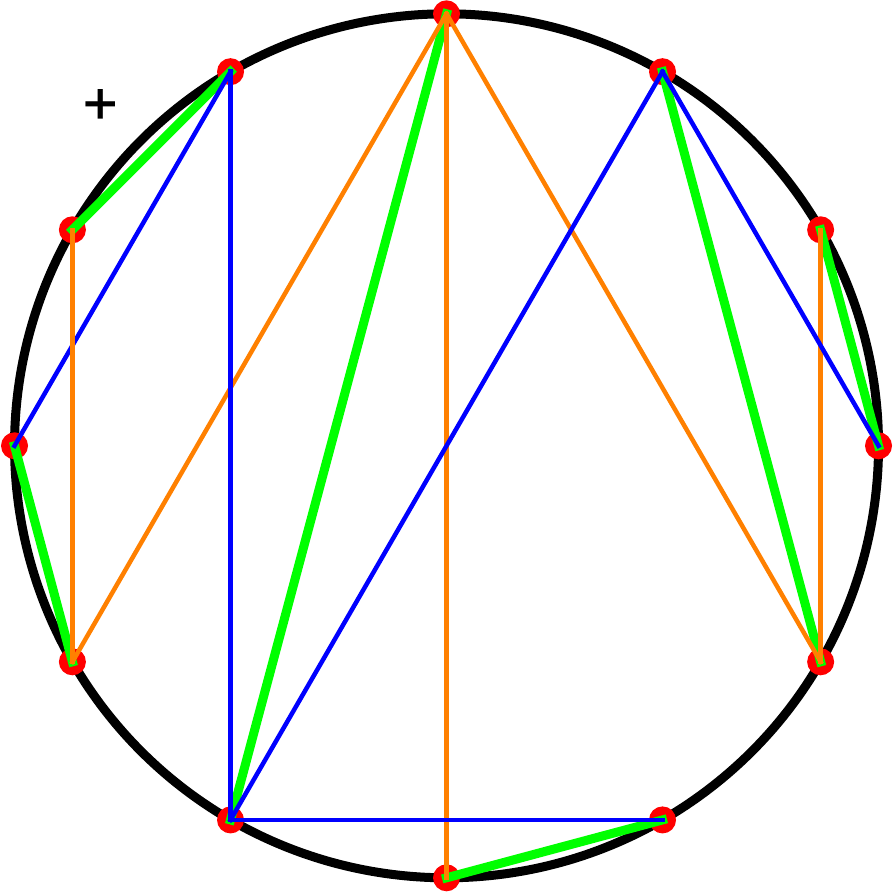}}
\put(45,5){\mbox{\footnotesize
$\langle(2,0,0),(1,1,0),(0),(0),(0),(1,0),(0)\rangle$}}
\end{picture}
\caption{A chord diagram and Catalan table contributing to 
$G_{p_0 \ldots p_{11}}$. Pocket tree and all non-trivial direct and opposite trees
have been given in \sref{Example}{ex:G12a}.\label{f:G12a}}
\end{figure}

Now that a visual way to study the recursion relation (\ref{e:rr}) has
been introduced, it is much easier to demonstrate the concepts
introduced in \sref{Secs.}{sec:CT} and \ref{sec:Ctab}.
\begin{exm}\label{exm:triangle}
  The operation $\smalllozenge\,$ is best demonstrated by an
  example:
\begin{equation*}
\langle(1,0),(0),(0)\rangle \smalllozenge 
\langle(0),(1,0),(0)\rangle=\langle(2,0,0),(0),(0),(1,0),(0)\rangle\;.
\end{equation*}
The corresponding chord diagrams are 
\[
\begin{picture}(120,32)
\put(-3,0){
\includegraphics[width=3.8cm]{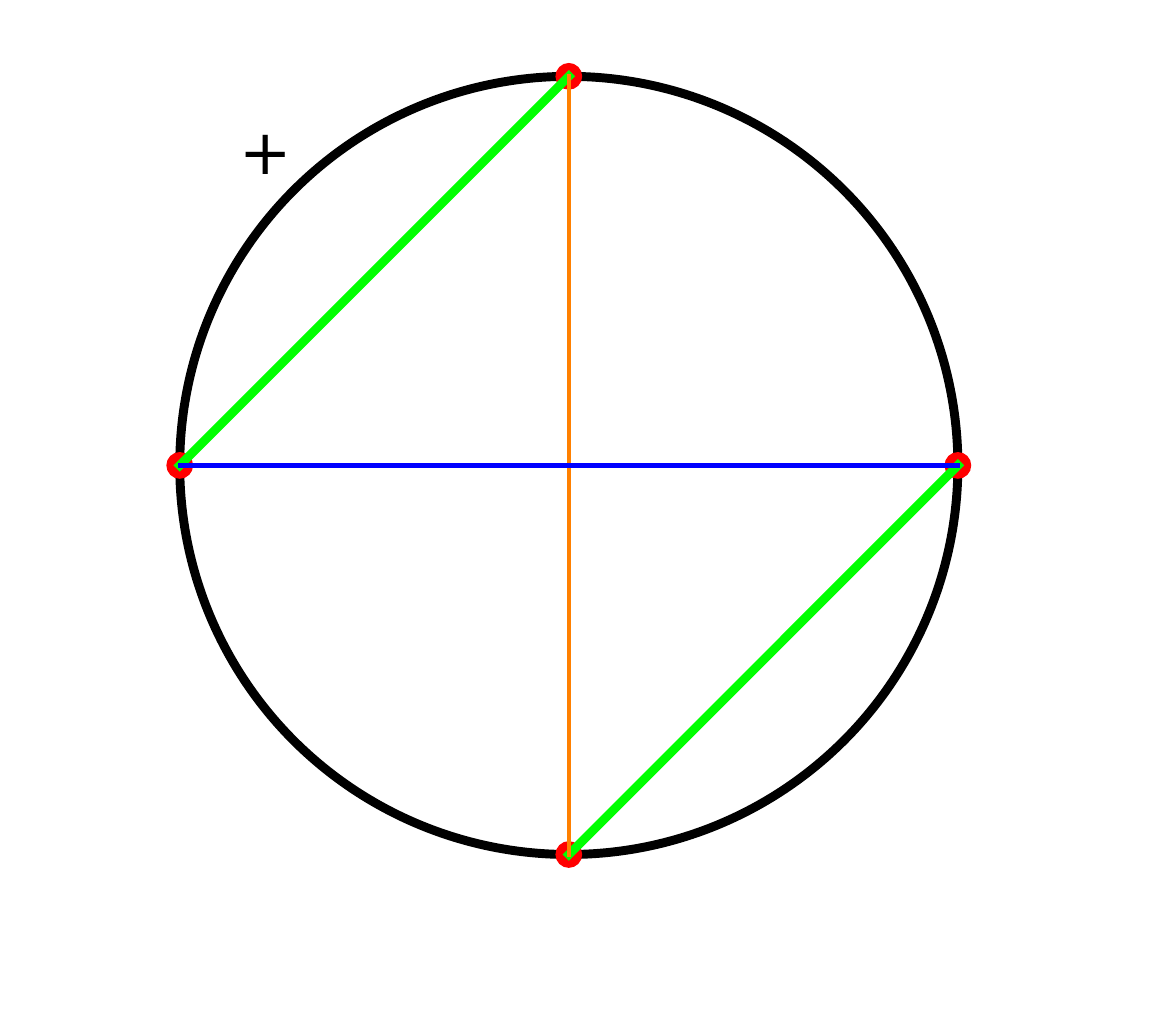}}
\put(34,16){\mbox{$\lozenge$}}
\put(40,3){\includegraphics[width=3.3cm]{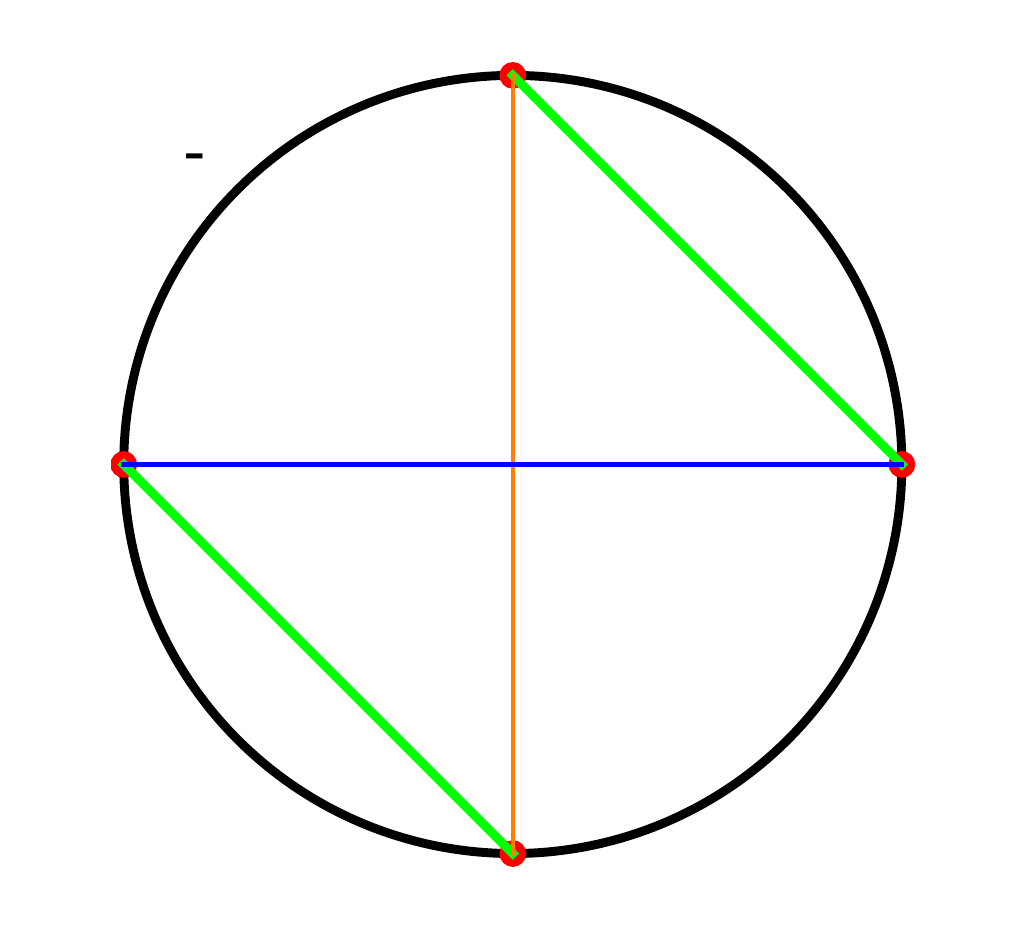}}
\put(78,16){\mbox{$=$}}
\put(87,3){\includegraphics[width=3.5cm]{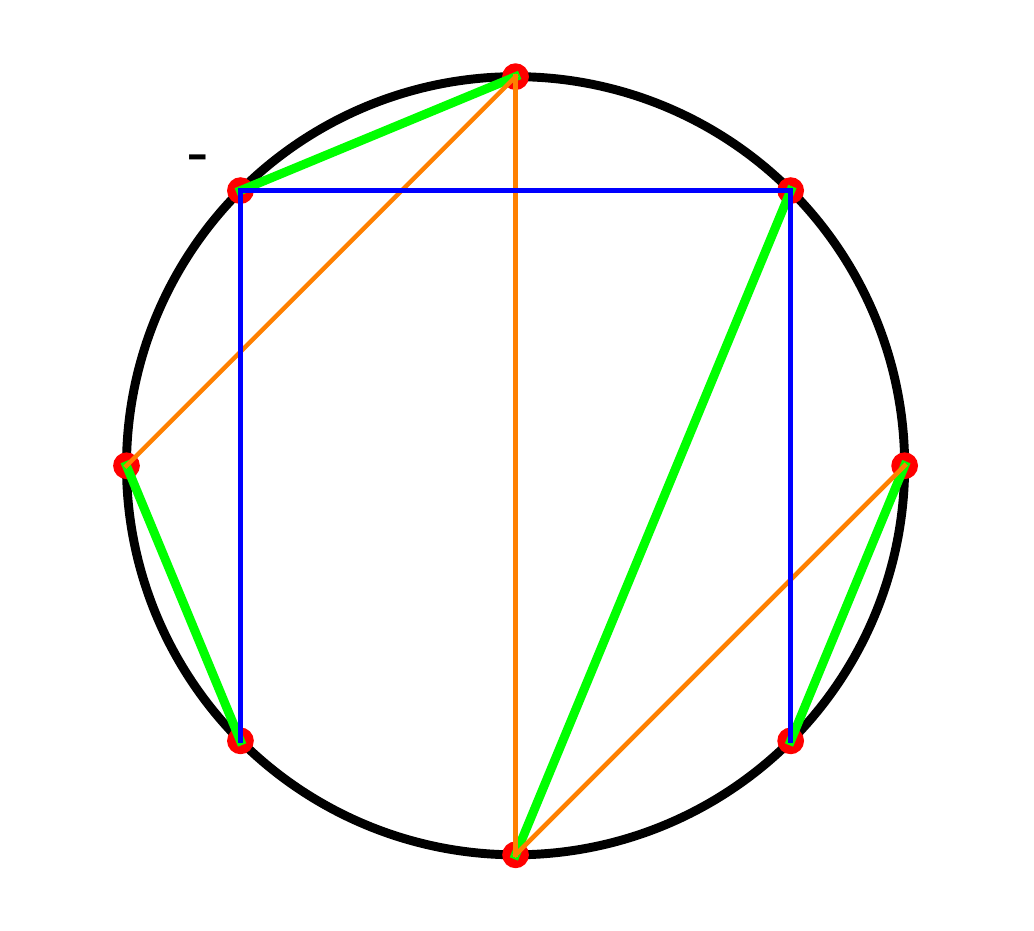}}
\put(3,0){\mbox{\scriptsize$\langle (1,0),(0),(0)\rangle$}}
\put(43,0){\mbox{\scriptsize$\langle (0),(1,0),(0)\rangle$}}
\put(85,0){\mbox{\scriptsize$\langle (2,0,0),(0),(0),(1,0),(0)\rangle$}}
\end{picture}
\]
The diagrammatic recipe is to cut both diagrams on the right side of
the designated node and paste the second into the first, where the
counterclockwise order of the nodes must be preserved. Then both
designated nodes (here $p_0,p_4$) are connected by a orange 
thread and nodes $p_{1}$
and $p_7=p_{N-1}$ by a blue thread.

To $\smalllozenge$-decompose the Catalan table $\langle
(2,0,0),(0),(0),(1,0),(0)\rangle$, we first $\circ$-factorise 
the zeroth pocket $(2,0,0)$ via (\ref{circ-factor}).
Here $\sigma_{1}\big((2,0,0)\big)=1$ and, hence, $(2,0,0)=(1,0)\circ
(0)$. Next, we evaluate the number $\hat{k}$ defined in 
(\ref{k-lozenge}). We have 
$1+|\tilde{f}^{(0)}|=1$ and
$\sigma_{1}\big((3,0,0,1,0)\big)=2$. Consequently, 
we get from \sref{Definition}{dfnt:triangle}
\begin{equation*}
\langle(2,0,0),(0),(0),(1,0),(0)\rangle=\langle (1,0),(0),(0)\rangle \smalllozenge \langle(0),(1,0),(0)\rangle\;.
\end{equation*}
\end{exm}

\begin{exm}\label{exm:box}
We employ the same example (with diagrams switched) to demonstrate 
the operation $\smallblacklozenge\,$. 
In terms of Catalan tables this becomes
\begin{equation*}
\langle(0),(1,0),(0)\rangle \smallblacklozenge 
\langle(1,0),(0),(0)\rangle=\langle(0),(2,1,0,0),(0),(0),(0)\rangle\;,
\end{equation*}
for which the chord diagrams are
\[
\begin{picture}(120,32)
\put(0,3){\includegraphics[width=3.3cm]{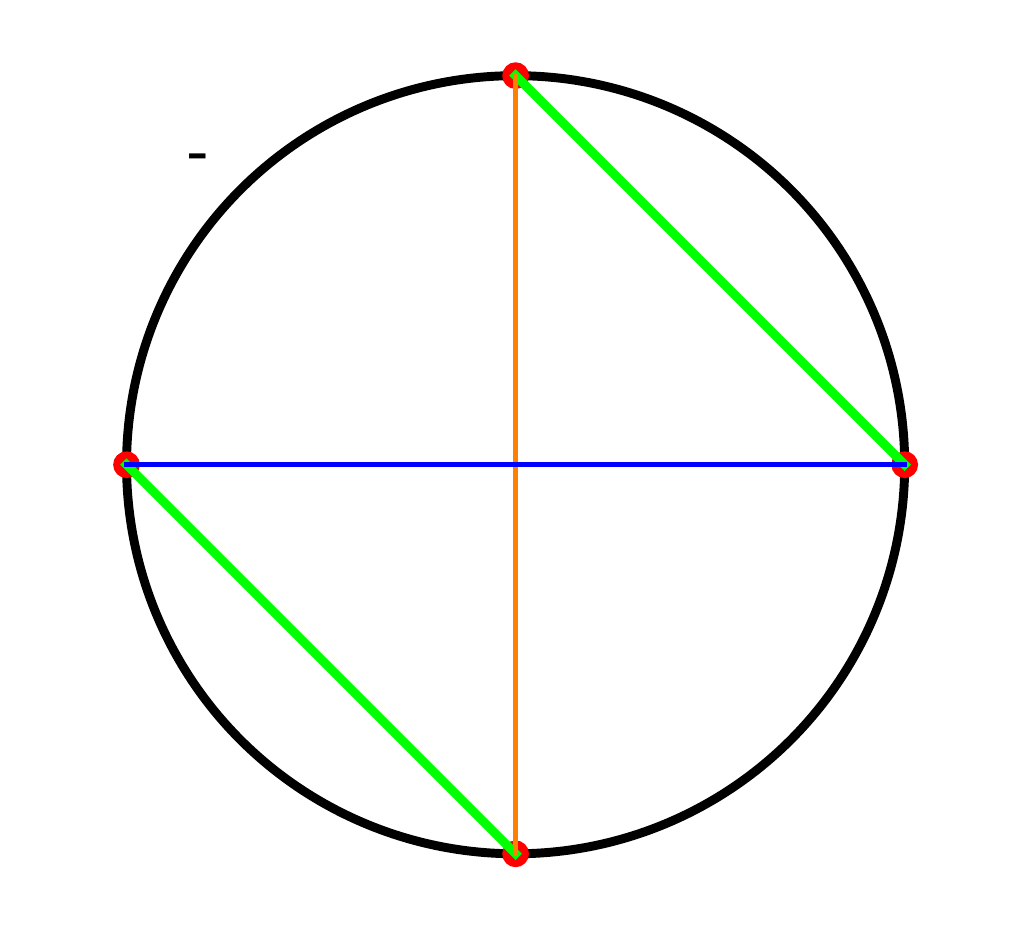}}
\put(34,16){\mbox{$\blacklozenge$}}
\put(40,3){\includegraphics[width=3.3cm]{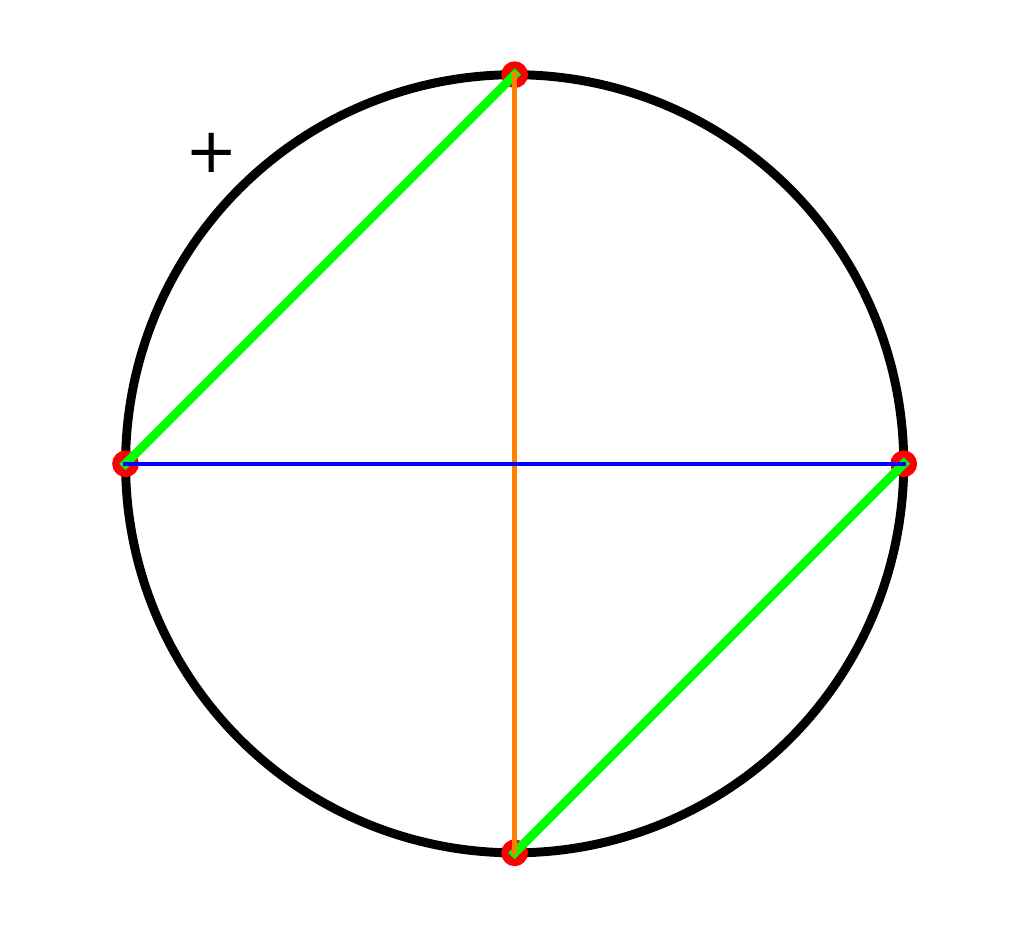}}
\put(78,16){\mbox{$=$}}
\put(87,3){\includegraphics[width=3.5cm]{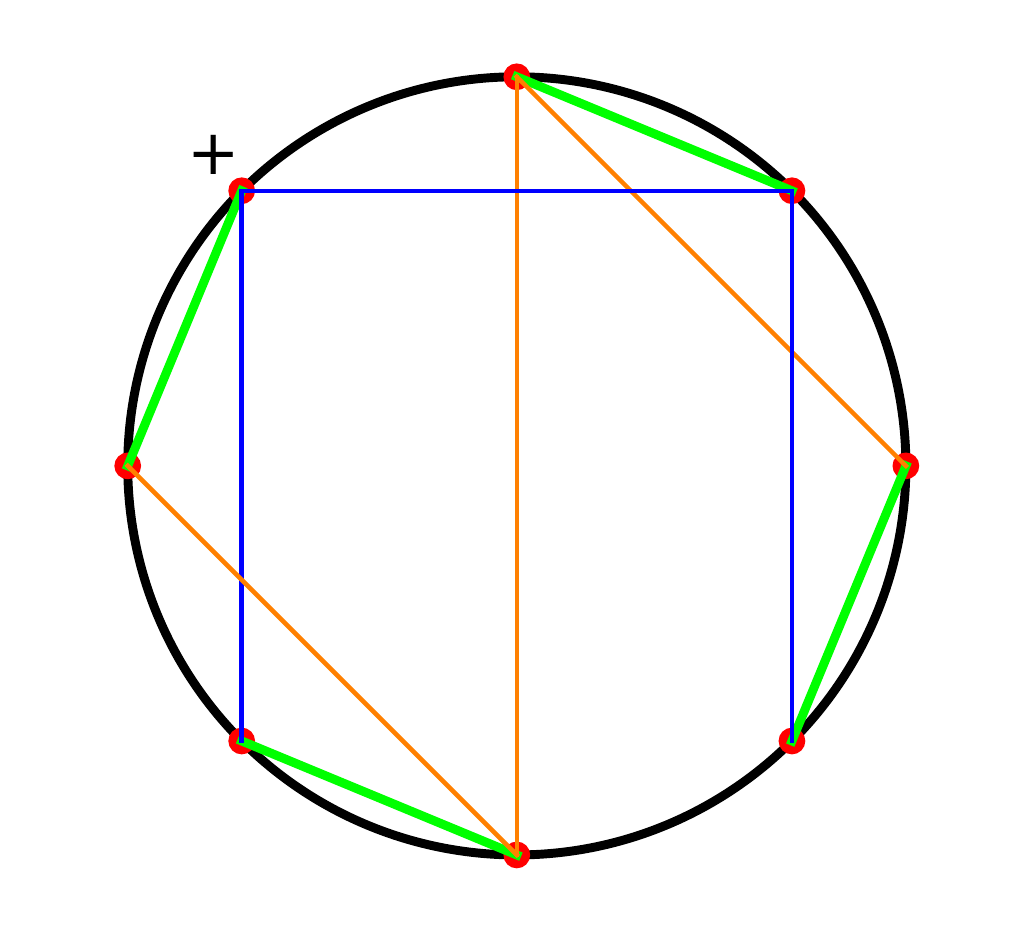}}
\put(3,0){\mbox{\scriptsize$\langle (0),(1,0),(0)\rangle$}}
\put(43,0){\mbox{\scriptsize$\langle (1,0),(0),(0)\rangle$}}
\put(85,0){\mbox{\scriptsize$\langle (0),(2,1,0,0),(0),(0),(0)\rangle$}}
\end{picture}
\]
The diagrammatic recipe is to cut the first diagram on the
left side of the designated node and the second diagram on the right
side. Then paste the second into the first,
where the counterclockwise order of the nodes must be preserved. The
threads in the second diagram switch blue and orange by doing so. Then, the
designated node of the first diagram is connected to the last node of
the second by a orange thread, the designated node of the second
diagram is connected to the last node of the first diagram by a blue
thread. 

Conversely, to $\smallblacklozenge$-decompose the Catalan table $\langle
(0),(2,1,0,0),(0),(0),(0)\rangle$, 
we first $\bullet$-factorise 
the first pocket $e^{(1)}=(2,1,0,0)$ via (\ref{bullet-factor}).
We have $e_{0}^{(1)}-1=1$, hence consider 
$\sigma_{1}\big((2,1,0,0)\big)=2$ and conclude 
$(2,1,0,0)=(1,0)\bullet (1,0)$. 
Next, we evaluate the number $\hat{l}$ in
(\ref{l-blacklozenge}). With 
$|\tilde{e}^{(0)}|+|\tilde{e}^{(1)}|+1=0+1+1=2$
the decomposition follows from $\sigma_{2}\big((1,3,0,0,0)\big)=2$ 
and yields 
\begin{equation*}
\langle (0),(2,1,0,0),(0),(0),(0)\rangle 
=\langle (0),(1,0),(0)\rangle \smallblacklozenge 
\langle(1,0),(0),(0)\rangle \;.
\end{equation*}
\end{exm}

\chapter[3-Coloured Model with Cubic Interaction]
{3-Coloured Model with Cubic Interaction\footnote{This is a summary of our paper 
\cite{Hock:2018wup}}}\label{App:3C}
This appendix will consider a different type of matrix field theory which can be seen as multi-matrix field theory,
studied in our paper \cite{Hock:2018wup}.
We assume three different matrices $\Phi^a$ with the ''colours'' $a\in\{1,2,3\}$. The interaction is a cubic interaction such that the 
1-point function does not exist. Accordingly, the action is given by
\begin{align}\label{actionmatrix}
 S[\Phi]&=V\left(\sum_{a=1}^3\sum_{n,m=0}^\mathcal{N}\frac{H_{nm}}{2}\Phi^a_{nm}\Phi^a_{mn}+\frac{\lambda}{3}
		\sum_{a,b,c=1}^3\sum_{n,m,l=0}^\mathcal{N}\sigma_{abc}\Phi^a_{nm}
		\Phi^b_{ml}\Phi^c_{ln}\right)\\
		H_{nm}&:=E_n+E_m,
\end{align}
where $(\Phi^a_{nm})$ are Hermitian matrices and $\sigma_{abc}=1$ 
for $a\neq b\neq c\neq a$ and $\sigma_{abc}=0$ else.

We demonstrate the techniques to determine
correlation functions from the partition function for a coloured model.  The partition
function $\mathcal{Z}[J]$ of the 3-colour model with
external Hermitian matrices $\left(J^a_{nm}\right)$ and
$a\in\{1,2,3\}$ is formally defined by
\begin{align}\label{Zustandssumme}
	\mathcal{Z}[J]:=&\int \left(\prod_{a=1}^{3}D\Phi^a\right)
	\exp\left(-S[\Phi]+V\sum_{a=1}^{3}\sum_{n,m=0}^\mathcal{N}J^a_{nm}\Phi^a_{mn}\right)\\
	=&K\,\exp\left(-\frac{\lambda}{3V^2}\sum_{a,b,c=1}^3\sum_{n,m,l=0}^\mathcal{N}\sigma_{abc}\frac{\partial^3}{\partial J^a_{nm}\partial J^b_{ml}\partial J^c_{ln}}
	\right)\mathcal{Z}_{free}[J],\nonumber\\
	 \mathcal{Z}_{free}[J]:=&\exp\left(\sum_{a=1}^3\sum_{n,m=0}^\mathcal{N}\frac{V}{2H_{nm}}J^a_{nm}J^a_{mn}\right),\\
	K:=&\int \left(\prod_{a=1}^{3} D\Phi^a\right)\exp\left(-\sum_{a=1}^3\sum_{n,m=0}^\mathcal{N}\frac{VH_{nm}}{2}\Phi^a_{nm}\Phi^a_{mn}\right).\nonumber
\end{align}
The logarithm of $\mathcal{Z}[J]$ will be expanded into a series of
moments with different number $b$ of boundary components. The sources are cyclic
within every boundary $\beta\in \{1,...,b\}$. For simplification we use
the notation
$\mathbb{J}_{p^\beta_1...p^\beta_{N_\beta}}^{a^\beta_1...a^\beta_{N_\beta}}:=\prod_{i=1}^{N_\beta}
J_{p_i^\beta p_{i+1}^\beta}^{a_i^\beta}$ with $N_\beta+1\equiv 1$. The
correlation functions are then defined by
\begin{align}\label{Entwicklungskoeffizienten}
	\log\frac{\mathcal{Z}[J]}{\mathcal{Z}[0]}=:
	\sum_{b=1}^\infty \sum_{ N_1,..., N_b=1}^\infty 
	\sum_{p_1^1,...,p_{N_b}^b=0}^\mathcal{N}
	\sum_{a_1^1,...,a_{N_b}^b=1}^{3}\!\!\!\!\!\!V^{2-b}\frac{G^{a_1^1...a^1_{N_1}|...
	|a_1^b...a^B_{N_b}}_{|p_1^1...p^1_{N_1}|...|p_1^b...p^b_{N_b}|}}{b!}
	\prod_{\beta=1}^b\frac{\mathbb{J}_{p^\beta_1...p^\beta_{N_\beta}}^{a^\beta_1...a^\beta_{N_\beta}}}{N_\beta}.
\end{align}

Due to the vanishing 1-point function for the 3-colour model, the partition function can be expanded 
with (\ref{Entwicklungskoeffizienten}) to 
\begin{align}\label{partitionexpansion}
	 &\frac{\mathcal{Z}[J]}{\mathcal{Z}[0]}=1+
	\sum_{a,b=1}^3\sum_{n,m=0}^\mathcal{N}\left(\frac{V}{2}G^{ab}_{|nm|}\mathbb{J}^{ab}_{nm}+
	\frac{1}{2}G^{a|b}_{|n|m|}\mathbb{J}^a_{n}\mathbb{J}^b_{m}\right)\\
	 &+\sum_{a,b,c=1}^3\sum_{n,m,l=0}^\mathcal{N}\Bigg(\frac{V}{3}G^{abc}_{|nml|}\mathbb{J}^{abc}_{nml}
	 +\frac{1}{2}G^{a|bc}_{|n|ml|}\mathbb{J}^a_{n}\mathbb{J}^{bc}_{ml}+\frac{1}{6V}G^{a|b|c}_{|n|m|l|}\mathbb{J}^a_{n}
	 \mathbb{J}^b_{m}\mathbb{J}^c_{l}\Bigg)\nonumber\\
	 &+\sum_{a,b,c,d=1}^3\sum_{n,m,l,p=0}^\mathcal{N}\Bigg(\frac{V}{4}G^{abcd}_{|nmlp|}\mathbb{J}^{abcd}_{nmlp}
	 +\frac{1}{3}G^{a|bcd}_{|n|mlp|}\mathbb{J}^a_{n}\mathbb{J}^{bcd}_{mlp}\nonumber\\
	 &+\left(\frac{1}{8}G^{ab|cd}_{|nm|lp|}+\frac{V^2}{8}G^{ab}_{|nm|}G^{cd}_{|lp|}\right)\mathbb{J}^{ab}_{nm}\mathbb{J}^{cd}_{lp}
	 +\left(\frac{1}{4V}G^{a|b|cd}_{|n|m|lp|}+\frac{V}{4}G^{a|b}_{|n|m|}G^{cd}_{|lp|}\right)\mathbb{J}^a_{n}
	 \mathbb{J}^b_{m}\mathbb{J}^{cd}_{lp}\nonumber\\
	 &+\left(\frac{1}{24V^2}G^{a|b|c|d}_{|n|m|l|p|}+\frac{1}{8}G^{a|b}_{|n|m|}G^{c|d}_{|l|p|}\right)\mathbb{J}^a_{n}
	 \mathbb{J}^b_{m}\mathbb{J}^c_{l}\mathbb{J}^d_{p}\Bigg)+\dots\nonumber\quad .
\end{align}

The calculation rule for later purpose is
\begin{align*}
	\frac{\partial}{\partial J^a_{p_1p_2}}J^b_{p_3p_4}=\delta_{ab}\delta_{p_1p_3}\delta_{p_2p_4}+J^b_{p_3p_4}\frac{\partial}{\partial J^a_{p_1p_2}}.
\end{align*}

\section{Ward-Takahashi Identity}
The Ward-Takahashi identity is obtained by the requirement of invariance of $\mathcal{Z}[J]$ under inner automorphisms
(see \sref{Proposition}{Prop:WardId}).
For a colour model, we choose a transformation as follows:
$\phi^a \mapsto (\phi^a)'=U^\dagger \phi^a U$ for $U\in \mathrm{U}(\mathcal{N})$ for \textit{one} colour $a\in \{1,2,3\}$.
The Ward-Takahashi identity following from this transformation for $E_{p_1}\neq E_{p_2}$ is given by
\begin{align}\label{Ward1}
	\sum_{m=0}^\mathcal{N}&\frac{\partial^2}{\partial J^a_{p_1m}\partial J^a_{mp_2}}\mathcal{Z}[J]+\frac{V}{E_{p_1}-E_{p_2}}
	\sum_{m=0}^\mathcal{N}\left(J^a_{p_2m}\frac{\partial}{\partial J^a_{p_1m}}-J^a_{mp_1}\frac{\partial}{\partial J^a_{mp_2}}\right)\mathcal{Z}[J]\\
	=&\frac{\lambda}{V(E_{p_1}-E_{p_2})}\sum_{m,n=0}^\mathcal{N}\sum_{b,c=1}^3\sigma_{abc}\left(\frac{\partial^3}{\partial J^a_{p_1m}\partial J^b_{mn}\partial J^c_{np_2}}
	-\frac{\partial^3}{\partial J^b_{p_1m}\partial J^c_{mn}\partial J^a_{np_2}}
	\right)\mathcal{Z}[J].\nonumber
\end{align}
The interaction terms are not invariant under the 
transformation of only one colour. However, the sum over all colours in (\ref{Ward1}) gives
\begin{align}\label{Ward2}
	\sum_{a=1}^3\sum_{m=0}^\mathcal{N}\frac{\partial^2}{\partial J^a_{p_1m}\partial J^a_{mp_2}}\mathcal{Z}[J]=
	\frac{V}{(E_{p_1}-E_{p_2})}\sum_{a=1}^3\sum_{m=0}^\mathcal{N}\left(J^a_{mp_1}\frac{\partial}{\partial J^a_{mp_2}}-J^a_{p_2m}\frac{\partial}{\partial J^a_{p_1m}}\right)\mathcal{Z}[J],
\end{align}
which has the usual form of a Ward-Takahashi identity (see \sref{Proposition}{Prop:WardId}). 
Equation (\ref{Ward2}) shows that the interaction term is invariant under the simultaneous 
transformation of all three colours. 

A more general identity plays the crucial r$\hat{\text{o}}$le (analog to \sref{Proposition}{Prop:GenWardId}):

\begin{prps}\label{WardPorp2}
	Let $E_{p_1}\neq E_{p_2}$. The generalised Ward-Takahashi identity for the 3-colour matrix model with an external field
  $E$ is  
	\begin{align*}
		\sum_{m=0}^\mathcal{N}&\frac{\partial^2}{\partial J^a_{p_1m}\partial J^b_{mp_2}}\mathcal{Z}[J]+\frac{V}{E_{p_1}-E_{p_2}}
		\sum_{m=0}^\mathcal{N}\left(J^b_{p_2m}\frac{\partial}{\partial J^a_{p_1m}}-J^a_{mp_1}\frac{\partial}{\partial J^b_{mp_2}}\right)\mathcal{Z}[J]\\
		=&\frac{\lambda}{V(E_{p_1}-E_{p_2})}\sum_{m,n=0}^\mathcal{N}\sum_{c,d=1}^3\left(\sigma_{bcd}\frac{\partial^3}{\partial J^a_{p_1m}\partial J^c_{mn}\partial J^d_{np_2}}
		-\sigma_{acd}\frac{\partial^3}{\partial J^c_{p_1m}\partial J^d_{mn}\partial J^b_{np_2}}
		\right)\mathcal{Z}[J].
	\end{align*}
	\begin{proof}
		Let $S_{int}[\Phi]= V\frac{\lambda}{3}
		\sum_{a,b,c=1}^3\sum_{n,m,l=0}^\mathcal{N}\sigma_{abc}\Phi^a_{nm}
		\Phi^b_{ml}\Phi^c_{ln}$ be the interaction term of the action.
		Direct computation gives then
		\begin{align*}
			&\frac{E_{p_1}-E_{p_2}}{V}\sum_{m=0}^\mathcal{N}\frac{\partial^2}{\partial J^a_{p_1m}\partial J^b_{mp_2}}\mathcal{Z}[J]\\
			=&\frac{1}{V}\sum_{m=0}^\mathcal{N}\frac{\partial^2}{\partial J^a_{p_1m}\partial J^b_{mp_2}}\left((E_{p_1}+E_m)-(E_m+E_{p_2})\right)\mathcal{Z}[J]\\
			=&K\sum_{m=0}^\mathcal{N}\Bigg\{\frac{\partial}{\partial J^b_{mp_2}}\exp\left(-S_{int}\left[\frac{1}{V}\frac{\partial}{\partial J}\right]\right)J^a_{mp_1}\\
			&\qquad \qquad-
			\frac{\partial}{\partial J^a_{p_1m}}\exp\left(-S_{int}\left[\frac{1}{V}\frac{\partial}{\partial J}\right]\right)J^b_{p_2m}\Bigg\}\mathcal{Z}_{free}[J]\\
			=&\sum_{m=0}^\mathcal{N}\left(J^a_{mp_1}\frac{\partial}{\partial J^b_{mp_2}}-J^b_{p_2m}\frac{\partial}{\partial J^a_{p_1m}}\right)\mathcal{Z}[J]\\
			&-\frac{\lambda}{V^2}\sum_{m,n=0}^\mathcal{N}\sum_{c,d=1}^3\left(\sigma_{acd}\frac{\partial^3}{\partial J^c_{p_1n}\partial J^d_{nm}\partial J^b_{mp_2}}-
			\sigma_{bcd}\frac{\partial^3}{\partial J^a_{p_1m}\partial J^c_{mn}\partial J^d_{np_2}}\right)\mathcal{Z}[J].
\end{align*}
We have used the second form of $\mathcal{Z}[J]$ in
(\ref{Zustandssumme}) and the Leibniz rule in the last
step. Technically, one expands the exponential function and resums after
using the Leibniz rule. Since $E_{p_1}\neq E_{p_2}$ the proof is
finished.
	\end{proof}
\end{prps}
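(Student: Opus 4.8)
The plan is to prove the identity by a direct manipulation of the functional-integral representation $\mathcal{Z}[J]=K\exp\!\big({-}S_{int}[\tfrac{1}{V}\tfrac{\partial}{\partial J}]\big)\mathcal{Z}_{free}[J]$ from \eqref{Zustandssumme}, following the single-matrix argument of \sref{Proposition}{Prop:GenWardId} but carrying along the colour structure encoded in $\sigma_{abc}$. A coarser identity, summed over colours as in \sref{Proposition}{Prop:WardId}, would follow from a one-colour unitary change of variables; it is the direct computation that produces the refined coloured version stated here.

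First I would write, using $E_{p_1}-E_{p_2}=(E_{p_1}+E_m)-(E_m+E_{p_2})=H_{p_1m}-H_{mp_2}$,
\[
\frac{E_{p_1}-E_{p_2}}{V}\sum_{m=0}^{\mathcal{N}}\frac{\partial^2\mathcal{Z}[J]}{\partial J^a_{p_1m}\partial J^b_{mp_2}}
=\frac{1}{V}\sum_{m=0}^{\mathcal{N}}\Big(H_{p_1m}-H_{mp_2}\Big)\frac{\partial^2\mathcal{Z}[J]}{\partial J^a_{p_1m}\partial J^b_{mp_2}}.
\]
Since $S_{int}[\tfrac{1}{V}\tfrac{\partial}{\partial J}]$ is a polynomial in commuting $J$-derivatives, $\tfrac{\partial}{\partial J^a_{p_1m}}$ commutes with $\exp(-S_{int}[\tfrac{1}{V}\tfrac{\partial}{\partial J}])$, so in the $H_{p_1m}$-piece the constant $H_{p_1m}$ can be pushed onto $\mathcal{Z}_{free}[J]$, where its explicit Gaussian form gives $H_{nm}\tfrac{\partial}{\partial J^a_{nm}}\mathcal{Z}_{free}[J]=VJ^a_{mn}\mathcal{Z}_{free}[J]$ (using $H_{nm}=H_{mn}$). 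That piece then becomes $\sum_m\tfrac{\partial}{\partial J^b_{mp_2}}\,K\exp(-S_{int}[\tfrac{1}{V}\tfrac{\partial}{\partial J}])\,J^a_{mp_1}\mathcal{Z}_{free}[J]$, and symmetrically the $H_{mp_2}$-piece produces the source $J^b_{p_2m}$ sitting to the right of $\exp(-S_{int}[\tfrac{1}{V}\tfrac{\partial}{\partial J}])$.

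Next I would commute the inserted source past $\exp(-S_{int}[\tfrac{1}{V}\tfrac{\partial}{\partial J}])$ using \sref{Lemma}{Lemma:Trick}, whose derivative term here is $(\tfrac{\partial S_{int}}{\partial\Phi^a_{mp_1}})[\tfrac{1}{V}\tfrac{\partial}{\partial J}]$. For the coloured cubic vertex the cyclicity of the trace gives $\tfrac{\partial S_{int}}{\partial\Phi^a_{pq}}=V\lambda\sum_{c,d=1}^{3}\sum_{n=0}^{\mathcal{N}}\sigma_{acd}\Phi^c_{qn}\Phi^d_{np}$, the factor $\tfrac{1}{3}$ being absorbed. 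Hence $J^a_{mp_1}$ produces $J^a_{mp_1}\mathcal{Z}[J]-\tfrac{\lambda}{V}\sum_{c,d}\sum_n\sigma_{acd}\tfrac{\partial^2}{\partial J^c_{p_1n}\partial J^d_{nm}}\mathcal{Z}[J]$; acting with the outstanding $\tfrac{\partial}{\partial J^b_{mp_2}}$ by the Leibniz rule, the contact term $\tfrac{\partial J^a_{mp_1}}{\partial J^b_{mp_2}}=\delta_{ab}\delta_{p_1p_2}$ drops because $E_{p_1}\neq E_{p_2}$ forces $p_1\neq p_2$. The same reduction of the $H_{mp_2}$-piece yields the mirror term carrying $\sigma_{bcd}$, and relabelling the dummy indices $m\leftrightarrow n$ in the two third-order contributions brings them to $\tfrac{\partial^3}{\partial J^a_{p_1m}\partial J^c_{mn}\partial J^d_{np_2}}$ and $\tfrac{\partial^3}{\partial J^c_{p_1m}\partial J^d_{mn}\partial J^b_{np_2}}$. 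Dividing by $(E_{p_1}-E_{p_2})/V$ and transposing the two first-order source terms to the left-hand side gives the asserted identity.

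The computation is largely routine; the points requiring care are getting the derivative $\tfrac{\partial S_{int}}{\partial\Phi^a_{pq}}$ of the \emph{coloured} cubic interaction right — both the cyclic factor of $3$ and the precise placement of the $\sigma$-tensor indices — and keeping track at each step of which $J$-derivative and which source index survives, so that the concluding relabelling $m\leftrightarrow n$ indeed delivers the clean pairing $\sigma_{bcd}(\cdots)-\sigma_{acd}(\cdots)$ on the right-hand side. The remaining ingredients — commutativity of $J$-derivatives with $\exp(-S_{int})$, the Gaussian identity $H_{nm}\tfrac{\partial}{\partial J^a_{nm}}\mathcal{Z}_{free}=VJ^a_{mn}\mathcal{Z}_{free}$, and the vanishing of the contact terms when $E_{p_1}\neq E_{p_2}$ — are exactly as in the single-matrix case.
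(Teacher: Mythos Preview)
Your proof is correct and follows exactly the paper's strategy: insert $E_{p_1}-E_{p_2}=H_{p_1m}-H_{mp_2}$, let each $H$ act on $\mathcal{Z}_{free}$ to produce a source, then commute the source past $\exp(-S_{int})$ via \sref{Lemma}{Lemma:Trick}, picking up the coloured $\partial S_{int}/\partial\Phi$ term. You are simply more explicit than the paper about the Gaussian identity, the vanishing of the contact term (which would in any case cancel between the two pieces), and the final dummy relabelling $m\leftrightarrow n$; one small slip is that the relevant derivative when commuting $J^a_{mp_1}$ is $\partial S_{int}/\partial\Phi^a_{p_1m}$ rather than $\partial\Phi^a_{mp_1}$ under the paper's convention $\Phi_{nm}\leftrightarrow\frac{1}{V}\partial_{J_{mn}}$, but your subsequent expression is the correct one.
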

\noindent
Equation (\ref{Ward1}) is a special case of
\sref{Proposition}{WardPorp2} by setting $b=a$.  The derivation of both identities
is completely different. \sref{Proposition}{WardPorp2} cannot be obtained 
by a symmetry transformation of only one colour
due to the discrete mixing of the colours if $a\neq b$.
Applying the procedure of the proof of \sref{Proposition}{WardPorp2}, it
is also possible to derive the usual Ward-Takahashi identity even in
other models.

For later purpose, we combine two identities to get a more useful expression:
\begin{lemma}\label{WarsLemma}
	Let $a$ be fixed and $E_{p_1}\neq E_{p_2}$, then it follows
	\begin{align*}
		&\sum_{b,c=1}^{3}\sum_{m=0}^\mathcal{N}\sigma_{abc}\frac{\partial^2}{\partial J^b_{p_1m}\partial J^c_{mp_2}}\mathcal{Z}[J]
		\\
		=&\frac{V}{E_{p_1}-E_{p_2}}\Bigg[\sum_{b,c=1}^{3}\sigma_{abc}
		\sum_{m=0}^\mathcal{N}\left(J^b_{mp_1}\frac{\partial}{\partial J^c_{mp_2}}-J^c_{p_2m}\frac{\partial}{\partial J^b_{p_1m}}\right)\\
		&\qquad\qquad+\frac{\lambda}{V^2}\sum_{b=1}^3\Bigg\{\sum_{m=0}^\mathcal{N}\left(\frac{\partial^3}{\partial J^b_{p_1m}\partial J^b_{mp_1}\partial J^a_{p_1p_2}}-\frac{\partial^3}{\partial J^a_{p_1p_2}\partial J^b_{p_2m}\partial J^b_{mp_2}}\right)\\
		&\qquad\qquad+\sum_{\substack{m,n=0 \\ n\neq p_1}}^{\mathcal{N}}\frac{V}{E_{p_1}-E_n}\frac{\partial}{\partial J^a_{np_2}}\left(J^b_{mp_1}\frac{\partial}{\partial J^b_{mn}}-J^b_{nm}\frac{\partial}{\partial J^b_{p_1m}}\right)\\
		&\qquad\qquad-\sum_{\substack{m,n=0 \\ n\neq p_2}}^{\mathcal{N}}\frac{V}{E_{p_2}-E_{n}}\frac{\partial}{\partial J^a_{p_1n}}\left(J^b_{p_2m}\frac{\partial}{\partial J^b_{nm}}-J^b_{mn}\frac{\partial}{\partial J^b_{mp_2}}\right)\Bigg\}\Bigg]\mathcal{Z}[J].
	\end{align*}
	\begin{proof}
Inserting \sref{Proposition}{WardPorp2} for the lhs yields 
		\begin{align}\label{GLLemmaWard}
			&\sum_{b,c=1}^{3}\sum_{m=0}^\mathcal{N}\sigma_{abc}\frac{\partial^2}{\partial J^b_{p_1m}\partial J^c_{mp_2}}\mathcal{Z}[J]\nonumber\\
			=&\frac{V}{E_{p_1}-E_{p_2}}\sum_{b,c=1}^{3}\sigma_{abc}
			\sum_{m=0}^\mathcal{N}\left(J^b_{mp_1}\frac{\partial}{\partial J^c_{mp_2}}-J^c_{p_2m}\frac{\partial}{\partial J^b_{p_1m}}\right)\mathcal{Z}[J]\\
			+&\frac{\lambda}{V(E_{p_1}-E_{p_2})}\!\!\sum_{m,n=0}^\mathcal{N}\sum_{b,c,d,e=1}^3\!\!\!\!\!\sigma_{abc}
			\!\left(\! \sigma_{cde}\frac{\partial^3}{\partial J^b_{p_1m}\partial J^d_{mn}\partial J^e_{np_2}}-\sigma_{bde}\frac{\partial^3}{\partial J^d_{p_1m}\partial J^e_{mn}\partial J^c_{np_2}}\nonumber
			\right)\!\mathcal{Z}[J].
\end{align} 
By the sum over the colours $b,c,d,e$, we obtain for the
multiplication of two $\sigma$'s with one common index
\begin{align*}
			\sigma_{abc}\sigma_{cde}=&\sigma_{abc}(\delta_{ad}\delta_{be}+\delta_{ae}\delta_{bd})\\
			\sigma_{abc}\sigma_{bde}=&\sigma_{abc}(\delta_{ad}\delta_{ce}+\delta_{ae}\delta_{cd}).
		\end{align*}
Therefore, the last line in (\ref{GLLemmaWard}) gives 
\begin{align}\label{proofGL1}
			&\frac{\lambda}{V(E_{p_1}-E_{p_2})}\sum_{m,n=0}^\mathcal{N}\sum_{b,c=1}^3\sigma_{abc}\Bigg(\frac{\partial^3}{\partial J^b_{p_1m}\partial J^a_{mn}\partial J^b_{np_2}}+\frac{\partial^3}{\partial J^b_{p_1m}\partial J^b_{mn}\partial J^a_{np_2}}\\
			&\qquad\qquad\qquad\qquad\qquad\qquad\qquad\quad-\frac{\partial^3}{\partial J^a_{p_1m}\partial J^c_{mn}\partial J^c_{np_2}}-\frac{\partial^3}{\partial J^c_{p_1m}\partial J^a_{mn}\partial J^c_{np_2}}\Bigg)\mathcal{Z}[J].\nonumber
\end{align}
The first and the last term in parentheses vanish because of the
total symmetry of $\sigma_{abc}$. Adding
$0=\left(\frac{\partial^3}{\partial J^a_{p_1m}\partial
    J^a_{mn}\partial J^a_{np_2}}-\frac{\partial^3}{\partial
    J^a_{p_1m}\partial J^a_{mn}\partial
    J^a_{np_2}}\right)\mathcal{Z}[J]$ and renaming the indices,
(\ref{proofGL1}) can be rewritten to
\begin{align*}
			\frac{\lambda}{V(E_{p_1}-E_{p_2})}\sum_{m,n=0}^\mathcal{N}\sum_{b=1}^3\left(\frac{\partial^3}{\partial J^b_{p_1m}\partial J^b_{mn}\partial J^a_{np_2}}-\frac{\partial^3}{\partial J^a_{p_1m}\partial J^b_{mn}\partial J^b_{np_2}}\right)\mathcal{Z}[J].
		\end{align*}
		Inserting (\ref{Ward2}) for $E_n\neq E_{p_1}$ 
		in the first and $E_m\neq E_{p_2}$ in the second term finally gives after renaming indices
		\begin{align}\label{GLLemmaWard2}
			\frac{\lambda}{V(E_{p_1}-E_{p_2})}\sum_{b=1}^3&\Bigg\{\sum_{m=0}^\mathcal{N}\left(\frac{\partial^3}{\partial J^b_{p_1m}\partial J^b_{mp_1}\partial J^a_{p_1p_2}}-\frac{\partial^3}{\partial J^a_{p_1p_2}\partial J^b_{p_2m}\partial J^b_{mp_2}}\right)\nonumber\\*
			&+\sum_{\substack{m,n=0 \\ n\neq p_1}}^{\mathcal{N}}\frac{V}{E_{p_1}-E_n}\frac{\partial}{\partial J^a_{np_2}}\left(J^b_{mp_1}\frac{\partial}{\partial J^b_{mn}}-J^b_{nm}\frac{\partial}{\partial J^b_{p_1m}}\right)\\*
			&-\sum_{\substack{m,n=0 \\ n\neq p_2}}^{\mathcal{N}}\frac{V}{E_{p_2}-E_{n}}\frac{\partial}{\partial J^a_{p_1n}}\left(J^b_{p_2m}\frac{\partial}{\partial J^b_{nm}}-J^b_{mn}\frac{\partial}{\partial J^b_{mp_2}}\right)\Bigg\}\mathcal{Z}[J]\nonumber.
		\end{align}
		The identity follows by combining (\ref{GLLemmaWard}) and (\ref{GLLemmaWard2}).
	\end{proof}
\end{lemma}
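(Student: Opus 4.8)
The plan is to run the left-hand side through the two Ward--Takahashi identities already established, in the order: first the \emph{generalised} one of \sref{Proposition}{WardPorp2}, then the \emph{ordinary} one \eqref{Ward2}. Concretely, I would apply \sref{Proposition}{WardPorp2} with its two ``moving'' colours put in place of $a,b$ there, and contract the result against $\sigma_{abc}$ while summing over those colours. This step is immediate and produces exactly the first (affine $J\partial$) term on the right-hand side, together with a remainder consisting of triple $J$-derivatives on $\mathcal{Z}$ weighted by products $\sigma_{abc}\sigma_{cde}$ and $\sigma_{abc}\sigma_{bde}$ of two $\sigma$-tensors that share one colour index, all divided by $E_{p_1}-E_{p_2}$.

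The combinatorial heart is to contract those products of $\sigma$-tensors. Since $\sigma_{abc}=1$ precisely when $\{a,b,c\}=\{1,2,3\}$, contracting over a shared index gives $\sigma_{abc}\sigma_{cde}=\sigma_{abc}(\delta_{ad}\delta_{be}+\delta_{ae}\delta_{bd})$ and $\sigma_{abc}\sigma_{bde}=\sigma_{abc}(\delta_{ad}\delta_{ce}+\delta_{ae}\delta_{cd})$. Substituting these collapses the colour sums to single colours; two of the four resulting triple-derivative structures vanish identically because they are $\sigma$-weighted operators symmetric in the two colours whose derivatives sit in positions forced to be antisymmetric by the remaining $\sigma$. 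After inserting the harmless zero $\big(\partial^3_{J^a J^a J^a}-\partial^3_{J^a J^a J^a}\big)\mathcal{Z}$ to complete the colour sum to a full $\sum_{b=1}^{3}$, and relabelling dummy indices, the remainder takes the compact form $\tfrac{\lambda}{V(E_{p_1}-E_{p_2})}\sum_{b}\sum_{m,n}\big(\partial^3_{J^b_{p_1m}J^b_{mn}J^a_{np_2}}-\partial^3_{J^a_{p_1m}J^b_{mn}J^b_{np_2}}\big)\mathcal{Z}$.

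Finally I would process this remainder with the ordinary Ward--Takahashi identity \eqref{Ward2}. In the first triple derivative the inner block $\sum_b\sum_m\partial^2_{J^b_{p_1m}J^b_{mn}}$, acting on $\partial_{J^a_{np_2}}\mathcal{Z}$, is exactly the left-hand side of \eqref{Ward2} with $p_2$ replaced by the summation index $n$; this substitution is legitimate only for $n\neq p_1$, so I split off the $n=p_1$ contribution, which survives as the boundary triple derivative $\sum_m\partial^3_{J^b_{p_1m}J^b_{mp_1}J^a_{p_1p_2}}\mathcal{Z}$, while for $n\neq p_1$ one gets the factor $\tfrac{V}{E_{p_1}-E_n}$ multiplying an affine $J\partial$-operator. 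The mirror manipulation on the second triple derivative, applying \eqref{Ward2} with $p_1$ replaced by the appropriate external index and isolating the degenerate value (giving $-\sum_m\partial^3_{J^a_{p_1p_2}J^b_{p_2m}J^b_{mp_2}}\mathcal{Z}$ and a factor $\tfrac{V}{E_{p_2}-E_n}$), yields the remaining two terms. Collecting all contributions reproduces the claimed identity.

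I expect no conceptual difficulty; the main obstacle is purely bookkeeping. One must keep the overall prefactor $1/(E_{p_1}-E_{p_2})$ consistent through the two nested applications, track the signs carefully through the $\sigma$-symmetry cancellations and the index renamings, and — most delicately — correctly extract and place the two degenerate pieces ($n=p_1$ and the analogue in the second term) that are produced because \eqref{Ward2} requires distinct eigenvalues. Getting those degenerate boundary terms exactly in the form stated in \sref{Lemma}{WarsLemma}, with the right colours and index positions, is the step where errors are easiest to make.
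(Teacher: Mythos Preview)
Your proposal is correct and follows essentially the same route as the paper: apply \sref{Proposition}{WardPorp2} contracted with $\sigma_{abc}$, reduce the $\sigma\sigma$ products via the Kronecker-delta identities, cancel two of the four resulting triple derivatives by the total symmetry of $\sigma$, add the $a$-diagonal zero to complete the colour sum, and then feed the remaining two triple derivatives through the ordinary identity~\eqref{Ward2} while splitting off the degenerate-index boundary terms. The only point to watch is the index relabelling in the second application of~\eqref{Ward2} (the paper uses $E_m\neq E_{p_2}$ before renaming to the $n$-variable appearing in the final formula), which you have already flagged as the delicate bookkeeping step.
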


\section{Schwinger-Dyson Equations for $b=1$}
In this section we derive the SDEs with the help
of Ward-Takahashi identity.
\begin{prps}\label{Prop2Punkt}
 The SDE for the 2-point function in the 3-colour matrix model with an external field
 $E$ is for $E_{p_1}\neq E_{p_2}$ given by
 \begin{align*}
  &G^{aa}_{|p_1p_2|}=\frac{1}{H_{p_1p_2}}+\frac{\lambda^2}{(E^2_{p_1}-E^2_{p_2})V}\\
  &\times\Bigg[\sum_{m=0}^\mathcal{N}\sum_{b=1}^3\left(G^{aa}_{|p_1p_2|}\left(
 G^{bb}_{|p_2m|}
-G^{bb}_{|p_1m|}\right)+\frac{1}{V}\left(G^{aabb}_{|p_2p_1p_2m|}-G^{aabb}_{|p_1p_2p_1m|}\right)\right)\\
&+\sum_{b=1}^3\frac{1}{V^2}\bigg(\sum_{m=0}^\mathcal{N}\left(G^{aa|bb}_{|p_2p_1|p_2m|}-G^{aa|bb}_{|p_1p_2|p_1m|}\right)
+\left(G^{b|baa}_{|p_2|p_2p_2p_1|}-G^{b|baa}_{|p_1|p_1p_1p_2|}\right)\bigg)\\
&+\sum_{b=1}^3\left(\frac{1}{V^3}\left(G^{b|b|aa}_{|p_2|p_2|p_2p_1|}-G^{b|b|aa}_{|p_1|p_1|p_1p_2|}\right)
+\frac{1}{V}G^{aa}_{|p_1p_2|}\left(G^{b|b}_{|p_2|p_2|}-G^{b|b}_{|p_1|p_1|}\right)\right)\\
&+\sum_{\substack{m=0\\ m\neq p_2}}^\mathcal{N}\frac{G^{aa}_{|p_1m|}-G^{aa}_{|p_1p_2|}}{E_{p_2}-E_m}-
\sum_{\substack{m=0\\ m\neq p_1}}^\mathcal{N}
\frac{G^{aa}_{|p_1p_2|}-G^{aa}_{|p_2m|}}{E_{m}-E_{p_1}}+\frac{1}{V}\frac{G^{a|a}_{|p_1|p_1|}-2G^{a|a}_{|p_1|p_2|}+G^{a|a}_{|p_2|p_2|}}{E_{p_2}-E_{p_1}}\Bigg].
 \end{align*}
\begin{proof}
 Assuming $E_{p_1}\neq E_{p_2}$ the 2-point function is given via 
definition (\ref{Entwicklungskoeffizienten}) and expansion
 (\ref{partitionexpansion}). Using (\ref{Zustandssumme}) leads to
 \begin{align*}
  G^{aa}_{|p_1p_2|}
  =&\frac{1}{V}\frac{\partial^2}{\partial J^a_{p_1p_2}\partial J^a_{p_2p_1}}\mathrm{log}\mathcal{Z}[J]\Big\vert_{J=0}
  =\frac{1}{V\mathcal{Z}[0]}\frac{\partial^2}{\partial J^{a}_{p_1p_2}\partial J^{a}_{p_2p_1}}\mathcal{Z}[J]\Big|_{J=0}\\
  =&\frac{K}{H_{p_1p_2}\mathcal{Z}[0]}\frac{\partial}{\partial J^{a}_{p_2p_1}}\exp\left(-S_{int}\left[\frac{1}{V}\frac{\partial}{\partial J}\right]\right)
  J^{a}_{p_2p_1}\mathcal{Z}_{free}[J]\Big|_{J=0}\\
  =&\frac{1}{H_{p_1p_2}}-\frac{\lambda }{H_{p_1p_2}\mathcal{Z}[0]V^2}\frac{\partial}{\partial J^{a}_{p_2p_1}}
 \sum_{b,c=1}^3\sum_{m=0}^\mathcal{N}\sigma_{abc}\frac{\partial^2}{\partial J^b_{p_1m}\partial J^c_{mp_2}}\mathcal{Z}[J]\Big|_{J=0}.
 \end{align*}
 Inserting the expansion of (\ref{partitionexpansion}) would give the
 SDE between the 2-point and 3-point function.
 At first sight, the application of 
 \sref{Lemma}{WarsLemma} seems to make the equation more complicated.  However,
 it yields a better behaviour in the $V$-expansion. The
 first term on the rhs of the equation of \sref{Lemma}{WarsLemma}
 vanishes by setting $J$ to zero.  Therefore, we obtain
 \begin{align*}
  =&\frac{1}{H_{p_1p_2}}-\frac{\lambda^2 }{(E^2_{p_1}-E^2_{p_2})\mathcal{Z}[0]V^3}\\
  &\times\Bigg\{
  \sum_{b=1}^3\sum_{m=0}^\mathcal{N}\left(\frac{\partial^4}{\partial J^b_{p_1m}\partial J^b_{mp_1}\partial J^a_{p_1p_2}\partial J^a_{p_2p_1}}-
  \frac{\partial^4}{\partial J^a_{p_2p_1}\partial J^a_{p_1p_2}\partial J^b_{p_2m}\partial J^b_{mp_2}}\right)\mathcal{Z}[J]\Big|_{J=0}\\
  &\qquad+\sum_{\substack{m,n=0 \\ n\neq p_1}}^{\mathcal{N}}\frac{V}{E_{p_1}-E_n}\frac{\partial^2}{\partial J^a_{p_2p_1}\partial J^a_{np_2}}\left(J^a_{mp_1}\frac{\partial}{\partial J^a_{mn}}-
  J^a_{nm}\frac{\partial}{\partial J^a_{p_1m}}\right)\mathcal{Z}[J]\Big|_{J=0}\\
  &\qquad-\sum_{\substack{m,n=0 \\ n\neq p_2}}^{\mathcal{N}}\frac{V}{E_{p_2}-E_{n}}\frac{\partial^2}{\partial J^a_{p_2p_1}\partial J^a_{p_1n}}\left(J^a_{p_2m}\frac{\partial}{\partial J^a_{nm}}
  -J^a_{mn}\frac{\partial}{\partial J^a_{mp_2}}\right)
  \mathcal{Z}[J]\Big|_{J=0}\Bigg\},
 \end{align*}
  where $H_{p_1p_2}(E_{p_1}-E_{p_2})=(E^2_{p_1}-E^2_{p_2})$ has been used and 
  the fact that in the last two lines only colour $a$ survives. 
  By taking $E_{p_1}\neq E_{p_2}$ into account and $J=0$ gives with the Leibniz rule
  \begin{align*}
    =&\frac{1}{H_{p_1p_2}}-\frac{\lambda^2 }{(E^2_{p_1}-E^2_{p_2})\mathcal{Z}[0]V^3}\\
  &\times\Bigg\{
  \sum_{b=1}^3\sum_{m=0}^\mathcal{N}\left(\frac{\partial^4}{\partial J^b_{p_1m}\partial J^b_{mp_1}\partial J^a_{p_1p_2}\partial J^a_{p_2p_1}}-
  \frac{\partial^4}{\partial J^a_{p_2p_1}\partial J^a_{p_1p_2}\partial J^b_{p_2m}\partial J^b_{mp_2}}\right)\mathcal{Z}[J]\Big|_{J=0}\\
  &\qquad+\sum_{\substack{m=0 \\ m\neq p_1}}^{\mathcal{N}}\frac{V}{E_{p_1}-E_m}
  \left(\frac{\partial^2}{\partial J^a_{mp_2}\partial J^a_{p_2m}}-
  \frac{\partial^2}{\partial J^a_{p_1p_2}\partial J^a_{p_2p_1}}\right)\mathcal{Z}[J]\Big|_{J=0}\\
  &\qquad-\sum_{\substack{m=0 \\ m\neq p_2}}^{\mathcal{N}}\frac{V}{E_{p_2}-E_{m}}
  \left(\frac{\partial^2}{\partial J^a_{mp_1}\partial J^a_{p_1m}}-
  \frac{\partial^2}{\partial J^a_{p_1p_2}\partial J^a_{p_2p_1}}\right)\mathcal{Z}[J]\Big|_{J=0}\\
  &\qquad+ \frac{V}{E_{p_2}-E_{p_1}}\left( \frac{\partial^2}{\partial J^a_{p_2p_2}\partial J^a_{p_1p_1}}-
  \frac{\partial^2}{\partial J^a_{p_2p_2}\partial J^a_{p_1p_1}}\right)\mathcal{Z}[J]\Big|_{J=0}\Bigg\}.
  \end{align*}
  The first line generates for $E_m\neq E_{p_1}$ and $E_m\neq E_{p_2}$ either a 
  4-point function with one boundary or two 2-point functions with one boundary, respectively.
  Functions with higher boundaries $b\geq 2$ appear in case of $E_m=E_{p_1}$ or $E_m=E_{p_2}$ . All terms are found by comparing with the expansion
  (\ref{partitionexpansion}).
\end{proof}
\end{prps}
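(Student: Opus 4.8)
The plan is to follow exactly the pattern already used successfully for single-matrix models (\sref{Proposition}{Prop:Quart2P} and its cousins), but to exploit the combined Ward-Takahashi identity of \sref{Lemma}{WarsLemma} in the decisive place. First I would write the 2-point function $G^{aa}_{|p_1p_2|}$, for $E_{p_1}\neq E_{p_2}$, as $\frac{1}{V}\partial^2_{J^a_{p_1p_2}J^a_{p_2p_1}}\log\mathcal{Z}[J]\big|_{J=0} = \frac{1}{V\mathcal{Z}[0]}\partial^2_{J^a_{p_1p_2}J^a_{p_2p_1}}\mathcal{Z}[J]\big|_{J=0}$, using the fact that the 1-point function vanishes so $\partial\log\mathcal Z$ and $\frac{1}{\mathcal Z}\partial\mathcal Z$ agree here. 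Then I would use the second form of $\mathcal{Z}[J]$ in \eqref{Zustandssumme}: act with $\partial_{J^a_{p_2p_1}}$ on $\mathcal Z_{free}$, which produces a factor $\frac{J^a_{p_2p_1}}{H_{p_1p_2}}$ plus (after the interaction exponential is commuted past it via the Leibniz-rule trick, i.e.\ the method of \sref{Lemma}{Lemma:Trick}) a cubic $J$-derivative term $-\frac{\lambda}{H_{p_1p_2}V^2}\sum_{b,c}\sum_m \sigma_{abc}\partial^3_{J^b_{p_1m}J^c_{mp_2}}$ acting inside. The $\frac{J^a_{p_2p_1}}{H_{p_1p_2}}$ term yields the inhomogeneous $\frac{1}{H_{p_1p_2}}$ after the remaining $\partial_{J^a_{p_2p_1}}$.

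The key move is then \emph{not} to expand the cubic derivative directly (which would couple to the 3-point function), but to first substitute the identity of \sref{Lemma}{WarsLemma} for $\sum_{b,c}\sum_m\sigma_{abc}\partial^2_{J^b_{p_1m}J^c_{mp_2}}\mathcal Z[J]$, with $a$ fixed. At $J=0$ the first (linear-in-$J$) term on the right-hand side of that lemma drops out; what survives after applying the outstanding $\frac{1}{V}\partial_{J^a_{p_2p_1}}$ and setting $J=0$ is a sum of a quartic $J$-derivative (two terms, colour structure $bb\,aa$, symmetrised over the two edges $p_1p_2$ and $p_2p_1$), plus two sums over $n\neq p_1$ and $n\neq p_2$ coming from the $\frac{V}{E_{p_1}-E_n}$ and $\frac{V}{E_{p_2}-E_n}$ pieces, where $\partial_{J^a_{p_2p_1}}$ is forced to hit the explicit $J^b$ inside (which collapses $b=a$ and fixes the second index), producing difference quotients of 2-point functions, plus a boundary term at $n=p_1$ resp.\ $n=p_2$ which assembles into the $\frac{G^{a|a}_{|p_1|p_1|}-2G^{a|a}_{|p_1|p_2|}+G^{a|a}_{|p_2|p_2|}}{E_{p_2}-E_{p_1}}$ contribution (one has to be careful that the two excluded-index sums, extended to include $n=p_1$ resp.\ $n=p_2$ via \eqref{Ward2}, generate exactly these two extra $n=p_j$ terms; this is the single place requiring genuine bookkeeping).

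Finally I would translate the surviving $J$-derivatives of $\mathcal Z[J]$ at $J=0$ into correlation functions by comparison with the expansion \eqref{partitionexpansion}: the quartic derivative $\partial^4_{J^b_{p_1m}J^b_{mp_1}J^a_{p_1p_2}J^a_{p_2p_1}}$ decomposes, via $\log\mathcal Z$ versus $\mathcal Z$, into a connected $(4)$-point $G^{aabb}_{|p_2p_1p_2m|}$, a disconnected product $G^{aa}_{|p_1p_2|}G^{bb}_{|p_1m|}$, the mixed-boundary $G^{aa|bb}_{|p_2p_1|p_2m|}$, $G^{b|baa}$, $G^{b|b|aa}$ and $G^{aa}_{|p_1p_2|}G^{b|b}_{|p_1|p_1|}$ contributions, each with the correct power of $V$ and combinatorial factor read off from \eqref{partitionexpansion}; the $m$-sum antisymmetrised between $p_1$ and $p_2$ and the overall factor $-\lambda^2/((E_{p_1}^2-E_{p_2}^2)V)$ (note $H_{p_1p_2}(E_{p_1}-E_{p_2})=E_{p_1}^2-E_{p_2}^2$) then give precisely the stated formula. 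The main obstacle is the careful tracking of the colour sums and of the excluded-index versus boundary-term split in the step using \sref{Lemma}{WarsLemma}; everything else is routine Leibniz-rule algebra and a dictionary lookup against \eqref{partitionexpansion}. The case $E_{p_1}=E_{p_2}$ is then covered, as usual, by regularity/continuity of both sides and L'Hôpital.
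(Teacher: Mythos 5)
Your proposal follows essentially the same route as the paper's proof: express $G^{aa}_{|p_1p_2|}$ through $\partial^2\mathcal{Z}$, peel off the free part to obtain $1/H_{p_1p_2}$, substitute \sref{Lemma}{WarsLemma} for the cubic derivative (whose $J$-linear term dies at $J=0$), apply the Leibniz rule to the outstanding $\partial_{J^a_{p_2p_1}}$, and read off the correlation functions by comparison with \eqref{partitionexpansion}, using $H_{p_1p_2}(E_{p_1}-E_{p_2})=E_{p_1}^2-E_{p_2}^2$. The only cosmetic discrepancy is your attribution of the extra $G^{a|a}$ terms to an extension of the excluded-index sums via \eqref{Ward2}; in the paper they arise directly from the Leibniz step, namely from the diagonal entries $m=p_1$, $m=p_2$ of the surviving sums and from the contributions where the outer derivative fixes both indices of the explicit source $J^a_{nm}$, but this bookkeeping detail does not alter the argument.
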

\noindent
We remind that in \sref{Proposition}{Prop2Punkt} correlation functions
of genus $g\geq1$ are also included. 
The SDE of the 2-point 
function depends on $\lambda^2$, since graphs exist only with an even number of vertices.

\begin{prps}\label{PropNPunkt}
 Let $N\geq 3$. The SDE for the $N$-point function in the 3-colour matrix model with an external field
 $E$ is for pairwise different $E_{p_i}, E_{p_j}$ given by 
 \begin{align*}
  &G^{a_1...a_N}_{|p_1...p_N|}
  =-\frac{\lambda }{(E^2_{p_1}-E^2_{p_2})}\sum_{b=1}^3\left(\sigma_{a_1a_Nb}G^{a_2...a_{N-1}b}_{|p_2...p_{N-1}p_N|}
 -\sigma_{a_1a_2b}G^{ba_3a_4...a_{N}}_{|p_1p_3p_4...p_N|}\right)\nonumber\\
 &-\frac{\lambda^2 }{V^2(E^2_{p_1}-E^2_{p_2})}\\
 &\times\Bigg\{V\Bigg(\sum_{\substack{m=0\\m\neq p_1}}^\mathcal{N}\frac{G^{a_1a_2...a_N}_{|mp_2...p_N|}-G^{a_1a_2...a_N}_{|p_1p_2...p_N|}}{E_{p_1}-E_m}
 -\sum_{\substack{m=0\\m\neq p_2}}^\mathcal{N}\frac{G^{a_1a_2a_3...a_N}_{|p_1mp_3...p_N|}-G^{a_1a_2...a_N}_{|p_1p_2...p_N|}}{E_{p_2}-E_m}\Bigg)\nonumber\\
&+\sum_{k=2}^{N}\Bigg(\frac{G^{a_1a_2...a_{k-1}|a_ka_{k+1}...a_N}_{|p_kp_2...p_{k-1}|p_kp_{k+1}...p_N|}-G^{a_1a_2...a_{k-1}|a_ka_{k+1}...a_N}_{|p_kp_2...p_{k-1}|p_1p_{k+1}...p_N|}}{E_{p_1}-E_{p_k}}\\
&\qquad\qquad\qquad-\frac{G^{a_2a_3...a_{k}|a_1a_{k+1}...a_N}_{|p_{k+1}p_3...p_{k}|p_1p_{k+1}...p_N|}-
G^{a_2...a_k|a_1a_{k+1}...a_N}_{|p_2...p_k|p_1p_{k+1}...p_N|}}{E_{p_2}-E_{p_{k+1}}}\Bigg)\nonumber\\
&+\sum_{k=3}^{N-1}V^2\Bigg(G^{a_1a_2...a_{k-1}}_{|p_kp_2...p_{k-1}|}\frac{G^{a_k...a_N}_{|p_k...p_N|}
-G^{a_ka_{k+1}...a_N}_{|p_1p_{k+1}...p_N|}}{E_{p_1}-E_{p_k}}\\
&\qquad\qquad\qquad-G^{a_1a_{k+1}...a_N}_{|p_1p_{k+1}...p_N|}\frac{G^{a_2a_3...a_{k}}_{|p_{k+1}p_3...p_{k}|}
-G^{a_2...a_k}_{|p_2...p_k|}}{E_{p_2}-E_{p_{k+1}}}\Bigg)\\
&+\sum_{b=1}^3\sum_{m=0}^\mathcal{N}\bigg(G^{bba_1...a_N}_{|p_1mp_1...p_N|}-G^{a_1bba_2...a_N}_{|p_1p_2mp_2...p_N|}+\frac{1}{V}
\left(G^{bb|a_1...a_N}_{|p_1m|p_1...p_N|}-G^{bb|a_1...a_N}_{|p_2m|p_1...p_N|}\right)\nonumber\\
&\qquad\qquad\qquad+VG^{a_1...a_N}_{|p_1...p_N|}\left(G^{bb}_{|p_1m|}-G^{bb}_{|p_2m|}\right)\bigg)\nonumber\\
&+\sum_{b=1}^3\sum_{k=2}^N\bigg(\frac{1}{V}\left(G^{ba_1...a_{k-1}|ba_k...a_N}_{|p_kp_1...p_{k-1}|p_1p_k...p_N|}-G^{ba_2...a_{k}|ba_{k+1}...a_Na_1}_{|p_{k+1}p_2...p_{k}|p_2p_{k+1}...p_Np_1|}\right)\\
&\qquad\qquad\qquad+V\left(G^{ba_1...a_{k-1}}_{|p_kp_1...p_{k-1}|}G^{ba_k...a_N}_{|p_1p_k...p_N|}-G^{ba_2...a_{k}}_{|p_{k+1}p_2...p_{k}|}G^{ba_{k+1}...a_Na_1}_{|p_2p_{k+1}...p_Np_1|}\right)\bigg)\\
&+\sum_{b=1}^3\bigg(\frac{1}{V^2}\left(G^{b|b|a_1...a_N}_{|p_1|p_1|p_1...p_N|}-G^{b|b|a_1...a_N}_{|p_2|p_2|p_1...p_N|}\right)+
\frac{1}{V}\left(G^{b|ba_1...a_N}_{|p_1|p_1p_1...p_N}-G^{b|ba_2...a_Na_1}_{|p_2|p_2p_2...p_Np_1}\right)\\*
&\qquad\qquad\qquad+G^{a_1...a_N}_{|p_1...p_N|}\left(G^{b|b}_{|p_1|p_1|}-G^{b|b}_{|p_2|p_2|}\right) \bigg)  \Bigg\},
 \end{align*}
 where $p_{N+1}\equiv p_1$.
\begin{proof}
 We use the definition of the $N$-point function for pairwise different $E_{p_i},E_{p_j}$. 
 With the expression of the partition function
 (\ref{Zustandssumme}), we obtain
 \begin{align*}
  G^{a_1...a_N}_{|p_1...p_N|}=&\frac{1}{V}\frac{\partial^N}{\partial J^{a_1}_{p_1p_2}...J^{a_N}_{p_Np_1}}\frac{\mathcal{Z}[J]}{\mathcal{Z}[0]}\bigg|_{J=0}\nonumber\\
=&-\frac{\lambda }{H_{p_1p_2}V^2\mathcal{Z}[0]}\frac{\partial^{N-1}}{\partial J^{a_2}_{p_2p_3}...J^{a_N}_{p_Np_1}}
\sum_{b,c=1}^3\sigma_{a_1bc}\sum_{n=0}^\mathcal{N}\frac{\partial^{2}}{\partial J^{b}_{p_1n}J^{c}_{np_2}}\mathcal{Z}[J]\bigg|_{J=0}.
 \end{align*}
Here the first derivative $\frac{\partial}{\partial J^{a_1}_{p_1p_2}}$
  applied to $\mathcal{Z}_{free}[J]$ yields $\frac{V}{H_{p_1p_2}}
J^{a_1}_{p_2p_1}$, which can only be differentiated by the interaction
in $\mathcal{Z}[Z]$ because of $p_3\neq p_1$ and $p_2\neq p_4$.
 Applying Lemma \ref{WarsLemma} yields
 \begin{subequations}
 \begin{align}
  =&-\frac{\lambda}{(E^2_{p_1}-E^2_{p_2})V\mathcal{Z}[0]}\frac{\partial^{N-1}}{\partial J^{a_2}_{p_2p_3}...J^{a_N}_{p_Np_1}}\nonumber\\
  &\times \Bigg[\sum_{b,c=1}^{3}\sigma_{a_1bc}
		\sum_{m=0}^\mathcal{N}\left(J^b_{mp_1}\frac{\partial}{\partial J^c_{mp_2}}-J^c_{p_2m}\frac{\partial}{\partial J^b_{p_1m}}\right)\label{I1}\\
		&\qquad\qquad+\frac{\lambda}{V^2}\sum_{b=1}^3\Bigg\{\sum_{m=0}^\mathcal{N}\left(\frac{\partial^3}{\partial J^b_{p_1m}\partial J^b_{mp_1}\partial J^{a_1}_{p_1p_2}}
		-\frac{\partial^3}{\partial J^{a_1}_{p_1p_2}\partial J^b_{p_2m}\partial J^b_{mp_2}}\right)\label{I2}\\
		&\qquad\qquad+\sum_{\substack{m,n=0 \\ n\neq p_1}}^{\mathcal{N}}\frac{V}{E_{p_1}-E_n}\frac{\partial}
		{\partial J^{a_1}_{np_2}}\left(J^b_{mp_1}\frac{\partial}{\partial J^b_{mn}}-J^b_{nm}\frac{\partial}{\partial J^b_{p_1m}}\right)\label{I3}\\
		&\qquad\qquad-\sum_{\substack{m,n=0 \\ n\neq p_2}}^{\mathcal{N}}\frac{V}{E_{p_2}-E_{n}}\frac{\partial}{\partial J^{a_1}_{p_1n}}
		\left(J^b_{p_2m}\frac{\partial}{\partial J^b_{nm}}-J^b_{mn}\frac{\partial}{\partial J^b_{mp_2}}\right)\Bigg\}\Bigg]\mathcal{Z}[J]\bigg|_{J=0}\label{I4}.
 \end{align}
 \end{subequations}
The first term of (\ref{I1}) contributes only for $b=a_N$ and $E_m=E_{p_N}$ and the second 
term only for $c=a_2$ and $E_m=E_{p_3}$. This generates
the term proportional to $\lambda$. Line (\ref{I2}) produces three 
different types of terms for arbitrary $E_m$, the $(2+N)$-point functions with one boundary, the multiplication of 2-point 
with $N$-point functions, and $(2+N)$-point functions with two boundaries. 
If in (\ref{I2}) $E_m=E_{p_k}$ for the first term with $2\leq k\leq N$ (for the second term 
with $3\leq k\leq N$ or $k=1$), additionally 
$(k+(N+2-k))$-point 
functions with two boundaries and the multiplication of $k$-point with $(N+2-k)$-point 
functions with one boundary are generated.
In case of $E_m=E_{p_1}$ for the left term ($E_m=E_{p_2}$ for the right term)  (\ref{I2}) 
produces either $(1+1+N)$-point functions with three boundaries, $(1+(1+N))$-point functions with two boundaries
or the multiplication of $(1+1)$-point with $N$-point functions. 

Finally, we look at (\ref{I3}) and (\ref{I4}) together. The first terms
again contribute only for $b=a_N$ and $E_m=E_{p_N}$ in (\ref{I3}) or for $b=a_2$ and $E_m=E_{p_3}$ in
(\ref{I4}). Since the sum over $n$ survives, 
$N$-point functions arise. If $E_n=E_{p_k}$ for $k\neq 1$ in (\ref{I3}) and for $k\neq 2$ in (\ref{I4}) 
one gets either $(k+(N-k))$-point functions or 
the multiplication of $k$-point functions with $(N-k)$-point functions with one boundary. For 
the second term in (\ref{I3}) and (\ref{I4}), each derivative have to be taken into account.
If the derivative in front of the brackets in (\ref{I3}) and (\ref{I4}) acts on $J^b_{nm}$ or 
$J^b_{mn}$, the sum over $n$ survives again
and has a prefactor depending on $E_n$, but no $n$ appears in the $N$-point function. 
If any other derivative $\frac{\partial}{\partial J^{a_{k+1}}_{p_{k+1} p_{k+2}}}$, for some $k\geq 1$, 
acts on the second term, $n,m,b$ will be
fixed and it will produces $N$-point functions, $(k+(N-k))$-point functions with two boundaries
and the multiplication of $k$-point with
$(N-k)$-point functions. Collecting all and making use of (\ref{Entwicklungskoeffizienten}) to get the correct prefactor in
$V$, one finds all the terms appearing in \sref{Proposition}{PropNPunkt}.
\end{proof}
\end{prps}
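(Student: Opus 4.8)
The plan is to derive the Schwinger--Dyson equation for $G^{a_1\dots a_N}_{|p_1\dots p_N|}$ with pairwise distinct $E_{p_i}$ by exactly the mechanism used for the $2$-point function in \sref{Proposition}{Prop2Punkt}: start from the definition of the correlation function as a multi-derivative of $\log\mathcal{Z}$, let one derivative act on $\mathcal{Z}_{\mathrm{free}}$, use the cubic interaction to reabsorb the resulting linear source, and then eliminate the arising index-summed second-order derivative by the combined Ward--Takahashi identity of \sref{Lemma}{WarsLemma}. First I would write $G^{a_1\dots a_N}_{|p_1\dots p_N|}=\frac1V\partial_{J^{a_1}_{p_1p_2}}\cdots\partial_{J^{a_N}_{p_Np_1}}\log\frac{\mathcal{Z}[J]}{\mathcal{Z}[0]}\big|_{J=0}$, use the form \eqref{Zustandssumme} of $\mathcal{Z}[J]$, and apply $\partial_{J^{a_1}_{p_1p_2}}$ to $\mathcal{Z}_{\mathrm{free}}[J]$, which produces $\frac{V}{H_{p_1p_2}}J^{a_1}_{p_2p_1}$. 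Since $p_3\neq p_1$ and $p_4\neq p_2$, none of the remaining external derivatives can hit this source, so it must be differentiated by the cubic vertex in $\exp(-S_{\mathrm{int}}[\tfrac1V\partial_J])$; this yields $-\frac{\lambda}{H_{p_1p_2}V^2}\sum_{b,c}\sigma_{a_1bc}\sum_n\partial_{J^b_{p_1n}}\partial_{J^c_{np_2}}$ acting on $\mathcal{Z}$, still flanked by the $(N{-}1)$ remaining external derivatives.

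Next I would substitute \sref{Lemma}{WarsLemma} for the object $\sum_{b,c}\sigma_{a_1bc}\sum_m\partial_{J^b_{p_1m}}\partial_{J^c_{mp_2}}\mathcal{Z}$; the factor $H_{p_1p_2}(E_{p_1}-E_{p_2})=E_{p_1}^2-E_{p_2}^2$ then combines into the overall $1/(E_{p_1}^2-E_{p_2}^2)$ prefactor. That identity replaces the second derivative by: a piece linear in $J$ with $\sigma_{a_1bc}$, which after the remaining derivatives hit the explicit $J$'s (forcing $b=a_N$, $E_m=E_{p_N}$ in one term and $c=a_2$, $E_m=E_{p_3}$ in the other) produces the $\mathcal{O}(\lambda)$ part of the equation; and the triple-derivative pieces $\partial^3/(\partial J^b_{p_1m}\partial J^b_{mp_1}\partial J^{a_1}_{p_1p_2})$ etc.\ together with the difference-quotient pieces $\frac1{E_{p_1}-E_n}\partial_{J^{a_1}_{np_2}}(J^b_{mp_1}\partial_{J^b_{mn}}-\cdots)$ and its $E_{p_2}$-counterpart, all carrying the global factor $\lambda/V^2$, hence the $\mathcal{O}(\lambda^2)$ part. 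Then I would apply the $(N{-}1)$ remaining external derivatives by the Leibniz rule, set $J=0$, and read off the $V$-powers (equivalently, the number of boundary components) from the expansion \eqref{Entwicklungskoeffizienten}, \eqref{partitionexpansion}.

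The bulk of the work, and the main obstacle, is the combinatorial bookkeeping of which term each Leibniz split produces. For the triple-derivative pieces, an external derivative $\partial_{J^{a_k}_{p_kp_{k+1}}}$ (or an index coincidence $E_m=E_{p_k}$) can leave the sum $\sum_m$ intact, giving an $(2+N)$-point function on one boundary; fix $m$ at an external index, splitting the cycle either into a product of two correlation functions or into a $(k+(N{+}2{-}k))$-point function on two boundaries; or, for $m=p_1$ (resp.\ $p_2$), split off one or two length-one boundaries, giving the $(1{+}1{+}N)$-point, $(1{+}(1{+}N))$-point and product-of-$(1{+}1)$-with-$N$-point contributions. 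For the difference-quotient pieces one must track whether an external derivative hits the explicit $J^b_{nm}$ or $J^b_{mn}$ (keeping the $n$-sum alive, i.e.\ the $\frac1{E_{p_1}-E_m}(G-G)$ difference quotients) or hits a different source (fixing $n=p_k$, producing either an $N$-point function, a two-boundary function, or a product). Once every case is enumerated, labelled by genus and boundary number, and matched against \eqref{partitionexpansion}, the terms organize precisely into the eight groups listed. Finally, since the left-hand side is manifestly regular when some $E_{p_i}$ coincide, the formula extends to that degenerate case by continuity of the correlation functions in their indices, exactly as argued at the end of \sref{Proposition}{Prop2Punkt}.
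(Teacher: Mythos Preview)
Your proposal is correct and follows essentially the same route as the paper's proof: start from the derivative definition of $G^{a_1\dots a_N}_{|p_1\dots p_N|}$, let $\partial_{J^{a_1}_{p_1p_2}}$ act on $\mathcal{Z}_{\mathrm{free}}$, have the cubic vertex absorb the resulting $J^{a_1}_{p_2p_1}$, insert \sref{Lemma}{WarsLemma} for $\sum_{b,c}\sigma_{a_1bc}\sum_m\partial_{J^b_{p_1m}}\partial_{J^c_{mp_2}}\mathcal{Z}$, and then do the Leibniz bookkeeping against the expansion \eqref{partitionexpansion}. The only cosmetic difference is that the paper writes the $N$-point function as derivatives of $\mathcal{Z}/\mathcal{Z}[0]$ rather than of $\log(\mathcal{Z}/\mathcal{Z}[0])$; for a single cycle with pairwise distinct $E_{p_i}$ these coincide, so this does not affect the argument.
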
\noindent
The first term shows that a $(N-1)$-point function only contributes for different adjacent colours, because of $\sigma_{a_1a_Nb}$ and 
$\sigma_{a_1a_2b}$. This fact is in perfect accordance with a loop expansion. Furthermore, 
the 2-point function is assigned with a special 
r$\hat{\text{o}}$le, since the sum over $m$ only appears for the $N$-point and 2-point 
function even in the large $\mathcal{N},V$-limit.

It should be emphasised that not all combinations of the colours for the correlation functions are possible. 
The 2-point function is of the form $G^{aa}_{|p_1p_2|}$ and
the 3-point function $\sigma_{abc}G^{abc}_{|p_1p_2p_3|}$. 
There exists no 4-point function equipped with all three colours simultaneously, and so on. These properties which are first 
recognized by loop expansion are intrinsically presented in the SDEs. 

Correlation functions with more boundary components satisfy also SDE which can be computed. However, to determine them 
an anlog of \sref{Theorem}{Thm:Raimar} for the coloured model is necessary.

\section{Link to the Cubic and Quartic Model}
The 3-coloured model with cubic interaction is of particular interest because its Feynman graphs are a
subset of the Feynman graphs of the cubic model (see \sref{Ch.}{chap:cubic}). However, the
graphs are additionally decorated by colours which prohibits for instance the tadpole graph and therefore induces 
a vanishing 1-point function. On the other hand,
some graphs need to be counted several times since a graph can have with different colourings which gives a symmetry factor.
This symmetry is due to the external matrix $E$ which is taken equally for each colour. 

A more detailed perturbative analysis is performed 
in our paper \cite{Hock:2018wup}. The perturbative calculation of the 2-point function through 
the SDE is compared to the Feynman graph calculation 
up to three loops on the two dimensional Moyal space, and ofcourse both results coincide perfectly. 
No renormalisation was necessary since each graph is UV finite for spectral dimension $\D<4$. 

The SDE of the 2-point function (\sref{Proposition}{Prop2Punkt}) takes a much easier form after $V$-expansion
\begin{align}
 &G^{aa}_{|p_1p_2|}=\frac{1}{H_{p_1p_2}}+\frac{\lambda^2}{E^2_{p_1}-E^2_{p_2}}
  \Bigg[G^{aa}_{|p_1p_2|}\frac{1}{V}\sum_{m=0}^\mathcal{N}\sum_{b=1}^3\left(
 G^{bb}_{|p_2m|}
-G^{bb}_{|p_1m|}\right)\\
&\qquad \qquad\qquad\qquad +\frac{1}{V}\sum_{\substack{m=0\\ m\neq p_2}}^\mathcal{N}\frac{G^{aa}_{|p_1m|}-G^{aa}_{|p_1p_2|}}{E_{p_2}-E_m}-
\frac{1}{V}
\sum_{\substack{m=0\\ m\neq p_1}}^\mathcal{N}
\frac{G^{aa}_{|p_1p_2|}-G^{aa}_{|p_2m|}}{E_{m}-E_{p_1}}\Bigg]+\mathcal{O}(V^{-1}).\nonumber
\end{align}
This equation is manifestly symmetric in $p_1,p_2$. Symmetrising the SDE (\sref{Proposition}{Prop:Quart2P}) of the 
2-point function for the quartic model after $V$-expansion yields
\begin{align}
  G_{|pq|}&=
\frac{1}{E_p+E_q}-\frac{\lambda}{2(E_p+E_q)} 
\bigg[  G_{|pq|}\frac{1}{V}\sum_{n=0}^\mN \left(G_{|pn|}+G_{|qn|}\right)\\
&\qquad \qquad\qquad\qquad+\frac{1}{V}\sum_{n=0}^\mN \frac{G_{|pq|}- G_{|nq|}}{E_n-E_q}
+\frac{1}{V}\sum_{n=0}^\mN \frac{G_{|pq|}- G_{|nq|}}{E_n-E_p} 
\bigg]+\mathcal{O}(V^{-1}).\nonumber
 \end{align}
 The equations show an incredible similarity. Looking at the graphs
of the 3-coloured model, all propagators of a chosen colour can be contracted.
It means that all vertices which were connected by this colour are now coincident. 
The resulting graph has 
vertices of valence 4 each weighted with a factor $\lambda^2$. Furthermore, the vertices carry
the dynamics of the contracted 
propagator. However, topologically the same graphs appears in the perturbative expansion as in the quartic model, but with 
some additional decoration and constraints.
\begin{figure}[h]
    \centering
    \def\svgwidth{0.9\textwidth}
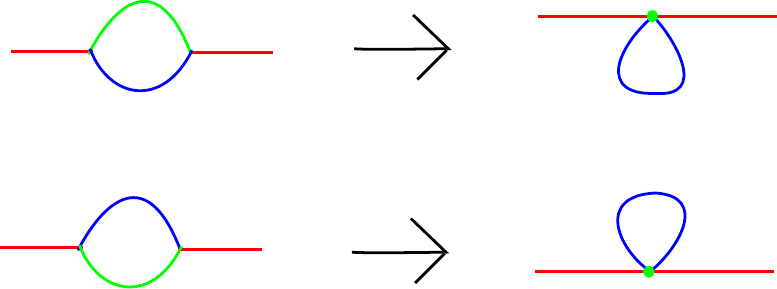
    \caption{On the lhs, two Feynman graphs are at order $\lambda^2$ of the 3-coloured model. After contracting the 
    green coloured propagator (rhs), two Feynman graphs appears with an quartic interaction.}
    \label{Fig:3c}
\end{figure}

This contraction of propagators of one chosen colour is understood as integrating out 
the corresponding field for the partition function.
It is a straightforward calculation to integrate out for instance the field $\Phi^3$ to get  
\begin{align*}
 \Z[0]&=\int \prod_{a=1}^3D\Phi^a \exp\bigg\{-\Tr\bigg(\sum_{a=1}^3 E\Phi^a\Phi^a+\lambda
 \Phi^1\Phi^2\Phi^3+\lambda
 \Phi^1\Phi^3\Phi^2\bigg)\bigg\}\\
 =&C(E)\int D\Phi^1D\Phi^2\exp\bigg\{-\Tr\big(E\Phi^1\Phi^1+
 E\Phi^1\Phi^1\big)+\frac{\lambda^2}{2}\sum_{n,m=0}^\mN \frac{(\Phi^1\Phi^2)_{nm}
 (\Phi^2\Phi^1)_{mn}}{E_n+E_m}\bigg\},
\end{align*}
where $C(E)$ is a constant depending on $E$. Now, we have a quartic interaction with two propagators 
of each colour attached to 
the vertex. The vertex has the weight $\lambda^2$ and an additional dynamics through the denominator of 
$\frac{(\Phi^1\Phi^2)_{nm}
 (\Phi^2\Phi^1)_{mn}}{E_n+E_m}$, which fits perfectly with the considerations at the Feynman graph level above.

\phantomsection
\addcontentsline{toc}{chapter}{Bibliography}
\bibliographystyle{halpha-abbrv}
\bibliography{./Bibo}


\end{document}